\newtheoremstyle{sig}
  {}
  {}
  {\itshape}
  {}
  {\scshape}
  {.}
  {.5em}
  {#1\@ifnotempty{#2}{ #2}\thmnote{\quad(#3)}}%
\theoremstyle{sig}
\tikzstyle{printersafe}=[snake=snake,segment amplitude=0 pt]
\newlist{longitemize}{itemize}{15}
\setlist[longitemize,1]{label=\textbullet}
\setlist[longitemize,2]{label=\textbullet}
\setlist[longitemize,3]{label=\textbullet}
\setlist[longitemize,4]{label=\textbullet}
\setlist[longitemize,5]{label=\textbullet}
\setlist[longitemize,6]{label=\textbullet}
\setlist[longitemize,7]{label=\textbullet}
\setlist[longitemize,8]{label=\textbullet}
\setlist[longitemize,9]{label=\textbullet}
\setlist[longitemize,10]{label=\textbullet}
\setlist[longitemize,11]{label=\textbullet}
\setlist[longitemize,12]{label=\textbullet}
\setlist[longitemize,13]{label=\textbullet}
\setlist[longitemize,14]{label=\textbullet}
\setlist[longitemize,15]{label=\textbullet}
\definecolor{ao(english)}{rgb}{0.0, 0.5, 0.0}
\definecolor{royalblue(web)}{rgb}{0.25, 0.41, 0.88}
\newcommand{\comment}[2][1=]
{}
\newcommand{\setCommentColor}[1]{%
	\ifthenelse{\equal{#1}{bk}}%
		{\colorlet{colorVar}{red!50}}%
		{\ifthenelse{\equal{#1}{pv}}%
			{\colorlet{colorVar}{blue}}%
			{\ifthenelse{\equal{#1}{mg}}%
				{\colorlet{colorVar}{ao(english)}}%
			{\ifthenelse{\equal{#1}{jr}}%
				{\colorlet{colorVar}{magenta}}%
				{}%
			}%
		}%
	}%
}
\newcommand{\commentAuthor}[1]{%
	\ifthenelse{\equal{#1}{bk}}%
		{Boris:\ }%
		{\ifthenelse{\equal{#1}{pv}}%
			{Pepe:\ }%
			{\ifthenelse{\equal{#1}{mg}}%
				{Marco:\ }%
			{\ifthenelse{\equal{#1}{jr}}%
				{Jan:\ }%
				{}%
			}%
		}%
	}%
}
\definecolor{codegray}{gray}{0.95}
\theoremstyle{definition}
\newtheorem{definition}{Definition}
\newtheorem{example}{Example}
\theoremstyle{theorem}
\newtheorem{lemma}{Lemma}
\newcommand{\tup}[1]{\langle #1 \rangle}
\newcommand{\pc}{\mathit{pc}}
\newcommand{\buf}{\mathit{buf}}
\newcommand{\exprEval}[2]{(\!\!| #1 |\!\!)(#2)}
\newcommand{\lowequiv}{\simeq_{\pi}}
\newcommand{\policy}{\pi}%
\newcommand{\high}{H}
\newcommand{\low}{L}
\newcommand{\srclang}{$\mu$\textsc{Asm}}
\newcommand{\elt}[2]{#1|_{#2}}
\newcommand{\emptysequence}{\varepsilon}
\newcommand{\concat}{\cdot}
\newcommand{\decrement}[1]{\mathit{decr}(#1)}
\renewcommand{\paragraph}[1]{\smallskip\noindent\textbf{#1:\ }}
\newcommand{\highlightBox}[1]{\colorbox{blue!10}{$#1$}}
\newcommand{\Var}{\mathit{Regs}}
\newcommand{\Val}{\mathit{Vals}}
\newcommand{\Nat}{\mathbb{N}}
\newcommand{\Conf}{\mathit{State}}
\newcommand{\Prg}{\mathit{p}}
\newcommand{\lbl}{\ell}
\newcommand{\kywd}[1]{\mathbf{#1}}
\newcommand{\skipKywd}{\kywd{skip}}
\newcommand{\storeKywd}{\kywd{store}}
\newcommand{\loadKywd}{\kywd{load}}
\newcommand{\jmpKywd}{\kywd{jmp}}
\newcommand{\jzKywd}{\kywd{beqz}}
\newcommand{\pcKywd}{\kywd{pc}}
\newcommand{\barrierKywd}{\kywd{spbarr}}
\newcommand{\timerKywd}{\kywd{probe}}
\renewcommand{\pc}{\pcKywd}
\definecolor{Blue3}{HTML}{0000CD}
\newcommand{\obsKywd}[1]{\textcolor{Blue3}{\mathtt{#1}}}
\newcommand{\loadObsKywd}{\obsKywd{load}}
\newcommand{\storeObsKywd}{\obsKywd{store}}
\newcommand{\pcObsKywd}{\obsKywd{pc}}
\newcommand{\loadObs}[1]{\loadObsKywd\ #1}
\newcommand{\storeObs}[1]{\storeObsKywd\ #1}
\newcommand{\pcObs}[1]{\pcObsKywd\ #1}
\newcommand{\startObsKywd}[1]{\obsKywd{start}}
\newcommand{\commitObsKywd}[1]{\obsKywd{commit}} 
\newcommand{\rollbackObsKywd}[1]{\obsKywd{rollback}}
\newcommand{\commitObs}[1]{\commitObsKywd{}} %
\newcommand{\rollbackObs}[1]{\rollbackObsKywd{}} %
\newcommand{\unaryOp}[1]{\ominus #1}
\newcommand{\binaryOp}[2]{#1 \otimes #2}
\newcommand{\pseq}[2]{#1 ; #2}
\newcommand{\pskip}{\skipKywd{}}
\newcommand{\passign}[2]{#1 \leftarrow #2}
\newcommand{\pmarkedassign}[2]{#1 \leftarrow^* #2}
\newcommand{\pload}[2]{\loadKywd\ #1, #2}
\newcommand{\pstore}[2]{\storeKywd\ #1, #2}
\newcommand{\pcondassign}[3]{#1 \xleftarrow{#3?} #2}
\newcommand{\pjmp}[1]{\jmpKywd\ #1}
\newcommand{\pjz}[2]{\jzKywd\ #1, #2}
\newcommand{\pbarrier}{\barrierKywd}
\newcommand{\ptimer}[1]{#1 \leftarrow \timerKywd}
\newcommand{\select}[2]{#1(#2)}
\newcommand{\ite}[3]{\mathbf{ite}(#1,#2,#3)}
\newcommand{\tagged}[2]{#1@#2}
\newcommand{\notags}{\varepsilon}
\newcommand{\conf}{\mathit{\sigma}}
\newcommand{\muconf}{\mu}
\newcommand{\Directives}{\mathit{Dir}}
\newcommand{\fetch}[1]{
\ifthenelse{\equal{#1}{}}{\kywd{fetch}}{\kywd{fetch}}%
}
\newcommand{\execute}[1]{\kywd{execute}\ #1}
\newcommand{\retire}{\kywd{retire}}
\newcommand{\apply}[2]{\mathit{apl}(#1,#2)}
\newcommand{\mask}[1]{\mathit{mask}(#1)}
\newcommand{\drop}[1]{\mathit{drop}(#1)}
\newcommand{\labelNda}[3]{\mathit{label}_{\mathit{nda}}(#1, #2, #3)}
\newcommand{\mi}[1]{\ensuremath{\mathit{#1}}}
\newcommand{\mr}[1]{\ensuremath{\mathrm{#1}}}
\newcommand{\mf}[1]{\ensuremath{\mathbf{#1}}}
\newcommand{\ms}[1]{\ensuremath{\mathsf{#1}}}
\newcommand{\neutcol}[0]{black}
\newcommand{\stlccol}[0]{RoyalBlue}
\newcommand{\ulccol}[0]{RedOrange}
\newcommand{\idecol}[0]{Emerald}
\newcommand{\col}[2]{\ensuremath{{\color{#1}{#2}}}}
\newcommand{\archStyle}[1]{\ms{\col{\stlccol}{#1}}}
\newcommand{\muarchStyle}[1]{{\mf{\col{\ulccol }{#1}}}}
\newcommand{\interfStyle}[1]{{\mr{\col{\idecol }{#1}}}}
\newcommand{\neut}[1]{{\mi{\col{\neutcol }{#1}}}}
\newcommand{\archStep}[2]{\archStyle{\xrightarrow[\neut{#2}]{\neut{#1}}}}
\newcommand{\archStepCompact}{\archStyle{\rightarrow}}
\newcommand{\muarchSem}[1]{\muarchStyle{\{\!\!|} #1\muarchStyle{ |\!\!\} } }
\newcommand{\muarchStep}[2]{\muarchStyle{\xRightarrow{\neut{#1}}}}
\newcommand{\muarchStepCompact}{\muarchStyle{\Rightarrow}}
\newcommand{\interfSem}[1]{\interfStyle{\llbracket} #1\interfStyle{\rrbracket} }
\newcommand{\interfStep}[2]{\interfStyle{\xrightharpoonup{\neut{#1}}}}
\newcommand{\interfStepCompact}{\interfStyle{\rightharpoonup}}
\newcommand{\CtSeqInterf}[1]{\interfStyle{\llbracket} #1\interfStyle{\rrbracket}_{\interfStyle{ct}}^{\interfStyle{seq}}}
\newcommand{\CtSeqInterfStep}[2]{ { \interfStep{#1}{#2}_{\interfStyle{ct}}^{\interfStyle{seq}} } } 
\newcommand{\ArchSeqInterf}[1]{\interfStyle{\llbracket} #1\interfStyle{\rrbracket}_{\interfStyle{arch}}^{\interfStyle{seq}}}
\newcommand{\ArchSeqInterfStep}[2]{ { \interfStep{#1}{#2}_{\interfStyle{arch}}^{\interfStyle{seq}} } }
\newcommand{\CtPcSpecInterf}[1]{\interfStyle{\llbracket} #1\interfStyle{\rrbracket}_{\interfStyle{ct\text{-}pc}}^{\interfStyle{seq\text{-}spec}}}
\newcommand{\PcSpecInterf}[1]{\interfStyle{\llbracket} #1\interfStyle{\rrbracket}_{\interfStyle{pc}}^{\interfStyle{spec}}}
\newcommand{\CtPcSpecInterfStep}[2]{ { \interfStep{#1}{#2}^{\interfStyle{spec, ct-pc}} } }
\newcommand{\CtSpecInterf}[1]{\interfStyle{\llbracket} #1\interfStyle{\rrbracket}_{\interfStyle{ct}}^{\interfStyle{spec}}}
\newcommand{\CtSpecInterfStep}[2]{ { \interfStep{#1}{#2}_{\interfStyle{ct}}^{\interfStyle{spec}} } }
\newcommand{\ArchSpecInterf}[1]{\interfStyle{\llbracket} #1\interfStyle{\rrbracket}_{\interfStyle{arch}}^{\interfStyle{spec}}}
\newcommand{\ArchSpecInterfStep}[2]{ { \interfStep{#1}{#2}^{\interfStyle{spec, arch}} } }
\newcommand{\InftyInterf}[1]{\interfStyle{\llbracket} #1\interfStyle{\rrbracket}_{\interfStyle{\bot}}}
\newcommand{\hsni}[2]{#2 \vdash #1}
\newcommand{\hsniViolation}[2]{#2 \not\vdash #1}
\newcommand{\ct}[3]{#1 \vdash \mathit{NI}(#2,#3)}
\newcommand{\ProcConfs}{\neut{HwStates}}
\newcommand{\CtxMuarchStep}[3]{ {\muarchStepCompact}_{\muarchStyle{#3}} }
\newcommand{\CtxMuarchSem}[2]{ {\muarchSem{#1}}_{\muarchStyle{#2}} }
\newcommand{\SeqProcMuarchStep}[2]{ \CtxMuarchStep{#1}{#2}{seq} }  
\newcommand{\SeqProcMuarchSem}[1]{ \CtxMuarchSem{#1}{seq} }
\newcommand{\LoadDelayMuarchStep}[2]{ \CtxMuarchStep{#1}{#2}{loadDelay} }  
\newcommand{\LoadDelayMuarchSem}[1]{ \CtxMuarchSem{#1}{loadDelay} }
\newcommand{\TtMuarchStep}[2]{ \CtxMuarchStep{#1}{#2}{tt} }  
\newcommand{\TtMuarchSem}[1]{ \CtxMuarchSem{#1}{tt} }
\newcommand{\update}[2]{{#1} \uplus {#2}}
\newcommand{\adversary}{\mathcal{A}}
\newcommand{\CacheStates}{\mathit{CacheStates}}
\newcommand{\CacheState}{\mathit{cs}}
\newcommand{\CacheAccess}{\mathit{access}}
\newcommand{\CacheUpdate}{\mathit{update}}
\newcommand{\CacheHit}{\mathtt{Hit}}
\newcommand{\CacheMiss}{\mathtt{Miss}}
\newcommand{\BpStates}{\mathit{BpStates}}
\newcommand{\BpState}{\mathit{bp}}
\newcommand{\BpUpdate}{\mathit{update}}
\newcommand{\BpPredict}{\mathit{predict}}
\newcommand{\ReorderBuffers}{\mathit{Bufs}}
\newcommand{\BufProject}[1]{ {#1}\!\!\downarrow }
\newcommand{\resolved}{\mathtt{R}}
\newcommand{\unresolved}{\mathtt{UR}}
\newcommand{\SchedStates}{\mathit{ScStates}}
\newcommand{\SchedState}{\mathit{sc}}
\newcommand{\SchedUpdate}{\mathit{update}}
\newcommand{\SchedNext}{\mathit{next}}
\newcommand{\wMuarch}{\muarchStyle{w}}
\newcommand{\wInterf}{\interfStyle{w}}
\newcommand{\unlabelNda}[2]{\mathit{unlbl}(#1,#2)}
\renewcommand{\labelNda}[3]{\mathit{lbl}(#1,#2,#3)}
\newcommand{\labelled}[2]{\langle #1\rangle_{#2}}
\newcommand{\transmitGadget}[1]{\mathit{transmit}(#1)}
\newcommand{\labels}[1]{\mathit{labels}(#1)}
\newcommand{\deriveLabels}[2]{\mathit{derive}(#1,#2)}
\newcommand{\toExecute}{\mathit{exec}}
\definecolor{Blue3}{HTML}{0000CD}
\definecolor{Green4}{HTML}{008B00}
\definecolor{Red3}{HTML}{CD0000}
\definecolor{orange}{rgb}{0.8, 0.47, 0.196}
\lstdefinestyle{Cstyle}
{
	frame = tb,
  belowskip=.4\baselineskip,
  aboveskip=.4\baselineskip,
  	showstringspaces = false,
  	breaklines = true,
  	breakatwhitespace = true,
  	tabsize = 3,
  	numbers = left,
    stepnumber = 1,
    numberstyle = \tiny\color{gray},
    language = {[ANSI]C},
    alsoletter={.\$},
    basicstyle={\ttfamily\color{black}},
    keywordstyle={\ttfamily\color{Blue3}},
    keywordstyle=[2]{\ttfamily\color{Green4}},
    keywordstyle=[3]{\ttfamily\color{orange}},
    keywordstyle=[4]{\ttfamily\color{violet}},
    otherkeywords = {skip,not},
    morekeywords = [2]{A,B},
    morekeywords = [3]{},
	morekeywords = [4]{y,x,z,w, size,size_A,k,temp},
	morecomment=[l][\small\itshape\color{purple!40!black}]{//},
	sensitive=true,
}
\newcommand*{\SavedLstInline}{}
\LetLtxMacro\SavedLstInline\lstinline
\DeclareRobustCommand*{\lstinline}{%
  \ifmmode
    \let\SavedBGroup\bgroup
    \def\bgroup{%
      \let\bgroup\SavedBGroup
      \hbox\bgroup
    }%
  \fi
  \SavedLstInline
}
\newcommand{\inlineCcode}[1]{\lstinline[style=Cstyle]|#1|}
\newcommand{\archstate}{architectural state}
\newcommand{\uarchstate}{microarchitectural state}
\newcommand{\Sat}{Y}
\newcommand{\NSat}{N}
\newcommand{\DeepProject}[1]{{#1}\!\!\Downarrow}
\newcommand{\wellformed}[1]{\mathit{wf}(#1)}
\newcommand{\nrPc}[1]{\mathit{\#Instr}(#1)}
\newcommand{\prefixes}[1]{\mathit{prefixes}(#1)}
\newcommand{\headConf}[1]{\mathit{headConf}(#1)}
\newcommand{\headWindow}[1]{\mathit{headWndw}(#1)}
\newcommand{\maxOf}[1]{\mathit{max}(#1)}
\newcommand{\minOf}[1]{\mathit{min}(#1)}
\newcommand{\nrMispred}[2]{\#\mathit{mispr}(#1, #2)}
\newcommand{\nextPairs}[1]{\mathit{next}(#1)}
\newcommand{\map}[3]{\mathit{corr}_{#1,#2}(#3)}
\newcommand{\hrun}{\muarchStyle{hr}}
\newcommand{\crun}{\interfStyle{cr}}
\newcommand{\hrunp}{\muarchStyle{hr'}}
\newcommand{\crunp}{\interfStyle{cr'}}
\newcommand{\Mispred}[1]{\mathit{mispred}(#1)}
\newcommand{\bufEquiv}[1]{\equiv_{#1}}
\newcommand{\techReportAppendix}[1]{%
	\ifthenelse{\equal{\shortVersion}{true}}%
		{\cite{technicalReport}}%
        {Appendix~\ref{#1}}%
}%
\newcommand{\techReportAppendices}[2]{%
	\ifthenelse{\equal{\shortVersion}{true}}%
		{\cite{technicalReport}}%
		{Appendices~\ref{#1}--\ref{#2}}%
}%
\newcommand{\onlyTechReport}[1]{%
	\ifthenelse{\equal{\shortVersion}{true}}%
		{}%
		{#1}%
}%
\newcommand{\onlyShortVersion}[1]{%
	\ifthenelse{\equal{\shortVersion}{true}}%
		{#1}%
		{}%
}%
\newcommand{\shortVersion}{true} 
\title{Hardware-Software Contracts for\\Secure Speculation}
\author{\IEEEauthorblockN{Marco Guarnieri\IEEEauthorrefmark{1},
Boris K\"{o}pf\IEEEauthorrefmark{2}, 
Jan Reineke\IEEEauthorrefmark{3}, and
Pepe Vila\IEEEauthorrefmark{1}}\\
\IEEEauthorblockA{\IEEEauthorrefmark{1}\textit{IMDEA Software Institute}\qquad
\IEEEauthorrefmark{2}\textit{Microsoft Research}\qquad
\IEEEauthorrefmark{3}\textit{Saarland University}}}
\begin{document}
\maketitle

\begin{abstract}
Since the discovery of Spectre, a large number of hardware mechanisms for secure
speculation has been proposed. Intuitively, more defensive mechanisms are less
efficient but can securely execute a larger class of programs, while more
permissive mechanisms may offer more performance but require more defensive
programming. Unfortunately, there are no hardware-software contracts that would
turn this intuition into a basis for principled co-design.

In this paper, we put forward a framework for specifying such contracts, and we
demonstrate its expressiveness and flexibility. 

On the hardware side, we use the
framework to provide the first formalization and comparison of the security
guarantees provided by a representative class of mechanisms for secure
speculation.

On the software side, we use the framework to characterize program properties
that guarantee secure co-design in two scenarios  traditionally investigated in
isolation: (1) ensuring that a benign program does not leak information while
computing on confidential data, and (2) ensuring that a potentially malicious
program cannot read outside of its designated sandbox. 
Finally, we show how the properties corresponding to both scenarios can be
checked based on existing tools for software verification, and we use them to
validate our findings on executable code.
\end{abstract}

\vspace{-2mm}
\section{Introduction}\label{sec:introduction}

Speculative execution avoids expensive pipeline stalls by predicting the outcome of branching (and other) decisions, and by continuing the execution based on these predictions.
When a prediction turns out to be incorrect, the processor rolls back the effects of speculatively executed instructions on the architectural state consisting of registers, flags, and main memory.\looseness=-1

However, the microarchitectural state, which includes the content of various caches and buffers, is not (or only partially) rolled back.
This side effect can leak information about the speculatively accessed data and thus violate confidentiality, see Figure~\ref{figure:v1-vanilla}.
Spectre attacks~\cite{Kocher2018spectre, spectreSoK}  demonstrate that this vulnerability affects all modern general-purpose processors and poses a serious threat for platforms with multiple tenants.

A multitude of hardware mechanisms for secure speculation have been proposed.  %
They are based on a number of basic ideas, such as delaying load operations until they cannot be squashed~\cite{specshadow2019}, delaying operations that depend on speculatively loaded data~\cite{nda2019weisse,STT2019}, limiting the effect of speculatively executed instructions~\cite{invisispec2018,safespec2019,DAWG2018,ainsworth2020muontrap}, or rolling back the microarchitectural state when a misprediction is detected~\cite{saileshwar2019cleanupspec}.

Intuitively, more defensive mechanisms are less efficient but can securely execute a larger class of programs, while more permissive mechanisms offer more performance but require more defensive programming.
We refer to this intuition as (*).\looseness=-1

For example, consider the  variant of Spectre v1 shown in Figure~\ref{figure:v1-variant}, where  array  \inlineCcode{A} is accessed before the bounds check.
\begin{figure}	
	\begin{subfigure}[t]{0.238\textwidth}
		\small
	\begin{lstlisting}[basicstyle=\small,style=Cstyle]
if (y < size_A) 
 x = A[y];
 temp &= B[x * 64];
	\end{lstlisting}
	\caption{Program $P_1$}\label{figure:v1-vanilla}
	\end{subfigure}
	\begin{subfigure}[t]{0.238\textwidth}
	\small
	\begin{lstlisting}[basicstyle=\small,style=Cstyle]
x = A[y];
if (y < size_A)
 temp &= B[x * 64];
	\end{lstlisting}
	\caption{Program $P_2$}\label{figure:v1-variant}
	\end{subfigure}
		\caption{
		Program $P_1$ is the vanilla Spectre v1 example, where \inlineCcode{A[y]} can be speculatively read and leaked into the data cache via an access to array \inlineCcode{B}, for \inlineCcode{y >= size_A}.	
		Program $P_2$, is a variant where \inlineCcode{A[y]} is accessed non-speculatively before the bounds check but the leak occurs during speculative execution.
		}\label{figure:v1-ctrl-flow}
		\end{figure}
Mechanisms  delaying loads until they cannot be squashed~\cite{specshadow2019} prevent speculatively leaking \inlineCcode{A[y]}, for $\inlineCcode{y}\ge \inlineCcode{size_A}$. In contrast, more permissive mechanisms that delay only loads depending on speculatively accessed data~\cite{nda2019weisse,STT2019} do {\em not} prevent the leak, because~$\inlineCcode{A[y]}$ is accessed non-speculatively.

While the performance characteristics of secure speculation mechanisms  are well-studied,
  there has been little work on
\begin{inparaenum}[(1)]
	\item \label{it:hw}characterizing the security guarantees they provide, and in particular on
	\item \label{it:sw}investigating how these guarantees can be effectively leveraged by software to achieve global security guarantees.\footnote{A notable exception to~\eqref{it:hw} is STT~\cite{STT2019}, which is backed by a security property that guarantees the confidentiality of speculatively loaded data. However, this property alone does not provide an actionable basis for \eqref{it:sw}, as preventing leakage of non-speculatively accessed data (as in Figure~\ref{figure:v1-variant}) is declared out of scope~\cite[Section 4]{STT2019}.}
\end{inparaenum}
That is, we lack hardware-software contracts that support principled co-design for secure speculation, and that would formalize the intuition (*) described above.

\paragraph{Contracts}
In this paper, we put forward a framework for specifying such contracts, %
 based on three basic building blocks: an ISA language, a model of the microarchitecture, and an adversary model specifying which microarchitectural components (such as caches or branch predictor state) are observable via side-channels.

Contracts specify which program executions a side-channel adversary can distinguish.
A contract in our framework is defined in terms of {\em executions} and {\em observations} made on these executions, and it is formalized in terms of a labelled ISA semantics.  A CPU satisfies a contract if, whenever two program executions agree on all observations, they are guaranteed to be indistinguishable by the adversary at the microarchitectural level. The contract semantics can mandate exploration of mispredicted paths, effectively requiring agreement on observations corresponding to transient instructions.

Secrets at the program level must not affect contract observations, because then they can become visible to the adversary. Hence, contracts exposing more observations correspond to hardware with weaker security guarantees, whereas contracts exposing fewer observations correspond to hardware with stronger guarantees. The extreme case is a contract with no observations, which is satisfied by an ideal side-channel resilient platform that can securely execute every program.

\paragraph{Software Side}
Our framework provides a basis for deriving requirements that {\em software} needs to satisfy to run securely on a specific platform.
For deriving such requirements, we consider two scenarios  typically considered in the literature:
\begin{asparaitem}
	\item In the first scenario, called ``constant-time programming'', the goal is to ensure that a benign program, such as a cryptographic algorithm, does not leak information while computing on confidential data.
	\item In the second scenario, which we call ``sandboxing'',  the goal is  to  restrict the memory region that a potentially malicious program, such as a Web application, can read from.\footnote{In the terminology of~\cite{spectreSoK}, sandboxing aims to block disclosure gadgets.}
\end{asparaitem}

For each scenario, we identify program-level properties that guarantee security on hardware that satisfies a given contract.
We stress that secure speculation  approaches usually {\em either} consider constant-time programming~\cite{spectector2020,SyslevelNI,constanttime2019,balliu2019inspectre} {\em or} sandboxing~\cite{loadhardening2018,microsoft-mitigation}. In contrast, our framework supports {\em both} goals through  program-level properties. %

We provide tool support for automatically checking if programs are secure in both scenarios.
For this, we extend a static analysis tool for detecting speculative leaks~\cite{spectector2020} to cater for  different contracts, and we use it to validate all examples used in the paper on x86 executable code.

\paragraph{Hardware Side}
We use our framework to define contracts for a comprehensive set of recent hardware mechanisms for secure speculation: disabling speculation, delaying speculative load operations~\cite{specshadow2019}, and speculative taint tracking~\cite{nda2019weisse,STT2019}.

To this end, we formalize each mechanism in the context of a variant of the simple speculative out-of-order processor from~\cite{constanttime2019} and we prove that it satisfies  specific contracts against an adversary that observes caches, predictors, and (part of) the reorder buffer during  execution.
We show that the contracts we define form a lattice, and we use this to give, for the first time, a rigorous comparison of the security guarantees offered by different secure speculation mechanisms.

Our analysis highlights that the studied mechanisms~\cite{specshadow2019,nda2019weisse,STT2019} prevent leaks of speculatively accessed data, and confirms the results of~\cite{STT2019}. For software, this means that ``sandboxing'' is supported out-of-the-box, in the sense that programs only need to place appropriate bounds checks, but no speculation barriers.\looseness=-1

Our analysis also shows that the mechanisms offer no support for ``constant-time programming''. This means
that programs that are constant-time in the traditional sense~\cite{AlmeidaBBDE16} still require additional checks~\cite{spectector2020,constanttime2019} or insertion of speculation barriers~\cite{blade}, even if hardware mechanisms for secure speculation are deployed.

\paragraph{Summary of contributions}
We propose a novel framework for expressing security contracts between hardware and software.
Our framework is expressive enough to (1) characterize the security guarantees provided by recent proposals for secure speculation, and (2) provide program-level properties formalizing how to leverage these hardware guarantees to achieve global, end-to-end security for different scenarios.
From a theoretical perspective, we provide the first characterization of security for a comprehensive class of hardware mechanisms for secure speculation. %
From a practical perspective, we show how to automate checks for programs to run securely on top of these mechanisms.

\onlyShortVersion{
\paragraph{Bonus material} 
A technical report containing a full formalization and proofs of all technical results is available at~\cite{technicalReport}.
}

\section{ISA language, semantics, and adversaries}\label{sec:model}

We introduce the  foundations for specifying hardware-software contracts: an ISA language (\S\ref{sec:contracts:isa}),  its architectural semantics  (\S\ref{sec:contracts:semantics}), a general notion of hardware semantics (\S\ref{sec:model:hw-semantics}), and an adversary  model  capturing  which  aspects  of  the microarchitecture are observable via side channels (\S\ref{sec:model:adversary}).\looseness=-1

\subsection{ISA language}\label{sec:contracts:isa}

For modeling the ISA we rely on \srclang{}, a simple assembly language from~\cite{spectector2020} with the following syntax:\\[-2mm] %

	{\centering
	\begin{tabular}{llcl}
	\multicolumn{4}{l}{\bf Basic Types} \\
	\textit{(Registers)} 	&  $x$		& $\in$ & $\Var$ \\
	\textit{(Values)} 		&  $n, \lbl$ 		& $\in$ & $\Val = \Nat \cup \{\bot\}$  \\
	\multicolumn{4}{l}{\bf Syntax} \\
	\textit{(Expressions)} 	&  $e$		& $:=$ & $n \mid x \mid \unaryOp{e} \mid \binaryOp{e_1}{e_2} \mid \ite{e_1}{e_2}{e_3}$ \\
	\textit{(Instructions)} 	&  $i$ 		& $:=$ & $\pskip{} \mid \passign{x}{e} \mid \pload{x}{e} \mid \pstore{x}{e}$ \\
	& & &  $\mid \pjmp{e} \mid  \pjz{x}{\lbl} \mid \pbarrier{}$ \\ %
	\textit{(Programs)}		&  $p$		& $:=$ & $i \mid \pseq{p_1}{p_2} $%
	\end{tabular}
	}

\begin{asparaitem}
\item \srclang{} expressions are built from a set of register identifiers $\Var$, which contains a designated element $\pc$ representing the program counter, and a set of values~$\Val$, which consists of the natural numbers and $\bot$.
\item \srclang{} instructions include assignments, load and store instructions,  indirect jumps, branching instructions, and a speculation barrier $\pbarrier{}$.
\item \srclang{} programs are sequences of instructions. %
\end{asparaitem}

\subsection{Architectural semantics $\archStepCompact$}
\label{sec:contracts:semantics}

The architectural semantics models the execution of \srclang{} programs at the architectural level. 
It is defined in terms of \textit{architectural states} $\conf=\tup{m, a}$ consisting of a \textit{memory} $m$ and a {\em register assignment}  $a$.
Memories $m$ map memory addresses, represented by natural numbers,  to values in $\Val$.
Register assignments $a$ map register identifiers to values in $\Val$.
We signal program termination by assigning the special value $\bot$ to the program counter $\pc$. %

The architectural semantics is a deterministic binary relation $\sigma \archStepCompact \sigma'$, which we formalize in \techReportAppendix{appendix:arch-sem},  mapping an \archstate{} $\sigma$ to its successor $\sigma'$.  
A {\em run} is a finite sequence of states $\sigma_0, \ldots, \sigma_n$ with $\conf_0 \archStepCompact \dots \archStepCompact \conf_n$ such that $\conf_0$ is initial  (that is, all registers including $\pc$ have value $0$) and $\conf_n$ is final (that is, $\conf_n(\pc) = \bot$).

\subsection{Hardware semantics $\muarchStepCompact$}\label{sec:model:hw-semantics}
A hardware semantics models the execution of \srclang{} programs at the microarchitectural level. 
Here we describe a general notion of hardware semantics with the key aspects necessary for explaining hardware-software contracts; we provided multiple, concrete hardware semantics modeling different processors and countermeasures in \S\ref{sec:hw-side}--\ref{sec:countermeasures}.

Hardware semantics are defined in terms of {\em hardware states} $\tup{\sigma,\mu}$ consisting of an \archstate{} $\sigma$ (as before) and a {\em microarchitectural state} $\mu$, which models the state of components like predictors, caches, and  reorder buffer. %

A hardware semantics is a deterministic relation $\muarchStepCompact$ mapping hardware states $\tup{\conf,\mu}$  to their successors $\tup{\conf',\muconf'}$. 
A {\em hardware run} is a sequence $\tup{\conf_0,\muconf_0} \muarchStepCompact \dots \muarchStepCompact \tup{\conf_n,\muconf_n}$ such that $\tup{\conf_0,\muconf_0}$ is initial  and $\tup{\conf_n, \muconf_n}$ is final. For this, we assume that there is a fixed, initial microarchitectural state $\mu_0$, where, for instance, the reorder buffer is empty and all caches have been invalidated.

\subsection{Adversary model}\label{sec:model:adversary}
We consider adversaries that can observe parts of the \uarchstate{} during execution.
We model hardware observations as projections to parts of the \uarchstate{}.
For instance, a cache adversary can be modeled as a function $\adversary$ projecting $\mu$ to its cache component.
In the paper, we consider an adversary $\adversary$ that has access to the state of caches, predictors, and (part of) the reorder buffer; we formalize $\adversary$ in 
Section~\ref{sec:hw-side:adversaries}.

Given a program $p$,   $\muarchSem{p}(\sigma)$ denotes the trace $\adversary(\muconf_0) \concat \ldots \concat \adversary(\muconf_n)$ of hardware observations produced in the run $\tup{\conf,\muconf_0} \muarchStepCompact \dots \muarchStepCompact \tup{\conf_n,\muconf_n}$. %
We refer to $\muarchSem{p}$ as the {\em hardware trace semantics} (hardware semantics for short) of program $p$.

\section{Hardware-software contracts}\label{sec:contracts:contracts}

The purpose of a contract is to split the responsibilities for preventing side channels between software and hardware.

We first formalize the general notion of contracts and we specify  when a hardware platform satisfies a contract.
Then we present several fundamental contracts for secure speculation.

\subsection{Formalizing contracts}\label{sec:contracts:contracts:formalizing}
A {\em contract} is a labeled, deterministic semantics $\interfStepCompact$ for the~ISA.
Given a program~$p$ and an initial \archstate{}~$\sigma_0$, the labels on the transitions of the corresponding run $\sigma_0 \overset{l_1}{\interfStepCompact} \sigma_1 \overset{l_2}{\interfStepCompact} \dots \overset{l_n}{\interfStepCompact} \sigma_n$ define the {\em trace} $\interfSem{p}(\sigma_0) = l_1 l_2 \dots l_n$.
In this paper, we only consider terminating programs. 
We leave the extension to non-terminating programs as future work.

The traces of a contract $\interfSem{p}$ capture which \archstate{}s are guaranteed to be indistinguishable to an attacker on a hardware satisfying the contract, which is formalized below.

\begin{definition}[$\hsni{\interfSem{\cdot}}{\muarchSem{\cdot}}$]\label{def:hni}
A hardware semantics $\muarchSem{\cdot}$ {\em satisfies a contract  $\interfSem{\cdot}$} if, for all programs~$\Prg$ and all initial \archstate{}s~$\sigma,\sigma'$, if $\interfSem{\Prg}(\sigma) = \interfSem{\Prg}(\sigma')$, then $\muarchSem{\Prg}(\sigma)= \muarchSem{\Prg}(\sigma')$.
\end{definition}

Different contracts correspond to different divisions of security obligations between software and hardware:
secrets at the program level must not affect contract observations, because then they can become visible to the adversary. Hence, contracts exposing more observations correspond to hardware with weaker security guarantees, whereas contracts exposing fewer observations correspond to hardware with stronger security guarantees.
A degenerate case is a contract with no observations, which is satisfied by an ideal side-channel resilient platform that securely executes every program.

\subsection{Contracts for secure speculation}\label{sec:contracts:contracts:contracts}

We now define four fundamental contracts that characterize the security guarantees offered by mechanisms for secure speculation. 
We derive our contracts, which we fully formalize in \techReportAppendix{appendix:contracts},  as the combination of two kinds of building blocks.

\subsubsection{Building blocks for contracts}

The first building block are {\em observer modes}, which govern what information a contract exposes. We  define them via labels on the contract semantics.
\begin{asparaitem}
\item The {\em constant-time} observer mode ($\interfStyle{ct}$ for short) is commonly used when reasoning about side channels in cryptographic algorithms. It uses labels $\pcObs{\ell}$, $\loadObs{n}$, and $\storeObs{n}$ to expose the value $\ell$ of the program counter and the addresses $n$ of load and store operations. The $\interfStyle{ct}$ observer mode can be augmented with support for variable-latency instructions by additionally exposing the operands of those instructions as observations, or refined to capture adversaries that can infer addresses of memory accesses only up to the granularity of cache banks, lines, or pages~\cite{dkpldi17}. We forgo both extensions for simplicity.
\item The {\em architectural} observer mode ($\interfStyle{arch}$ for short) additionally exposes the {\em value}~$v$ that is loaded from memory location~$n$ via the label $\loadObs{n=v}$ upon each load instruction. 
As registers are set to zero in the initial \archstate{},   $\interfStyle{arch}$ traces effectively determine the values of all registers during execution.

\end{asparaitem}

The second building block are {\em execution modes} that characterize which paths need to be explored to collect observations. For processors with speculative execution, depending on the presence and effectiveness of hardware-level countermeasures, it is necessary to go beyond paths covered by the architectural semantics.
\begin{asparaitem}
\item In the {\em sequential} execution mode ($\interfStyle{seq}$ for short), programs are executed sequentially and in-order following the architectural semantics.
\item In the {\em always-mispredict} execution mode ($\interfStyle{spec}$ for short), programs are executed sequentially, but incorrect branches are also executed for a bounded number of steps before backtracking.
This execution mode is based on~\cite{spectector2020} and can be used to explore the effects of speculatively executed instructions at the ISA level. 
\end{asparaitem}

\subsubsection{Contract $\CtSeqInterf{\cdot}$}\label{contract:ct-seq}
This contract exposes the program counter and the locations of memory accesses on sequential, non-speculative paths; see Figure~\ref{figure:interface:seq-ct}.
$\CtSeqInterf{\cdot}$ is a fundamental baseline that is often implicitly assumed in practice, and that has also been formalized in~\cite{BartheBCL19,AlmeidaBBDE16}.

In Section~\ref{sec:countermeasures:sequential} we show that $\CtSeqInterf{\cdot}$ is satisfied by a simple in-order processor without speculation. However, modern out-of-order processors do {\em not} satisfy $\CtSeqInterf{\cdot}$, as shown below.

\begin{example}\label{ex:out-of-order}
Consider the vanilla Spectre v1 snippet from Figure~\ref{figure:v1-vanilla}, compiled to \srclang{}:
{%
\begin{lstlisting}[mathescape=true,style=Cstyle]
x $\leftarrow$ y < size_A
$\mathbf{beqz}$ x, $\bot$ //checking y < size_A
$\mathbf{load}$ z,A + y //accessing A[y]
z $\leftarrow$ z*64
$\mathbf{load}$ w, B+z //accessing B[A[y]*64]
\end{lstlisting}
}
Consider \archstate{}s $\sigma$ and $\sigma'$ that agree on the observations on trace $\pcObs{3}\concat\loadObs{(\inlineCcode{A+y})}\concat \loadObs{(\inlineCcode{B+z})}$ (and hence on the content of array \inlineCcode{A} within bounds), but for which $\sigma(\inlineCcode{A + y})  = 0$ and $\sigma'(\inlineCcode{A + y}) = 1$ for some \inlineCcode{y>size_A}. On processors with speculation, an adversary with cache access can distinguish $\sigma$ and $\sigma'$, as shown by Spectre attacks~\cite{Kocher2018spectre}.
\end{example}

Perhaps surprisingly, processors deploying recent proposals for secure speculation still violate $\CtSeqInterf{\cdot}$, see \S\ref{sec:countermeasures}.

\begin{figure}
\begin{mathpar}
	\inferrule[Load]
	{
	\select{p}{a(\pc)} = \pload{x}{e} \\
	\tup{m, a} \archStep{}{} \tup{m',a'}
	}
	{
	\tup{m, a} \CtSeqInterfStep{\loadObs{\exprEval{e}{a}}}{} \tup{m', a'}
	}

	\inferrule[Store]
	{
	\select{p}{a(\pc)} = \pstore{x}{e} \\
	\tup{m, a} \archStep{}{} \tup{m',a'}
	}
	{
	\tup{m, a} \CtSeqInterfStep{\storeObs{ \exprEval{e}{a} }}{} \tup{ m', a'}
	}

	\inferrule[Beqz-Sat]
	{
	\select{p}{a(\pc)} = \pjz{x}{\lbl} \\
	\tup{m, a} \archStep{}{} \tup{m',a'}
	}
	{
	\tup{m, a} \CtSeqInterfStep{\pcObs{a'(\pc)}}{} \tup{ m', a'}
	}
\end{mathpar}
\caption{ $\CtSeqInterf{\cdot}$ contract for a program $p$ - selected rules (here $\exprEval{e}{a}$ is the result of expression $e$ given assignment $a$). The contract is
obtained by augmenting the architectural semantics with observations $\loadObs{n}$, $\storeObs{n}$, and $\pcObs{\ell}$ exposing the addresses of loads, stores, and the program counter, respectively.
}\label{figure:interface:seq-ct}
\end{figure}

\subsubsection{Contract $\CtSpecInterf{\cdot}$}\label{contract:ct-spec}
	This contract additionally exposes the program counter and the locations of all memory accesses on speculatively executed paths. It is based on the speculative semantics from~\cite{spectector2020} and formalized in Figure~\ref{figure:interface:spec-ct}.

	In Section~\ref{sec:countermeasures}, we show that speculative out-of-order processors (with and without mechanisms for secure speculation) satisfy $\CtSpecInterf{\cdot}$.\looseness=-1 
	
	Consider again Example~\ref{ex:out-of-order}: by exposing observations on mispredicted paths, $\CtSpecInterf{\cdot}$ makes the states $\sigma, \sigma'$ distinguishable at the contract level, effectively delegating the responsibility of ensuring that $\inlineCcode{A + y}$ does not carry secret information for \inlineCcode{y >= size_A} to software.

\begin{figure*}[h]
	\begin{mathpar}
		\inferrule[Step]
		{
		p(\sigma(\pc))\neq \pjz{x}{\lbl}\\
		 \conf \CtSeqInterfStep{\tau}{} \conf'\\
		}
		{
			\tup{\sigma, \omega +1} \cdot s \CtSpecInterfStep{\tau}{} \tup{\sigma', \omega}\cdot s
		}

		\inferrule[Rollback]
		{
		s=\tup{\sigma',\omega'}\cdot s'\\
		}
		{
			\tup{\sigma,0 }\cdot s \CtSpecInterfStep{\pcObs{\sigma'(\pc)}}{} s
		}

		\inferrule[Barrier]
		{
		p(\sigma(\pc))= \pbarrier \\
		 \conf \CtSeqInterfStep{\tau}{} \conf'\\
		}
		{
			\tup{\sigma, \omega +1} \cdot s \CtSpecInterfStep{\tau}{} \tup{\sigma', 0}\cdot s
		}
		
		\inferrule[Branch]
		{
		p(\sigma(\pc))=\pjz{x}{\lbl}\quad
		\ell_{\mathit{correct}} =
		{
			\begin{cases}
				\lbl & \text{if}\ \sigma(x) = 0\\
				\sigma(\pc) + 1 & \text{otherwise}
			\end{cases}
		}\quad
		\ell_{\mathit{mispred}} \in \{\lbl, \sigma(\pc) + 1\} \setminus \ell_{\mathit{correct}}\quad
		\omega_\mathit{mispred}={	\begin{cases}
				\wInterf & \text{if } \omega=\infty\\
				\omega & \text{otherwise}
			\end{cases}
		}
		}
		{
			\tup{\sigma,\omega+1 }\cdot s \CtSpecInterfStep{\pcObs{\lbl_\mathit{mispred}}}{} \tup{\sigma[\pc \mapsto \lbl_\mathit{mispred}],\omega_\mathit{mispred}} \cdot\tup{\sigma[\pc \mapsto \lbl_\mathit{correct}],\omega} \cdot s
		}
	\end{mathpar}
	\caption{
	Definition of $\CtSpecInterf{\cdot}$ contract.
	Configurations are stacks of $\tup{\sigma, \omega}$, where $\omega \in \Nat\cup \{\infty\}$ is the speculative window denoting how many instructions are left to be executed. (initial \archstate{}s $\sigma$ are treated as $\tup{\sigma,\infty}$). %
	At each computation step, the $\omega$ at the top of the stack is reduced by $1$ (rules \textsc{Step} and \textsc{Branch}).
	When executing a branch instruction (rule \textsc{Branch}), the state $\tup{\sigma[\pc \mapsto \lbl_\mathit{mispred}],\omega_\mathit{mispred}}$ is pushed on top of the stack, thereby allowing the exploration of the mispredicted branch for $\omega_\mathit{mispred}$ steps.
	The correct branch $\tup{\sigma[\pc \mapsto \lbl_\mathit{correct}],\omega}$ is also recorded on the stack; allowing to later roll back speculatively executed statements.
	When the $\omega$ at the top of the stack reaches $0$, we pop it (i.e., we backtrack and discard the changes) and we continue the computation (rule \textsc{Rollback}).
	Speculation barriers trigger a roll back by setting $\omega$ to $0$ (rule \textsc{Barrier}).
	}\label{figure:interface:spec-ct}
	\end{figure*}

\subsubsection{Contract $\ArchSeqInterf{\cdot}$}\label{contract:arch-seq}
This contract exposes the program counter, the location of all loads and stores, and the values of all data loaded from memory on standard, i.e., non-speculative, program paths.
The contract is obtained by modifying the \textsc{Load} %
rule from Figure~\ref{figure:interface:seq-ct} as follows:
\begin{mathpar}
	\inferrule[Load]
	{
	\select{p}{a(\pc)} = \pload{x}{e} \\
	\tup{m, a} \archStep{}{} \tup{m',a'}
	}
	{
	\tup{m, a} \ArchSeqInterfStep{\loadObs{\exprEval{e}{a}= m(\exprEval{e}{a})}}{} \tup{m', a'}
	}

\end{mathpar}

As we assume that register values are zeroed in the initial state, the $\ArchSeqInterf{\cdot}$ trace effectively exposes the contents of registers during execution. 
While this does not seem to guarantee any kind of security,  $\ArchSeqInterf{\cdot}$ {\em does} guarantee the confidentiality of data that is {\em only transiently} loaded, thus effectively preventing speculative disclosure gadgets. In that sense, the contract $\ArchSeqInterf{\cdot}$ is a simple and clean formulation of the idea behind {\em transient noninterference}~\cite{STT2019}, making it comparable to the guarantees offered by other contracts, and providing an actionable interface to software.

\subsubsection{Special contracts}
We informally present a number of contracts that illustrate our framework's expressiveness:
\begin{asparaitem}
	\item $\interfSem{\cdot}_{\interfStyle{\top}}$ is the contract that does not expose any observations and corresponds to a hypothetical side-channel resilient processor that can securely execute every program.
	\item $\CtPcSpecInterf{\cdot}$ exposes program counter and addresses of loads during sequential execution, and only the program counter during speculative execution. That is, it may intuitively be understood as  $\CtSeqInterf{\cdot} + \PcSpecInterf{\cdot}$.
		This contract corresponds, for instance, to processors vulnerable to speculative port-contention attacks like~\cite{smotherspectre2019}.
	\item $\ArchSpecInterf{\cdot}$ exposes the values of data loaded from memory also during speculatively executed instructions. It corresponds to a processor that does not offer any confidentiality guarantees for accessed data.
	\item $\InftyInterf{\cdot}$ exposes all \archstate{} and corresponds to a hypothetical processor that provides no confidentiality guarantees whatsoever.
\end{asparaitem}

\subsection{A lattice of contracts}\label{sec:contracts:contracts:lattice}

Contracts can be compared in terms of the security guarantees that they offer to software. Intuitively, a contract is stronger than another, if it guarantees to leak less information to a microarchitectural adversary.
For instance,  $\interfSem{\cdot}_{\interfStyle{\top}}$, which exposes no observations, is stronger than  $\InftyInterf{\cdot}$, which exposes the entire \archstate{}, written $\interfSem{\cdot}_{\interfStyle{\top}} \sqsupseteq \InftyInterf{\cdot}$.

\begin{definition}[$\interfSem{\cdot}_{\interfStyle{1}} \sqsupseteq \interfSem{\cdot}_{\interfStyle{2}}$]\label{def:contract-strenght}
A contract $\interfSem{\cdot}_{\interfStyle{1}}$ is {\em stronger} than a contract $\interfSem{\cdot}_{\interfStyle{2}}$ if $\interfSem{p}_{\interfStyle{2}}(\sigma) = \interfSem{p}_{\interfStyle{2}}(\sigma')  \Rightarrow  \interfSem{p}_{\interfStyle{1}}(\sigma) = \interfSem{p}_{\interfStyle{1}}(\sigma') $
for all programs $p$ and all initial \archstate{}s $\sigma,\sigma'$.
\end{definition}
Equivalently, $\interfSem{\cdot}_{\interfStyle{1}} \sqsupseteq \interfSem{\cdot}_{\interfStyle{2}}$ holds whenever two \archstate{}s that can be distinguished by $\interfSem{\cdot}_{\interfStyle{1}}$'s traces can also be distinguished by $\interfSem{\cdot}_{\interfStyle{2}}$'s traces.

Note that if $\interfSem{\cdot}_{\interfStyle{1}}$ exposes only a subset of the labels of $\interfSem{\cdot}_{\interfStyle{2}}$, then $\interfSem{\cdot}_{\interfStyle{1}}$ is stronger than $\interfSem{\cdot}_{\interfStyle{2}}$ according to Definition~\ref{def:contract-strenght}. For example, the instructions explored by $\interfStyle{spec}$ are also explored by
$\interfStyle{seq}$, and the observations of $\interfStyle{ct}$ are contained in the observations of $\interfStyle{arch}$. This enables us to arrange all contracts defined in~\S\ref{sec:contracts:contracts:contracts} in the lattice~\cite{lattice}
shown in Figure~\ref{figure:lattice-contracts}.

Finally, as expected, a hardware platform that satisfies a contract $\interfSem{\cdot}_{\interfStyle{1}}$ also satisfies all weaker contracts $\interfSem{\cdot}_{\interfStyle{2}}$.

\begin{restatable}{prop}{contractHniOrdering}
    \label{proposition:contract-hni-ordering}
	If $\hsni{\interfSem{\cdot}_{\interfStyle{1}}}{\CtxMuarchSem{\cdot}{}}$ and $ \interfSem{\cdot}_{\interfStyle{1}} \sqsupseteq \interfSem{\cdot}_{\interfStyle{2}}$, then $\hsni{\interfSem{\cdot}_{\interfStyle{2}}}{\CtxMuarchSem{\cdot}{}}$.
\end{restatable}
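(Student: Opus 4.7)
The proof is essentially a direct chain of implications from the two definitions at hand, so the plan is to unfold Definitions~\ref{def:hni} and~\ref{def:contract-strenght} and compose them. Specifically, I would fix an arbitrary program $p$ and arbitrary initial \archstate{}s $\sigma, \sigma'$, and then assume $\interfSem{p}_{\interfStyle{2}}(\sigma) = \interfSem{p}_{\interfStyle{2}}(\sigma')$, which is the hypothesis I need to discharge to prove $\hsni{\interfSem{\cdot}_{\interfStyle{2}}}{\CtxMuarchSem{\cdot}{}}$.

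From $\interfSem{\cdot}_{\interfStyle{1}} \sqsupseteq \interfSem{\cdot}_{\interfStyle{2}}$, applied to $p, \sigma, \sigma'$, I immediately get $\interfSem{p}_{\interfStyle{1}}(\sigma) = \interfSem{p}_{\interfStyle{1}}(\sigma')$. Then invoking the hypothesis $\hsni{\interfSem{\cdot}_{\interfStyle{1}}}{\CtxMuarchSem{\cdot}{}}$ on these same $p, \sigma, \sigma'$ yields $\CtxMuarchSem{p}{}(\sigma) = \CtxMuarchSem{p}{}(\sigma')$, which is exactly what is required. Since $p$, $\sigma$ and $\sigma'$ were arbitrary, this establishes $\hsni{\interfSem{\cdot}_{\interfStyle{2}}}{\CtxMuarchSem{\cdot}{}}$.

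There is no real obstacle here: the statement is a transitivity-style lemma that follows by simply chaining the two universally quantified implications. The only thing to be careful about is that the quantifiers in Definitions~\ref{def:hni} and~\ref{def:contract-strenght} range over exactly the same objects (programs and pairs of initial \archstate{}s), so the instantiation matches cleanly and no auxiliary lemma is needed. The proof is therefore a one-line composition.
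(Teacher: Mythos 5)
Your proof is correct and follows exactly the same route as the paper's: fix $p,\sigma,\sigma'$ with $\interfSem{p}_{\interfStyle{2}}(\sigma)=\interfSem{p}_{\interfStyle{2}}(\sigma')$, use Definition~\ref{def:contract-strenght} to obtain $\interfSem{p}_{\interfStyle{1}}(\sigma)=\interfSem{p}_{\interfStyle{1}}(\sigma')$, then apply Definition~\ref{def:hni} for $\interfSem{\cdot}_{\interfStyle{1}}$ to conclude $\CtxMuarchSem{p}{}(\sigma)=\CtxMuarchSem{p}{}(\sigma')$. Nothing is missing.
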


This implies that processors with stronger contracts $\interfSem{\cdot}_{\interfStyle{1}}$ are backward-compatible in the sense that they can securely execute any side-channel resilient legacy code that was already secure under weaker contracts $\interfSem{\cdot}_{\interfStyle{2}}$.

Full proofs of Proposition~\ref{proposition:contract-hni-ordering} and of the results in Figure~\ref{figure:lattice-contracts} are available in \techReportAppendix{appendix:contracts}.

\newcommand{\ydistance}{0.88}
\begin{figure}
    \centering

    \begin{tikzpicture}[->,>=stealth',shorten >=1pt,auto, semithick]

	\node[fill=none,draw=none, shape = rectangle, rounded corners, inner sep=0pt, outer sep=0pt, minimum height = 16pt, minimum width = 35pt]  (top) at (+2,0) {$\interfSem{\cdot}_{\interfStyle{\top}}$};

    \node[fill=none,draw=none, shape = rectangle, rounded corners, inner sep=0pt, outer sep=0pt, minimum height = 16pt, minimum width = 55pt]  (seqCt) at (0,0) {$\CtSeqInterf{\cdot}$};

    \node[fill=none,draw=none, shape = rectangle, rounded corners, inner sep=0pt, outer sep=0pt, minimum height = 16pt, minimum width = 55pt]  (specArch) at ($(seqCt) + (-3,-2.5*\ydistance)$) {$\ArchSpecInterf{\cdot}$};

    \node[fill=none,draw=none, shape = rectangle, rounded corners, inner sep=0pt, outer sep=0pt, minimum height = 16pt, minimum width = 55pt]  (seqArch) at ($(seqCt) + (-3,-0.5*\ydistance)$) {$\ArchSeqInterf{\cdot}$};

    \node[fill=none,draw=none, shape = rectangle, rounded corners, inner sep=0pt, outer sep=0pt, minimum height = 16pt, minimum width = 55pt]  (specPcCt) at ($(seqCt) + (0,-\ydistance)$) {$\CtPcSpecInterf{\cdot}$};

    \node[fill=none,draw=none, shape = rectangle, rounded corners, inner sep=0pt, outer sep=0pt, minimum height = 16pt, minimum width = 55pt]  (specCt) at ($(specPcCt) + (0,-\ydistance)$) {$\CtSpecInterf{\cdot}$};

	\node[fill=none,draw=none, shape = rectangle, rounded corners, inner sep=0pt, outer sep=0pt, minimum height = 16pt, minimum width = 35pt]  (bot) at ($(specArch) + (-2,0)$) {$\InftyInterf{\cdot}$};
	
	\path (seqCt) edge[] node[left] {} (top);
    \path (specPcCt) edge[] node[left] {} (seqCt);
    \path (seqArch) edge[] node[left] {} (seqCt);
    \path (specArch) edge[] node[left] {} (seqArch);
    \path (specCt) edge[] node[left] {} (specPcCt);
    \path (specArch) edge[] node[left] {} (specCt);
    \path (bot) edge[] node[left] {} (specArch);

    \end{tikzpicture}
	\caption{Lattice of contracts. An edge from $\interfSem{\cdot}_{\interfStyle{2}}$ to $\interfSem{\cdot}_{\interfStyle{1}}$ means that $\interfSem{\cdot}_{\interfStyle{1}} \sqsupseteq \interfSem{\cdot}_{\interfStyle{2}}$, that is,  $\interfSem{\cdot}_{\interfStyle{1}}$ is stronger than $\interfSem{\cdot}_{\interfStyle{2}}$. The top element $\interfSem{\cdot}_{\interfStyle{\top}}$ of the lattice exposes no observations, while its bottom element $\InftyInterf{\cdot}$ exposes the entire architectural state.}
    \label{figure:lattice-contracts}
    \end{figure}

 \section{Programming against contracts}\label{sec:sw-side}

Contracts are the basis for secure programming.
Here, we consider two scenarios that are both instances of secure programming:
In the first, which we call ``constant-time programming'', the goal is to ensure that a benign program does not leak confidential data to an adversary while computing on this data.
In the second, which we call ``sandboxing'', the goal is to prevent a potentially malicious program from accessing confidential data.
Proofs of this section's results are  in~\techReportAppendix{appendix:contracts}.

\subsection{Secure programming}

We begin by framing secure programming as an information-flow property.
To distinguish confidential data from public data, we rely on a policy $\policy\colon \Val \rightarrow \{\low, \high\}$ that labels memory locations as high ($\high$) or low ($\low$), encoding whether locations store confidential data or not.
Two architectural states~$\sigma, \sigma'$ are \emph{low-equivalent}, written $\conf \lowequiv \conf'$, iff the values of all low memory locations are the same.

\begin{definition}[$\ct{p}{\policy}{\interfSem{\cdot}}$]\label{def:nochannel}
Program $\Prg$ is {\em non-interferent} w.r.t. contract $\interfSem{\cdot}$ and policy $\policy$ %
if for all initial architectural states~$\conf, \conf'$:
$\conf \lowequiv \conf' \Rightarrow \interfSem{p}(\conf) = \interfSem{p}(\conf')$.
\end{definition}

That is, a program is non-interferent w.r.t. a contract and a policy,
if low-equivalent \archstate{}s are indistinguishable under the contract, i.e., no information about high memory locations leaks into the contract's traces.

Similarly to Definition~\ref{def:nochannel}, one can define a notion of non-interference w.r.t. a hardware semantics $\muarchSem{\cdot}$, written $\ct{p}{\policy}{\muarchSem{\cdot}}$,  where information about high memory locations cannot flow into hardware observations.

The following proposition, capturing leakage at the hardware level, follows by composition of Definitions~\ref{def:hni} and~\ref{def:nochannel}:

\begin{restatable}{prop}{secProgEndToEnd}
    \label{thm:sec-prog:end-to-end}
	If $\ct{\Prg}{\policy}{\interfSem{\cdot}}$ and $\hsni{\interfSem{\cdot}}{ \muarchSem{\cdot}}$, then $\ct{\Prg}{\policy}{\muarchSem{\cdot}}$.
\end{restatable}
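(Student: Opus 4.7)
The plan is to prove this by direct chaining of the two hypotheses, as it is essentially a compositionality result about agreement relations.

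First I would unfold the definition of $\ct{\Prg}{\policy}{\muarchSem{\cdot}}$ (Definition~\ref{def:nochannel}, instantiated with the hardware trace semantics in place of a contract): I need to show that for any two initial architectural states $\sigma, \sigma'$ with $\sigma \lowequiv \sigma'$, we have $\muarchSem{\Prg}(\sigma) = \muarchSem{\Prg}(\sigma')$. So I fix two such states and assume $\sigma \lowequiv \sigma'$.

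Next I would apply the first hypothesis $\ct{\Prg}{\policy}{\interfSem{\cdot}}$ to the pair $(\sigma, \sigma')$: since the states are low-equivalent and the program is non-interferent with respect to the contract and policy $\policy$, Definition~\ref{def:nochannel} yields $\interfSem{\Prg}(\sigma) = \interfSem{\Prg}(\sigma')$. Then I would apply the second hypothesis $\hsni{\interfSem{\cdot}}{\muarchSem{\cdot}}$: by Definition~\ref{def:hni}, agreement of the two initial states on the contract trace implies agreement on the hardware trace, i.e., $\muarchSem{\Prg}(\sigma) = \muarchSem{\Prg}(\sigma')$. Since $\sigma, \sigma'$ were arbitrary low-equivalent initial states, this establishes $\ct{\Prg}{\policy}{\muarchSem{\cdot}}$.

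There is no real obstacle here: the proposition is a two-step transitive composition of agreement relations, with the contract trace equality serving as the intermediate bridge between the program-level assumption and the hardware-level conclusion. The only subtlety worth flagging in the write-up is that Definition~\ref{def:hni} is stated for \emph{all} programs and \emph{all} initial architectural states, so its instantiation at the specific $\Prg$ and the specific pair $(\sigma, \sigma')$ we fixed is immediate and requires no additional quantifier manipulation. No induction on runs, on program structure, or on traces is needed, and the policy $\policy$ plays no active role beyond defining $\lowequiv$.
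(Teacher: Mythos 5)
Your proof is correct and matches the paper's own argument exactly: fix low-equivalent initial states, apply Definition~\ref{def:nochannel} to obtain equality of contract traces, then apply Definition~\ref{def:hni} to obtain equality of hardware traces, and conclude by genericity of the chosen states. Nothing to add.
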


\subsection{Sandboxing}

The goal of sandboxing is to enable the safe execution of untrusted, potentially malicious code. This is achieved by ensuring that the untrusted code is confined to a set of tightly controlled resources.
Here we focus on one important aspect: preventing code from reading outside of its own subset of the address space. 
To achieve this,  just-in-time compilers enforce access-control policies by inserting checks to ensure that all memory accesses happen within the sandbox's bounds.

We describe sandboxes using policies $\policy$, where memory outside of the sandbox is declared high. 
To account for programs that may escape the sandbox by exploiting speculation across access-control checks, we make the following distinction: 
\begin{asparaitem}
	\item Traditional sandboxing approaches~\cite{nacl2010,wasm2017}
	   check/enforce {\em vanilla sandboxing}:
	A program~$\Prg$ is {\em vanilla-sandboxed} w.r.t.~$\policy$ if $\Prg$ never  accesses high memory locations when executing under the architectural semantics $\archStepCompact$.
	In our framework, being vanilla-sandboxed is equivalent to  $\ct{p}{\policy}{\ArchSeqInterf{\cdot}}$, i.e., being non-interferent w.r.t. $\ArchSeqInterf{\cdot}$.
	This  follows from $\ArchSeqInterf{\cdot}$ exposing the value of accessed high memory locations. %
	\item To faithfully reason about sandboxing on out-of-order and speculative processors, one needs to go beyond vanilla sandboxing and make sure that the program does not leak any information that is outside of its sandbox through a covert channel.
	We say that a program is {\em generally-sandboxed} w.r.t. contract $\interfSem{\cdot}$, if it is vanilla-sandboxed and in addition non-interferent w.r.t $\interfSem{\cdot}$, i.e., $\ct{p}{\policy}{\interfSem{\cdot}}$. 
	General sandboxing together with Proposition~\ref{thm:sec-prog:end-to-end} guarantees that no data outside of the sandbox affects what a microarchitectural adversary (including the sandboxed program $\Prg$ itself, via probing) %
	can observe on any platform satisfying $\interfSem{\cdot}$. %
\end{asparaitem}

Definition~\ref{def:wsni} enables to bridge the gap between {vanilla sandboxing}
and {general sandboxing} for a given program. %
\begin{definition}\label{def:wsni}
	Program~$\Prg$ satisfies {\em weak speculative non-interference} (wSNI) with respect to $\interfSem{\cdot}$
	if for all initial architectural states~$\conf, \conf'$:
		$\ArchSeqInterf{\Prg}(\sigma) = \ArchSeqInterf{\Prg}(\sigma') \Rightarrow \interfSem{\Prg}(\sigma) = \interfSem{\Prg}(\sigma')$.
\end{definition}

Weak speculative non-interference %
 is a variant of {\em speculative non-interference}, the security property checked by Spectector~\cite{spectector2020}.
Proposition~\ref{prop:sandbox} shows how wSNI bridges the gap between vanilla and general sandboxing.

\begin{restatable}{prop}{propSandbox}
    \label{prop:sandbox}
	If program~$\Prg$ is vanilla-sandboxed w.r.t. $\policy$ and wSNI w.r.t. $\interfSem{\cdot}$, then $\Prg$ is generally-sandboxed w.r.t. $\policy$ and $\interfSem{\cdot}$.
\end{restatable}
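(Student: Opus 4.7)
The plan is to observe that Proposition~\ref{prop:sandbox} reduces to a simple composition of the two hypotheses, and to verify that the composite implication matches exactly the definition of general sandboxing.

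First, I would unfold the definitions. By the paper's convention, $\Prg$ being vanilla-sandboxed w.r.t.\ $\policy$ is stated to be equivalent to $\ct{\Prg}{\policy}{\ArchSeqInterf{\cdot}}$, i.e., for all initial architectural states $\sigma, \sigma'$ with $\sigma \lowequiv \sigma'$, one has $\ArchSeqInterf{\Prg}(\sigma) = \ArchSeqInterf{\Prg}(\sigma')$. Likewise, wSNI w.r.t.\ $\interfSem{\cdot}$ states that for all initial $\sigma, \sigma'$, $\ArchSeqInterf{\Prg}(\sigma) = \ArchSeqInterf{\Prg}(\sigma')$ implies $\interfSem{\Prg}(\sigma) = \interfSem{\Prg}(\sigma')$. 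General sandboxing w.r.t.\ $\policy$ and $\interfSem{\cdot}$ is defined as vanilla sandboxing plus $\ct{\Prg}{\policy}{\interfSem{\cdot}}$.

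The proof then proceeds by transitivity. Fix arbitrary initial architectural states $\sigma, \sigma'$ with $\sigma \lowequiv \sigma'$. By vanilla sandboxing, $\ArchSeqInterf{\Prg}(\sigma) = \ArchSeqInterf{\Prg}(\sigma')$. By wSNI, this yields $\interfSem{\Prg}(\sigma) = \interfSem{\Prg}(\sigma')$. Hence $\ct{\Prg}{\policy}{\interfSem{\cdot}}$ holds. Together with the vanilla-sandboxing hypothesis, this is precisely the definition of generally-sandboxed w.r.t.\ $\policy$ and $\interfSem{\cdot}$, concluding the proof.

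There is essentially no obstacle here: the argument is a one-line transitivity, and the only non-trivial ingredient is the equivalence between ``$\Prg$ never accesses high memory locations under $\archStepCompact$'' and $\ct{\Prg}{\policy}{\ArchSeqInterf{\cdot}}$ that the paper asserts just before Definition~\ref{def:wsni}. The mild subtlety worth spelling out is why that equivalence holds: since $\ArchSeqInterf{\cdot}$ exposes the loaded value $m(\exprEval{e}{a})$ on every load, two low-equivalent states produce identical $\ArchSeqInterf{\cdot}$-traces iff every value loaded from an accessed address is the same in both states, which holds precisely when no accessed address lies in the high region (high locations may differ between $\sigma$ and $\sigma'$, and any access to them would produce distinguishable $\loadObsKywd$-labels). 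This justifies the use of the hypothesis as a statement of contract non-interference in the first step above.
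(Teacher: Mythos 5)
Your proof is correct and follows essentially the same route as the paper's: fix low-equivalent initial states, use vanilla sandboxing (read as $\ct{\Prg}{\policy}{\ArchSeqInterf{\cdot}}$) to equate the $\ArchSeqInterf{\cdot}$ traces, then apply wSNI to equate the $\interfSem{\cdot}$ traces, yielding $\ct{\Prg}{\policy}{\interfSem{\cdot}}$ and hence general sandboxing. The paper simply takes the identification of vanilla sandboxing with $\ct{\Prg}{\policy}{\ArchSeqInterf{\cdot}}$ as given, so your extra justification of that equivalence is harmless but not part of its argument.
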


Hence, to check  whether a program~$\Prg$ is generally-sandboxed w.r.t. $\interfSem{\cdot}$ and $\policy$ one can: (1) check/enforce that $\Prg$ is vanilla-sandboxed w.r.t. $\policy$, and (2) verify whether~$\Prg$ is wSNI.\looseness=-1

\subsection{Constant-time programming}

Constant-time programming is a coding discipline for the implementation of code like cryptographic algorithms that needs to compute over secret data without leaks. %
Code without (1) secret-dependent control flow, (2) secret-dependent memory accesses, and (3) secret-dependent inputs to variable-latency instructions is traditionally understood as ``constant time''.
As discussed before this corresponds to $\CtSeqInterf{\cdot}$, which exposes control flow and memory accesses.%

Again, considering only $\CtSeqInterf{\cdot}$ is insufficient to reason about constant-time on modern processors.
For this, we make the following distinction:
\begin{asparaitem}
	\item Existing constant-time approaches (type systems~\cite{Rodrigues16}, static analyses~\cite{MolnarPSW05,AlmeidaBBDE16}, and techniques for secure compilation~\cite{Cauligi19fact,Barthe19constanttimecompiler}) check/enforce \emph{vanilla-constant-time}.
	In our framework, a program $\Prg$ is \emph{vanilla-constant-time} w.r.t. $\policy$ if $\ct{p}{\policy}{\CtSeqInterf{\cdot}}$, i.e., $\Prg$ non-interferent w.r.t. $\CtSeqInterf{\cdot}$.
	
	\item More generally, a program $\Prg$ is \emph{generally-constant-time} w.r.t. contract $\interfSem{\cdot}$ iff $\ct{p}{\policy}{\interfSem{\cdot}}$, i.e., constant-time coincides with non-interference w.r.t. a contract.
\end{asparaitem}

One possibility for checking general-constant-time is devising dedicated tools~\cite{constanttime2019}.
Alternatively, one can reuse vanilla-constant-time tools~\cite{MolnarPSW05,AlmeidaBBDE16} and then bridge the gap between vanilla and general-constant-time.
To bridge this gap, one can rely on the following generalization of \emph{speculative non-interference} from~\cite{spectector2020}:
\begin{definition}[Speculative non-interference~\cite{spectector2020}]
Program $\Prg$ is {\em speculatively non-interferent} (SNI) w.r.t. policy $\policy$ and contract $\interfSem{\cdot}$ if for all initial architectural states~$\conf, \conf'$:
\vspace{-1mm}
 \begin{equation*}
	\conf \lowequiv \conf' \wedge \CtSeqInterf{\Prg}(\sigma) = \CtSeqInterf{\Prg}(\sigma') \Rightarrow \interfSem{\Prg}(\sigma) = \interfSem{\Prg}(\sigma').
 \end{equation*}
\end{definition}

Proposition~\ref{prop:ct} shows how SNI bridges the gap between vanilla and general constant-time.

\begin{restatable}{prop}{propCt}
    \label{prop:ct}
	If program~$\Prg$ is vanilla-constant-time w.r.t. $\policy$ and SNI w.r.t.  $\policy$ and $\interfSem{\cdot}$, then $\Prg$ is generally-constant-time w.r.t. $\policy$ and $\interfSem{\cdot}$.
\end{restatable}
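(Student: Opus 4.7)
The plan is to prove the statement by a direct composition of the two hypotheses, unwinding the definitions of vanilla-constant-time, SNI, and generally-constant-time. I will fix two initial architectural states $\sigma, \sigma'$ and assume $\sigma \lowequiv \sigma'$, with the goal of deriving $\interfSem{\Prg}(\sigma) = \interfSem{\Prg}(\sigma')$, which is precisely what generally-constant-time demands.

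First, I would invoke the vanilla-constant-time hypothesis on $\sigma, \sigma'$. By definition, this is $\ct{\Prg}{\policy}{\CtSeqInterf{\cdot}}$, which unfolds (via Definition~\ref{def:nochannel}) to: $\sigma \lowequiv \sigma' \Rightarrow \CtSeqInterf{\Prg}(\sigma) = \CtSeqInterf{\Prg}(\sigma')$. Since we have assumed $\sigma \lowequiv \sigma'$, we obtain $\CtSeqInterf{\Prg}(\sigma) = \CtSeqInterf{\Prg}(\sigma')$ as an intermediate conclusion.

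Next, I would feed both this intermediate conclusion and the assumption $\sigma \lowequiv \sigma'$ into the SNI hypothesis. By the definition of speculative non-interference, the conjunction $\sigma \lowequiv \sigma' \wedge \CtSeqInterf{\Prg}(\sigma) = \CtSeqInterf{\Prg}(\sigma')$ is exactly the precondition required to conclude $\interfSem{\Prg}(\sigma) = \interfSem{\Prg}(\sigma')$. Since $\sigma, \sigma'$ were arbitrary initial architectural states satisfying $\sigma \lowequiv \sigma'$, this establishes $\ct{\Prg}{\policy}{\interfSem{\cdot}}$, i.e., general-constant-time.

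There is essentially no hard step here: the proof is a two-link implication chain, and the shape of the SNI definition was evidently designed to make exactly this composition go through. The only subtlety worth flagging explicitly in the writeup is that the SNI precondition has two conjuncts, one of which (low-equivalence) is the assumption carried from the outer goal, while the other (agreement on $\CtSeqInterf{\cdot}$ traces) is supplied by the vanilla-constant-time hypothesis; this mirrors in miniature the same compositional pattern used in Proposition~\ref{thm:sec-prog:end-to-end} and Proposition~\ref{prop:sandbox}.
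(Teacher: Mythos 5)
Your proposal is correct and follows essentially the same argument as the paper's proof: fix low-equivalent initial states, use vanilla-constant-time to obtain agreement on $\CtSeqInterf{\cdot}$ traces, then discharge the SNI precondition (low-equivalence plus that trace agreement) to conclude agreement on $\interfSem{\cdot}$ traces. Nothing is missing.
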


Thus to check whether a program~$\Prg$ is generally-constant-time w.r.t. $\interfSem{\cdot}$ and $\policy$ one can (1)  check vanilla-constant-time, and (2) verify whether~$\Prg$ is SNI w.r.t. $\interfSem{\cdot}$ and $\policy$.

Observe, however, that not all contracts are useful for general-constant-time.
Remarkably, the $\ArchSeqInterf{\cdot}$ contract, which naturally corresponds to the guarantees provided by state-of-the-art hardware-level countermeasures like STT~\cite{STT2019} and NDA~\cite{nda2019weisse} is inherently inadequate for constant-time programming:
A program that is non-interferent w.r.t. $\ArchSeqInterf{\cdot}$ may not access any secret data.
However, accessing and computing on secret data is the whole point of constant-time programming.

\vspace{-0.5mm}

\subsection{Experiments}

In this section, we illustrate how our framework can be used to support secure programming, for both the sandboxing and constant-time scenarios, w.r.t. the contracts from \S\ref{sec:contracts:contracts}.

\begin{figure}	
\begin{subfigure}[t]{0.238\textwidth}
\begin{lstlisting}[basicstyle=\small,style=Cstyle]
if (y < size_A)
  x = A[y];
  if (x)
    temp &= B[0];
\end{lstlisting}
\caption{Program $P_1'$}\label{figure:v1-ctrl-flow:p1}
\end{subfigure}
\begin{subfigure}[t]{0.238\textwidth}
\begin{lstlisting}[basicstyle=\small,style=Cstyle]
x = A[y];
if (y < size_A)
  if (x)
    temp &= B[0];
\end{lstlisting}
\caption{Program $P_2'$}\label{figure:v1-ctrl-flow:p2}
\end{subfigure}
	\caption{Variants of Spectre v1 that leak information through the control-flow statement in line 3.}\label{figure:v1-ctrl-flow}
	\end{figure}

\paragraph{Tooling}
To automate our analysis we adapted Spectector~\cite{spectector2020},  which can already check SNI for the $\CtSpecInterf{\cdot}$ contract, to support checking SNI and wSNI w.r.t. all the contracts from \S\ref{sec:contracts:contracts}, i.e., $\ArchSeqInterf{\cdot}, \CtSeqInterf{\cdot}, \CtSpecInterf{\cdot}, \CtPcSpecInterf{\cdot}$.

Propositions~\ref{prop:sandbox}--\ref{prop:ct} present a clear path to check (general) sandboxing/constant-time: (1) use existing tools to verify vanilla sandboxing/constant-time, and (2) verify wSNI/SNI using Spectector.

\paragraph{Experimental setup}
We analyze 4 different programs:
\begin{asparaitem}
	\item $P_1$ and $P_2$ are the Spectre~v1 snippet from Figure~\ref{figure:v1-vanilla} and its variant from Figure~\ref{figure:v1-variant}, respectively. 
	\item $P_1'$ and $P_2'$ are modifications of  $P_1$ and $P_2$ that leak information through control-flow statements. %
	The programs are shown in Figure~\ref{figure:v1-ctrl-flow}.
\end{asparaitem}

We compile each program with Clang at \texttt{-O2} optimization level.
We also compile each program with a countermeasure that automatically injects \texttt{lfence} speculation barriers after each branch instruction.\footnote{The countermeasure is enabled with the \texttt{-x86}\texttt{-speculative}\texttt{-load}\texttt{-hardening} \texttt{-x86}\-\texttt{-slh}\-\texttt{-lfence}  flags.}
We denote by $P^f$ the program $P$ with \texttt{lfence}s.

As a result, we have eight small x86 programs %
that we analyze with the help of our enhanced version of Spectector. %

\paragraph{Sandboxing}
We analyze programs $P_1, P_1', P_1^f, P_1'^f$ w.r.t. the policy $\policy$ that declares the contents of \inlineCcode{A[i]} as {\em low} for all~\inlineCcode{i} that are within the array bounds, and as {\em high} otherwise.

Our goal is to determine whether these programs satisfy the general-sandboxing property w.r.t. the contracts in \S\ref{sec:contracts:contracts}.
We remark that all variants of $P_1$ are vanilla-sandboxed w.r.t. $\policy$: they never access out-of-bound locations under the architectural semantics~$\archStepCompact$ thanks to the bounds check. %

Table~\ref{tab:results:sb} summarizes our findings, which we discuss below:
\begin{asparaitem}
	\item For $\interfSem{\cdot}\in\{\ArchSeqInterf{\cdot},\CtSeqInterf{\cdot}\}$, the fact that $\ArchSeqInterf{\cdot}$ and $\CtSeqInterf{\cdot}$ are stronger than $\ArchSeqInterf{\cdot}$ (see \S\ref{sec:contracts:contracts:lattice}) directly implies wSNI w.r.t. these contracts for 
\emph{any} program (denoted by ``\Sat, $\sqsupseteq$'' in the table).
	Therefore, programs $P_1, P_1', P_1^f, P_1'^f$ all satisfy general-sandboxing (see Proposition~\ref{prop:sandbox}) without further analysis.

	\item For $\interfSem{\cdot}\in\{\CtSpecInterf{\cdot},\CtPcSpecInterf{\cdot}\}$, we check whether wSNI holds using Spectector. Table entries ``\Sat{}, wSNI'' denote a successful check, which implies (via Proposition~\ref{prop:sandbox}) that the program is generally-sandboxed w.r.t. $\interfSem{\cdot}$.
	In several cases, denoted by ``\NSat{}'', the wSNI check fails.
	While this is not generally the case, the counterexamples to wSNI show that the respective programs are indeed not sandboxed w.r.t. $\interfSem{\cdot}$.
	\begin{asparaitem}
	\item Program $P_1$ fails the wSNI check w.r.t. $\CtSpecInterf{\cdot}$, due to the speculative secret-dependent load (line 3 in Figure~\ref{figure:v1-variant}), but it satisfies wSNI w.r.t. the stronger contract $\CtPcSpecInterf{\cdot}$  that ensures confidentiality of secret-dependent speculative loads. %
	\item In contrast, program $P_1'$ violates wSNI due to the speculative branch on line 3 in Figure~\ref{figure:v1-ctrl-flow} w.r.t.   $\CtSpecInterf{\cdot}$ and $\CtPcSpecInterf{\cdot}$.
	\item	Finally,  programs $P_1^f$ and $P_1'^f$, where \texttt{lfence}s are inserted after the branch,  satisfy wSNI w.r.t.   $\CtSpecInterf{\cdot}$ and $\CtPcSpecInterf{\cdot}$.
	\end{asparaitem}
	\end{asparaitem}

	\begin{table}
		\caption{Sandboxing analysis w.r.t. different contracts.%
	}\label{tab:results:sb}
		\centering
		\begin{tabular}{lllll}
		\toprule
		  & $\CtSeqInterf{\cdot}$ & $\ArchSeqInterf{\cdot}$  & $\CtSpecInterf{\cdot}$ & $\CtPcSpecInterf{\cdot}$   \\ \midrule
		$P_1$ &  \Sat, $\sqsupseteq$ & \Sat, $\sqsupseteq$   & \NSat       &  \Sat, wSNI           \\
		$P_1^f$ &   \Sat, $\sqsupseteq$ & \Sat, $\sqsupseteq$    & \Sat, wSNI      &   \Sat, wSNI    \\
		$P_1'$ & \Sat, $\sqsupseteq$ & \Sat, $\sqsupseteq$ &  \NSat &  \NSat \\
		$P_1'^f$ &  \Sat, $\sqsupseteq$ & \Sat, $\sqsupseteq$ & \Sat, wSNI & \Sat, wSNI \\
		\bottomrule
		\end{tabular}
	\end{table}

\paragraph{Constant-time}
We analyze programs $P_2, P_2', P_2^f, P_2'^f$ w.r.t. the same policy $\policy$ as before. %

This time, our goal is to determine whether these programs are constant-time w.r.t. the contracts in \S\ref{sec:contracts:contracts}.
We remark that $P_2, P_2', P_2^f, P_2'^f$ are vanilla-constant-time w.r.t. $\policy$, while none of these programs is vanilla-sandboxed w.r.t. $\policy$.

Table~\ref{tab:results:ct} summarizes our findings, which we discuss below:

\begin{asparaitem}
	\item For $\CtSeqInterf{\cdot}$, all programs are constant-time w.r.t. $\CtSeqInterf{\cdot}$ as they are vanilla-constant-time (denoted by ``$\Sat, \sqsupseteq$'' in the table).
	
	\item For $\ArchSeqInterf{\cdot}$, constant-time is violated for all programs, with and without \texttt{lfence}, due to the non-speculative load of a secret into the architectural state. %

	\item For $\interfSem{\cdot}\in\{ \CtSpecInterf{\cdot}, \CtPcSpecInterf{\cdot}\}$, %
	Table entries ``\Sat{}, SNI'' denote a successful check using Spectector, which implies (via Proposition~\ref{prop:ct}) that the program is constant-time w.r.t. $\interfSem{\cdot}$.
	Again, while this is not true in general, the counterexamples to SNI for these particular programs turn out to be proofs that the programs are not constant-time w.r.t. $\interfSem{\cdot}$.

	Program $P_2$ violates SNI w.r.t. $\CtSpecInterf{\cdot}$ but satisfies it under the stronger contract $ \CtPcSpecInterf{\cdot}$ that does not expose the address of the speculative load (line 3 in Figure~\ref{figure:v1-variant}).
	In contrast, $P_2'$ violates SNI against both contracts.
	Finally, the programs with fences ($P_2^f$ and $P_2'^f$) satisfy  SNI w.r.t. $\CtSpecInterf{\cdot}$ and $ \CtPcSpecInterf{\cdot}$.

	\end{asparaitem}

\begin{table}
\caption{
	Constant-time analysis results w.r.t. diff. contracts. %
	}\label{tab:results:ct}
	\centering
	\begin{tabular}{lllll}
		\toprule
	   & $\CtSeqInterf{\cdot}$ & $\ArchSeqInterf{\cdot}$ & $\CtSpecInterf{\cdot}$ & $\CtPcSpecInterf{\cdot}$  \\    \midrule
	$P_2$ &   \Sat, $\sqsupseteq$    &  \NSat      &  \NSat      &  \Sat, SNI          \\
	$P_2^f$ &   \Sat, $\sqsupseteq$   &   \NSat &  \Sat, SNI     &  \Sat, SNI          \\
	$P_2'$ & \Sat, $\sqsupseteq$ & \NSat  & \NSat & \NSat \\
	$P_2'^f$ & \Sat, $\sqsupseteq$ & \NSat  &  \Sat, SNI & \Sat, SNI  \\
		\bottomrule
	\end{tabular}
\end{table}

\section{Modeling microarchitecture and adversaries}\label{sec:hw-side}

This section presents a hardware semantics for \srclang{} programs.
The semantics is based on the semantics from~\cite{constanttime2019,blade} and it models the execution of \srclang{} programs by a simple out-of-order processor with a unified cache for data and instructions and a branch predictor for speculative execution over branch instructions.
The purpose of this semantics is to allow us to model and reason about hardware-level Spectre countermeasures; see \S\ref{sec:countermeasures}.
To this end, it strives to achieve the following design goals: (1) To faithfully capture the key features of speculative and out-of-order execution, while (2) keeping it simple, and (3) supporting large classes of microarchitectural features like caches  and branch predictors.
The latter aspect allows us to focus on hardware-level countermeasures in the context of  arbitrary caching algorithms and branch-prediction strategies.

We start by formalizing hardware configurations (Section~\ref{sec:hw-side:hw-configurations}) that extend \archstate{}s with the state of the microarchitectural components, i.e., cache, reorder buffer, and branch predictor. 
Next, we formalize the semantics of the pipeline steps (Section~\ref{sec:hw-side:hw-semantics}).
This semantics describes how instructions are fetched, executed, and retired under our semantics as well as how hardware configurations are updated during the execution.
We conclude by formalizing the adversary that we consider in our security analysis (Section~\ref{sec:hw-side:adversaries}). %

\subsection{Hardware configurations}\label{sec:hw-side:hw-configurations}
Each \emph{hardware configuration} $\tup{\sigma,\mu}$ consists of its \archstate{}~$\sigma$, recording the memory and register assignments, and of its \uarchstate{} $\mu$, which we formalize next.

The  \uarchstate{} consists of a reorder buffer, which stores the state of in-flight instructions, a cache, a branch predictor, and a scheduler, which orchestrates the pipeline during the computation.
Note that, in our model, cache states track which memory blocks are stored in the cache (i.e., they store metadata) but they do \emph{not} store the data itself.
While we fix the behavior of the reorder buffer in \S\ref{sec:hw-side:hw-configurations:buf}, our semantics is parametric in the models of caches, branch predictors, and the pipeline scheduler; see \S\ref{sec:hw-side:hw-configurations:parameters}. 
Theorem statements in \S\ref{sec:countermeasures} (except where explicitly stated) hold for \emph{all} possible choices of cache, predictor, and scheduler in our model.\looseness=-1

\subsubsection{Reorder buffers}\label{sec:hw-side:hw-configurations:buf}
Reorder buffers store the state of in-flight, i.e., not yet retired, instructions.
Initially instructions are \emph{unresolved}, e.g., a load $\pload{x}{y+z}$ that has not yet been performed or an assignment $\passign{z}{2 + k}$ whose right-hand side has not yet been evaluated.
Executing an unresolved instruction can transform it into a \emph{resolved} instruction, where all expressions are replaced with their values.
Additionally, to model speculative control flow, reorder buffer entries may be \emph{tagged} with the address of a branch instruction~$\ell$.
We write $\tagged{\passign{\pc}{v}}{\ell}$, whenever the assignment of $v$ to the $\pc$ is the result of a call to the branch predictor when fetching the branch at address $\ell$ (only assignments to the program counter register $\pc$ are tagged since branch prediction is the sole source of speculation in our semantics).
Instructions are \emph{untagged}, written $\tagged{i}{\epsilon}$, if they are not the result of a prediction.

We model reorder buffers as sequences of \textit{commands} of length at most $\wMuarch$ denoting the buffer's maximal length: %
\begin{center}
\begin{tabular}{llcl}
\textit{(Tags)}				& $T$ & $:=$ & $\notags \mid \lbl$\\
\textit{(Commands)}		&  $c$		& $:=$ & $\tagged{i}{T}$ \\
\textit{(Reorder buffers)}		&  $\buf$		& $:=$ & $\varepsilon \mid c \concat \buf$
\end{tabular}
\end{center}

A reorder buffer captures the state of execution of in-flight instructions.
Consider the buffer  $\buf := \tagged{\passign{k}{25}}{\notags} \concat \tagged{\pload{x}{y + z}}{\notags} \concat \tagged{\passign{z}{2 + k}}{\notags}$.
It records that there are three in-flight instructions: one of them ($\tagged{\passign{k}{25}}{\notags}$) has been resolved and is ready to be retired, while the remaining two are still unresolved.
Executing the third command  would result in the new buffer $\buf' := \tagged{\passign{k}{25}}{\notags} \concat \tagged{\pload{x}{y + z}}{\notags} \concat \tagged{\passign{z}{27}}{\notags}$.

Given a buffer $\buf$,  its data-independent projection  $\BufProject{\buf}$ is obtained by replacing all resolved (respectively unresolved) expressions in instructions with $\resolved$ (respectively $\unresolved$).
For instance, the data-independent projection of the buffer $\buf$ from above is $\tagged{\passign{k}{\resolved}}{\notags} \concat \tagged{\pload{x}{\unresolved}}{\notags} \concat \tagged{\passign{z}{\unresolved}}{\notags}$.

\subsubsection{Caches, Branch predictors, and Schedulers}\label{sec:hw-side:hw-configurations:parameters}
Rather than providing a fixed model for caches, branch predictors and schedulers, our semantics is parametric in such components.
To this end, we only fix the interface to these components, which is given in Table~\ref{table:muarch:components}, constraining how the semantics may interact with these components.
Each of these components is defined by a set of states, an initial state, and uninterpreted functions modeling their relevant behavior:
\begin{asparaitem}
\item Caches are equipped with a function $\CacheAccess(\CacheState, \ell) \in  \{\CacheHit,\CacheMiss\}$ that captures whether accessing memory address~$\ell$ in cache state~$\CacheState$ results in a cache hit ($\CacheHit$) or miss ($\CacheMiss$), and a function $\CacheUpdate(\CacheState, \ell) = \CacheState'$ that updates the state of the cache based on the access to address~$\ell$.
We stress that cache states $\CacheState$ track only the memory addresses of the blocks in the cache, \emph{not} the blocks themselves.
\item Branch predictors are equipped with a function $\BpUpdate(\BpState,\ell,b)$ that updates the state $\BpState$ of the branch predictor by recording that the branch at program counter $\ell$ has been resolved to value $b$, and $\BpPredict(\BpState,\ell)$ that, given a predictor state $\BpState$, predicts the outcome of the branch at address $\ell$. 
\item Schedulers determine which pipeline stages to activate next. Following~\cite{constanttime2019,blade}, we model this choice using three types of directives: 
\begin{inparaenum}[(a)]
\item $\fetch{b}$ is used to fetch and decode the next instruction pointed by the program counter register~$\pc$,
\item $\execute{i}$ is used to execute the $i$-th command in the reorder buffer $\buf$, and
\item $\retire$ is used to retire (i.e., apply the changes to the memory and register file) the first command in the buffer.
\end{inparaenum} 
Schedulers are equipped with an $\SchedNext( \SchedState)$ function that produces the next directive given the scheduler's state $\SchedState$, and an $\SchedUpdate(\SchedState,\buf)$ function that updates the scheduler's state based on the state of the reorder buffer.

\end{asparaitem}

\subsubsection{Microarchitectural states}\label{sec:hw-side:hw-configurations:configurations}
A \emph{\uarchstate} $\mu$ is a 4-tuple $\tup{\buf,  \CacheState, \BpState, \SchedState}$ where $\buf$ is a reorder buffer, $\CacheState$ is the state of the unified cache (for data and instructions), $\BpState$ is the branch predictor state, and $\SchedState$ is the scheduler state.

A \uarchstate{} $\mu$ is \emph{initial} if $\buf = \emptysequence$ and the microarchitectural components are in their initial states.
Similarly, $\mu$ is \emph{final} if $\buf = \emptysequence$. %
Hence, a hardware configuration $\tup{\sigma, \mu}$ is initial (respectively final) if $
\sigma$ and $\mu$ are so.

For simplicity, we write $\tup{m,a, \buf,  \CacheState, \BpState, \SchedState }$ to represent the hardware configuration  $\tup{ \tup{m,a}, \tup{\buf, \CacheState, \BpState, \SchedState }}$.

\begin{table*}
    \caption{Signatures of the microarchitectural components}\label{table:muarch:components}
    \centering \small
    \begin{tabular}{l l l l l}
    	\toprule
        \textbf{Component} & \textbf{States} & \textbf{Initial state}\hspace{-1.2mm} & \textbf{Functions} \\ \midrule
       {\emph{Cache}} & {$\CacheStates$}\hspace{-2mm} & {$\CacheState_{0}$} & $\CacheAccess: \CacheStates \times \Val \to \{\CacheHit,\CacheMiss\}$\hspace{-2mm} & $\CacheUpdate: \CacheStates  \times \Val \to \CacheStates$ \\
        
        {\emph{Branch predictor}} & {$\BpStates$} & {$\BpState_0$} &  $\BpPredict: \BpStates \times \Val \to \Val$ & $\BpUpdate: \BpStates \times \Val \times \Val \to \BpStates$\\
        
        {\emph{Pipeline scheduler}}\hspace{-1.2mm} & {$\SchedStates$} &  {$\SchedState_0$} & $\SchedNext: \SchedStates \to \Directives$ & $\SchedUpdate: \SchedStates \times \ReorderBuffers \to \SchedStates$ \\
    \bottomrule
    \end{tabular}    
\end{table*}

\subsection{Hardware semantics}\label{sec:hw-side:hw-semantics}

We formalize the hardware semantics of a \srclang{} program~$p$  using a binary relation $\muarchStepCompact \subseteq \ProcConfs \times \ProcConfs$ that maps hardware states to their successors:
\begin{mathpar}
\inferrule[Step]
{
	\tup{m,a, \buf,  \CacheState, \BpState } \muarchStep{d}{} \tup{m',a', \buf', \CacheState', \BpState'  }	\\
	d = \SchedNext(\SchedState)\\
	\SchedState' = \SchedUpdate(\SchedState,\BufProject{\buf'})
}
{
	\tup{m,a, \buf,  \CacheState, \BpState, \SchedState } \muarchStepCompact \tup{m',a', \buf',  \CacheState', \BpState', \SchedState' }
}
\end{mathpar}
The rule captures one execution step at the microarchitectural level.
The scheduler is queried to determine the directive $d = \SchedNext(\SchedState)$ indicating which pipeline step to execute.
Next, the \uarchstate{} is updated by performing one step of the auxiliary relation $\tup{m,a, \buf,  \CacheState, \BpState } \muarchStep{d}{} \tup{m',a', \buf', \CacheState', \BpState' }$, which depends on the directive $d$ and is formalized below.
Finally, the scheduler state is updated based on the data-independent projection of the reorder buffer, i.e., $\SchedState' = \SchedUpdate(\SchedState,\BufProject{\buf'})$.
This formalizes the crucial assumption that the scheduler's decisions may depend upon the dependencies between the instructions in the reorder buffer, but not on the values computed thus far.

For each directive, i.e., $\fetch{b}, \execute{i}$, and $\retire$, we sketch below the rules that govern the definition of the auxiliary relations $\muarchStep{\fetch{b}}{}$, $\muarchStep{\execute{i}}{}$, and $\muarchStep{\retire}{}$.
We provide a full formalization of the rules in~\techReportAppendix{appendix:hardware-semantics}.

\subsubsection{Fetch}
Instructions are fetched in-order. 
Here we present selected rules modeling instruction fetch:
    \begin{mathpar}
    \inferrule[Fetch-Branch-Hit]
    {
    a' = \apply{\buf}{a} \\
    |\buf| < \wMuarch \\
    a'(\pc) \neq \bot \\
    p(a'(\pc)) = \pjz{x}{\lbl} \\
    \lbl' = \BpPredict(\BpState, a'(\pc))\\
    \CacheAccess(\CacheState, a'(\pc)) = \CacheHit\\
    \CacheUpdate(\CacheState, a'(\pc)) = \CacheState{}'
    }
    {
        \tup{m,a,\buf,  \CacheState, \BpState} \muarchStep{\fetch{b}}{} \tup{m,a,\buf \concat \tagged{\passign{\pc}{\lbl'}}{a'(\pc)},  \CacheState',\BpState}	
    }
    \end{mathpar}
    \begin{mathpar}
    \inferrule[Fetch-Miss]
    {
    |\buf| < \wMuarch \\
    a' = \apply{\buf}{a} \\
    a'(\pc) \neq \bot \\
    \CacheAccess(\CacheState,  a'(\pc)) = \CacheMiss\\
   \CacheUpdate(\CacheState,  a'(\pc)) =  \CacheState{}'
    }
    {
        \tup{m,a,\buf, \CacheState, \BpState} \muarchStep{\fetch{b}}{}  \tup{m,a,\buf, \CacheState', \BpState}
    }
    \end{mathpar}
In these rules, and in those  described later, $\apply{\buf}{a}$ denotes the assignment $a'$ obtained by updating $a$ with the changes performed by the commands in $\buf$.
Concretely, $\apply{\buf}{a}$ iteratively applies the pending changes for all commands in $\buf$ as follows: 
\begin{inparaenum}[(a)]
\item Assignments $\tagged{\passign{x}{e}}{T}$ set the value of $a'(x)$ to $e$ if the assignment is resolved (i.e., $e \in \Val$) and to $\bot$ otherwise (denoting unresolved values).
\item Load operations  $\tagged{\pload{x}{e}}{T}$ set the value of $a'(x)$ to $\bot$ (since the load operation has not been performed yet).
\item Whenever $\buf$ contains a speculation barrier $\tagged{\pbarrier}{T}$,  $\apply{\buf}{a} = \lambda x \in \Var.\ \bot$. 
\item Other instructions are ignored.
\end{inparaenum}

The rule \textsc{Fetch-Branch-Hit} models the fetch of a branch instruction $\pjz{x}{\lbl}$.
Whenever the reorder buffer $\buf$ is not full ($|\buf| < \wMuarch$), $\pc$ is defined ($a'(\pc) \neq \bot$), and the instruction is in the cache ($\CacheAccess(\CacheState, a'(\pc)) = \CacheHit$), the branch predictor is queried to obtain the next program counter $\lbl' = \BpPredict(\BpState, a'(\pc))$.
Next, the cache and the reorder buffer states are updated.
The latter is updated by appending the command $\tagged{\passign{\pc}{\lbl'}}{a'(\pc)}$, which records the change to the program counter as well as the label of the branch instruction whose target was predicted.
The semantics also contains rules for fetching jumps $\pjmp{e}$, which append the command $\tagged{\passign{\pc}{e}}{\notags}$ to the buffer, and other instructions $i$, which append the commands  $\tagged{i}{\notags} \concat \tagged{\passign{\pc}{a'(\pc)+1}}{\notags}$ to the buffer.

The rule \textsc{Fetch-Miss} models a cache miss when loading the next instruction.
In this case, the cache is updated while the reorder buffer is not modified.
A subsequent $\fetch{}$ triggered by the scheduler would result in a cache hit and a corresponding change to the reorder buffer.

\subsubsection{Execute}
Commands in-flight are executed out-of-order, where the $\execute{i}$ directive triggers the execution of the $i$-th command in the buffer.
Selected rules are given in Figure~\ref{figure:uops:execute}.
 
\begin{figure*}[h]
    \begin{mathpar}
    \inferrule[Execute-Load-Hit]
    {
        |\buf| = i-1 \\
        a' = \apply{\buf}{a} \\
        \pbarrier \not\in \buf \\
        \pstore{x'}{e'} \not\in \buf \\
        x \neq \pc\\
        \exprEval{e}{a'} \neq \bot\\
        \CacheAccess(\CacheState, \exprEval{e}{a'}) = \CacheHit\\
         \CacheUpdate(\CacheState, \exprEval{e}{a'}) = \CacheState'
    }
    {
        \tup{m,a,\buf \concat  \tagged{\pload{x}{e}}{T} \concat \buf',  \CacheState, \BpState} \muarchStep{\execute{i}}{} \tup{m,a,\buf \concat \tagged{\passign{x}{ m(\exprEval{e}{a'})  }}{T} \concat \buf',  \CacheState', \BpState}	
    }
    
    \inferrule[Execute-Branch-Rollback]
    {
        |\buf| = i-1 \\
        a' = \apply{\buf}{a} \\
        \pbarrier \not\in \buf \\
        \ell_0 \neq \notags \\
        p(\lbl_0) = \pjz{x}{\lbl''} \\
        (a'(x) = 0 \wedge \lbl \neq \lbl'') \vee (a'(x) \in \Val \setminus \{0,\bot\} \wedge \lbl \neq \ell_0+1)\\ 
        \lbl' \in \{ \lbl'',\lbl_0 +1 \} \setminus \{ \ell \} \\
        \BpState' = \BpUpdate(\BpState, \ell_0, \ell')  \\
    }
    {
        \tup{m,a, \buf \concat \tagged{\passign{\pc}{\lbl}}{\lbl_0} \concat \buf', \CacheState, \BpState} \muarchStep{\execute{i}}{} \tup{m,a,\buf \concat \tagged{\passign{\pc}{ \lbl'  }}{\varepsilon} , \CacheState, \BpState'}	
    }
    \end{mathpar}
    \caption{Selected rules for $\execute{i}$}	\label{figure:uops:execute}
    \end{figure*}

The rule \textsc{Execute-Load-Hit} models the successful execution of a load ($\tagged{\pload{x}{e}}{T}$) that results in a cache hit.
In the rule, $\exprEval{e}{a'}$ denotes the result of evaluating  $e$ in the context of the assignment $a'$ obtained by applying to $a$ all earlier in-flight commands in  $\buf$.
Whenever the address is resolved, i.e., $\exprEval{e}{a'} \neq \bot$, and accessing the address results in a cache hit ($\CacheAccess(\CacheState, \exprEval{e}{a'}) = \CacheHit$), the reorder buffer is updated by replacing $\tagged{\pload{x}{e}}{T}$ with $\tagged{\passign{x}{ m(\exprEval{e}{a'})  }}{T}$, thereby recording that the load operation has been executed and that the value of $x$ is now $m(\exprEval{e}{a'})$.
The cache state is also updated to account for the memory access to $\exprEval{e}{a'}$.

In contrast, the \textsc{Execute-Branch-Rollback} rule models the resolution of a mis-speculated branch instruction %
that results in rolling back the speculatively executed instructions by dropping their entries from the reorder buffer.
Whenever the predicted value $\lbl$ disagrees with the outcome $\ell'$ of the instruction $\pjz{x}{\ell''}$ at address $\ell_0$, the buffer is updated by (1) recording the new value of $\pc$ (by replacing $\tagged{\passign{\pc}{\lbl}}{\lbl_0}$ with $\tagged{\passign{\pc}{\ell'}}{\notags}$), and (2) squashing all later buffer entries (by discarding the buffer suffix $\buf'$).
Moreover, the branch predictor's state is updated by recording that the branch at address $\ell_0$ has been resolved to $\ell'$.

\subsubsection{Retire}
Instructions are retired in-order.
This is done by retiring only commands $\tagged{i}{T}$ at the head of the reorder buffer where the instruction~$i$ has been resolved and the tag $T$ is $\epsilon$ indicating that there are no  unresolved predictions.
Selected rules for the $\retire$ directive are given below:
    \begin{mathpar}
    \inferrule[Retire-Assignment]
    {
        \buf = \tagged{\passign{x}{v}}{\notags} \concat \buf'\\
        v \in \Val
    }
    {
        \tup{m,a,\buf,\CacheState,\BpState} \muarchStep{\retire}{} \tup{m, a[x \mapsto v], \buf', \CacheState,\BpState}	
    }
    \end{mathpar}
    \begin{mathpar}
    \inferrule[Retire-Store]
    {
        \buf = \tagged{\pstore{v}{n}}{\notags} \concat \buf'\\
    v,n \in \Val\\
   \CacheUpdate( \CacheState,n) = \CacheState' 
    }
    {
        \tup{m,a,\buf,\CacheState,\BpState} \muarchStep{\retire}{  }
         \tup{m[n \mapsto v], a, \buf' ,  \CacheState', \BpState}	
    }
    \end{mathpar}

The rule \textsc{Retire-Assignment} models the retirement of a command $\tagged{\passign{x}{v}}{\notags}$, where the assignment $a$ is permanently updated by recording that $x$'s value is now $v$.
In contrast, \textsc{Retire-Store} models the retirement of store commands $\tagged{\pstore{v}{n}}{\notags}$.
In this case, the memory $m$ is permanently updated by writing the value $v$ to address $n$ and the cache state is updated. %
Finally, we have rules \textsc{Retire-Skip} and \textsc{Retire-Barrier} modeling the retirement of $\pskip$ and $\pbarrier$ instructions, which are removed from the  reorder buffer without modifying the \archstate{}.

\subsection{Formalizing the adversary model}\label{sec:hw-side:adversaries}

We conclude by formalizing the adversary model that we use in the security analysis in Section~\ref{sec:countermeasures}.

In our analysis, we consider an adversary $\adversary$ that can observe almost the entire microarchitectural state.
Specifically, it can observe (1) the data-independent projection of the reorder buffer (i.e., which instructions are in-flight, but not to what values they are resolved), (2) the state of cache (which stores only the addresses of the blocks in the cache, not the blocks themselves), branch predictor, and scheduler.
We formalize this as $\adversary(\tup{m,a,\buf, \CacheState,\BpState, \SchedState}) = \tup{\BufProject{\buf}, \CacheState,\BpState, \SchedState}$.

\section{Mechanisms for secure speculation}\label{sec:countermeasures}

In this section, we show how several recent proposals for hardware-level secure speculation can be cast within our framework and we study their security.

We analyze three  countermeasures: (1) disabling speculation ($\muarchStyle{seq}$ in \S\ref{sec:countermeasures:sequential}), (2) delaying \emph{all} speculative loads ($\muarchStyle{loadDelay}$ in \S\ref{sec:countermeasures:load-delay}), and (3) employing hardware-level taint tracking and selectively delaying tainted instructions ($\muarchStyle{tt}$ in \S\ref{sec:countermeasures:taint-tracking}).
For each countermeasure $\muarchStyle{ctx}$, we formalize its semantics using a  relation $\CtxMuarchStep{}{}{ctx}$ obtained by modifying the hardware semantics from \S\ref{sec:hw-side} (which induces the corresponding trace semantics $\CtxMuarchSem{\cdot}{ctx}$ in the usual way).
Additionally, we characterize their security guarantees by showing which of the contracts from \S\ref{sec:contracts:contracts} they satisfy;
see Figure~\ref{figure:guarantees} for a summary of the results.
An overview of the proofs is available in Appendix~\ref{appendix:proof:overview}, whereas detailed proofs of all  results are given in \techReportAppendices{appendix:proofs:general}{appendix:proofs:taint-tracking}.

Unless otherwise specified, all theorems hold for any instantiation of  cache, branch predictor, and scheduler.

Before analyzing the countermeasures, we observe that \emph{all} possible instances of the hardware semantics satisfy the  $\CtSpecInterf{\cdot}$ contract, as stated in Theorem~\ref{theorem:hni:all}.

\begin{restatable}{thm}{allCtSpec}
    \label{theorem:hni:all}
    $\hsni{\CtSpecInterf{\cdot}}{\muarchSem{\cdot}}$.
\end{restatable}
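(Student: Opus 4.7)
The plan is a step-by-step simulation between the two hardware runs starting from $\tup{\sigma,\mu_0}$ and $\tup{\sigma',\mu_0}$. I would define an equivalence $\approx$ on hardware configurations that requires (a) equal cache, branch-predictor, scheduler, and buffer-projection components, and (b) pointwise agreement of the resolved PC values, load/store addresses, and jump targets already committed inside the in-flight commands. Clause (a) immediately yields $\adversary(\mu_1)=\adversary(\mu_2)$, so it suffices to show that $\approx$ holds at the initial configurations and is preserved by $\muarchStepCompact$ whenever the corresponding contract traces coincide.

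At the start, the empty buffer and shared initial microarchitectural state make $\approx$ vacuous. For the inductive step, observe first that the scheduler is updated from $\BufProject{\buf}$ only, so both runs select the same directive $d$. A case analysis on $d$ then closes the step: for $\fetch{}$ the current PC is read off resolved PC updates in the buffer and agrees by (b), the branch predictor is queried on equal inputs and returns the same target, and the cache is accessed at the (equal) PC; for $\execute{i}$, resolving a load or store uses an address that surfaces in the contract trace as a $\loadObs{\cdot}$ or $\storeObs{\cdot}$ label, and resolving a branch picks between PC values that both appear in the trace via the $\pcObs{\cdot}$ labels emitted by \textsc{Branch} and \textsc{Rollback}; for $\retire$, the action is fully determined by the matching buffer head and the equal microarchitectural components. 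In each case clause (a) is preserved and any newly resolved value required for clause (b) is forced to agree by equality of the contract traces.

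The main obstacle is reconciling the contract's \emph{always-mispredict} exploration with the hardware's \emph{predictor-driven} execution: the hardware follows whatever its predictor suggests and rolls back on misprediction, while the contract unconditionally visits both branches up to the speculative window. The reconciliation hinges on the fact that, since both hardware runs begin with the same predictor and by (a) keep it equal throughout, they make identical predictions at every branch; the actual branch outcome then appears in the contract trace (because $\pcObs{\cdot}$ is emitted both for $\ell_{\mathit{mispred}}$ in \textsc{Branch} and for the restored PC in \textsc{Rollback}), so equality of contract traces forces the resolution outcomes—and hence the pattern and placement of rollbacks—to coincide across the two runs. The technical content of the proof, beyond the case analysis above, is to build a correspondence between the variable-length contract trace (which interleaves speculative sub-traces and their pops) and the hardware step sequence, verifying that clause (b) is reestablished after each squash. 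Once this bookkeeping is settled, the induction closes and $\muarchSem{p}(\sigma)=\muarchSem{p}(\sigma')$ follows.
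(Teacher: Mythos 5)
Your overall strategy --- an adversary-indistinguishability relation on hardware states, preserved step by step via a correspondence that maps each reorder-buffer prefix to a contract state whose next observation forces agreement on the value being resolved --- is exactly the structure of the paper's proof (its deep-indistinguishability relation $\sim$, the prefix relations $\equiv_j$, the mapping $\map{\crun}{\hrun}{\cdot}$, and the indistinguishability lemma). Your treatment of branches is also the paper's: predictions coincide because the predictor states stay equal, and resolution outcomes coincide because $\pcObs{\cdot}$ is emitted on the mispredicted path, so the pattern of commits and rollbacks is synchronized across the two runs.

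There is, however, one genuine gap in the bookkeeping you defer. The contract explores a mispredicted branch for only $\wInterf$ steps before the \textsc{Rollback} rule pops it, whereas the hardware keeps speculating for as long as its reorder buffer (capacity $\wMuarch$) permits. If the hardware executes a transient load or a nested branch at a speculation depth the contract has already abandoned, there is no contract state corresponding to that buffer prefix, hence no $\loadObs{\cdot}$ or $\pcObs{\cdot}$ observation available to force the two runs to agree on the address or outcome --- and the cache and predictor components can then genuinely diverge. The paper closes this by assuming $\wInterf > \wMuarch + 1$ and by carrying, through its mapping lemma, the invariants that every live prefix's speculative window remains strictly positive and that the windows of consecutive prefixes decrease by at most one. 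Without some such quantitative link between the contract's speculation depth and the hardware's buffer size, your step ``any newly resolved value required for clause (b) is forced to agree by equality of the contract traces'' does not go through.
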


From this,
it immediately follows that \emph{all} countermeasures presented below satisfy the $\CtSpecInterf{\cdot}$ contract as well.

\subsection{$\muarchStyle{seq}$: Disabling speculation}\label{sec:countermeasures:sequential}
A first, drastic countermeasure against speculative execution attacks is disabling speculative and out-of-order execution.
To model this, we instantiate the hardware semantics by providing a sequential scheduler that produces directives in a $\fetch{} - \execute{1} - \retire{}$ order.
The sequential scheduler, formalized in \techReportAppendix{appendix:seq-scheduler}, works as follows: 
\begin{asparaitem}
\item Whenever the reorder buffer is empty, the scheduler selects the $\fetch{}$ directive that adds entries to the buffer.
\item If the first entry in the buffer is not resolved, the scheduler selects the $\execute{1}$ directive. Thus, the instruction is executed and, potentially, resolved.
\item If the first entry in the buffer is resolved, the scheduler selects the $\retire$ directive. Therefore, the instruction is retired and its changes are written into the architectural state.
\end{asparaitem}
That is, the sequential scheduler ensures that instructions are executed in an in-order, non-speculative fashion.

As expected, instantiating the hardware semantics with the sequential scheduler (denoted with $\muarchStyle{seq}$) results in strong security guarantees.
As stated in Theorem~\ref{theorem:hni:sequential}, $\muarchStyle{seq}$ implements the $\CtSeqInterf{\cdot}$ interface that exposes only the program counter and the location of memory accesses under sequential execution. %

\begin{restatable}{thm}{sequentialGuarantees}
    \label{theorem:hni:sequential}
    $\hsni{\CtSeqInterf{\cdot}}{\SeqProcMuarchSem{\cdot}}$.
\end{restatable}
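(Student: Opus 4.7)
The goal is to show that for every program $p$ and every pair of initial architectural states $\sigma,\sigma'$, if $\CtSeqInterf{p}(\sigma)=\CtSeqInterf{p}(\sigma')$ then $\SeqProcMuarchSem{p}(\sigma)=\SeqProcMuarchSem{p}(\sigma')$. Since the adversary observes $\tup{\BufProject{\buf},\CacheState,\BpState,\SchedState}$ at each hardware step, it suffices to build a lockstep simulation between the two hardware runs that preserves the invariant ``all four observable components agree''. I would proceed by induction on the length of the hardware runs, exploiting the rigid shape of the sequential scheduler: it deterministically issues a block of $\fetch{}$ directives (possibly several, if consecutive cache misses occur on the instruction address), followed by a single $\execute{1}$ and a single $\retire$, before moving on to the next architectural instruction. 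In particular the reorder buffer never holds more than one command at a time, so no misprediction can propagate past the branch that triggered it, and the scheduler state is entirely a function of $\BufProject{\buf}$.

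The key to preserving the invariant is that every hardware decision that could depend on the \archstate{} is either (a) driven by the current value of $\pc$, or (b) driven exactly by an observable label of $\CtSeqInterf{\cdot}$. More precisely: cache updates on instruction fetch use $\pc$; cache updates on loads and stores use addresses that are literally exposed by the $\loadObs{\cdot}$ and $\storeObs{\cdot}$ labels; and branch-predictor updates use the branch's source address (equal to $\pc$ before the branch) together with its resolved target, which is exactly what the $\pcObs{\cdot}$ label records. By hypothesis those labels agree in the two runs; since the architectural semantics is deterministic and the contract emits a $\pcObs{\cdot}$ observation on every control-flow change, trace equality also forces $\pc$-agreement throughout the two runs. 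Hence every cache, predictor, scheduler, and buffer update in the $\sigma$-run is matched step-for-step in the $\sigma'$-run.

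The main obstacle will be bookkeeping the correspondence between architectural steps and hardware steps, because a cache miss on an instruction fetch re-issues $\fetch{}$ without modifying the buffer and without consuming an architectural step; the induction must therefore be on hardware steps rather than architectural ones. The synchronisation argument then relies on the fact that both runs share the same cache state and the same fetch address at each point, so the number of miss-retries — and hence the pattern of directives produced by $\SchedNext$ — is identical in the two runs. Once the invariant is pinned down precisely, the step cases reduce to direct inspection of the fetch, execute, and retire rules of Section~\ref{sec:hw-side:hw-semantics}; Theorem~\ref{theorem:hni:all} can be invoked to discharge the straight-line, non-branching cases where the buffer evolution is identical to that of an arbitrary instantiation.
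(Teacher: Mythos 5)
Your proposal matches the paper's proof in all essentials: the paper also argues by a lockstep induction on hardware steps, maintaining an indistinguishability invariant on the buffer projection, cache, predictor, and scheduler states together with a correspondence map from reorder-buffer prefixes to contract states, and it resolves exactly the data-dependent cases you identify (load/store addresses, branch outcomes, jump targets) from the $\loadObsKywd$, $\storeObsKywd$, and $\pcObsKywd$ labels. Two small cautions: after fetching a non-branch instruction the buffer holds \emph{two} commands (the instruction plus the marked $\pc$-increment), not one; and Theorem~\ref{theorem:hni:all} cannot be invoked as a black box, since its hypothesis is equality of $\CtSpecInterf{\cdot}$ traces, which is \emph{not} implied by equality of $\CtSeqInterf{\cdot}$ traces --- the paper instead re-proves the per-step indistinguishability lemma for the sequential scheduler, reusing the same case analysis for the straight-line cases.
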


\subsection{$\muarchStyle{loadDelay}$: Delaying all speculative loads}\label{sec:countermeasures:load-delay}

Sakalis et al.~\cite{specshadow2019} propose a family of countermeasures that delay  memory loads to avoid leakage.
In the following, we analyze the \textit{eager delay of (speculative) loads} countermeasure.
This countermeasure consists in delaying loads until all sources of mis-speculation have been resolved.
We remark that the hardware semantics of Section~\ref{sec:hw-side} supports speculation only over branch instructions.
Therefore, we model the $\muarchStyle{loadDelay}$ countermeasure by preventing loads whenever there are preceding, unresolved branch instructions in the reorder buffer. 
Using the terminology of~\cite{specshadow2019}, loads are delayed as long as they are under a so-called \emph{control-shadow}. %

We formalize the $\muarchStyle{loadDelay}$ countermeasure by modifying the \textsc{Step} rule of the hardware semantics as follows (changes are highlighted in blue):
\begin{mathpar}
    \inferrule[Step-Others]
    {
        \tup{m,a,\buf, \CacheState,\BpState} \muarchStep{d}{} \tup{m',a',\buf', \CacheState',\BpState'}	\\
        d = \SchedNext(\SchedState)\\
        \SchedState' = \SchedUpdate(\SchedState,\BufProject{\buf'})\\\\
        \highlightBox{ d \in \{\fetch{}, \retire\} \vee (d = \execute{i} \wedge \elt{\buf}{i} \neq \pload{x}{e}) }
    }
    {
        \tup{m,a,\buf, \CacheState,\BpState, \SchedState} \LoadDelayMuarchStep{}{} \tup{m',a',\buf', \CacheState',\BpState', \SchedState'}
    }
\end{mathpar}
\begin{mathpar}
    \inferrule[Step-Eager-Delay]
    {
        \tup{m,a,\buf, \CacheState,\BpState} \muarchStep{d}{} \tup{m',a',\buf', \CacheState',\BpState'}	\\
        d = \SchedNext(\SchedState)\\
        \SchedState' = \SchedUpdate(\SchedState,\BufProject{\buf'})\\
        \highlightBox{d = \execute{i}}\\
        \highlightBox{\elt{\buf}{i} = \pload{x}{e}}\\
        \highlightBox{\forall \tagged{\passign{\pc}{\lbl}}{T} \in \buf[0..i-1].\  T = \notags}
    }
    {
        \tup{m,a,\buf, \CacheState,\BpState, \SchedState } \LoadDelayMuarchStep{}{} \tup{m',a',\buf', \CacheState',\BpState', \SchedState'}
    }
\end{mathpar}
Fetching, retiring, and executing all instructions that are not loads work as before (see \textsc{Step-Others} rule).
However, load instructions are executed only if all prior branch instructions are resolved (see \textsc{Step-Naive-Delay} rule).
This is captured by requiring that all branch instructions in the buffer prefix have tag $\notags$, i.e., $ \forall \tagged{\passign{\pc}{\lbl}}{T} \in \buf[0..i-1].\ T = \notags$.

Thus, loads are delayed until they are guaranteed to be executed, while other instructions may be freely executed speculatively and out-of-order.
Hence, no data memory accesses are performed on mis-speculated paths.
However, maybe surprisingly, parts of the architectural state can still be leaked on mis-speculated paths as nested conditional branches may modify the instruction cache and the branch predictor state.

As a consequence, $\muarchStyle{loadDelay}$ violates the $\CtSeqInterf{\cdot}$~contract capturing the standard constant-time requirements. %

\begin{example}\label{example:naive-delay}
This program illustrates that $\hsniViolation{\CtSeqInterf{\cdot}}{\LoadDelayMuarchSem{\cdot}}$: %
{%
\begin{lstlisting}[style=Cstyle]
x = A[10]
y = not (A[20] | 1)
if (y) //branch always unsatisfied 
  if (x) //only executed speculatively
    skip
\end{lstlisting}
}
Consider two configurations $\sigma$ and $\sigma'$ such that $\sigma(\inlineCcode{A+10})  = 0$ and $\sigma'(\inlineCcode{A+10}) = 1$. 
Then, $\CtSeqInterf{p}(\sigma) = \CtSeqInterf{p}(\sigma') = \loadObs{\inlineCcode{A+10}} \concat \loadObs{\inlineCcode{A+20}} \concat \pcObs{\bot}$.
However, the hardware can leak information through, e.g., the instruction cache if the branch at line 3 is speculatively taken. %
Then, the result of branch at line $4$,
 which determines whether or not $\pskip{}$ at line~$5$ is fetched, leaks whether $\inlineCcode{A[10]}$ (stored in $\inlineCcode{x}$) is $0$ or not, thereby distinguishing $\sigma$ and $\sigma'$.
\end{example}

To capture the guarantees offered by the eager-delay countermeasure, we can use the $\CtPcSpecInterf{\cdot}$ contract, which may intuitively be understood as $\CtSeqInterf{\cdot} + \PcSpecInterf{\cdot}$, i.e., control-flow and memory accesses are leaked under sequential execution, and in addition, the program counter is leaked during speculative execution.
This new contract is satisfied by the countermeasure,
 leading to  Theorem~\ref{theorem:hni:load-delayone}.

 \begin{restatable}{thm}{loadDelayOne}
    \label{theorem:hni:load-delayone}
    $\hsni{\CtPcSpecInterf{\cdot}}{\LoadDelayMuarchSem{\cdot}}$.
\end{restatable}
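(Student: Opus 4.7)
The plan is to prove the theorem by a lock-step simulation between two $\LoadDelayMuarchStep{}{}$ runs starting from initial hardware states over $\sigma$ and $\sigma'$ with $\CtPcSpecInterf{p}(\sigma) = \CtPcSpecInterf{p}(\sigma')$, maintaining the invariant that at every step the two configurations agree on the adversary projection $\tup{\BufProject{\buf}, \CacheState, \BpState, \SchedState}$. This invariant directly yields equality of the two hardware traces $\LoadDelayMuarchSem{p}(\sigma) = \LoadDelayMuarchSem{p}(\sigma')$, which is the statement of the theorem. Since the scheduler's $\SchedNext$ depends only on $\SchedState$ and $\SchedUpdate$ only on $\BufProject{\buf}$, synchronized projections immediately imply that both runs issue the same directive $d$ at each step, so it suffices to show that each of $d \in \{\fetch{}, \execute{i}, \retire\}$ preserves the invariant.

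First I would handle $\fetch{}$. The instruction cache is updated with the current program-counter value derived from $\apply{\buf}{a}$; whether this PC is resolved is a function of $\BufProject{\buf}$, so the two runs fetch simultaneously. When the fetched instruction is a branch, the newly inserted entry $\tagged{\passign{\pc}{\lbl'}}{a'(\pc)}$ contains a predicted label $\lbl' = \BpPredict(\BpState, a'(\pc))$ computed from synchronized branch-predictor states on the same current PC; when it is a jump, plain instruction, or sequential step, the PC update comes from $a'(\pc) + 1$ or an already-resolved expression captured in $\BufProject{\buf}$. In each sub-case the concrete address fed to $\CacheUpdate$ is determined by data already in the invariant plus the sequence of PC values materialized so far, which is exactly what the $\pcObs{\cdot}$ labels of $\CtPcSpecInterf{\cdot}$ expose, along mispredicted paths as well as sequential ones.

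For $\execute{i}$ the critical sub-case is loads: the $\muarchStyle{loadDelay}$ premise $\forall \tagged{\passign{\pc}{\lbl}}{T}\in \buf[0..i-1].\,T = \notags$ ensures that when a load is actually executed its prefix contains no speculative branches, so in the contract semantics the corresponding instruction lies on the non-speculative trunk, where $\CtPcSpecInterf{\cdot}$ emits a $\loadObs{n}$ label that must match across the two runs, forcing identical $\CacheUpdate$ calls. Arithmetic and store executions do not touch the cache, and only modify unresolved expressions into resolved ones in a way that is invisible to $\BufProject{\cdot}$. Branch resolution via \textsc{Execute-Branch-Rollback} updates $\BpState$ with the \emph{correct} target $\lbl'$ and rewrites the buffer entry at position $i$ to $\tagged{\passign{\pc}{\lbl'}}{\notags}$; this $\lbl'$ is precisely the non-speculative PC that $\CtPcSpecInterf{\cdot}$ exposes through its sequential $\pcObs{\cdot}$ observation, so the predictor and projection stay in sync. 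The $\retire$ case is straightforward: assignments and skips retire based on already-resolved buffer contents captured by the invariant, and stores emit an address that coincides with a non-speculative $\storeObs{n}$ observation of $\CtPcSpecInterf{\cdot}$.

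The main obstacle I expect is not any individual case but the bookkeeping that formally ties microarchitectural events to contract observations across speculative windows. Concretely, one needs an auxiliary lemma asserting that the finite sequence of PC assignments appearing in $\buf$ during a $\LoadDelayMuarchStep{}{}$ run, together with the subsequent \textsc{Execute-Branch-Rollback} events that either retire or squash them, corresponds exactly to the speculative stack of $\tup{\sigma, \omega}$-frames in Figure~\ref{figure:interface:spec-ct}, with the reorder-buffer length $\wMuarch$ bounding the contract's $\wInterf$. Proving this correspondence requires an induction over the combined run length, careful alignment of the \textsc{Rollback} rule with the discarding of a buffer suffix, and the observation that all prediction-dependent steps of $\LoadDelayMuarchStep{}{}$ consume exactly the $\pcObs{\cdot}$ labels emitted on speculative paths---which is precisely why removing those labels (as in $\CtSeqInterf{\cdot}$) breaks the argument, consistent with Example~\ref{example:naive-delay}.
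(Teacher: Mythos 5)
Your overall architecture is the one the paper uses: a lock-step induction over the two hardware runs, driven by the fact that the scheduler sees only $\BufProject{\buf}$ and $\SchedState$, together with a correspondence lemma tying reorder-buffer prefixes to frames of the speculative stack of $\CtPcSpecInterf{\cdot}$ (your ``auxiliary lemma'' is exactly the paper's mapping lemma, including the requirement relating $\wInterf$ and $\wMuarch$). Your treatment of the load case is also the paper's: the eager-delay side condition forces zero mispredictions in the load's prefix, the correspondence then places the matching contract state on the non-speculative trunk where $\loadObs{n}$ is actually emitted, and agreement on that observation pins down the cache update.

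The genuine gap is in your choice of invariant. Agreement on the adversary projection $\tup{\BufProject{\buf},\CacheState,\BpState,\SchedState}$ is what you ultimately want to \emph{conclude}, but it is too weak to \emph{carry} the induction: $\BufProject{\buf}$ erases all resolved values, so it does not determine $\apply{\buf}{a}(\pc)$, and hence does not determine which instruction is fetched next or which address is fed to $\CacheUpdate$ on an instruction fetch. You gesture at repairing this by appealing to ``the sequence of PC values materialized so far,'' but that information must live \emph{in the invariant}, not be re-derived case by case. The paper therefore strengthens the induction hypothesis to a deep-indistinguishability relation that additionally requires $\apply{\buf}{a}(\pc)=\apply{\buf'}{a'}(\pc)$ and equality of a finer buffer projection, and only afterwards observes that this implies equality of the adversary projections. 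The delicate point you have not confronted is which resolved values that finer projection may legitimately expose: a store whose operands are resolved \emph{under an unresolved branch} has a concrete address sitting in the buffer, yet $\CtPcSpecInterf{\cdot}$ suppresses $\loadObsKywd/\storeObsKywd$ observations during speculation, so the two runs need not agree on that address. The paper's projection therefore reveals resolved store addresses only for entries not dominated by an unresolved prediction, and the proof must separately argue that such hidden addresses can never reach the cache (they are squashed on rollback, or become exposed by a non-speculative $\storeObs{n}$ once the guarding branch commits). Without this refinement your invariant is either unprovable (if it claims equality of all resolved buffer contents) or insufficient (if it claims only $\BufProject{\buf}$-equality); with it, your proof goes through as the paper's does.
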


As the control flow during speculative execution may only depend upon data previously loaded non-speculatively, the security of the countermeasure can also be captured by $\ArchSeqInterf{\cdot}$. %

\begin{restatable}{thm}{loadDelayTwo}
    \label{theorem:hni:load-delaytwo}
    $\hsni{\ArchSeqInterf{\cdot}}{\LoadDelayMuarchSem{\cdot}}$.
\end{restatable}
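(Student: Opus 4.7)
The plan is to establish a step-by-step simulation between two hardware runs of $p$ under $\LoadDelayMuarchStep{}{}$ starting from initial architectural states $\sigma, \sigma'$ with $\ArchSeqInterf{p}(\sigma) = \ArchSeqInterf{p}(\sigma')$, and show that their \uarchstate{}s coincide at every step so that the adversary projections $\adversary(\mu_i)=\adversary(\mu_i')$ agree and the traces are equal. Note that this cannot be obtained from Theorem~\ref{theorem:hni:load-delayone} via the lattice, since $\ArchSeqInterf{\cdot}$ and $\CtPcSpecInterf{\cdot}$ are incomparable in Figure~\ref{figure:lattice-contracts}; however, the same structural feature of $\muarchStyle{loadDelay}$ drives both proofs.

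First, I would isolate the key structural property of $\muarchStyle{loadDelay}$: by \textsc{Step-Eager-Delay}, whenever an $\execute{i}$ directive is taken on a load entry, every preceding PC-updating command in the reorder buffer carries tag $\notags$, so no unresolved branch sits before the load. Consequently each load is executed at a non-speculative position, cannot subsequently be squashed, and will be retired in order, matching a unique load instruction of the sequential architectural semantics. This lets me align each load event in the hardware run with exactly one $\loadObs{n=v}$ label of the $\ArchSeqInterf{\cdot}$ trace, in the same order of occurrence.

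Next, I would set up the inductive invariant $\tup{\sigma_1,\mu_1}\sim\tup{\sigma_2,\mu_2}$ requiring (i) $\mu_1=\mu_2$ (which immediately yields $\adversary(\mu_1)=\adversary(\mu_2)$), (ii) $\apply{\buf_1}{a_1}$ and $\apply{\buf_2}{a_2}$ agree on all registers on which they are non-$\bot$, and (iii) $m_1$ and $m_2$ agree on every address that will still be read during the remaining non-speculative execution (as determined by the agreed $\ArchSeqInterf{\cdot}$ trace). The induction proceeds by case analysis on the scheduler directive $d=\SchedNext(\SchedState)$, which is identical in both runs by (i). For $\fetch{}$, the PC (equal by (ii)) together with the branch-predictor response (equal by (i)) produce identical buffer extensions and cache updates. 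For $\execute{i}$ on a non-load, the evaluated expression depends only on registers and earlier in-flight resolved values, which agree by (ii). For $\execute{i}$ on a load, the delayed-load condition makes the load non-speculative, so its address resolves from agreeing registers to a common $n$, and the loaded values $m_1(n), m_2(n)$ agree by (iii) and by equality of the next $\loadObs{n=v}$ label of the $\ArchSeqInterf{\cdot}$ traces. $\retire$ preserves (ii) and (iii) directly.

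The main obstacle will be formalizing the alignment between non-speculative load events in the $\muarchStyle{loadDelay}$ run and $\loadObs{n=v}$ labels of the $\ArchSeqInterf{\cdot}$ trace, since the hardware run interleaves fetch, speculate, rollback, and retire in a way that does not literally mirror the sequential architectural run. Concretely, I expect to need an auxiliary lemma stating that the sequence of loads \emph{executed} in a $\muarchStyle{loadDelay}$ run, taken in execution order, coincides with the sequence of load instructions performed by the architectural semantics (and hence with the sequence of $\loadObs{n=v}$ labels of $\ArchSeqInterf{p}(\sigma)$); this requires showing that once a load passes the delayed-load guard no subsequent \textsc{Execute-Branch-Rollback} can squash it, and that the value written into the register is exactly $m(\exprEval{e}{a'})$. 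Once this alignment lemma is in place, the simulation closes in a routine induction and yields the desired $\hsni{\ArchSeqInterf{\cdot}}{\LoadDelayMuarchSem{\cdot}}$.
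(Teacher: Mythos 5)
Your proposal is correct and follows essentially the same route as the paper: the paper also proves this by a lock-step simulation whose invariant forces the two \uarchstate{}s (including the full reorder buffer) and register files to coincide, and it discharges your anticipated ``alignment lemma'' via a correspondence function mapping non-mispredicted reorder-buffer prefixes to states of the $\ArchSeqInterf{\cdot}$ run, so that when \textsc{Step-Eager-Delay} fires the load sits past only resolved branches and the matching $\loadObs{n=v}$ observation fixes both its address and its value. The only cosmetic difference is that the paper reads the loaded value directly off the contract trace rather than maintaining your forward-looking memory-agreement condition (iii).
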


\subsection{$\muarchStyle{tt}$: Taint tracking of speculative values}\label{sec:countermeasures:taint-tracking}

Recent work~\cite{nda2019weisse,STT2019} propose to track transient computations and to selectively delay instructions involving tainted information.
While these proposals slightly differ in how instructions are labelled and on the effects of different labels, they share the same building blocks and provide similar guarantees.

For this reason, we start by presenting an overview of the Speculative Taint Tracking (STT)~\cite{STT2019} and  Non-speculative Data Access (NDA)~\cite{nda2019weisse} countermeasures.
Next, we introduce a general extension to the hardware semantics from Section~\ref{sec:hw-side} for supporting taint tracking schemes.
We continue by formalizing a countermeasure inspired by STT and we discuss its security guarantees, and  we conclude by discussing  NDA.

\subsubsection{Overview}\label{sec:countermeasures:taint-tracking:overview}
STT~\cite{STT2019} and NDA~\cite{nda2019weisse} are two recent taint tracking proposals for secure speculation.
These countermeasures extend a processor with hardware-level taint tracking to track whether data has been retrieved by a speculatively executed instruction.
The taint tracking mechanism propagates taint through the computation and whenever operations are no longer transient, the taint is removed.
Finally, both NDA and STT selectively delay tainted operations to avoid leaks.

The main difference between the two approaches is that while STT delays the \emph{execution} of tainted {transmit instructions} (that is, instructions like loads that might leak information), NDA adopts a more conservative approach that delays the \emph{propagation} of data from tainted instructions.

\subsubsection{Supporting taint tracking}\label{sec:countermeasures:taint-tracking:general}

To support taint tracking, we label entries in the reorder buffer with sets of reorder buffer's indexes.
A labeled command is of the form $\labelled{\tagged{i}{T}}{L}$ where $\tagged{i}{T}$ is a reorder buffer entry and $L \subset \Nat$ is a label, i.e., the set of indexes of the entries $\tagged{i}{T}$ depends on.

Existing proposals differ in (1) how labels are assigned and propagated, and (2) how labels affect the processor's execution.
To accommodate different variants for (1) and (2), we formalize these aspects using two functions:
\begin{asparaitem}
\item The \emph{labeling function} $\labelNda{\buf_{ul}}{\buf}{d}$ computes the new labels associated with the (unlabeled) buffer $\buf_{ul}$ given the old labeled buffer $\buf$ and the directive $d$ determining the activated pipeline step.
This function models how the tracking works, i.e., how labels are assigned to new instructions and how they are propagated.

\item The \emph{unlabeling function} $\unlabelNda{\buf}{d}$ produces an unlabeled buffer $\buf_{ul}$ starting from a labeled buffer $\buf$ and a directive $d$.
This function models how labels affect the processor's semantics in terms of changes to the reorder buffer (and these changes might depend on the executed pipeline step modeled by $d$).
\end{asparaitem}
We describe later how these functions can be instantiated to model STT and NDA.

We formalize the  $\muarchStyle{tt}$ countermeasure by modifying the \textsc{Step} rule as follows (changes are highlighted in blue):
\begin{mathpar}
\inferrule[Step]
{
	d = \SchedNext(\SchedState)\\
	\highlightBox{\buf_{ul} = \unlabelNda{\buf}{d}}\\
	\tup{m,a, \highlightBox{\buf_{ul}}, \CacheState,\BpState} \muarchStep{d}{} \tup{m',a',\highlightBox{\buf_{ul}'}, \CacheState',\BpState'}	\\
    	\highlightBox{\buf' = \labelNda{\buf_{ul}'}{\buf}{d}}\\
    \SchedState' = \SchedUpdate(\SchedState,{\BufProject{{\buf'}}})
}
{
	\tup{m,a,\buf, \CacheState,\BpState, \SchedState} \TtMuarchStep{}{} \tup{m',a',\buf', \CacheState',\BpState', \SchedState'}
}
\end{mathpar}
The rule differs from the standard \textsc{Step} rule in three ways:
\begin{asparaitem}
\item Entries in the reorder buffer are labelled.
\item Before activating a step in the pipeline, i.e., before applying one step of $\muarchStep{d}{}$, we use the unlabeling function to derive an unlabeled buffer $\buf_{ul} = \unlabelNda{\buf}{d}$ representing how labels affect the reorder buffer entries.
\item The buffer produced by the application of $\muarchStep{d}{}$ is labeled by invoking the labeling function $\buf' = \labelNda{\buf_{ul}'}{\buf}{d}$.
Therefore, the labels in $\buf'$ are updated to track the information flows through the computation.
\end{asparaitem}

\subsubsection{Speculative taint tracking}\label{sec:countermeasures:taint-tracking:stt}

Here we present how to model a countermeasure inspired by STT~\cite{STT2019}. %
As mentioned above, STT tracks whether data depends on speculatively accessed data and delays the execution of transient transmit instructions.
These features are reflected in our model:
\begin{asparaitem}
\item In \srclang{}, there are three kinds of \emph{transmit instructions}: loads $\pload{x}{e}$, stores $\pstore{x}{e}$, and assignments to the program counter $\passign{\pc}{e}$.
We write $\transmitGadget{\tagged{i}{T}}$ whenever the instruction $i$ is a transmit instruction. 

\item The labeling function, formalized in \techReportAppendix{appendix:stt-labeling}, specifies how newly fetched instructions are labeled as well as how labels are updated during computation, and it  works as follows:\looseness=-1
\begin{asparaitem}
\item Newly fetched $\pload{x}{e}$ instructions are labelled with the indexes of the unresolved branch instruction in the buffer.
	That is, non-transient loads are labelled with $\emptyset$, whereas potentially transient loads have a non-empty label.
 	In contrast, newly fetched assignments $\passign{x}{e}$ are labelled with the union of the labels associated with the registers occurring in $e$.
 	That is, assignments that depend only on non-transient values are labelled with $\emptyset$, whereas those depending on potentially transient values have a non-empty label.
	All other newly fetched instructions are labelled with an empty label.
\item When we retire an instruction, all indexes in labels are decremented by $1$ and indexes reaching $0$ are removed.
This ensures that indexes are still consistent with the, now shorter, reorder buffer.
\item When we execute non-branch instructions,  labels are preserved.
\item When we execute and resolve a branch instruction (thereby eliminating one of the sources of speculation), we remove its index from later commands' labels.
	As a result, some later commands may now have an empty label, i.e., they are certainly non-transient.
\end{asparaitem}
Overall, the labeling function ensures that reorder buffer entries that may depend on transiently retrieved data are labelled with a non-empty label at every point of the computation.

\item To delay \emph{only} transmit instructions, the unlabeling function, defined in Figure~\ref{figure:stt:unlabeling}, replaces assignments $\passign{x}{e}$ whose label is non-empty with $\passign{x}{\bot}$ for $\fetch{}$ and $\execute{i}$ directives when the $i$-th entry in the buffer is a transmit instruction.
This ensures that transmit instructions are not executed whenever they depend on possibly transient data, which are now mapped to $\bot$.
In contrast, the unlabeling function simply strips the taint tracking labels for $\retire{}$ and $\execute{i}$ directives whenever the $i$-the entry is not a transmit instruction; thereby allowing the hardware to freely execute non-transmit instructions.

\begin{figure}
\begin{align*}
        \unlabelNda{\buf}{\fetch{}} &= \mask{\buf}\\
        \unlabelNda{\buf}{\retire{}} &= \drop{\buf}\\
        \unlabelNda{\buf}{\execute{i}} &=  
        {
            \begin{cases}
                \mask{\buf} & \text{if } \transmitGadget{\elt{\buf}{i}}\\
                \drop{\buf} & \text{otherwise}
            \end{cases}
        }
        \\
        \drop{\emptysequence} & := \emptysequence\\
        \drop{ \labelled{\tagged{i}{T}}{L} \concat \buf } & := \tagged{i}{T} \concat \drop{\buf}\\
        \mask{\emptysequence} &:= \emptysequence \\ 	
        \mask{ \labelled{\tagged{i}{T}}{L} \concat \buf } & := 
        {
            \begin{cases}
                \tagged{\passign{x}{\bot}}{T} \concat \mask{\buf} & \text{if } L = \emptyset  \wedge  \\
                & \ i = \passign{x}{e}  \\ 
                \tagged{i}{T} \concat \mask{\buf} & \text{otherwise}
            \end{cases}
        }
    \end{align*}
\caption{Unlabeling function $\unlabelNda{\buf}{d }$ for STT}\label{figure:stt:unlabeling}
\end{figure}
\end{asparaitem}

Concretely, $\muarchStyle{tt}$ delays all transmit instructions that depend on transiently retrieved data.
However, $\muarchStyle{tt}$ does not delay transient loads that depend on non-transient data, as acknowledged also in~\cite{STT2019}.
This means that parts of the architectural state can be leaked using speculatively executed instructions.
As Example~\ref{example:nda-stt} shows, $\muarchStyle{tt}$ violates the $\CtSeqInterf{\cdot}$ contract. %
\begin{example}\label{example:nda-stt}
    Consider the Spectre v1 variant from Figure~\ref{figure:v1-variant}, compiled to \srclang{}:
{%
\begin{lstlisting}[mathescape=true,style=Cstyle]
$\mathbf{load}$ z,A + y //accessing A[y]
x $\leftarrow$ y < size_A
$\mathbf{beqz}$ x, $\bot$ //checking y < size_A
z $\leftarrow$ z*64
$\mathbf{load}$ w, B+z //accessing B[A[y]*64]
\end{lstlisting}
}
    Consider two configurations $\sigma$ and $\sigma'$ that agree on the values of $\inlineCcode{A}$, $\inlineCcode{B}$, $\inlineCcode{y}$, and $\inlineCcode{size_A}$ and for which $\sigma(\inlineCcode{y}) > \sigma(\inlineCcode{size_A})$, i.e., the array~$\inlineCcode{A}$ is speculatively accessed out of bounds.
    Furthermore, assume that $\sigma(\inlineCcode{A + y})  = 0$ and $\sigma'(\inlineCcode{A + y}) = 1$.
    Then, $\CtSeqInterf{p}(\sigma) = \CtSeqInterf{p}(\sigma') = \loadObs{\inlineCcode{A+y}} \concat \pcObs{\bot}$.
    However, the hardware semantics can potentially leak information through the data cache if the hardware speculatively executes the load on line 5.
    Indeed, the load on line~1 is labeled with $\emptyset$ since it is \emph{not} transient.
    Hence, the load operation on line 5, which depends on the result of line $1$, is not delayed (even though transient operations relying on its result would be delayed since line $5$ is labeled with the index corresponding to the unresolved branch in line 3).
    By probing the state of the cache an attacker can determine whether $\inlineCcode{A[y]  = 0}$ or $\inlineCcode{A[y] = 1}$, thereby distinguishing $\sigma$ and $\sigma'$.\looseness=-1
\end{example}

One way to characterize the guarantees provided by the $\muarchStyle{tt}$ countermeasure is with the $\CtSpecInterf{\cdot}$ contract.
\begin{restatable}{thm}{sttOne}
    \label{theorem:hni:stt:one}
    $\hsni{\CtSpecInterf{\cdot}}{ \TtMuarchSem{\cdot} }$.
\end{restatable}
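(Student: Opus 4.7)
The plan is to establish $\hsni{\CtSpecInterf{\cdot}}{\TtMuarchSem{\cdot}}$ by a lock-step simulation argument between two $\muarchStyle{tt}$ runs of the same program $p$ starting from $\CtSpecInterf{\cdot}$-equivalent initial states $\sigma$ and $\sigma'$. Since the adversary observation $\adversary(\mu) = \tup{\BufProject{\buf},\CacheState,\BpState,\SchedState}$ depends only on the data-independent projection of the reorder buffer together with the cache, predictor, and scheduler states, it suffices to show that two such runs maintain an invariant in which all of these four components coincide at every step.

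To that end, I would define a relation $R$ on pairs of $\muarchStyle{tt}$ configurations by $\tup{\sigma_1,\mu_1}\,R\,\tup{\sigma_2,\mu_2}$ iff (i) the microarchitectural components $\CacheState,\BpState,\SchedState$ are identical on both sides, (ii) the labelled reorder buffers coincide under $\BufProject{\cdot}$ and carry equal labels, and (iii) the architectural states $\sigma_1,\sigma_2$ agree on every value that the contract trace $\CtSpecInterf{p}$ has exposed so far (program-counter values, addresses of loads and stores, and resolved register contents derived from non-tainted instructions that have already been retired). The crucial enabling observation is that both the labelling function $\labelNda{\cdot}{\cdot}{\cdot}$ of \S\ref{sec:countermeasures:taint-tracking:stt} and the unlabelling function $\unlabelNda{\cdot}{\cdot}$ of Figure~\ref{figure:stt:unlabeling} are data-independent: they inspect only the shape of buffer entries (which instruction, which tag, which label) and never the resolved values. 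Thus, from $R$, the unlabelled buffers $\buf_{ul}$ and $\buf_{ul}'$ fed into the underlying $\muarchStep{d}{}$ step are structurally equal, and whatever masking happens to a transmit instruction on one side happens identically on the other.

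The inductive step then proceeds by case analysis on the scheduler directive $d = \SchedNext(\SchedState)$, exactly as in the proof of Theorem~\ref{theorem:hni:all}, but applied to the unlabelled buffers. For $\fetch{b}$ and $\retire$, preservation is straightforward because these rules alter $\CacheState,\BpState,\SchedState$ using only data-independent information and then invoke the labelling function symmetrically on both sides. The main obstacle lies in the $\execute{i}$ cases that can in principle depend on secret data: loads and branch resolutions. For a load that has been \emph{masked} by $\unlabelNda{\cdot}{\cdot}$ into $\passign{x}{\bot}$, the masked address cannot drive a data-dependent cache update and the two sides perform identical (possibly stalled) transitions. For an \emph{unmasked} load, the label was $\emptyset$, so by the labelling discipline the load is non-transient and its address is exposed as a $\loadObs{}$ observation of $\CtSpecInterf{p}$; clause (iii) of $R$ then forces the two addresses to agree, and the cache updates and resolved values match. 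The branch cases are analogous, using the fact that $\CtSpecInterf{p}$ exposes a $\pcObs{}$ label on every branch outcome, both speculative and non-speculative, so both runs perform the same misprediction roll-backs at the same buffer positions.

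Finally, from the invariance of $R$ along the full run, one extracts $\adversary(\mu_n) = \adversary(\mu_n')$ at every step, yielding $\TtMuarchSem{p}(\sigma) = \TtMuarchSem{p}(\sigma')$ as required. The hardest part of the write-up is pinning down the correct statement of clause (iii) of $R$: it must be weak enough to hold at initialization (where only publicly agreed values exist), yet strong enough to supply the equalities needed in the unmasked-load case; this is the place where the structure of $\CtSpecInterf{\cdot}$ observations (sequential plus always-mispredict) does the real work.
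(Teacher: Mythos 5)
Your overall strategy is the paper's: the proof there is literally a replay of Theorem~\ref{theorem:hni:all}'s proof (mapping lemma, indistinguishability lemma, main induction), and the one fact that makes the replay possible is exactly the one you isolate --- that $\labelNda{\cdot}{\cdot}{\cdot}$ and $\unlabelNda{\cdot}{\cdot}$ inspect only the shape of buffer entries, never resolved values, so the taint machinery is transparent to the simulation. That said, two points in your write-up would not survive being made precise.

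First, your invariant clause (ii) is too weak. Requiring only that the buffers agree under $\BufProject{\cdot}$ (plus equal labels) does not close the induction: two configurations can satisfy $\BufProject{\buf}=\BufProject{\buf'}$ while holding resolved entries $\tagged{\pstore{v}{n}}{\notags}$ and $\tagged{\pstore{v'}{n'}}{\notags}$ with $n\neq n'$, or resolved $\pc$-assignments with different targets; retiring the former updates the cache at different addresses and the latter drives the next fetch to different lines, breaking clause (i) one step later. The paper's deep-indistinguishability relation uses a strictly finer projection $\DeepProject{\buf}$ that deliberately keeps resolved store operands, $\pc$-values, and unresolved expressions visible while hiding only resolved non-$\pc$ data --- strong enough to push the induction through, weak enough to be re-established from $\CtSpecInterf{\cdot}$'s $\loadObsKywd$/$\storeObsKywd$/$\pcObsKywd$ observations. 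Relatedly, your clause (iii) gestures at ``values the contract has exposed so far'' without the prefix-indexed correspondence $\map{\crun}{\hrun}{i}(j)$ and the relation $\equiv_j$ that the paper uses to say \emph{which} contract state (and hence which observation) governs the execution of the $j$-th in-flight instruction; with nested mispredicted branches in the buffer this bookkeeping is where the actual work lives, and you correctly sense this but leave it open.

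Second, a conceptual slip: an executable load under $\muarchStyle{tt}$ is \emph{not} necessarily non-transient. A load whose own label is non-empty (it sits under an unresolved branch) but whose address operands are untainted is executed speculatively --- that is precisely Example~\ref{example:nda-stt}. The address equality you need does not come from the load being non-transient; it comes from $\CtSpecInterf{\cdot}$ exposing $\loadObs{n}$ on mispredicted paths as well, via the always-mispredict execution mode. The conclusion is right, the stated reason is not.
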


However, we remark that this contract is already satisfied by the baseline hardware defined in Section~\ref{sec:hw-side} without any countermeasures. 
A more meaningful characterization of $\muarchStyle{tt}$'s guarantees, stated in Theorem~\ref{theorem:hni:stt:two}, is via the $\ArchSeqInterf{\cdot}$ contract. %
Intuitively,  $\muarchStyle{tt}$ satisfies $\ArchSeqInterf{\cdot}$ as it prevents the execution of transmit instructions based on transiently retrieved data.

\begin{restatable}{thm}{sttTwo}
    \label{theorem:hni:stt:two}
    $\hsni{\ArchSeqInterf{\cdot}}{ \TtMuarchSem{\cdot} }$.
\end{restatable}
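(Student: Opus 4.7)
The plan is to establish $\hsni{\ArchSeqInterf{\cdot}}{\TtMuarchSem{\cdot}}$ by a lock-step simulation between the $\muarchStyle{tt}$ runs from two initial architectural states $\sigma$ and $\sigma'$ with $\ArchSeqInterf{p}(\sigma)=\ArchSeqInterf{p}(\sigma')$. Because $\ArchSeqInterf{\cdot}$ exposes every load address and every loaded value (and registers are zeroed at start), the two runs of the architectural semantics visit identical sequences of \archstate{}s; the task is therefore to show that the microarchitectural components observed by $\adversary$, namely $\tup{\BufProject{\buf},\CacheState,\BpState,\SchedState}$, also evolve identically.

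First, I would define an equivalence $\sim$ on hardware configurations requiring (i) identical cache, predictor, scheduler, data-independent buffer projection, tags, and STT labels, and (ii) for every labelled command $\labelled{\tagged{i}{T}}{L}$ with $L=\emptyset$, identical resolved values across the two runs. Clause (ii) captures the intuition that STT labels ``$\emptyset$'' exactly those entries whose provenance is architectural, hence exposed by $\ArchSeqInterf{\cdot}$. I would then prove as a central lemma that this characterization is indeed an inductive invariant of $\TtMuarchStep{}{}$: newly fetched loads acquire as label the set of pending branch indices (empty iff no speculation is in flight, i.e.\ the load will be retired), assignments inherit the union of their operands' labels, and labels shrink only when a branch whose outcome is itself $L=\emptyset$-derivable resolves. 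A useful side condition to verify is that $\BufProject{\cdot}$ forgets the difference between a resolved value $v$ and the $\bot$ substituted by $\mask{\cdot}$, so that the scheduler update $\SchedUpdate(\SchedState,\BufProject{\buf'})$ stays synchronized across the two runs.

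Second, I would perform a case analysis on the directive $d=\SchedNext(\SchedState)$ to show that $\sim$ is preserved by one $\TtMuarchStep{}{}$ step. For $\fetch{b}$ and $\retire$, the relevant updates to cache, predictor, and $\pc$ depend only on retired data and on entries with $L=\emptyset$, hence agree by clauses (i)--(ii). For $\execute{i}$ with a non-transmit instruction, $\unlabelNda{\buf}{d}=\drop{\buf}$ leaves the projection, cache, and predictor untouched, so $\sim$ is trivially preserved. For the critical case, $\execute{i}$ with $\transmitGadget{\elt{\buf}{i}}$, the unlabeling $\mask{\cdot}$ rewrites all operand assignments whose label is non-empty to $\passign{x}{\bot}$; hence the operand expression of the transmit either resolves to the same value in both runs, drawn entirely from $L=\emptyset$ data, yielding identical cache and predictor updates, or it evaluates to $\bot$ in both runs so the transmit stalls in both. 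Either way, $\sim$ is preserved.

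The main obstacle is the central lemma in step one: making precise the claim that $L=\emptyset$ semantically entails ``value determined by the non-speculative architectural trace''. The subtlety appears when a branch resolves and its index is removed from later entries' labels, because one then needs the branch's own outcome to have been $\ArchSeqInterf{\cdot}$-determined; this forces a simultaneous induction on buffer indices and on the transitive closure of data dependencies, piggybacking on the observation that the branch predictor's state agrees across runs so predicted control flow is identical. A subsidiary bookkeeping headache is checking that the auxiliary functions $\apply{\cdot}{\cdot}$, $\unlabelNda{\cdot}{\cdot}$, $\labelNda{\cdot}{\cdot}{\cdot}$, and $\BufProject{\cdot}$ all respect $\sim$ at every pipeline step, but this is mechanical once the invariant is stated correctly.
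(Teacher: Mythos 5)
Your overall strategy is the one the paper uses: the paper's deep-indistinguishability relation for this theorem is exactly your $\sim$ (equal register assignments, equal cache/predictor/scheduler state, equal buffers up to replacing the resolved values of non-empty-labelled assignments by $\unresolved$), and the proof is a step-indexed case analysis on the directive showing this relation is preserved. So the decomposition matches.

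There is, however, one step in your case analysis that does not go through as written, and it is the step on which the whole theorem turns. In the $\execute{i}$ case for a transmit instruction that is a load with label $L=\emptyset$, you argue that the operand expression resolves to the same address $n$ in both runs and conclude that cache and predictor updates agree, hence $\sim$ is preserved. But after \textsc{Execute-Load-Hit} the buffer entry becomes $\labelled{\tagged{\passign{x}{m(n)}}{T}}{\emptyset}$, and your own clause (ii) then demands $m(n)=m'(n)$ --- equality of the loaded \emph{values}, not just of the address. The two memories are distinct initial memories that need not agree anywhere off the observed footprint, so address equality alone gives you nothing here. This equality can only come from the contract: because the load's label is empty it is non-transient, so it corresponds to an actual step of the $\ArchSeqInterf{\cdot}$ run, and the matching $\loadObsKywd\ n=v$ observations in the two (equal) traces force $m(n)=m'(n)$. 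Making ``corresponds to an actual contract step'' precise is exactly what the paper's correspondence function $\map{\crun}{\hrun}{i}$ and its mapping lemma do: every prefix of the reorder buffer containing no mispredicted branch is mapped to a unique contract state whose deep-update matches the hardware state, and the observation at that contract state is then read off. Your proposed alternative --- a direct induction on buffer indices and the transitive closure of data dependencies to show that $L=\emptyset$ entails ``architecturally determined'' --- is workable, but note that it must be an induction that simultaneously tracks \emph{which} architectural step each empty-labelled entry realizes, since the induction hypothesis you need at a load is a statement about trace positions, not just about dependencies; at that point you have essentially reconstructed the paper's correspondence. A minor further imprecision: the two architectural runs do not visit identical \archstate{}s (memories may differ on never-loaded locations); only the register assignments and the observed values coincide, which is what the paper's relation actually asserts.
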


Theorem~\ref{theorem:hni:stt:two} confirms the results of~\cite{STT2019} and provides a clean characterization of the {\em transient noninterference}~\cite{STT2019} guarantees in terms of the $\ArchSeqInterf{\cdot}$ contract.

\subsubsection{Non-speculative data access}\label{sec:countermeasures:taint-tracking:nda}

Weisse et al.~\cite{nda2019weisse} propose NDA, a family of countermeasures for secure speculation that also relies on hardware taint tracking.
In a nutshell, NDA delays the propagation of speculatively executed instructions until the corresponding speculation sources have been resolved.
NDA comes with two different propagation strategies---\emph{strict} and \emph{permissive} propagation---that can be modeled  as follows:
\begin{asparaitem}
\item For both propagation strategies, the unlabeling function simply replaces all assignments $\passign{x}{e}$ whose label is non-empty with $\passign{x}{\bot}$, thereby preventing the propagation of potentially transient data.
This differs from STT where labels are sometimes removed to allow the propagation of potentially transient data, as long as the propagation does not lead to leaks. %
\item The labeling function differs from the one in $\muarchStyle{tt}$ in how newly fetched instructions are labeled.
For the  strict strategy, \textit{all} newly fetched transient instructions are labelled with the indexes of unresolved branch instructions.
In contrast, \textit{only} newly fetched transient $\loadKywd$s are labeled with the indexes of unresolved branch instructions under the permissive strategy.
\end{asparaitem}

Despite these differences, NDA provides similar guarantees to~$\muarchStyle{tt}$.
That is, it satisfies the $\CtSpecInterf{\cdot}$ and $\ArchSeqInterf{\cdot}$ contracts. %

\subsection{Summary}

Figure~\ref{figure:guarantees} summarizes the results of this section in the lattice structure established in~\S\ref{sec:contracts:contracts:lattice}. This yields the first rigorous comparison of the security guarantees of mechanisms for secure speculation, and it translates the results from~\S\ref{sec:sw-side} into a principled basis for programming them securely.

\renewcommand{\ydistance}{0.9}

\begin{figure}[t!]
    \centering

    \begin{tikzpicture}[->,>=stealth',shorten >=1pt,auto, semithick]

    \node[fill=none,draw=none, shape = rectangle, rounded corners, inner sep=0pt, outer sep=0pt, minimum height = 1pt, minimum width = 1pt]  (seqCt) at (0,0) {$\CtSeqInterf{\cdot}$};

    \node[fill=none,draw=none, shape = rectangle, rounded corners, inner sep=0pt, outer sep=0pt, minimum height = 1pt, minimum width = 1pt]  (specArch) at ($(seqCt) + (-3,-2.5*\ydistance)$) {$\ArchSpecInterf{\cdot}$};

    \node[fill=none,draw=none, shape = rectangle, rounded corners, inner sep=0pt, outer sep=0pt, minimum height = 16pt, minimum width = 1pt]  (seqArch) at ($(seqCt) + (-3,-0.5*\ydistance)$) {$\ArchSeqInterf{\cdot}$};

    \node[fill=none,draw=none, shape = rectangle, rounded corners, inner sep=0pt, outer sep=0pt, minimum height = 16pt, minimum width = 1pt]  (specPcCt) at ($(seqCt) + (0,-\ydistance)$) {$\CtPcSpecInterf{\cdot}$};

    \node[fill=none,draw=none, shape = rectangle, rounded corners, inner sep=0pt, outer sep=0pt, minimum height = 16pt, minimum width = 1pt]  (specCt) at ($(specPcCt) + (0,-\ydistance)$) {$\CtSpecInterf{\cdot}$};

    \path (specPcCt) edge[] node[left] {} (seqCt);
    \path (seqArch) edge[] node[left] {} (seqCt);
    \path (specArch) edge[] node[left] {} (seqArch);
    \path (specCt) edge[] node[left] {} (specPcCt);
    \path (specArch) edge[] node[left] {} (specCt);

            \begin{scope}[on background layer]
                \node[draw=gray!80,dashed, fill=none, fill opacity=0.5, rounded corners, rotate fit=-8
                , minimum height=0.1em, fit=(seqArch) (specPcCt)](Fit1) {};
                \node[draw=gray!80,dashed, fill=none, fill opacity=0.5, rounded corners, rotate fit=-24, minimum height=0.1em, fit=(seqArch) (specCt)](Fit2) {};
                \node[draw=gray!80,dashed, fill=none, fill opacity=0.5, rounded corners, rotate fit=0, minimum height=0.1em, fit=(seqCt)](Fit3) {};
                \node  (label1) at ($(Fit1.east) + (1,0)$) {$\LoadDelayMuarchSem{\cdot}$};
                \node  (label2) at ($(Fit2.east) + (0.5,0)$) {$\TtMuarchSem{\cdot}$};
                \node  (label3) at ($(Fit3.east) + (0.7,0)$) {$\SeqProcMuarchSem{\cdot}$};
            \end{scope}

    \end{tikzpicture}

    \caption{Security \hspace{-0.125mm}guarantees \hspace{-0.125mm}of \hspace{-0.125mm}secure-speculation \hspace{-0.125mm}mechanisms. %
    }\label{figure:guarantees}
\end{figure}

\section{Discussion}\label{sec:discussion}

\subsection{Scope of the model}

With our modeling of a generic microarchitecture and corresponding side-channel adversaries (\S\ref{sec:hw-side}), we aim to strike a balance between capturing the central aspects of attacks on speculative and out-of-order processors, while obtaining a general and tractable model.

As a consequence, we simplify many aspects of modern processors.
For instance, we model only a simple 3-stage pipeline, single threaded, %
and with conditional branch prediction as the only source of speculation. Likewise, we consider an adversary that can observe instructions in the reorder buffer and memory blocks in the cache, but not the data they carry.

This modelling is adequate for reasoning about protections against variants of Spectre v1. However, it does not encompass features such as store-to-load forwarding or prediction over memory aliasing, or adversaries that can observe leaks from internal processor buffers, such as those exploited in data-sampling attacks~\cite{ridl2019, zombieload2019}.

As a consequence, Theorems~\ref{theorem:hni:all}--\ref{theorem:hni:stt:two} need not extend to these scenarios.
However, our framework for expressing contracts is not limited to this simple model, as we discuss next. %

\subsection{Beyond Spectre v1}
We now discuss how to extend our framework to  other transient execution attacks.
For each attack, we discuss how to (1) extend our contracts, and (2) adjust our hardware semantics:\looseness=-1
\begin{asparaitem}
\item \textit{Spectre-BTB and Spectre-RSB}: These variants speculate respectively over indirect jumps and return instructions.
To support them, the \interfStyle{spec}-contracts can be extended to explore  all possible mispredicted paths for a bounded number of steps before rolling back (similarly to the \textsc{Branch} rule in Figure~\ref{figure:interface:spec-ct}).
Moreover, our hardware semantics $\muarchSem{\cdot}$ can also easily be extended to handle these new forms of speculation.
For instance, speculation over indirect jumps could be modeled similarly to the \textsc{Fetch-Branch-Hit} rule in \S\ref{sec:hw-side}.
\item \textit{Spectre-STL}: This variant speculates over memory aliasing over in-flight store and load operations.
Extending our contracts to handle this new kind of speculation requires to modify the $\interfStyle{spec}$-contracts to model the effects of store-to-load forwarding resulting from memory aliasing predictions.
This could be done similarly to Pitchfork~\cite{constanttime2019}.
That is, the $\interfStyle{spec}$ semantics could keep track of the  issued $\pstore{x}{e}$ instructions.
Then, whenever a $\pload{y}{e'}$ instruction is executed, one could explore multiple paths representing all possible aliasing predictions for a fixed number of steps and later roll back.
Finally, the  $\muarchSem{\cdot}$ semantics can be extended to support Spectre-STL  similarly to other semantics~\cite{spectre-heretostay,balliu2019inspectre,constanttime2019}.

\item \textit{Straight-line speculation}: Some CPUs can speculatively execute instructions that follow straight after unconditional jumps or function returns\cite{arm-straightline}, which may result in speculative leaks. A corresponding contract can be captured by an execution mode (similar to $\interfStyle{spec}$) that explores instructions following unconditional changes in control flow up to a bounded depth and exposes the corresponding observations, whereas our hardware semantics $\muarchSem{\cdot}$ can be extended to support this form of speculation similarly to \textsc{Fetch-Branch-Hit} rule in \S\ref{sec:hw-side}.

\item \textit{Meltdown and MDS}: In Spectre-type attacks, transient execution is caused by control- and data-flow mispredictions. In Meltdown-type~\cite{spectreSoK} attacks  (encompassing both Meltdown~\cite{meltdown2018} and data sampling attacks~\cite{ridl2019, zombieload2019}), transient execution is caused by instruction faults or microcode assists.
Contracts for processors that are vulnerable to this kind of attacks would need to expose large portions of the memory space, which makes secure programming challenging. %
\end{asparaitem}

\subsection{Uses of contracts}

The contracts we propose in this paper are designed to adequately capture the security guarantees offered by existing mechanisms for secure speculation, while exposing tractable verification conditions for software. We envision hardware vendors to produce such contracts for their CPUs, to enable users to reason about software security without exposing details of the microarchitecture, and to provide a baseline against which to validate the vendors' security claims.

Moreover, rather than trying to infer contracts for a microarchitecture that has not been designed with security in mind, our framework can serve as a basis for a clean-slate approach, where one starts from a desired security contract and aims to design microarchitectures that optimize performance within these constraints.

\section{Related work}\label{sec:related}

\paragraph{Speculative execution attacks}
These attacks exploit microarchitectural side-effects of speculatively executed instructions to leak information.
There exist many Spectre~\cite{Kocher2018spectre} variants  that differ in the exploited speculation sources~\cite{Maisuradze:2018:RSE:3243734.3243761, 220586, spectreV4}, the covert channels~\cite{trippel2018meltdownprime,schwarz2018netspectre, stecklina2018lazyfp} used, or the target platforms~\cite{chen2018sgxpectre}. 
We refer to~\cite{spectreSoK, xiong2020survey} for a survey.\looseness=-1%

\paragraph{Hardware-level countermeasures}
Here, we review proposals that  we have not formalized in \S\ref{sec:countermeasures}:
\begin{asparaitem}
    \item {``Redo''-based countermeasures}~\cite{invisispec2018,safespec2019,ainsworth2020muontrap}   execute speculative memory operations on shadow cache structures. 
    Once a memory operation becomes non-speculative, its effects are replicated on the standard cache hierarchy by re-executing the operation.
    While these countermeasures satisfy $\CtSpecInterf{\cdot}$, they likely violate $\ArchSeqInterf{\cdot}$ as they still modify other parts of the \uarchstate{} such as the reorder buffer. 
    This intuition is confirmed by speculative interference attacks~\cite{specinterference20}, which demonstrate how to convert seemingly transient changes to the reorder buffer into persistent cache-state changes.
    
   \item In contrast, {``Undo''-based countermeasures}~\cite{saileshwar2019cleanupspec}  mitigate Spectre attacks by rolling back the effects of speculatively executed instructions on the cache.
    Such countermeasures provide security against adversaries that observe the final cache state, but they likely do not provide guarantees against the trace-based attackers we consider in this paper.

    \item Delay-based mitigations selectively delay the execution of some instructions to prevent speculative leaks.
    In addition to the \muarchStyle{loadDelay} countermeasure studied in \S\ref{sec:countermeasures:load-delay},  Sakalis et al.~\cite{specshadow2019}  propose a more permissive scheme, similar to conditional speculation~\cite{conditionalspeculation},   where only loads resulting in cache misses are delayed. %
    These countermeasures, however, would violate the $\ArchSeqInterf{\cdot}$ and $\CtPcSpecInterf{\cdot}$ contracts because cache hits would still leak information.
    
    SpecShield~\cite{specshield} proposes two  countermeasures: one similar to eager-delay and the other similar to NDA's permissive strategy, with similar guarantees as those of $\muarchStyle{loadDelay}$ and $\muarchStyle{tt}$.

    Finally, some proposals, like~\cite{contextsensitivefencing,context}, improve efficiency by only delaying instructions that may leak program-level sensitive information.
    This is achieved by either considering all user-provided data as untrusted~\cite{contextsensitivefencing} or by allowing the specification of program-level policies~\cite{context}.

\end{asparaitem}

\paragraph{Formal microarchitectural models}
While several works~\cite{conf/popl19/armstrong,degenbaev2012x86,Goel2017} present formal architectural models for (parts of) the ARMv8-A, RISC-V,  MIPS, and x86 ISAs, only recently researchers started to focus on formal models of  microarchitectural aspects.
For instance, Coppelia~\cite{zhang2018coppelia} is a tool to automatically generate software exploits for hardware designs.\looseness=-1

The speculative semantics from~\cite{spectector2020} forms the basis for the $\CtSpecInterf{\cdot}$ contract that exposes the effects of speculatively executed instructions.
In contrast to~\cite{spectector2020}, other  semantics~\cite{spectre-heretostay,constanttime2019,blade,balliu2019inspectre}  more closely resemble the actual microarchitectural behavior of  out-of-order processors with multiple pipeline stages, rather than concisely capturing the resulting leakage.
Specifically, the hardware semantics $\CtxMuarchSem{\cdot}{}$ in \S\ref{sec:hw-side} extends~\cite{constanttime2019,blade}'s semantics by making explicit the dependencies with caches,  predictors, and pipeline scheduler. 
Finally, Disselkoen et al.~\cite{disselkoen2018code} presents a speculative semantics based on ideas from recent advances on relaxed memory models.

Fadiheh et al.~\cite{Fadiheh19upec} propose a SAT-based boundel model checking methodology to check whether a given register-transfer level (RTL) processor design exhibits covert channel vulnerabilities.
In the terminology of our work, they check whether the processor design satisfies the $\ArchSeqInterf{\cdot}$ contract.

\paragraph{HW-SW contracts for side channels}
Recently, researchers~\cite{Heiser18,Ge2018} have been calling for new hardware-software contracts that expose security-relevant microarchitectural details. 
We answer this call by providing contracts for secure speculation and by showing how they can be leveraged at the software level.

Recent work~\cite{YuHHF19,ZagieboyloSM19} presents extensions to the RISC-V ISA where data is %
 labeled, e.g., as \emph{Public} or \emph{Secret}; labels are tracked during the computation; and the microarchitecture ensures that secret data does not leak. 
This work is orthogonal to ours in that we characterize the security of different hardware-level countermeasures for a standard ISA. %

\section{Conclusions}\label{sec:conclusions}

Motivated by a lack of hardware-software contracts that support principled co-design for secure speculation, we presented a framework for specifying such contracts.

On the hardware side, we used our framework to provide the first uniform characterization of guarantees provided by a representative set of mechanisms for secure speculation. 

On the software side, we used our framework to characterize secure programming in two scenarios---``constant-time programming'' and ``sandboxing''---and we show how to automate checks for programs to run securely on top of these mechanisms.

\medskip
\subsubsection*{Acknowledgments}
Pepe Vila's work was done while at Microsoft Research.
We would like to thank David Chisnall, Muntaquim Chowdhury, Matthew Fernandez, C{\'e}dric Fournet, Carlos Rozas, and Gururaj Saileshwar for feedback and discussions.   
This work was supported by a grant from Intel Corporation,
Atracci\'on de Talento Investigador grant 2018-T2/TIC-11732A,
Juan de la Cierva-Formaci\'on grant FJC2018-036513-I,
Spanish project RTI2018-102043-B-I00 SCUM,
and Madrid regional project S2018/TCS-4339 BLOQUES.

\bibliographystyle{IEEEtran}
\balance
\bibliography{biblio}
\appendices
\section{Proof overview}\label{appendix:proof:overview}
Theorems~\ref{theorem:hni:all}--\ref{theorem:hni:stt:two} are statements about contract satisfaction of the form $\hsni{\interfSem{\cdot}}{ \muarchSem{\cdot} }$. For their proof, we need to show that, given an arbitrary program  $p$ and two arbitrary \archstate{}s $\sigma$, $\sigma'$ such that $\interfSem{p}(\sigma)=\interfSem{p}(\sigma')$, then $\muarchSem{p}(\sigma)=\muarchSem{p}(\sigma')$.
The proofs follow a common structure, which we outline in this section. For full details, see~\techReportAppendices{appendix:proofs:general}{appendix:proofs:taint-tracking}.

\paragraph{Notation}
In the following,  $\crun$, $\crunp$ denote the contract runs corresponding to the traces $\interfSem{p}(\sigma)$ and $\interfSem{p}(\sigma')$. Similarly, $\hrun$, $\hrunp$ denote the hardware runs corresponding to $\muarchSem{p}(\sigma)$ and $\muarchSem{p}(\sigma')$.
Moreover, $\hrun(i)$ and $\crun(i)$ respectively denote $\hrun$'s and $\crun$'s $i$-th state.
All proofs rely on four main components:
\begin{asparaitem}
\item an indistinguishability relation $\!\muarchStyle{\approx}\!$ between hardware states,
\item relations  $\equiv_j$ between hardware and contract states,
\item a correspondence function $\map{\crun}{\hrun}{\cdot}$ that maps hardware states in $\hrun$ to $\equiv_j$-related contract states in $\crun$, and
\item an indistinguishability lemma capturing under which conditions one step of the hardware semantics $\muarchStepCompact{}$ preserves state indistinguishability $\muarchStyle{\approx}$.
\end{asparaitem}
We next describe each of these components and outline how they are combined for proofs of $\hsni{\interfSem{\cdot}}{ \muarchSem{\cdot} }$.

\paragraph{State indistinguishability relation $\muarchStyle{\approx}$}
Two hardware states related by $\muarchStyle{\approx}$ must be indistinguishable by the microarchitectural adversary from \S\ref{sec:hw-side:adversaries}. In particular, this implies that they execute the same pipeline step.
Additionally, $\muarchStyle{\approx}$ captures proof invariants that are associated with specific hardware semantics.
For instance, for Theorem~\ref{theorem:hni:all},  $\muarchStyle{\approx}$ requires agreement on $\pc$-values, whereas for Theorem~\ref{theorem:hni:stt:two}, $\muarchStyle{\approx}$ requires agreement on register assignments and reorder buffers entries with empty labels.\looseness=-1

\paragraph{Hardware-contract relation $\equiv_j$}
This is a family of relations between contract states $\interfStyle{c}$ and hardware states $\muarchStyle{h}$, where, intuitively, $\interfStyle{c} \equiv_j \muarchStyle{h} $ holds if contract state $\interfStyle{c}$ is related to hardware state $\muarchStyle{h}$ when considering the prefix of $\muarchStyle{h}$'s reorder buffer of length $j$.

The specific definition of $\equiv_j$ depends on the considered contract:
For instance, in Theorems~\ref{theorem:hni:sequential}, \ref{theorem:hni:load-delaytwo}, and~\ref{theorem:hni:stt:two}, which consider contracts with $\interfStyle{seq}$ execution mode, $\interfStyle{c} \equiv_j \muarchStyle{h} $ iff $\interfStyle{c}$ is equivalent to the architectural state obtained by applying all  commands in $\muarchStyle{h}$'s reorder buffer up to the $j$-th instruction to the memory and registers in $\muarchStyle{h}$.
In contrast, for Theorems~\ref{theorem:hni:all},~\ref{theorem:hni:load-delayone}, and~\ref{theorem:hni:stt:one}, which consider contracts with $\interfStyle{spec}$ execution mode, $\interfStyle{c} \equiv_j \muarchStyle{h} $ additionally require that invariants related with speculative execution are satisfied.
For instance, $\interfStyle{c} \equiv_j \muarchStyle{h} $ ensures that all mispredicted branch instructions in $\muarchStyle{h}$'s buffer correspond to configurations whose speculative window is not $\interfStyle{\infty}$ (denoting non-speculative execution) in $\interfStyle{c}$.

\paragraph{Correspondence $\map{\crun}{\hrun}{\cdot}$}
The reorder buffer of a hardware state can contain data corresponding to multiple contract-level instructions. We use the function  $\map{\crun}{\hrun}{\cdot}$ to map {\em prefixes} of the reorder buffer of hardware states to unique contract states.
Specifically, $\map{\crun}{\hrun}{i}(j) = k$ indicates that the reorder buffer prefix of length $j$ of $\hrun(i)$ is mapped to $\crun(k)$.
In~\techReportAppendices{appendix:proofs:general}{appendix:proofs:taint-tracking}, we construct correspondences for the $\interfStyle{seq}$ and $\interfStyle{spec}$ execution modes such that: 
\begin{compactitem}
    \item if $\map{\crun}{\hrun}{i}(j) = k$, then $\crun(k) \equiv_{j} \hrun(i)$, and 
    \item $\map{\crun}{\hrun}{i} = \map{\crunp}{\hrunp}{i}$ for all $i$. 
\end{compactitem}

\paragraph{Indistinguishability lemma}
The indistinguishability lemma is the key intermediate result in our contract satisfaction proofs.
The lemma states that, given two reachable and indistinguishable hardware states $\muarchStyle{h}$, $\muarchStyle{h'}$, if for all prefixes $\buf$ of  $\muarchStyle{h}$'s reorder buffer, there are  contract states $\interfStyle{c}$ and $\interfStyle{c'}$ such that (a) $\interfStyle{c} \equiv_{|\buf|} \muarchStyle{h}$ and $\interfStyle{c'} \equiv_{|\buf|} \muarchStyle{h'}$, and (b) $\interfStyle{c}$ and $\interfStyle{c'}$ produce the same observation $\interfStyle{l}$ when executing one step of the contract semantics $\interfStepCompact$, then either $\muarchStyle{h}$ and $\muarchStyle{h'}$ are stuck or doing one step of the hardware semantics $\muarchStepCompact$ preserves the indistinguishability. See Figure~\ref{figure:indistinguishability-lemma} for a visualization. In~\techReportAppendices{appendix:proofs:general}{appendix:proofs:taint-tracking}, we prove indistinguishability lemmata for each combination of hardware semantics and contract. 

\paragraph{Contract satisfaction outline}
To prove $\hsni{\interfSem{\cdot}}{ \muarchSem{\cdot} }$, we need to show that $\muarchSem{p}(\sigma)=\muarchSem{p}(\sigma')$ if $\interfSem{p}(\sigma)=\interfSem{p}(\sigma')$.
To this end, we prove by induction that hardware states $\hrun(i)$ and $\hrunp(i)$ are indistinguishable for all $i$, which by definition of~$\muarchStyle{\approx}$ implies $\muarchSem{p}(\sigma)=\muarchSem{p}(\sigma')$.

(Induction basis): The initial hardware states $\hrun(0)$ and $\hrunp(0)$ are indistinguishable by definition as they agree on their microarchitectural components.

(Inductive step):
Assume that $\hrun(i) \muarchStyle{\approx} \hrunp(i)$.
Note that (1)~$\crun$, $\crunp$ agree on observations since $\interfSem{p}(\sigma)=\interfSem{p}(\sigma')$, %
and (2)~$\map{\crun}{\hrun}{i} = \map{\crunp}{\hrunp}{i}$ holds by construction.
Therefore, the correspondence mappings $\map{\crun}{\hrun}{i}$ and $\map{\crunp}{\hrunp}{i}$ provide contract states that satisfy conditions (a) and (b) of the indistinguishability lemma for the indistinguishable hardware states $\hrun(i)$ and $\hrunp(i)$, and we can conclude $\hrun(i+1) \muarchStyle{\approx} \hrunp(i+1)$.\looseness=-1

\begin{figure}
\centering
\begin{tikzpicture}

	{
		\node[draw=none,rounded corners](c1){\interfStyle{\ldots}};
		\node[draw=\idecol,circle,font=\footnotesize, right = .5 of c1](c2){};
		\node[draw=\idecol,circle,font=\footnotesize, right = .5 of c2](c3){};
		\node[draw=\idecol,circle,font=\footnotesize, right = .5 of c3](c4){};		
		\node[draw=\idecol,circle,font=\footnotesize, right = .5 of c4](c5){};
		\node[draw=\idecol,circle,font=\footnotesize, right = .5 of c5](c6){};
		\node[draw=none,rounded corners, right = .5 of c6](c9){\interfStyle{\ldots}};
		
 		\draw[draw=\idecol, font=\footnotesize, -left to] (c1) -- (c2);   
		\path[draw=\idecol, font=\footnotesize, -left to] (c2) edge [above] node[name=l1, inner xsep = 0]  {$\interfStyle{\footnotesize l_1}$} (c3);   
		\path[draw=\idecol, font=\footnotesize, -left to] (c3) edge [above] node[name=l2, inner xsep = 0]   {$\interfStyle{\footnotesize l_2}$} (c4);   
		\path[draw=\idecol, font=\footnotesize, -left to] (c4) edge [above] node[name=l3, inner xsep = 0]   {} (c5);   
		\path[draw=\idecol, font=\footnotesize, -left to] (c5) edge [above] node[name=l4, inner xsep = 0]   {} (c6); 
		\path[draw=\idecol, font=\footnotesize, -left to] (c6) edge [above] node[name=l5, inner xsep = 0]   {$\interfStyle{\footnotesize l_3}$} (c9);

		\node[draw=none,rounded corners, below = 1.5 of c2](h1){\muarchStyle{\ldots}};
		\node[draw=\ulccol,circle,font=\footnotesize, right = .5 of h1](h2){};
		\node[draw=\ulccol,circle,font=\footnotesize, right = 1.5 of h2](h3){};
		\node[draw=none,rounded corners, right = .5 of h3](h9){\muarchStyle{\ldots}};
		
		\draw[draw=\ulccol, double, -{implies}] (h1) -- (h2);   
		\draw[draw=\ulccol, double, -{implies}] (h2) -- (h3);
		\draw[draw=\ulccol, double, -{implies}] (h3) -- (h9);

		\node[draw=none,rounded corners, above right = 0.3 of c1](c1p){\interfStyle{\ldots}};
		\node[draw=\idecol,circle,font=\footnotesize, right = .5 of c1p](c2p){};
		\node[draw=\idecol,circle,font=\footnotesize, right = .5 of c2p](c3p){};
		\node[draw=\idecol,circle,font=\footnotesize, right = .5 of c3p](c4p){};		
		\node[draw=\idecol,circle,font=\footnotesize, right = .5 of c4p](c5p){};
		\node[draw=\idecol,circle,font=\footnotesize, right = .5 of c5p](c6p){};
		\node[draw=none,rounded corners, right = .5 of c6p](c9p){\interfStyle{\ldots}};
		
 		\draw[draw=\idecol, -left to] (c1p) -- (c2p);   
		\path[draw=\idecol, -left to] (c2p) edge [above, font=\footnotesize] node[name=l1p, inner xsep = 0]  {$\interfStyle{ l_1}$} (c3p);   
		\path[draw=\idecol, -left to] (c3p) edge [above, font=\footnotesize] node[name=l2p, inner xsep = 0]  {$\interfStyle{ l_2}$} (c4p);   
		\path[draw=\idecol, -left to] (c4p) edge [above, font=\footnotesize] node[name=l3p, inner xsep = 0] {} (c5p);   
		\path[draw=\idecol, -left to] (c5p) edge [above, font=\footnotesize] node[name=l4p, inner xsep = 0]  {} (c6p); 
		\path[draw=\idecol, -left to] (c6p) edge [above, font=\footnotesize] node[name=l5p, inner xsep = 0] {$\interfStyle{ l_3}$} (c9p);

		\node[draw=none,rounded corners, above right = 0.5 of h2](h1p){\muarchStyle{\ldots}};
		\node[draw=\ulccol,circle,font=\footnotesize, right = .5 of h1p](h2p){};
		\node[draw=\ulccol,circle,font=\footnotesize, right = 1.5 of h2p](h3p){};
		\node[draw=none,rounded corners, right = .5 of h3p](h9p){\muarchStyle{\ldots}};
		
		\draw[draw=\ulccol, double, -{implies}] (h1p) -- (h2p);   
		\draw[draw=\ulccol, double, -{implies}] (h2p) -- (h3p);   
		\draw[draw=\ulccol, double, -{implies}] (h3p) -- (h9p);

		\path[draw=gray, font=\footnotesize, dotted] (h2p) edge [sloped] node [pos=0.5, anchor=center, fill=white, inner sep = 0]  {$\textcolor{gray}{\equiv}$} (c2p);   
		\path[draw=gray, font=\footnotesize, dotted] (h2p) edge [ sloped] node [pos=0.5, anchor=center, fill=white, inner sep = 0]   {$\textcolor{gray}{\equiv}$} (c3p);   
		\path[draw=gray, font=\footnotesize, dotted] (h2p) edge [ sloped] node [pos=0.5, anchor=center, fill=white, inner sep = 0]   {$\textcolor{gray}{\equiv}$} (c6p);
		
		\path[draw=black, font=\footnotesize, dotted] (h2) edge [sloped] node [pos=0.5, anchor=center, fill=white, inner sep = 0]  {$\textcolor{gray}{\equiv}$} (c2);   
		\path[draw=black, font=\footnotesize, dotted, inner sep = 0] (h2) edge [ sloped] node [pos=0.5, anchor=center, fill=white]   {$\textcolor{gray}{\equiv}$} (c3);   
		\path[draw=black, font=\footnotesize, dotted, inner sep = 0] (h2) edge [ sloped] node [pos=0.5, anchor=center, fill=white]   {$\textcolor{gray}{\equiv}$} (c6);

		\path[draw=gray, font=\footnotesize, dotted] (h2) edge [sloped] node [pos=0.5, anchor=center, fill=white, inner sep = 0, name=indist1]  {$\textcolor{gray}{\approx}$} (h2p);
		
		\path[draw=gray, font=\footnotesize, dotted] (h3) edge [sloped] node [pos=0.5, anchor=center, fill=white, inner sep = 0, name=indist2]  {$\textcolor{gray}{\approx}$} (h3p);

	}
	
\end{tikzpicture}
\caption{Indistinguishability lemma}\label{figure:indistinguishability-lemma}
\end{figure}

\onlyTechReport{
\newpage
\onecolumn
\section{Architectural semantics}\label{appendix:arch-sem}

The architectural semantics for \srclang{} programs is presented in Figure~\ref{figure:arch-sem}.

\begin{figure*}[hbtp!]
    \begin{tabular}{l c l c l c l c}
        \multicolumn{8}{l}{\bf Expression evaluation}\\
            $\exprEval{n}{a} = n$ & & $\exprEval{x}{a} = a(x)$ &
             $\exprEval{\unaryOp{e}}{a} = \unaryOp{\exprEval{e}{a}}$ & & $\exprEval{\binaryOp{e_1}{e_2}}{a} = \binaryOp{\exprEval{e_1}{a}}{\exprEval{e_2}{a}}$\\\\
        \multicolumn{8}{l}{\bf Instruction evaluation}\\
        \multicolumn{8}{l}{
        {
        \begin{mathpar}
        \inferrule[Skip]
        {
        \select{p}{a(\pc)} = \pskip
        }
        {
        \tup{m, a} \archStepCompact \tup{m, a[\pc \mapsto a(\pc)+1]}
        }
        
        \inferrule[Barrier]
        {
            p(a(\pc)) = \pbarrier
        }
        {
            \tup{m,a} \archStepCompact \tup{m,a[\pc \mapsto a(\pc) + 1]}
        }
        
        \inferrule[Assign]
        {
        \select{p}{a(\pc)} = \passign{x}{e}\\
        x \neq \pc
        }
        {
        \tup{m, a} \archStepCompact \tup{m, a[\pc \mapsto a(\pc)+1,x \mapsto \exprEval{e}{a}]}
        }
        
        \inferrule[ConditionalUpdate-Sat]
        {
            p(a(\pc)) = \pcondassign{x}{e}{e'}\\
            \exprEval{e'}{a} = 0\\
            x \neq \pc
        }
        {
            \tup{m,a} \archStepCompact \tup{m,a[\pc \mapsto a(\pc) + 1, x \mapsto \exprEval{e}{a}]}
        }
        
        \inferrule[ConditionalUpdate-Unsat]
        {
            p(a(\pc)) = \pcondassign{x}{e}{e'}\\
            \exprEval{e'}{a} \neq 0\\
            x \neq \pc
        }
        {
            \tup{m,a} \archStepCompact \tup{m,a[\pc \mapsto a(\pc) + 1]}
        }
        
        \inferrule[Terminate]
        {
            \select{p}{a(\pc)} = \bot
        }
        {
        \tup{m, a} \archStepCompact \tup{m, a[\pc \mapsto \bot]}
        }
        
        \inferrule[Load]
        {
        \select{p}{a(\pc)} = \pload{x}{e} \\
        x \neq \pc \\
        n = \exprEval{e}{a}
        }
        {
        \tup{m, a} \archStepCompact \tup{m, a[\pc \mapsto a(\pc)+1, x \mapsto m(n)]}
        }
        
        \inferrule[Store]
        {
        \select{p}{a(\pc)} = \pstore{x}{e} \\
        n = \exprEval{e}{a}
        }
        {
        \tup{m, a} \archStepCompact \tup{ m[n \mapsto a(x)], a[\pc \mapsto a(\pc)+1]}
        }
        
        \inferrule[Beqz-Sat]
        {
        \select{p}{a(\pc)} = \pjz{x}{\lbl} \\
        a(x) = 0
        }
        {
        \tup{m, a} \archStepCompact \tup{ m, a[\pc \mapsto \lbl]}
        }
        
        \inferrule[Beqz-Unsat]
        {
        \select{p}{a(\pc)} = \pjz{x}{\lbl} \\
        a(x) \neq 0
        }
        {
        \tup{ m, a} \archStepCompact \tup{ m, a[\pc \mapsto a(\pc) +1]}
        }
        
        \inferrule[Jmp]
        {
        \select{p}{a(\pc)} = \pjmp{e} \\
        \lbl = \exprEval{e}{a}
        }
        {
        \tup{m, a} \archStepCompact \tup{ m, a[\pc \mapsto \lbl]}
        }
        \end{mathpar}
        }
        }
        \end{tabular}
\caption{Architectural semantics for a \srclang{} program $\Prg$}\label{figure:arch-sem}
\end{figure*}
\newpage
\section{Sequential scheduler}\label{appendix:seq-scheduler}

Here, we formalize the sequential scheduler from \S\ref{sec:countermeasures:sequential}.
The sequential scheduler $Seq$ is defined as the 4-tuple $\tup{\SchedStates, \SchedState_0, \SchedNext, \SchedUpdate}$ where the components are as follows:
    \begin{align*}
        &\qquad \SchedStates := \{ \BufProject{\buf} \mid \buf \in \ReorderBuffers \}\\
        &\qquad \SchedState_0 := \emptysequence \\
        &\qquad \SchedNext :  \SchedStates \to \Directives := \\
        &\qquad\quad  \SchedNext(\emptysequence) = \fetch{} \\
        &\qquad\quad  \SchedNext(c \concat \buf) = 
        {
            \begin{cases}
                \execute{1} & \text{if}\ \toExecute(c) \\
                \retire		& \text{otherwise}
            \end{cases}
        }\\ 
        &\qquad\quad  \toExecute(\tagged{\pskip{}}{T}) = \bot\\
        &\qquad\quad  \toExecute(\tagged{\pbarrier{}}{T}) = \bot\\
        &\qquad\quad   \toExecute(\tagged{\ptimer{x}}{T}) = \top\\
        &\qquad\quad  \toExecute(\tagged{\passign{x}{e}}{T}) = 
        {
            \begin{cases}
                \top &\text{if}\ e = \unresolved \vee T \neq \notags\\
                \bot & \text{otherwise}
            \end{cases}
        }\\
        &\qquad\quad  \toExecute(\tagged{\pload{x}{e}}{T}) = \top\\
        &\qquad\quad  \toExecute(\tagged{\pstore{x}{e}}{T}) = 
        {
            \begin{cases}
                \top &\text{if}\ e = \unresolved \vee x = \unresolved\\
                \bot & \text{otherwise}
            \end{cases}
        }\\
        &\qquad \SchedUpdate: \SchedStates \times \ReorderBuffers \to \SchedStates := \\
        & \qquad \quad \SchedUpdate(\SchedState, \buf) = \buf
        \end{align*}

\newpage
\section{Labeling function for $\muarchStyle{tt}$}\label{appendix:stt-labeling}

The labeling function $\labelNda{\buf'}{\buf}{d}$ for the STT countermeasure is as follows:
    \begin{align*}
            \labels{\buf} &= \deriveLabels{\buf}{\lambda x \in \Var.\ \emptyset}\\
            \labels{\buf}(x) &= \deriveLabels{\buf}{\lambda x \in \Var.\ \emptyset}(x)\\
            \labels{\buf}(e) &= \bigcup_{x \in \mathit{vars}(e)} \deriveLabels{\buf}{\lambda x \in \Var.\ \emptyset}(x)\\
            \\
            \deriveLabels{\emptysequence}{\Lambda} &= \Lambda \\ 
            \deriveLabels{ \labelled{\tagged{i}{T}}{L} \concat \buf }{\Lambda} &=  
            {
                \begin{cases}
                    \deriveLabels{  \buf }{\Lambda[x \mapsto L ]} & \text{if } i = \pload{x}{e} \vee i = \passign{x}{e}\\
                    \deriveLabels{  \buf }{\Lambda} & \text{otherwise}
                \end{cases}
            }\\
            \\
        \decrement{\emptysequence} &:= \emptysequence \\
        \decrement{\labelled{\tagged{i}{T}}{L} \concat \buf } &:=  \labelled{\tagged{i}{T}}{ \{ j - 1 \mid j \in L \wedge j-1 > 0\} } \concat \decrement{\buf} \\
        \\
        \mathit{strip}(\emptysequence, j) &:= \emptysequence\\
        \mathit{strip}(\labelled{\tagged{i}{T}}{L} \concat \buf , j) &:= \labelled{\tagged{i}{T}}{L \setminus \{ j \}} \concat  \mathit{strip}( \buf , j)\\
        \\
        \labelNda{\buf_{ul}}{\buf}{\fetch{}} &= \buf \\
                \text{where }& |\buf_{ul}| = |\buf| \\
        \labelNda{\buf_{ul} \concat \tagged{\passign{\pc}{e}}{\notags}}{\buf}{\fetch{}} &= \buf \concat \labelled{\tagged{\passign{\pc}{e}}{\notags}}{\emptyset}\\
        \text{where }& |\buf_{ul}| = |\buf| + 1 \\
        \labelNda{\buf_{ul} \concat \tagged{\passign{\pc}{\ell}}{\ell_0}}{\buf}{\fetch{}} &= \buf \concat \labelled{\tagged{\passign{\pc}{\ell}}{\ell_0}}{\emptyset}\\
        \text{where }& |\buf_{ul}| = |\buf| + 1 \\
        \labelNda{\buf_{ul} \concat \tagged{i}{T} \concat \tagged{\pmarkedassign{\pc}{\ell}}{\notags} }{\buf}{\fetch{}} &= \buf \concat \labelled{\tagged{i}{T}}{\emptyset} \concat \labelled{\tagged{\pmarkedassign{\pc}{\ell}}{\notags}}{\emptyset}\\
            \text{where }& 
            i \neq \pload{x}{e} \wedge i \neq \passign{x}{e} \wedge |\buf_{ul}| = |\buf|+2\\
        \labelNda{\buf_{ul} \concat \tagged{ \passign{x}{e} }{T} \concat \tagged{\pmarkedassign{\pc}{\ell}}{\notags} }{\buf}{\fetch{}} &= \buf \concat \labelled{\tagged{ \passign{x}{e} }{T}}{\labels{\buf}(e)} \concat \labelled{\tagged{\pmarkedassign{\pc}{\ell}}{\notags}}{\emptyset}\\
        \text{where }& |\buf_{ul}| = |\buf| + 2 \\
        \labelNda{\buf_{ul} \concat \tagged{\pload{x}{e}}{T} \concat \tagged{\passign{\pc}{\ell}}{\notags} }{\buf}{\fetch{}} &= \buf \concat \labelled{\tagged{\pload{x}{e}}{T}}{ \{ j \in \Nat \mid \exists y, \ell,\ell_0.\  (\elt{\buf}{j} = \tagged{\passign{y}{\ell}}{\ell_0} \wedge \ell_0 \neq \notags) \} } \concat \labelled{\tagged{\passign{\pc}{\ell}}{\notags}}{\emptyset}\\
        \text{where }& |\buf_{ul}| = |\buf| + 2 
            \\
            \\
        \labelNda{\buf_{ul}}{\buf}{\execute{i}} &= \buf[0..i-1] \concat \labelled{\elt{\buf_{ul}}{i}}{ L } \concat \buf[i+1 .. |\buf|] \\
        \text{where }& |\buf_{ul}| = |\buf| \wedge \elt{\buf}{i} = \labelled{c}{ L } \wedge 
        \forall \ell, \ell_0.\ 	(c = \tagged{\passign{\pc}{\ell}}{\ell_0} \rightarrow \ell_0 = \emptysequence)
        \\
        \labelNda{\buf_{ul}}{\buf}{\execute{i}} &= \buf[0..i-1] \concat \labelled{\elt{\buf_{ul}}{i}}{ \emptyset } \concat 
        \mathit{strip}(\buf[i+1 .. |\buf|], i) \\
        \text{where }& |\buf_{ul}| = |\buf| \wedge \elt{\buf}{i} = \labelled{\tagged{\passign{\pc}{\ell}}{\ell_0}}{ L } \wedge \ell_0 \neq  \notags \wedge \elt{\buf_{ul}}{i} = \tagged{\passign{\pc}{\ell'}}{\notags}
        \\
        \labelNda{\buf_{ul}}{\buf}{\execute{i}} &= \buf[0..i-1] \concat \labelled{\elt{\buf_{ul}}{i}}{ L } \concat \buf[i+1 .. |\buf|] \\
        \text{where }& |\buf_{ul}| = |\buf| \wedge \elt{\buf}{i} = \labelled{\tagged{\passign{\pc}{\ell}}{\ell_0}}{ L } \wedge \ell_0 \neq \notags  \wedge \elt{\buf_{ul}}{i} = \tagged{\passign{\pc}{\ell}}{\ell_0} 
        \\
        \labelNda{\buf_{ul}}{\buf}{\execute{i}} &= \buf[0..i-1] \concat \labelled{\elt{\buf_{ul}}{i}}{ \emptyset} \\
        \text{where }& |\buf_{ul}| = i \wedge \elt{\buf}{i} = \labelled{\tagged{\passign{\pc}{\ell}}{\ell_0}}{L} \wedge \ell_0 \neq \notags \wedge \elt{\buf_{ul}}{i} = \tagged{\passign{\pc}{\ell'}}{\notags}
        \\
        \\
        \labelNda{ \buf_{ul} }{\labelled{\tagged{i}{T}}{L} \concat \buf}{\retire{}} &= \decrement{\buf}
        \end{align*}

\newpage
\section{Hardware semantics}\label{appendix:hardware-semantics}

Here we present the full  rules for the hardware semantics.

Differently from the main paper, we slightly modify the hardware semantics to syntactically mark some of the assignments (specifically, those increasing the program counter $\pc$ by 1)  in the reorder buffer as $\tagged{\pmarkedassign{x}{e}}{T}$.
These assignments behave exactly as a normal assignment $\tagged{\passign{x}{e}}{T}$. 
We syntactically mark them to distinguish them from changes to the program counter resulting from branch and jump instructions and to to simplify proofs.

\subsection{Data-independent projection for reorder buffers}

Given a reorder buffer $\buf$, we denote by $\BufProject{\buf}$ its data-independent projection, which is defined as follows:
\begin{align*}
	\BufProject{\emptysequence}	&:= \emptysequence\\
	\BufProject{\tagged{\pskip{}}{T}}  &:=  \tagged{\pskip{}}{T}\\
	\BufProject{\tagged{\pbarrier{}}{T}}  &:=  \tagged{\pbarrier{}}{T}\\
	\BufProject{\tagged{\passign{x}{e}}{T}}  &:= 
	{
		\begin{cases}
		\tagged{\passign{x}{\resolved}}{T} & \text{if}\ e \in \Val\\
		\tagged{\passign{x}{\unresolved}}{T} & \text{otherwise}
		\end{cases}
    }\\
    \BufProject{\tagged{\pmarkedassign{x}{e}}{T}}  &:= 
	{
		\begin{cases}
		\tagged{\pmarkedassign{x}{\resolved}}{T} & \text{if}\ e \in \Val\\
		\tagged{\pmarkedassign{x}{\unresolved}}{T} & \text{otherwise}
		\end{cases}
	}\\
	\BufProject{\tagged{\pload{x}{e}}{T}}  &:= 
	{
		\begin{cases}
		\tagged{\pload{x}{\resolved}}{T} & \text{if}\ e \in \Val\\
		\tagged{\pload{x}{\unresolved}}{T} & \text{otherwise}
		\end{cases}
	}\\
	\BufProject{\tagged{\pstore{x}{e}}{T}}  &:= 
	{
		\begin{cases}
		\tagged{\pstore{\resolved}{\resolved}}{T} & \text{if}\ x,e \in \Val\\
		\tagged{\pstore{\unresolved}{\resolved}}{T} & \text{if}\ x \not\in \Val \wedge e \in \Val\\
		\tagged{\pstore{\resolved}{\unresolved}}{T} & \text{if}\ x \in \Val \wedge e \not\in \Val\\
		\tagged{\pstore{\unresolved}{\unresolved}}{T} & \text{otherwise}
		\end{cases}
	}\\
	\BufProject{c \concat \buf} &:= \BufProject{c} \concat \BufProject{\buf}
\end{align*}

\subsection{Apply function $\apply{\cdot}{\cdot}$}

The $\apply{\cdot}{\cdot}$ function takes as input a reorder buffer $\buf$ and an assignment $a$ and returns a new assignment $a'$ obtained by applying the changes performed by the in-flight commands in $\buf$ to $a$.
Formally, $\apply{\cdot}{\cdot}$ is defined as follows:
\begin{align*}
	\apply{\emptysequence}{a} & = a\\
	\apply{ \tagged{\passign{x}{e}}{T} \concat \buf }{a} & = 
	{
	 \begin{cases}
 		\apply{\buf}{a[x \mapsto e]} & \text{if}\ e \in \Val\\
 		\apply{\buf}{a[x \mapsto \bot]} & \text{if}\ e \not\in \Val\\
	 \end{cases}
	}
\\
\apply{ \tagged{\pmarkedassign{x}{e}}{T} \concat \buf }{a} & = 
{
 \begin{cases}
     \apply{\buf}{a[x \mapsto e]} & \text{if}\ e \in \Val\\
     \apply{\buf}{a[x \mapsto \bot]} & \text{if}\ e \not\in \Val\\
 \end{cases}
}
\\
	\apply{ \tagged{\pload{x}{e}}{T} \concat \buf }{a} & = \apply{\buf}{a[x \mapsto \bot]}\\
	\apply{ \tagged{\pstore{x}{e}}{T} \concat \buf }{a} & = \apply{\buf}{a}\\
	\apply{ \tagged{\pskip}{T} \concat \buf }{a} & = \apply{\buf}{a}\\
	\apply{ \tagged{\pbarrier}{T} \concat \buf }{a} & = \lambda x \in \Var.\ \bot
\end{align*}
That is, assignments $\tagged{\passign{x}{e}}{T}$ and $\tagged{\pmarkedassign{x}{e}}{T}$ update $a$ by either providing a new value for the variable $x$ (if the expression $e$ has already been resolved, i.e., $e \in \Val$) or by setting it to $\bot$ (if the expression $e$ has not yet been resolved).
Load operations $\tagged{\pload{x}{e}}{T}$, instead, always update $a$ by setting the value of $x$ to $\bot$ since the loaded value has not been retrieved yet.
Store and skip operations do not modify the assignments $a$ (because they do not alter the values of registers).
Finally, whenever a speculation barrier is in the buffer, the result of $\apply{\cdot}{\cdot}$ maps all registers to $\bot$.

\subsection{Expression evaluation $\exprEval{e}{a}$}

For the hardware semantics, expression evaluation $\exprEval{e}{a}$ works similarly to the architectural semantics except that now it handles the $\bot$ undefined value.
Whenever, a sub-expression $e'$ evaluates to $\bot$, then $\exprEval{e}{a}$  evaluates to $\bot$ as well.
Formally, the expression evaluation is defined as follows:
\begin{align*}
    \exprEval{n}{a} &= n\\
    \exprEval{\bot}{a} &= \bot\\
    \exprEval{x}{a} &= a(x) \\
    \exprEval{\unaryOp{e}}{a} &= 
    {
        \begin{cases}
            \unaryOp{\exprEval{e}{a}} & \text{if } \exprEval{e}{a} \neq \bot \\
            \bot & \text{otherwise}
        \end{cases}
    }\\
    \exprEval{\binaryOp{e_1}{e_2}}{a} &= 
    {
        \begin{cases}
            \binaryOp{\exprEval{e_1}{a}}{\exprEval{e_2}{a}} & \text{if } \exprEval{e_1}{a} \neq \bot \wedge  \exprEval{e_2}{a} \neq \bot \\
            \bot & \text{otherwise}
        \end{cases}
    }
\end{align*}

\subsection{Rules for \textbf{fetch} stage}

The rules formalizing the $\fetch{}$ stage are given below:

\begin{mathpar}
\inferrule[Fetch-Branch-Hit]
{
a' = \apply{\buf}{a} \\
|\buf| < \wMuarch \\
a'(\pc) \neq \bot \\
p(a'(\pc)) = \pjz{x}{\lbl} \\
\lbl' = \BpPredict(\BpState, a'(\pc))\\
\CacheAccess(\CacheState, a'(\pc)) = \CacheHit\\
\CacheUpdate(\CacheState, a'(\pc)) = \CacheState{}'
}
{
    \tup{m,a,\buf,  \CacheState, \BpState} \muarchStep{\fetch{b}}{} \tup{m,a,\buf \concat \tagged{\passign{\pc}{\lbl'}}{a'(\pc)},  \CacheState',\BpState}	
}

\inferrule[Fetch-Jump-Hit]
{
a' = \apply{\buf}{a} \\
|\buf| < \wMuarch \\
a'(\pc) \neq \bot \\
p(a'(\pc)) = \pjmp{e} \\
\CacheAccess(\CacheState, a'(\pc)) = \CacheHit\\
\CacheUpdate(\CacheState, a'(\pc)) = \CacheState{}'
}
{
	\tup{m,a,\buf,\CacheState, \BpState} \muarchStep{\fetch{b}}{} \tup{m,a,\buf \concat  \tagged{\passign{\pc}{e}}{\notags}, \CacheState', \BpState}	
}

\inferrule[Fetch-Others-Hit]
{
|\buf| < \wMuarch-1 \\
a' = \apply{\buf}{a} \\
a'(\pc) \neq \bot \\
p(a'(\pc)) \neq \pjz{x}{\lbl} \\
p(a'(\pc)) \neq \pjmp{e} \\
v = \exprEval{\pc + 1}{a'}\\
\CacheAccess(\CacheState, a'(\pc)) = \CacheHit\\
\CacheUpdate(\CacheState, a'(\pc)) = \CacheState{}'
}
{
	\tup{m,a,\buf,\CacheState, \BpState} \muarchStep{\fetch{b}}{} \tup{m,a,\buf \concat \tagged{p(a'(\pc))}{\notags} \concat \tagged{\pmarkedassign{\pc}{v}}{\notags}, \CacheState', \BpState}	
}

\inferrule[Fetch-Miss]
{
|\buf| < \wMuarch \\
a' = \apply{\buf}{a} \\
a'(\pc) \neq \bot \\
\CacheAccess(\CacheState,  a'(\pc)) = \CacheMiss\\
\CacheUpdate(\CacheState,  a'(\pc)) =  \CacheState{}'
}
{
    \tup{m,a,\buf, \CacheState, \BpState} \muarchStep{\fetch{b}}{}  \tup{m,a,\buf, \CacheState', \BpState}
}
\end{mathpar}

\subsection{Rules for \textbf{execute} stage}

The rules formalizing the $\retire$ state are given below:

\begin{mathpar}
    \inferrule[Execute-Load-Hit]
    {
        |\buf| = i-1 \\
        a' = \apply{\buf}{a} \\
        \tagged{\pbarrier}{T'} \not\in \buf \\
        \tagged{\pstore{x'}{e'}}{T''} \not\in \buf \\
        x \neq \pc\\
        \exprEval{e}{a'} \neq \bot\\
        \CacheAccess(\CacheState, \exprEval{e}{a'}) = \CacheHit\\
         \CacheUpdate(\CacheState, \exprEval{e}{a'}) = \CacheState'
    }
    {
        \tup{m,a,\buf \concat  \tagged{\pload{x}{e}}{T} \concat \buf',  \CacheState, \BpState} \muarchStep{\execute{i}}{} \tup{m,a,\buf \concat \tagged{\passign{x}{ m(\exprEval{e}{a'})  }}{T} \concat \buf',  \CacheState', \BpState}	
    }
\end{mathpar}
\begin{mathpar}
    \inferrule[Execute-Load-Miss]
    {
        |\buf| = i-1 \\
        a' = \apply{\buf}{a} \\
        \tagged{\pbarrier}{T'} \not\in \buf \\
        \tagged{\pstore{x'}{e'}}{T''} \not\in \buf \\
        x \neq \pc\\
        \exprEval{e}{a'} \neq \bot\\
        \CacheAccess(\CacheState, \exprEval{e}{a'}) = \CacheMiss\\
         \CacheUpdate(\CacheState, \exprEval{e}{a'}) = \CacheState'
    }
    {
        \tup{m,a,\buf \concat  \tagged{\pload{x}{e}}{T} \concat \buf',  \CacheState, \BpState} \muarchStep{\execute{i}}{} \tup{m,a,\buf \concat\tagged{\pload{x}{e}}{T} \concat \buf',  \CacheState', \BpState}	
    }
\end{mathpar}
\begin{mathpar}
    \inferrule[Execute-Branch-Commit]
    {
        |\buf| = i-1 \\
        a' = \apply{\buf}{a} \\
        \tagged{\pbarrier}{T'} \not\in \buf \\
        \ell_0 \neq \notags \\
        p(\lbl_0) = \pjz{x}{\lbl''} \\
        (a'(x) = 0 \wedge \lbl = \lbl'') \vee (a'(x) \in \Val \setminus \{0,\bot\} \wedge \lbl = \ell_0+1)\\ 
        \BpState' = \BpUpdate(\BpState, \ell_0, \ell)  
    }
    {
        \tup{m,a, \buf \concat \tagged{\passign{\pc}{\lbl}}{\lbl_0} \concat \buf', \CacheState, \BpState} \muarchStep{\execute{i}}{} \tup{m,a,\buf \concat \tagged{\passign{\pc}{ \lbl  }}{\varepsilon} \concat \buf' , \CacheState, \BpState'}	
    }
\end{mathpar}
\begin{mathpar}   
    \inferrule[Execute-Branch-Rollback]
    {
        |\buf| = i-1 \\
        a' = \apply{\buf}{a} \\ 
        \tagged{\pbarrier}{T'} \not\in \buf \\
        \ell_0 \neq \notags \\
        p(\lbl_0) = \pjz{x}{\lbl''} \\
        (a'(x) = 0 \wedge \lbl \neq \lbl'') \vee (a'(x) \in \Val \setminus \{0,\bot\} \wedge \lbl \neq \ell_0+1)\\ 
        \lbl' \in \{ \lbl'',\lbl_0 +1 \} \setminus \{ \ell \} \\
        \BpState' = \BpUpdate(\BpState, \ell_0, \ell')  \\
    }
    {
        \tup{m,a, \buf \concat \tagged{\passign{\pc}{\lbl}}{\lbl_0} \concat \buf', \CacheState, \BpState} \muarchStep{\execute{i}}{} \tup{m,a,\buf \concat \tagged{\passign{\pc}{ \lbl'  }}{\varepsilon} , \CacheState, \BpState'}	
    }
\end{mathpar}
\begin{mathpar}
    \inferrule[Execute-Assignment]
    {
        |\buf| = i-1 \\
        a' = \apply{\buf}{a} \\
        \tagged{\pbarrier}{T'} \not\in \buf  \\
        \exprEval{e}{a'} = v \\
        v \neq \bot
    }
    {
        \tup{m,a,\buf \concat \tagged{\passign{x}{e}}{\notags} \concat \buf', \CacheState, \BpState} \muarchStep{\execute{i}}{} \tup{m,a,\buf \concat \tagged{\passign{x}{v}}{\notags} \concat \buf', \CacheState, \BpState}
    }
\end{mathpar}
\begin{mathpar}
    \inferrule[Execute-Marked-Assignment]
    {
        |\buf| = i-1 \\
        a' = \apply{\buf}{a} \\
        \tagged{\pbarrier}{T'} \not\in \buf  \\
        \exprEval{e}{a'} = v \\
        v \neq \bot
    }
    {
        \tup{m,a,\buf \concat \tagged{\pmarkedassign{x}{e}}{\notags} \concat \buf', \CacheState, \BpState} \muarchStep{\execute{i}}{} \tup{m,a,\buf \concat \tagged{\pmarkedassign{x}{v}}{\notags} \concat \buf', \CacheState, \BpState}
    }
\end{mathpar}
\begin{mathpar}
    \inferrule[Execute-Store]
    {
        |\buf| = i-1 \\
        a' = \apply{\buf}{a} \\
        \tagged{\pbarrier}{T'} \not\in \buf  \\
        \exprEval{e}{a'} = n \\
        n \neq \bot\\
        a'(x) = v \\
        v \neq \bot 
    }
    {
        \tup{m,a,\buf \concat \tagged{\pstore{x}{e}}{T} \concat \buf', \CacheState, \BpState} \muarchStep{\execute{i}}{} \tup{m,a,\buf \concat \tagged{\pstore{v}{n}}{T} \concat \buf', \CacheState, \BpState}
    }
\end{mathpar}
\begin{mathpar}
    \inferrule[Execute-Skip]
    {
        |\buf| = i-1 
    }
    {
        \tup{m,a,\buf \concat \tagged{\pskip}{T} \concat \buf', \CacheState, \BpState} \muarchStep{\execute{i}}{} \tup{m,a,\buf \concat \tagged{\pskip}{T} \concat \buf', \CacheState, \BpState}	
    }
\end{mathpar}
\begin{mathpar}
    \inferrule[Execute-Barrier]
    {
        |\buf| = i-1 
    }
    {
        \tup{m,a,\buf \concat \tagged{\pbarrier}{T} \concat \buf', \CacheState, \BpState} \muarchStep{\execute{i}}{} \tup{m,a,\buf \concat \tagged{\pbarrier}{T} \concat \buf', \CacheState, \BpState}	
    }
    \end{mathpar}

\subsection{Rules for \textbf{retire} stage}

The rules formalizing the $\retire$ state are given below:
\begin{mathpar}
    \inferrule[Retire-Skip]
    {
        \buf = \tagged{\pskip{}}{\notags} \concat \buf'
    }
    {
        \tup{m,a,\buf,\CacheState,\BpState} \muarchStep{\retire}{} \tup{m, a, \buf', \CacheState,\BpState}	
    }

    \inferrule[Retire-Fence]
    {
        \buf = \tagged{\pbarrier{}}{\notags} \concat \buf'
    }
    {
        \tup{m,a,\buf,\CacheState,\BpState} \muarchStep{\retire}{} \tup{m, a, \buf', \CacheState,\BpState}	
    }

    \inferrule[Retire-Assignment]
    {
        \buf = \tagged{\passign{x}{v}}{\notags} \concat \buf'\\
        v \in \Val
    }
    {
        \tup{m,a,\buf,\CacheState,\BpState} \muarchStep{\retire}{} \tup{m, a[x \mapsto v], \buf', \CacheState,\BpState}	
    }

    \inferrule[Retire-Marked-Assignment]
    {
        \buf = \tagged{\pmarkedassign{x}{v}}{\notags} \concat \buf'\\
        v \in \Val
    }
    {
        \tup{m,a,\buf,\CacheState,\BpState} \muarchStep{\retire}{} \tup{m, a[x \mapsto v], \buf', \CacheState,\BpState}	
    }

    \inferrule[Retire-Store]
    {
        \buf = \tagged{\pstore{v}{n}}{\notags} \concat \buf'\\
    v,n \in \Val\\
   \CacheUpdate( \CacheState,n) = \CacheState' 
    }
    {
        \tup{m,a,\buf,\CacheState,\BpState} \muarchStep{\retire}{  }
         \tup{m[n \mapsto v], a, \buf' ,  \CacheState', \BpState}	
    }
\end{mathpar}
\newpage
\section{Well-formed conditions}\label{appendix:well-formedness-conditions}

In our results, we consider only \textit{well-formed} \srclang{} programs, that is, programs where (a) all assignments $\passign{x}{e}$ and loads $\pload{x}{e}$ statements are such that $x \neq \pc$, and (b) whenever $p(\ell) = \pjz{x}{\ell'}$ then $\ell' \neq \ell +1$.

Additionally, we consider only \textit{well-formed} branch predictors that for all $\BpState$ and $\lbl$, if $p(\lbl) = \pjz{x}{\ell'}$, then $\BpPredict(\BpState, \lbl) \in \{\ell', \lbl+1\}$, i.e., well-formed predictors predict as next instruction only one of the two branch outcomes.

\newpage
\section{Contracts  for secure speculation}\label{appendix:contracts}

In this section, we present the full formalization of our contracts for secure speculation.

\subsection{Contract $\CtSeqInterf{\cdot}$}
The contract is induced by the following inference rules defining the $\CtSeqInterfStep{}{}$ relation:
\begin{mathpar}
	\inferrule[Load]
	{
	\select{p}{a(\pc)} = \pload{x}{e} \\
	\tup{m, a} \archStep{}{} \tup{m',a'}
	}
	{
	\tup{m, a} \CtSeqInterfStep{\loadObs{\exprEval{e}{a}}}{} \tup{m', a'}
	}

	\inferrule[Store]
	{
	\select{p}{a(\pc)} = \pstore{x}{e} \\
	n = \exprEval{e}{a}\\
	\tup{m, a} \archStep{}{} \tup{m',a'}
	}
	{
	\tup{m, a} \CtSeqInterfStep{\storeObs{n}}{} \tup{ m', a'}
	}

	\inferrule[Beqz]
	{
	\select{p}{a(\pc)} = \pjz{x}{\lbl} \\
	\tup{m, a} \archStep{}{} \tup{m',a'}
	}
	{
	\tup{m, a} \CtSeqInterfStep{\pcObs{a'(\pc)}}{} \tup{ m', a'}
	}

	\inferrule[Jmp]
	{
	\select{p}{a(\pc)} = \pjmp{e} \\
	\tup{m, a} \archStep{}{} \tup{m',a'}
	}
	{
	\tup{m, a} \CtSeqInterfStep{\pcObs{a'(\pc)}}{} \tup{ m', a'}
	}

	\inferrule[Others]
	{
	\select{p}{a(\pc)} \neq \pjmp{e} \\
	\select{p}{a(\pc)} \neq \pjz{x}{\lbl} \\
	\select{p}{a(\pc)} \neq \pstore{x}{e} \\
	\select{p}{a(\pc)} \neq \pload{x}{e} \\
	\tup{m, a} \archStep{}{} \tup{m',a'}
	}
	{
	\tup{m, a} \CtSeqInterfStep{}{} \tup{ m', a'}
	}
\end{mathpar}

Then,  $\CtSeqInterf{p}(\sigma)$, where $\sigma$ is an initial \archstate{}, is the trace of contract observations $\tau$ such that $\sigma \CtSeqInterfStep{\tau}{}^* \sigma'$ and $\sigma'$ is a final \archstate{}.

\subsection{Contract $\ArchSeqInterf{\cdot}$}
The contract $\ArchSeqInterfStep{}{}$ is obtained by modifying the  \textsc{Load} rule from  $\CtSeqInterfStep{}{}$ as follows (the remaining rules are identical):
\begin{mathpar}
	\inferrule[Load]
	{
	\select{p}{a(\pc)} = \pload{x}{e} \\
	\tup{m, a} \archStep{}{} \tup{m',a'}
	}
	{
	\tup{m, a} \ArchSeqInterfStep{\loadObs{\exprEval{e}{a}= m(\exprEval{e}{a})}}{} \tup{m', a'}
	}
\end{mathpar}

Again,  $\ArchSeqInterf{p}(\sigma)$, where $\sigma$ is an initial \archstate{}, is the trace of contract observations $\tau$ such that $\sigma \ArchSeqInterfStep{\tau}{}^* \sigma'$ and $\sigma'$ is a final \archstate{}.

\subsection{Contract $\CtSpecInterf{\cdot}$}
The contract is induced by the relation $\CtSpecInterfStep{}{}$ defined by  the following inference rules:
\begin{mathpar}
	\inferrule[Step]
	{
	p(\sigma(\pc))\neq \pjz{x}{\lbl}\\
	 \conf \CtSeqInterfStep{\tau}{} \conf'\\
	}
	{
		\tup{\sigma, \omega +1} \cdot s \CtSpecInterfStep{\tau}{} \tup{\sigma', \omega}\cdot s
	}

	\inferrule[Rollback]
	{
	s=\tup{\sigma',\omega'}\cdot s'\\
	}
	{
		\tup{\sigma,0 }\cdot s \CtSpecInterfStep{\pcObs{\sigma'(\pc)}}{} s
	}

	\inferrule[Barrier]
	{
	p(\sigma(\pc))= \pbarrier \\
	 \conf \CtSeqInterfStep{\tau}{} \conf'\\
	}
	{
		\tup{\sigma, \omega +1} \cdot s \CtSpecInterfStep{\tau}{} \tup{\sigma', 0}\cdot s
	}
	
	\inferrule[Branch]
	{
	p(\sigma(\pc))=\pjz{x}{\lbl}\quad
	\ell_{\mathit{correct}} =
	{
		\begin{cases}
			\lbl & \text{if}\ \sigma(x) = 0\\
			\sigma(\pc) + 1 & \text{otherwise}
		\end{cases}
	}\quad
	\ell_{\mathit{mispred}} \in \{\lbl, \sigma(\pc) + 1\} \setminus \ell_{\mathit{correct}}\quad
	\omega_\mathit{mispred}={	\begin{cases}
			\wInterf & \text{if } \omega=\infty\\
			\omega & \text{otherwise}
		\end{cases}
	}
	}
	{
		\tup{\sigma,\omega+1 }\cdot s \CtSpecInterfStep{\pcObs{\lbl_\mathit{mispred}}}{} \tup{\sigma[\pc \mapsto \lbl_\mathit{mispred}],\omega_\mathit{mispred}} \cdot\tup{\sigma[\pc \mapsto \lbl_\mathit{correct}],\omega} \cdot s
	}
\end{mathpar}

Configurations are stacks of $\tup{\sigma, \omega}$, where $\omega \in \Nat\cup \{\infty\}$ is the speculative window denoting how many instructions are left to be executed. (initial \archstate{}s $\sigma$ are treated as $\tup{\sigma,\infty}$). %
	At each computation step, the $\omega$ at the top of the stack is reduced by $1$ (rules \textsc{Step} and \textsc{Branch}).
	For the sake of notation's simplicity, we assume the $\infty + 1 = \infty$.
	Therefore, in rules we write $\omega+1$ to denote either $\infty$ or any non-zero $\omega \in \Nat$.
	When executing a branch instruction (rule \textsc{Branch}), the state $\tup{\sigma[\pc \mapsto \lbl_\mathit{mispred}],\omega_\mathit{mispred}}$ is pushed on top of the stack, thereby allowing the exploration of the mispredicted branch for $\omega_\mathit{mispred}$ steps.
	The correct branch $\tup{\sigma[\pc \mapsto \lbl_\mathit{correct}],\omega}$ is also recorded on the stack; allowing to later roll back speculatively executed statements.
	When the $\omega$ at the top of the stack reaches $0$, we pop it (i.e., we backtrack and discard the changes) and we continue the computation.
	Speculation barriers trigger a roll back by setting $\omega$ to $0$ (rule \textsc{Barrier}).

Then,  $\CtSpecInterf{p}(\sigma)$, where $\sigma$ is an initial \archstate{}, is the trace of contract observations $\tau$ such that $\tup{\sigma, \infty} \CtSpecInterfStep{\tau}{}^* \tup{\sigma', \infty}$ and $\sigma'$ is a final \archstate{}.

In the following, given a contract configuration $s \in (\Conf\times \Nat \cup \{\infty\})^*$, we write $s(x)$ for some $x \in \Var$ to denote $\sigma(x)$, where $s = \tup{\sigma, \omega} \concat s'$.

\subsection{Contract $\CtPcSpecInterf{\cdot}$}
The contract $\CtPcSpecInterfStep{}{}$ is obtained from the contract $\CtSpecInterfStep{}{}$ by modifying the rule \textsc{Step} as follows (changes are highlighted in blue):
\begin{mathpar}
	\inferrule[Step]
	{
	p(\sigma(\pc))\neq \pjz{x}{\lbl}\\
	\conf \CtSeqInterfStep{\tau}{} \conf'\\
	\highlightBox{\tau' =
	{
		\begin{cases}
		\tau & \text{if}\ \omega = \infty \vee (\tau	 \neq \loadObs{n} \wedge \tau \neq \storeObs{n}) \\
		\emptysequence & \text{otherwise}
		\end{cases}
	}
	}
	}
	{
		\tup{\sigma, \omega+1} \cdot s \CtSpecInterfStep{ \highlightBox{ \tau'}}{} \tup{\sigma', \omega}\cdot s
	}
	\end{mathpar}
Then,  $\CtPcSpecInterf{p}(\sigma)$, where $\sigma$ is an initial \archstate{}, is the trace of contract observations $\tau$ such that $\tup{\sigma, \infty} \CtPcSpecInterfStep{\tau}{}^* \tup{\sigma', \infty}$ and $\sigma'$ is a final \archstate{}.

\subsection{Contract $\ArchSpecInterf{\cdot}$}
The contract $\ArchSpecInterfStep{}{}$ is obtained from the contract $\CtSpecInterfStep{}{}$ by modifying the rule \textsc{Step} as follows (changes are highlighted in blue):
\begin{mathpar}
	\inferrule[Step]
	{
	p(\sigma(\pc))\neq \pjz{x}{\lbl}\\
    \highlightBox{\conf \ArchSeqInterfStep{\tau}{} \conf'}
	}
	{
		\tup{\sigma, \omega+1} \cdot s \CtSpecInterfStep{\tau}{} \tup{\sigma', \omega}\cdot s
	}
\end{mathpar}
Then,  $\ArchSpecInterf{p}(\sigma)$, where $\sigma$ is an initial \archstate{}, is the trace of contract observations $\tau$ such that $\tup{\sigma, \infty} \ArchSpecInterfStep{\tau}{}^* \tup{\sigma', \infty}$ and $\sigma'$ is a final \archstate{}.

\subsection{A lattice of contracts -- proofs}\label{appendix:contracts:lattice}

We start by proving Proposition~\ref{proposition:contract-hni-ordering}.

\contractHniOrdering*

\begin{proof}
Consider a hardware semantics 	$\CtxMuarchSem{\cdot}{}$ and two contracts $\interfSem{\cdot}_{\interfStyle{1}}$, $\interfSem{\cdot}_{\interfStyle{2}}$ such that $\hsni{\interfSem{\cdot}_{\interfStyle{1}}}{\CtxMuarchSem{\cdot}{}}$ and $\interfSem{\cdot}_{\interfStyle{1}} \sqsupseteq \interfSem{\cdot}_{\interfStyle{2}}$ hold.
Let $p$ be a program and $\sigma, \sigma'$ be two arbitrary initial architectural states such that $\interfSem{p}_{\interfStyle{2}}(\sigma) = \interfSem{p}_{\interfStyle{2}}(\sigma')$.
Therefore, we get $\interfSem{p}_{\interfStyle{1}}(\sigma) = \interfSem{p}_{\interfStyle{1}}(\sigma')$ from $\interfSem{\cdot}_{\interfStyle{1}} \sqsupseteq \interfSem{\cdot}_{\interfStyle{2}}$.
Finally, $\CtxMuarchSem{p}{}(\sigma) = \CtxMuarchSem{p}{}(\sigma')$ follows from $\hsni{\interfSem{\cdot}_{\interfStyle{1}}}{\CtxMuarchSem{\cdot}{}}$.
Since $p$, $\sigma, \sigma'$ are arbitrary such that $\interfSem{p}_{\interfStyle{2}}(\sigma) = \interfSem{p}_{\interfStyle{2}}(\sigma')$, we have that $\interfSem{p}_{\interfStyle{2}}(\sigma) = \interfSem{p}_{\interfStyle{2}}(\sigma') \Rightarrow\CtxMuarchSem{p}{}(\sigma) = \CtxMuarchSem{p}{}(\sigma')$, i.e., $\hsni{\interfSem{\cdot}_{\interfStyle{2}}}{\CtxMuarchSem{\cdot}{}}$ holds.
\end{proof}

We now prove the results leading to the lattice in Figure~\ref{figure:lattice-contracts}.

\begin{restatable}{prop}{latticeOrd1}
$\CtSeqInterf{\cdot} \sqsupseteq \ArchSeqInterf{\cdot}$ and 
$\CtSeqInterf{\cdot} \sqsupseteq \CtPcSpecInterf{\cdot}$. 
\end{restatable}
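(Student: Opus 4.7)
The plan is to prove each inclusion by showing that the stronger contract's trace is a deterministic function of the weaker contract's trace and the program, so equal weaker traces force equal stronger traces.

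For $\CtSeqInterf{\cdot} \sqsupseteq \ArchSeqInterf{\cdot}$, observe that $\ArchSeqInterfStep{}{}$ and $\CtSeqInterfStep{}{}$ share the same underlying architectural transition $\archStepCompact$ and differ only in the \textsc{Load} rule: $\ArchSeqInterf{\cdot}$ emits $\loadObs{n = v}$ where $\CtSeqInterf{\cdot}$ emits $\loadObs{n}$. Define a label projection $\pi$ that rewrites every $\loadObs{n = v}$ to $\loadObs{n}$ and leaves all other labels untouched. A routine induction on the length of the $\ArchSeqInterfStep{}{}$ run then yields $\CtSeqInterf{p}(\sigma) = \pi(\ArchSeqInterf{p}(\sigma))$ for every $p$ and $\sigma$, so $\ArchSeqInterf{p}(\sigma) = \ArchSeqInterf{p}(\sigma')$ immediately entails $\CtSeqInterf{p}(\sigma) = \CtSeqInterf{p}(\sigma')$, as required by Definition~\ref{def:contract-strenght}.

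For $\CtSeqInterf{\cdot} \sqsupseteq \CtPcSpecInterf{\cdot}$, the argument is more involved because $\CtPcSpecInterf{\cdot}$ interleaves speculative observations (emitted while the top of the stack has finite $\omega$) with non-speculative ones. The plan is a coupled induction on the $\CtPcSpecInterfStep{}{}$ runs from $\sigma$ and $\sigma'$, with equality of $\CtPcSpecInterf{\cdot}$ traces as invariant. The key observation is that when the stack has height $1$, non-branch transitions coincide with $\CtSeqInterfStep{}{}$ transitions and emit identical labels, while a \textsc{Branch} fired non-speculatively emits $\pcObs{\lbl_\mathit{mispred}}$, pushes a speculative window, and --- once that window is exhausted, a \textsc{Barrier} forces $\omega = 0$, or the mispredicted path terminates --- eventually triggers a \textsc{Rollback} emitting $\pcObs{\lbl_\mathit{correct}}$, which is exactly the label produced by $\CtSeqInterf{\cdot}$'s \textsc{Beqz} rule for the same branch. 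Consequently, the subsequence of labels emitted at height-$1$ non-branch transitions together with those emitted at \textsc{Rollback}s returning the stack to height $1$ is precisely $\CtSeqInterf{p}(\sigma)$.

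The main obstacle is showing that this stack-height alignment depends only on $p$ and the shared trace, not on the initial states. This follows from determinism of $\CtPcSpecInterfStep{}{}$ and the fact that its stack dynamics are driven by the program's branches and barriers together with the fixed window $\wInterf$: with equal traces as induction hypothesis, both runs must apply the same sequence of \textsc{Branch}, \textsc{Step}, \textsc{Barrier}, and \textsc{Rollback} rules at the same positions. This forces the fully non-speculative states of the two runs to coincide at every height-$1$ configuration, so they produce identical $\CtSeqInterf{\cdot}$ traces and $\CtSeqInterf{p}(\sigma) = \CtSeqInterf{p}(\sigma')$, completing the second inclusion.
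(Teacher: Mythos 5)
Your proof is correct and takes essentially the same route as the paper's own (very terse) argument: for the first inclusion, each $\interfStyle{arch}$ load observation determines the corresponding $\interfStyle{ct}$ one, and for the second, $\CtSeqInterf{p}(\sigma)$ is recovered from the $\CtPcSpecInterf{\cdot}$ run as the observations made while the stack has height one, with the \textsc{Rollback} labels supplying the correct branch outcomes in place of the mispredicted ones. Your write-up merely fleshes out the details the paper leaves implicit, notably the lockstep/determinism argument showing that this extraction depends only on the program and the shared trace.
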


\begin{proof}
\begin{description}
	\item[$\CtSeqInterf{\cdot} \sqsupseteq \ArchSeqInterf{\cdot}$:]
		Consider two \archstate{}s $\sigma,\sigma'$ and a program $p$ such that $\ArchSeqInterf{p}(\sigma) = \ArchSeqInterf{p}(\sigma')$.
		Then, $\CtSeqInterf{p}(\sigma) = \CtSeqInterf{p}(\sigma')$ immediately follows from (1) $\CtSeqInterfStep{}{}$ and $\ArchSeqInterfStep{}{}$ explore the same ``execution'', (2) all non-load observations are produced, identically, under the two contracts, and (3) load observations in $\ArchSeqInterfStep{}{}$ fully determine the corresponding load observation in $\CtSeqInterfStep{}{}$.
		
	\item[$\CtSeqInterf{\cdot} \sqsupseteq \CtPcSpecInterf{\cdot}$:]
		Consider two \archstate{}s $\sigma,\sigma'$ and a program $p$ such that $\CtPcSpecInterf{p}(\sigma) = \CtPcSpecInterf{p}(\sigma')$.
		Then, $\CtSeqInterf{p}(\sigma) = \CtSeqInterf{p}(\sigma')$ immediately follows from (1) $\CtPcSpecInterf{\cdot}{}$ explores also the speculatively executed instructions ignored by  $\CtSeqInterf{\cdot}{}$, (2) the observations produced non-speculatively (i.e., whenever there is only one state on the stack) in $\CtSeqInterf{\cdot}{}$ corresponds to the trace for $\CtSeqInterf{\cdot}{}$. 
\end{description}
\end{proof}

\begin{restatable}{prop}{latticeOrd2}
	$\ArchSeqInterf{\cdot} \sqsupseteq \ArchSpecInterf{\cdot}$. 
\end{restatable}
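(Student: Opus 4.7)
The plan is to prove that if $\ArchSpecInterf{p}(\sigma) = \ArchSpecInterf{p}(\sigma')$ then $\ArchSeqInterf{p}(\sigma) = \ArchSeqInterf{p}(\sigma')$, which by Definition~\ref{def:contract-strenght} establishes the claim. The key intuition is that the $\ArchSpecInterf{}$ semantics is an extension of the $\ArchSeqInterf{}$ semantics: it executes every non-speculative step with the same labels, and in addition pushes mispredicted branches on the stack and later rolls them back. The observations that the $\ArchSeqInterf{}$ semantics would have emitted are preserved in the $\ArchSpecInterf{}$ trace, while the extra labels come from speculative transitions and can be syntactically identified.

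Concretely, I would define a projection $\pi$ on $\ArchSpecInterf{}$ traces that retains only the observations emitted by transitions whose source configuration is at the non-speculative top level, i.e., a stack of depth one whose top frame has window $\infty$; equivalently, one can think of $\pi$ as retaining the label of each \textsc{Step} at the top level and, for each \textsc{Branch} applied at the top level, discarding the $\pcObs{\ell_\mathit{mispred}}$ it emits in favor of the $\pcObs{\ell_\mathit{correct}}$ emitted by the matching \textsc{Rollback} that returns execution to the top level. This projection is purely structural and does not depend on the \archstate{}, so it is well-defined as a function on $\ArchSpecInterf{}$ traces.

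I would then prove by induction on the number of architectural steps that $\pi(\ArchSpecInterf{p}(\sigma)) = \ArchSeqInterf{p}(\sigma)$ for every program $p$ and every initial \archstate{} $\sigma$. The induction decomposes the $\ArchSpecInterf{}$ execution into ``macro-steps'' aligned with individual $\ArchSeqInterfStep{}{}$ transitions: non-branch instructions correspond to one top-level \textsc{Step} emitting exactly the $\ArchSeqInterfStep{}{}$ label, while branches correspond to a \textsc{Branch} followed by a bounded sequence of speculative transitions and a final \textsc{Rollback} whose label $\pcObs{\ell_\mathit{correct}}$ coincides with the $\pcObs{a'(\pc)}$ produced by the $\ArchSeqInterfStep{}{}$ \textsc{Beqz} rule on the same \archstate{}. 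From $\ArchSpecInterf{p}(\sigma) = \ArchSpecInterf{p}(\sigma')$ we immediately get $\pi(\ArchSpecInterf{p}(\sigma)) = \pi(\ArchSpecInterf{p}(\sigma'))$, and therefore $\ArchSeqInterf{p}(\sigma) = \ArchSeqInterf{p}(\sigma')$.

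The main obstacle is formalizing the macro-step correspondence cleanly, because the speculative sub-trace following a \textsc{Branch} can terminate either by exhausting the window or by encountering a speculation barrier, and one must verify that both terminations eventually reach a \textsc{Rollback} whose emitted label reveals the correct branch target. A subtle related issue is the behavior of \textsc{Barrier} at the non-speculative top level, which I expect to handle by observing that top-level barriers do not emit observations and therefore contribute nothing to either trace under $\pi$.
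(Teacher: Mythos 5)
Your proposal is correct and follows essentially the same route as the paper's proof, which also argues that the observations produced non-speculatively (i.e., when the stack has a single frame) in $\ArchSpecInterf{\cdot}$ correspond exactly to the $\ArchSeqInterf{\cdot}$ trace, so that equality of the speculative traces projects down to equality of the sequential ones. Your version merely makes the projection and the macro-step correspondence explicit, which the paper leaves as a two-line sketch.
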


\begin{proof}
	Consider two \archstate{}s $\sigma,\sigma'$ and a program $p$ such that $\ArchSpecInterf{p}(\sigma) = \ArchSpecInterf{p}(\sigma')$.
	Then, $\ArchSeqInterf{p}(\sigma) = \ArchSeqInterf{p}(\sigma')$ immediately follows from (1) $\ArchSpecInterf{\cdot}{}$ explores also the speculatively executed instructions ignored by  $\ArchSeqInterf{\cdot}{}$, (2) the observations produced non-speculatively (i.e., whenever there is only one state on the stack) in $\ArchSpecInterf{\cdot}{}$  corresponds to the trace for  $\ArchSeqInterf{\cdot}{}$. 
\end{proof}

\begin{restatable}{prop}{latticeOrd3}
	$\CtSpecInterf{\cdot} \sqsupseteq \ArchSpecInterf{\cdot}$. 
\end{restatable}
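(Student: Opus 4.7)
The plan is to exhibit a label-projection function $f$ from $\ArchSpecInterfStep{}{}$-traces to $\CtSpecInterfStep{}{}$-traces and prove that $\CtSpecInterf{p}(\sigma) = f(\ArchSpecInterf{p}(\sigma))$ for every program $p$ and initial state $\sigma$. Given this identity, the conclusion is immediate: if $\ArchSpecInterf{p}(\sigma) = \ArchSpecInterf{p}(\sigma')$, then applying $f$ to both sides yields $\CtSpecInterf{p}(\sigma) = \CtSpecInterf{p}(\sigma')$.

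The key observation enabling this approach is that the two semantics $\CtSpecInterfStep{}{}$ and $\ArchSpecInterfStep{}{}$ have \emph{identical state dynamics}. The Rollback, Barrier, and Branch rules are defined verbatim in both contracts and emit $\pcObs{\cdot}$ labels whose values do not depend on the observer mode. The Step rule invokes the underlying sequential contract ($\CtSeqInterfStep{}{}$ or $\ArchSeqInterfStep{}{}$), but both of these in turn delegate to the common architectural transition $\archStepCompact$, which does not depend on the label choice. The only place where the two speculative contracts disagree on the label is the Load case: $\CtSpecInterf{\cdot}$ emits $\loadObs{\exprEval{e}{a}}$ while $\ArchSpecInterf{\cdot}$ emits $\loadObs{\exprEval{e}{a}=m(\exprEval{e}{a})}$. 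All other labels agree. Accordingly, define $f$ to act pointwise on a trace by replacing each $\loadObs{n=v}$ with $\loadObs{n}$ and leaving every other label unchanged.

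The identity $\CtSpecInterf{p}(\sigma) = f(\ArchSpecInterf{p}(\sigma))$ is then proved by a straightforward induction on the length of the $\ArchSpecInterfStep{}{}$ run starting from $\tup{\sigma,\infty}$. The invariant maintained is that at every index the two runs visit exactly the same stack configuration and fire exactly the same rule, and that the $i$-th emitted label in the $\CtSpecInterf{\cdot}$ trace is the $f$-image of the $i$-th emitted label in the $\ArchSpecInterf{\cdot}$ trace. The case split on rules is routine: for Rollback, Barrier, and Branch, the labels coincide and so does $f$ applied to them; for Step, either the architectural step is a Load (in which case $f(\loadObs{n=v}) = \loadObs{n}$ matches the $\CtSpecInterf{\cdot}$ label) or it is not (in which case the two labels are syntactically equal). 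Since both runs terminate with the same final configuration $\tup{\sigma_{\mathit{fin}},\infty}$, the induction exhausts both traces and yields the desired equality.

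The only mild obstacle is bookkeeping: one must confirm that every Step (and Barrier) transition produces a label that is in the domain of $f$ and that $f$ commutes with the Load case. This is essentially a case analysis over the rules of the sequential contracts $\CtSeqInterfStep{}{}$ and $\ArchSeqInterfStep{}{}$ given in Figure~\ref{figure:interface:seq-ct} and its $\interfStyle{arch}$ variant, and introduces no new reasoning beyond a direct inspection of the inference rules. Once the lemma $\CtSpecInterf{p}(\sigma) = f(\ArchSpecInterf{p}(\sigma))$ is established, the ordering $\CtSpecInterf{\cdot} \sqsupseteq \ArchSpecInterf{\cdot}$ follows in one line as outlined above.
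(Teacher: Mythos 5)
Your proof is correct and follows essentially the same route as the paper's: the paper argues that both contracts explore the same execution, that non-load observations coincide, and that the $\ArchSpecInterf{\cdot}$ load observations fully determine the $\CtSpecInterf{\cdot}$ ones, which is exactly what your projection function $f$ and the identity $\CtSpecInterf{p}(\sigma) = f(\ArchSpecInterf{p}(\sigma))$ make explicit. Your version merely spells out the induction that the paper leaves implicit.
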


\begin{proof}
	Consider two \archstate{}s $\sigma,\sigma'$ and a program $p$ such that $\ArchSpecInterf{p}(\sigma) = \ArchSpecInterf{p}(\sigma')$.
	Then, $\CtSpecInterf{p}(\sigma) = \CtSpecInterf{p}(\sigma')$ immediately follows from (1) $\CtSpecInterf{\cdot}{}$ and $\ArchSpecInterf{\cdot}{}$ explore the same ``execution'', (2) all non-load observations are produced, identically, under the two contracts, and (3) load observations in $\ArchSpecInterf{\cdot}{}$ fully determine the corresponding load observation in $\CtSpecInterf{\cdot}{}$.
\end{proof}

\begin{restatable}{prop}{latticeOrd4}
	$\CtPcSpecInterf{\cdot} \sqsupseteq \CtSpecInterf{\cdot}$. 
\end{restatable}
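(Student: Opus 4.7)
The plan is to unfold Definition~\ref{def:contract-strenght} and show: for every program $p$ and every pair of initial \archstate{}s $\sigma, \sigma'$, if $\CtSpecInterf{p}(\sigma) = \CtSpecInterf{p}(\sigma')$, then $\CtPcSpecInterf{p}(\sigma) = \CtPcSpecInterf{p}(\sigma')$. The key structural observation is that $\CtSpecInterf{\cdot}$ and $\CtPcSpecInterf{\cdot}$ share the \emph{same} underlying transition relation on stack configurations: the rules \textsc{Rollback}, \textsc{Barrier}, \textsc{Branch} are literally identical, and only \textsc{Step} differs in its label (where $\CtPcSpecInterf{\cdot}$ replaces $\loadObs{n}$ or $\storeObs{n}$ by $\emptysequence$ when the top-of-stack speculative window is finite). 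Consequently, running $\CtPcSpecInterf{\cdot}$ on~$\sigma$ and running $\CtSpecInterf{\cdot}$ on~$\sigma$ produce the \emph{same} sequence of states, and the two traces differ only by a positional mask that depends on the top-of-stack $\omega$-values along the run.

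I will prove by induction on the step index~$i$ the following joint invariant on the parallel $\CtSpecInterf{\cdot}$-runs on $\sigma$ and $\sigma'$: (a) the same rule fires at step~$i$; (b) the entire stacks at step~$i$ agree on the sequence of $(\pc,\omega)$ pairs (but not necessarily on memory or register contents); and (c) the $\CtPcSpecInterf{\cdot}$ labels emitted at step~$i$ coincide. The base case is immediate since both stacks start as $\tup{\sigma,\infty}$ and $\tup{\sigma',\infty}$ with $\pc=0$ and $\omega=\infty$. For the inductive step, (b) determines the instruction pointed to by the top-of-stack $\pc$ and the current $\omega$-value, which together determine uniquely which of \textsc{Step}, \textsc{Branch}, \textsc{Barrier}, \textsc{Rollback} applies; hence (a) is forced. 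The new top-of-stack $\pc$ and $\omega$ are either explicit arithmetic updates (for \textsc{Step}, \textsc{Barrier}), recovered from the stack itself (for \textsc{Rollback}), or broadcast in the $\CtSpecInterf{\cdot}$-label $\pcObs{\cdot}$ (for jumps and for \textsc{Branch}); in all cases invariant (b) is preserved. Finally, (c) follows since the $\CtPcSpecInterf{\cdot}$-label is a function of the $\CtSpecInterf{\cdot}$-label and of the top-of-stack $\omega$, both of which agree by the inductive hypothesis and the trace-equality assumption.

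The main subtlety I anticipate is the \textsc{Branch} case, since $\ell_{\mathit{correct}}$ (and hence $\ell_{\mathit{mispred}}$) depends on $\sigma(x)$, which may genuinely differ between the two runs. However, since the rule emits the label $\pcObs{\ell_{\mathit{mispred}}}$ and the $\CtSpecInterf{\cdot}$-traces coincide, the values of $\ell_{\mathit{mispred}}$ must coincide in the two runs; because $\{\ell_{\mathit{correct}},\ell_{\mathit{mispred}}\}=\{\lbl,\sigma(\pc){+}1\}$ as a set (and similarly for $\sigma'$, where $\sigma(\pc)=\sigma'(\pc)$ by invariant~(b)), $\ell_{\mathit{correct}}$ also coincides, so the \emph{whole} newly pushed pair of frames agrees, preserving (b). The other mildly delicate case is \textsc{Rollback}: it applies precisely when $\omega=0$, which by (b) happens simultaneously in the two runs, and the exposed frame underneath is the same in both by invariant (b) at an earlier step; hence the emitted label $\pcObs{\sigma'(\pc)}$ coincides. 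Once the invariant is established along the whole run, concatenating the per-step equalities of (c) yields $\CtPcSpecInterf{p}(\sigma) = \CtPcSpecInterf{p}(\sigma')$, completing the proof.
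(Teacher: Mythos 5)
Your proof is correct and takes essentially the same route as the paper, whose argument is a two-line sketch asserting that (1) both contracts explore the same execution and (2) a $\CtPcSpecInterf{\cdot}$ trace is derived from the corresponding $\CtSpecInterf{\cdot}$ trace by dropping load/store observations at mispredicting steps. Your explicit induction on the $(\pc,\omega)$-stack agreement simply makes precise why the set of mispredicting positions, and hence the derived trace, is itself determined by the shared $\CtSpecInterf{\cdot}$ trace.
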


\begin{proof}
	Consider two \archstate{}s $\sigma,\sigma'$ and a program $p$ such that $\CtSpecInterf{p}(\sigma) = \CtSpecInterf{p}(\sigma')$.
	Then, $\CtSpecInterf{p}(\sigma) = \CtSpecInterf{p}(\sigma')$ immediately follows from (1) $\CtSpecInterf{\cdot}{}$ and $\CtPcSpecInterf{\cdot}{}$ explore the same ``execution'', and (2) traces in $\CtPcSpecInterf{\cdot}{}$ can be derived from those in $\CtSpecInterf{\cdot}{}$ by dropping load and store observations when mispredicting.
\end{proof}

\begin{restatable}{prop}{latticeOrd5}
	$\interfSem{\cdot}_{\interfStyle{\top}} \sqsupseteq \CtSeqInterf{\cdot}$. 
\end{restatable}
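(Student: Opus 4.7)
The plan is to unfold Definition~\ref{def:contract-strenght} and observe that the statement reduces to a trivial implication, since $\interfSem{\cdot}_{\interfStyle{\top}}$ is the contract with no observations.

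Concretely, to establish $\interfSem{\cdot}_{\interfStyle{\top}} \sqsupseteq \CtSeqInterf{\cdot}$ I would fix an arbitrary program $p$ and two arbitrary initial architectural states $\sigma, \sigma'$, assume $\CtSeqInterf{p}(\sigma) = \CtSeqInterf{p}(\sigma')$, and aim to conclude $\interfSem{p}_{\interfStyle{\top}}(\sigma) = \interfSem{p}_{\interfStyle{\top}}(\sigma')$. The key observation is that the labels on the transitions defining $\interfSem{\cdot}_{\interfStyle{\top}}$ are empty by construction (this is the top of the lattice, characterized earlier in the paper as ``the contract that does not expose any observations''), so for every program $q$ and every initial state $\tau$ we have $\interfSem{q}_{\interfStyle{\top}}(\tau) = \emptysequence$. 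Applying this to both $\sigma$ and $\sigma'$ gives $\interfSem{p}_{\interfStyle{\top}}(\sigma) = \emptysequence = \interfSem{p}_{\interfStyle{\top}}(\sigma')$, discharging the implication without ever using the assumption on $\CtSeqInterf{\cdot}$.

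There is effectively no obstacle here: the proof is immediate from the definition of $\interfSem{\cdot}_{\interfStyle{\top}}$. The only minor pedantic point is to justify that the $\interfStyle{\top}$ contract is in fact total (i.e.\ produces an empty trace on every terminating program rather than being undefined), which follows from its being defined as the architectural semantics augmented with the empty labeling. For uniformity with the other propositions in \S\ref{appendix:contracts:lattice}, I would phrase the one-line argument in the same style (``consider two \archstate{}s $\sigma,\sigma'$ and a program $p$ \ldots the conclusion follows immediately because $\interfSem{p}_{\interfStyle{\top}}(\sigma) = \interfSem{p}_{\interfStyle{\top}}(\sigma') = \emptysequence$'').
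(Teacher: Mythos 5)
Your proposal is correct and is essentially identical to the paper's own proof: both unfold Definition~\ref{def:contract-strenght} and observe that $\interfSem{p}_{\interfStyle{\top}}(\sigma) = \interfSem{p}_{\interfStyle{\top}}(\sigma')$ holds trivially because the $\interfStyle{\top}$ contract produces no observations, so the hypothesis on $\CtSeqInterf{\cdot}$ is never needed. The additional remark about totality of the $\interfStyle{\top}$ contract is a harmless elaboration the paper leaves implicit.
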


\begin{proof}
	Consider two \archstate{}s $\sigma,\sigma'$ and a program $p$ such that $\CtSeqInterf{p}(\sigma) = \CtSeqInterf{p}(\sigma')$.
	Then, $\interfSem{p}_{\interfStyle{\top}}(\sigma) = \interfSem{p}_{\interfStyle{\top}}(\sigma')$ immediately follows from $\interfSem{p}_{\interfStyle{\top}}$ not producing any observations.
\end{proof}

\begin{restatable}{prop}{latticeOrd6}
	$\ArchSeqInterf{\cdot} \sqsupseteq \InftyInterf{\cdot}$. 
\end{restatable}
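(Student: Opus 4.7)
The plan is to unfold the definitions and observe that $\InftyInterf{\cdot}$ is strong enough to pin down the entire run at the architectural level, from which the equality of $\ArchSeqInterf{\cdot}$-traces follows essentially for free.

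First, I would fix a program $p$ and two initial \archstate{}s $\sigma, \sigma'$ with $\InftyInterf{p}(\sigma) = \InftyInterf{p}(\sigma')$. Since $\InftyInterf{\cdot}$ exposes the entire \archstate{} at every step (as described in \S\ref{sec:contracts:contracts:contracts}), the equality of $\InftyInterf{\cdot}$-traces implies that the corresponding runs agree state-by-state; in particular, the first label of each trace records the initial \archstate{}, forcing $\sigma = \sigma'$.

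Next, I would appeal to determinism of the architectural semantics $\archStepCompact$ (\S\ref{sec:contracts:semantics}). Because $\ArchSeqInterf{\cdot}$ is obtained from $\archStepCompact$ by decorating transitions with labels that are deterministic functions of the current \archstate{}, memory, and program (namely $\pcObs{\ell}$, $\storeObs{n}$, and $\loadObs{n = m(n)}$), the trace $\ArchSeqInterf{p}(\cdot)$ is a deterministic function of the initial \archstate{}. Hence $\sigma = \sigma'$ immediately yields $\ArchSeqInterf{p}(\sigma) = \ArchSeqInterf{p}(\sigma')$, which is the desired conclusion by Definition~\ref{def:contract-strenght}.

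The only potential subtlety — and the closest thing to an ``obstacle'' — is making precise the informal statement that $\InftyInterf{\cdot}$ ``exposes the entire \archstate{}''. I would resolve this by taking the $\InftyInterf{\cdot}$-labels to include the full memory/register assignment at each step (as suggested by the description in \S\ref{sec:contracts:contracts:contracts}), so that any two $\InftyInterf{\cdot}$-indistinguishable initial \archstate{}s must coincide. Once this is fixed, the argument above is straightforward and does not require the kind of induction on runs or buffer-prefix correspondences needed for the nontrivial lattice comparisons; it is essentially a degenerate instance of the pattern used in the other lattice-ordering proofs.
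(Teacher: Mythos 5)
Your proposal is correct and matches the paper's own proof: both deduce $\sigma = \sigma'$ from the fact that $\InftyInterf{\cdot}$ discloses the entire architectural state, and then conclude $\ArchSeqInterf{p}(\sigma) = \ArchSeqInterf{p}(\sigma')$ by determinism of $\ArchSeqInterf{\cdot}$. Your remark about pinning down what the informal label set of $\InftyInterf{\cdot}$ must contain is a reasonable clarification but does not change the argument.
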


\begin{proof}
	Consider two \archstate{}s $\sigma,\sigma'$ and a program $p$ such that $\InftyInterf{p}(\sigma) = \InftyInterf{p}(\sigma')$.
	Since $\InftyInterf{\cdot}$ discloses the entire \archstate{}, from $\InftyInterf{p}(\sigma) = \InftyInterf{p}(\sigma')$ we have that $\sigma = \sigma'$.
	Then, $\ArchSeqInterf{p}(\sigma) = \ArchSeqInterf{p}(\sigma')$ immediately follows from $\sigma = \sigma'$ and the determinacy of $\ArchSeqInterf{\cdot}$.
\end{proof}

\subsection{Programming against contracts -- proofs}\label{appendix:contracts:programming}

We start by proving Proposition~\ref{thm:sec-prog:end-to-end}, which connects non-interference at contract and  hardware levels.

\secProgEndToEnd*

\begin{proof}
Let $\Prg$ be a program, $\policy$ be a policy, $\interfSem{\cdot}$ be a contract, and $\muarchSem{\cdot}$ be a hardware semantics.
Assume that $\ct{\Prg}{\policy}{\interfSem{\cdot}}$ and $\hsni{\interfSem{\cdot}}{ \muarchSem{\cdot}}$ hold.
Let $\sigma, \sigma'$ be two initial architectural states such that $\sigma \lowequiv \sigma'$.
From $\ct{\Prg}{\policy}{\interfSem{\cdot}}$, we get $\interfSem{p}(\sigma) = \interfSem{p}(\sigma')$.
From $\hsni{\interfSem{\cdot}}{ \muarchSem{\cdot}}$, we get $\muarchSem{p}(\sigma) = \muarchSem{p}(\sigma')$.
Since $\sigma, \sigma'$ are arbitrary states such that $\sigma \lowequiv \sigma'$, we have that $\sigma \lowequiv \sigma' \Rightarrow \muarchSem{p}(\sigma) = \muarchSem{p}(\sigma')$, that is, $\ct{\Prg}{\policy}{\muarchSem{\cdot}}$.
\end{proof}

Next, we prove Proposition~\ref{prop:sandbox}, which shows how wSNI bridges the gap between vanilla and general sandboxing.

\propSandbox*

\begin{proof}
Let $\Prg$ be a program, $\policy$ be a policy, and	$\interfSem{\cdot}$ be a contract.
Assume that $\Prg$ is vanilla-sandboxed w.r.t. $\policy$ (i.e., $\ct{p}{\policy}{\ArchSeqInterf{\cdot}}$) and wSNI w.r.t. $\interfSem{\cdot}$.
Let $\sigma, \sigma'$ be two arbitrary initial architectural states such that $\sigma \lowequiv \sigma'$.
From $\ct{p}{\policy}{\ArchSeqInterf{\cdot}}$, we get that $\ArchSeqInterf{p}(\sigma) = \ArchSeqInterf{p'}(\sigma')$.
From $\Prg$ is wSNI w.r.t. $\interfSem{\cdot}$, we get that $\interfSem{\Prg}(\sigma) = \interfSem{\Prg}(\sigma')$.
Since $\sigma, \sigma'$ are two arbitrary states such that $\sigma \lowequiv \sigma'$, we have $\sigma \lowequiv \sigma' \Rightarrow \interfSem{\Prg}(\sigma) = \interfSem{\Prg}(\sigma')$, that is, $\ct{p}{\policy}{\interfSem{\cdot}}$.
Since $\ct{p}{\policy}{\ArchSeqInterf{\cdot}}$ and $\ct{p}{\policy}{\interfSem{\cdot}}$ hold, $\Prg$ is generally-sandboxed w.r.t. $\policy$ and $\interfSem{\cdot}$. 
\end{proof}

We conclude by proving Proposition~\ref{prop:ct}, which shows how SNI bridges the gap between vanilla and general constant-time.

\propCt*

\begin{proof}
Let $\Prg$ be a program, $\policy$ be a policy, and	$\interfSem{\cdot}$ be a contract.
Assume that $\Prg$ is vanilla-constant-time w.r.t. $\policy$ (i.e., $\ct{p}{\policy}{\CtSeqInterf{\cdot}}$) and SNI w.r.t.  $\policy$ and $\interfSem{\cdot}$.
Let $\sigma, \sigma'$ be two arbitrary initial architectural states such that $\sigma \lowequiv \sigma'$.
From $\ct{p}{\policy}{\CtSeqInterf{\cdot}}$, we get $\CtSeqInterf{p}(\sigma) = \CtSeqInterf{p'}(\sigma')$.
From $\Prg$ is  SNI w.r.t.  $\policy$ and $\interfSem{\cdot}$, we then get $\interfSem{\Prg}(\sigma) = \interfSem{\Prg}(\sigma')$.
Since $\sigma, \sigma'$ are two arbitrary states such that $\sigma \lowequiv \sigma'$, we have $\sigma \lowequiv \sigma' \Rightarrow \interfSem{\Prg}(\sigma) = \interfSem{\Prg}(\sigma')$, that is, $\ct{p}{\policy}{\interfSem{\cdot}}$ (i.e., $\Prg$ is generally-constant-time w.r.t. $\policy$ and $\interfSem{\cdot}$).
\end{proof}

\newpage
\section{General processor -- Proof of Theorem~\ref{theorem:hni:all}}\label{appendix:proofs:general}

In this section, we prove that all instances of our hardware semantics satisfy the $\CtSpecInterf{\cdot}$ contract.

In the following we assume given an arbitrary cache $C$, branch predictor $Bp$, and scheduler $S$.
Hence, our proof holds for arbitrary caches $C$, branch predictors $Bp$, and schedulers $S$.

The semantics $\muarchSem{p}$ for a program $p$ is defined as follows:
$\muarchSem{p}(\tup{m,a})$ is $\tup{\emptysequence, \CacheState_0,\BpState_0, \SchedState_0}$ $ \cdot \tup{\BufProject{\buf_1}, \CacheState_1,\BpState_1, \SchedState_1}$ $  \cdot \tup{\BufProject{\buf_2}, \CacheState_2,\BpState_2, \SchedState_2}$ $  \cdot \ldots \cdot $ $ \tup{\BufProject{\buf_n}, \CacheState_n,\BpState_n, \SchedState_n}$ where $	\tup{m,a,\emptysequence, \CacheState_0,\BpState_0, \SchedState_0 } \muarchStep{}{} \tup{m_1,a_1,\buf_1, \CacheState_1,\BpState_1, \SchedState_1}  \muarchStep{}{} \tup{m_2,a_2,\buf_2, \CacheState_2,\BpState_2, \SchedState_2} \muarchStep{}{}$ $\ldots$ $\muarchStep{}{}  \tup{m_n,a_n,\buf_n, \CacheState_n,\BpState_n, \SchedState_n}$ is the complete hardware run obtained starting from $\tup{m,a}$ and terminating in $\tup{m_n,a_n,\buf_n, \CacheState_n,\BpState_n, \SchedState_n}$, which is a final hardware state. Otherwise, $\muarchSem{p}(\tup{m,a})$ is undefined.

\textbf{ASSUMPTION: } In the following, we assume that $\wInterf > \wMuarch + 1$.

\allCtSpec*

\begin{proof}
	Let $p$ be an arbitrary well-formed program.
	Moreover, let $\sigma = \tup{m,a},\sigma' = \tup{m',a'}$ be two arbitrary initial configurations.
	There are two cases:
	\begin{compactitem}
	\item[$\CtSpecInterf{\Prg}(\sigma) \neq \CtSpecInterf{\Prg}(\sigma')$:] Then, 	$\CtSpecInterf{\Prg}(\sigma) = \CtSpecInterf{\Prg}(\sigma') \Rightarrow \muarchSem{\Prg}(\sigma) = \muarchSem{\Prg}(\sigma')$ trivially holds.
	\item[$\CtSpecInterf{\Prg}(\sigma) = \CtSpecInterf{\Prg}(\sigma')$:]
		By unrolling the notion of $\CtSpecInterf{\Prg}(\sigma)$ (together with all changes to the program counter $\pc$ being visible on traces), we obtain that there are runs $\crun:= s$ $\CtSpecInterf{o_1}{}$ $s_1$ $\CtSpecInterf{o_2}{}$ $\ldots$  $\CtSpecInterf{o_{n-1}}{}$  $s_n$ and $\crunp:= s'\CtSpecInterf{o_1'}{} s_1' \CtSpecInterf{o_2'}{} \ldots  \CtSpecInterf{o_{n-1}'}{}  s_n'$ such that $o_i = o_i'$ for all $0 < i < n$.
		Since $\wInterf > \wMuarch + 1$, by applying Lemma~\ref{lemma:vanilla:main-lemma}, we immediately get that $\muarchSem{\Prg}(\sigma) = \muarchSem{\Prg}(\sigma')$ (because either both run terminate producing indistinguishable sequences of processor configurations or they both get stuck).
		Therefore, $\CtSpecInterf{\Prg}(\sigma) = \CtSpecInterf{\Prg}(\sigma') \Rightarrow \muarchSem{\Prg}(\sigma) = \muarchSem{\Prg}(\sigma')$ holds.
	\end{compactitem}
	Hence, $\CtSpecInterf{\Prg}(\sigma) = \CtSpecInterf{\Prg}(\sigma') \Rightarrow \muarchSem{\Prg}(\sigma) = \muarchSem{\Prg}(\sigma')$ holds for all programs $p$ and initial configurations $\sigma,\sigma'$.
	Therefore, $\hsni{\CtSpecInterf{\cdot}}{\muarchSem{\cdot}}$ holds.	
\end{proof}

\subsection{Preliminary definitions}

\begin{definition}[Deep-update]
    Let $p$ be a program,  $\tup{m,a}$ be an \archstate{}, and $\buf$ be a buffer.
    The \emph{deep-update of $\tup{m,a}$ given $\buf$} is defined as follows:
        \begin{align*}
        \update{\tup{m,a}}{\emptysequence} &:= \tup{m,a} \\
        \update{\tup{m,a}}{ \tagged{\passign{x}{e}}{T}}  &:= 
             \tup{m, a[x \mapsto \exprEval{e}{a}]} 
        \\
        \update{\tup{m,a}}{ \tagged{\pmarkedassign{x}{e}}{T}}  &:= \tup{m, a[x \mapsto \exprEval{e}{a}]}\\
        \update{\tup{m,a}}{ \tagged{\pload{x}{e}}{T}} &:= \tup{m,a[x \mapsto m(\exprEval{e}{a})] } \\
        \update{\tup{m,a}}{ \tagged{\pstore{x}{e}}{T}} &:= \tup{m[\exprEval{e}{a} \mapsto a(x)],a}\\
        \update{\tup{m,a}}{\tagged{\pskip{}}{T}} &:= \tup{m,a}\\
        \update{\tup{m,a}}{\tagged{\pbarrier{}}{T}} &:= \tup{m,a}\\
        \update{\tup{m,a}}{(\tagged{i}{T} \concat \buf)} &:= 				\update{ (\update{\tup{m,a}}{\tagged{i}{T}}) }{ \buf }
        \end{align*}
\end{definition}

\begin{definition}[Well-formed buffers]
    A reorder buffer $\buf$ is \emph{well-formed for  a well-formed program $p$ and an \archstate{}  $\tup{m,a}$}, written $\wellformed{\buf,\tup{m,a}}$, if the following conditions hold:
    \begin{align*}
    	\wellformed{\emptysequence,\tup{m,a}} & \\
        \wellformed{ \tagged{\passign{\pc}{e}}{T} \concat \buf, \tup{m,a} } & \text{ if } \wellformed{\buf, \update{\tup{m,a}}{\tagged{\passign{\pc}{e}}{T}}} \wedge p(a(\pc)) \sim_{\tup{m,a}} \tagged{\passign{\pc}{e}}{\notags} \\
        \wellformed{\tagged{\passign{\pc}{\ell}}{\ell_0} \concat \buf, \tup{m,a}} & \text{ if } \wellformed{\buf, \update{\tup{m,a}}{\tagged{\passign{\pc}{\ell}}{\ell_0}}} \wedge  \ell_0 \in \Val \wedge p(\ell_0) = \pjz{x}{\ell'} \wedge \\& \quad \ell \in \{\ell', \ell_0+1\} \wedge p(a(\pc)) \sim_{\tup{m,a}} \tagged{\passign{\pc}{\ell}}{\ell_0} \wedge \ell_0 = a(\pc)\\
    	\wellformed{ \tagged{\pmarkedassign{\pc}{\ell}}{\notags} \concat \buf , \tup{m,a} } & \text{ if } \wellformed{\buf,\update{\tagged{\pmarkedassign{\pc}{\ell}}{\notags}}{\tup{m,a}}} \wedge \ell \in \Val \\
        \wellformed{ \tagged{i}{\notags} \concat \tagged{\pmarkedassign{\pc}{\ell}}{\notags}  \concat \buf , \tup{m,a} } & \text{ if }  
        \wellformed{\buf, \update{\tup{m,a}}{(\tagged{i}{\notags} \concat \tagged{\pmarkedassign{\pc}{\ell}}{\notags} )}} \wedge 
        \ell \in \Val \wedge (\forall e.\ i \neq \passign{\pc}{e}) \wedge  \\
        & \quad (\forall x,e.\ i \neq \pmarkedassign{x}{e}) \wedge (\forall x,e.\ i \neq \pload{\pc}{e}) \wedge p(a(\pc)) \sim_{\tup{m,a}} \tagged{i}{\notags} \wedge \\ & \quad \ell = a(\pc)+1
    \end{align*}
    where the instruction-compatibility relation $\sim_{\tup{m,a}}$ is defined as follows:
    \begin{align*}
        \pskip &\sim_{\tup{m,a}} \tagged{\pskip}{\notags} \\ \allowdisplaybreaks
        \pbarrier &\sim_{\tup{m,a}} \tagged{\pbarrier}{\notags} \\ \allowdisplaybreaks
        \passign{x}{e} &\sim_{\tup{m,a}} \tagged{\passign{x}{e}}{\notags} \\ \allowdisplaybreaks
        \passign{x}{e} &\sim_{\tup{m,a}} \tagged{\passign{x}{v}}{\notags} \text{ if } v = \exprEval{e}{a} \\ \allowdisplaybreaks
        \pload{x}{e} &\sim_{\tup{m,a}} \tagged{\pload{x}{e}}{\notags} \\ \allowdisplaybreaks
        \pload{x}{e} &\sim_{\tup{m,a}} \tagged{\passign{x}{v}}{\notags} \text{ if } v = m(\exprEval{e}{a})\\ \allowdisplaybreaks
        \pstore{x}{e} &\sim_{\tup{m,a}} \tagged{\pstore{x}{e}}{\notags} \\ \allowdisplaybreaks
        \pstore{x}{e} &\sim_{\tup{m,a}} \tagged{\pstore{v}{n}}{\notags} \text{ if } v = a(x) \wedge n = \exprEval{e}{a} \\ \allowdisplaybreaks
        \pjmp{e} &\sim_{\tup{m,a}} \tagged{\passign{\pc}{e}}{\notags} \\ \allowdisplaybreaks
        \pjmp{e} &\sim_{\tup{m,a}} \tagged{\passign{\pc}{v}}{\notags} \text{ if } v = \exprEval{e}{a} \\ \allowdisplaybreaks
        \pjz{x}{\ell} &\sim_{\tup{m,a}} \tagged{\passign{\pc}{\ell}}{a(\pc)} \\ \allowdisplaybreaks
        \pjz{x}{\ell} &\sim_{\tup{m,a}} \tagged{\passign{\pc}{a(\pc)+1}}{a(\pc)} \\ \allowdisplaybreaks
        \pjz{x}{\ell} &\sim_{\tup{m,a}} \tagged{\passign{\pc}{a(\pc)+1}}{\notags} \wedge a(x) \neq 0\\ \allowdisplaybreaks
        \pjz{x}{\ell} &\sim_{\tup{m,a}} \tagged{\passign{\pc}{\ell}}{\notags} \wedge a(x) = 0
    \end{align*}
    \end{definition}

    Lemma~\ref{lemma:vanilla:buffers-well-formedness} states that all reorder buffers occurring in hardware runs are well-formed.

\begin{lemma}[Reorder buffers are well-formed]\label{lemma:vanilla:buffers-well-formedness}
	Let $p$ be a well-formed program, $\sigma_0 = \tup{m,a}$ be an initial \archstate, $\CacheState_0$ be the initial cache state, $\BpState_0$ be the initial branch predictor state, and $\SchedState_0$ be the initial scheduler state.
	For all hardware runs  $\hrun := C_0 \muarchStep{}{} C_1 \muarchStep{}{} C_2 \muarchStep{}{} \ldots \muarchStep{}{} C_k$ and all $0 \leq i \leq k$, then $\wellformed{\buf_i,\tup{m_i,a_i}}$, where $C_0 = \tup{m,a,\emptysequence, \CacheState_0, \BpState_0, \SchedState_0}$ and $C_i = \tup{m_i, a_i,\buf_i, \CacheState_i, \BpState_i, \SchedState_i}$.
\end{lemma}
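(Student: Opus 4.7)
The proof proceeds by induction on the length $k$ of the hardware run $\hrun := C_0 \muarchStepCompact C_1 \muarchStepCompact \ldots \muarchStepCompact C_k$. The base case $k=0$ is immediate: $\buf_0 = \emptysequence$ and $\wellformed{\emptysequence,\tup{m,a}}$ holds by the first clause of the definition. For the inductive step, assume $\wellformed{\buf_i,\tup{m_i,a_i}}$ and case-split on the directive $d = \SchedNext(\SchedState_i)$ produced by the scheduler, using the corresponding rule of the auxiliary relation $\muarchStep{d}{}$.

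In the $\fetch{}$ cases we must justify that the newly-appended entry is $\sim_{\tup{m_i,a_i}}$-compatible with $p(a'(\pc))$, where $a' = \apply{\buf_i}{a_i}$. For \textsc{Fetch-Branch-Hit}, well-formedness of the branch predictor gives $\ell' \in \{\ell'', a'(\pc)+1\}$, which matches the second clause of $\wellformed{\cdot,\cdot}$; moreover, $a'(\pc) = a_i(\pc)$ follows from the induction hypothesis once the apply function is shown to compose correctly with $\update{\cdot}{\cdot}$ across the buffer. For \textsc{Fetch-Jump-Hit} and \textsc{Fetch-Others-Hit} the appended entries match the first and fourth clauses respectively, using the program's well-formedness to exclude ill-formed jump targets. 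In \textsc{Fetch-Miss} the buffer is unchanged, so there is nothing to prove. To make this work cleanly I will prove an auxiliary \emph{append lemma} stating that if $\wellformed{\buf,\tup{m,a}}$ holds and a candidate suffix is $\sim$-compatible with the instruction at $\apply{\buf}{a}(\pc)$, then $\wellformed{\buf \concat c, \tup{m,a}}$.

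In the $\execute{i}$ cases we must show that resolving the $i$-th entry preserves $\sim_{\tup{m_i,a_i}}$-compatibility. The $\sim$ relation was designed for precisely this: e.g.\ for \textsc{Execute-Assignment}, $\passign{x}{e} \sim_{\tup{m,a}} \tagged{\passign{x}{e}}{\notags}$ and $\passign{x}{e} \sim_{\tup{m,a}} \tagged{\passign{x}{v}}{\notags}$ (with $v = \exprEval{e}{a}$) are both listed, so replacing the expression by its value at the appropriate ``prefix-state'' preserves the relation. The analogous observations handle \textsc{Execute-Load-Hit}, \textsc{Execute-Store}, and the marked-assignment rule. The \textsc{Execute-Branch-Commit} rule turns a tagged entry into an untagged one that records the branch's actual outcome; because the entry's $\ell$ equals either $\ell''$ or $\ell_0+1$ in accordance with $a'(x)$, the resulting entry falls in the first clause of $\wellformed{\cdot,\cdot}$.

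The main obstacle is \textsc{Execute-Branch-Rollback}, where the buffer suffix $\buf'$ is squashed and the mispredicted entry is replaced by $\tagged{\passign{\pc}{\ell'}}{\notags}$. Here I will prove a second auxiliary \emph{prefix lemma}: if $\wellformed{\buf_1 \concat \buf_2,\tup{m,a}}$, then $\wellformed{\buf_1,\tup{m,a}}$. This follows by structural induction on $\buf_1$ using the recursive structure of the well-formedness relation. Combined with the append lemma, this reduces the rollback case to verifying that $\tagged{\passign{\pc}{\ell'}}{\notags}$ is $\sim$-compatible with $p(\apply{\buf}{a}(\pc))$, which again uses $\ell' \in \{\ell'', \ell_0+1\}$ together with $\apply{\buf}{a}(\pc) = \ell_0$. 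Finally, the $\retire$ cases discharge by observing that the rule updates $\tup{m_i,a_i}$ exactly according to $\update{\tup{m_i,a_i}}{\cdot}$ applied to the retired head command, so the remaining buffer is well-formed against the new \archstate{} by the second recursive hypothesis in the definition of $\wellformed{\cdot,\cdot}$. The delicate bookkeeping will be showing that $\apply{\cdot}{\cdot}$ and $\update{\cdot}{\cdot}$ agree on the value of $\pc$ throughout the buffer, which I expect to isolate as a separate invariant lemma before doing the main induction.
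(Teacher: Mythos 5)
Your plan coincides with the paper's proof, which is stated in a single line as following ``by (1) induction on $i$, and (2) inspection of the rules defining $\muarchStepCompact$''; the append lemma, prefix lemma, and apply/update agreement invariant you propose are exactly the details that one-liner elides, and they are the right ones. The only point to tighten is that your prefix lemma is false in full generality — cutting between a $\tagged{i}{\notags}$ entry and its companion $\tagged{\pmarkedassign{\pc}{\ell}}{\notags}$ leaves a buffer matched by no clause of $\wellformed{\cdot,\cdot}$ — so state it only for prefixes ending at unit boundaries, which is all the rollback case ever produces.
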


\begin{proof}
 The lemma follows by (1) induction on $i$, and (2) inspection of the rules defining $\muarchStep{}{}$.
\end{proof}

    \begin{definition}[Prefixes of buffers]\label{def:vanilla:ct-spec:prefixes}
        The prefixes of a well-formed buffer $\buf$ are defined as follows:
        \begin{align*}
            \prefixes{\emptysequence} &= \{ \emptysequence \}\\
            \prefixes{  \tagged{\passign{\pc}{e}}{\notags} \concat \buf } &= 
            \{ \emptysequence \} \cup \{  \tagged{\passign{\pc}{e}}{T} \concat \buf' \mid \buf' \in \prefixes{\buf } \} \\
            \prefixes{  \tagged{\passign{\pc}{\ell}}{\ell_0} \concat \buf } &= 
            \{ \emptysequence \} \cup \{  \tagged{\passign{\pc}{\ell}}{\ell_0} \concat \buf' \mid \buf' \in \prefixes{\buf }  \} \\
            \prefixes{  \tagged{i}{\notags} \concat \tagged{\pmarkedassign{\pc}{\ell}}{\notags} \concat \buf } &= \{ \emptysequence \} \cup \{   \tagged{i}{\notags} \concat \tagged{\pmarkedassign{\pc}{\ell}}{\notags} \concat \buf' \mid \buf' \in \prefixes{\buf } \} \\
            \prefixes{   \tagged{\pmarkedassign{\pc}{\ell}}{\notags} \concat \buf } &= \{  \tagged{\pmarkedassign{\pc}{\ell}}{\notags} \} \cup \{ \tagged{\pmarkedassign{\pc}{\ell}}{\notags} \concat \buf' \mid \buf' \in \prefixes{\buf } \} 
        \end{align*}
    \end{definition}

\begin{definition}[Equivalence between contract states and hardware configurations]
    We say that a hardware configuration $C = \tup{m,a,\CacheState, \BpState, \SchedState}$ and a contract state $s =  \tup{\tup{m',a'},\omega} \concat s'$ are \emph{$i$-equivalent}, for an integer $0 \leq i \leq |\buf|$, written $C \bufEquiv{i} \sigma$, iff $\update{\tup{m,a}}{\buf[0..i]} = \tup{m',a'}$ and $\nrMispred{\tup{m,a}}{\buf[0..i]} = 0 \leftrightarrow \omega = \infty$.
\end{definition}

\subsection{Mapping lemma}

\begin{definition}[$\crun-\hrun$ mapping]\label{def:vanilla:mapping}
	Let $p$ be a well-formed program, $\sigma_0 = \tup{m,a}$ be an initial \archstate, $\CacheState_0$ be the initial cache state, $\BpState_0$ be the initial branch predictor state, and $\SchedState_0$ be the initial scheduler state.
	Furthermore, let:
	\begin{compactitem}
		\item $\crun := s_0 \CtSpecInterfStep{o_1}{} s_1 \CtSpecInterfStep{o_2}{} s_2 \CtSpecInterfStep{o_3}{} \ldots \CtSpecInterfStep{o_{n-1}}{} s_n$ be the longest $\interfStyle{spec-ct}$ contract run obtained starting from $s_0 := \tup{\sigma_0, \infty}$.
		\item $\hrun := C_0 \muarchStep{}{} C_1 \muarchStep{}{} C_2 \muarchStep{}{} \ldots \muarchStep{}{} C_k$ be the longest  hardware run obtained starting from $C_0 = \tup{m,a,\emptysequence, \CacheState_0, \BpState_0, \SchedState_0}$.
		\item $\hrun(i)$ be the $i$-th hardware configuration in $\hrun$.
		\item $\crun(i)$ be the $i$-th contract configuration in $\crun$ (note that $\crun(i) = s_n$ for all $i>n$).
	\end{compactitem}
	The \emph{$\crun-\hrun$ mapping}, which maps hardware configurations in $\hrun$ to contract configurations in $\crun$, is defined as follows:
	\begin{align*}
		\map{\crun}{\hrun}{0} &:= \{0 \mapsto 0\} \\ 
		\map{\crun}{\hrun}{i} &:= {
			\begin{cases}
			\map{\crun}{\hrun}{i-1} & \text{if } \SchedNext(\hrun(i-1)) = \fetch{} \wedge ln(\hrun(i-1)) = ln(\hrun(i))\\
			fetch_{\crun,\hrun}(i) & \text{if } \SchedNext(\hrun(i-1)) = \fetch{} \wedge ln(\hrun(i-1)) < ln(\hrun(i)) \\
            \map{\crun}{\hrun}{i-1} & \text{if } \SchedNext(\hrun(i-1)) = \execute{j} \wedge  \neg isRb_{\hrun}(i,j) \\ 
            rollback_{\crun,\hrun}(i,j) & \text{if } \SchedNext(\hrun(i-1)) = \execute{j} \wedge   isRb_{\hrun}(i,j)\\ 
			shift(\map{\crun}{\hrun}{i -1}) & \text{if } \SchedNext(\hrun(i-1)) = \retire{} \\
			\end{cases}
		}\\ 
		fetch_{\crun,\hrun}(i) &=
				\map{\crun}{\hrun}{i-1}[ln(\hrun(i-1))+2 \mapsto \map{\crun}{\hrun}{i-1}(ln(\hrun(i-1)))+1]\\
				& \qquad \text{if }
							p(\mathit{lstPc}(\hrun(i -1))) \neq \pjz{x}{\lbl} \wedge 
							p(\mathit{lstPc}(\hrun(i -1))) \neq \pjmp{e} 
							\\
		fetch_{\crun,\hrun}(i) &=
				\map{\crun}{\hrun}{i-1}[ln(\hrun({i-1}))+1 \mapsto \map{\crun}{\hrun}{i-1}(ln(\hrun(i-1)))+1]\\
				& \qquad \text{if }
							p(\mathit{lstPc}(\hrun(i -1))) = \pjmp{e}
                            \\
        fetch_{\crun,\hrun}(i) &=
        \map{\crun}{\hrun}{i-1}[ln(\hrun({i-1}))+1 \mapsto \map{\crun}{\hrun}{i-1}(ln(\hrun(i-1)))+1]\\
        & \qquad \text{if }
                    p(\mathit{lstPc}(\hrun(i -1))) = \pjz{x}{\ell} \wedge \BpPredict(\hrun(i-1)) \neq \ell \wedge \crun(\map{\crun}{\hrun}{i-1}(ln(\hrun({i-1}))))(x) = 0
                    \\
        fetch_{\crun,\hrun}(i) &=
        \map{\crun}{\hrun}{i-1}[ln(\hrun({i-1}))+1 \mapsto \map{\crun}{\hrun}{i-1}(ln(\hrun(i-1)))+1]\\
        & \qquad \text{if }
                    p(\mathit{lstPc}(\hrun(i -1))) = \pjz{x}{\ell} \wedge \BpPredict(\hrun(i-1)) = \ell \wedge \crun(\map{\crun}{\hrun}{i-1}(ln(\hrun({i-1}))))(x) \neq 0
                    \\
        fetch_{\crun,\hrun}(i) &=
        \map{\crun}{\hrun}{i-1}[ln(\hrun({i-1}))+1 \mapsto rb_{\crun,\hrun}(\map{\crun}{\hrun}{i-1}(ln(\hrun({i-1}))) ]\\
        & \qquad \text{if }
                    p(\mathit{lstPc}(\hrun(i -1))) = \pjz{x}{\ell} \wedge \BpPredict(\hrun(i-1)) \neq \ell \wedge \crun(\map{\crun}{\hrun}{i-1}(ln(\hrun({i-1}))))(x) \neq 0
                    \\
        fetch_{\crun,\hrun}(i) &=
        \map{\crun}{\hrun}{i-1}[ln(\hrun({i-1}))+1 \mapsto rb_{\crun}(\map{\crun}{\hrun}{i-1}(ln(\hrun({i-1}))) ]\\
        & \qquad \text{if }
                    p(\mathit{lstPc}(\hrun(i -1))) = \pjz{x}{\ell} \wedge \BpPredict(\hrun(i-1)) = \ell \wedge \crun(\map{\crun}{\hrun}{i-1}(ln(\hrun({i-1}))))(x) = 0\\
        rollback_{\crun,\hrun}(i,j) &= clip( \map{\crun}{\hrun}{i-1} , j  )[j \mapsto rb_{\crun}( \map{\crun}{\hrun}{i-1}(j-1) ) ]
        \end{align*}
        where:
        \begin{align*}
        rb_{\crun}(i) &:= min(\{ i' \in \Nat \mid i' > i \wedge |\crun(i)| = |\crun(i')| \})\\
		ln(\tup{m,a,\buf,\CacheState,\BpState,\SchedState}) &= |\buf|\\
        \SchedNext(\tup{m,a,\buf,\CacheState,\BpState,\SchedState}) &= \SchedNext(\SchedState)\\
        \BpPredict(\tup{m,a,\buf,\CacheState,\BpState,\SchedState}) &= \BpPredict(\BpState)\\
        \elt{\tup{m,a,\buf,\CacheState,\BpState,\SchedState}}{j} &= \elt{\buf}{j}\\
        (\tup{\sigma, \omega} \concat s)(x) &= \sigma(x) \\
		shift(map) &= \lambda i \in \Nat.\ map(i +1 )\\
        \mathit{lstPc}(\tup{m,a,\buf,\CacheState,\BpState,\SchedState}) &= (\update{\tup{m,a}}{\buf})(\pc)\\
        clip(map, n) &= \lambda i \in \Nat.\
            {
            \begin{cases}
                map(n) & \text{if }\ i < n \\
                \bot & \text{otherwise}
            \end{cases}
            }\\
        isRb_{\hrun}(i,j) &= {
            \begin{cases}
            \top & \elt{\hrun(i)}{j} = \tagged{\passign{\pc}{\ell'}}{\notags} \wedge \elt{\hrun(i-1)}{j} = \tagged{\passign{\pc}{\ell}}{\ell_0} \wedge \ell_0 \neq \notags \wedge \ell \neq \ell'\\
            \bot & \text{otherwise}
            \end{cases}
        }
	\end{align*}
    \end{definition}

Additionally, we define the following auxiliary functions and sets:

\newcommand{\nrInstr}[1]{\nrPc{#1}}

\begin{definition}
Let $p$ be a well-formed program. 
The function $\nrMispred{\sigma}{\buf}$, where $\sigma$ is an \archstate{} and $\buf$ is a buffer, is defined as follows:
\begin{align*}
    \nrMispred{\sigma}{\emptysequence} &:= 0 \\
    \nrMispred{\sigma}{\tagged{i}{T} \concat \buf } &= {
        \begin{cases}
            \nrMispred{\update{\sigma}{\tagged{i}{T}}}{ \buf } + 1 & \text{if } i = \passign{\pc}{\ell} \wedge T \neq \notags \wedge p(T) = \pjz{x}{\ell'} \wedge \sigma(x) = 0 \wedge \ell \neq \ell' \\
            \nrMispred{\update{\sigma}{\tagged{i}{T}}}{ \buf } + 1 & \text{if } i = \passign{\pc}{\ell} \wedge T \neq \notags \wedge p(T) = \pjz{x}{\ell'} \wedge \sigma(x)\in \Val \setminus\{ 0\} \wedge \ell \neq T +1 \\
            \nrMispred{\update{\sigma}{\tagged{i}{T}}}{ \buf } & \text{otherwise} 
        \end{cases}
    }
\end{align*}
The function $\nrInstr{\buf}$, where $\buf$ is a buffer, is defined as follows:
\begin{align*}
    \nrInstr{\emptysequence} &:= 0 \\
    \nrInstr{\tagged{i}{T} \concat \buf} &= {
    \begin{cases}
        \nrPc{\buf} +1 & \text{if}\ i = \passign{\pc}{e} \\
        \nrPc{\buf} +1 & \text{if}\ i = \pmarkedassign{\pc}{e} \\
        \nrPc{\buf} & \text{otherwise}
    \end{cases}
    } 
\end{align*}
The functions $\headConf{s}$ and $\headWindow{s}$ predicates, where $s$ is a contract configuration, are defined as follows:
\begin{align*}
    \headConf{\tup{\sigma, \omega} \concat s} &:= \sigma \\
    \headWindow{\tup{\sigma, \omega} \concat s} &:= \omega
\end{align*}
We lift both predicates to sets of configurations $SS$ as standard, e.g., $\headWindow{SS} = \{ \headWindow{ss} \mid  ss\in SS \}$.\\
The functions $\min( map, \crun )$ and $\max( map, \crun )$, where $map$ is a (possibly partial) function $\Nat \to \Nat$ and $\crun$ is a contract run, are as follows:
\begin{align*}
    \headWindow{map, \crun} &:= \{ \headWindow{\crun(map(n))} \mid n \in \Nat \wedge map(n) \neq \bot \}\\
    \minOf{ map, \crun }&:= \min( \headWindow{map, \crun} ) \\ 
    \maxOf{ map, \crun } &:= \max( \headWindow{map, \crun} )  
\end{align*}
Finally, the set $\nextPairs{map}$, where $map$ is a possibly partial function $\Nat \to \Nat$, is defined as follows:
\begin{align*}
    \nextPairs{map} &:= \{ \tup{j,j'} \in \Nat^2 \mid map(j) \neq \bot \wedge map(j') \neq \bot \wedge j < j' \wedge \neg \exists j''.\ ( map(j'') \neq \bot \wedge j < j'' < j' )\}
\end{align*}
\end{definition}

Lemma~\ref{lemma:vanilla:contract-rollback} states the relation between rolled back speculative transactions and number of stack elements in contract's configurations.

\begin{lemma}\label{lemma:vanilla:contract-rollback}
    Let $p$ be a well-formed program and $\mathit{s}_1 = \tup{\sigma_1, \omega_1} \concat \mathit{s}$.
    If $\mathit{s}_1 \CtSpecInterfStep{o_1}{} \mathit{s}_2 \CtSpecInterfStep{o_2}{} \ldots \mathit{s}_{n-1} \CtSpecInterfStep{o_{n-1}}{} \mathit{s}_n$,
    $|\mathit{s}_n| = |\mathit{s}_1|$,
    and for all $1 < j < n$, $|\mathit{s}_j| > |\mathit{s}_1|$,
    then
    $\mathit{s}_2 =  \tup{\sigma_1[\pc \mapsto \ell'],\omega_2} \concat \tup{\sigma_1[\pc \mapsto \ell],\omega_1-1} \concat \mathit{s}$ and $\mathit{s}_n = \tup{\sigma_1[\pc \mapsto \ell],\omega_1-1} \concat \mathit{s}$,
    where $p(\sigma_1(\pc)) = \pjz{x}{\ell_0}$ and $\ell = \sigma(\pc) + 1$ if $\sigma_1(x) \neq 0$ and $\ell = \ell_0$ if $\sigma_1(x) = 0$, $\ell' \in \{\ell_0,\sigma_1(\pc)+1\} \setminus \{\ell\}$, and $\omega_2 = \wInterf$ if $\omega_1 = \infty$ and $\omega_2 = \omega_1 -1 $ otherwise.
\end{lemma}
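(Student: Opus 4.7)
\textbf{Proof plan for Lemma~\ref{lemma:vanilla:contract-rollback}.}

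The plan is to split the argument into three parts, one for each of the key steps in the contract run: the first transition $s_1 \CtSpecInterfStep{o_1}{} s_2$, the intermediate transitions $s_2 \CtSpecInterfStep{}{}^* s_{n-1}$, and the last transition $s_{n-1} \CtSpecInterfStep{o_{n-1}}{} s_n$. The guiding observation is that among the four rules defining $\CtSpecInterfStep{}{}$ (\textsc{Step}, \textsc{Rollback}, \textsc{Barrier}, \textsc{Branch}) only \textsc{Branch} strictly increases the stack height and only \textsc{Rollback} strictly decreases it; the remaining rules preserve stack height and modify only the topmost frame.

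First I would analyse the transition $s_1 \CtSpecInterfStep{o_1}{} s_2$. Since $|s_2| > |s_1|$ by hypothesis, this transition must have been derived by \textsc{Branch}. This immediately forces $p(\sigma_1(\pc)) = \pjz{x}{\ell_0}$ for some $\ell_0$ and, reading off the conclusion of the \textsc{Branch} rule together with the case distinction defining $\ell_{\mathit{correct}}$, $\ell_{\mathit{mispred}}$, and $\omega_{\mathit{mispred}}$, yields exactly the claimed shape of $s_2$, namely $\tup{\sigma_1[\pc \mapsto \ell'],\omega_2} \concat \tup{\sigma_1[\pc \mapsto \ell],\omega_1-1} \concat s$ with $\ell, \ell', \omega_2$ as in the statement.

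Second, I would establish by induction on $j$ that for every $2 \le j \le n-1$ the bottom $|s_1|$ frames of $s_j$ coincide with the bottom $|s_1|$ frames of $s_2$, i.e., with $\tup{\sigma_1[\pc \mapsto \ell],\omega_1-1} \concat s$. The induction hypothesis applies because, by assumption, $|s_j| > |s_1|$ throughout this range, so $|s_j| \ge |s_2|$ and the topmost frame of $s_j$ sits strictly above position $|s_1|$. A case analysis on which rule derives $s_j \CtSpecInterfStep{}{} s_{j+1}$ then shows that none of \textsc{Step}, \textsc{Barrier}, \textsc{Branch}, or \textsc{Rollback} can modify a frame at position $\le |s_1|$: the first three touch only the top frame, and \textsc{Rollback} can only be applied when $|s_{j+1}| = |s_j| - 1 \ge |s_1|$, so again it only removes a frame strictly above $s$.

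Third, I would handle the closing transition $s_{n-1} \CtSpecInterfStep{o_{n-1}}{} s_n$. Since $|s_n| = |s_1| < |s_{n-1}|$, this transition must be derived by \textsc{Rollback}, which pops exactly one frame. Combined with the invariant established in the inductive step, the remaining stack is precisely $\tup{\sigma_1[\pc \mapsto \ell],\omega_1-1} \concat s$, as required.

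The only mildly delicate point is the inductive step, where one must be careful that a \textsc{Rollback} occurring in the interior does not eat into the bottom $|s_1|$ frames; the strict inequality $|s_j| > |s_1|$ assumed for all intermediate $j$ is exactly what rules this out, so no further case work is needed.
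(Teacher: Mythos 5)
Your proposal is correct and follows essentially the same route as the paper's (much terser) proof, which likewise rests on the observations that \textsc{Branch} is the only rule that grows the stack, that \textsc{Branch} produces exactly the claimed shape of $s_2$, and that every rule modifies only the topmost frame(s), so the bottom $|s_1|$ frames survive until the final \textsc{Rollback}. Your expanded induction on the intermediate states just fills in the details the paper delegates to the Spectector technical report.
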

    
\begin{proof}
    The lemma follows from (1) the rule \textsc{Branch} in $\CtSpecInterf{\cdot}$ is the only one increasing the number of configurations in the stack, (2) the rule \textsc{Branch} in $\CtSpecInterf{\cdot}$ ensures that $\mathit{s}_2 := \tup{\sigma', \omega'} \concat \tup{\sigma_1[\pc \mapsto \ell],\omega_1-1} \concat \mathit{s}$, and (3) all rules modify only the topmost element in the stack.
    A more detailed proof of this result (albeit using a slightly different notation and terminology) can be found in \cite{spectector2020}'s technical report.
\end{proof}

We are now ready to prove Lemma~\ref{lemma:vanilla:mapping-is-correct}, the  lemma showing the correctness of the $\crun-\hrun$ mapping.

\begin{lemma}[Correctness of $\crun-\hrun$ mapping]\label{lemma:vanilla:mapping-is-correct}
    Let $p$ be a well-formed program, $\sigma_0 = \tup{m,a}$ be an initial \archstate, $\CacheState_0$ be the initial cache state, $\BpState_0$ be the initial branch predictor state, and $\SchedState_0$ be the initial scheduler state.
    Furthermore, let:
    \begin{compactitem}
        \item $\crun := s_0 \CtSpecInterfStep{o_1}{} s_1 \CtSpecInterfStep{o_2}{} s_2 \CtSpecInterfStep{o_3}{} \ldots \CtSpecInterfStep{o_{n-1}}{} s_n$ be the longest $\interfStyle{spec}$ contract run obtained starting from $s_0 = \tup{\sigma_0,\infty}$.
        \item $\hrun := C_0 \muarchStep{}{}  C_1 \muarchStep{}{} C_2 \muarchStep{}{} \ldots \muarchStep{}{} C_k$ be the longest  hardware run obtained starting from $C_0 = \tup{m,a,\emptysequence, \CacheState_0, \BpState_0, \SchedState_0}$.
        \item $\hrun(i)$ be the $i$-th hardware configuration in $\hrun$.
        \item $\crun(i)$ be the $i$-th contract configuration in $\crun$ (note that $\crun(i) = s_n$ for all $i>n$).
        \item $ \map{\crun}{\hrun}{\cdot}$ be the mapping from Definition~\ref{def:vanilla:mapping}.
    \end{compactitem}
    If $\wInterf > \wMuarch + 1$, the following conditions hold:
    \begin{compactenum}[(1)]
    \item $C_0$ is an initial hardware configuration.
    \item $C_k$ is a final hardware configuration or there is no $C_{k'}$ such that $C_k \muarchStep{}{} C_{k'}$.
    \item for all $0 \leq i \leq k$, given $C_i = \tup{m_i,a_i, \buf_i, \CacheState_i, \BpState_i, \SchedState_i}$ the following conditions hold:
        \begin{compactenum}[(a)]
            \item for all $\buf \in \prefixes{\buf_i}$,  (i) $\update{\tup{m_i,a_i}}{\buf} = \headConf{\crun( \map{\crun}{\hrun}{i}(|\buf|) )}$, (ii) $\nrMispred{\tup{m_i,a_i}}{\buf}+1 \geq |\crun( \map{\crun}{\hrun}{i}(|\buf|) )|$, (iii) $\nrMispred{\tup{m_i,a_i}}{\buf} = 0 \leftrightarrow \headWindow{\crun( \map{\crun}{\hrun}{i}(|\buf|) )} = \infty$, and (iv) $\headWindow{\crun( \map{\crun}{\hrun}{i}(|\buf|) )} > 0$.
            \item for all $\tup{j,j'} \in \nextPairs{\map{\crun}{\hrun}{i}}$,
            $\headWindow{ \crun( \map{\crun}{\hrun}{i}(j) ) } = \headWindow{ \crun( \map{\crun}{\hrun}{i}(j') )  } = \infty$ or
            $\headWindow{ \crun( \map{\crun}{\hrun}{i}(j) ) } = \infty$ and $\headWindow{ \crun( \map{\crun}{\hrun}{i}(j') ) } = \wInterf$ or
            $\headWindow{ \crun( \map{\crun}{\hrun}{i}(j) ) } \neq \infty \wedge \headWindow{ \crun( \map{\crun}{\hrun}{i}(j') ) } = \headWindow{ \crun( \map{\crun}{\hrun}{i}(j) ) } - 1$.
        \end{compactenum}
    \end{compactenum}
\end{lemma}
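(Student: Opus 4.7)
The plan is to proceed by induction on $i$, the length of the hardware run prefix under consideration. Parts (1) and (2) are essentially immediate: $C_0$ is initial by construction (empty buffer, initial components), and (2) just restates that $\hrun$ is the longest well-defined run from $C_0$, so termination is either because $C_k$ is final or because no step rule applies. The real content is the invariant in (3), which must be shown to hold at every step.

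For the base case $i=0$, the buffer is empty, so $\prefixes{\buf_0} = \{\emptysequence\}$. Since $\map{\crun}{\hrun}{0}(0) = 0$ and $\crun(0) = s_0 = \tup{\sigma_0,\infty}$, we get $\update{\tup{m,a}}{\emptysequence} = \tup{m,a} = \headConf{s_0}$, and $\nrMispred{\tup{m,a}}{\emptysequence} = 0$ matches $\headWindow{s_0} = \infty$. Item (3b) is vacuous since $\nextPairs{\map{\crun}{\hrun}{0}} = \emptyset$.

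The inductive step is a case analysis on $\SchedNext(\hrun(i-1)) \in \{\fetch{}, \execute{j}, \retire\}$, following the case split in Definition~\ref{def:vanilla:mapping}.
\begin{itemize}
\item \textbf{Fetch with no buffer growth} (cache miss or full buffer): the buffer is unchanged, so $\prefixes{\buf_i}=\prefixes{\buf_{i-1}}$ and $\map{\crun}{\hrun}{i}=\map{\crun}{\hrun}{i-1}$; both invariants transfer verbatim.
\item \textbf{Fetch growing the buffer by one or two commands}: the new prefixes are those of $\buf_{i-1}$ together with one or two extensions. We must show that for each new prefix, the deep-update still corresponds to the contract state determined by $\map{\crun}{\hrun}{i}$. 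For the five branch/jmp/ordinary sub-cases, the rule \textsc{Fetch-Branch-Hit}, \textsc{Fetch-Jump-Hit}, or \textsc{Fetch-Others-Hit} appends entries whose semantic effect on $\update{\cdot}{\cdot}$ exactly matches one contract transition (rule \textsc{Step} or \textsc{Branch} of $\CtSpecInterfStep{}{}$). For branch fetches we further split on whether the prediction $\BpPredict$ matches the actual branch outcome at the contract level (determined by $\crun(\map{\crun}{\hrun}{i-1}(\ldots))(x)$); when they agree the mapping advances by $1$, when they disagree the mapping jumps to $rb_\crun(\cdot)$, the position after the mispredicted transaction is rolled back. Invariants (3a)(i)-(iv) follow from Lemma~\ref{lemma:vanilla:buffers-well-formedness} and the matching between \textsc{Branch} and \textsc{Fetch-Branch-Hit}. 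For (3b), the fresh pair $\tup{j,j+1}$ satisfies the claim because \textsc{Step} decrements $\omega$ by $1$ and \textsc{Branch} pushes $\omega_\mathit{mispred} = \wInterf$ when $\omega=\infty$.
\item \textbf{Execute without rollback}: by inspection of the rules \textsc{Execute-*}, neither the $\update{\cdot}{\cdot}$-value of any buffer prefix nor the mispredication count changes (resolving an expression only replaces $e$ by $\exprEval{e}{a'}$, resolving a correctly predicted branch only strips the tag), so all invariants carry over unchanged.
\item \textbf{Execute with rollback} (\textsc{Execute-Branch-Rollback}): the buffer is truncated at position $j$ and the tag is cleared. The mapping is clipped and the $j$-th entry is redirected to $rb_\crun(\map{\crun}{\hrun}{i-1}(j-1))$. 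Invariant (3a) on clipped prefixes is inherited; for the new prefix of length $j$, Lemma~\ref{lemma:vanilla:contract-rollback} guarantees that $rb_\crun$ points to the contract state obtained after the speculative transaction has been rolled back, which coincides with $\update{\cdot}{\cdot}$ after correcting the $\pc$ and erasing the misspeculated suffix.
\item \textbf{Retire}: the head of the buffer is removed and retired into the architectural state. The new prefixes are the old prefixes shifted by $1$; applying $shift$ to the mapping preserves (3a) because $\update{\tup{m_i,a_i}}{\buf[0..n]} = \update{\tup{m_{i-1},a_{i-1}}}{\buf_{i-1}[0..n+1]}$, and invariant (3b) is preserved because shifting only reindexes an already-valid chain.
\end{itemize}

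The main obstacle is the branch-execution case, which couples three subtle ingredients: tracking when the contract-level stack grows and shrinks via Lemma~\ref{lemma:vanilla:contract-rollback}, correctly identifying the contract configuration after rollback via $rb_\crun$, and ensuring that property (3a)(iv), $\headWindow{\crun(\map{\crun}{\hrun}{i}(|\buf|))}>0$, never fails. The latter is exactly where the assumption $\wInterf > \wMuarch + 1$ is used: any speculative transaction corresponds to buffer entries below a tagged branch, hence to at most $\wMuarch$ subsequent microarchitectural steps, and because each such step decrements the contract window by at most one starting from $\wInterf$, the window strictly exceeds $0$ throughout. This bound also guarantees the third case of (3b) is never violated by premature exhaustion. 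Once these invariants are established, properties (1)-(3) hold at index $i$, closing the induction.
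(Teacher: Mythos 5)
Your plan matches the paper's proof essentially step for step: induction on $i$ with a case split on the scheduler directive, the same sub-cases for fetch (including the four prediction-vs-outcome combinations for branches and the $rb_{\crun}$ redirection), the use of Lemma~\ref{lemma:vanilla:contract-rollback} for the rollback case, the $\mathit{shift}$ argument for retire, and the same counting argument ($\wInterf - \wMuarch > 1$ via at most one window decrement per buffer prefix) to secure condition (3a)(iv). No gaps; this is the paper's argument in outline form.
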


\begin{proof}
Let $p$ be a well-formed program, $\sigma_0 = \tup{m,a}$ be an initial \archstate, $\CacheState_0$ be the initial cache state, $\BpState_0$ be the initial branch predictor state, and $\SchedState_0$ be the initial scheduler state.
Furthermore, let:
\begin{compactitem}
    \item $\crun := s_0 \CtSpecInterfStep{o_1}{} s_1 \CtSpecInterfStep{o_2}{} s_2 \CtSpecInterfStep{o_3}{} \ldots \CtSpecInterfStep{o_{n-1}}{} s_n$ be the longest $\interfStyle{spec}$ contract run obtained starting from $s_0 = \tup{\sigma_0,\infty}$.
    \item $\hrun := C_0 \muarchStep{}{}  C_1 \muarchStep{}{} C_2 \muarchStep{}{} \ldots \muarchStep{}{} C_k$ be the longest  hardware run obtained starting from $C_0 = \tup{m,a,\emptysequence, \CacheState_0, \BpState_0, \SchedState_0}$.
    \item $\hrun(i)$ be the $i$-th hardware configuration in $\hrun$.
    \item $\crun(i)$ be the $i$-th contract configuration in $\crun$ (note that $\crun(i) = s_n$ for all $i > n$).
    \item $ \map{\crun}{\hrun}{\cdot}$ be the mapping from Definition~\ref{def:vanilla:mapping}.
\end{compactitem}
Finally,  we assume that $\wInterf > \wMuarch + 1$ (if this is not the case, then the lemma trivially holds).

Observe that $C_0$ is an initial hardware configuration by construction. 
Observe also that $C_k$ is either a final hardware configuration (for which the semantics cannot further proceed) or the computation is stuck (since $\hrun$ is the longest run by construction).
Therefore, (1) and (2) hold.

We now prove, by induction on $i$, that (3) holds.
\begin{description}
    \item[Base case:]
    For the base case, $i = 0$.
    Observe that $C_0$ is $\tup{m,a,\emptysequence, \CacheState_0, \BpState_0, \SchedState_0 }$  and $s_0 = \tup{\tup{m,a},\infty}$.
    Moreover, $\map{\crun}{\hrun}{0} = \{ 0 \mapsto 0\}$ and  $\prefixes{\emptysequence} = \{\emptysequence\}$.
    We now show that all conditions hold:
    \begin{description}
        \item[(a):] 
        For (i), we need to show that $\update{\tup{m,a}}{\emptysequence} = \headConf{ \crun(\map{\crun}{\hrun}{0}(|\emptysequence|))}$.
        This directly follows from $\update{\tup{m,a}}{\emptysequence} = \tup{m,a}$, $\map{\crun}{\hrun}{0}(|\emptysequence|) = 0$, $\crun(0) =  s_0$, and $\headConf{s_0} = \sigma$.

        For (ii), we  need to show $\nrMispred{\tup{m_0,a_0}}{\emptysequence}+1 \geq | \crun(\map{\crun}{\hrun}{0}(|\emptysequence|)) |$.
        This follows from $\nrMispred{\tup{m_0,a_0}}{\emptysequence} = 0$, $\map{\crun}{\hrun}{0}(|\emptysequence|) = 0$, $\crun(0) =  s_0$, and $|s_0| = 1$.

        For (iii), we need to show that $\nrMispred{\tup{m_0,a_0}}{\emptysequence} = 0 \leftrightarrow \crun(\map{\crun}{\hrun}{0}(|\emptysequence|)) = \infty$.
        This follows from $\nrMispred{\tup{m_0,a_0}}{\emptysequence} = 0$, $\map{\crun}{\hrun}{0}(|\emptysequence|) = 0$, $\crun(0) =  s_0$, and $s_0 = \tup{\tup{m,a},\infty}$.
        
        For (iv), we need to show that  $\headWindow{\crun(\map{\crun}{\hrun}{0}(|\emptysequence|)} > 0$.
        This follows from $\map{\crun}{\hrun}{0}(|\emptysequence|) = 0$, $\crun(0) =  s_0$, and $s_0 = \tup{\tup{m,a},\infty}$.

        \item[(b):]
        This point trivially follows from $\nextPairs{\map{\crun}{\hrun}{0}} = \emptyset$. 
    \end{description}
    This concludes the proof of the base case.

    \item[Induction step:] 
	For the induction step, we assume that the claim holds for all $i' < i$ and we show that it holds for $i$ as well.
	In the following, let $C_i = \tup{m_i,a_i, \buf_i, \CacheState_i, \BpState_i, \SchedState_i}$ and we refer to the induction hypothesis as H.
	Similarly, we write H.3.a to denote, for instance, that fact 3.a holds for the induction hypothesis.
	
	We proceed by case distinction on the directive $\SchedNext(C_{i-1})$ used to derive $C_i$.
    There are three cases:
    \begin{description}
        \item[$\SchedNext(C_{i-1}) = \fetch{}$:]
        We now proceed by case distinction on the applied rule:
		\begin{description}
			\item[Rule \textsc{Fetch-Branch-Hit}:]
            Then, $\buf_i = \tagged{\passign{\pc}{\lbl'}}{\apply{a_{i-1}}{\buf_{i-1}} (\pc)} \concat \buf_{i-1}$, $a_{i} = a_{i-1}$, $m_{i} = m_{i-1}$, $p( \apply{a_{i-1}}{\buf_{i-1}} (\pc)) = \pjz{x}{\lbl}$, and $\lbl' = \BpPredict(\hrun(i-1))$.
            There are four cases:
            \begin{description}
                \item[$\lbl' = \lbl \wedge \update{\tup{m_{i-1},a_{i-1}}}{\buf_{i-1}}(x) = 0 $:] 
                From $\lbl' = \lbl \wedge \update{\tup{m_{i-1},a_{i-1}}}{\buf_{i-1}}(x) = 0 $ and (H.a), we immediately get that $\BpPredict(\hrun(i-1)) = \lbl \wedge \crun(\map{\crun}{\hrun}{i-1}(ln(\hrun(i-1))))(x) = 0$.
                Therefore, $\map{\crun}{\hrun}{i} = \map{\crun}{\hrun}{i-1}[ln(\hrun({i-1}))+1 \mapsto rb_{\crun}(\map{\crun}{\hrun}{i-1}(ln(\hrun({i-1}))) ]$.

                We now show that all conditions hold:
                \begin{description}
                    \item[(a):]
                    Let $\buf$ be an arbitrary prefix in  $\prefixes{\buf_i}$.
                    There are two cases:
                    \begin{description}
                        \item[$\buf \in \prefixes{\buf_{i-1}}$:]
                        Then, $|\buf| \geq |\buf_i|$.
                        We now show that (i)--(iv) hold:
                        \begin{description}
                            \item[(i):]
                            We need to show $\update{\tup{m_i,a_i}}{\buf} =  \headConf{\crun(\map{\crun}{\hrun}{i}(|\buf|))}$.
                            From (H.3.a.i) and $\buf \in \prefixes{\buf_{i-1}}$, we have $\update{\tup{m_{i-1},a_{i-1}}}{\buf} =  \headConf{\crun(\map{\crun}{\hrun}{i-1}(|\buf|))}$.
                            Finally, from this,  $m_{i} = m_{a-1}$, $a_{i} = a_{i-1}$, and $\map{\crun}{\hrun}{i}(\buf) = \map{\crun}{\hrun}{i-1}(\buf)$, we get $\update{\tup{m_i,a_i}}{\buf} =  \headConf{\crun(\map{\crun}{\hrun}{i}(|\buf|))}$.
        
                            \item[(ii):]
                            We need to show $\nrMispred{\tup{m_{i},a_{i}}}{\buf} + 1 \geq |\crun(\map{\crun}{\hrun}{i}(|\buf|))|$.
                            From (H.3.a.ii) and $\buf \in \prefixes{\buf_{i-1}}$,  we have $\nrMispred{\tup{m_{i-1},a_{i-1}}}{\buf} + 1 \geq |\crun(\map{\crun}{\hrun}{i-1}(|\buf|))|$.
                            Finally, from this,  $m_{i} = m_{a-1}$, $a_{i} = a_{i-1}$, and $\map{\crun}{\hrun}{i}(\buf) = \map{\crun}{\hrun}{i-1}(\buf)$, we get $\nrMispred{\tup{m_{i},a_{i}}}{\buf} + 1 \geq |\crun(\map{\crun}{\hrun}{i}(|\buf|))|$.
        
                            \item[(iii):]
                            We need to show $\nrMispred{\tup{m_{i},a_{i}}}{\buf} = 0 \leftrightarrow \headWindow{\crun(\map{\crun}{\hrun}{i}(|\buf|))} = \infty$.
                            From (H.3.a.iii) and $\buf \in \prefixes{\buf_{i-1}}$,  we have $\nrMispred{\tup{m_{i-1},a_{i-1}}}{\buf} = 0 \leftrightarrow \headWindow{\crun(\map{\crun}{\hrun}{i-1}(|\buf|))} = \infty$.
                            Finally, from this,  $m_{i} = m_{a-1}$, $a_{i} = a_{i-1}$, and $\map{\crun}{\hrun}{i}(\buf) = \map{\crun}{\hrun}{i-1}(\buf)$, we get $\nrMispred{\tup{m_{i},a_{i}}}{\buf} = 0 \leftrightarrow \headWindow{\crun(\map{\crun}{\hrun}{i}(|\buf|))} = \infty$.
                            
							\item[(iv):]
	                        We need to show $\headWindow{\crun(\map{\crun}{\hrun}{i}(|\buf|))} > 0$.
	                        From (H.3.a.iv) and $\buf \in \prefixes{\buf_{i-1}}$, we get $\headWindow{\crun(\map{\crun}{\hrun}{i-1}(|\buf|))} > 0$.
	                        From this and $\map{\crun}{\hrun}{i}(\buf) = \map{\crun}{\hrun}{i-1}(\buf)$, we get $\headWindow{\crun(\map{\crun}{\hrun}{i}(|\buf|))} > 0$.
        
                        \end{description} 

                        \item[$\buf \not\in \prefixes{\buf_{i-1}}$:] 
                        Then, $\buf = \buf_i =  \tagged{\passign{\pc}{\lbl'}}{\apply{a_{i-1}}{\buf_{i-1}} (\pc)} \concat \buf_{i-1}$, $|\buf| = |\buf_{i-1}| +1 = ln(\hrun(i-1))+1$, and $\map{\crun}{\hrun}{i}(|\buf|) =  rb_{\crun}(\map{\crun}{\hrun}{i-1}(ln(\hrun({i-1})))$.
                        From (H.3.a.i) and $\buf_{i-1} \in \prefixes{\buf_{i-1}}$, we have $\update{\tup{m_{i-1},a_{i-1}}}{\buf_{i-1}} =  \headConf{\crun(\map{\crun}{\hrun}{i-1}(|\buf_{i-1}|))}$.
                        Observe that:
                        \begin{description}
                            \item[(1):] 
                            From  $\update{\tup{m_{i-1},a_{i-1}}}{\buf_{i-1}} =  \headConf{\crun(\map{\crun}{\hrun}{i-1}(|\buf_{i-1}|))}$, we have, in particular, that $\update{\tup{m_{i-1},a_{i-1}}}{\buf_{i-1}}(\pc) = \headConf{\crun(\map{\crun}{\hrun}{i-1}(|\buf_{i-1}|))}(\pc)$.
                            \item[(2):] 
                            From (H.a.iii) applied to $\buf_{i-1}$ and the well-formedness of buffers, we get that $\maxOf{\map{\crun}{\hrun}{i-1}, \crun} = \infty$ (because the shortest prefix is always associated with window $\infty$).
                            From (H.b) and $\maxOf{\map{\crun}{\hrun}{i-1}, \crun} = \infty$, we have that $\minOf{\map{\crun}{\hrun}{i-1}, \crun} \geq \wInterf - |\buf_{i-1}|$ (because one can decrement $\wInterf$ at most once per prefix and there are  at most $|\buf_{i-1}|$ prefixes whose window is not $\infty$).
                            From this, $|\buf| \leq \wMuarch$, and $\wInterf > \wMuarch + 1$, we get $\minOf{\map{\crun}{\hrun}{i-1}, \crun} \geq \wInterf - \wMuarch > 1$.
                        \end{description}

                        We now show that (i)--(iv) hold:
                        \begin{description}
                            \item[(i):] 
                            We need to show $\update{\tup{m_i,a_i}}{\buf} =  \headConf{\crun(\map{\crun}{\hrun}{i}(|\buf|))}$.
                            From (H.3.a.i) and $\buf_{i-1} \in \prefixes{\buf_{i-1}}$, we have $\update{\tup{m_{i-1},a_{i-1}}}{\buf_{i-1}} =  \headConf{\crun(\map{\crun}{\hrun}{i-1}(|\buf_{i-1}|))}$.
                            From (2) and $p( \apply{a_{i-1}}{\buf_{i-1}} (\pc)) = \pjz{x}{\lbl}$, we get that $\crun(\map{\crun}{\hrun}{i-1}(|\buf_{i-1}|) + 1)$ is obtained by executing the \textsc{Branch} rule of $\CtSpecInterfStep{}{}$ starting from $\crun(\map{\crun}{\hrun}{i-1}(|\buf_{i-1}|))$.
                            From this and Lemma~\ref{lemma:vanilla:contract-rollback}, we have that $\headConf{rb_{\crun}(\map{\crun}{\hrun}{i-1}(ln(\hrun({i-1}))) } =  \headConf{\crun(\map{\crun}{\hrun}{i-1}(|\buf_{i-1}|) )}[\pc \mapsto \ell ]$  because $\headConf{\crun(\map{\crun}{\hrun}{i-1}(ln(\hrun(i-1))))}(x) = 0$.
                            From $\update{\tup{m_{i-1},a_{i-1}}}{\buf_{i-1}} =  \headConf{\crun(\map{\crun}{\hrun}{i-1}(|\buf_{i-1}|))}$, we therefore get $\headConf{rb_{\crun}(\map{\crun}{\hrun}{i-1}(ln(\hrun({i-1}))) } = \update{\tup{m_{i-1},a_{i-1}}}{\buf_{i-1}}[\pc \mapsto \ell ]$.
                            By leveraging $\update{\cdot}{\cdot}$'s definition and $\ell = \ell'$, we get $\headConf{rb_{\crun}(\map{\crun}{\hrun}{i-1}(ln(\hrun({i-1}))) } = \update{\tup{m_{i-1},a_{i-1}}}{(\buf_{i-1} \concat \tagged{\passign{\pc}{\ell'}}{\apply{a_{i-1}}{\buf_{i-1}} (\pc)}) } $.
                            From $\buf_i =  \tagged{\passign{\pc}{\lbl'}}{\apply{a_{i-1}}{\buf_{i-1}} (\pc)} \concat \buf_{i-1}$, $a_{i} = a_{i-1}$, and $m_i = m_{i-1}$, we get $\headConf{rb_{\crun}(\map{\crun}{\hrun}{i-1}(ln(\hrun({i-1}))) } = \update{\tup{m_{i},a_{i}}}{\buf_{i} } $.
                            Finally, from $\map{\crun}{\hrun}{i} = \map{\crun}{\hrun}{i-1}[\mathit{ln}(\hrun(i-1) ) +1 \mapsto rb_{\crun}(\map{\crun}{\hrun}{i-1}(ln(\hrun({i-1}))) ]$,  $|\buf_i|  = \mathit{ln}(\hrun(i-1))+1$, $\buf = \buf_i$, we get $ \update{\tup{m_{i},a_{i}}}{\buf } = \headConf{\crun(\map{\crun}{\hrun}{i}(|\buf|))}$.

                            \item[(ii):]
                            We need to show $\nrMispred{\tup{m_{i},a_{i}}}{\buf} + 1 \geq |\crun(\map{\crun}{\hrun}{i}(|\buf|))|$.
                            From (H.3.a.ii) and $\buf_{i-1} \in \prefixes{\buf_{i-1}}$,  we have $\nrMispred{\tup{m_{i-1},a_{i-1}}}{\buf_{i-1}} + 1 \geq |\crun(\map{\crun}{\hrun}{i-1}(|\buf_{i-1}|))|$.
                            From $\buf = \tagged{\passign{\pc}{\lbl'}}{\apply{a_{i-1}}{\buf_{i-1}} (\pc)} \concat \buf_{i-1}$ and $\headConf{\crun(\map{\crun}{\hrun}{i-1}(ln(\hrun(i-1))))}(x) = 0$, we get that $\nrMispred{\tup{m_{i-1},a_{i-1}}}{\buf} = \nrMispred{\tup{m_{i-1},a_{i-1}}}{\buf_{i-1}}$.
                            From (2) and $p( \apply{a_{i-1}}{\buf_{i-1}} (\pc)) = \pjz{x}{\lbl}$, we get that $\crun(\map{\crun}{\hrun}{i-1}(|\buf_{i-1}|) + 1)$ is obtained by executing the \textsc{Branch} rule of $\CtSpecInterfStep{}{}$ starting from $\crun(\map{\crun}{\hrun}{i-1}(|\buf_{i-1}|))$.
                            From this and Lemma~\ref{lemma:vanilla:contract-rollback}, $ |rb_{\crun}(\map{\crun}{\hrun}{i-1}(ln(\hrun({i-1})))| = |\crun(\map{\crun}{\hrun}{i-1}(|\buf_{i-1}|))|$.
                            Therefore,  we get $\nrMispred{\tup{m_{i-1},a_{i-1}}}{\buf} + 1 \geq |rb_{\crun}(\map{\crun}{\hrun}{i-1}(ln(\hrun({i-1})))| $.
                            Finally, from  $a_{i} = a_{i-1}$,  $m_i = m_{i-1}$, $\map{\crun}{\hrun}{i} = \map{\crun}{\hrun}{i-1}[\mathit{ln}(\hrun(i-1) ) +1 \mapsto rb_{\crun}(\map{\crun}{\hrun}{i-1}(ln(\hrun({i-1}))) ]$, and $|\buf|  = \mathit{ln}(\hrun(i-1))+1$, we get $\nrMispred{\tup{m_{i},a_{i}}}{\buf} + 1 \geq |\crun(\map{\crun}{\hrun}{i}(|\buf|))|$.

                            \item[(iii):]
                            We need to show $\nrMispred{\tup{m_{i},a_{i}}}{\buf} = 0 \leftrightarrow \headWindow{\crun(\map{\crun}{\hrun}{i}(|\buf|))} = \infty$.
                            From (H.3.a.iii) and $\buf_{i-1} \in \prefixes{\buf_{i-1}}$,  we have $\nrMispred{\tup{m_{i-1},a_{i-1}}}{\buf_{i-1}} = 0 \leftrightarrow \headWindow{\crun(\map{\crun}{\hrun}{i-1}(|\buf_{i-1}|))} = \infty$.
                            From (2) and $p( \apply{a_{i-1}}{\buf_{i-1}} (\pc)) = \pjz{x}{\lbl}$, we get that $\crun(\map{\crun}{\hrun}{i-1}(|\buf_{i-1}|) + 1)$ is obtained by executing the \textsc{Branch} rule of $\CtSpecInterfStep{}{}$ starting from $\crun(\map{\crun}{\hrun}{i-1}(|\buf_{i-1}|))$.
                            From this and Lemma~\ref{lemma:vanilla:contract-rollback}, we get $\headWindow{\crun(rb_{\crun}(\map{\crun}{\hrun}{i-1}(ln(\hrun({i-1}))))} = \headWindow{\crun(\map{\crun}{\hrun}{i-1}(|\buf_{i-1}|))} - 1$.
                            From $\buf = \tagged{\passign{\pc}{\lbl'}}{\apply{a_{i-1}}{\buf_{i-1}} (\pc)} \concat \buf_{i-1}$ and $\headConf{\crun(\map{\crun}{\hrun}{i-1}(ln(\hrun(i-1))))}(x) = 0$, we get that $\nrMispred{\tup{m_{i-1},a_{i-1}}}{\buf} = \nrMispred{\tup{m_{i-1},a_{i-1}}}{\buf_{i-1}}$.
                            From $a_i =  a_{i-1}$, $m_i = m_{i-1}$, $\map{\crun}{\hrun}{i} = \map{\crun}{\hrun}{i-1}[\mathit{ln}(\hrun(i-1) ) +1 \mapsto \crun(rb_{\crun}(\map{\crun}{\hrun}{i-1}(ln(\hrun({i-1}))))]$, and $|\buf|  = \mathit{ln}(\hrun(i-1))+1$, we get that $\nrMispred{\tup{m_{i},a_{i}}}{\buf} = \nrMispred{\tup{m_{i-1},a_{i-1}}}{\buf_{i-1}}$ and                $\headWindow{\crun(\map{\crun}{\hrun}{i}(|\buf|))} = \headWindow{\crun(\map{\crun}{\hrun}{i-1}(|\buf_{i-1}|))}$.
                            Therefore, $\nrMispred{\tup{m_{i},a_{i}}}{\buf} = 0 \leftrightarrow \headWindow{\crun(\map{\crun}{\hrun}{i}(|\buf|))} = \infty$ immediately follows from $\nrMispred{\tup{m_{i-1},a_{i-1}}}{\buf_{i-1}} = 0 \leftrightarrow \headWindow{\crun(\map{\crun}{\hrun}{i-1}(|\buf_{i-1}|))} = \infty$.
                            
                            \item[(iv):]
                            We need to show $\headWindow{\crun(\map{\crun}{\hrun}{i}(|\buf|))} > 0$.
                           	From Lemma~\ref{lemma:vanilla:contract-rollback}, we get that  either $\headWindow{ \crun( \map{\crun}{\hrun}{i}(|\buf|) ) } = \infty$ or  $\headWindow{ \crun( \map{\crun}{\hrun}{i}(|\buf|) ) } = \headWindow{ \crun( \map{\crun}{\hrun}{i}(|\buf_{i-1}|) ) } - 1$. 
                           	In the first case, then $\headWindow{\crun(\map{\crun}{\hrun}{i}(|\buf|))} > 0$ holds.
                           	In the second case, from (2), $\map{\crun}{\hrun}{i}(|\buf_{i-1}|) =  \map{\crun}{\hrun}{i-1}( |\buf_{i-1}| )) $, and $\minOf{\map{\crun}{\hrun}{i-1}, \crun} > 1$, we get  $\headWindow{\crun(\map{\crun}{\hrun}{i}(|\buf|))} > 0$.
                        \end{description}
                    \end{description}

                    \item[(b):]
                    From (H.b), we have that  $\tup{j,j'} \in \nextPairs{\map{\crun}{\hrun}{i-1}}$,
                    $\headWindow{ \crun( \map{\crun}{\hrun}{i-1}(j) ) } = \headWindow{ \crun( \map{\crun}{\hrun}{i-1}(j') )  } = \infty$ or
                    $\headWindow{ \crun( \map{\crun}{\hrun}{i-1}(j) ) } = \infty$ and $\headWindow{ \crun( \map{\crun}{\hrun}{i-1}(j') ) } = \wInterf$ or
                    $\headWindow{ \crun( \map{\crun}{\hrun}{i-1}(j) ) } \neq \infty \wedge \headWindow{ \crun( \map{\crun}{\hrun}{i-1}(j') ) } = \headWindow{ \crun( \map{\crun}{\hrun}{i-1}(j) ) } - 1$.
                    From $\map{\crun}{\hrun}{i} = \map{\crun}{\hrun}{i-1}[ln(\hrun({i-1}))+1 \mapsto rb_{\crun}(\map{\crun}{\hrun}{i-1}(ln(\hrun({i-1}))) ]$, the only interesting pair is $\tup{j,j'}  = \tup{|\buf_{i-1}|, |\buf_{i-1}|+1}$ since for all others pairs the relation follows from (H.b).
                    For this pair, we have $\map{\crun}{\hrun}{i}(|\buf_{i-1}|) =  \map{\crun}{\hrun}{i-1}(\mathit{ln}(\hrun(i-1) )) $ and $ \map{\crun}{\hrun}{i}(|\buf_{i-1}|+1)   = rb_{\crun}(\map{\crun}{\hrun}{i-1}(ln(\hrun({i-1})))$.
                    From Lemma~\ref{lemma:vanilla:contract-rollback}, we get that  either $\headWindow{ \crun( \map{\crun}{\hrun}{i}(j) ) } = \headWindow{ \crun( \map{\crun}{\hrun}{i}(j') )  } = \infty$ or  $\headWindow{ \crun( \map{\crun}{\hrun}{i}(j) ) } \neq \infty \wedge \headWindow{ \crun( \map{\crun}{\hrun}{i}(j') ) } = \headWindow{ \crun( \map{\crun}{\hrun}{i}(j) ) } - 1$.
                    Hence, (b) holds.
                \end{description}                
                
                \item[$\lbl' = \lbl \wedge \update{\tup{m_{i-1},a_{i-1}}}{\buf_{i-1}}(x) \neq 0 $:] 
                From $\lbl' = \lbl \wedge \update{\tup{m_{i-1},a_{i-1}}}{\buf_{i-1}}(x) \neq 0 $ and (H.a), we immediately get that $\BpPredict(\hrun(i-1)) = \lbl \wedge \crun(\map{\crun}{\hrun}{i-1}(ln(\hrun(i-1))))(x) \neq 0$.
                Therefore, $\map{\crun}{\hrun}{i} = \map{\crun}{\hrun}{i-1}[ln(\hrun({i-1}))+1 \mapsto \map{\crun}{\hrun}{i-1}(ln(\hrun(i-1)))+1]$.

                We now show that all conditions hold:
                \begin{description}
                    \item[(a):]
                    Let $\buf$ be an arbitrary prefix in  $\prefixes{\buf_i}$.
                    There are two cases:
                    \begin{description}
                        \item[$\buf \in \prefixes{\buf_{i-1}}$:]
                        Then, $|\buf| \geq |\buf_i|$.
                        We now show that (i)--(iv) hold:
                        \begin{description}
                            \item[(i):]
                            We need to show $\update{\tup{m_i,a_i}}{\buf} =  \headConf{\crun(\map{\crun}{\hrun}{i}(|\buf|))}$.
                            From (H.3.a.i) and $\buf \in \prefixes{\buf_{i-1}}$, we have $\update{\tup{m_{i-1},a_{i-1}}}{\buf} =  \headConf{\crun(\map{\crun}{\hrun}{i-1}(|\buf|))}$.
                            Finally, from this,  $m_{i} = m_{a-1}$, $a_{i} = a_{i-1}$, and $\map{\crun}{\hrun}{i}(\buf) = \map{\crun}{\hrun}{i-1}(\buf)$, we get $\update{\tup{m_i,a_i}}{\buf} =  \headConf{\crun(\map{\crun}{\hrun}{i}(|\buf|))}$.
        
                            \item[(ii):]
                            We need to show $\nrMispred{\tup{m_{i},a_{i}}}{\buf} + 1 \geq |\crun(\map{\crun}{\hrun}{i}(|\buf|))|$.
                            From (H.3.a.ii) and $\buf \in \prefixes{\buf_{i-1}}$,  we have $\nrMispred{\tup{m_{i-1},a_{i-1}}}{\buf} + 1 \geq |\crun(\map{\crun}{\hrun}{i-1}(|\buf|))|$.
                            Finally, from this,  $m_{i} = m_{a-1}$, $a_{i} = a_{i-1}$, and $\map{\crun}{\hrun}{i}(\buf) = \map{\crun}{\hrun}{i-1}(\buf)$, we get $\nrMispred{\tup{m_{i},a_{i}}}{\buf} + 1 \geq |\crun(\map{\crun}{\hrun}{i}(|\buf|))|$.
        
                            \item[(iii):]
                            We need to show $\nrMispred{\tup{m_{i},a_{i}}}{\buf} = 0 \leftrightarrow \headWindow{\crun(\map{\crun}{\hrun}{i}(|\buf|))} = \infty$.
                            From (H.3.a.iii) and $\buf \in \prefixes{\buf_{i-1}}$,  we have $\nrMispred{\tup{m_{i-1},a_{i-1}}}{\buf} = 0 \leftrightarrow \headWindow{\crun(\map{\crun}{\hrun}{i-1}(|\buf|))} = \infty$.
                            Finally, from this,  $m_{i} = m_{a-1}$, $a_{i} = a_{i-1}$, and $\map{\crun}{\hrun}{i}(\buf) = \map{\crun}{\hrun}{i-1}(\buf)$, we get $\nrMispred{\tup{m_{i},a_{i}}}{\buf} = 0 \leftrightarrow \headWindow{\crun(\map{\crun}{\hrun}{i}(|\buf|))} = \infty$.
                            
							\item[(iv):]
	                        We need to show $\headWindow{\crun(\map{\crun}{\hrun}{i}(|\buf|))} > 0$.
	                        From (H.3.a.iv) and $\buf \in \prefixes{\buf_{i-1}}$, we get $\headWindow{\crun(\map{\crun}{\hrun}{i-1}(|\buf|))} > 0$.
	                        From this and $\map{\crun}{\hrun}{i}(\buf) = \map{\crun}{\hrun}{i-1}(\buf)$, we get $\headWindow{\crun(\map{\crun}{\hrun}{i}(|\buf|))} > 0$.
        
                        \end{description} 

                        \item[$\buf \not\in \prefixes{\buf_{i-1}}$:]
                        Then, $\buf = \buf_i =  \tagged{\passign{\pc}{\lbl'}}{\apply{a_{i-1}}{\buf_{i-1}} (\pc)} \concat \buf_{i-1}$, $|\buf| = |\buf_{i-1}| +1 = ln(\hrun(i-1))+1$, and $\map{\crun}{\hrun}{i}(|\buf|) = \map{\crun}{\hrun}{i-1}(ln(\hrun(i-1)))+1$.
                        From (H.3.a.i) and $\buf_{i-1} \in \prefixes{\buf_{i-1}}$, we have $\update{\tup{m_{i-1},a_{i-1}}}{\buf_{i-1}} =  \headConf{\crun(\map{\crun}{\hrun}{i-1}(|\buf_{i-1}|))}$.
                        Observe that:
                        \begin{description}
                            \item[(1):] 
                            From  $\update{\tup{m_{i-1},a_{i-1}}}{\buf_{i-1}} =  \headConf{\crun(\map{\crun}{\hrun}{i-1}(|\buf_{i-1}|))}$, we have, in particular, that $\update{\tup{m_{i-1},a_{i-1}}}{\buf_{i-1}}(\pc) = \headConf{\crun(\map{\crun}{\hrun}{i-1}(|\buf_{i-1}|))}(\pc)$.
                            \item[(2):] 
                            From (H.a.iii) applied to $\buf_{i-1}$ and the well-formedness of buffers, we get that $\maxOf{\map{\crun}{\hrun}{i-1}, \crun} = \infty$ (because the shortest prefix is always associated with window $\infty$).
                            From (H.b) and $\maxOf{\map{\crun}{\hrun}{i-1}, \crun} = \infty$, we have that $\minOf{\map{\crun}{\hrun}{i-1}, \crun} \geq \wInterf - |\buf_{i-1}|$ (because one can decrement $\wInterf$ at most once per prefix and there are  at most $|\buf_{i-1}|$ prefixes).
                            From this, $|\buf| \leq \wMuarch$, and $\wInterf > \wMuarch + 1$, we get $\minOf{\map{\crun}{\hrun}{i-1}, \crun} \geq \wInterf - \wMuarch > 1$.
                        \end{description}

                        We now show that (i)--(iv) hold:
                        \begin{description}
                            \item[(i):]
                            We need to show $\update{\tup{m_i,a_i}}{\buf} =  \headConf{\crun(\map{\crun}{\hrun}{i}(|\buf|))}$.
                            From (H.3.a.i) and $\buf_{i-1} \in \prefixes{\buf_{i-1}}$, we have $\update{\tup{m_{i-1},a_{i-1}}}{\buf_{i-1}} =  \headConf{\crun(\map{\crun}{\hrun}{i-1}(|\buf_{i-1}|))}$.
                            From (2) and $p( \apply{a_{i-1}}{\buf_{i-1}} (\pc)) = \pjz{x}{\lbl}$, we get that $\crun(\map{\crun}{\hrun}{i-1}(|\buf_{i-1}|) + 1)$ is obtained by executing the \textsc{Branch} rule of $\CtSpecInterfStep{}{}$ starting from $\crun(\map{\crun}{\hrun}{i-1}(|\buf_{i-1}|))$.
                            Therefore,  $\headConf{\crun(\map{\crun}{\hrun}{i-1}(|\buf_{i-1}|) + 1)} = \headConf{\crun(\map{\crun}{\hrun}{i-1}(|\buf_{i-1}|) )}[\pc \mapsto \ell ]$ because $\headConf{\crun(\map{\crun}{\hrun}{i-1}(ln(\hrun(i-1))))}(x) \neq 0$.
                            From $\update{\tup{m_{i-1},a_{i-1}}}{\buf_{i-1}} =  \headConf{\crun(\map{\crun}{\hrun}{i-1}(|\buf_{i-1}|))}$, we therefore get $\headConf{\crun(\map{\crun}{\hrun}{i-1}(|\buf_{i-1}|) + 1)} = \update{\tup{m_{i-1},a_{i-1}}}{\buf_{i-1}}[\pc \mapsto \ell ]$.
                            By leveraging $\update{\cdot}{\cdot}$'s definition and $\ell = \ell'$, we get $\headConf{\crun(\map{\crun}{\hrun}{i-1}(|\buf_{i-1}|) + 1)} = \update{\tup{m_{i-1},a_{i-1}}}{(\buf_{i-1} \concat \tagged{\passign{\pc}{\ell'}}{\apply{a_{i-1}}{\buf_{i-1}} (\pc)}) } $.
                            From $\buf_i =  \tagged{\passign{\pc}{\lbl'}}{\apply{a_{i-1}}{\buf_{i-1}} (\pc)} \concat \buf_{i-1}$, $a_{i} = a_{i-1}$, and $m_i = m_{i-1}$, we get $\headConf{\crun(\map{\crun}{\hrun}{i-1}(|\buf_{i-1}|) + 1)} = \update{\tup{m_{i},a_{i}}}{\buf_{i} } $.
                            Finally, from $\map{\crun}{\hrun}{i} = \map{\crun}{\hrun}{i-1}[\mathit{ln}(\hrun(i-1) ) +1 \mapsto \map{\crun}{\hrun}{i-1}(\mathit{ln}(\hrun(i-1) )) + 1]$,  $|\buf_i|  = \mathit{ln}(\hrun(i-1))+1$, $\buf = \buf_i$, we get $ \update{\tup{m_{i},a_{i}}}{\buf } = \headConf{\crun(\map{\crun}{\hrun}{i}(|\buf|))}$.
                            
                            \item[(ii):]
                            We need to show $\nrMispred{\tup{m_{i},a_{i}}}{\buf} + 1 \geq |\crun(\map{\crun}{\hrun}{i}(|\buf|))|$.
                            From (H.3.a.ii) and $\buf_{i-1} \in \prefixes{\buf_{i-1}}$,  we have $\nrMispred{\tup{m_{i-1},a_{i-1}}}{\buf_{i-1}} + 1 \geq |\crun(\map{\crun}{\hrun}{i-1}(|\buf_{i-1}|))|$.
                            From $\buf = \tagged{\passign{\pc}{\lbl'}}{\apply{a_{i-1}}{\buf_{i-1}} (\pc)} \concat \buf_{i-1}$ and $\headConf{\crun(\map{\crun}{\hrun}{i-1}(ln(\hrun(i-1))))}(x) \neq 0$, we get that $\nrMispred{\tup{m_{i-1},a_{i-1}}}{\buf} = \nrMispred{\tup{m_{i-1},a_{i-1}}}{\buf_{i-1}}+1$.
                            From (2) and $p( \apply{a_{i-1}}{\buf_{i-1}} (\pc)) = \pjz{x}{\lbl}$, we get that $\crun(\map{\crun}{\hrun}{i-1}(|\buf_{i-1}|) + 1)$ is obtained by executing the \textsc{Branch} rule of $\CtSpecInterfStep{}{}$ starting from $\crun(\map{\crun}{\hrun}{i-1}(|\buf_{i-1}|))$.
                            Therefore, $ |\crun(\map{\crun}{\hrun}{i-1}(|\buf_{i-1}|)+1)| = |\crun(\map{\crun}{\hrun}{i-1}(|\buf_{i-1}|))|  + 1$.
                            By adding $1$ to both sides in $\nrMispred{\tup{m_{i-1},a_{i-1}}}{\buf_{i-1}} + 1 \geq |\crun(\map{\crun}{\hrun}{i-1}(|\buf_{i-1}|))|$, we get $\nrMispred{\tup{m_{i-1},a_{i-1}}}{\buf_{i-1}} +  1 +1 \geq |\crun(\map{\crun}{\hrun}{i-1}(|\buf_{i-1}|))|+1$.
                            Therefore,  we get $\nrMispred{\tup{m_{i-1},a_{i-1}}}{\buf} + 1 \geq |\crun(\map{\crun}{\hrun}{i-1}(|\buf_{i-1}|)+1)|$ (from $\nrMispred{\tup{m_{i-1},a_{i-1}}}{\buf} = \nrMispred{\tup{m_{i-1},a_{i-1}}}{\buf_{i-1}}+1$ and $ |\crun(\map{\crun}{\hrun}{i-1}(|\buf_{i-1}|)+1)| = |\crun(\map{\crun}{\hrun}{i-1}(|\buf_{i-1}|))|  + 1$).
                            Finally, from  $a_{i} = a_{i-1}$,  $m_i = m_{i-1}$, $\map{\crun}{\hrun}{i} = \map{\crun}{\hrun}{i-1}[\mathit{ln}(\hrun(i-1) ) +1 \mapsto \map{\crun}{\hrun}{i-1}(\mathit{ln}(\hrun(i-1) )) + 1]$, and $|\buf|  = \mathit{ln}(\hrun(i-1))+1$, we get $\nrMispred{\tup{m_{i},a_{i}}}{\buf} + 1 \geq |\crun(\map{\crun}{\hrun}{i}(|\buf|))|$.

                            \item[(iii):] 
                            We need to show $\nrMispred{\tup{m_{i},a_{i}}}{\buf} = 0 \leftrightarrow \headWindow{\crun(\map{\crun}{\hrun}{i}(|\buf|))} = \infty$.
                            From (H.3.a.iii) and $\buf_{i-1} \in \prefixes{\buf_{i-1}}$,  we have $\nrMispred{\tup{m_{i-1},a_{i-1}}}{\buf_{i-1}} = 0 \leftrightarrow \headWindow{\crun(\map{\crun}{\hrun}{i-1}(|\buf_{i-1}|))} = \infty$.
                            From (2) and $p( \apply{a_{i-1}}{\buf_{i-1}} (\pc)) = \pjz{x}{\lbl}$, we get that $\crun(\map{\crun}{\hrun}{i-1}(|\buf_{i-1}|) + 1)$ is obtained by executing the \textsc{Branch} rule of $\CtSpecInterfStep{}{}$ starting from $\crun(\map{\crun}{\hrun}{i-1}(|\buf_{i-1}|))$.
                            Therefore, $\headWindow{\crun(\map{\crun}{\hrun}{i-1}(|\buf_{i-1}|)+1)} \neq \infty$ (because it is either $\wInterf$ or $\headWindow{\crun(\map{\crun}{\hrun}{i-1}(|\buf_{i-1}|))}-1$).
                            From $\buf = \tagged{\passign{\pc}{\lbl'}}{\apply{a_{i-1}}{\buf_{i-1}} (\pc)} \concat \buf_{i-1}$ and $\headConf{\crun(\map{\crun}{\hrun}{i-1}(ln(\hrun(i-1))))}(x) \neq 0$, we get that $\nrMispred{\tup{m_{i-1},a_{i-1}}}{\buf} = \nrMispred{\tup{m_{i-1},a_{i-1}}}{\buf_{i-1}}+1$.
                            Therefore, $\nrMispred{\tup{m_{i-1},a_{i-1}}}{\buf} \neq 0$.
                            Thus, from this, $\headWindow{\crun(\map{\crun}{\hrun}{i-1}(|\buf_{i-1}|)+1)} \neq \infty$,  $m_i = m_{i-1}$, $\map{\crun}{\hrun}{i} = \map{\crun}{\hrun}{i-1}[\mathit{ln}(\hrun(i-1) ) +1 \mapsto \map{\crun}{\hrun}{i-1}(\mathit{ln}(\hrun(i-1) )) + 1]$, and $|\buf|  = \mathit{ln}(\hrun(i-1))+1$, we get $\nrMispred{\tup{m_{i},a_{i}}}{\buf} = 0 \leftrightarrow \headWindow{\crun(\map{\crun}{\hrun}{i}(|\buf|))} = \infty$.
                            
                            \item[(iv):]
                            We need to show $\headWindow{\crun(\map{\crun}{\hrun}{i}(|\buf|))} > 0$.
                            As we have shown before, $\crun(\map{\crun}{\hrun}{i-1}(\mathit{ln}(\hrun(i-1) )) + 1)$ is obtained from $\crun(\map{\crun}{\hrun}{i-1}(\mathit{ln}(\hrun(i-1) )))$ by applying the \textsc{Branch} rule of $\CtSpecInterfStep{}{}$.
                            Then, we get that either $\headWindow{ \crun( \map{\crun}{\hrun}{i}(|\buf|) )  } = \wInterf$ or  $\headWindow{ \crun( \map{\crun}{\hrun}{i}(|\buf|) ) } = \headWindow{ \crun( \map{\crun}{\hrun}{i}(|\buf_{i-1}|) ) } - 1$. 
                           	In the first case, then $\headWindow{\crun(\map{\crun}{\hrun}{i}(|\buf|))} > 0$ follows from $\wInterf > \wMuarch + 1$ and $\wMuarch \geq 0$.
                           	In the second case, from (2), $\map{\crun}{\hrun}{i}(|\buf_{i-1}|) =  \map{\crun}{\hrun}{i-1}( |\buf_{i-1}| )) $, and $\minOf{\map{\crun}{\hrun}{i-1}, \crun} > 1$, we get  $\headWindow{\crun(\map{\crun}{\hrun}{i}(|\buf|))} > 0$.
                        \end{description}
                    \end{description}

                    \item[(b):] 
                    From (H.b), we have that  $\tup{j,j'} \in \nextPairs{\map{\crun}{\hrun}{i-1}}$,
                    $\headWindow{ \crun( \map{\crun}{\hrun}{i-1}(j) ) } = \headWindow{ \crun( \map{\crun}{\hrun}{i-1}(j') )  } = \infty$ or
                    $\headWindow{ \crun( \map{\crun}{\hrun}{i-1}(j) ) } = \infty$ and $\headWindow{ \crun( \map{\crun}{\hrun}{i-1}(j') ) } = \wInterf$ or
                    $\headWindow{ \crun( \map{\crun}{\hrun}{i-1}(j) ) } \neq \infty \wedge \headWindow{ \crun( \map{\crun}{\hrun}{i-1}(j') ) } = \headWindow{ \crun( \map{\crun}{\hrun}{i-1}(j) ) } - 1$.
                    From $\map{\crun}{\hrun}{i} = \map{\crun}{\hrun}{i-1}[\mathit{ln}(\hrun(i-1) ) +1 \mapsto \map{\crun}{\hrun}{i-1}(\mathit{ln}(\hrun(i-1) )) + 1]$, the only interesting pair is $\tup{j,j'}  = \tup{|\buf_{i-1}|, |\buf_{i-1}|+1}$ since for all others pairs the relation follows from (H.b).
                    For this pair, we have $\map{\crun}{\hrun}{i}(|\buf_{i-1}|) =  \map{\crun}{\hrun}{i-1}(\mathit{ln}(\hrun(i-1) )) $ and $ \map{\crun}{\hrun}{i}(|\buf_{i-1}|+1)   = \map{\crun}{\hrun}{i-1}(\mathit{ln}(\hrun(i-1) )) + 1$.
                    As we have shown before, $\crun(\map{\crun}{\hrun}{i-1}(\mathit{ln}(\hrun(i-1) )) + 1)$ is obtained from $\crun(\map{\crun}{\hrun}{i-1}(\mathit{ln}(\hrun(i-1) )))$ by applying the \textsc{Branch} rule of $\CtSpecInterfStep{}{}$.
                    Therefore, either $\headWindow{ \crun( \map{\crun}{\hrun}{i}(j) ) } = \infty$ and $\headWindow{ \crun( \map{\crun}{\hrun}{i}(j') )  } = \wInterf$ or  $\headWindow{ \crun( \map{\crun}{\hrun}{i}(j) ) } \neq \infty \wedge \headWindow{ \crun( \map{\crun}{\hrun}{i}(j') ) } = \headWindow{ \crun( \map{\crun}{\hrun}{i}(j) ) } - 1$.
                    Hence, (b) holds.
                \end{description}

                \item[$\lbl' = \apply{a_{i-1}}{\buf_{i-1}} (\pc) + 1 \wedge \update{\tup{m_{i-1},a_{i-1}}}{\buf_{i-1}}(x) = 0$:]
                The proof of this case is similar to that of the  $\lbl' = \lbl \wedge \update{\tup{m_{i-1},a_{i-1}}}{\buf_{i-1}}(x) \neq 0 $ case.

                \item[$\lbl' = \apply{a_{i-1}}{\buf_{i-1}} (\pc) + 1 \wedge \update{\tup{m_{i-1},a_{i-1}}s}{\buf_{i-1}}(x) \neq 0 $:]  
                The proof of this case is similar to that of the  $\lbl' = \lbl \wedge \update{\tup{m_{i-1},a_{i-1}}}{\buf_{i-1}}(x) = 0 $ case. 
            \end{description}

			\item[Rule \textsc{Fetch-Jump-Hit}:]
            Then, $\buf_i = \tagged{\passign{\pc}{e}}{\notags} \concat \buf_{i-1}$, $a_{i} = a_{i-1}$, $m_{i} = m_{i-1}$, $p(\apply{a_{i-1}}{\buf_{i-1}}(\pc) ) = \pjmp{e}$, and $\map{\crun}{\hrun}{i} = \map{\crun}{\hrun}{i-1}[\mathit{ln}(\hrun(i-1) ) +1 \mapsto \map{\crun}{\hrun}{i-1}(\mathit{ln}(\hrun(i-1) )) + 1]$.

            We now show that all conditions hold:
            \begin{description}
                \item[(a):]
                Let $\buf$ be an arbitrary prefix in $\prefixes{\buf_{i}}$.
                There are two cases:
                \begin{description}
                    \item[$\buf \in \prefixes{\buf_{i-1}}$:]
                    Then, $|\buf| \geq |\buf_i|$.
                    We now show that (i)--(iv) hold:
                    \begin{description}
                        \item[(i):]
                        We need to show $\update{\tup{m_i,a_i}}{\buf} =  \headConf{\crun(\map{\crun}{\hrun}{i}(|\buf|))}$.
                        From (H.3.a.i) and $\buf \in \prefixes{\buf_{i-1}}$, we have $\update{\tup{m_{i-1},a_{i-1}}}{\buf} =  \headConf{\crun(\map{\crun}{\hrun}{i-1}(|\buf|))}$.
                        Finally, from this,  $m_{i} = m_{a-1}$, $a_{i} = a_{i-1}$, and $\map{\crun}{\hrun}{i}(\buf) = \map{\crun}{\hrun}{i-1}(\buf)$, we get $\update{\tup{m_i,a_i}}{\buf} =  \headConf{\crun(\map{\crun}{\hrun}{i}(|\buf|))}$.
    
                        \item[(ii):]
                        We need to show $\nrMispred{\tup{m_{i},a_{i}}}{\buf} + 1 \geq |\crun(\map{\crun}{\hrun}{i}(|\buf|))|$.
                        From (H.3.a.ii) and $\buf \in \prefixes{\buf_{i-1}}$,  we have $\nrMispred{\tup{m_{i-1},a_{i-1}}}{\buf} + 1 \geq |\crun(\map{\crun}{\hrun}{i-1}(|\buf|))|$.
                        Finally, from this,  $m_{i} = m_{a-1}$, $a_{i} = a_{i-1}$, and $\map{\crun}{\hrun}{i}(\buf) = \map{\crun}{\hrun}{i-1}(\buf)$, we get $\nrMispred{\tup{m_{i},a_{i}}}{\buf} + 1 \geq |\crun(\map{\crun}{\hrun}{i}(|\buf|))|$.
    
                        \item[(iii):]
                        We need to show $\nrMispred{\tup{m_{i},a_{i}}}{\buf} = 0 \leftrightarrow \headWindow{\crun(\map{\crun}{\hrun}{i}(|\buf|))} = \infty$.
                        From (H.3.a.iii) and $\buf \in \prefixes{\buf_{i-1}}$,  we have $\nrMispred{\tup{m_{i-1},a_{i-1}}}{\buf} = 0 \leftrightarrow \headWindow{\crun(\map{\crun}{\hrun}{i-1}(|\buf|))} = \infty$.
                        Finally, from this,  $m_{i} = m_{a-1}$, $a_{i} = a_{i-1}$, and $\map{\crun}{\hrun}{i}(\buf) = \map{\crun}{\hrun}{i-1}(\buf)$, we get $\nrMispred{\tup{m_{i},a_{i}}}{\buf} = 0 \leftrightarrow \headWindow{\crun(\map{\crun}{\hrun}{i}(|\buf|))} = \infty$.

    					\item[(iv):]
                        We need to show $\headWindow{\crun(\map{\crun}{\hrun}{i}(|\buf|))} > 0$.
                        From (H.3.a.iv) and $\buf \in \prefixes{\buf_{i-1}}$, we get $\headWindow{\crun(\map{\crun}{\hrun}{i-1}(|\buf|))} > 0$.
                        From this and $\map{\crun}{\hrun}{i}(\buf) = \map{\crun}{\hrun}{i-1}(\buf)$, we get $\headWindow{\crun(\map{\crun}{\hrun}{i}(|\buf|))} > 0$.
                        
                    \end{description}

                    \item[$\buf \not\in \prefixes{\buf_{i-1}}$:]    
                    From $\buf_i = \tagged{\passign{\pc}{e}}{\notags} \concat \buf_{i-1}$, we get that $\buf = \buf_i$ and $|\buf| = |buf_{i-1}| + 1 = \mathit{ln}(\hrun(i-1))+1$.
                    From (H.3.a.i) and $\buf_{i-1} \in \prefixes{\buf_{i-1}}$, we have $\update{\tup{m_{i-1},a_{i-1}}}{\buf_{i-1}} =  \headConf{\crun(\map{\crun}{\hrun}{i-1}(|\buf_{i-1}|))}$.
                    Observe that:
                    \begin{description}
                        \item[(1):] 
                        From  $\update{\tup{m_{i-1},a_{i-1}}}{\buf_{i-1}} =  \headConf{\crun(\map{\crun}{\hrun}{i-1}(|\buf_{i-1}|))}$, we have, in particular, that $\update{\tup{m_{i-1},a_{i-1}}}{\buf_{i-1}}(\pc) = \headConf{\crun(\map{\crun}{\hrun}{i-1}(|\buf_{i-1}|))}(\pc)$.
                        \item[(2):] 
                        From (H.a.iii) applied to $\buf_{i-1}$ and the well-formedness of buffers, we get that $\maxOf{\map{\crun}{\hrun}{i-1}, \crun} = \infty$ (because the shortest prefix is always associated with window $\infty$).
                        From (H.b) and $\maxOf{\map{\crun}{\hrun}{i-1}, \crun} = \infty$, we have that $\minOf{\map{\crun}{\hrun}{i-1}, \crun} \geq \wInterf - |\buf_{i-1}|$ (because one can decrement $\wInterf$ at most once per prefix and there are  at most $|\buf_{i-1}|$ prefixes).
                        From this, $|\buf| \leq \wMuarch$, and $\wInterf > \wMuarch + 1$, we get $\minOf{\map{\crun}{\hrun}{i-1}, \crun} \geq \wInterf - \wMuarch > 0$.
                    \end{description}

                    We now show that (i)--(iv) hold:
                    \begin{description}
                        \item[(i):]
                        We need to show $\update{\tup{m_i,a_i}}{\buf} =  \headConf{\crun(\map{\crun}{\hrun}{i}(|\buf|))}$.
                        From (H.3.a.i) and $\buf_{i-1} \in \prefixes{\buf_{i-1}}$, we have $\update{\tup{m_{i-1},a_{i-1}}}{\buf_{i-1}} =  \headConf{\crun(\map{\crun}{\hrun}{i-1}(|\buf_{i-1}|))}$.
                        From (2), we get that $\crun(\map{\crun}{\hrun}{i-1}(|\buf_{i-1}|) + 1)$ is obtained by executing the \textsc{Step} rule of $\CtSpecInterfStep{}{}$ starting from $\crun(\map{\crun}{\hrun}{i-1}(|\buf_{i-1}|))$.
                        From (1) and $p(\apply{a_{i-1}}{\buf_{i-1}}(\pc) ) = \pjmp{e}$, we have that $\headConf{\crun(\map{\crun}{\hrun}{i-1}(|\buf_{i-1}|) + 1)} = \headConf{\crun(\map{\crun}{\hrun}{i-1}(|\buf_{i-1}|) )}[\pc \mapsto \exprEval{e}{\headConf{\crun(\map{\crun}{\hrun}{i-1}(|\buf_{i-1}|) )}}]$.
                        From $\update{\tup{m_{i-1},a_{i-1}}}{\buf_{i-1}} =  \headConf{\crun(\map{\crun}{\hrun}{i-1}(|\buf_{i-1}|))}$, we therefore get $\headConf{\crun(\map{\crun}{\hrun}{i-1}(|\buf_{i-1}|) + 1)} = \update{\tup{m_{i-1},a_{i-1}}}{\buf_{i-1}}[\pc \mapsto \exprEval{e}{ \update{\tup{m_{i-1},a_{i-1}}}{\buf_{i-1}} }]$.
                        By leveraging $\update{\cdot}{\cdot}$'s definition, we get $\headConf{\crun(\map{\crun}{\hrun}{i-1}(|\buf_{i-1}|) + 1)} = \update{\tup{m_{i-1},a_{i-1}}}{(\buf_{i-1} \concat \tagged{\passign{\pc}{e}}{\notags}) } $.
                        From $\buf_i = \tagged{\passign{\pc}{e}}{\notags} \concat \buf_{i-1}$, $a_{i} = a_{i-1}$, and $m_i = m_{i-1}$, we get $\headConf{\crun(\map{\crun}{\hrun}{i-1}(|\buf_{i-1}|) + 1)} = \update{\tup{m_{i},a_{i}}}{\buf_{i} } $.
                        Finally, from $\map{\crun}{\hrun}{i} = \map{\crun}{\hrun}{i-1}[\mathit{ln}(\hrun(i-1) ) +1 \mapsto \map{\crun}{\hrun}{i-1}(\mathit{ln}(\hrun(i-1) )) + 1]$,  $|\buf_i|  = \mathit{ln}(\hrun(i-1))+1$, $\buf = \buf_i$, we get $ \update{\tup{m_{i},a_{i}}}{\buf } = \headConf{\crun(\map{\crun}{\hrun}{i}(|\buf|))}$.

                        \item[(ii):]
                        We need to show $\nrMispred{\tup{m_{i},a_{i}}}{\buf} + 1 \geq |\crun(\map{\crun}{\hrun}{i}(|\buf|))|$.
                        From (H.3.a.ii) and $\buf_{i-1} \in \prefixes{\buf_{i-1}}$,  we have $\nrMispred{\tup{m_{i-1},a_{i-1}}}{\buf_{i-1}} + 1 \geq |\crun(\map{\crun}{\hrun}{i-1}(|\buf_{i-1}|))|$.
                        From $\buf =  \tagged{\passign{\pc}{e}}{\notags} \concat \buf_{i-1}$, we get that $\nrMispred{\tup{m_{i-1},a_{i-1}}}{\buf_{i-1}} = \nrMispred{\tup{m_{i-1},a_{i-1}}}{\buf}$.
                        From (2), we get that $\crun(\map{\crun}{\hrun}{i-1}(|\buf_{i-1}|) + 1)$ is obtained by executing the \textsc{Step} rule of $\CtSpecInterfStep{}{}$ starting from $\crun(\map{\crun}{\hrun}{i-1}(|\buf_{i-1}|))$.
                        Therefore, $|\crun(\map{\crun}{\hrun}{i-1}(|\buf_{i-1}|))| = |\crun(\map{\crun}{\hrun}{i-1}(|\buf_{i-1}|)+1)|$.
                        Thus, we get $\nrMispred{\tup{m_{i-1},a_{i-1}}}{\buf} + 1 \geq |\crun(\map{\crun}{\hrun}{i-1}(|\buf_{i-1}|)+1)|$.
                        Finally, from  $a_{i} = a_{i-1}$,  $m_i = m_{i-1}$, $\map{\crun}{\hrun}{i} = \map{\crun}{\hrun}{i-1}[\mathit{ln}(\hrun(i-1) ) +1 \mapsto \map{\crun}{\hrun}{i-1}(\mathit{ln}(\hrun(i-1) )) + 1]$, and $|\buf|  = \mathit{ln}(\hrun(i-1))+1$, we get $\nrMispred{\tup{m_{i},a_{i}}}{\buf} + 1 \geq |\crun(\map{\crun}{\hrun}{i}(|\buf|))|$.

                        \item[(iii):]
                        We need to show $\nrMispred{\tup{m_{i},a_{i}}}{\buf} = 0 \leftrightarrow \headWindow{\crun(\map{\crun}{\hrun}{i}(|\buf|))} = \infty$.
                        From (H.3.a.iii) and $\buf_{i-1} \in \prefixes{\buf_{i-1}}$,  we have $\nrMispred{\tup{m_{i-1},a_{i-1}}}{\buf_{i-1}} = 0 \leftrightarrow \headWindow{\crun(\map{\crun}{\hrun}{i-1}(|\buf_{i-1}|))} = \infty$.
                        From (2), we get that $\crun(\map{\crun}{\hrun}{i-1}(|\buf_{i-1}|) + 1)$ is obtained by executing the \textsc{Step} rule of $\CtSpecInterfStep{}{}$ starting from $\crun(\map{\crun}{\hrun}{i-1}(|\buf_{i-1}|))$.
                        Therefore, $\headWindow{\crun(\map{\crun}{\hrun}{i-1}(|\buf_{i-1}|)+1)} = \headWindow{\crun(\map{\crun}{\hrun}{i-1}(|\buf_{i-1}|))} - 1$.
                        Moreover, we also have that $\nrMispred{\tup{m_{i-1},a_{i-1}}}{\buf_{i-1}} = \nrMispred{\tup{m_{i-1},a_{i-1}}}{\buf}$ because $\buf =  \tagged{\passign{\pc}{e}}{\notags} \concat \buf_{i-1}$.
                        There are  two cases:
                        \begin{description}
                            \item[$\nrMispred{\tup{m_{i-1},a_{i-1}}}{\buf_{i-1}} = 0$:]
                            Then $\headWindow{\crun(\map{\crun}{\hrun}{i-1}(|\buf_{i-1}|))} = \infty$ from (H.3.a.iii).
                            Therefore, we have that $\headWindow{\crun(\map{\crun}{\hrun}{i-1}(|\buf_{i-1}|)+1)} = \infty$ as well.
                            From this, $\nrMispred{\tup{m_{i-1},a_{i-1}}}{\buf} = 0$, $a_{i} = a_{i-1}$,  $m_i = m_{i-1}$, $\map{\crun}{\hrun}{i} = \map{\crun}{\hrun}{i-1}[\mathit{ln}(\hrun(i-1) ) +1 \mapsto \map{\crun}{\hrun}{i-1}(\mathit{ln}(\hrun(i-1) )) + 1]$, and $|\buf|  = \mathit{ln}(\hrun(i-1))+1$, we get $\nrMispred{\tup{m_{i},a_{i}}}{\buf} = 0 \leftrightarrow \headWindow{\crun(\map{\crun}{\hrun}{i}(|\buf|))} = \infty$.

                            \item[$\nrMispred{\tup{m_{i-1},a_{i-1}}}{\buf_{i-1}} \neq 0$:]
                            Then $\headWindow{\crun(\map{\crun}{\hrun}{i-1}(|\buf_{i-1}|))} \neq \infty$ from (H.3.a.iii).
                            Therefore, we have that $\headWindow{\crun(\map{\crun}{\hrun}{i-1}(|\buf_{i-1}|)+1)} \neq \infty$ as well.
                            From this, $\nrMispred{\tup{m_{i-1},a_{i-1}}}{\buf} \neq 0$, $a_{i} = a_{i-1}$,  $m_i = m_{i-1}$, $\map{\crun}{\hrun}{i} = \map{\crun}{\hrun}{i-1}[\mathit{ln}(\hrun(i-1) ) +1 \mapsto \map{\crun}{\hrun}{i-1}(\mathit{ln}(\hrun(i-1) )) + 1]$, and $|\buf|  = \mathit{ln}(\hrun(i-1))+1$, we get $\nrMispred{\tup{m_{i},a_{i}}}{\buf} = 0 \leftrightarrow \headWindow{\crun(\map{\crun}{\hrun}{i}(|\buf|))} = \infty$.
                        \end{description}
                        
                        \item[(iv):]
                        We need to show $\headWindow{\crun(\map{\crun}{\hrun}{i}(|\buf|))} > 0$.
                        As we have shown before, $\crun(\map{\crun}{\hrun}{i-1}(\mathit{ln}(\hrun(i-1) )) + 1)$ is obtained from $\crun(\map{\crun}{\hrun}{i-1}(\mathit{ln}(\hrun(i-1) )))$ by applying the \textsc{Step} rule of $\CtSpecInterfStep{}{}$.
                        Then, we get that either $\headWindow{ \crun( \map{\crun}{\hrun}{i}(|\buf|) )  } = \infty$ or  $\headWindow{ \crun( \map{\crun}{\hrun}{i}(|\buf|) ) } = \headWindow{ \crun( \map{\crun}{\hrun}{i}(|\buf_{i-1}|) ) } - 1$. 
                       	In the first case, then $\headWindow{\crun(\map{\crun}{\hrun}{i}(|\buf|))} > 0$ holds.
                       	In the second case, from $\map{\crun}{\hrun}{i}(|\buf_{i-1}|) =  \map{\crun}{\hrun}{i-1}( |\buf_{i-1}| )) $ and $\minOf{\map{\crun}{\hrun}{i-1}, \crun} > 1$, which follows from (2), we get  $\headWindow{\crun(\map{\crun}{\hrun}{i}(|\buf|))} > 0$.
                    \end{description}                  
                \end{description}

                \item[(b):] 
                From (H.b), we have that  $\tup{j,j'} \in \nextPairs{\map{\crun}{\hrun}{i-1}}$,
                $\headWindow{ \crun( \map{\crun}{\hrun}{i-1}(j) ) } = \headWindow{ \crun( \map{\crun}{\hrun}{i-1}(j') )  } = \infty$ or
                $\headWindow{ \crun( \map{\crun}{\hrun}{i-1}(j) ) } = \infty$ and $\headWindow{ \crun( \map{\crun}{\hrun}{i-1}(j') ) } = \wInterf$ or
                $\headWindow{ \crun( \map{\crun}{\hrun}{i-1}(j) ) } \neq \infty \wedge \headWindow{ \crun( \map{\crun}{\hrun}{i-1}(j') ) } = \headWindow{ \crun( \map{\crun}{\hrun}{i-1}(j) ) } - 1$.
                From $\map{\crun}{\hrun}{i} = \map{\crun}{\hrun}{i-1}[\mathit{ln}(\hrun(i-1) ) +1 \mapsto \map{\crun}{\hrun}{i-1}(\mathit{ln}(\hrun(i-1) )) + 1]$, the only interesting pair is $\tup{j,j'}  = \tup{|\buf_{i-1}|, |\buf_{i-1}|+1}$ since for all others pairs the relation follows from (H.b).
                For this pair, we have $\map{\crun}{\hrun}{i}(|\buf_{i-1}|) =  \map{\crun}{\hrun}{i-1}(\mathit{ln}(\hrun(i-1) )) $ and $ \map{\crun}{\hrun}{i}(|\buf_{i-1}|+1)   = \map{\crun}{\hrun}{i-1}(\mathit{ln}(\hrun(i-1) )) + 1$.
                As we have shown before, $\crun(\map{\crun}{\hrun}{i-1}(\mathit{ln}(\hrun(i-1) )) + 1)$ is obtained from $\crun(\map{\crun}{\hrun}{i-1}(\mathit{ln}(\hrun(i-1) )))$ by applying the \textsc{Step} rule of $\CtSpecInterfStep{}{}$.
                Therefore, either $\headWindow{ \crun( \map{\crun}{\hrun}{i}(j) ) } = \headWindow{ \crun( \map{\crun}{\hrun}{i}(j') )  } = \infty$ or  $\headWindow{ \crun( \map{\crun}{\hrun}{i}(j) ) } \neq \infty \wedge \headWindow{ \crun( \map{\crun}{\hrun}{i}(j') ) } = \headWindow{ \crun( \map{\crun}{\hrun}{i}(j) ) } - 1$.
                Hence, (b) holds.
            \end{description}

			\item[Rule \textsc{Fetch-Others-Hit}:]
            Then, $\buf_i = \tagged{p(  \apply{a_{i-1}}{\buf_{i-1}}(\pc) )}{\notags} \concat \tagged{\pmarkedassign{\pc}{  \apply{a_{i-1}}{\buf_{i-1}}(\pc) +1}}{\notags}\concat \buf_{i-1}$, $a_{i} = a_{i-1}$, $m_{i} = m_{i-1}$, $p( \apply{a_{i-1}}{\buf_{i-1}}  (\pc)) \neq \pjz{x}{\lbl}$, $p( \apply{a_{i-1}}{\buf_{i-1}}  (\pc)) \neq \pjmp{e}$, and $\map{\crun}{\hrun}{i} = \map{\crun}{\hrun}{i-1}[\mathit{ln}(\hrun(i-1) ) +2 \mapsto \map{\crun}{\hrun}{i-1}(\mathit{ln}(\hrun(i-1) )) + 1]$.

            We now show that all conditions hold:
            \begin{description}
                \item[(a):]
                Let $\buf$ be an arbitrary prefix in $\prefixes{\buf_{i}}$.
                There are two cases:
                \begin{description}
                    \item[$\buf \in \prefixes{\buf_{i-1}}$:]
                    The proof of this case is identical to the corresponding case for the  \textsc{Fetch-Jump-Hit} rule.

                    \item[$\buf \not\in \prefixes{\buf_{i-1}}$:]    
                    From $\buf_i = \tagged{p(  \apply{a_{i-1}}{\buf_{i-1}}(\pc))}{\notags} \concat \tagged{\pmarkedassign{\pc}{ \apply{a_{i-1}}{\buf_{i-1}}(\pc)+1}}{\notags}\concat \buf_{i-1}$, we get that $\buf = \buf_i$ and $|\buf| = |buf_{i-1}| + 2 = \mathit{ln}(\hrun(i-1))+2$.
                    From (H.3.a.i) and $\buf_{i-1} \in \prefixes{\buf_{i-1}}$, we have $\update{\tup{m_{i-1},a_{i-1}}}{\buf_{i-1}} =  \headConf{\crun(\map{\crun}{\hrun}{i-1}(|\buf_{i-1}|))}$.
                    Observe that:
                    \begin{description}
                        \item[(1):] 
                        From  $\update{\tup{m_{i-1},a_{i-1}}}{\buf_{i-1}} =  \headConf{\crun(\map{\crun}{\hrun}{i-1}(|\buf_{i-1}|))}$, we have, in particular, that $\update{\tup{m_{i-1},a_{i-1}}}{\buf_{i-1}}(\pc) = \headConf{\crun(\map{\crun}{\hrun}{i-1}(|\buf_{i-1}|))}(\pc)$.

                        \item[(2):] 
                        From (H.a.iii) applied to $\buf_{i-1}$ and the well-formedness of buffers, we get that $\maxOf{\map{\crun}{\hrun}{i-1}, \crun} = \infty$ (because the shortest prefix is always associated with window $\infty$).
                        From (H.b) and $\maxOf{\map{\crun}{\hrun}{i-1}, \crun} = \infty$, we have that $\minOf{\map{\crun}{\hrun}{i-1}, \crun} \geq \wInterf - |\buf_{i-1}|$ (because one can decrement $\wInterf$ at most once per prefix and there are  at most $|\buf_{i-1}|$ prefixes).
                        From this, $|\buf| \leq \wMuarch$, and $\wInterf > \wMuarch + 1$, we get $\minOf{\map{\crun}{\hrun}{i-1}, \crun} \geq \wInterf - \wMuarch > 1$.
                    \end{description}

                    We now show that (i)--(iv) hold:
                    \begin{description}
                        \item[(i):]
                        We need to show $\update{\tup{m_i,a_i}}{\buf} =  \headConf{\crun(\map{\crun}{\hrun}{i}(|\buf|))}$.
                        From (H.3.a.i) and $\buf_{i-1} \in \prefixes{\buf_{i-1}}$, we have $\update{\tup{m_{i-1},a_{i-1}}}{\buf_{i-1}} =  \headConf{\crun(\map{\crun}{\hrun}{i-1}(|\buf_{i-1}|))}$.
                        From (2), we get that $\crun(\map{\crun}{\hrun}{i-1}(|\buf_{i-1}|) + 1)$ is obtained by executing the \textsc{Step} rule of $\CtSpecInterfStep{}{}$ starting from $\crun(\map{\crun}{\hrun}{i-1}(|\buf_{i-1}|))$.
                        For simplicity, we assume that $p(\apply{a_{i-1}}{\buf_{i-1}}(\pc)) = \pload{x}{e}$ (the proof for the other cases is similar).
                        From (1) and $p(\apply{a_{i-1}}{\buf_{i-1}}(\pc)) = \pload{x}{e}$, we have that $\headConf{\crun(\map{\crun}{\hrun}{i-1}(|\buf_{i-1}|) + 1)} = \tup{m,a[x \mapsto m(\exprEval{e}{a}), \pc \mapsto a(\pc) +1] }$ where $\tup{m,a} = \headConf{\crun(\map{\crun}{\hrun}{i-1}(|\buf_{i-1}|) )}$.
                        From  $\update{\tup{m_{i-1},a_{i-1}}}{\buf_{i-1}} =  \headConf{\crun(\map{\crun}{\hrun}{i-1}(|\buf_{i-1}|))}$, we therefore get that $\headConf{\crun(\map{\crun}{\hrun}{i-1}(|\buf_{i-1}|) + 1)} = \update{\tup{m_{i-1},a_{i-1}}}{\buf_{i-1}}[x \mapsto m(\exprEval{e}{a}), \pc \mapsto a(\pc) +1]$ where $\tup{m,a} = \headConf{\crun(\map{\crun}{\hrun}{i-1}(|\buf_{i-1}|) )}$.
                        By leveraging $\update{\cdot}{\cdot}$'s definition and $p(\apply{a_{i-1}}{\buf_{i-1}}(\pc)) = \pload{x}{e}$, we get $\headConf{\crun(\map{\crun}{\hrun}{i-1}(|\buf_{i-1}|) + 1)} = \update{\tup{m_{i-1},a_{i-1}}}{(\tagged{p(  \apply{a_{i-1}}{\buf_{i-1}}(\pc))}{\notags} \concat \tagged{\pmarkedassign{\pc}{ \apply{a_{i-1}}{\buf_{i-1}}(\pc)+1}}{\notags}\concat \buf_{i-1}) } $.
                        From $\buf_i = \tagged{p(  \apply{a_{i-1}}{\buf_{i-1}}(\pc))}{\notags} \concat \tagged{\pmarkedassign{\pc}{ \apply{a_{i-1}}{\buf_{i-1}}(\pc)+1}}{\notags}\concat \buf_{i-1}$, $a_{i} = a_{i-1}$, and $m_i = m_{i-1}$, we get $\headConf{\crun(\map{\crun}{\hrun}{i-1}(|\buf_{i-1}|) + 1)} = \update{\tup{m_{i},a_{i}}}{\buf_{i} } $.
                        Finally, from $\map{\crun}{\hrun}{i} = \map{\crun}{\hrun}{i-1}[\mathit{ln}(\hrun(i-1) ) +2 \mapsto \map{\crun}{\hrun}{i-1}(\mathit{ln}(\hrun(i-1) )) + 1]$,  $|\buf_i|  = \mathit{ln}(\hrun(i-1))+2$, $\buf = \buf_i$, we get $ \update{\tup{m_{i},a_{i}}}{\buf } = \headConf{\crun(\map{\crun}{\hrun}{i}(|\buf|))}$.

                        \item[(ii):]
                        We need to show $\nrMispred{\tup{m_{i},a_{i}}}{\buf} + 1 \geq |\crun(\map{\crun}{\hrun}{i}(|\buf|))|$.
                        From (H.3.a.ii) and $\buf_{i-1} \in \prefixes{\buf_{i-1}}$,  we have $\nrMispred{\tup{m_{i-1},a_{i-1}}}{\buf_{i-1}} + 1 \geq |\crun(\map{\crun}{\hrun}{i-1}(|\buf_{i-1}|))|$.
                        From $\buf =  \tagged{\passign{\pc}{e}}{\notags} \concat \buf_{i-1}$, we get that $\nrMispred{\tup{m_{i-1},a_{i-1}}}{\buf_{i-1}} = \nrMispred{\tup{m_{i-1},a_{i-1}}}{\buf}$.
                        From (2), we get that $\crun(\map{\crun}{\hrun}{i-1}(|\buf_{i-1}|) + 1)$ is obtained by executing the \textsc{Step} rule of $\CtSpecInterfStep{}{}$ starting from $\crun(\map{\crun}{\hrun}{i-1}(|\buf_{i-1}|))$.
                        Therefore, $|\crun(\map{\crun}{\hrun}{i-1}(|\buf_{i-1}|))| = |\crun(\map{\crun}{\hrun}{i-1}(|\buf_{i-1}|)+1)|$.
                        Thus, we get $\nrMispred{\tup{m_{i-1},a_{i-1}}}{\buf} + 1 \geq |\crun(\map{\crun}{\hrun}{i-1}(|\buf_{i-1}|)+1)|$.
                        Finally, from  $a_{i} = a_{i-1}$,  $m_i = m_{i-1}$, $\map{\crun}{\hrun}{i} = \map{\crun}{\hrun}{i-1}[\mathit{ln}(\hrun(i-1) ) +2 \mapsto \map{\crun}{\hrun}{i-1}(\mathit{ln}(\hrun(i-1) )) + 1]$, and $|\buf|  = \mathit{ln}(\hrun(i-1))+2$, we get $\nrMispred{\tup{m_{i},a_{i}}}{\buf} + 1 \geq |\crun(\map{\crun}{\hrun}{i}(|\buf|))|$.

                        \item[(iii):]
                        We need to show $\nrMispred{\tup{m_{i},a_{i}}}{\buf} = 0 \leftrightarrow \headWindow{\crun(\map{\crun}{\hrun}{i}(|\buf|))} = \infty$.
                        From (H.3.a.iii) and $\buf_{i-1} \in \prefixes{\buf_{i-1}}$,  we have $\nrMispred{\tup{m_{i-1},a_{i-1}}}{\buf_{i-1}} = 0 \leftrightarrow \headWindow{\crun(\map{\crun}{\hrun}{i-1}(|\buf_{i-1}|))} = \infty$.
                        From (2), we get that $\crun(\map{\crun}{\hrun}{i-1}(|\buf_{i-1}|) + 1)$ is obtained by executing the \textsc{Step} rule of $\CtSpecInterfStep{}{}$ starting from $\crun(\map{\crun}{\hrun}{i-1}(|\buf_{i-1}|))$.
                        Therefore, $\headWindow{\crun(\map{\crun}{\hrun}{i-1}(|\buf_{i-1}|)+1)} = \headWindow{\crun(\map{\crun}{\hrun}{i-1}(|\buf_{i-1}|))} - 1$.
                        Moreover, we also have that $\nrMispred{\tup{m_{i-1},a_{i-1}}}{\buf_{i-1}} = \nrMispred{\tup{m_{i-1},a_{i-1}}}{\buf}$ because $\buf =  \tagged{\passign{\pc}{e}}{\notags} \concat \buf_{i-1}$.
                        There are  two cases:
                        \begin{description}
                            \item[$\nrMispred{\tup{m_{i-1},a_{i-1}}}{\buf_{i-1}} = 0$:]
                            Then $\headWindow{\crun(\map{\crun}{\hrun}{i-1}(|\buf_{i-1}|))} = \infty$ from (H.3.a.iii).
                            Therefore, we have that $\headWindow{\crun(\map{\crun}{\hrun}{i-1}(|\buf_{i-1}|)+1)} = \infty$ as well.
                            From this, $\nrMispred{\tup{m_{i-1},a_{i-1}}}{\buf} = 0$, $a_{i} = a_{i-1}$,  $m_i = m_{i-1}$, $\map{\crun}{\hrun}{i} = \map{\crun}{\hrun}{i-1}[\mathit{ln}(\hrun(i-1) ) +12\mapsto \map{\crun}{\hrun}{i-1}(\mathit{ln}(\hrun(i-1) )) + 1]$, and $|\buf|  = \mathit{ln}(\hrun(i-1))+2$, we get $\nrMispred{\tup{m_{i},a_{i}}}{\buf} = 0 \leftrightarrow \headWindow{\crun(\map{\crun}{\hrun}{i}(|\buf|))} = \infty$.

                            \item[$\nrMispred{\tup{m_{i-1},a_{i-1}}}{\buf_{i-1}} \neq 0$:]
                            Then $\headWindow{\crun(\map{\crun}{\hrun}{i-1}(|\buf_{i-1}|))} \neq \infty$ from (H.3.a.iii).
                            Therefore, we have that $\headWindow{\crun(\map{\crun}{\hrun}{i-1}(|\buf_{i-1}|)+1)} \neq \infty$ as well.
                            From this, $\nrMispred{\tup{m_{i-1},a_{i-1}}}{\buf} \neq 0$, $a_{i} = a_{i-1}$,  $m_i = m_{i-1}$, $\map{\crun}{\hrun}{i} = \map{\crun}{\hrun}{i-1}[\mathit{ln}(\hrun(i-1) ) +2 \mapsto \map{\crun}{\hrun}{i-1}(\mathit{ln}(\hrun(i-1) )) + 1]$, and $|\buf|  = \mathit{ln}(\hrun(i-1))+2$, we get $\nrMispred{\tup{m_{i},a_{i}}}{\buf} = 0 \leftrightarrow \headWindow{\crun(\map{\crun}{\hrun}{i}(|\buf|))} = \infty$.
                        \end{description}
                        
                        \item[(iv):]
                        We need to show $\headWindow{\crun(\map{\crun}{\hrun}{i}(|\buf|))} > 0$.
                        As we have shown before, $\crun(\map{\crun}{\hrun}{i-1}(\mathit{ln}(\hrun(i-1) )) + 1)$ is obtained from $\crun(\map{\crun}{\hrun}{i-1}(\mathit{ln}(\hrun(i-1) )))$ by applying the \textsc{Step} rule of $\CtSpecInterfStep{}{}$.
                        Then, we get that either $\headWindow{ \crun( \map{\crun}{\hrun}{i}(|\buf|) )  } = \infty$ or  $\headWindow{ \crun( \map{\crun}{\hrun}{i}(|\buf|) ) } = \headWindow{ \crun( \map{\crun}{\hrun}{i}(|\buf_{i-1}|) ) } - 1$. 
                       	In the first case, then $\headWindow{\crun(\map{\crun}{\hrun}{i}(|\buf|))} > 0$ holds.
                       	In the second case, from $\map{\crun}{\hrun}{i}(|\buf_{i-1}|) =  \map{\crun}{\hrun}{i-1}( |\buf_{i-1}| )) $ and $\minOf{\map{\crun}{\hrun}{i-1}, \crun} > 1$, which follows from (2), we get  $\headWindow{\crun(\map{\crun}{\hrun}{i}(|\buf|))} > 0$.
                    \end{description}                  
                \end{description}

                \item[(b):] 
                From (H.b), we have that  $\tup{j,j'} \in \nextPairs{\map{\crun}{\hrun}{i-1}}$,
                $\headWindow{ \crun( \map{\crun}{\hrun}{i-1}(j) ) } = \headWindow{ \crun( \map{\crun}{\hrun}{i-1}(j') )  } = \infty$ or
                $\headWindow{ \crun( \map{\crun}{\hrun}{i-1}(j) ) } = \infty$ and $\headWindow{ \crun( \map{\crun}{\hrun}{i-1}(j') ) } = \wInterf$ or
                $\headWindow{ \crun( \map{\crun}{\hrun}{i-1}(j) ) } \neq \infty \wedge \headWindow{ \crun( \map{\crun}{\hrun}{i-1}(j') ) } = \headWindow{ \crun( \map{\crun}{\hrun}{i-1}(j) ) } - 1$.
                From $\map{\crun}{\hrun}{i} = \map{\crun}{\hrun}{i-1}[\mathit{ln}(\hrun(i-1) ) +2 \mapsto \map{\crun}{\hrun}{i-1}(\mathit{ln}(\hrun(i-1) )) + 1]$, the only interesting pair is $\tup{j,j'}  = \tup{|\buf_{i-1}|, |\buf_{i-1}|+2}$ since for all others pairs the relation follows from (H.b).
                For this pair, we have $\map{\crun}{\hrun}{i}(|\buf_{i-1}|) =  \map{\crun}{\hrun}{i-1}(\mathit{ln}(\hrun(i-1) )) $ and $ \map{\crun}{\hrun}{i}(|\buf_{i-1}|+2)   = \map{\crun}{\hrun}{i-1}(\mathit{ln}(\hrun(i-1) )) + 1$.
                As we have shown before, $\crun(\map{\crun}{\hrun}{i-1}(\mathit{ln}(\hrun(i-1) )) + 1)$ is obtained from $\crun(\map{\crun}{\hrun}{i-1}(\mathit{ln}(\hrun(i-1) )))$ by applying the \textsc{Step} rule of $\CtSpecInterfStep{}{}$.
                Therefore, either $\headWindow{ \crun( \map{\crun}{\hrun}{i}(j) ) } = \headWindow{ \crun( \map{\crun}{\hrun}{i}(j') )  } = \infty$ or  $\headWindow{ \crun( \map{\crun}{\hrun}{i}(j) ) } \neq \infty \wedge \headWindow{ \crun( \map{\crun}{\hrun}{i}(j') ) } = \headWindow{ \crun( \map{\crun}{\hrun}{i}(j) ) } - 1$.
                Hence, (b) holds.
            \end{description}

			\item[Rule \textsc{Fetch-Miss}:]
            Then, $\buf_{i} = \buf_{i-1}$, $m_{i} = m_{a-1}$, and $a_{i} = a_{i-1}$.
            Moreover, $\map{\crun}{\hrun}{i} = \map{\crun}{\hrun}{i-1}$.
            We now show that all conditions hold:
            \begin{description}
                \item[(a):]
                Let $\buf$ be an arbitrary prefix in $\prefixes{\buf_{i}}$.
                We now show that (i)--(iv) hold:
                \begin{description}
                    \item[(i):]
                    We need to show $\update{\tup{m_i,a_i}}{\buf} =  \headConf{\crun(\map{\crun}{\hrun}{i}(|\buf|))}$.
                    From (H.3.a.i), $\buf \in \prefixes{\buf_{i}}$, and $\buf_i = \buf_{i-1}$, we have $\update{\tup{m_{i-1},a_{i-1}}}{\buf} =  \headConf{\crun(\map{\crun}{\hrun}{i-1}(|\buf|))}$.
                    Finally, from this,  $m_{i} = m_{a-1}$, $a_{i} = a_{i-1}$, and $\map{\crun}{\hrun}{i} = \map{\crun}{\hrun}{i-1}$, we get $\update{\tup{m_i,a_i}}{\buf} =  \headConf{\crun(\map{\crun}{\hrun}{i}(|\buf|))}$.

                    \item[(ii):]
                    We need to show $\nrMispred{\tup{m_{i},a_{i}}}{\buf} + 1 \geq |\crun(\map{\crun}{\hrun}{i}(|\buf|))|$.
                    From (H.3.a.ii), $\buf \in \prefixes{\buf_{i}}$, and $\buf_i = \buf_{i-1}$, we have $\nrMispred{\tup{m_{i-1},a_{i-1}}}{\buf} + 1 \geq |\crun(\map{\crun}{\hrun}{i-1}(|\buf|))|$.
                    Finally, from this,  $m_{i} = m_{a-1}$, $a_{i} = a_{i-1}$, and $\map{\crun}{\hrun}{i} = \map{\crun}{\hrun}{i-1}$, we get $\nrMispred{\tup{m_{i},a_{i}}}{\buf} + 1 \geq |\crun(\map{\crun}{\hrun}{i}(|\buf|))|$.

                    \item[(iii):]
                    We need to show $\nrMispred{\tup{m_{i},a_{i}}}{\buf} = 0 \leftrightarrow \headWindow{\crun(\map{\crun}{\hrun}{i}(|\buf|))} = \infty$.
                    From (H.3.a.iii), $\buf \in \prefixes{\buf_{i}}$, and $\buf_i = \buf_{i-1}$, we have $\nrMispred{\tup{m_{i-1},a_{i-1}}}{\buf} = 0 \leftrightarrow \headWindow{\crun(\map{\crun}{\hrun}{i-1}(|\buf|))} = \infty$.
                    Finally, from this,  $m_{i} = m_{a-1}$, $a_{i} = a_{i-1}$, and $\map{\crun}{\hrun}{i} = \map{\crun}{\hrun}{i-1}$, we get $\nrMispred{\tup{m_{i},a_{i}}}{\buf} = 0 \leftrightarrow \headWindow{\crun(\map{\crun}{\hrun}{i}(|\buf|))} = \infty$.

					\item[(iv):]
                    We need to show $\headWindow{\crun(\map{\crun}{\hrun}{i}(|\buf|))} > 0$.
                    From (H.3.a.iv), $\buf \in \prefixes{\buf_{i}}$, and $\buf_i = \buf_{i-1}$, we get $\headWindow{\crun(\map{\crun}{\hrun}{i-1}(|\buf|))} > 0$.
                    From this and $\map{\crun}{\hrun}{i} = \map{\crun}{\hrun}{i-1}$, we get $\headWindow{\crun(\map{\crun}{\hrun}{i}(|\buf|))} > 0$.
                \end{description}

                \item[(b):]
                (b) immediately follows from (H.3.b) and $\map{\crun}{\hrun}{i} = \map{\crun}{\hrun}{i-1}$.
            \end{description}
            
		\end{description}
		This completes the proof for the $\fetch{}$ case.

        \item[$\SchedNext(C_{i-1}) = \execute{j}$:] 
        We now proceed by case distinction on the applied $\execute{}$ rule:
        \begin{description}
            \item[Rule \textsc{Execute-Load-Hit}:]
            Therefore, we have that $\pbarrier \not\in \buf_{i-1}[0..j-1]$, $\pstore{x'}{e'} \not\in \buf_{i-1}[0..j-1]$, $m_i = m_{i-1}$, $a_i = a_{i-1}$, $\buf_{i} = \buf_{i-1}[0..j-1] \concat  \tagged{\passign{x}{m_{i-1}(\exprEval{e}{ \apply{\buf_{i-1}[0..j-1]}{a_{i-1} } }) }}{T} \concat \buf_{i-1}[j+1..|\buf_{i-1}|]$, and $\elt{\buf_{i-1}}{j} = \tagged{\pload{x}{e}}{T}$.
            Moreover, we also have that $\neg isRb_{\hrun}(i,j)$ holds and, therefore, $\map{\crun}{\hrun}{i} = \map{\crun}{\hrun}{i-1}$.

            We now show that all  conditions hold:
            \begin{description}
                \item[(a):]
                Let $\buf$ be an arbitrary prefix in $\prefixes{\buf_{i}}$.
                There are two cases:
                \begin{description}
                    \item[$\buf \in \prefixes{\buf_{i-1}}$:]
                    Then, (i)--(iv) follow from $\buf \in \prefixes{\buf_{i-1}}$, $m_i = m_{i-1}$, $a_i = a_{i-1}$, $\map{\crun}{\hrun}{i} = \map{\crun}{\hrun}{i-1}$, and (H.3.a.i)--(H.3.a.iv).

                    \item[$\buf \not\in \prefixes{\buf_{i-1}}$:] 
                    Then, $\buf = \buf_{i-1}[0..j-1] \concat \tagged{\passign{x}{m_{i-1}(\exprEval{e}{ \apply{\buf_{i-1}[0..j-1]}{a_{i-1} } }) }}{T} \concat \buf_{i-1}[j+1..k]$ for some $k \in \Nat$.
                    Observe also that $\buf_{i-1}[0..k] \in \prefixes{\buf_{i-1}}$.

                    We now show that (i)--(iv) hold:
                    \begin{description}
                        \item[(i):]
                        We need to show $\update{\tup{m_i,a_i}}{\buf} =  \headConf{\crun(\map{\crun}{\hrun}{i}(|\buf|))}$.
                        From (H.3.a.i) and $\buf_{i-1}[0..k] \in \prefixes{\buf_{i-1}}$, we get that $\update{\tup{m_{i-1},a_{i-1}}}{\buf_{i-1}[0..k]} =  \headConf{\crun(\map{\crun}{\hrun}{i-1}(|\buf_{i-1}[0..k]|))}$.
                        From $\pstore{x'}{e'} \not\in \buf_{i-1}[0..j-1]$ and $\buf = \buf_{i-1}[0..j-1] \concat \tagged{\passign{x}{m_{i-1}(\exprEval{e}{ \apply{\buf_{i-1}[0..j-1]}{a_{i-1} } }) }}{T} \concat \buf_{i-1}[j+1..k]$, we immediately get that $\update{\tup{m_{i-1},a_{i-1}}}{\buf_{i-1}[0..k]} = \update{\tup{m_{i-1},a_{i-1}}}{\buf}$.
                        Therefore, we have $\update{\tup{m_{i-1},a_{i-1}}}{\buf} =  \headConf{\crun(\map{\crun}{\hrun}{i-1}(|\buf_{i-1}[0..k]|))}$.
                        Finally, from this, $m_i = m_{i-1}$, $a_i = a_{i-1}$,  $|\buf| = |\buf_{i-1}[0..k]|$, and $\map{\crun}{\hrun}{i} = \map{\crun}{\hrun}{i-1}$, we have $\update{\tup{m_{i},a_{i}}}{\buf} =  \headConf{\crun(\map{\crun}{\hrun}{i-1}(|\buf|))}$.

                        \item[(ii):]
                        We need to show $\nrMispred{\tup{m_{i},a_{i}}}{\buf} + 1 \geq |\crun(\map{\crun}{\hrun}{i}(|\buf|))|$.
                        From (H.3.a.ii) and $\buf_{i-1}[0..k] \in \prefixes{\buf_{i-1}}$, we have $\nrMispred{\tup{m_{i-1},a_{i-1}}}{\buf_{i-1}[0..k]} + 1 \geq |\crun(\map{\crun}{\hrun}{i-1}(|\buf_{i-1}[0..k]|))|$.
                        From  this,  $m_{i} = m_{a-1}$, $a_{i} = a_{i-1}$, and $\map{\crun}{\hrun}{i} = \map{\crun}{\hrun}{i-1}$, we get $\nrMispred{\tup{m_{i},a_{i}}}{\buf_{i-1}[0..k]} + 1 \geq |\crun(\map{\crun}{\hrun}{i}(|\buf_{i-1}[0..k]|))|$.
                        Finally, from $\nrMispred{\tup{m_{i},a_{i}}}{\buf_{i-1}[0..k]} = \nrMispred{\tup{m_{i},a_{i}}}{\buf}$ and $|\buf_{i-1}[0..k]| = |\buf|$, we get $\nrMispred{\tup{m_{i},a_{i}}}{\buf} + 1 \geq |\crun(\map{\crun}{\hrun}{i}(|\buf|))|$.

                        \item[(iii):]
                        We need to show $\nrMispred{\tup{m_{i},a_{i}}}{\buf} = 0 \leftrightarrow \headWindow{\crun(\map{\crun}{\hrun}{i}(|\buf|))} = \infty$.
                        From (H.3.a.iii) and $\buf_{i-1}[0..k]  \in \prefixes{\buf_{i-1}}$, we have $\nrMispred{\tup{m_{i-1},a_{i-1}}}{\buf_{i-1}[0..k] } = 0 \leftrightarrow \headWindow{\crun(\map{\crun}{\hrun}{i-1}(|\buf_{i-1}[0..k] |))} = \infty$.
                        Finally, from this,  $m_{i} = m_{a-1}$, $a_{i} = a_{i-1}$,  $\map{\crun}{\hrun}{i} = \map{\crun}{\hrun}{i-1}$, $\nrMispred{\tup{m_{i},a_{i}}}{\buf_{i-1}[0..k]} = \nrMispred{\tup{m_{i},a_{i}}}{\buf}$, and $|\buf_{i-1}[0..k]| = |\buf|$, we get $\nrMispred{\tup{m_{i},a_{i}}}{\buf} = 0 \leftrightarrow \headWindow{\crun(\map{\crun}{\hrun}{i}(|\buf|))} = \infty$.
                        
                        \item[(iv):]
	                    We need to show $\headWindow{\crun(\map{\crun}{\hrun}{i}(|\buf|))} > 0$.
	                    From (H.3.a.iv) and $\buf_{i-1}[0..k]  \in \prefixes{\buf_{i-1}}$, we have $\headWindow{\crun(\map{\crun}{\hrun}{i-1}(|\buf|))} > 0$.
	                    From this and $\map{\crun}{\hrun}{i} = \map{\crun}{\hrun}{i-1}$, we get $\headWindow{\crun(\map{\crun}{\hrun}{i}(|\buf|))} > 0$.

                    \end{description}
                \end{description}
           
                \item[(b):]
                (b) immediately follows from (H.3.b) and  $\map{\crun}{\hrun}{i} = \map{\crun}{\hrun}{i-1}$. 
            \end{description}            

            \item[Rule \textsc{Execute-Load-Miss}:]
            Therefore, we have that $m_i = m_{i-1}$, $a_i = a_{i-1}$, $\buf_{i} = \buf_{i-1}$.
            Moreover, we also have that $\neg isRb_{\hrun}(i,j)$ holds and, therefore, $\map{\crun}{\hrun}{i} = \map{\crun}{\hrun}{i-1}$.
            Then, the proofs of (a) and (b) are similar to that of the rule \textsc{Fetch-Miss} in the $\SchedNext(C_{i-1}) = \fetch{}$ case.

            \item[Rule \textsc{Execute-Branch-Commit}:]
            Therefore, we have that $\pbarrier \not\in \buf_{i-1}[0..j-1]$, $m_i = m_{i-1}$, $a_i = a_{i-1}$, $\buf_{i} = \buf_{i-1}[0..j-1] \concat  \tagged{\passign{\pc}{ \lbl  }}{\notags}   \concat \buf_{i-1}[j+1..|\buf_{i-1}|]$, $\elt{\buf_{i-1}}{j} = \tagged{\passign{\pc}{\lbl}}{\lbl_0} $, $p(\lbl_0) = \pjz{x}{\lbl''}$, and $( \apply{\buf_{i-1}[0..j-1]}{a_{i-1}}(x) = 0 \wedge \lbl = \lbl'') \vee (\apply{\buf_{i-1}[0..j-1]}{a_{i-1}}(x) \in \Val \setminus \{0,\bot\} \wedge \lbl = \ell_0+1)$.
            Moreover, we also have that $\neg isRb_{\hrun}(i,j)$ holds (because $( \apply{\buf_{i-1}[0..j-1]}{a_{i-1}}(x) = 0 \wedge \lbl = \lbl'') \vee (\apply{\buf_{i-1}[0..j-1]}{a_{i-1}}(x) \in \Val \setminus \{0,\bot\} \wedge \lbl = \ell_0+1)$ holds) and, therefore, $\map{\crun}{\hrun}{i} = \map{\crun}{\hrun}{i-1}$.

            We now show that all  conditions hold:
            \begin{description}
                \item[(a):]
                Let $\buf$ be an arbitrary prefix in $\prefixes{\buf_{i}}$.
                There are two cases:
                \begin{description}
                    \item[$\buf \in \prefixes{\buf_{i-1}}$:]
                    Then, (i)--(iv) follow from $\buf \in \prefixes{\buf_{i-1}}$, $m_i = m_{i-1}$, $a_i = a_{i-1}$, $\map{\crun}{\hrun}{i} = \map{\crun}{\hrun}{i-1}$, and (H.3.a.i)--(H.3.a.iv).

                    \item[$\buf \not\in \prefixes{\buf_{i-1}}$:] 
                    Then, $\buf = \buf_{i-1}[0..j-1] \concat \tagged{\passign{\pc}{ \lbl  }}{\notags} \concat \buf_{i-1}[j+1..k]$ for some $k \in \Nat$.
                    Observe also that $\buf_{i-1}[0..k] \in \prefixes{\buf_{i-1}}$.

                    We now show that (i)--(iv) hold:
                    \begin{description}
                        \item[(i):]
                        We need to show $\update{\tup{m_i,a_i}}{\buf} =  \headConf{\crun(\map{\crun}{\hrun}{i}(|\buf|))}$.
                        From (H.3.a.i) and $\buf_{i-1}[0..k] \in \prefixes{\buf_{i-1}}$, we get that $\update{\tup{m_{i-1},a_{i-1}}}{\buf_{i-1}[0..k]} =  \headConf{\crun(\map{\crun}{\hrun}{i-1}(|\buf_{i-1}[0..k]|))}$.
                        From $\elt{\buf_{i-1}}{j} = \tagged{\passign{\pc}{\lbl}}{\lbl_0} $ and $\buf = \buf_{i-1}[0..j-1] \concat \tagged{\passign{\pc}{ \lbl  }}{\notags} \concat \buf_{i-1}[j+1..k]$, we immediately get that $\update{\tup{m_{i-1},a_{i-1}}}{\buf_{i-1}[0..k]} = \update{\tup{m_{i-1},a_{i-1}}}{\buf}$.
                        Therefore, we have $\update{\tup{m_{i-1},a_{i-1}}}{\buf} =  \headConf{\crun(\map{\crun}{\hrun}{i-1}(|\buf_{i-1}[0..k]|))}$.
                        Finally, from this, $m_i = m_{i-1}$, $a_i = a_{i-1}$,  $|\buf| = |\buf_{i-1}[0..k]|$, and $\map{\crun}{\hrun}{i} = \map{\crun}{\hrun}{i-1}$, we have $\update{\tup{m_{i},a_{i}}}{\buf} =  \headConf{\crun(\map{\crun}{\hrun}{i-1}(|\buf|))}$.

                        \item[(ii):]
                        We need to show $\nrMispred{\tup{m_{i},a_{i}}}{\buf} + 1 \geq |\crun(\map{\crun}{\hrun}{i}(|\buf|))|$.
                        From (H.3.a.ii) and $\buf_{i-1}[0..k] \in \prefixes{\buf_{i-1}}$, we have $\nrMispred{\tup{m_{i-1},a_{i-1}}}{\buf_{i-1}[0..k]} + 1 \geq |\crun(\map{\crun}{\hrun}{i-1}(|\buf_{i-1}[0..k]|))|$.
                        From  this,  $m_{i} = m_{a-1}$, $a_{i} = a_{i-1}$, and $\map{\crun}{\hrun}{i} = \map{\crun}{\hrun}{i-1}$, we get $\nrMispred{\tup{m_{i},a_{i}}}{\buf_{i-1}[0..k]} + 1 \geq |\crun(\map{\crun}{\hrun}{i}(|\buf_{i-1}[0..k]|))|$.
                        Finally, from $\nrMispred{\tup{m_{i},a_{i}}}{\buf_{i-1}[0..k]} = \nrMispred{\tup{m_{i},a_{i}}}{\buf}$ (which follows from $( \apply{\buf_{i-1}[0..j-1]}{a_{i-1}}(x) = 0 \wedge \lbl = \lbl'') \vee (\apply{\buf_{i-1}[0..j-1]}{a_{i-1}}(x) \in \Val \setminus \{0,\bot\} \wedge \lbl = \ell_0+1)$) and $|\buf_{i-1}[0..k]| = |\buf|$, we get $\nrMispred{\tup{m_{i},a_{i}}}{\buf} + 1 \geq |\crun(\map{\crun}{\hrun}{i}(|\buf|))|$.

                        \item[(iii):]
                        We need to show $\nrMispred{\tup{m_{i},a_{i}}}{\buf} = 0 \leftrightarrow \headWindow{\crun(\map{\crun}{\hrun}{i}(|\buf|))} = \infty$.
                        From (H.3.a.iii) and $\buf_{i-1}[0..k]  \in \prefixes{\buf_{i-1}}$, we have $\nrMispred{\tup{m_{i-1},a_{i-1}}}{\buf_{i-1}[0..k] } = 0 \leftrightarrow \headWindow{\crun(\map{\crun}{\hrun}{i-1}(|\buf_{i-1}[0..k] |))} = \infty$.
                        Finally, from this,  $m_{i} = m_{a-1}$, $a_{i} = a_{i-1}$,  $\map{\crun}{\hrun}{i} = \map{\crun}{\hrun}{i-1}$, $\nrMispred{\tup{m_{i},a_{i}}}{\buf_{i-1}[0..k]} = \nrMispred{\tup{m_{i},a_{i}}}{\buf}$, and $|\buf_{i-1}[0..k]| = |\buf|$, we get $\nrMispred{\tup{m_{i},a_{i}}}{\buf} = 0 \leftrightarrow \headWindow{\crun(\map{\crun}{\hrun}{i}(|\buf|))} = \infty$.
                        
                        \item[(iv):]
	                   	We need to show $\headWindow{\crun(\map{\crun}{\hrun}{i}(|\buf|))} > 0$.
	                    From (H.3.a.iv) and $\buf_{i-1}[0..k]  \in \prefixes{\buf_{i-1}}$, we have $\headWindow{\crun(\map{\crun}{\hrun}{i-1}(|\buf|))} > 0$.
	                    From this and $\map{\crun}{\hrun}{i} = \map{\crun}{\hrun}{i-1}$, we get $\headWindow{\crun(\map{\crun}{\hrun}{i}(|\buf|))} > 0$.
                    \end{description}
                \end{description}
           
                \item[(b):]
                (b) immediately follows from (H.3.b) and  $\map{\crun}{\hrun}{i} = \map{\crun}{\hrun}{i-1}$. 
            \end{description}

            \item[Rule \textsc{Execute-Branch-Rollback}:]
            Therefore, we have that $\pbarrier \not\in \buf_{i-1}[0..j-1]$, $m_i = m_{i-1}$, $a_i = a_{i-1}$, $\buf_{i} = \buf_{i-1}[0..j-1] \concat  \tagged{\passign{\pc}{ \lbl  }}{\notags}$, $\elt{\buf_{i-1}}{j} = \tagged{\passign{\pc}{\lbl}}{\lbl_0} $, $p(\lbl_0) = \pjz{x}{\lbl''}$, and $(a'(x) = 0 \wedge \lbl \neq \lbl'') \vee (a'(x) \in \Val \setminus \{0,\bot\} \wedge \lbl \neq \ell_0+1)$.
            Moreover, we also have that $ isRb_{\hrun}(i,j)$ holds (because $( \apply{\buf_{i-1}[0..j-1]}{a_{i-1}}(x) = 0 \wedge \lbl \neq \lbl'') \vee (\apply{\buf_{i-1}[0..j-1]}{a_{i-1}}(x) \in \Val \setminus \{0,\bot\} \wedge \lbl \neq \ell_0+1)$ holds) and, therefore, $\map{\crun}{\hrun}{i} =  clip( \map{\crun}{\hrun}{i-1} , j  )[j \mapsto rb_{\crun}( \map{\crun}{\hrun}{i-1}(j-1) ) ]$.

            We now show that all conditions hold:
            \begin{description}
                \item[(a):]
                Let $\buf$ be an arbitrary prefix in $\prefixes{\buf_{i}}$.
                There are two cases:
                \begin{description}
                    \item[$\buf \in \prefixes{\buf_{i-1}}$:]
                    Then, (i)--(iv) follow from $\buf \in \prefixes{\buf_{i-1}}$, $m_i = m_{i-1}$, $a_i = a_{i-1}$, $\map{\crun}{\hrun}{i} = \map{\crun}{\hrun}{i-1}$, and (H.3.a.i)--(H.3.a.iv).

                    \item[$\buf \not\in \prefixes{\buf_{i-1}}$:] 
                    Then, $\buf = \buf_i = \buf_{i-1}[0..j-1] \concat \tagged{\passign{\pc}{ \lbl  }}{\notags}$.
                    Observe that both $\buf_{i-1}[0..j-1]$ and $\buf_{i-1}[0..j-1]\concat \tagged{\passign{\pc}{\lbl}}{\lbl_0} $ are in $\prefixes{\buf_{i-1}}$.

                    We now show that (i)--(iv) hold:
                    \begin{description}
                        \item[(i):]
                        We need to show $\update{\tup{m_i,a_i}}{\buf} =  \headConf{\crun(\map{\crun}{\hrun}{i}(|\buf|))}$.
                        From (H.3.a.i) and $\buf_{i-1}[0..j-1] \in \prefixes{\buf_{i-1}}$, we get that $\update{\tup{m_{i-1},a_{i-1}}}{\buf_{i-1}[0..j-1]} =  \headConf{\crun(\map{\crun}{\hrun}{i-1}(|\buf_{i-1}[0..j-1]|))}$.
                        From applying Lemma~\ref{lemma:vanilla:contract-rollback} to $\headConf{\crun(\map{\crun}{\hrun}{i-1}(|\buf_{i-1}[0..j-1]|))}$ and $(a'(x) = 0 \wedge \lbl \neq \lbl'') \vee (a'(x) \in \Val \setminus \{0,\bot\} \wedge \lbl \neq \ell_0+1)$, we get that $\headConf{\crun(rb_{\crun}( \map{\crun}{\hrun}{i-1}(j-1) ))} = \headConf{\crun(\map{\crun}{\hrun}{i-1}(|\buf_{i-1}[0..j-1]|))}[\pc \mapsto \lbl]$.
                        From this and $\update{\tup{m_{i-1},a_{i-1}}}{\buf_{i-1}[0..j-1]} =  \headConf{\crun(\map{\crun}{\hrun}{i-1}(|\buf_{i-1}[0..j-1]|))}$, we get $\headConf{\crun(rb_{\crun}( \map{\crun}{\hrun}{i-1}(j-1) ))} = \update{\tup{m_{i-1},a_{i-1}}}{\buf_{i-1}[0..j-1]}[\pc \mapsto \lbl]$.
                        From this and $\update{\cdot}{\cdot}$'s definition, we get $\headConf{\crun(rb_{\crun}( \map{\crun}{\hrun}{i-1}(j-1) ))} = \update{\tup{m_{i-1},a_{i-1}}}{\buf_{i-1}[0..j-1] \concat \tagged{\passign{\pc}{\lbl}}{\notags}}$.
                        From this, $m_i = m_{i-1}$, $a_i = a_{i-1}$, and $\buf = \buf_{i-1}[0..j-1] \concat \tagged{\passign{\pc}{\lbl}}{\notags}$, we get $\update{\tup{m_{i},a_{i}}}{\buf} = \headConf{\crun(rb_{\crun}( \map{\crun}{\hrun}{i-1}(j-1) ))}$.
                        Finally, from $|\buf| = j$ and $\map{\crun}{\hrun}{i} = clip( \map{\crun}{\hrun}{i-1} , j  )[j \mapsto rb_{\crun}( \map{\crun}{\hrun}{i-1}(j-1) ) $, we get $\update{\tup{m_{i},a_{i}}}{\buf} = \headConf{\crun(\map{\crun}{\hrun}{i}(|\buf|) )}$.
                        
                        \item[(ii):]
                        We need to show $\nrMispred{\tup{m_{i},a_{i}}}{\buf} + 1 \geq |\crun(\map{\crun}{\hrun}{i}(|\buf|))|$.
                        From (H.3.a.ii) and $\buf_{i-1}[0..j-1] \in \prefixes{\buf_{i-1}}$, we have $\nrMispred{\tup{m_{i-1},a_{i-1}}}{\buf_{i-1}[0..j-1]} + 1 \geq |\crun(\map{\crun}{\hrun}{i-1}(|\buf_{i-1}[0..j-1]|))|$.
                        From applying Lemma~\ref{lemma:vanilla:contract-rollback} to $\crun(\map{\crun}{\hrun}{i-1}(|\buf_{i-1}[0..j-1]|))$, we get that $|\crun(\map{\crun}{\hrun}{i-1}(|\buf_{i-1}[0..j-1]|))| = |\crun(rb_{\crun}( \map{\crun}{\hrun}{i-1}(j-1) ))|$.
                        Moreover, from $\buf = \buf_{i-1}[0..j-1] \concat \tagged{\passign{\pc}{ \lbl  }}{\notags}$, we get $\nrMispred{\tup{m_{i-1},a_{i-1}}}{\buf} = \nrMispred{\tup{m_{i-1},a_{i-1}}}{\buf_{i-1}[0..j-1]}$.
                        Therefore, we get $\nrMispred{\tup{m_{i-1},a_{i-1}}}{\buf} + 1 \geq |\crun(rb_{\crun}( \map{\crun}{\hrun}{i-1}(j-1) ))|$.
                        Finally, from  this,  $m_{i} = m_{a-1}$, $a_{i} = a_{i-1}$, $|\buf| = j$, and $\map{\crun}{\hrun}{i} = clip( \map{\crun}{\hrun}{i-1} , j  )[j \mapsto rb_{\crun}( \map{\crun}{\hrun}{i-1}(j-1) ) $,  we get $\nrMispred{\tup{m_{i},a_{i}}}{\buf} + 1 \geq |\crun(\map{\crun}{\hrun}{i}(|\buf|) ))|$.
              
                        \item[(iii):]
                        We need to show $\nrMispred{\tup{m_{i},a_{i}}}{\buf} = 0 \leftrightarrow \headWindow{\crun(\map{\crun}{\hrun}{i}(|\buf|))} = \infty$.
                        From (H.3.a.iii) and $\buf_{i-1}[0..j-1]  \in \prefixes{\buf_{i-1}}$, we have $\nrMispred{\tup{m_{i-1},a_{i-1}}}{\buf_{i-1}[0..j-1] } = 0 \leftrightarrow \headWindow{\crun(\map{\crun}{\hrun}{i-1}(|\buf_{i-1}[0..j-1] |))} = \infty$.
                        From $\buf = \buf_{i-1}[0..j-1] \concat \tagged{\passign{\pc}{ \lbl  }}{\notags}$, we get $\nrMispred{\tup{m_{i-1},a_{i-1}}}{\buf} = \nrMispred{\tup{m_{i-1},a_{i-1}}}{\buf_{i-1}[0..j-1]}$.
                        From applying Lemma~\ref{lemma:vanilla:contract-rollback} to $\crun(\map{\crun}{\hrun}{i-1}(|\buf_{i-1}[0..j-1]|))$, we get that $\headWindow{rb_{\crun}( \map{\crun}{\hrun}{i-1}(j-1) )} = \headWindow{\crun(\map{\crun}{\hrun}{i-1}(|\buf_{i-1}[0..j-1] |))} -1 $.
                        There are two cases:
                        \begin{description}
                            \item[$\nrMispred{\tup{m_{i-1},a_{i-1}}}{\buf_{i-1}[0..j-1] } = 0$:]
                            Then, $\headWindow{\crun(\map{\crun}{\hrun}{i}(|\buf|))} = \infty$ and, therefore, $\headWindow{rb_{\crun}( \map{\crun}{\hrun}{i-1}(j-1) )} = \infty$.
                            Moreover, $\nrMispred{\tup{m_{i-1},a_{i-1}}}{\buf} = 0$ as well.
                            From this, $m_{i} = m_{a-1}$, $a_{i} = a_{i-1}$, $|\buf| = j$, and $\map{\crun}{\hrun}{i} = clip( \map{\crun}{\hrun}{i-1} , j  )[j \mapsto rb_{\crun}( \map{\crun}{\hrun}{i-1}(j-1) ) $, we get $\nrMispred{\tup{m_{i},a_{i}}}{\buf} = 0 \leftrightarrow \headWindow{\crun(\map{\crun}{\hrun}{i}(|\buf|))} = \infty$.
                            
                            \item[$\nrMispred{\tup{m_{i-1},a_{i-1}}}{\buf_{i-1}[0..j-1] } \neq 0$:]
                            Then, $\headWindow{\crun(\map{\crun}{\hrun}{i}(|\buf|))} \neq \infty$ and, therefore, $\headWindow{rb_{\crun}( \map{\crun}{\hrun}{i-1}(j-1) )} \neq \infty$.
                            Moreover, $\nrMispred{\tup{m_{i-1},a_{i-1}}}{\buf} \neq 0$ as well.
                            From this, $m_{i} = m_{a-1}$, $a_{i} = a_{i-1}$, $|\buf| = j$, and $\map{\crun}{\hrun}{i} = clip( \map{\crun}{\hrun}{i-1} , j  )[j \mapsto rb_{\crun}( \map{\crun}{\hrun}{i-1}(j-1) ) $, we get $\nrMispred{\tup{m_{i},a_{i}}}{\buf} = 0 \leftrightarrow \headWindow{\crun(\map{\crun}{\hrun}{i}(|\buf|))} = \infty$.
                        \end{description}
                        
                       \item[(iv):]
	                   We need to show $\headWindow{\crun(\map{\crun}{\hrun}{i}(|\buf|))} > 0$.
	                   We know that $\map{\crun}{\hrun}{i}(|\buf|) = rb_{\crun}( \map{\crun}{\hrun}{i-1}(j-1) )$.
		               From Lemma~\ref{lemma:vanilla:contract-rollback}, we have that  either $\headWindow{ \crun( \map{\crun}{\hrun}{i}(|\buf|) )  } = \infty$ or  $headWindow{ \crun( \map{\crun}{\hrun}{i}(|\buf|) ) } = \map{\crun}{\hrun}{i-1}(j)$.
		               In the first case, then $\headWindow{\crun(\map{\crun}{\hrun}{i}(|\buf|))} > 0$ holds.
		               In the second case, $\headWindow{\crun(\map{\crun}{\hrun}{i}(|\buf|))} > 0$ follows from $headWindow{ \crun( \map{\crun}{\hrun}{i}(|\buf|) ) } = \map{\crun}{\hrun}{i-1}(j)$, $\buf_{i-1}[0..j]\in \prefixes{\buf_{i-1}}$, $|\buf| = \buf_{i-1}[0..j] = j$, and (H.3.a.iv).

                    \end{description}
                \end{description}
                \item[(b):]
                From (H.b), we have that  $\tup{j,j'} \in \nextPairs{\map{\crun}{\hrun}{i-1}}$,
                $\headWindow{ \crun( \map{\crun}{\hrun}{i-1}(j) ) } = \headWindow{ \crun( \map{\crun}{\hrun}{i-1}(j') )  } = \infty$ or
                $\headWindow{ \crun( \map{\crun}{\hrun}{i-1}(j) ) } = \infty$ and $\headWindow{ \crun( \map{\crun}{\hrun}{i-1}(j') ) } = \wInterf$ or
                $\headWindow{ \crun( \map{\crun}{\hrun}{i-1}(j) ) } \neq \infty \wedge \headWindow{ \crun( \map{\crun}{\hrun}{i-1}(j') ) } = \headWindow{ \crun( \map{\crun}{\hrun}{i-1}(j) ) } - 1$.
                From $\map{\crun}{\hrun}{i} = clip( \map{\crun}{\hrun}{i-1} , j  )[j \mapsto rb_{\crun}( \map{\crun}{\hrun}{i-1}(j-1) ) $, the only interesting pair is $\tup{j,j'}  = \tup{j-1, j}$ since for all others pairs the relation follows from (H.b).
                For this pair, we have $\map{\crun}{\hrun}{i}(j-1) =  \map{\crun}{\hrun}{i-1}(j-1) $ and $\map{\crun}{\hrun}{i}(j) = rb_{\crun}( \map{\crun}{\hrun}{i-1}(j-1) )$.
                From Lemma~\ref{lemma:vanilla:contract-rollback}, we have that  either $\headWindow{ \crun( \map{\crun}{\hrun}{i}(j) ) } = \headWindow{ \crun( \map{\crun}{\hrun}{i}(j') )  } = \infty$ or  $\headWindow{ \crun( \map{\crun}{\hrun}{i}(j) ) } \neq \infty \wedge \headWindow{ \crun( \map{\crun}{\hrun}{i}(j') ) } = \headWindow{ \crun( \map{\crun}{\hrun}{i}(j) ) } - 1$.
                Hence, (b) holds. 

            \end{description}

            \item[Rule \textsc{Execute-Assignment}:]
            The proof of this case is similar to that of the \textsc{Execute-Load-Hit} rule.

            \item[Rule \textsc{Execute-Marked-Assignment}:]
            The proof of this case is similar to that of the \textsc{Execute-Load-Hit} rule.

            \item[Rule \textsc{Execute-Store}:]
            The proof of this case is similar to that of the \textsc{Execute-Load-Hit} rule.

            \item[Rule \textsc{Execute-Skip}:]
            The proof of this case is similar to that of the \textsc{Execute-Load-Miss} rule.

            \item[Rule \textsc{Execute-Barrier}:]
            The proof of this case is similar to that of the \textsc{Execute-Load-Miss} rule.

            \end{description}
            This completes the proof for the $\execute{j}$ case.
    
        \item[$\SchedNext(C_{i-1}) = \retire{}$:] 
		Then, $\map{\crun}{\hrun}{i} = \mathit{shift}(\map{\crun}{\hrun}{i-1})$.
		We now proceed by case distinction on the applied rule:
		\begin{description}
			\item[Rule \textsc{Retire-Skip}:]
			Then, $	\buf_{i-1} = \tagged{\pskip{}}{\notags} \concat \buf_i$, $m_{i} = m_{i-1}$, and $a_{i} = a_{i-1}$.

            We now show that all conditions hold:
            \begin{description}
                \item[(a):] 
                From the well-formedness of $\buf_{i-1}$ (see Lemma~\ref{lemma:vanilla:buffers-well-formedness}) and $	\buf_{i-1} = \tagged{\pskip{}}{\notags} \concat \buf_i$, we have that $\buf_{i-1} = \tagged{\pskip{}}{\notags} \concat \tagged{\pmarkedassign{\pc}{\ell}}{\notags} \concat \buf_i'$ and $\buf_i = \tagged{\pmarkedassign{\pc}{\ell}}{\notags} \concat \buf_i'$.
                Therefore, we have:
                \begin{align*}
                    \prefixes{\buf_{i-1}} & = \{ \emptysequence \} \cup \{  
                        \tagged{\pskip{}}{\notags} \concat \tagged{\pmarkedassign{\pc}{\ell}}{\notags} \concat \buf' \mid \buf' \in \prefixes{\buf_i'} \} \\
                    \prefixes{\buf_{i}} & = \{ \tagged{\pmarkedassign{\pc}{\ell}}{\notags} \} \cup \{ \tagged{\pmarkedassign{\pc}{\ell}}{\notags} \concat \buf' \mid \buf' \in \prefixes{\buf_i'} \} 
                \end{align*}
                Let $\buf$ be an arbitrary prefix in $\prefixes{\buf_{i}}$.
                From the definitions of $\prefixes{\buf_{i-1}}$ and $\prefixes{\buf_{i}}$ (and from the well-formedness of the buffers), we know that $\tagged{\pskip{}}{\notags} \concat \buf$ is a prefix in  $\prefixes{\buf_{i-1}}$.

                We need to show that (i)--(iv) hold:
                \begin{description}
                    \item[(i):]                 
                    From the induction hypothesis (H.3.a.i) and $\tagged{\pskip{}}{\notags} \concat \buf \in \prefixes{\buf_{i-1}}$,  we have $\update{\tup{m_{i-1},a_{i-1}}}{ (\tagged{\pskip{}}{\notags} \concat \buf) } = \headConf{\crun( \map{\crun}{\hrun}{i-1}(| \tagged{\pskip{}}{\notags} \concat \buf |) )}$.
                    From $m_{i} = m_{i-1}$ and $a_{i} = a_{i-1}$, we get $\update{\tup{m_{i},a_{i}}}{ (\tagged{\pskip{}}{\notags} \concat \buf) } = \headConf{\crun( \map{\crun}{\hrun}{i-1}(| \tagged{\pskip{}}{\notags} \concat \buf |) )}$.
                    By applying $\update{}{}$ to $\tagged{\pskip{}}{\notags}$, we get $\update{\tup{m_{i},a_{i}}}{ \buf } = \headConf{\crun( \map{\crun}{\hrun}{i-1}(| \tagged{\pskip{}}{\notags} \concat \buf |) )}$.
                    From $| \tagged{\pskip{}}{\notags} \concat \buf | = |\buf|+1$, we get $\update{\tup{m_{i},a_{i}}}{ \buf } = \headConf{\crun( \map{\crun}{\hrun}{i-1}(|\buf |+1) )}$.
                    From this and $\mathit{shift}$'s definition, we get $\update{\tup{m_{i},a_{i}}}{ \buf } = \headConf{\crun( \mathit{shift}(\map{\crun}{\hrun}{i-1}(|\buf |)) )}$.
                    From this and $\map{\crun}{\hrun}{i} = \mathit{shift}(\map{\crun}{\hrun}{i-1})$, we get  $\update{\tup{m_{i},a_{i}}}{ \buf } = \headConf{\crun( \map{\crun}{\hrun}{i}(|\buf |) )}$.
                    Since $\buf$ is an arbitrary prefix in $\prefixes{\buf_{i}}$, (i) holds. 

                    \item[(ii):] 
                    From (H.3.a.ii) and $\tagged{\pskip{}}{\notags} \concat \buf \in \prefixes{\buf_{i-1}}$, we have  $\nrMispred{\tup{m_{i-1},a_{i-1}}}{\tagged{\pskip}{\notags} \concat \buf}+1 \geq |\crun(\map{\crun}{\hrun}{i-1}(|\tagged{\pskip}{\notags} \concat \buf|) )| $.
                    From $m_{i} = m_{i-1}$ and $a_{i} = a_{i-1}$, we get $\nrMispred{\tup{m_{i},a_{i}}}{\tagged{\pskip}{\notags} \concat \buf}+1 \geq |\crun(\map{\crun}{\hrun}{i-1}(|\tagged{\pskip}{\notags} \concat \buf|) )| $.
                    Moreover, from $\nrMispred{\tup{m_{i},a_{i}}}{\tagged{\pskip}{\notags} \concat \buf}+1 = \nrMispred{\tup{m_{i},a_{i}}}{\buf}+1$, we get $\nrMispred{\tup{m_{i},a_{i}}}{\buf}+1 \geq |\crun(\map{\crun}{\hrun}{i-1}(|\tagged{\pskip}{\notags} \concat \buf|) )| $.
                    Finally, from $| \tagged{\pskip{}}{\notags} \concat \buf | = |\buf|+1$ and $\mathit{shift}$'s definition, we get  $\nrMispred{\tup{m_{i},a_{i}}}{\buf}+1 \geq |\crun( \mathit{shift}(\map{\crun}{\hrun}{i-1}(| \buf|)) )| $.
                    From this and $\map{\crun}{\hrun}{i} = \mathit{shift}(\map{\crun}{\hrun}{i-1})$, we get $\nrMispred{\tup{m_{i},a_{i}}}{\buf}+1 \geq |\crun( \map{\crun}{\hrun}{i}(| \buf|)))| $.
                    Since $\buf$ is an arbitrary prefix in $\prefixes{\buf_{i}}$, (ii) holds. 

                    \item[(iii):]
                    From (H.3.a.iii) and $\tagged{\pskip{}}{\notags} \concat \buf \in \prefixes{\buf_{i-1}}$, we have $\nrMispred{\tup{m_{i-1},a_{i-1}}}{ \tagged{\pskip{}}{\notags} \concat \buf } = 0 \leftrightarrow \headWindow{\crun( \map{\crun}{\hrun}{i-1}(| \tagged{\pskip{}}{\notags} \concat \buf |) )} = \infty$.
                    From $m_{i} = m_{i-1}$ and $a_{i} = a_{i-1}$, we get $\nrMispred{\tup{m_{i},a_{i}}}{ \tagged{\pskip{}}{\notags} \concat \buf } = 0 \leftrightarrow \headWindow{\crun( \map{\crun}{\hrun}{i-1}(| \tagged{\pskip{}}{\notags} \concat \buf |) )} = \infty$.
                    From $\nrMispred{\tup{m_{i},a_{i}}}{ \tagged{\pskip{}}{\notags} \concat \buf } = \nrMispred{\tup{m_{i},a_{i}}}{  \buf }$, we further get $\nrMispred{\tup{m_{i},a_{i}}}{  \buf } = 0 \leftrightarrow \headWindow{\crun( \map{\crun}{\hrun}{i-1}(| \tagged{\pskip{}}{\notags} \concat \buf |) )} = \infty$.
                    Finally, from $| \tagged{\pskip{}}{\notags} \concat \buf | = |\buf|+1$, $\mathit{shift}$'s definition, and $\map{\crun}{\hrun}{i} = \mathit{shift}(\map{\crun}{\hrun}{i-1})$, we get $\nrMispred{\tup{m_{i},a_{i}}}{  \buf } = 0 \leftrightarrow \headWindow{\crun( \map{\crun}{\hrun}{i}(|  \buf |) )} = \infty$.
                    Since $\buf$ is an arbitrary prefix in $\prefixes{\buf_{i}}$, (iii) holds.

				   \item[(iv):]
                   We need to show $\headWindow{\crun(\map{\crun}{\hrun}{i}(|\buf|))} > 0$.
                   From (H.3.a.iv) and $\tagged{\pskip{}}{\notags} \concat \buf \in \prefixes{\buf_{i-1}}$, we have $\headWindow{\crun(\map{\crun}{\hrun}{i-1}(|\tagged{\pskip{}}{\notags} \concat \buf|))} > 0$.
                   From this and $|\tagged{\pskip{}}{\notags} \concat \buf| = |\buf| +1$, we have $\headWindow{\crun(\map{\crun}{\hrun}{i-1}( |\buf|+1))} > 0$.
                   From $\mathit{shift}$'s definition, we get $\headWindow{\crun( \mathit{shift} (\map{\crun}{\hrun}{i-1}( |\buf|) ))} > 0$.
                   Finally, from this and $\map{\crun}{\hrun}{i} = \mathit{shift}(\map{\crun}{\hrun}{i-1})$, we get $\headWindow{\crun(\map{\crun}{\hrun}{i}(|\buf|))} > 0$.

                \end{description}
                This concludes the proof for (a).

                \item[(b):]
                For (b), we need to show  that for all $\tup{j,j'} \in \nextPairs{\map{\crun}{\hrun}{i}}$,
				$\headWindow{ \crun(\map{\crun}{\hrun}{i}(j)) } = \headWindow{ \crun(\map{\crun}{\hrun}{i}(j') } = \infty$ or
				$\headWindow{ \crun(\map{\crun}{\hrun}{i}(j) } = \infty$ and $\headWindow{ \crun(\map{\crun}{\hrun}{i}(j') } = \wInterf$ or
				$\headWindow{ \crun(\map{\crun}{\hrun}{i}(j) } \neq \infty \wedge \headWindow{ \crun(\map{\crun}{\hrun}{i}(j') } = \headWindow{ \crun(\map{\crun}{\hrun}{i}(j) } - 1$.
                This immediately follows from $\map{\crun}{\hrun}{i} = \mathit{shift}(\map{\crun}{\hrun}{i-1})$ and (H.b).

            \end{description}

			\item[Rule \textsc{Retire-Fence}:]
			The proof of this case is identical to that of the rule \textsc{Retire-Skip}.

			\item[Rule \textsc{Retire-Assignment}:]
			Then, $	\buf_{i-1} = \tagged{\passign{x}{v}}{\notags} \concat \buf_i$, $m_{i} = m_{i-1}$, $a_{i} = a_{i-1}[x \mapsto v]$, and $v \in \Val$.
            
            There are two cases:
            \begin{description}
                \item[$x \neq \pc$:]
                We now show that all conditions hold:
                \begin{description}
                    \item[(a):]
                    From the well-formedness of $\buf_{i-1}$ (see Lemma~\ref{lemma:vanilla:buffers-well-formedness}) and $	\buf_{i-1} = \tagged{\passign{x}{v}}{\notags} \concat \buf_i$, we have that $\buf_{i-1} =  \tagged{\passign{x}{v}}{\notags} \concat \tagged{\pmarkedassign{\pc}{\ell}}{\notags} \concat \buf_i'$ and $\buf_i = \tagged{\pmarkedassign{\pc}{\ell}}{\notags} \concat \buf_i'$.
                    Therefore, we have:
                    \begin{align*}
                        \prefixes{\buf_{i-1}} & = \{ \emptysequence \} \cup \{  
                            \tagged{\passign{x}{v}}{\notags} \concat \tagged{\pmarkedassign{\pc}{\ell}}{\notags} \concat \buf' \mid \buf' \in \prefixes{\buf_i'} \} \\
                        \prefixes{\buf_{i}} & = \{ \tagged{\pmarkedassign{\pc}{\ell}}{\notags} \} \cup \{ \tagged{\pmarkedassign{\pc}{\ell}}{\notags} \concat \buf' \mid \buf' \in \prefixes{\buf_i'} \} 
                    \end{align*}
                    Let $\buf$ be an arbitrary prefix in $\prefixes{\buf_{i}}$.
                    From the definitions of $\prefixes{\buf_{i-1}}$ and $\prefixes{\buf_{i}}$ (and from the well-formedness of the buffers), we know that $\tagged{\passign{x}{v}}{\notags} \concat \buf$ is a prefix in  $\prefixes{\buf_{i-1}}$.

                    We now show that (i)--(iv) hold:
                    \begin{description}
                        \item[(i):]
                        From (H.3.a.i) and $\tagged{\passign{x}{v}}{\notags} \concat \buf\in \prefixes{\buf_{i-1}}$,  we have $\update{\tup{m_{i-1},a_{i-1}}}{ (\tagged{\passign{x}{v}}{\notags} \concat \buf) } = \crun( \map{\crun}{\hrun}{i-1}(| \tagged{\passign{x}{v}}{\notags} \concat \buf |) )$.
                        By applying $\update{}{}$ to $\tagged{\passign{x}{v}}{\notags}$, we get $\update{\tup{m_{i-1},a_{i-1}[x \mapsto v]}}{  \buf } = \crun( \map{\crun}{\hrun}{i-1}(| \tagged{\passign{x}{v}}{\notags} \concat \buf |) )$.
                        From $m_{i} = m_{i-1}$ and $a_{i} = a_{i-1}[x \mapsto v]$, we get $\update{\tup{m_{i},a_{i}}}{  \buf } = \crun( \map{\crun}{\hrun}{i-1}(| \tagged{\passign{x}{v}}{\notags} \concat \buf |) )$.
                        From $| \tagged{\passign{x}{v}}{\notags} \concat \buf | = |\buf|+1$, we get $\update{\tup{m_{i},a_{i}}}{ \buf } = \crun( \map{\crun}{\hrun}{i-1}(|\buf |+1) )$.
                        From this and $\mathit{shift}$'s definition, we get $\update{\tup{m_{i},a_{i}}}{ \buf } = \crun( \mathit{shift}(\map{\crun}{\hrun}{i-1}(|\buf |)) )$.
                        From this and $\map{\crun}{\hrun}{i} = \mathit{shift}(\map{\crun}{\hrun}{i-1})$, we get  $\update{\tup{m_{i},a_{i}}}{ \buf } = \crun( \map{\crun}{\hrun}{i}(|\buf |) )$.
                        Since $\buf$ is an arbitrary prefix in $\prefixes{\buf_{i}}$, (i) holds.
             
                        \item[(ii):] 
                        From (H.3.a.ii) and $\tagged{\passign{x}{v}}{\notags} \concat \buf \in \prefixes{\buf_{i-1}}$, we have  $\nrMispred{\tup{m_{i-1},a_{i-1}}}{\tagged{\passign{x}{v}}{\notags} \concat \buf}+1 \geq |\crun(\map{\crun}{\hrun}{i-1}(|\tagged{\passign{x}{v}}{\notags} \concat \buf|) )| $.
                        Moreover, from $m_{i} = m_{i-1}$ and $\nrMispred{\tup{m_{i-1},a_{i-1}}}{\tagged{\passign{x}{v}}{\notags} \concat \buf}+1 = \nrMispred{\tup{m_{i},a_{i}[x \mapsto v]}}{\buf}+1$, we get $\nrMispred{\tup{m_{i},a_{i}}}{\buf}+1 \geq |\crun(\map{\crun}{\hrun}{i-1}(|\tagged{\passign{x}{v}}{\notags} \concat \buf|) )| $.
                        Finally, from $| \tagged{\passign{x}{v}}{\notags} \concat \buf | = |\buf|+1$ and $\mathit{shift}$'s definition, we get  $\nrMispred{\tup{m_{i},a_{i}}}{\buf}+1 \geq |\crun( \mathit{shift}(\map{\crun}{\hrun}{i-1}(| \buf|)) )| $.
                        From this and $\map{\crun}{\hrun}{i} = \mathit{shift}(\map{\crun}{\hrun}{i-1})$, we get $\nrMispred{\tup{m_{i},a_{i}}}{\buf}+1 \geq |\crun( \map{\crun}{\hrun}{i}(| \buf|)))| $.
                        Since $\buf$ is an arbitrary prefix in $\prefixes{\buf_{i}}$, (ii) holds.  

                        \item[(iii):]
                        From (H.3.a.iii) and $\tagged{\passign{x}{v}}{\notags} \concat \buf \in \prefixes{\buf_{i-1}}$, we have $\nrMispred{\tup{m_{i-1},a_{i-1}}}{ \tagged{\passign{x}{v}}{\notags} \concat \buf } = 0 \leftrightarrow \headWindow{\crun( \map{\crun}{\hrun}{i-1}(| \tagged{\pskip{}}{\notags} \concat \buf |) )} = \infty$.
                        From $m_i = m_{i-1}$, $a_i = a_{i-1}[x \mapsto v]$, and $\nrMispred{\tup{m_{i-1},a_{i-1}}}{ \tagged{\passign{x}{v}}{\notags} \concat \buf } = \nrMispred{\tup{m_{i-1},a_{i-1}[x\mapsto v]}}{  \buf }$, we get $\nrMispred{\tup{m_{i},a_{i}}}{  \buf } = 0 \leftrightarrow \headWindow{\crun( \map{\crun}{\hrun}{i-1}(| \tagged{\passign{x}{v}}{\notags} \concat \buf |) )} = \infty$.
                        Finally, from $| \tagged{\passign{x}{v}}{\notags} \concat \buf | = |\buf|+1$, $\mathit{shift}$'s definition, and $\map{\crun}{\hrun}{i} = \mathit{shift}(\map{\crun}{\hrun}{i-1})$, we get $\nrMispred{\tup{m_{i},a_{i}}}{  \buf } = 0 \leftrightarrow \headWindow{\crun( \map{\crun}{\hrun}{i}(|  \buf |) )} = \infty$.
                        Since $\buf$ is an arbitrary prefix in $\prefixes{\buf_{i}}$, (iii) holds. 
                        
    				   \item[(iv):]
	                   We need to show $\headWindow{\crun(\map{\crun}{\hrun}{i}(|\buf|))} > 0$.
	                   From (H.3.a.iv) and $\tagged{\passign{x}{v}}{\notags} \concat \buf \in \prefixes{\buf_{i-1}}$, we have $\headWindow{\crun(\map{\crun}{\hrun}{i-1}(|\tagged{\passign{x}{v}}{\notags} \concat \buf|))} > 0$.
	                   From this and $|\tagged{\passign{x}{v}}{\notags} \concat \buf| = |\buf| +1$, we have $\headWindow{\crun(\map{\crun}{\hrun}{i-1}( |\buf|+1))} > 0$.
	                   From $\mathit{shift}$'s definition, we get $\headWindow{\crun( \mathit{shift} (\map{\crun}{\hrun}{i-1}( |\buf|) ))} > 0$.
	                   Finally, from this and $\map{\crun}{\hrun}{i} = \mathit{shift}(\map{\crun}{\hrun}{i-1})$, we get $\headWindow{\crun(\map{\crun}{\hrun}{i}(|\buf|))} > 0$.
                    \end{description}
                    
                    \item[(b):] 
                    The proof of (b) is similar to that for the \textsc{Retire-Skip} rule - case (b).

                \end{description}

                \item[$x = \pc$:] 
                We now show that all conditions hold:
                \begin{description}
                    \item[(a):]
                    From the well-formedness of $\buf_{i-1}$ (see Lemma~\ref{lemma:vanilla:buffers-well-formedness}) and $	\buf_{i-1} = \tagged{\passign{\pc}{v}}{\notags} \concat \buf_i$, we have that $\buf_{i-1} =  \tagged{\passign{\pc}{v}}{\notags} \concat  \buf_i$.
                    Therefore, we have:
                    \begin{align*}
                        \prefixes{\buf_{i-1}} & = \{ \emptysequence \} \cup \{  
                            \tagged{\passign{\pc}{v}}{\notags} \concat \buf' \mid \buf' \in \prefixes{\buf_i} \} 
                    \end{align*}
                    Let $\buf$ be an arbitrary prefix in $\prefixes{\buf_{i}}$.
                    From the definitions of $\prefixes{\buf_{i-1}}$ and $\prefixes{\buf_{i}}$ (and from the well-formedness of the buffers), we know that $\tagged{\passign{\pc}{v}}{\notags} \concat \buf$ is a prefix in  $\prefixes{\buf_{i-1}}$.

                    We now show that (i)-(iii) hold:
                    \begin{description}
                        \item[(i):]
                        From (H.3.a.i) and $\tagged{\passign{\pc}{v}}{\notags} \concat \buf\in \prefixes{\buf_{i-1}}$,  we have $\update{\tup{m_{i-1},a_{i-1}}}{ (\tagged{\passign{\pc}{v}}{\notags} \concat \buf) } = \crun( \map{\crun}{\hrun}{i-1}(| \tagged{\passign{x}{v}}{\notags} \concat \buf |) )$.
                        By applying $\update{}{}$ to $\tagged{\passign{\pc}{v}}{\notags}$, we get $\update{\tup{m_{i-1},a_{i-1}[\pc \mapsto v]}}{  \buf } = \crun( \map{\crun}{\hrun}{i-1}(| \tagged{\passign{\pc}{v}}{\notags} \concat \buf |) )$.
                        From $m_{i} = m_{i-1}$ and $a_{i} = a_{i-1}[\pc \mapsto v]$, we get $\update{\tup{m_{i},a_{i}}}{  \buf } = \crun( \map{\crun}{\hrun}{i-1}(| \tagged{\passign{\pc}{v}}{\notags} \concat \buf |) )$.
                        From $| \tagged{\passign{\pc}{v}}{\notags} \concat \buf | = |\buf|+1$, we get $\update{\tup{m_{i},a_{i}}}{ \buf } = \crun( \map{\crun}{\hrun}{i-1}(|\buf |+1) )$.
                        From this and $\mathit{shift}$'s definition, we get $\update{\tup{m_{i},a_{i}}}{ \buf } = \crun( \mathit{shift}(\map{\crun}{\hrun}{i-1}(|\buf |)) )$.
                        From this and $\map{\crun}{\hrun}{i} = \mathit{shift}(\map{\crun}{\hrun}{i-1})$, we get  $\update{\tup{m_{i},a_{i}}}{ \buf } = \crun( \map{\crun}{\hrun}{i}(|\buf |) )$.
                        Since $\buf$ is an arbitrary prefix in $\prefixes{\buf_{i}}$, (i) holds.
             
                        \item[(ii):] 
                        From (H.3.a.ii) and $\tagged{\passign{\pc}{v}}{\notags} \concat \buf \in \prefixes{\buf_{i-1}}$, we have  $\nrMispred{\tup{m_{i-1},a_{i-1}}}{\tagged{\passign{\pc}{v}}{\notags} \concat \buf}+1 \geq |\crun(\map{\crun}{\hrun}{i-1}(|\tagged{\passign{\pc}{v}}{\notags} \concat \buf|) )| $.
                        Moreover, from $m_{i} = m_{i-1}$ and $\nrMispred{\tup{m_{i-1},a_{i-1}}}{\tagged{\passign{\pc}{v}}{\notags} \concat \buf}+1 = \nrMispred{\tup{m_{i},a_{i}[\pc \mapsto v]}}{\buf}+1$, we get $\nrMispred{\tup{m_{i},a_{i}}}{\buf}+1 \geq |\crun(\map{\crun}{\hrun}{i-1}(|\tagged{\passign{x}{v}}{\notags} \concat \buf|) )| $.
                        Finally, from $| \tagged{\passign{\pc}{v}}{\notags} \concat \buf | = |\buf|+1$ and $\mathit{shift}$'s definition, we get  $\nrMispred{\tup{m_{i},a_{i}}}{\buf}+1 \geq |\crun( \mathit{shift}(\map{\crun}{\hrun}{i-1}(| \buf|)) )| $.
                        From this and $\map{\crun}{\hrun}{i} = \mathit{shift}(\map{\crun}{\hrun}{i-1})$, we get $\nrMispred{\tup{m_{i},a_{i}}}{\buf}+1 \geq |\crun( \map{\crun}{\hrun}{i}(| \buf|)))| $.
                        Since $\buf$ is an arbitrary prefix in $\prefixes{\buf_{i}}$, (ii) holds.  

                        \item[(iii):]
                        From (H.3.a.iii) and $\tagged{\passign{\pc}{v}}{\notags} \concat \buf \in \prefixes{\buf_{i-1}}$, we have $\nrMispred{\tup{m_{i-1},a_{i-1}}}{ \tagged{\passign{\pc}{v}}{\notags} \concat \buf } = 0 \leftrightarrow \headWindow{\crun( \map{\crun}{\hrun}{i-1}(| \tagged{\pskip{}}{\notags} \concat \buf |) )} = \infty$.
                        From $m_i = m_{i-1}$, $a_i = a_{i-1}[\pc \mapsto v]$, and $\nrMispred{\tup{m_{i-1},a_{i-1}}}{ \tagged{\passign{\pc}{v}}{\notags} \concat \buf } = \nrMispred{\tup{m_{i-1},a_{i-1}[\pc\mapsto v]}}{  \buf }$ (because the retired assignment is untagged), we get $\nrMispred{\tup{m_{i},a_{i}}}{  \buf } = 0 \leftrightarrow \headWindow{\crun( \map{\crun}{\hrun}{i-1}(| \tagged{\passign{x}{v}}{\notags} \concat \buf |) )} = \infty$.
                        Finally, from $| \tagged{\passign{\pc}{v}}{\notags} \concat \buf | = |\buf|+1$, $\mathit{shift}$'s definition, and $\map{\crun}{\hrun}{i} = \mathit{shift}(\map{\crun}{\hrun}{i-1})$, we get $\nrMispred{\tup{m_{i},a_{i}}}{  \buf } = 0 \leftrightarrow \headWindow{\crun( \map{\crun}{\hrun}{i}(|  \buf |) )} = \infty$.
                        Since $\buf$ is an arbitrary prefix in $\prefixes{\buf_{i}}$, (iii) holds.  
                        
                       \item[(iv):]
	                   We need to show $\headWindow{\crun(\map{\crun}{\hrun}{i}(|\buf|))} > 0$.
	                   From (H.3.a.iv) and $\tagged{\passign{\pc}{v}}{\notags} \concat \buf \in \prefixes{\buf_{i-1}}$, we have $\headWindow{\crun(\map{\crun}{\hrun}{i-1}(|\tagged{\passign{x}{v}}{\notags} \concat \buf|))} > 0$.
	                   From this and $|\tagged{\passign{\pc}{v}}{\notags} \concat \buf| = |\buf| +1$, we have $\headWindow{\crun(\map{\crun}{\hrun}{i-1}( |\buf|+1))} > 0$.
	                   From $\mathit{shift}$'s definition, we get $\headWindow{\crun( \mathit{shift} (\map{\crun}{\hrun}{i-1}( |\buf|) ))} > 0$.
	                   Finally, from this and $\map{\crun}{\hrun}{i} = \mathit{shift}(\map{\crun}{\hrun}{i-1})$, we get $\headWindow{\crun(\map{\crun}{\hrun}{i}(|\buf|))} > 0$.
                    \end{description}
                    
                    \item[(b):] 
                    For (b), we need to show  that for all $\tup{j,j'} \in \nextPairs{\map{\crun}{\hrun}{i}}$,
                    $\headWindow{ \crun(\map{\crun}{\hrun}{i}(j)) } = \headWindow{ \crun(\map{\crun}{\hrun}{i}(j') } = \infty$ or
                    $\headWindow{ \crun(\map{\crun}{\hrun}{i}(j) } = \infty$ and $\headWindow{ \crun(\map{\crun}{\hrun}{i}(j') } = \wInterf$ or
                    $\headWindow{ \crun(\map{\crun}{\hrun}{i}(j) } \neq \infty \wedge \headWindow{ \crun(\map{\crun}{\hrun}{i}(j') } = \headWindow{ \crun(\map{\crun}{\hrun}{i}(j) } - 1$.
                    This immediately follows from $\map{\crun}{\hrun}{i} = \mathit{shift}(\map{\crun}{\hrun}{i-1})$ and (H.b). 
                \end{description} 

            \end{description}         
		
			\item[Rule \textsc{Retire-Marked-Assignment}:]
			The proof of this case is similar to that of the rule \textsc{Retire-Assignment} (when $x = \pc$).

			\item[Rule \textsc{Retire-Store}:]
			The proof of this case is similar to that of the rule \textsc{Retire-Assignment} (when $x\neq \pc$).
		\end{description}
		This completes the proof for the $\retire{}$ case.

    \end{description}
    This concludes the proof of the induction step.
    
\end{description}
This concludes the proof of our theorem.
\end{proof}

\subsection{Indistinguishability lemma}

\begin{definition}[Deep-indistinguishability of hardware configurations]\label{def:vanilla:deep-indistinguishability}
	We say that two hardware configurations $\tup{\sigma,\mu} = \tup{m,a,\buf, \CacheState,\BpState, \SchedState}$ and $\tup{\sigma',\mu'} = \tup{m',a',\buf', \CacheState',\BpState', \SchedState'}$ are \emph{deep-indistinguishable}, written $\tup{\sigma,\mu} \sim \tup{\sigma',\mu'}$, iff
	\begin{inparaenum}[(a)]
		\item $\apply{\buf}{a}(\pc) = \apply{\buf'}{a'}(\pc)$,
		\item $\DeepProject{\buf} = \DeepProject{\buf'}$,
		\item $\CacheState = \CacheState'$,
		\item $\BpState = \BpState'$, and
		\item $\SchedState = \SchedState'$,
		\end{inparaenum}
		where $\DeepProject{\buf}$ is inductively defined as follows:
		\begin{align*}
			\DeepProject{\emptysequence}	&:= \emptysequence\\
			\DeepProject{\tagged{\pskip{}}{T}}  &:=  \tagged{\pskip{}}{T}\\
			\DeepProject{\tagged{\pbarrier{}}{T}}  &:=  \tagged{\pbarrier{}}{T}\\
			\DeepProject{\tagged{\passign{x}{e}}{T}}  &:=
			{
				\begin{cases}
				\tagged{\passign{x}{\resolved}}{T} & \text{if}\ e \in \Val \wedge x \neq \pc\\
				\tagged{\passign{x}{e}}{T} & \text{if}\ e \not\in \Val \wedge x \neq \pc\\
				\tagged{\passign{x}{e}}{T} & \text{if}\ x = \pc
				\end{cases}
			}\\
			\DeepProject{\tagged{\pload{x}{e}}{T}}  &:=
			{
				\tagged{\pload{x}{e}}{T}
			}\\
			\DeepProject{\tagged{\pstore{x}{e}}{T}}  &:=
			{
				\begin{cases}
					\tagged{\pstore{x}{e}}{T} & \text{if}\ x \in \Var\\
					\tagged{\pstore{\resolved}{e}}{T} & \text{if}\ x \in \Val
				\end{cases}
			}\\
			\DeepProject{\tagged{i}{T} \concat \buf} &:= \DeepProject{\tagged{i}{T}} \concat \DeepProject{\buf}
		\end{align*}
\end{definition}

\begin{lemma}[Deep-buffer projection equality implies equality of data-independent projections]\label{lemma:vanilla:buffer-projections}
Let $\buf, \buf'$ be two reorder buffers.
If $\DeepProject{\buf} = \DeepProject{\buf'}$, then $\BufProject{\buf} = \BufProject{\buf'}$.
\end{lemma}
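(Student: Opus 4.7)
The plan is to proceed by structural induction on $\buf$, after first establishing a single-command version of the claim, namely that for any two commands $c, c'$, if $\DeepProject{c} = \DeepProject{c'}$, then $\BufProject{c} = \BufProject{c'}$. This local claim is the heart of the argument, and it relies on the observation that $\DeepProject{\cdot}$ strictly refines $\BufProject{\cdot}$: wherever $\BufProject{\cdot}$ collapses an expression to $\unresolved$, $\DeepProject{\cdot}$ retains enough structure (either the whole expression, or the $\resolved$ tag) to distinguish the two cases.

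For the base case of the outer induction, observe that $\DeepProject{\buf} = \emptysequence$ forces $\buf = \emptysequence$ because every non-empty $\DeepProject{\tagged{i}{T} \concat \buf_0}$ starts with the non-empty sequence $\DeepProject{\tagged{i}{T}}$. Hence $\buf = \buf' = \emptysequence$ and $\BufProject{\buf} = \BufProject{\buf'} = \emptysequence$. For the inductive step, write $\buf = \tagged{i}{T} \concat \buf_0$; since the first command's deep projection is non-empty and uniquely determines the split, $\buf' = \tagged{i'}{T'} \concat \buf_0'$ with $\DeepProject{\tagged{i}{T}} = \DeepProject{\tagged{i'}{T'}}$ and $\DeepProject{\buf_0} = \DeepProject{\buf_0'}$. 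The induction hypothesis yields $\BufProject{\buf_0} = \BufProject{\buf_0'}$, and the single-command claim yields $\BufProject{\tagged{i}{T}} = \BufProject{\tagged{i'}{T'}}$, so $\BufProject{\buf} = \BufProject{\buf'}$ by the clause $\BufProject{c \concat \buf} = \BufProject{c} \concat \BufProject{\buf}$.

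For the single-command claim, I would perform a case split on the shape of $\tagged{i}{T}$. The cases $\pskip, \pbarrier, \pload{x}{e}$ are immediate because $\DeepProject{\cdot}$ leaves them essentially untouched (and $\BufProject{\cdot}$ is determined by $e$'s membership in $\Val$, which is visible in $\DeepProject$). For $\pstore{x}{e}$, the deep projection exposes whether $x \in \Val$ and preserves $e$, which is enough to recover all four sub-cases of $\BufProject$. For assignments $\passign{x}{e}$ and $\pmarkedassign{x}{e}$, the split is on whether $x = \pc$: if $x = \pc$, then $\DeepProject{\cdot}$ preserves $e$ verbatim, so membership of $e$ in $\Val$ and hence $\BufProject{\cdot}$ is determined; if $x \neq \pc$, then $\DeepProject{\cdot}$ gives $\resolved$ precisely when $e \in \Val$ and otherwise keeps $e$ (which is not in $\Val$), again determining $\BufProject{\cdot}$.

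The main subtlety, and the only place where care is needed, is the assignment case with $x = \pc$: there $\DeepProject{\cdot}$ does not mark the expression with $\resolved$ even when $e \in \Val$, so one must check that $\DeepProject{\tagged{\passign{\pc}{e}}{T}} = \DeepProject{\tagged{\passign{\pc}{e'}}{T'}}$ still forces $e = e'$ and $T = T'$, from which equality of $\BufProject$s follows directly. Everything else is mechanical case-chasing against the two definitions; no microarchitectural or contract-level reasoning is required.
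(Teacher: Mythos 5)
Your proof is correct and rests on exactly the observation the paper uses: $\DeepProject{\cdot}$ refines $\BufProject{\cdot}$, so $\BufProject{\buf}$ is a function of $\DeepProject{\buf}$; the paper compresses this into a one-sentence appeal to that fact (indeed worded with the direction of ``disclosed'' apparently flipped), while you carry out the pointwise case analysis and the induction explicitly. No substantive difference in approach.
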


\begin{proof}
It follows from the fact that everything disclosed by $\DeepProject{\buf}$ is also disclosed by $\BufProject{\buf}$.
\end{proof}

\begin{lemma}[Observation equivalence preserves deep-indistinguishability]\label{lemma:vanilla:trace-equiv-implies-stepwise-indistinguishability}
    Let $p$ be a well-formed program and $C_0 = \tup{m_0,a_0,\buf_0,\CacheState_0, \BpState_0, \SchedState_0}$, $C_0' = \tup{m_0',a_0',\buf_0',\CacheState_0', \BpState_0', \SchedState_0'}$ be reachable hardware configurations.
    If
    \begin{inparaenum}[(a)]
        \item $C_0 \sim C_0'$, and
        \item for all $\buf \in \prefixes{\buf_0}$, $\buf' \in \prefixes{\buf_0'}$ such that $|\buf| = |\buf'|$, 
        there are $s_0, s_0', s_1, s_1', \tau, \tau'$ such that $s_0 \CtSpecInterfStep{\tau}{} s_1$, $s_0' \CtSpecInterfStep{\tau'}{} s_1'$, $\headWindow{s_0} > 0$, $\headWindow{s_0'} >0$,  $\tau = \tau'$, $C_0 \bufEquiv{|\buf|} \sigma_0$, and $C_0' \bufEquiv{|\buf'|} \sigma_0'$,
    \end{inparaenum}
    then either there are $C_1, C_1'$ such that $C_0 \muarchStep{}{} C_1$, $C_0' \muarchStep{}{} C_1'$, and $C_1 \sim C_1'$ or there is no $C_1$ such that $C_0 \muarchStep{}{} C_1$ and no $C_1'$ such that $C_0' \muarchStep{}{} C_1'$.
\end{lemma}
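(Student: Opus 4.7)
The plan is to do a case analysis on the directive $d = \SchedNext(\SchedState) = \SchedNext(\SchedState')$, which is well-defined and identical on both sides because condition (a) gives $\SchedState = \SchedState'$. For each of the three possible forms of $d$ (namely $\fetch{}$, $\execute{j}$, and $\retire$), I will further split on the rule of $\muarchStep{d}{}$ that fires on $C_0$, and show that the very same rule must fire on $C_0'$ and that the resulting configurations are again deep-indistinguishable. Because condition (a) also gives $\BufProject{\buf_0} = \BufProject{\buf_0'}$ (by Lemma~\ref{lemma:vanilla:buffer-projections}) and $\apply{\buf_0}{a_0}(\pc) = \apply{\buf_0'}{a_0'}(\pc)$, all ``structural'' preconditions (buffer length, tag shape, shape of the $j$-th entry, presence of a pending barrier/store, resolvedness of operands, cache-hit/miss outcome because $\CacheState_0 = \CacheState_0'$, predictor output because $\BpState_0 = \BpState_0'$, etc.) match exactly on the two sides, so the same rule instance is applicable or both runs are stuck simultaneously.

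The bulk of the work is showing that the \emph{data} produced by each rule is deep-indistinguishable. For $\fetch{}$ this is easy: the fetched instruction is $p(\apply{\buf_0}{a_0}(\pc)) = p(\apply{\buf_0'}{a_0'}(\pc))$, new entries are either of the form $\tagged{\passign{\pc}{\ell}}{T}$ (with $\ell$ determined by the predictor, so identical) or $\tagged{i}{\notags}\concat\tagged{\pmarkedassign{\pc}{\ell+1}}{\notags}$, and $\DeepProject{\cdot}$ hides none of these. For $\retire$ the updated architectural state preserves the PC equality (both sides retire structurally identical head-commands), and the shortened buffers keep identical deep projections. For $\execute{j}$ the nontrivial subcases are (i) executing an assignment/store whose operands are $\resolved$, (ii) executing a load, and (iii) executing a branch resolution. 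Here I use condition (b) together with Lemma~\ref{lemma:vanilla:mapping-is-correct}: for the prefixes $\buf_0[0..j-1]$ and $\buf_0'[0..j-1]$ (which have equal length) the hypothesis yields contract states $s_0 \equiv_{j-1} C_0$ and $s_0' \equiv_{j-1} C_0'$ whose next contract step emits the same observation $\tau=\tau'$; this forces $\apply{\buf_0[0..j-1]}{a_0}$ and $\apply{\buf_0'[0..j-1]}{a_0'}$ to agree on the values of the contract-observable part of the instruction, namely the load/store address, the new PC for branches, and the commit-vs-rollback choice. Combined with $\DeepProject{\buf_0} = \DeepProject{\buf_0'}$, which guarantees that the operand registers of assignments that become $\resolved$ are computed from syntactically identical subexpressions under identical register ``shape'', this yields equality of the resolved values up to $\DeepProject{\cdot}$ and hence preservation of clause (b) of Definition~\ref{def:vanilla:deep-indistinguishability}.

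The remaining components of deep-indistinguishability are immediate: the cache update uses the same address (in the fetch case the PC, which agrees; in the load/store case the address that we just argued is contract-observable and thus equal), so $\CacheState_1 = \CacheState_1'$; the branch-predictor update uses the same resolved outcome, so $\BpState_1 = \BpState_1'$; and finally the scheduler is updated via $\SchedUpdate(\SchedState,\BufProject{\buf_1})$, whose inputs coincide on both sides by the previous points and Lemma~\ref{lemma:vanilla:buffer-projections}, giving $\SchedState_1 = \SchedState_1'$. The PC equality $\apply{\buf_1}{a_1}(\pc) = \apply{\buf_1'}{a_1'}(\pc)$ is re-established in each subcase by the same argument used for the data.

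The main obstacle is the branch-rollback subcase of $\execute{j}$. There one must argue that if $C_0$ takes the \textsc{Execute-Branch-Rollback} rule then so does $C_0'$ with the \emph{same} new target $\ell'$, and that the truncated buffers $\buf_0[0..j-1]\concat\tagged{\passign{\pc}{\ell'}}{\notags}$ and $\buf_0'[0..j-1]\concat\tagged{\passign{\pc}{\ell'}}{\notags}$ remain deep-indistinguishable. Whether the branch commits or rolls back depends on the value of $x$ under $\apply{\buf_0[0..j-1]}{a_0}$, which is not visible to $\DeepProject{\cdot}$ by itself; it becomes visible only through the contract hypothesis, where by Lemma~\ref{lemma:vanilla:mapping-is-correct} the contract states corresponding to the prefix $\buf_0[0..j-1]$ produce the same $\pcObs{\cdot}$ observation in their next step, and this observation is exactly the new label $\ell'$ chosen by the rule. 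Once this link is made, deep-indistinguishability of the truncated buffers is immediate because the discarded suffixes had equal deep projections to begin with.
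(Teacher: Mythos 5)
Your proposal matches the paper's proof essentially step for step: case analysis on the shared directive $d=\SchedNext(\SchedState_0)$, lock-step applicability of the same rule on both sides (or simultaneous stuckness) via the structural equalities from (a), and use of the contract states supplied by hypothesis (b) to force agreement on the data-dependent quantities (load/store addresses, jump targets, branch outcomes), exactly as the paper does. The only nit is that on a branch step the contract emits $\pcObs{\ell_{\mathit{mispred}}}$ rather than the rolled-back target $\ell'$ itself, but since that observation still determines whether $\headConf{s_0}(x)=0$, your argument for the rollback subcase goes through unchanged.
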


\begin{proof}
    Let $p$ be a well-formed program, $C_0 = \tup{m_0,a_0,\buf_0,\CacheState_0, \BpState_0, \SchedState_0}$, and $C_0' = \tup{m_0',a_0',\buf_0',\CacheState_0', \BpState_0', \SchedState_0'}$.
    Moreover, we assume that conditions (a) and (b) holds. 
    In the following, we denote by (c) the post-condition ``either there are $C_1, C_1'$ such that $C_0 \muarchStep{}{} C_1$, $C_0' \muarchStep{}{} C_1'$, and $C_1 \sim C_1'$ or there is no $C_1$ such that $C_0 \muarchStep{}{} C_1$ and no $C_1'$ such that $C_0' \muarchStep{}{} C_1'$.''
    
    From (a), it follows that $\SchedState_0 = \SchedState_0$.
    Therefore, the directive obtained from the scheduler is the same in both cases, i.e., $\SchedNext(\SchedState_0) = \SchedNext(\SchedState_0')$.
    We proceed by case distinction on the directive $d = \SchedNext(\SchedState_0)$:
    \begin{description}
    \item[$d = \fetch{}$:]
        Therefore, we can only apply one of the $\fetch{}$ rules depending on the current program counter.
        There are two cases: 
        \begin{description}
            \item[$\apply{\buf_0}{a_0}(\pc) \neq \bot \wedge |\buf_0| < \wMuarch$:]
            There are several cases:
            \begin{description}
                \item[$\CacheAccess(\CacheState_0,  \apply{\buf_0}{a_0}(\pc)) = \CacheHit \wedge p(\apply{\buf_0}{a_0}(\pc)) = \pjz{x}{\lbl}$:] 
                From (a), we get that $\CacheState_0' = \CacheState_0$ and $\apply{\buf_0'}{a_0'}(\pc) = \apply{\buf_0}{a_0}(\pc)$.
                Therefore, $\CacheAccess(\CacheState_0',  \apply{\buf_0'}{a_0'}(\pc)) = \CacheHit$.
                Moreover, from (a) we also get that $\apply{\buf_0}{a_0}(\pc)=\apply{\buf_0'}{a_0'}(\pc)$ and, therefore, $p(\apply{\buf_0'}{a_0'}(\pc)) = \pjz{x}{\lbl}$ as well.
                Therefore, we can apply the \textsc{Fetch-Branch-Hit} and \textsc{Step} rules to $C_0$ and $C_0'$ as follows:
                \begin{align*}
                    \lbl_0 &:= \BpPredict(\BpState_0, \apply{\buf_0}{a_0}(\pc))\\
                    \lbl_0' &:= \BpPredict(\BpState_0', \apply{\buf_0'}{a_0'}(\pc))\\
                    \buf_1 &:= \buf_0  \concat  \tagged{\passign{\pc}{\lbl_0}}{\apply{\buf_0}{a_0}(\pc)}\\
                    \buf_1' &:= \buf_0'  \concat \tagged{\passign{\pc}{\lbl_0'}}{\apply{\buf_0'}{a_0'}(\pc)}\\
                    \tup{m_0,a_0,\buf_0,\CacheState_0,\BpState_0} &\muarchStep{\fetch{}}{} \tup{m_0, a_0, \buf_1, \CacheUpdate(\CacheState_0, \apply{\buf_0}{a_0}(\pc)),\BpState_0}\\
                    \tup{m_0,a_0,\buf_0,\CacheState_0,\BpState_0, \SchedState_0} &\muarchStep{}{} \tup{m_0, a_0, \buf_1, \CacheUpdate(\CacheState_0, \apply{\buf_0}{a_0}(\pc)),\BpState_0, \SchedUpdate(\SchedState_0, \BufProject{\buf_1})}\\
                    \tup{m_0',a_0',\buf_0',\CacheState_0',\BpState_0'} &\muarchStep{\fetch{}}{} \tup{m_0', a_0', \buf_1', \CacheUpdate(\CacheState_0',  \apply{\buf_0'}{a_0'}(\pc)),\BpState_0'}\\
                    \tup{m_0',a_0',\buf_0',\CacheState_0',\BpState_0', \SchedState_0'} &\muarchStep{}{} \tup{m_0', a_0', \buf_1', \CacheUpdate(\CacheState_0',  \apply{\buf_0'}{a_0'}(\pc)),\BpState_0', \SchedUpdate(\SchedState_0', \BufProject{\buf_1'})}
                \end{align*}
                We now show that $C_1 =  \tup{m_0, a_0, \buf_1, \CacheUpdate(\CacheState_0, \apply{\buf_0}{a_0}(\pc)),\BpState_0, \SchedUpdate(\SchedState_0, \BufProject{\buf_1})}$ and $C_1' = \tup{m_0', a_0', \buf_1', \CacheUpdate(\CacheState_0',  \apply{\buf_0'}{a_0'}(\pc)),\BpState_0', \SchedUpdate(\SchedState_0', \BufProject{\buf_1'})}$ are indistinguishable, i.e., $C_1 \sim C_1'$.
                For this, we need to show that:
                \begin{description}
                    \item[$\apply{\buf_1}{a_0}(\pc) = \apply{\buf_1'}{a_0'}(\pc)$:]
                    We know that $\apply{\buf_1}{a_0}(\pc) = \lbl_0$ and $\apply{\buf_1'}{a_0'}(\pc) = \lbl_0'$.
                    From $\lbl_0 = \BpPredict(\BpState_0, \apply{\buf_0}{a_0}(\pc))$, $\lbl_0' = \BpPredict(\BpState_0', \apply{\buf_0'}{a_0'}(\pc))$, and (a), we immediately get $\lbl_0 = \lbl_0'$.

                    \item[$\DeepProject{\buf_1} = \DeepProject{\buf_1'}$:] 
                    This follows from $\DeepProject{\buf_0} = \DeepProject{\buf_0'}$, which in turn follows from (a), $\buf_1 = \buf_0  \concat  \tagged{\passign{\pc}{\lbl_0}}{\apply{\buf_0}{a_0}(\pc)}$, $\buf_1' = \buf_0'  \concat  \tagged{\passign{\pc}{\lbl_0'}}{\apply{\buf_0'}{a_0'}(\pc)}$, and $\lbl_0 = \lbl_0'$.
    
                    \item[$\CacheUpdate(\CacheState_0,  \apply{\buf_0}{a_0}(\pc)) = \CacheUpdate(\CacheState_0',  \apply{\buf_0'}{a_0'}(\pc))$:]
                    This follows from $\CacheState_0 = \CacheState_0'$, which in turn follows from (a), and $\apply{\buf_0}{a_0}(\pc) = \apply{\buf_0'}{a_0'}(\pc)$.
    
                    \item[$\BpState_0 = \BpState_0':$] 
                    This follows from (a).
    
                    \item[$\SchedUpdate(\SchedState_0, \BufProject{\buf_1}) = \SchedUpdate(\SchedState_0', \BufProject{\buf_1'})$:]
                    From (a), we have   $\SchedState_0 = \SchedState_0'$.
                    From $\DeepProject{\buf_1} = \DeepProject{\buf_1'}$ and Lemma~\ref{lemma:vanilla:buffer-projections}, we have $\BufProject{\buf_1} = \BufProject{\buf_1'}$.
                    Therefore, $\SchedUpdate(\SchedState_0, \BufProject{\buf_1}) = \SchedUpdate(\SchedState_0', \BufProject{\buf_1'})$.
                \end{description}
                Therefore, $C_1 \sim C_1'$ and (c) holds.
    
                \item[$\CacheAccess(\CacheState_0,  \apply{\buf_0}{a_0}(\pc)) = \CacheHit \wedge p(\apply{\buf_0}{a_0}(\pc)) = \pjmp{e}$:] 
                From (a), we get that $\CacheState_0' = \CacheState_0$ and $\apply{\buf_0'}{a_0'}(\pc) = \apply{\buf_0}{a_0}(\pc)$.
                Therefore, $\CacheAccess(\CacheState_0',  \apply{\buf_0'}{a_0'}(\pc)) = \CacheHit$.
                Moreover, from (a) we also get that $\apply{\buf_0}{a_0}(\pc)=\apply{\buf_0'}{a_0'}(\pc)$ and, therefore, $p(\apply{\buf_0'}{a_0'}(\pc)) = \pjmp{e}$ as well.
                Therefore, we can apply the \textsc{Fetch-Jump-Hit} and \textsc{Step} rules to $C_0$ and $C_0'$ as follows:
                \begin{align*}
                    \buf_1 &:= \buf_0  \concat  \tagged{\passign{\pc}{e}}{\notags}\\
                    \buf_1' &:= \buf_0'  \concat  \tagged{\passign{\pc}{e}}{\notags}\\
                    \tup{m_0,a_0,\buf_0,\CacheState_0,\BpState_0} &\muarchStep{\fetch{}}{} \tup{m_0, a_0, \buf_1, \CacheUpdate(\CacheState_0, \apply{\buf_0}{a_0}(\pc)),\BpState_0}\\
                    \tup{m_0,a_0,\buf_0,\CacheState_0,\BpState_0, \SchedState_0} &\muarchStep{}{} \tup{m_0, a_0, \buf_1, \CacheUpdate(\CacheState_0, \apply{\buf_0}{a_0}(\pc)),\BpState_0, \SchedUpdate(\SchedState_0, \BufProject{\buf_1})}\\
                    \tup{m_0',a_0',\buf_0',\CacheState_0',\BpState_0'} &\muarchStep{\fetch{}}{} \tup{m_0', a_0', \buf_1', \CacheUpdate(\CacheState_0',  \apply{\buf_0'}{a_0'}(\pc)),\BpState_0'}\\
                    \tup{m_0',a_0',\buf_0',\CacheState_0',\BpState_0', \SchedState_0'} &\muarchStep{}{} \tup{m_0', a_0', \buf_1', \CacheUpdate(\CacheState_0',  \apply{\buf_0'}{a_0'}(\pc)),\BpState_0', \SchedUpdate(\SchedState_0', \BufProject{\buf_1'})}
                \end{align*}
                We now show that $C_1 =  \tup{m_0, a_0, \buf_1, \CacheUpdate(\CacheState_0, \apply{\buf_0}{a_0}(\pc)),\BpState_0, \SchedUpdate(\SchedState_0, \BufProject{\buf_1})}$ and $C_1' = \tup{m_0', a_0', \buf_1', \CacheUpdate(\CacheState_0',  \apply{\buf_0'}{a_0'}(\pc)),\BpState_0', \SchedUpdate(\SchedState_0', \BufProject{\buf_1'})}$ are indistinguishable, i.e., $C_1 \sim C_1'$.
                For this, we need to show that:
                \begin{description}
                    \item[$\apply{\buf_1}{a_0}(\pc) = \apply{\buf_1'}{a_0'}(\pc)$:]
                    There are two cases.
                    If $e \in \Val$, then $\apply{\buf_1}{a_0}(\pc) = \apply{\buf_1'}{a_0'}(\pc) = e$.
                    Otherwise, $\apply{\buf_1}{a_0}(\pc) = \apply{\buf_1'}{a_0'}(\pc) = \bot$.
    
                    \item[$\DeepProject{\buf_1} = \DeepProject{\buf_1'}$:] 
                    This follows from $\DeepProject{\buf_0} = \DeepProject{\buf_0'}$, which in turn follows from (a), $\buf_1 = \buf_0  \concat  \tagged{\passign{\pc}{e}}{\notags}$, and $\buf_1' = \buf_0'  \concat  \tagged{\passign{\pc}{e}}{\notags}$.
    
                    \item[$\CacheUpdate(\CacheState_0,  \apply{\buf_0}{a_0}(\pc)) = \CacheUpdate(\CacheState_0',  \apply{\buf_0'}{a_0'}(\pc))$:]
                    This follows from $\CacheState_0 = \CacheState_0'$, which in turn follows from (a), and $\apply{\buf_0}{a_0}(\pc) = \apply{\buf_0'}{a_0'}(\pc)$.
    
                    \item[$\BpState_0 = \BpState_0':$] 
                    This follows from (a).
    
                    \item[$\SchedUpdate(\SchedState_0, \BufProject{\buf_1}) = \SchedUpdate(\SchedState_0', \BufProject{\buf_1'})$:]
                    From (a), we have   $\SchedState_0 = \SchedState_0'$.
                    From $\DeepProject{\buf_1} = \DeepProject{\buf_1'}$ and Lemma~\ref{lemma:vanilla:buffer-projections}, we have $\BufProject{\buf_1} = \BufProject{\buf_1'}$.
                    Therefore, $\SchedUpdate(\SchedState_0, \BufProject{\buf_1}) = \SchedUpdate(\SchedState_0', \BufProject{\buf_1'})$.
                \end{description}
                Therefore, $C_1 \sim C_1'$ and (c) holds.
                
                \item[$\CacheAccess(\CacheState_0,  \apply{\buf_0}{a_0}(\pc)) = \CacheHit \wedge p(\apply{\buf_0}{a_0}(\pc)) \neq \pjz{x}{\lbl} \wedge p(\apply{\buf_0}{a_0}(\pc)) \neq \pjmp{e}$:]
                Observe that from (a) it follows that $|\buf_0| \geq \wMuarch-1$ iff $|\buf_0'| \geq \wMuarch-1$.
                Therefore, if $|\buf_0| \geq \wMuarch-1$, then (c) holds since both computations are stuck.
                In the following, we assume that $|\buf_0| < \wMuarch-1$ and $|\buf_0'| < \wMuarch-1$.
                
                From (a), we get that $\CacheState_0' = \CacheState_0$ and $\apply{\buf_0'}{a_0'}(\pc) = \apply{\buf_0}{a_0}(\pc)$.
                Therefore, $\CacheAccess(\CacheState_0',  \apply{\buf_0'}{a_0'}(\pc)) = \CacheHit$.
                Moreover, from (a) we also get that $\apply{\buf_0}{a_0}(\pc)=\apply{\buf_0'}{a_0'}(\pc)$ and, therefore, $p(\apply{\buf_0'}{a_0'}(\pc)) \neq \pjz{x}{\lbl} \wedge p(\apply{\buf_0'}{a_0'}(\pc)) \neq \pjmp{e}$ as well.
                Therefore, we can apply the \textsc{Fetch-Others-Hit} and \textsc{Step} rules to $C_0$ and $C_0'$ as follows:
                \begin{align*}
                    v &:= \apply{\buf_0}{a_0}(\pc) +1 \\
                    v' &:= \apply{\buf_0'}{a_0'}(\pc) +1 \\
                    \buf_1 &:= \buf_0  \concat \tagged{p(\apply{\buf_0}{a_0}(\pc))}{\notags} \concat \tagged{\pmarkedassign{\pc}{v}}{\notags}\\
                    \buf_1' &:= \buf_0'  \concat \tagged{p(\apply{\buf_0'}{a_0'}(\pc))}{\notags} \concat \tagged{\pmarkedassign{\pc}{v'}}{\notags}\\
                    \tup{m_0,a_0,\buf_0,\CacheState_0,\BpState_0} &\muarchStep{\fetch{}}{} \tup{m_0, a_0, \buf_1, \CacheUpdate(\CacheState_0, \apply{\buf_0}{a_0}(\pc)),\BpState_0}\\
                    \tup{m_0,a_0,\buf_0,\CacheState_0,\BpState_0, \SchedState_0} &\muarchStep{}{} \tup{m_0, a_0, \buf_1, \CacheUpdate(\CacheState_0, \apply{\buf_0}{a_0}(\pc)),\BpState_0, \SchedUpdate(\SchedState_0, \BufProject{\buf_1})}\\
                    \tup{m_0',a_0',\buf_0',\CacheState_0',\BpState_0'} &\muarchStep{\fetch{}}{} \tup{m_0', a_0', \buf_1', \CacheUpdate(\CacheState_0',  \apply{\buf_0'}{a_0'}(\pc)),\BpState_0'}\\
                    \tup{m_0',a_0',\buf_0',\CacheState_0',\BpState_0', \SchedState_0'} &\muarchStep{}{} \tup{m_0', a_0', \buf_1', \CacheUpdate(\CacheState_0',  \apply{\buf_0'}{a_0'}(\pc)),\BpState_0', \SchedUpdate(\SchedState_0', \BufProject{\buf_1'})}
                \end{align*}
                We now show that $C_1 =  \tup{m_0, a_0, \buf_1, \CacheUpdate(\CacheState_0, \apply{\buf_0}{a_0}(\pc)),\BpState_0, \SchedUpdate(\SchedState_0, \BufProject{\buf_1})}$ and $C_1' = \tup{m_0', a_0', \buf_1', \CacheUpdate(\CacheState_0',  \apply{\buf_0'}{a_0'}(\pc)),\BpState_0', \SchedUpdate(\SchedState_0', \BufProject{\buf_1'})}$ are indistinguishable, i.e., $C_1 \sim C_1'$.
                For this, we need to show that:
                \begin{description}
                    \item[$\apply{\buf_1}{a_0}(\pc) = \apply{\buf_1'}{a_0'}(\pc)$:]
                    From (a), we get that $\apply{\buf_0}{a_0}(\pc) = \apply{\buf_0'}{a_0'}(\pc)$.
                    From this, we have that $v = v'$.
                    Therefore, $\apply{\buf_1}{a_0}(\pc) = \apply{\buf_1'}{a_0'}(\pc) = v$.
    
                    \item[$\DeepProject{\buf_1} = \DeepProject{\buf_1'}$:] 
                    This follows from $\DeepProject{\buf_0} = \DeepProject{\buf_0'}$, which in turn follows from (a), $\buf_1 = \buf_0  \concat \tagged{p(\apply{\buf_0}{a_0}(\pc))}{\notags} \concat \tagged{\pmarkedassign{\pc}{v}}{\notags}$, $\buf_1' = \buf_0'  \concat \tagged{p(\apply{\buf_0'}{a_0'}(\pc))}{\notags} \concat \tagged{\pmarkedassign{\pc}{v'}}{\notags}$,  $v = v'$, and $\apply{\buf_0}{a_0}(\pc) = \apply{\buf_0'}{a_0'}(\pc)$.
    
                    \item[$\CacheUpdate(\CacheState_0,  \apply{\buf_0}{a_0}(\pc)) = \CacheUpdate(\CacheState_0',  \apply{\buf_0'}{a_0'}(\pc))$:]
                    This follows from $\CacheState_0 = \CacheState_0'$, which in turn follows from (a), and $\apply{\buf_0}{a_0}(\pc) = \apply{\buf_0'}{a_0'}(\pc)$.
    
                    \item[$\BpState_0 = \BpState_0':$] 
                    This follows from (a).
    
                    \item[$\SchedUpdate(\SchedState_0, \BufProject{\buf_1}) = \SchedUpdate(\SchedState_0', \BufProject{\buf_1'})$:]
                    From (a), we have   $\SchedState_0 = \SchedState_0'$.
                    From $\DeepProject{\buf_1} = \DeepProject{\buf_1'}$ and Lemma~\ref{lemma:vanilla:buffer-projections}, we have $\BufProject{\buf_1} = \BufProject{\buf_1'}$.
                    Therefore, $\SchedUpdate(\SchedState_0, \BufProject{\buf_1}) = \SchedUpdate(\SchedState_0', \BufProject{\buf_1'})$.
                \end{description}
                Therefore, $C_1 \sim C_1'$ and (c) holds.
    
                \item[$\CacheAccess(\CacheState_0,  \apply{\buf_0}{a_0}(\pc)) = \CacheMiss$:]
                From (a), we get that $\CacheState_0' = \CacheState_0$ and $\apply{\buf_0'}{a_0'}(\pc) = \apply{\buf_0}{a_0}(\pc)$.
                Therefore, $\CacheAccess(\CacheState_0',  \apply{\buf_0'}{a_0'}(\pc)) = \CacheMiss$.
                Therefore, we can apply the \textsc{Fetch-Miss} and \textsc{Step} rules to $C_0$ and $C_0'$ as follows:
                \begin{align*}
                    \tup{m_0,a_0,\buf_0,\CacheState_0,\BpState_0} &\muarchStep{\fetch{}}{} \tup{m_0, a_0, \buf_0, \CacheUpdate(\CacheState_0, \apply{\buf_0}{a_0}(\pc)),\BpState_0}\\
                    \tup{m_0,a_0,\buf_0,\CacheState_0,\BpState_0, \SchedState_0} &\muarchStep{}{} \tup{m_0, a_0, \buf_0, \CacheUpdate(\CacheState_0,  \apply{\buf_0}{a_0}(\pc)),\BpState_0, \SchedUpdate(\SchedState_0, \BufProject{\buf_0})}\\
                    \tup{m_0',a_0',\buf_0',\CacheState_0',\BpState_0'} &\muarchStep{\fetch{}}{} \tup{m_0', a_0', \buf_0', \CacheUpdate(\CacheState_0',  \apply{\buf_0'}{a_0'}(\pc)),\BpState_0'}\\
                    \tup{m_0',a_0',\buf_0',\CacheState_0',\BpState_0', \SchedState_0'} &\muarchStep{}{} \tup{m_0', a_0', \buf_0', \CacheUpdate(\CacheState_0',  \apply{\buf_0'}{a_0'}(\pc)),\BpState_0', \SchedUpdate(\SchedState_0', \BufProject{\buf_0'})}
                \end{align*}
                We now show that $C_1 = \tup{m_0, a_0, \buf_0, \CacheUpdate(\CacheState_0,  \apply{\buf_0}{a_0}(\pc)),\BpState_0, \SchedUpdate(\SchedState_0, \BufProject{\buf_0})}$ and $C_1' = \tup{m_0', a_0', \buf_0', \CacheUpdate(\CacheState_0',  \apply{\buf_0'}{a_0'}(\pc)),\BpState_0', \SchedUpdate(\SchedState_0', \BufProject{\buf_0'})}$ are indistinguishable, i.e., $C_1 \sim C_1'$.
                For this, we need to show that:
                \begin{description}
                    \item[$\apply{\buf_0}{a_0}(\pc) = \apply{\buf_0'}{a_0'}(\pc)$:]
                    This follows from (a).
    
                    \item[$\DeepProject{\buf_0} = \DeepProject{\buf_0'}$:] 
                    This follows from (a).
    
                    \item[$\CacheUpdate(\CacheState_0,  \apply{\buf_0}{a_0}(\pc)) = \CacheUpdate(\CacheState_0',  \apply{\buf_0'}{a_0'}(\pc))$:]
                    This follows from $\CacheState_0 = \CacheState_0'$, which in turn follows from (a), and $\apply{\buf_0}{a_0}(\pc) = \apply{\buf_0'}{a_0'}(\pc)$.
    
                    \item[$\BpState_0 = \BpState_0':$] 
                    This follows from (a).
    
                    \item[$\SchedUpdate(\SchedState_0, \BufProject{\buf_0}) = \SchedUpdate(\SchedState_0', \BufProject{\buf_0'})$:]
                    From (a), we have   $\SchedState_0 = \SchedState_0'$.
                    From $\DeepProject{\buf_1} = \DeepProject{\buf_1'}$ and Lemma~\ref{lemma:vanilla:buffer-projections}, we have $\BufProject{\buf_1} = \BufProject{\buf_1'}$.
                    Therefore, $\SchedUpdate(\SchedState_0, \BufProject{\buf_1}) = \SchedUpdate(\SchedState_0', \BufProject{\buf_1'})$.
                \end{description}
                Therefore, $C_1 \sim C_1'$ and (c) holds.
            \end{description}
            
            \item[$\apply{\buf_0}{a_0}(\pc) = \bot \vee |\buf_0| \geq \wMuarch$:]
            Then, from (a), we immediately get that $\apply{\buf_0'}{a_0'}(\pc) = \bot \vee |\buf_0'| \geq \wMuarch$ holds as well.
            Therefore, both computations are stuck and (c) holds.
        \end{description}
        Therefore, (c) holds for all the cases.

    \item[$d = \execute{i}$:]
    Therefore, we can only apply one of the $\execute{}$ rules.
    There are two cases:
    \begin{description}
        \item[$i \leq |\buf_0| \wedge \pbarrier \not\in {\buf_0[0..i-1]}$:]
        There are several cases depending on the $i$-th command in the reorder buffer:
        \begin{description}
            \item[$\elt{\buf_0}{i} = \tagged{\pload{x}{e}}{T}$:]
            From (a), we also have that $\elt{\buf_0'}{i} =  \tagged{\pload{x}{e}}{T}$, $i \leq |\buf_0'|$, and $\pbarrier \not\in \buf_0'[0..i-1]$. 
            There are two cases:
            \begin{description}
                \item[$\pstore{x'}{e'} \not\in \buf_0{[0..i-1]}$:]
                We now show that $\exprEval{e}{\apply{\buf_0[0..i-1]}{a_0}} = \exprEval{e}{\apply{\buf_0'[0..i-1]}{a_0'}}$.
                Since $C_0, C_0'$ are reachable configurations, the buffers $\buf_0, \buf_0'$ are well-formed (see Lemma~\ref{lemma:vanilla:buffers-well-formedness}), and therefore  $\buf_0[0..i-1] \in \prefixes{\buf_0}$ and  $\buf_0'[0..i-1] \in \prefixes{\buf_0'}$.
                From (b), therefore, there are configurations $s_0, s_0', s_1, s_1'$ such that $C_0 \bufEquiv{|\buf_0[0..i-1]|} s_0$, $C_0' \bufEquiv{|\buf_0[0..i-1]|} s_0'$, $s_0 \CtSpecInterfStep{\tau}{} s_1$,  $s_0' \CtSpecInterfStep{\tau'}{} s_1'$, $\headWindow{s_0} > 0$, $\headWindow{s_0'} >0$, and $\tau = \tau'$. 
                From (a), $C_0 \bufEquiv{|\buf_0[0..i-1]|} s_0$, $C_0' \bufEquiv{|\buf_0[0..i-1]|} s_0'$, and the well-formedness of the buffers, we know that $p(\headConf{s_0}(\pc)) = p(\headConf{s_0'}(\pc)) = \pload{x}{e}$.
                From $\CtSpecInterf{\cdot}$, $\headWindow{s_0} > 0$, $\headWindow{s_0'} >0$, and $p(\headConf{s_0}(\pc)) = p(\headConf{s_0'}(\pc)) = \pload{x}{e}$, we have that $\tau = \loadObs{ \exprEval{e}{\headConf{s_0} } }$  and $\tau' = \loadObs{ \exprEval{e}{ \headConf{s_0'} } }$ (because $s_0 \CtSpecInterfStep{\tau}{} s_1$ and $s_0' \CtSpecInterfStep{\tau'}{} s_1'$ have been obtained by applying the \textsc{Step} rule of $\CtSpecInterfStep{}{}$ and the \textsc{Load} rule of $\CtSeqInterfStep{}{})$.
                From $\tau=\tau'$, we get that $\exprEval{e}{\headConf{s_0}} = \exprEval{e}{\headConf{s_0'}}$.
                From this, $C_0 \bufEquiv{|\buf_0[0..i-1]|} s_0$, and $C_0' \bufEquiv{|\buf_0'[0..i-1]|} s_0'$, we finally get $\exprEval{e}{\apply{\buf_0{[0..i-1]}}{a_0}} = \exprEval{e}{\apply{\buf_0'{[0..i-1]}}{a_0'}(x)}$.
    
                Let $n = \exprEval{e}{\apply{\buf_0{[0..i-1]}}{a_0}} = \exprEval{e}{\apply{\buf_0'{[0..i-1]}}{a_0'}(x)}$.
                There are two cases:
                \begin{description}
                    \item[$\CacheAccess(\CacheState_0, \exprEval{e}{ \apply{\buf_0{[0..i-1]}}{a_0} }) = \CacheHit$:]
                    From (a), we have that $\CacheState_0 = \CacheState_0'$.
                    Moreover, we have already shown that $\exprEval{e}{ \apply{\buf_0{[0..i-1]}}{a_0} } = \exprEval{e}{ \apply{\buf_0'{[0..i-1]}}{a_0'} }$.
                    Therefore, we can apply the \textsc{Execute-Load-Hit} and \textsc{Step} rules to $C_0$ and $C_0'$ as follows:
                    \begin{align*}
                        \buf_0 &:= \buf_0[0..i-1] \concat \tagged{\pload{x}{e}}{T} \concat \buf_0[i+1 .. |\buf_0|]\\
                        \buf_1 &:= \buf_0[0..i-1] \concat \tagged{\passign{x}{m_0(n)}}{T} \concat \buf_0[i+1 .. |\buf_0|]\\
                        \tup{m_0,a_0,\buf_0, \CacheState_0, \BpState_0} &\muarchStep{\execute{i}}{} \tup{m_0,a_0,\buf_1, \CacheUpdate(\CacheState_0,n), \BpState_0}\\
                        \tup{m_0,a_0,\buf_0, \CacheState_0,  \BpState_0 , \SchedState_0} &\muarchStep{}{} \tup{m_0,a_0,\buf_1, \CacheUpdate(\CacheState_0,n), \BpState_0,\SchedUpdate(\SchedState_0, \BufProject{\buf_0})}\\
                        \buf_0' &:= \buf_0'[0..i-1] \concat \tagged{\pload{x}{e}}{T} \concat \buf_0'[i+1 .. |\buf_0|]\\
                        \buf_1' &:= \buf_0'[0..i-1] \concat \tagged{\passign{x}{m_0'(n)}}{T} \concat \buf_0'[i+1 .. |\buf_0|]\\
                        \tup{m_0',a_0',\buf_0', \CacheState_0', \BpState_0'} &\muarchStep{\execute{i}}{} \tup{m_0',a_0',\buf_1', \CacheUpdate(\CacheState_0',n), \BpState_0'}\\
                        \tup{m_0',a_0',\buf_0', \CacheState_0', \BpState_0', \SchedState_0'} &\muarchStep{}{} \tup{m_0',a_0',\buf_1', \CacheUpdate(\CacheState_0',n), \BpState_0',\SchedUpdate(\SchedState_0', \BufProject{\buf_0'})}
                    \end{align*}
                    We now show that $C_1 = \tup{m_0,a_0,\buf_1, \CacheUpdate(\CacheState_0,n), \BpState_0,\SchedUpdate(\SchedState_0, \BufProject{\buf_0})}$ and $C_1' = \tup{m_0',a_0',\buf_1', \CacheUpdate(\CacheState_0',n), \BpState_0',\SchedUpdate(\SchedState_0', \BufProject{\buf_0'})}$ are indistinguishable, i.e., i.e., $C_1 \sim C_1'$.
        
                    For this, we need to show that:
                        \begin{description}
                            \item[$\apply{\buf_1}{a_0}(\pc) = \apply{\buf_1'}{a_0'}(\pc)$:]
                            This immediately follows from (a) and the fact that \textbf{load}s do not modify $\pc$.
                
                            \item[$\DeepProject{\buf_1} = \DeepProject{\buf_1'}$:] 
                            This immediately follows from (a) and $x \neq \pc$ (from the well-formedness of the buffers).

                            \item[$\CacheUpdate(\CacheState_0,n) = \CacheUpdate(\CacheState_0',n)$:]
                            This follows from $\CacheState_0 = \CacheState_0'$, which follows from (a).
                
                            \item[$\BpState_0= \BpState_0':$] 
                            This follows from  (a).
                
                            \item[$\SchedUpdate(\SchedState_0, \BufProject{\buf_1}) = \SchedUpdate(\SchedState_0', \BufProject{\buf_1'})$:]
                            From (a), we have   $\SchedState_0 = \SchedState_0'$.
                            From $\DeepProject{\buf_1} = \DeepProject{\buf_1'}$ and Lemma~\ref{lemma:vanilla:buffer-projections}, we have $\BufProject{\buf_1} = \BufProject{\buf_1'}$.
                            Therefore, $\SchedUpdate(\SchedState_0, \BufProject{\buf_1}) = \SchedUpdate(\SchedState_0', \BufProject{\buf_1'})$.
                        \end{description}
                        Therefore, $C_1 \sim C_1'$ and (c) holds.
    
                    \item[$\CacheAccess(\CacheState_0, \exprEval{e}{\apply{\buf_0{[0..i-1]}}{a_0}}) = \CacheMiss$:]
                    The proof of this case is similar to the one for the $\CacheHit$ case (except that we apply the \textsc{Execute-Load-Miss} rule). 
                \end{description}
    
                \item[$\pstore{x'}{e'} \in \buf_0{[0..i-1]}$:]
                From (a), we also have that $\pstore{x'}{e'} \in \buf_0'[0..i-1]$.
                Therefore, both computations are stuck and (c) holds.
    
            \end{description}
    
            \item[$\elt{\buf_0}{i} =  \tagged{\passign{\pc}{\lbl}}{\lbl_0} \wedge \ell_0 \neq \emptysequence$:]
            From (a), we also have that $\elt{\buf_0'}{i} =  \tagged{\passign{\pc}{\lbl}}{\lbl_0} \wedge \ell_0 \neq \emptysequence$, $i \leq |\buf_0'|$, and $\pbarrier \not\in \buf_0'[0..i-1]$. 
            Observe that $p(\lbl_0) = \pjz{x}{\lbl''}$.
    
            We now show that $\apply{\buf_0[0..i-1]}{a_0}(x) = \apply{\buf_0'[0..i-1]}{a_0'}(x)$.
            Since $C_0, C_0'$ are reachable configurations, the buffers $\buf_0, \buf_0'$ are well-formed (see Lemma~\ref{lemma:vanilla:buffers-well-formedness}), and therefore  $\buf_0[0..i-1] \in \prefixes{\buf_0}$ and  $\buf_0'[0..i-1] \in \prefixes{\buf_0'}$.
            From (b), therefore, there are configurations $s_0, s_0', s_1, s_1'$ such that $C_0 \bufEquiv{|\buf_0[0..i-1]|} s_0$, $C_0' \bufEquiv{|\buf_0[0..i-1]|} s_0'$, $s_0 \CtSpecInterfStep{\tau}{} s_1$,  $s_0' \CtSpecInterfStep{\tau'}{} s_1'$, $\headWindow{s_0} > 0$, $\headWindow{s_0'} >0$, and $\tau = \tau'$. 
            From (a), $C_0 \bufEquiv{|\buf_0[0..i-1]|} s_0$, $C_0' \bufEquiv{|\buf_0[0..i-1]|} s_0'$, and the well-formedness of the buffers, we know that $p(\headConf{s_0}(\pc)) = p(\headConf{s_0'}(\pc)) = \pjz{x}{\lbl''}$.
            From $\CtSpecInterf{\cdot}$, $\headWindow{s_0} > 0$, $\headWindow{s_0'} >0$, and  $p(\headConf{s_0}(\pc)) = p(\headConf{s_0'}(\pc)) = \pjz{x}{\lbl''}$, we have that $(\tau = \pcObs{ \lbl'' } \leftrightarrow \headConf{s_0}(x) = 0) \wedge (\tau = \pcObs{ \sigma_0(\pc)+1 } \leftrightarrow \headConf{s_0}(x) \neq 0)$  and $(\tau' = \pcObs{ \lbl'' } \leftrightarrow \headConf{s_0'}(x) = 0) \wedge (\tau' = \pcObs{ \sigma_0(\pc)+1 } \leftrightarrow \headConf{s_0'}(x) \neq 0)$ (because $s_0 \CtSpecInterfStep{\tau}{} s_1$ and  $s_0' \CtSpecInterfStep{\tau'}{} s_1'$ have been obtained by applying the \textsc{Branch} rule of $\CtSpecInterfStep{}{}$).
            From $\tau=\tau'$, we get that $\headConf{s_0}(x) = \headConf{s_0'}(x)$.
            From this, $C_0 \bufEquiv{|\buf_0[0..i-1]|} s_0$, and $C_0' \bufEquiv{|\buf_0'[0..i-1]|} s_0'$, we finally get $\apply{\buf_0{[0..i-1]}}{a_0}(x) = \apply{\buf_0'{[0..i-1]}}{a_0'}(x)$.
    
            Given that $\apply{\buf_0[0..i-1]}{a_0}(x) = \apply{\buf_0'[0..i-1]}{a_0'}(x)$, there are two cases:
            \begin{description}
                \item[$( \apply{\buf_0[0..i-1]}{a_0}(x) = 0 \wedge \lbl = \lbl'') \vee (\apply{\buf_0[0..i-1]}{a_0}(x) \in \Val \setminus \{0,\bot\} \wedge \lbl = \ell_0+1)$:]
                From $\apply{\buf_0{[0..i-1]}}{a_0}(x) = \apply{\buf_0'{[0..i-1]}}{a_0'}(x)$ and (a), we also get $( \apply{\buf_0'[0..i-1]}{a_0'}(x) = 0 \wedge \lbl = \lbl'') \vee (\apply{\buf_0'[0..i-1]}{a_0'}(x) \in \Val \setminus \{0,\bot\} \wedge \lbl = \ell_0+1)$.
                Therefore,  we can apply the \textsc{Execute-Branch-Commit} and \textsc{Step} rules to $C_0$ and $C_0'$ as follows:
                \begin{align*}
                    \buf_0 &:= \buf_0[0..i-1] \concat \tagged{\passign{\pc}{\ell}}{\ell_0} \concat \buf_0[i+1 .. |\buf_0|]\\
                    \buf_1 &:= \buf_0[0..i-1] \concat \tagged{\passign{\pc}{\ell}}{\notags} \concat \buf_0[i+1 .. |\buf_0|]\\
                    \tup{m_0,a_0,\buf_0, \CacheState_0, \BpState_0} &\muarchStep{\execute{i}}{} \tup{m_0,a_0,\buf_1, \CacheState_0, \BpState_0}\\
                    \tup{m_0,a_0,\buf_0, \CacheState_0,  \BpUpdate(\BpState_0, \ell_0, \ell) , \SchedState_0} &\muarchStep{}{} \tup{m_0,a_0,\buf_1, \CacheState_0, \BpUpdate(\BpState_0, \ell_0, \ell),\SchedUpdate(\SchedState_0, \BufProject{\buf_0})}\\
                    \buf_0' &:= \buf_0'[0..i-1] \concat \tagged{\passign{\pc}{\ell}}{\ell_0} \concat \buf_0'[i+1 .. |\buf_0|]\\
                    \buf_1' &:= \buf_0'[0..i-1] \concat \tagged{\passign{\pc}{\ell}}{\notags} \concat \buf_0'[i+1 .. |\buf_0|]\\
                    \tup{m_0',a_0',\buf_0', \CacheState_0', \BpUpdate(\BpState_0', \ell_0, \ell)} &\muarchStep{\execute{i}}{} \tup{m_0',a_0',\buf_1', \CacheState_0', \BpState_0'}\\
                    \tup{m_0',a_0',\buf_0', \CacheState_0', \BpState_0', \SchedState_0'} &\muarchStep{}{} \tup{m_0',a_0',\buf_1', \CacheState_0', \BpUpdate(\BpState_0', \ell_0, \ell),\SchedUpdate(\SchedState_0', \BufProject{\buf_0'})}
                \end{align*}
                We now show that $C_1 = \tup{m_0,a_0,\buf_1, \CacheState_0, \BpUpdate(\BpState_0, \ell_0, \ell),\SchedUpdate(\SchedState_0, \BufProject{\buf_0})}$ and $C_1' = \tup{m_0',a_0',\buf_1', \CacheState_0', \BpUpdate(\BpState_0', \ell_0, \ell),\SchedUpdate(\SchedState_0', \BufProject{\buf_0'})}$ are indistinguishable, i.e., i.e., $C_1 \sim C_1'$.
    
                For this, we need to show that:
                    \begin{description}
                        \item[$\apply{\buf_1}{a_0}(\pc) = \apply{\buf_1'}{a_0'}(\pc)$:]
                        This immediately follows from (a) and the fact that we set $\pc$ to $\ell$ in both computations.
            
                        \item[$\DeepProject{\buf_1} = \DeepProject{\buf_1'}$:] 
                        This immediately follows from (a).

                        \item[$\CacheState_0 = \CacheState_0'$:]
                        This follows from (a).
            
                        \item[$\BpUpdate(\BpState_0, \ell_0, \ell) = \BpUpdate(\BpState_0', \ell_0, \ell):$] 
                        This follows from $\BpState_0 = \BpState_0'$, which follows from (a).
            
                        \item[$\SchedUpdate(\SchedState_0, \BufProject{\buf_1}) = \SchedUpdate(\SchedState_0', \BufProject{\buf_1'})$:]
                        From (a), we have   $\SchedState_0 = \SchedState_0'$.
                        From $\DeepProject{\buf_1} = \DeepProject{\buf_1'}$ and Lemma~\ref{lemma:vanilla:buffer-projections}, we have $\BufProject{\buf_1} = \BufProject{\buf_1'}$.
                        Therefore, $\SchedUpdate(\SchedState_0, \BufProject{\buf_1}) = \SchedUpdate(\SchedState_0', \BufProject{\buf_1'})$.
                    \end{description}
                    Therefore, $C_1 \sim C_1'$ and (c) holds.
    
                \item[$( \apply{\buf_0[0..i-1]}{a_0}(x) = 0 \wedge \lbl \neq \lbl'') \vee (\apply{\buf_0[0..i-1]}{a_0}(x) \in \Val \setminus \{0,\bot\} \wedge \lbl \neq \ell_0+1)$:]
                The proof of this case is similar to the one of the $( \apply{\buf_0[0..i-1]}{a_0}(x) = 0 \wedge \lbl = \lbl'') \vee (\apply{\buf_0[0..i-1]}{a_0}(x) \in \Val \setminus \{0,\bot\} \wedge \lbl = \ell_0+1)$ (except that we apply the \textsc{Execute-Branch-Rollback} rule).
            \end{description}
    
            \item[$\elt{\buf_0}{i} = \tagged{\passign{x}{e}}{\notags}$:]
            From (a), we also have that $\elt{\buf_0'}{i} =  \tagged{\passign{x}{e'}}{\notags}$, $i \leq |\buf_0'|$, and $\pbarrier \not\in \buf_0'[0..i-1]$.
            There are two cases:
            \begin{description}
                \item[$\exprEval{e}{\apply{\buf_0[0..i-1]}{a_0}} \neq \bot$:]
                From (a), we also have that $\exprEval{e'}{\apply{\buf_0'[0..i-1]}{a_0'}} \neq \bot$ (indeed, if $e \in \Val$, then $e'$ has to be in $\Val$ as well, and if $e \not\in \Val$, then $e = e'$ and $e$'s dependencies must be resolved in both $\buf_0[0..i-1]$ and $\buf_0'[0..i-1]$ from (a)).
                Therefore,  we can apply the \textsc{Execute-Assignment} and \textsc{Step} rules to $C_0$ and $C_0'$ as follows:
                \begin{align*}
                    v &:= \exprEval{e}{\apply{\buf_0{[0..i-1]}}{a_0}}\\
                    \buf_0 &:= \buf_0[0..i-1] \concat \tagged{\passign{x}{e}}{T} \concat \buf_0[i+1 .. |\buf_0|]\\
                    \buf_1 &:= \buf_0[0..i-1] \concat \tagged{\passign{x}{v}}{T} \concat \buf_0[i+1 .. |\buf_0|]\\
                    \tup{m_0,a_0,\buf_0, \CacheState_0, \BpState_0} &\muarchStep{\execute{i}}{} \tup{m_0,a_0,\buf_1, \CacheState_0, \BpState_0}\\
                    \tup{m_0,a_0,\buf_0, \CacheState_0, \BpState_0, \SchedState_0} &\muarchStep{}{} \tup{m_0,a_0,\buf_1, \CacheState_0, \BpState_0,\SchedUpdate(\SchedState_0, \BufProject{\buf_0})}\\
                    v' &:= \exprEval{e'}{\apply{\buf_0'{[0..i-1]}}{a_0'}}\\
                    \buf_0' &:= \buf_0'[0..i-1] \concat \tagged{\passign{x}{e'}}{T} \concat \buf_0'[i+1 .. |\buf_0|]\\
                    \buf_1' &:= \buf_0'[0..i-1] \concat \tagged{\passign{x}{v'}}{T} \concat \buf_0'[i+1 .. |\buf_0|]\\
                    \tup{m_0',a_0',\buf_0', \CacheState_0', \BpState_0'} &\muarchStep{\execute{i}}{} \tup{m_0',a_0',\buf_1', \CacheState_0', \BpState_0'}\\
                    \tup{m_0',a_0',\buf_0', \CacheState_0', \BpState_0', \SchedState_0'} &\muarchStep{}{} \tup{m_0',a_0',\buf_1', \CacheState_0', \BpState_0',\SchedUpdate(\SchedState_0', \BufProject{\buf_0'})}
                \end{align*}
                We now show that $C_1 = \tup{m_0,a_0,\buf_1, \CacheState_0, \BpState_0,\SchedUpdate(\SchedState_0, \BufProject{\buf_0})}$ and $C_1' = \tup{m_0',a_0',\buf_1', \CacheState_0', \BpState_0',\SchedUpdate(\SchedState_0', \BufProject{\buf_0'})}$ are indistinguishable, i.e., i.e., $C_1 \sim C_1'$.
    
                There are two cases:
                \begin{description}
                    \item[$x \neq \pc$:]
                    For $C_1 \sim C_1'$, we need to show that:
                    \begin{description}
                        \item[$\apply{\buf_1}{a_0}(\pc) = \apply{\buf_1'}{a_0'}(\pc)$:]
                        This immediately follows from (a) and $x \neq \pc$.
            
                        \item[$\DeepProject{\buf_1} = \DeepProject{\buf_1'}$:] 
                        This immediately follows from (a) and $x \neq \pc$.

                        \item[$\CacheState_0 = \CacheState_0'$:]
                        This follows from (a).
            
                        \item[$\BpState_0 = \BpState_0':$] 
                        This follows from (a).
            
                        \item[$\SchedUpdate(\SchedState_0, \BufProject{\buf_1}) = \SchedUpdate(\SchedState_0', \BufProject{\buf_1'})$:]
                        From (a), we have   $\SchedState_0 = \SchedState_0'$.
                        From $\DeepProject{\buf_1} = \DeepProject{\buf_1'}$ and Lemma~\ref{lemma:vanilla:buffer-projections}, we have $\BufProject{\buf_1} = \BufProject{\buf_1'}$.
                        Therefore, $\SchedUpdate(\SchedState_0, \BufProject{\buf_1}) = \SchedUpdate(\SchedState_0', \BufProject{\buf_1'})$.
                    \end{description}
                    Therefore, $C_1 \sim C_1'$ and (c) holds.
    
                    \item[$x = \pc$:]
                    For $C_1 \sim C_1'$, we need to show that:
                    \begin{description}
                        \item[$\apply{\buf_1}{a_0}(\pc) = \apply{\buf_1'}{a_0'}(\pc)$:]
                        For this, we need to show that $v = v'$ (in case there are no later changes to the program counter).
                        Since $C_0, C_0'$ are reachable configurations, the buffers $\buf_0, \buf_0'$ are well-formed (see Lemma~\ref{lemma:vanilla:buffers-well-formedness}), and therefore  $\buf_0[0..i-1] \in \prefixes{\buf_0}$ and  $\buf_0'[0..i-1] \in \prefixes{\buf_0'}$.
                        From (b), therefore, there are configurations $s_0, s_0', s_1, s_1'$ such that $C_0 \bufEquiv{|\buf_0[0..i-1]|} s_0$, $C_0' \bufEquiv{|\buf_0[0..i-1]|} s_0'$, $s_0 \CtSpecInterfStep{\tau}{} s_1$,  $s_0' \CtSpecInterfStep{\tau'}{} s_1'$, $\headWindow{s_0} > 0$, $\headWindow{s_0'} >0$, and $\tau = \tau'$. 
                        There are two cases:
                        \begin{description}
                            \item[$e \in \Val$:] 
                            Then,  $e = e'$ follows from (a) and, therefore, we immediately have $v = v'$.
    
                            \item[$e \not\in \Val$:] 
                            Then, from (a) we have $e = e'$.
                            From (a), $C_0 \bufEquiv{|\buf_0[0..i-1]|} s_0$, $C_0' \bufEquiv{|\buf_0[0..i-1]|} s_0'$, and the well-formedness of the buffers, we know that $p(\headConf{s_0}(\pc)) = p(\headConf{s_0'}(\pc)) = \pjmp{e}$.
                            From $\CtSpecInterf{\cdot}$, $\headWindow{s_0} > 0$, and $\headWindow{s_0'} >0$, we have that $\tau = \pcObs{ \exprEval{e}{\headConf{s_0}} }$ and $\tau' = \pcObs{ \exprEval{e}{\headConf{s_0'}}}$ (because $s_0 \CtSpecInterfStep{\tau}{} s_1$ and $s_0' \CtSpecInterfStep{\tau'}{} s_1'$ have been obtained by applying the \textsc{Step} rule of $\CtSpecInterfStep{}{}$ and the \textsc{Jump} rule of $\CtSeqInterfStep{}{}$).
                            From $C_0 \bufEquiv{|\buf_0[0..i-1]|} s_0$ and $\tau = \pcObs{ \exprEval{e}{ \headConf{s_0} } }$, we have that $\tau = \pcObs{ \exprEval{e}{\apply{\buf_0{[0..i-1]}}{a_0}}}$.
                            Similarly, from $C_0' \bufEquiv{|\buf_0'[0..i-1]|} s_0'$ and $\tau' = \pcObs {\exprEval{e}{\headConf{s_0'}}}$, we have that $\tau' = \pcObs{\exprEval{e}{\apply{\buf_0'{[0..i-1]}}{a_0'}}}$.
                            Finally, from $\tau=\tau'$, we get $\exprEval{e}{\apply{\buf_0{[0..i-1]}}{a_0}} = \exprEval{e}{\apply{\buf_0'{[0..i-1]}}{a_0'}}$ and, therefore, $v = v'$.
                        \end{description}
                            
                        \item[$\DeepProject{\buf_1} = \DeepProject{\buf_1'}$:] 
                        This immediately follows from (a) and $v = v'$ (shown above).
            
                        \item[$\CacheState_0 = \CacheState_0'$:]
                        This follows from (a).
            
                        \item[$\BpState_0 = \BpState_0':$] 
                        This follows from (a).
            
                        \item[$\SchedUpdate(\SchedState_0, \BufProject{\buf_1}) = \SchedUpdate(\SchedState_0', \BufProject{\buf_1'})$:]
                        From (a), we have   $\SchedState_0 = \SchedState_0'$.
                        From $\DeepProject{\buf_1} = \DeepProject{\buf_1'}$ and Lemma~\ref{lemma:seq-processor:buffer-projections}, we have $\BufProject{\buf_1} = \BufProject{\buf_1'}$.
                        Therefore, $\SchedUpdate(\SchedState_0, \BufProject{\buf_1}) = \SchedUpdate(\SchedState_0', \BufProject{\buf_1'})$.
                    \end{description}
                    Therefore, $C_1 \sim C_1'$ and (c) holds.
                \end{description} 
    
                \item[$\exprEval{e}{\apply{\buf_0[0..i-1]}{a_0}} = \bot$:] 
                From this, it follows that $e \not\in \Val$.
                Therefore, from (a), we have that $e = e'$.
                
                Observe that $\exprEval{e}{\apply{\buf[0..i-1]}{a_0}} = \bot$ implies that one of the dependencies of $e$ is unresolved in $\buf_0[0..i-1]$.
                From this and (a), it follows that one of the dependencies of $e$ is unresolved in $\buf_0'[0..i-1]$.
                Therefore, $\exprEval{e}{\apply{\buf_0'{[0..i-1]}}{a_0'}} = \bot $ holds as well.
                Hence, both configurations are stuck and (c) holds.
    
            \end{description}

            \item[$\elt{\buf_0}{i} =   \tagged{\pmarkedassign{x}{e}}{\notags}$:]
            The proof of this case is similar to that of $\elt{\buf_0}{i} = \tagged{\passign{x}{e}}{\notags}$ (when $x = \pc$).
    
            \item[$\elt{\buf_0}{i} =  \tagged{\pstore{x}{e}}{T}$:]
            From (a), we also have that $\elt{\buf_0'}{i} =  \tagged{\pstore{x}{e}}{T}$, $i \leq |\buf_0'|$, and $\pbarrier \not\in \buf_0'[0..i-1]$.
            There are two cases:
            \begin{description}
                \item[$\exprEval{e}{\apply{\buf_0{[0..i-1]}}{a_0}} \neq \bot \wedge \apply{\buf_0{[0..i-1]}}{a_0}(x) \neq \bot$:] 
                From (a), we have that  $\exprEval{e}{\apply{\buf_0'{[0..i-1]}}{a_0'}} \neq \bot \wedge \apply{\buf_0'{[0..i-1]}}{a_0'}(x) \neq \bot$ holds as well.
                Therefore,  we can apply the \textsc{Execute-Store} and \textsc{Step} rules to $C_0$ and $C_0'$ as follows:
                \begin{align*}
                    v &:= \apply{\buf_0{[0..i-1]}}{a_0}(x) \\
                    n &:= \exprEval{e}{\apply{\buf_0{[0..i-1]}}{a_0}}\\
                    \buf_0 &:= \buf_0[0..i-1] \concat \tagged{\pstore{x}{e}}{T} \concat \buf_0[i+1 .. |\buf_0|]\\
                    \buf_1 &:= \buf_0[0..i-1] \concat \tagged{\pstore{v}{n}}{T} \concat \buf_0[i+1 .. |\buf_0|]\\
                    \tup{m_0,a_0,\buf_0, \CacheState_0, \BpState_0} &\muarchStep{\execute{i}}{} \tup{m_0,a_0,\buf_1, \CacheState_0, \BpState_0}\\
                    \tup{m_0,a_0,\buf_0, \CacheState_0, \BpState_0, \SchedState_0} &\muarchStep{}{} \tup{m_0,a_0,\buf_1, \CacheState_0, \BpState_0,\SchedUpdate(\SchedState_0, \BufProject{\buf_0})}\\
                    v' &:= \apply{\buf_0'{[0..i-1]}}{a_0'}(x) \\
                    n' &:= \exprEval{e}{\apply{\buf_0'{[0..i-1]}}{a_0'}}\\
                    \buf_0' &:= \buf_0'[0..i-1] \concat \tagged{\pstore{x}{e}}{T} \concat \buf_0'[i+1 .. |\buf_0|]\\
                    \buf_1' &:= \buf_0'[0..i-1] \concat \tagged{\pstore{v'}{n'}}{T} \concat \buf_0'[i+1 .. |\buf_0|]\\
                    \tup{m_0',a_0',\buf_0', \CacheState_0', \BpState_0'} &\muarchStep{\execute{i}}{} \tup{m_0',a_0',\buf_1', \CacheState_0', \BpState_0'}\\
                    \tup{m_0',a_0',\buf_0', \CacheState_0', \BpState_0', \SchedState_0'} &\muarchStep{}{} \tup{m_0',a_0',\buf_1', \CacheState_0', \BpState_0',\SchedUpdate(\SchedState_0', \BufProject{\buf_0'})}
                \end{align*}
                We now show that $C_1 = \tup{m_0,a_0,\buf_1, \CacheState_0, \BpState_0,\SchedUpdate(\SchedState_0, \BufProject{\buf_0})}$ and $C_1' = \tup{m_0',a_0',\buf_1', \CacheState_0', \BpState_0',\SchedUpdate(\SchedState_0', \BufProject{\buf_0'})}$ are indistinguishable, i.e., i.e., $C_1 \sim C_1'$.
                For this, we need to show that:
                \begin{description}
                    \item[$\apply{\buf_1}{a_0}(\pc) = \apply{\buf_1'}{a_0'}(\pc)$:]
                    This immediately follows from (a) and the fact that \textbf{store}s do not alter the value of $\pc$.
        
                    \item[$\DeepProject{\buf_1} = \DeepProject{\buf_1'}$:] 
                    For this, we need to show $n = n'$:
                    \begin{description}
                        \item[$n = n'$:]
                        Since $C_0, C_0'$ are reachable configurations, the buffers $\buf_0, \buf_0'$ are well-formed (see Lemma~\ref{lemma:vanilla:buffers-well-formedness}), and therefore  $\buf_0[0..i-1] \in \prefixes{\buf_0}$ and  $\buf_0'[0..i-1] \in \prefixes{\buf_0'}$.
                        From (b), therefore, there are configurations $s_0, s_0', s_1, s_1'$ such that $C_0 \bufEquiv{|\buf_0[0..i-1]|} s_0$, $C_0' \bufEquiv{|\buf_0[0..i-1]|} s_0'$, $s_0 \CtSpecInterfStep{\tau}{} s_1$,  $s_0' \CtSpecInterfStep{\tau'}{} s_1'$, $\headWindow{s_0} > 0$, $\headWindow{s_0'} >0$, and $\tau = \tau'$.  
                        From (a), $C_0 \bufEquiv{|\buf_0[0..i-1]|} s_0$, $C_0' \bufEquiv{|\buf_0[0..i-1]|} s_0'$, and the well-formedness of the buffers, we know that $p(\headConf{s_0}(\pc)) = p(\headConf{s_0'}(\pc)) = \pstore{x}{e}$.
                        From $\CtSpecInterf{\cdot}$, $\headWindow{s_0} > 0$, $\headWindow{s_0'} >0$, we have that $\tau = \storeObs{ \exprEval{e}{\sigma_0} }$ and $\tau' = \storeObs{ \exprEval{e}{\sigma_0'}}$ (because $s_0 \CtSpecInterfStep{\tau}{} s_1$ and  $s_0' \CtSpecInterfStep{\tau'}{} s_1'$ have been obtained by applying the \textsc{Step} rule of $\CtSpecInterfStep{}{}$ and the \textsc{Store} rule of $\CtSeqInterfStep{}{}$).
                        From $C_0 \bufEquiv{|\buf_0[0..i-1]|} s_0$ and $\tau = \storeObs{ \exprEval{e}{\headConf{s_0}} }$, we have that $\tau = \storeObs{ \exprEval{e}{\apply{\buf_0{[0..i-1]}}{a_0}}}$.
                        Similarly, from $C_0' \bufEquiv{|\buf_0'[0..i-1]|} \sigma_0'$ and $\tau' = \storeObs {\exprEval{e}{\headConf{s_0'}}}$, we have that $\tau' = \storeObs{\exprEval{e}{\apply{\buf_0'{[0..i-1]}}{a_0'}}}$.
                        Finally, from $\tau=\tau'$, we get $\exprEval{e}{\apply{\buf_0{[0..i-1]}}{a_0}} = \exprEval{e}{\apply{\buf_0'{[0..i-1]}}{a_0'}}$ and, therefore, $n = n'$.
    
                    \end{description}
                    From (a), $n = n'$, $\buf_0 = \buf_0[0..i-1] \concat \tagged{\pstore{x}{e}}{T} \concat \buf_0[i+1 .. |\buf_0|]$, $				\buf_1 = \buf_0[0..i-1] \concat \tagged{\pstore{v}{n}}{T} \concat \buf_0[i+1 .. |\buf_0|]$, $\buf_0' = \buf_0'[0..i-1] \concat \tagged{\pstore{x}{e}}{T} \concat \buf_0'[i+1 .. |\buf_0|]$, and $				\buf_1' = \buf_0'[0..i-1] \concat \tagged{\pstore{v'}{n'}}{T} \concat \buf_0'[i+1 .. |\buf_0|]$, we get $\DeepProject{\buf_1} = \DeepProject{\buf_1'}$.

                    \item[$\CacheState_0 = \CacheState_0'$:]
                    This follows from (a).
        
                    \item[$\BpState_0 = \BpState_0':$] 
                    This follows from (a).
        
                    \item[$\SchedUpdate(\SchedState_0, \BufProject{\buf_1}) = \SchedUpdate(\SchedState_0', \BufProject{\buf_1'})$:]
                    From (a), we have   $\SchedState_0 = \SchedState_0'$.
                    From $\DeepProject{\buf_1} = \DeepProject{\buf_1'}$ and Lemma~\ref{lemma:vanilla:buffer-projections}, we have $\BufProject{\buf_1} = \BufProject{\buf_1'}$.
                    Therefore, $\SchedUpdate(\SchedState_0, \BufProject{\buf_1}) = \SchedUpdate(\SchedState_0', \BufProject{\buf_1'})$.
                \end{description}
                Therefore, $C_1 \sim C_1'$ and (c) holds.
    
                \item[$\exprEval{e}{\apply{\buf_0{[0..i-1]}}{a_0}} = \bot \vee \apply{\buf_0{[0..i-1]}}{a_0}(x) = \bot$:] 
                Then, one of the dependencies of $e$ or $x$ are unresolved in $\buf_0[0..i-1]$.
                From this and (a), it follows that one of the dependencies of $e$ or $x$ are unresolved in $\buf_0'[0..i-1]$.
                Therefore, $\exprEval{e}{\apply{\buf_0'{[0..i-1]}}{a_0'}} = \bot \vee \apply{\buf_0'{[0..i-1]}}{a_0'}(x) = \bot$ holds as well.
                Hence, both configurations are stuck and (c) holds.
            \end{description}
    
            \item[$\elt{\buf_0}{i} =  \tagged{\pskip{}}{\notags}$:]
            The proof of this case is similar to that of $\elt{\buf_0}{i} =  \tagged{\passign{x}{e}}{\notags}$.
            \item[$\elt{\buf_0}{i} =  \tagged{\pbarrier}{T}$:]
            The proof of this case is similar to that of $\elt{\buf_0}{i} =  \tagged{\passign{x}{e}}{\notags}$.
        \end{description}
        Therefore, (c) holds in all cases.

        \item[$i > |\buf_0| \vee \pbarrier \in \buf_0{[0..i-1]}$:]
        From (a), it immediately follows that $i > |\buf_0'| \vee \pbarrier \in \buf_0'[0..i-1]$.
        Therefore, both configurations are stuck and (c) holds.
    \end{description}
    Therefore, (c) holds in all cases.

    \item[$d = \retire{}$:]
        Therefore, we can only apply one of the $\retire{}$ rules depending on the head of the reorder buffer in $\buf_0$.
        There are five cases:
        \begin{description}
            \item[$\buf_0 = \tagged{\pskip}{\notags} \concat \buf_1 $:] 
            From (a), we get that $\DeepProject{\buf_0} = \DeepProject{\buf_0'}$.
            Therefore, we have that $\buf_0' = \tagged{\pskip}{\notags} \concat \buf_1' $ and $\DeepProject{\buf_1} = \DeepProject{\buf_1'}$.
            Therefore, we can apply the \textsc{Retire-Skip} and \textsc{Step} rules to $C_0$ and $C_0'$ as follows:
            \begin{align*}
                \tup{m_0,a_0,\tagged{\pskip}{\notags} \concat \buf_1,\CacheState_0,\BpState_0} &\muarchStep{\retire}{} \tup{m_0, a_0, \buf_1, \CacheState_0,\BpState_0}\\
                \tup{m_0,a_0,\tagged{\pskip}{\notags} \concat \buf_1,\CacheState_0,\BpState_0, \SchedState_0} &\muarchStep{}{} \tup{m_0, a_0, \buf_1, \CacheState_0,\BpState_0, \SchedUpdate(\SchedState_0, \BufProject{\buf_1})}\\
                \tup{m_0',a_0',\tagged{\pskip}{\notags} \concat \buf_1',\CacheState_0',\BpState_0'} &\muarchStep{\retire}{} \tup{m_0', a_0', \buf_1', \CacheState_0',\BpState_0'}\\
                \tup{m_0',a_0',\tagged{\pskip}{\notags} \concat \buf_1',\CacheState_0',\BpState_0', \SchedState_0'} &\muarchStep{}{} \tup{m_0', a_0', \buf_1', \CacheState_0',\BpState_0', \SchedUpdate(\SchedState_0', \BufProject{\buf_1'})}
            \end{align*}
            We now show that $C_1 = \tup{m_0, a_0, \buf_1, \CacheState_0,\BpState_0, \SchedUpdate(\SchedState_0, \BufProject{\buf_1})}$ and $C_1' = \tup{m_0', a_0', \buf_1', \CacheState_0',\BpState_0', \SchedUpdate(\SchedState_0', \BufProject{\buf_1'})}$ are indistinguishable, i.e., $C_1 \sim C_1'$.
            For this, we need to show that:
            \begin{description}
                \item[$\apply{\buf_1}{a_0}(\pc) = \apply{\buf_1'}{a_0'}(\pc)$:]
                From (a), we have $\apply{\tagged{\pskip}{\notags} \concat \buf_1}{a_0}(\pc) = \apply{\tagged{\pskip}{\notags} \concat \buf_1'}{a_0'}(\pc)$.
                From this, we immediately get that $\apply{ \buf_1}{a_0}(\pc) = \apply{ \buf_1'}{a_0'}(\pc)$.
    
                \item[$\DeepProject{\buf_1} = \DeepProject{\buf_1'}$:] 
                This follows from $\buf_0 = \tagged{\pskip}{\notags} \concat \buf_1 $, $\buf_0' = \tagged{\pskip}{\notags} \concat \buf_1' $, and (a).
    
                \item[$\CacheState_0 = \CacheState_0'$:]
                This follows from (a).
    
                \item[$\BpState_0 = \BpState_0':$] 
                This follows from (a).
    
                \item[$\SchedUpdate(\SchedState_0, \BufProject{\buf_1}) = \SchedUpdate(\SchedState_0', \BufProject{\buf_1'})$:]
                From (a), we have   $\SchedState_0 = \SchedState_0'$.
                From $\DeepProject{\buf_1} = \DeepProject{\buf_1'}$ and Lemma~\ref{lemma:vanilla:buffer-projections}, we have $\BufProject{\buf_1} = \BufProject{\buf_1'}$.
                Therefore, $\SchedUpdate(\SchedState_0, \BufProject{\buf_1}) = \SchedUpdate(\SchedState_0', \BufProject{\buf_1'})$.
            \end{description}
            Therefore, $C_1 \sim C_1'$ and (c) holds.

            \item[$\buf_0 = \tagged{\pbarrier}{\notags} \concat \buf_1 $:]
            The proof of this case is similar to that of the case $\buf_0 = \tagged{\pskip}{\notags} \concat \buf_1 $.
    
            \item[$\buf_0 = \tagged{\passign{x}{v}}{\notags} \concat \buf_1 $:] 
            From (a), we get that $\DeepProject{\buf_0} = \DeepProject{\buf_0'}$.
            Therefore, we have that $\buf_0' = \tagged{\passign{x}{v'}}{\notags} \concat \buf_1' $ and $\DeepProject{\buf_1} = \DeepProject{\buf_1'}$.
            From $\DeepProject{\buf_0} = \DeepProject{\buf_0'}$, we also get that $v \in \Val$ iff $v' \in \Val$.
            Observe also that if $v \not\in \Val$ then both computations are stuck and (c) holds (since there is no $C_1$ such that $C_0 \SeqProcMuarchStep{}{} C_1$ and no $C_1'$ such that $C_0' \SeqProcMuarchStep{}{} C_1'$).
            In the following, therefore, we assume that $v,v' \in \Val$.
            Therefore, we can apply the \textsc{Retire-Assignment} and \textsc{Step} rules to $C_0$ and $C_0'$ as follows:
            \begin{align*}
                \tup{m_0,a_0,\tagged{\passign{x}{v}}{\notags} \concat \buf_1,\CacheState_0,\BpState_0} &\muarchStep{\retire}{} \tup{m_0, a_0[x\mapsto v], \buf_1, \CacheState_0,\BpState_0}\\
                \tup{m_0,a_0,\tagged{\passign{x}{v}}{\notags} \concat \buf_1,\CacheState_0,\BpState_0, \SchedState_0} &\muarchStep{}{} \tup{m_0, a_0[x \mapsto v], \buf_1, \CacheState_0,\BpState_0, \SchedUpdate(\SchedState_0, \BufProject{\buf_1})}\\
                \tup{m_0',a_0',\tagged{\passign{x}{v'}}{\notags} \concat \buf_1',\CacheState_0',\BpState_0'} &\muarchStep{\retire}{} \tup{m_0', a_0'[x \mapsto v'], \buf_1', \CacheState_0',\BpState_0'}\\
                \tup{m_0',a_0',\tagged{\passign{x}{v'}}{\notags} \concat \buf_1',\CacheState_0',\BpState_0', \SchedState_0'} &\muarchStep{}{} \tup{m_0', a_0'[x \mapsto v'], \buf_1', \CacheState_0',\BpState_0', \SchedUpdate(\SchedState_0', \BufProject{\buf_1'})}
            \end{align*}
            We now show that $C_1 = \tup{m_0, a_0[x \mapsto v], \buf_1, \CacheState_0,\BpState_0, \SchedUpdate(\SchedState_0, \BufProject{\buf_1})}$ and $C_1' = \tup{m_0', a_0'[x \mapsto v'], \buf_1', \CacheState_0',\BpState_0', \SchedUpdate(\SchedState_0', \BufProject{\buf_1'})}$ are indistinguishable, i.e., $C_1 \sim C_1'$.
            For this, we need to show that:
            \begin{description}
                \item[$\apply{\buf_1}{a_0[x \mapsto v]}(\pc) = \apply{\buf_1'}{a_0'[x\mapsto v']}(\pc)$:]
                From (a), we have $\apply{\tagged{\passign{x}{v}}{\notags} \concat \buf_1}{a_0}(\pc) = \apply{\tagged{\passign{x}{v'}}{\notags} \concat \buf_1'}{a_0'}(\pc)$.
                From this, we immediately get that $\apply{ \buf_1}{a_0[x \mapsto v]}(\pc) = \apply{ \buf_1'}{a_0'[x \mapsto v']}(\pc)$.
    
                \item[$\DeepProject{\buf_1} = \DeepProject{\buf_1'}$:] 
                This follows from $\buf_0 = \tagged{\passign{x}{v}}{\notags} \concat \buf_1 $, $\buf_0' = \tagged{\passign{x}{v'}}{\notags} \concat \buf_1' $, and (a).
    
                \item[$\CacheState_0 = \CacheState_0'$:]
                This follows from (a).
    
                \item[$\BpState_0 = \BpState_0':$] 
                This follows from (a).
    
                \item[$\SchedUpdate(\SchedState_0, \BufProject{\buf_1}) = \SchedUpdate(\SchedState_0', \BufProject{\buf_1'})$:]
                From (a), we have   $\SchedState_0 = \SchedState_0'$.
                From $\DeepProject{\buf_1} = \DeepProject{\buf_1'}$ and Lemma~\ref{lemma:vanilla:buffer-projections}, we have $\BufProject{\buf_1} = \BufProject{\buf_1'}$.
                Therefore, $\SchedUpdate(\SchedState_0, \BufProject{\buf_1}) = \SchedUpdate(\SchedState_0', \BufProject{\buf_1'})$.
            \end{description}
            Therefore, $C_1 \sim C_1'$ and (c) holds.
    
            \item[$\buf_0 = \tagged{\pmarkedassign{x}{v}}{\notags} \concat \buf_1 $:]
            The proof fo this case is similar to that of the case $\buf_0 = \tagged{\passign{x}{v}}{\notags} \concat \buf_1 $.
    
            \item[$\buf_0 = \tagged{\pstore{v}{n}}{\notags} \concat \buf_1 $:] 
            From (a), we get that $\DeepProject{\buf_0} = \DeepProject{\buf_0'}$.
            Therefore, we have that $\buf_0' = \tagged{\pstore{v'}{n}}{\notags} \concat \buf_1' $ and $\DeepProject{\buf_1} = \DeepProject{\buf_1'}$.
            Observe that (1) $v \in \Val \leftrightarrow v' \in \Val$ from (a), and (2) if $v \not\in \Val$ or $n \not\in \Val$, then both computations are stuck and (c) holds (since there is no $C_1$ such that $C_0 \muarchStep{}{} C_1$ and no $C_1'$ such that $C_0' \muarchStep{}{} C_1'$).
            In the following, therefore, we assume that $v,v',n \in \Val$.
            Therefore, we can apply the \textsc{Retire-Store} and \textsc{Step} rules to $C_0$ and $C_0'$ as follows:
            \begin{align*}
                \tup{m_0,a_0,\tagged{\pstore{v}{n}}{\notags} \concat \buf_1,\CacheState_0,\BpState_0} &\muarchStep{\retire}{} \tup{m_0[n\mapsto v], a_0, \buf_1, \CacheUpdate(\CacheState_0,n),\BpState_0}\\
                \tup{m_0,a_0,\tagged{\pstore{v}{n}}{\notags} \concat \buf_1,\CacheState_0,\BpState_0, \SchedState_0} &\muarchStep{}{} \tup{m_0[n\mapsto v], a_0, \buf_1, \CacheUpdate(\CacheState_0,n),\BpState_0, \SchedUpdate(\SchedState_0, \BufProject{\buf_1})}\\
                \tup{m_0',a_0',\tagged{\pstore{v'}{n}}{\notags} \concat \buf_1',\CacheState_0',\BpState_0'} &\muarchStep{\retire}{} \tup{m_0'[n \mapsto v'], a_0', \buf_1', \CacheUpdate(\CacheState_0',n),\BpState_0'}\\
                \tup{m_0',a_0',\tagged{\pstore{v'}{n}}{\notags} \concat \buf_1',\CacheState_0',\BpState_0', \SchedState_0'} &\muarchStep{}{} \tup{m_0'[n\mapsto v'], a_0', \buf_1', \CacheUpdate(\CacheState_0',n),\BpState_0', \SchedUpdate(\SchedState_0', \BufProject{\buf_1'})}
            \end{align*}
            We now show that $C_1 = \tup{m_0[n \mapsto v], a_0, \buf_1, \CacheUpdate(\CacheState_0,n),\BpState_0, \SchedUpdate(\SchedState_0, \BufProject{\buf_1})}$ and $C_1' =  \tup{m_0'[n\mapsto v'], a_0', \buf_1', \CacheUpdate(\CacheState_0',n),\BpState_0', \SchedUpdate(\SchedState_0', \BufProject{\buf_1'})}$ are indistinguishable, i.e., $C_1 \sim C_1'$.
            For this, we need to show that:
            \begin{description}
                \item[$\apply{\buf_1}{a_0}(\pc) = \apply{\buf_1'}{a_0'}(\pc)$:]
                From (a), we have $\apply{\tagged{\pstore{n}{v}}{\notags} \concat \buf_1}{a_0}(\pc) = \apply{\tagged{\pstore{n}{v'}}{\notags} \concat \buf_1'}{a_0'}(\pc)$.
                From this, we immediately get that $\apply{ \buf_1}{a_0}(\pc) = \apply{ \buf_1'}{a_0'}(\pc)$.
    
                \item[$\DeepProject{\buf_1} = \DeepProject{\buf_1'}$:] 
                This follows from $\buf_0 = \tagged{\pstore{v}{n}}{\notags} \concat \buf_1 $, $\buf_0' = \tagged{\pstore{v'}{n}}{\notags} \concat \buf_1' $, and (a).
    
                \item[$\CacheUpdate(\CacheState_0,n) = \CacheUpdate(\CacheState_0',n)$:]
                This follows from $\CacheState_0 = \CacheState_0'$, which, in turn, follows from (a).
    
                \item[$\BpState_0 = \BpState_0':$] 
                This follows from (a).
    
                \item[$\SchedUpdate(\SchedState_0, \BufProject{\buf_1}) = \SchedUpdate(\SchedState_0', \BufProject{\buf_1'})$:]
                From (a), we have   $\SchedState_0 = \SchedState_0'$.
                From $\DeepProject{\buf_1} = \DeepProject{\buf_1'}$ and Lemma~\ref{lemma:seq-processor:buffer-projections}, we have $\BufProject{\buf_1} = \BufProject{\buf_1'}$.
                Therefore, $\SchedUpdate(\SchedState_0, \BufProject{\buf_1}) = \SchedUpdate(\SchedState_0', \BufProject{\buf_1'})$.
            \end{description}
            Therefore, $C_1 \sim C_1'$ and (c) holds.
        \end{description}
        Therefore, (c) holds for all the cases.

    \end{description}
    Since (c) holds for all cases, this completes the proof of our lemma.
    \end{proof}

\subsection{Main lemma}

\begin{definition}[Indistinguishability of hardware configurations]\label{def:vanilla:indistinguishability}
	We say that two hardware configurations $\tup{\sigma,\mu} = \tup{m,a,\buf, \CacheState,\BpState, \SchedState}$ and $\tup{\sigma',\mu'} = \tup{m',a',\buf', \CacheState',\BpState', \SchedState'}$ are \emph{indistinguishable}, written $\tup{\sigma,\mu} \approx \tup{\sigma',\mu'}$, iff $\BufProject{\buf } = \BufProject{\buf'}$, $\CacheState = \CacheState'$, $\BpState = \BpState'$, and $\SchedState = \SchedState'$.
\end{definition}

\begin{lemma}[Deep-indistinguishability implies indistinguishability]\label{lemma:vanilla:deep-indistiguishability-implies-indistinguishability}
Let $C, C'$ be hardware configurations.
If $C \sim C'$, then $C \approx C'$.
\end{lemma}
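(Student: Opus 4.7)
The proof will proceed by simply unpacking both definitions and invoking the already-established Lemma~\ref{lemma:vanilla:buffer-projections} for the reorder buffer component. Concretely, assume $C \sim C'$ where $C = \tup{m,a,\buf, \CacheState,\BpState, \SchedState}$ and $C' = \tup{m',a',\buf', \CacheState',\BpState', \SchedState'}$. By Definition~\ref{def:vanilla:deep-indistinguishability} this gives us five conjuncts, of which three --- $\CacheState = \CacheState'$, $\BpState = \BpState'$, and $\SchedState = \SchedState'$ --- are already exactly three of the four conjuncts required by Definition~\ref{def:vanilla:indistinguishability}. So the only remaining obligation is $\BufProject{\buf} = \BufProject{\buf'}$.

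For that remaining conjunct, I will directly apply Lemma~\ref{lemma:vanilla:buffer-projections}, which states precisely that $\DeepProject{\buf} = \DeepProject{\buf'}$ implies $\BufProject{\buf} = \BufProject{\buf'}$. Since deep-indistinguishability gives us $\DeepProject{\buf} = \DeepProject{\buf'}$ as one of its conjuncts, the result follows immediately. Note that the first conjunct of $\sim$, namely $\apply{\buf}{a}(\pc) = \apply{\buf'}{a'}(\pc)$, is not needed for establishing $\approx$; it is an additional invariant carried by $\sim$ that is useful for preserving indistinguishability across a step (as in Lemma~\ref{lemma:vanilla:trace-equiv-implies-stepwise-indistinguishability}) but is not required here.

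There is no real obstacle: the lemma is essentially a weakening statement, observing that $\sim$ is strictly stronger than $\approx$ because (i) $\DeepProject{\cdot}$ retains strictly more information than $\BufProject{\cdot}$ (it exposes unresolved expression syntax and $\pc$-valued assignments, whereas $\BufProject{\cdot}$ only records resolved/unresolved status) so equality of the former implies equality of the latter, and (ii) $\sim$ adds the extra $\pc$-agreement condition on top of the other three component-equality conditions shared with $\approx$. The proof is therefore a two-line argument: destructure $\sim$, apply Lemma~\ref{lemma:vanilla:buffer-projections}, and assemble the four conjuncts of $\approx$.
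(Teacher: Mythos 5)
Your proposal is correct and matches the paper's argument: the paper's own proof is a one-liner ("follows from Definitions~\ref{def:vanilla:deep-indistinguishability} and~\ref{def:vanilla:indistinguishability}"), and your explicit decomposition --- carrying over the three component equalities and discharging the buffer conjunct via Lemma~\ref{lemma:vanilla:buffer-projections} --- is just a more spelled-out version of the same reasoning. Your observation that the $\pc$-agreement conjunct of $\sim$ is not needed here is also accurate.
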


\begin{proof}
It  follows from Definitions~\ref{def:vanilla:deep-indistinguishability} and~\ref{def:vanilla:indistinguishability}.
\end{proof}

\begin{lemma}\label{lemma:vanilla:main-lemma}
Let $p$ be a well-formed program, $\CacheState_0$ be the initial cache state, $\BpState_0$ be the initial branch predictor state, and $\SchedState_0$ be the initial scheduler state, $\sigma_0 = \tup{m_0,a_0}, \sigma_0' = \tup{m_0',a_0'}$ be initial \archstate{}s, and $C_0 = \tup{m_0,a_0, \emptysequence, \CacheState_0, \BpState_0, \SchedState_0}$ and $C_0' = \tup{m_0',a_0', \emptysequence, \CacheState_0, \BpState_0, \SchedState_0}$ be hardware configurations.
	Furthermore, let $\crun := s_0$ $\CtSpecInterfStep{o_1}{}$ $s_1$ $\CtSpecInterfStep{o_2}{}$ $\ldots$  $\CtSpecInterfStep{o_{n-1}}{}$  $s_n$ and $\crunp:=s_0'$ $\CtSpecInterfStep{o_1'}{}$  $s_1' $ $\CtSpecInterfStep{o_2'}{}$ $\ldots$  $\CtSpecInterfStep{o_{n-1}'}{} $ $s_n$ be two runs for the $\CtSpecInterf{\cdot}$ contract where  $s_n, s_{n}'$ are final contract states.
	If $\wInterf > \wMuarch + 1$ and $o_i = o_i'$ for all $0 < i < n$, then there is a $k \in \Nat$ and $C_0, \ldots, C_k, C_0', \ldots, C_k'$ such that $C_0 \muarchStep{}{} C_1 \muarchStep{}{} \ldots \muarchStep{}{} C_k$, $C_0' \muarchStep{}{} C_1' \muarchStep{}{} \ldots \muarchStep{}{} C_k'$, and one of the following conditions hold:
	\begin{compactenum}
		\item $C_0, C_0'$ are initial states, $\forall 0 \leq i \leq k.\ C_{i} \approx C_{i}'$, and $C_, C_k'$ are final states, or
		\item $C_0, C_0'$ are initial states, $\forall 0 \leq i \leq k.\ C_{i} \approx C_{i}'$, and there are no $C_{k+1}$ such that $C_{k+1} \neq C_k \wedge C_k \muarchStep{}{} C_{k+1}$ and no $C_{k+1}'$ such that $C_{k+1}' \neq C_k' \wedge C_k' \muarchStep{}{} C_{k+1}'$.
	\end{compactenum}
\end{lemma}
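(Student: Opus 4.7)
The plan is to prove the lemma by induction on the length $k$ of the longest hardware runs obtained from $C_0$ and $C_0'$, showing at each step that the reached hardware configurations are in fact deep-indistinguishable (relation $\sim$ of Definition~\ref{def:vanilla:deep-indistinguishability}), from which the required indistinguishability $\approx$ follows by Lemma~\ref{lemma:vanilla:deep-indistiguishability-implies-indistinguishability}. The two machinery pieces that do all the work are already available: Lemma~\ref{lemma:vanilla:mapping-is-correct}, which produces, for each hardware state along the hardware run, matching contract states for every prefix of the reorder buffer, and Lemma~\ref{lemma:vanilla:trace-equiv-implies-stepwise-indistinguishability}, which says that from two $\sim$-related reachable hardware states whose matching contract states agree on the next produced observation, either both can step while preserving $\sim$, or both are stuck.

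First I would set $C_0, C_0'$ to be the initial hardware configurations built from $\sigma_0,\sigma_0'$ and the common initial microarchitectural state, which immediately gives $C_0 \sim C_0'$ since buffers are empty and all microarchitectural components coincide. I then pick $\hrun$ and $\hrunp$ to be the longest hardware runs starting from $C_0$ and $C_0'$ respectively, of lengths $k$ and $k'$, and I pick $\crun = s_0 \CtSpecInterfStep{o_1}{} \cdots \CtSpecInterfStep{o_{n-1}}{} s_n$ and $\crunp$ as in the statement; the mappings $\map{\crun}{\hrun}{\cdot}$ and $\map{\crunp}{\hrunp}{\cdot}$ from Definition~\ref{def:vanilla:mapping} are well-defined because $\wInterf > \wMuarch + 1$.

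The inductive step is the crux. Assuming $C_i \sim C_i'$ with $i < \min(k,k')$, I need to verify hypothesis (b) of the indistinguishability lemma: for every prefix $\buf$ of $C_i$'s reorder buffer and matching prefix $\buf'$ of $C_i'$'s reorder buffer with $|\buf|=|\buf'|$, there exist contract states $s$, $s'$ producing identical next observations under $\CtSpecInterfStep{}{}$ and satisfying $C_i \bufEquiv{|\buf|} s$ and $C_i' \bufEquiv{|\buf'|} s'$. This is exactly what Lemma~\ref{lemma:vanilla:mapping-is-correct} gives me: conditions (3.a.i) and (3.a.iii) give the $\bufEquiv{|\buf|}$ relationship between $C_i$ and $\crun(\map{\crun}{\hrun}{i}(|\buf|))$, and symmetrically for primed versions. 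Crucially, the construction of $\map{\crun}{\hrun}{\cdot}$ and $\map{\crunp}{\hrunp}{\cdot}$ is driven only by the data-independent projection of the reorder buffer, which coincides on $C_i$ and $C_i'$ thanks to $C_i \sim C_i'$ (via Lemma~\ref{lemma:vanilla:buffer-projections}); hence $\map{\crun}{\hrun}{i} = \map{\crunp}{\hrunp}{i}$, and because $\crun$ and $\crunp$ produce the same observation sequence by hypothesis, the contract successor of $\crun(\map{\crun}{\hrun}{i}(|\buf|))$ and that of $\crunp(\map{\crunp}{\hrunp}{i}(|\buf'|))$ emit identical labels. Condition (3.a.iv) of Lemma~\ref{lemma:vanilla:mapping-is-correct} provides the $\headWindow{\cdot} > 0$ premise. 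With (b) established, Lemma~\ref{lemma:vanilla:trace-equiv-implies-stepwise-indistinguishability} yields $C_{i+1} \sim C_{i+1}'$ or simultaneous stuckness.

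Finally I conclude by case analysis on termination: if one of the hardware runs halts at step $k$ in a final state, I need to argue the other halts at the same index in a final state (case 1); otherwise both are stuck simultaneously (case 2). Case 2 is direct from the indistinguishability lemma's alternative conclusion. For case 1, from $C_k \approx C_k'$ (hence same data-independent buffer projection and same scheduler state), one notes that finality is equivalent to having an empty buffer and an undefined $\pc$, both of which are preserved by $\sim$; the same scheduler state forces simultaneous termination. The main obstacle I expect is precisely this last alignment step: in principle the two runs could desynchronize in how many fetch/execute/retire cycles they take before the matching contract step, but because the scheduler depends only on $\BufProject{\buf}$ and all microarchitectural components evolve identically under $\sim$, desynchronization cannot occur, so the same $k$ works for both sides.
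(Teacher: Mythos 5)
Your proposal is correct and follows essentially the same route as the paper's own proof: the same induction on the step index, using the mapping lemma (Lemma~\ref{lemma:vanilla:mapping-is-correct}) to discharge hypothesis (b) of the indistinguishability lemma (Lemma~\ref{lemma:vanilla:trace-equiv-implies-stepwise-indistinguishability}), with the equality $\map{\crun}{\hrun}{i} = \map{\crunp}{\hrunp}{i}$ justified by the mapping's dependence only on the data-independent buffer projection. Your closing remark about synchronized termination/stuckness is exactly how the paper resolves the final case analysis as well, via $C_k \sim C_k'$.
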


\begin{proof}
Let $p$ be a well-formed program, $\CacheState_0$ be the initial cache state, $\BpState_0$ be the initial branch predictor state, and $\SchedState_0$ be the initial scheduler state, $\sigma_0 = \tup{m_0,a_0}, \sigma_0' = \tup{m_0',a_0'}$ be initial \archstate{}s, and $C_0 = \tup{m,a, \emptysequence, \CacheState_0, \BpState_0, \SchedState_0}$ and $C_0' = \tup{m',a', \emptysequence, \CacheState_0, \BpState_0, \SchedState_0}$ be hardware configurations.
	Furthermore, let $\crun := s_0$ $\CtSpecInterfStep{o_1}{}$ $s_1$ $\CtSpecInterfStep{o_2}{}$ $\ldots$  $\CtSpecInterfStep{o_{n-1}}{}$  $s_n$ and $\crunp:=s_0'$ $\CtSpecInterfStep{o_1'}{}$  $s_1' $ $\CtSpecInterfStep{o_2'}{}$ $\ldots$  $\CtSpecInterfStep{o_{n-1}'}{} $ $s_n$ be two runs for the $\CtSpecInterf{\cdot}$ contract where $s_n, s_{n}'$ are final contract states.
	Finally, we assume that $\wInterf > \wMuarch + 1$ and  $o_i = o_i'$ for all $0 < i < n$.

	Consider the two hardware runs $\hrun, \hrunp$ obtained as follows: we start from $C_0= \tup{m_0,a_0, \emptysequence, \CacheState_0, \BpState_0, \SchedState_0}$ and $C_0'= \tup{m_0',a_0', \emptysequence, \CacheState_0, \BpState_0, \SchedState_0}$, and we apply one step of $\muarchStep{}{}$ until either $\hrun$ or $\hrunp$ reaches a final state or gets stuck.
	That is, for some $k \in \Nat$, we obtain the following runs:
	\begin{align*}
		C_0 &:=  \tup{m_0,a_0, \emptysequence, \CacheState_0, \BpState_0, \SchedState_0}\\
		C_0' &:= \tup{m_0',a_0', \emptysequence, \CacheState_0, \BpState_0, \SchedState_0}\\
		\hrun&:=C_0 \muarchStep{}{} C_1 \muarchStep{}{} \ldots  \muarchStep{}{} C_k \\
		\hrunp&:=C_0' \muarchStep{}{} C_1' \muarchStep{}{} \ldots  \muarchStep{}{} C_k' 
	\end{align*}
	Let $\map{\crun}{\hrun}{\cdot}$ and $\map{\crunp}{\hrunp}{\cdot}$ be the maps constructed according to Definition~\ref{def:vanilla:mapping}.
	We claim that that for all $0 \leq i \leq k$, $\map{\crun}{\hrun}{i} = \map{\crunp}{\hrunp}{i}$ and $ C_i \sim C_{i}'$ which implies $C_i \approx C_i'$ (see Lemma~\ref{lemma:vanilla:deep-indistiguishability-implies-indistinguishability}).
	Moreover, if $C_k$ is a stuck configuration (i.e., there is no $C'$ such that $C_k \muarchStep{}{} C'$) then so is $C_k'$ from $C_{k} \sim C_{k}'$.
	This concludes the proof of our lemma.
	
	We now prove our claim.
	That is, we show, by induction on $i$, that for all $0 \leq i \leq k$, $\map{\crun}{\hrun}{i} = \map{\crun'}{\hrun'}{i}$ and $ C_i \sim C_{i}'$.
	\begin{description}
		\item[Base case:]
		Then, $i = 0$.
		Therefore, $\map{\crun}{\hrun}{0} = \map{\crun'}{\hrun'}{0} = \{ 0 \mapsto 0\}$ by construction.
		Moreover, $C_{0} \sim C_{0}'$ immediately follows from $C_0 =  \tup{m_0,a_0, \emptysequence, \CacheState_0, \BpState_0, \SchedState_0}$ and $
		C_0'= \tup{m_0',a_0', \emptysequence, \CacheState_0, \BpState_0, \SchedState_0}$.
	
		\item[Induction step:]
		For the induction step, we assume that our claim holds for all $i' < i$, and we show that it holds for $i$ as well.
		From the induction hypothesis, we get that $\map{\crun}{\hrun}{i-1} = \map{\crunp}{\hrunp}{i-1}$ and $C_{i-1} \sim C_{i-1}'$.
		In the following, we denote $\map{\crun}{\hrun}{i-1} = \map{\crunp}{\hrunp}{i-1}$ as (IH.1) and $C_{i-1} \sim C_{i-1}'$ as (IH.2). 
		Moreover, from Lemma~\ref{lemma:vanilla:mapping-is-correct} applied to $\map{\crun}{\hrun}{i-1} = \map{\crunp}{\hrunp}{i-1}$, we have that for all $\buf \in \prefixes{\buf_{i-1}}$ and all $\buf' \in \prefixes{\buf_{i-1}'}$, $C_{i-1} \bufEquiv{i} \crun( \map{\crun}{\hrun}{i-1}(|\buf|) ) $, $C_{i-1}' \bufEquiv{i} \crunp( \map{\crunp}{\hrunp}{i-1}(|\buf'|) ) $, $\headWindow{ \map{\crun}{\hrun}{i-1}(|\buf|)} >0$, and $\headWindow{ \map{\crunp}{\hrunp}{i-1}(|\buf'|)} >0$.
		Let $\buf \in \prefixes{\buf_{i-1}}$ and all $\buf' \in \prefixes{\buf_{i-1}'}$ be arbitrary buffers such that $|\buf| = |\buf'|$.
		From (IH.1) and $|\buf| = |\buf'|$, we get that $\map{\crun}{\hrun}{i-1}(|\buf|)= \map{\crunp}{\hrunp}{i-1}(|\buf'|)$.
		Therefore,  $\crun( \map{\crun}{\hrun}{i-1}(|\buf|) )$ and $\crunp(\map{\crunp}{\hrunp}{i-1}(|\buf'|))$ are two configurations $s_j, s_j'$ for some $j = \map{\crun}{\hrun}{i-1}(|\buf|)$.
		From $\crun, \crunp$ having pairwise the same observations, we have $s_{j} \CtSpecInterfStep{o_{j+1}}{} s_{j+1}$, $s_{j}' \CtSpecInterfStep{o_{j+1}'}{} s_{j+1}'$, and $o_{j+1} = o_{j+1}'$ (note that we can always make a step in $s_{j},s_{j}'$ since if $j = k$ we can make silent steps thanks to the \textsc{Step} and \textsc{Terminate} rules).
		Since $\buf,\buf'$ have been selected arbitrarily, we know that for all $\buf \in \prefixes{\buf_{i-1}}$ and all $\buf' \in \prefixes{\buf_{i-1}'}$ such that $|\buf| = |\buf'|$, we have $s_{j} \CtSpecInterfStep{o_{j+1}}{} s_{j+1}$, $s_{j}' \CtSpecInterfStep{o_{j+1}'}{} s_{j+1}'$, $\headWindow{s_j} > 0$, $\headWindow{s_j'}>0$, and $o_{j+1} = o_{j+1}'$ where $j = \map{\crun}{\hrun}{i-1}(|\buf|)$.
		From this,  $\wInterf > \wMuarch + 1$,  and (IH.2), we can apply Lemma~\ref{lemma:seq-processor:trace-equiv-implies-stepwise-indistinguishability} to $C_{i-1}$ and $C_{i-1}'$.
		As a result, we obtain that either $C_{i} \sim C_{i}$ or that both computations are stuck.
		Moreover, $\map{\crun}{\hrun}{i} = \map{\crunp}{\hrunp}{i}$ follows from the fact that the maps at step $i$ are derived from the maps at step $i-1$, which are equivalent thanks to (IH.1), depending on the content of the projection of the buffer (which is the same thanks to $C_{i} \sim C_{i}$).
		This completes the proof of the induction step.
	\end{description}
\end{proof}

\newpage
\section{Sequential processor $\muarchStyle{seq}$ -- Proof of Theorem~\ref{theorem:hni:sequential}}\label{appendix:proofs:seq-processor}

In this section, we prove the security guarantees of the sequential processor $\muarchStyle{seq}$ given in \S\ref{sec:countermeasures:sequential}.

We remark that, for the sequential processor $\muarchStyle{seq}$, we fix the scheduler to the sequential scheduler from Appendix~\ref{appendix:seq-scheduler}.
Moreover, in the following we assume given an arbitrary cache $C$ and branch predictor $Bp$.
Hence, our proof holds for arbitrary caches $C$ and branch predictors $Bp$.

The semantics $\SeqProcMuarchSem{p}$ for a program $p$ is defined as follows:
$\SeqProcMuarchSem{p}(\tup{m,a})$ is $\tup{\emptysequence, \CacheState_0,\BpState_0, \SchedState_0}$ $ \cdot \tup{\BufProject{\buf_1}, \CacheState_1,\BpState_1, \SchedState_1}$ $  \cdot \tup{\BufProject{\buf_2}, \CacheState_2,\BpState_2, \SchedState_2}$ $  \cdot \ldots \cdot $ $ \tup{\BufProject{\buf_n}, \CacheState_n,\BpState_n, \SchedState_n}$ where $	\tup{m,a,\emptysequence, \CacheState_0,\BpState_0, \SchedState_0 } \muarchStep{}{} \tup{m_1,a_1,\buf_1, \CacheState_1,\BpState_1, \SchedState_1}  \muarchStep{}{} \tup{m_2,a_2,\buf_2, \CacheState_2,\BpState_2, \SchedState_2} \muarchStep{}{}$ $\ldots$ $\muarchStep{}{}  \tup{m_n,a_n,\buf_n, \CacheState_n,\BpState_n, \SchedState_n}$ is the complete hardware run obtained starting from $\tup{m,a}$ and terminating in $\tup{m_n,a_n,\buf_n, \CacheState_n,\BpState_n, \SchedState_n}$, which is a final hardware state. Otherwise, $\SeqProcMuarchSem{p}(\tup{m,a})$ is undefined.

\sequentialGuarantees*

\begin{proof}
	Let $p$ be an arbitrary well-formed program.
	Moreover, let $\sigma = \tup{m,a},\sigma' = \tup{m',a'}$ be two arbitrary initial configurations.
	There are two cases:
	\begin{compactitem}
	\item[$\CtSeqInterf{\Prg}(\sigma) \neq \CtSeqInterf{\Prg}(\sigma')$:] Then, 	$\CtSeqInterf{\Prg}(\sigma) = \CtSeqInterf{\Prg}(\sigma') \Rightarrow \SeqProcMuarchSem{\Prg}(\sigma) = \SeqProcMuarchSem{\Prg}(\sigma')$ trivially holds.
	\item[$\CtSeqInterf{\Prg}(\sigma) = \CtSeqInterf{\Prg}(\sigma')$:]
		By unrolling the notion of $\CtSeqInterf{\Prg}(\sigma)$ (together with all changes to the program counter $\pc$ being visible on traces), we obtain that there are runs $\crun:= \sigma$ $\CtSeqInterfStep{o_1}{}$ $\sigma_1$ $\CtSeqInterfStep{o_2}{}$ $\ldots$  $\CtSeqInterfStep{o_{n-1}}{}$  $\sigma_n$ and $\crunp:= \sigma'\CtSeqInterfStep{o_1'}{} \sigma_1' \CtSeqInterfStep{o_2'}{} \ldots  \CtSeqInterfStep{o_{n-1}'}{}  \sigma_n'$ such that $o_i = o_i'$ for all $0 < i < n$.
		By applying Lemma~\ref{lemma:seq-processor:main-lemma}, we immediately get that $\SeqProcMuarchSem{\Prg}(\sigma) = \SeqProcMuarchSem{\Prg}(\sigma')$ (because either both run terminate producing indistinguishable sequences of processor configurations or they both get stuck).
		Therefore, $\CtSeqInterf{\Prg}(\sigma) = \CtSeqInterf{\Prg}(\sigma') \Rightarrow \SeqProcMuarchSem{\Prg}(\sigma) = \SeqProcMuarchSem{\Prg}(\sigma')$ holds.
	\end{compactitem}
	Hence, $\CtSeqInterf{\Prg}(\sigma) = \CtSeqInterf{\Prg}(\sigma') \Rightarrow \SeqProcMuarchSem{\Prg}(\sigma) = \SeqProcMuarchSem{\Prg}(\sigma')$ holds for all programs $p$ and initial configurations $\sigma,\sigma'$.
	Therefore, $\hsni{\CtSeqInterf{\cdot}}{\SeqProcMuarchSem{\cdot}}$ holds.	
\end{proof}

\subsection{Mapping lemma}

	\begin{definition}[Well-formed buffers for $\muarchStyle{seq}$]
		A reorder buffer $\buf$ is \emph{well-formed for $\muarchStyle{seq}$ and an assignment  $a$}, written $\wellformed{\buf,a}$, if the following conditions hold:
		\begin{align*}
			\wellformed{\emptysequence,a} & \\
			\wellformed{ \tagged{\passign{\pc}{e}}{\notags} , a} & \text{ if } p(a(\pc)) \sim_a \tagged{\passign{\pc}{e}}{\notags} \\
			\wellformed{\tagged{\passign{\pc}{\ell}}{\ell_0} , a} & \text{ if } \ell_0 \in \Val \wedge p(\ell_0) = \pjz{x}{\ell'} \wedge \ell \in \{\ell', \ell_0+1\} \wedge p(a(\pc)) \sim_a \tagged{\passign{\pc}{\ell}}{\ell_0} \wedge \ell_0 = a(\pc)\\
			\wellformed{ \tagged{\pmarkedassign{\pc}{\ell}}{\notags} , a } & \text{ if } \ell \in \Val \\
			\wellformed{ \tagged{i}{\notags} \concat \tagged{\pmarkedassign{\pc}{\ell}}{\notags}  } & \text{ if }  \ell \in \Val \wedge (\forall e.\ i \neq \passign{\pc}{e}) \wedge (\forall x,e.\ i \neq \pmarkedassign{x}{e}) \wedge (\forall x,e.\ i \neq \pload{\pc}{e}) \wedge p(a(\pc)) \sim_{a} \tagged{i}{\notags} \wedge \ell = a(\pc)+1
		\end{align*}
		where the instruction-compatibility relation $\sim_{a}$ is defined as follows:
		\begin{align*}
			\pskip &\sim_{a} \tagged{\pskip}{\notags} \\ \allowdisplaybreaks
			\pbarrier &\sim_{a} \tagged{\pbarrier}{\notags} \\ \allowdisplaybreaks
			\passign{x}{e} &\sim_{a} \tagged{\passign{x}{e}}{\notags} \\ \allowdisplaybreaks
			\passign{x}{e} &\sim_{a} \tagged{\passign{x}{v}}{\notags} \text{ if } v\in \Val \\ \allowdisplaybreaks
			\pload{x}{e} &\sim_{a} \tagged{\pload{x}{e}}{\notags} \\ \allowdisplaybreaks
			\pload{x}{e} &\sim_{a} \tagged{\passign{x}{v}}{\notags} \text{ if } v\in \Val \\ \allowdisplaybreaks
			\pstore{x}{e} &\sim_{a} \tagged{\pstore{x}{e}}{\notags} \\ \allowdisplaybreaks
			\pstore{x}{e} &\sim_{a} \tagged{\pstore{v}{n}}{\notags} \text{ if } v,n\in \Val \\ \allowdisplaybreaks
			\pjmp{e} &\sim_{a} \tagged{\passign{\pc}{e}}{\notags} \\ \allowdisplaybreaks
			\pjmp{e} &\sim_{a} \tagged{\passign{\pc}{v}}{\notags} \text{ if } v \in \Val \\ \allowdisplaybreaks
			\pjz{x}{\ell} &\sim_{a} \tagged{\passign{\pc}{\ell}}{a(\pc)} \\ \allowdisplaybreaks
			\pjz{x}{\ell} &\sim_{a} \tagged{\passign{\pc}{a(\pc)+1}}{a(\pc)} \\ \allowdisplaybreaks
			\pjz{x}{\ell} &\sim_{a} \tagged{\passign{\pc}{a(\pc)+1}}{\notags} \\ \allowdisplaybreaks
			\pjz{x}{\ell} &\sim_{a} \tagged{\passign{\pc}{\ell}}{\notags} 
		\end{align*}
		\end{definition}

Lemma~\ref{lemma:seq-processor:buffers-well-formedness} states that all reorder buffers occurring in hardware runs are well-formed.

\begin{lemma}[Reorder buffers are well-formed]\label{lemma:seq-processor:buffers-well-formedness}
	Let $p$ be a well-formed program, $\sigma_0 = \tup{m,a}$ be an initial \archstate, $\CacheState_0$ be the initial cache state, $\BpState_0$ be the initial branch predictor state, and $\SchedState_0$ be the initial scheduler state.
	For all hardware runs  $\hrun := C_0 \SeqProcMuarchStep{}{} C_1 \SeqProcMuarchStep{}{} C_2 \SeqProcMuarchStep{}{} \ldots \SeqProcMuarchStep{}{} C_k$ and all $0 \leq i \leq k$, then $\wellformed{\buf_i,a_i}$, where $C_0 = \tup{m,a,\emptysequence, \CacheState_0, \BpState_0, \SchedState_0}$ and $C_i = \tup{m_i, a_i,\buf_i, \CacheState_i, \BpState_i, \SchedState_i}$.
\end{lemma}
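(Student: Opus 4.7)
The plan is to proceed by induction on $i$, with $\wellformed{\buf_i, a_i}$ as the invariant to maintain. The base case $i=0$ is immediate since $\buf_0 = \emptysequence$. For the inductive step, I assume $\wellformed{\buf_{i-1}, a_{i-1}}$ and show $\wellformed{\buf_i, a_i}$ by case analysis on the directive $d = \SchedNext(\SchedState_{i-1})$ produced by the sequential scheduler, and then on the specific \srclang{} rule applied inside the auxiliary step relation $\muarchStep{d}{}$.

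Before the case analysis, I would establish a key auxiliary invariant on the sequential scheduler itself: because its $\SchedNext$ function cycles through $\fetch{} \to \execute{1} \to \retire$ on the single instruction at the head of $\buf$, the buffer $\buf_i$ is always either empty, of length one (an unresolved or resolved non-branch head, or a branch tagged with $a_i(\pc)$), or of length two (a non-control instruction followed by its accompanying $\tagged{\pmarkedassign{\pc}{a_i(\pc)+1}}{\notags}$). This sharply restricts the shapes that can appear and is precisely what the four clauses defining $\wellformed{\cdot,\cdot}$ enumerate. I would prove this shape invariant simultaneously with well-formedness.

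For the $\fetch{}$ case, the scheduler only issues it when $\buf_{i-1} = \emptysequence$; the \textsc{Fetch-Branch-Hit}, \textsc{Fetch-Jump-Hit}, \textsc{Fetch-Others-Hit}, and \textsc{Fetch-Miss} rules then either leave $\buf$ empty (miss) or produce exactly one of the four well-formed shapes, with the compatibility $p(a_i(\pc)) \sim_{a_i} \cdot$ following directly from the rule's premise that decodes $p(a_i(\pc))$. The branch case additionally requires the well-formedness of the predictor (Appendix~\ref{appendix:well-formedness-conditions}) to guarantee $\ell \in \{\ell', a_i(\pc)+1\}$. For the $\execute{1}$ case, each rule replaces exactly the unresolved head by a resolved counterpart; I verify that the replacement lies in the same $\sim_{a_{i-1}}$ equivalence class, which is preserved under the execute step since $a_{i-1} = a_i$. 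The rollback subcase for branches (\textsc{Execute-Branch-Rollback}) retags the buffer head with $\notags$, which still matches one of the $\sim_a$ clauses for $\pjz{x}{\ell}$. For $\retire$, the head command is removed and its effects are applied to $a$; the remaining buffer is either empty or the second-shape tail $\tagged{\pmarkedassign{\pc}{a_i(\pc)+1}}{\notags}$, and I check that after applying the retired command to update $a$, the residual tail is still well-formed (in particular, the new $\pc$ value matches the expected $a_i(\pc)$ used in the compatibility relation).

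The main obstacle will be the $\retire$ step when a non-$\pc$ assignment or a load is retired: I must verify that the dangling $\tagged{\pmarkedassign{\pc}{\ell}}{\notags}$ tail indeed satisfies the first clause of $\wellformed{\cdot,\cdot}$ under the new assignment $a_i$, and in particular that the stored literal $\ell = a_{i-1}(\pc)+1$ equals $a_i(\pc)+1 - 1$; this only works because retiring a non-$\pc$ instruction does not modify $\pc$, whereas the second retirement (of the $\pmarkedassign{\pc}{\cdot}$ itself) advances $\pc$ to $\ell$, leaving the empty buffer which is trivially well-formed. Pinning down this bookkeeping across the execute/retire cycle, plus the analogous argument for the branch shape where $\ell_0 = a(\pc)$ must be maintained between the fetch and the eventual commit/rollback, is the delicate part; once done, the remaining rules are mechanical.
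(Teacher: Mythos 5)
Your proposal is correct and follows essentially the same route as the paper, whose proof is simply the one-line observation that the result follows by induction on $i$ and inspection of the rules defining $\SeqProcMuarchStep{}{}$; your plan fleshes out exactly that induction and case analysis. The shape invariant you propose to carry along is already implicit in the clauses of $\wellformed{\cdot,\cdot}$ for the sequential processor (which only admit buffers of length at most two), and the retire-step bookkeeping you worry about is actually trivial because the residual $\tagged{\pmarkedassign{\pc}{\ell}}{\notags}$ clause only requires $\ell\in\Val$.
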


\begin{proof}
 The lemma follows by (1) induction on $i$, and (2) inspection of the rules defining $\SeqProcMuarchStep{}{}$.
\end{proof}
	
	\begin{definition}[Prefixes of buffers]\label{def:seq-processor:prefixes}
		The prefixes of a well-formed buffer $\buf$ are defined as follows:
		\begin{align*}
			\prefixes{\emptysequence} &= \{ \emptysequence \}\\
			\prefixes{  \tagged{\passign{\pc}{e}}{T} \concat \buf } &= 
			\{ \emptysequence \} \cup \{  \tagged{\passign{\pc}{e}}{T} \concat \buf' \mid \buf' \in \prefixes{\buf} \} \\
			\prefixes{  \tagged{i}{\notags} \concat \tagged{\pmarkedassign{\pc}{\ell}}{\notags} \concat \buf } &= \{ \emptysequence \} \cup \{   \tagged{i}{\notags} \concat \tagged{\pmarkedassign{\pc}{\ell}}{\notags} \concat \buf' \mid \buf' \in \prefixes{\buf} \} \\
			\prefixes{   \tagged{\pmarkedassign{\pc}{\ell}}{\notags} \concat \buf } &= \{  \tagged{\pmarkedassign{\pc}{\ell}}{\notags} \} \cup \{ \tagged{\pmarkedassign{\pc}{\ell}}{\notags} \concat \buf' \mid \buf' \in \prefixes{\buf} \} \\
		\end{align*}
	\end{definition}
	
	\begin{definition}[Deep-update for \muarchStyle{seq}]
			Let $p$ be a program,  $\tup{m,a}$ be an \archstate{}, and $\buf$ be a buffer.
			The \emph{deep-update of $\tup{m,a}$ given $\buf$} is defined as follows:
				\begin{align*}
				\update{\tup{m,a}}{\emptysequence} &:= \tup{m,a} \\
				\update{\tup{m,a}}{ \tagged{\passign{x}{e}}{T}}  &:= 
					\begin{cases}
						\tup{m, a[x \mapsto \exprEval{e}{a}]} & \text{if } x \neq \pc \\
						\tup{m, a[x \mapsto \exprEval{e}{a}]} & \text{if } x = \pc  \wedge T = \notags \\
						\tup{m, a[x \mapsto \ell']} & \text{if } x = \pc  \wedge T = \ell \wedge p(\ell) = \pjz{y}{\ell'} \wedge a(y) = 0 \\ 
						\tup{m, a[x \mapsto \ell+1]} & \text{if } x = \pc  \wedge T = \ell \wedge p(\ell) = \pjz{y}{\ell'} \wedge a(y) \neq 0 
					\end{cases}
				\\
				\update{\tup{m,a}}{ \tagged{\pmarkedassign{x}{e}}{T}}  &:= \tup{m, a[x \mapsto \exprEval{e}{a}]}\\
				\update{\tup{m,a}}{ \tagged{\pload{x}{e}}{T}} &:= \tup{m,a[x \mapsto m(\exprEval{e}{a})] } \\
				\update{\tup{m,a}}{ \tagged{\pstore{x}{e}}{T}} &:= \tup{m[\exprEval{e}{a} \mapsto a(x)],a}\\
				\update{\tup{m,a}}{\tagged{\pskip{}}{T}} &:= \tup{m,a}\\
				\update{\tup{m,a}}{\tagged{\pbarrier{}}{T}} &:= \tup{m,a}\\
				\update{\tup{m,a}}{(\tagged{i}{T} \concat \buf)} &:= 				\update{ (\update{\tup{m,a}}{\tagged{i}{T}}) }{ \buf }
				\end{align*}
	\end{definition}
	
	\begin{definition}[$\crun-\hrun$ mapping]\label{def:seq:mapping}
	Let $p$ be a well-formed program, $\sigma_0 = \tup{m,a}$ be an initial \archstate, $\CacheState_0$ be the initial cache state, $\BpState_0$ be the initial branch predictor state, and $\SchedState_0$ be the initial scheduler state.
	Furthermore, let:
	\begin{compactitem}
		\item $\crun := \sigma_0 \CtSeqInterfStep{o_1}{} \sigma_1 \CtSeqInterfStep{o_2}{} \sigma_2 \CtSeqInterfStep{o_3}{} \ldots \CtSeqInterfStep{o_{n-1}}{} \sigma_n$ be the longest $\interfStyle{seq-ct}$ contract run obtained starting from $\sigma_0$.
		\item $\hrun := C_0 \SeqProcMuarchStep{}{} C_1 \SeqProcMuarchStep{}{} C_2 \SeqProcMuarchStep{}{} \ldots \SeqProcMuarchStep{}{} C_k$ be the longest $\muarchStyle{seq}$ hardware run obtained starting from $C_0 = \tup{m,a,\emptysequence, \CacheState_0, \BpState_0, \SchedState_0}$.
		\item $\hrun(i)$ be the $i$-th hardware configuration in $\hrun$.
		\item $\crun(i)$ be the $i$-th contract configuration in $\crun$  (note that $\crun(i) = \sigma_n$ for all $i > n$).
	\end{compactitem}
	The \emph{$\crun-\hrun$ mapping}, which maps hardware configurations in $\hrun$ to contract configurations in $\crun$, is defined as follows:
	\begin{align*}
		\map{\crun}{\hrun}{0} &:= \{0 \mapsto 0\} \\ 
		\map{\crun}{\hrun}{i} &:= {
			\begin{cases}
			\map{\crun}{\hrun}{i-1} & \text{if } \SchedNext(\hrun(i-1)) = \fetch{} \wedge ln(\hrun(i-1)) = ln(\hrun(i))\\
			fetch_{\crun,\hrun}(i) & \text{if } \SchedNext(\hrun(i-1)) = \fetch{} \wedge ln(\hrun(i-1)) < ln(\hrun(i))\\
			\map{\crun}{\hrun}{i-1} & \text{if } \SchedNext(\hrun(i-1)) = \execute{j} \\ 
			shift(\map{\crun}{\hrun}{i -1}) & \text{if } \SchedNext(\hrun(i-1)) = \retire{} \\
			\end{cases}
		}\\ 
		fetch_{\crun,\hrun}(i) &=
				\map{\crun}{\hrun}{i-1}[ln(\hrun(i-1))+2 \mapsto \map{\crun}{\hrun}{i-1}(ln(\hrun(i-1)))+1]\\
				& \qquad \text{if }
							p(\mathit{lstPc}(\hrun(i -1))) \neq \pjz{x}{\lbl} \wedge 
							p(\mathit{lstPc}(\hrun(i -1))) \neq \pjmp{e} 
							\\
		fetch_{\crun,\hrun}(i) &=
				\map{\crun}{\hrun}{i-1}[ln(\hrun({i-1}))+1 \mapsto \map{\crun}{\hrun}{i-1}(ln(\hrun(i-1)))+1]\\
				& \qquad \text{if }
							p(\mathit{lstPc}(\hrun(i -1))) = \pjmp{e} \vee p(\mathit{lstPc}(\hrun(i -1))) = \pjz{x}{\lbl} 
							\\
		ln(\tup{m,a,\buf,\CacheState,\BpState,\SchedState}) &= |\buf|\\
		\SchedNext(\tup{m,a,\buf,\CacheState,\BpState,\SchedState}) &= \SchedNext(\SchedState)\\
		shift(map) &= \lambda i \in \Nat.\ map(i +1 )\\
		\mathit{lstPc}(\tup{m,a,\buf,\CacheState,\BpState,\SchedState}) &= (\update{\tup{m,a}}{\buf})(\pc)\\
	\end{align*}
	\end{definition}

We are now ready to prove Lemma~\ref{lemma:seq-processor:mapping-is-correct}, the main lemma showing the correctness of the $\crun-\hrun$ mapping.

\begin{lemma}[Correctness of $\crun-\hrun$ mapping]\label{lemma:seq-processor:mapping-is-correct}
	Let $p$ be a well-formed program, $\sigma_0 = \tup{m,a}$ be an initial \archstate, $\CacheState_0$ be the initial cache state, $\BpState_0$ be the initial branch predictor state, and $\SchedState_0$ be the initial scheduler state.
	Furthermore, let:
	\begin{compactitem}
		\item $\crun := \sigma_0 \CtSeqInterfStep{o_1}{} \sigma_1 \CtSeqInterfStep{o_2}{} \sigma_2 \CtSeqInterfStep{o_3}{} \ldots \CtSeqInterfStep{o_{n-1}}{} \sigma_n$ be the longest $\interfStyle{seq-ct}$ contract run obtained starting from $\sigma_0$.
		\item $\hrun := C_0 \SeqProcMuarchStep{}{} C_1 \SeqProcMuarchStep{}{} C_2 \SeqProcMuarchStep{}{} \ldots \SeqProcMuarchStep{}{} C_k$ be the longest $\muarchStyle{seq}$ hardware run obtained starting from $C_0 = \tup{m,a,\emptysequence, \CacheState_0, \BpState_0, \SchedState_0}$.
		\item $\hrun(i)$ be the $i$-th hardware configuration in $\hrun$.
		\item $\crun(i)$ be the $i$-th contract configuration in $\crun$  (note that $\crun(i) = \sigma_n$ for all $i > n$).
		\item $ \map{\crun}{\hrun}{\cdot}$ be the mapping from Definition~\ref{def:seq:mapping}.
	\end{compactitem}
	The following conditions hold:
	\begin{compactenum}[(1)]
	\item $C_0$ is an initial hardware configuration.
	\item $C_k$ is a final hardware configuration or there is no $C_{k'}$ such that $C_k \SeqProcMuarchStep{}{} C_{k'}$.
	\item for all $0 \leq i \leq k$, given $C_i = \tup{m_i,a_i, \buf_i, \CacheState_i, \BpState_i, \SchedState_i}$ the following conditions hold:
		\begin{compactenum}[(a)]
			\item for all $\buf \in \prefixes{\buf_i}$,  $\update{\tup{m_i,a_i}}{\buf} = \crun( \map{\crun}{\hrun}{i}(|\buf|) )$.
		\end{compactenum}
	\end{compactenum}
\end{lemma}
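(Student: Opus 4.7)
The proof is a straightforward induction on $i$. Parts (1) and (2) are immediate: (1) holds by construction of $C_0$, and (2) holds because $\hrun$ is taken to be the longest run starting from $C_0$. For part (3), the base case $i=0$ is trivial: $\buf_0 = \emptysequence$, so $\prefixes{\buf_0} = \{\emptysequence\}$, and $\update{\tup{m,a}}{\emptysequence} = \tup{m,a} = \crun(0) = \crun(\map{\crun}{\hrun}{0}(0))$ by definition of the initial state and the base value $\map{\crun}{\hrun}{0} = \{0 \mapsto 0\}$.

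For the inductive step, I would perform case analysis on the scheduler directive $\SchedNext(C_{i-1}) \in \{\fetch{}, \execute{j}, \retire{}\}$, and within each, on the specific rule applied. For $\fetch{}$ with a cache miss, the buffer, memory, register file, and mapping all remain unchanged, so the claim carries through from the induction hypothesis. For $\fetch{}$ with a cache hit, the new prefix $\buf_i$ extends $\buf_{i-1}$ by one or two entries corresponding to the fetched instruction; for old prefixes in $\prefixes{\buf_{i-1}}$ the claim follows directly from IH, and for the newly added prefix I need to verify that the extension of $\update{}{}$ matches one step of $\CtSeqInterfStep{}{}$ applied to $\crun(\map{\crun}{\hrun}{i-1}(|\buf_{i-1}|))$, which is exactly how $fetch_{\crun,\hrun}(i)$ is constructed. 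For $\execute{j}$, the sequential scheduler fires this only when the head entry is unresolved or tagged; the mapping is unchanged, the architectural state $\tup{m_i,a_i} = \tup{m_{i-1},a_{i-1}}$ is unchanged, and one must check that replacing the executed entry (with its resolved form, or with the corrected $\pc$ value after a branch commit/rollback) leaves $\update{\tup{m_i,a_i}}{\buf}$ invariant for every prefix $\buf \in \prefixes{\buf_i}$. For $\retire{}$, the head of the buffer is removed, architectural state is updated accordingly, and $\map{\crun}{\hrun}{i} = \mathit{shift}(\map{\crun}{\hrun}{i-1})$ — matching because $\update{\tup{m_{i},a_{i}}}{\buf}$ for $\buf \in \prefixes{\buf_i}$ corresponds to $\update{\tup{m_{i-1},a_{i-1}}}{\head \concat \buf}$ where $\head$ is the retired entry.

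A key structural observation that drives the fetch case for branches is that $\update{}{}$ applied to $\tagged{\passign{\pc}{e}}{\ell}$ (with a non-empty branch tag) does \emph{not} use the predicted value $e$ but rather evaluates the actual branch condition against the current register file and picks the correct target based on $p(\ell) = \pjz{y}{\ell'}$. This means that even though the predictor may produce a wrong $\ell'$, the deep-update of the new prefix yields exactly the architectural state reached by the corresponding contract step, regardless of the prediction. The subsequent $\execute{}$ step (forced by the sequential scheduler because the tag is non-empty) then either commits or rolls back the entry, producing an untagged assignment whose $\update{}{}$ gives the same architectural state; hence the invariant is preserved.

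The main obstacle is verifying the branch/rollback subcase of $\execute{}$ cleanly: the \textsc{Execute-Branch-Rollback} rule truncates the buffer after position $j$, so the set of prefixes of $\buf_i$ is a strict subset of the prefixes of $\buf_{i-1}$, and I must check that the surviving prefixes still map consistently under the unchanged $\map{\crun}{\hrun}{i-1}$. This requires appealing to well-formedness of the buffer (Lemma~\ref{lemma:seq-processor:buffers-well-formedness}) to argue that the replaced entry $\tagged{\passign{\pc}{\ell}}{\ell_0}$ was, under $\update{}{}$, already yielding the corrected branch target before execution, so the execution step itself is a no-op for $\update{}{}$ on every prefix that extends through position $j$. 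Once this subcase is discharged, the remaining rules are purely syntactic manipulations whose effect on $\update{}{}$ is immediate.
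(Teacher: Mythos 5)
Your proposal is correct and follows essentially the same route as the paper's proof: induction on $i$ with case analysis on the scheduler directive and the applied rule, hinging on the fact that the deep-update of a tagged $\pc$-assignment evaluates the \emph{actual} branch outcome rather than the predicted target, so that fetch, commit, and rollback all leave $\update{\tup{m_i,a_i}}{\buf}$ aligned with the contract state. The one obstacle you flag --- truncation of the buffer suffix in \textsc{Execute-Branch-Rollback} --- is vacuous here because the sequential scheduler ensures the buffer holds only the single branch entry when it is executed (it matters only in the out-of-order analogue, Lemma~\ref{lemma:vanilla:mapping-is-correct}), but your resolution via well-formedness is exactly the argument the paper uses anyway.
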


\begin{proof}
Let $p$ be a well-formed program, $\sigma_0 = \tup{m,a}$ be an initial \archstate, $\CacheState_0$ be the initial cache state, $\BpState_0$ be the initial branch predictor state, and $\SchedState_0$ be the initial scheduler state.
Furthermore, let:
\begin{compactitem}
	\item $\crun := \sigma_0 \CtSeqInterfStep{o_1}{} \sigma_1 \CtSeqInterfStep{o_2}{} \sigma_2 \CtSeqInterfStep{o_3}{} \ldots \CtSeqInterfStep{o_{n-1}}{} \sigma_n$ be the longest $\interfStyle{seq-ct}$ contract run obtained starting from $\sigma_0$.
	\item $\hrun := C_0 \SeqProcMuarchStep{}{} C_1 \SeqProcMuarchStep{}{} C_2 \SeqProcMuarchStep{}{} \ldots \SeqProcMuarchStep{}{} C_k$ be the longest $\muarchStyle{seq}$ hardware run obtained starting from $C_0 = \tup{m,a,\emptysequence, \CacheState_0, \BpState_0, \SchedState_0}$.
	\item $\hrun(i)$ be the $i$-th hardware configuration in $\hrun$.
	\item $\crun(i)$ be the $i$-th contract configuration in $\crun$  (note that $\crun(i) = \sigma_n$ for all $i > n$).
	\item $\map{\crun}{\hrun}{\cdot}$ be the mapping from Definition~\ref{def:seq:mapping}.
\end{compactitem}
Observe that $C_0$ is an initial hardware configuration by construction.
Observe also that $C_k$ is either a final hardware configuration (for which the semantics cannot further proceed) or the computation is stuck (since $\hrun$ is the longest run by construction).
Therefore, (1) and (2) hold.

We now prove, by induction on $i$, that (3) holds, i.e., that for all $0 \leq i \leq k$ and for all $\buf \in \prefixes{\buf_i}$,  $\update{\tup{m_i,a_i}}{\buf} = \crun( \map{\crun}{\hrun}{i}(|\buf|) )$.
\begin{description}
	\item[Base case:]
	For the base case, we consider $i = 0$.
	Observe that $C_0 = \tup{m,a,\emptysequence, \CacheState_0, \BpState_0, \SchedState_0}$.
	Therefore, $\buf_0 = \emptysequence$ and $\prefixes{\buf_0} = \{ \emptysequence \}$ (see Definition~\ref{def:seq-processor:prefixes}).
	Then, (3.a) immediately follows from $\update{m,a}{\emptysequence} = \tup{m,a} = \sigma_0 = \crun(0)$.

	\item[Induction step:] 
	For the induction step, we assume that the claim holds for all $i' < i$ and we show that it holds for $i$ as well.
	In the following, let $C_i = \tup{m_i,a_i, \buf_i, \CacheState_i, \BpState_i, \SchedState_i}$ and we refer to the induction hypothesis as H.
	Similarly, we write H.3.a to denote that fact 3.a holds for the induction hypothesis.
	
	We proceed by case distinction on the directive $\SchedNext(C_{i-1})$ used to derive $C_i$.
	There are three cases:
	\begin{description}
		\item[$\SchedNext(C_{i-1}) = \fetch{}$:]
		We now proceed by case distinction on the applied rule:
		\begin{description}
			\item[Rule \textsc{Fetch-Branch-Hit}:]
			Then, $\buf_i = \tagged{\passign{\pc}{\lbl'}}{a_{i-1}(\pc)}$, $\buf_{i-1} = \emptysequence$, $a_{i} = a_{i-1}$, $m_{i} = m_{i-1}$, $p(a_{i-1}(\pc)) = \pjz{x}{\lbl}$, and $\map{\crun}{\hrun}{i} = \map{\crun}{\hrun}{i-1}[\mathit{ln}(|\buf_{i-1}|) +1 \mapsto \map{\crun}{\hrun}{i-1}(\mathit{ln}(|\buf_{i-1}|)) + 1]$ (observe that $\Mispred{\hrun(i-1)} = \top$ since $\buf_{i-1} = \emptysequence$).
			Therefore, we have:
			\begin{align*}
				\prefixes{\buf_i} &= \{ \emptysequence, \tagged{\passign{\pc}{\lbl'}}{a_{i-1}(\pc)} \} \\
				\prefixes{\buf_{i-1}} &= \{ \emptysequence\}
			\end{align*}
			Let $\buf$ be an arbitrary prefix in $\prefixes{\buf_i}$.
			There are two cases:
			\begin{description}
				\item[$\buf = \emptysequence$:]
				Then, $\buf \in \prefixes{\buf_{i-1}}$.
				From the induction hypothesis, we have that $\update{\tup{m_{i-1},a_{i-1}}}{\buf} = \crun( \map{\crun}{\hrun}{i-1}(|\buf|)  )$.
				From $a_{i} = a_{i-1}$ and $m_{i} = m_{i-1}$, we get that $\update{\tup{m_{i},a_{i}}}{\buf} = \crun( \map{\crun}{\hrun}{i-1}(|\buf|)  )$.
				From $|\buf| = 0$ and $\map{\crun}{\hrun}{i} = \map{\crun}{\hrun}{i-1}[\mathit{ln}(|\buf_{i-1}|) +1 \mapsto \map{\crun}{\hrun}{i-1}(\mathit{ln}(|\buf_{i-1}|)) + 1]$, we get that $ \map{\crun}{\hrun}{i}(|\buf|) = \map{\crun}{\hrun}{i-1}(|\buf|)$.
				Therefore, we have that $\update{\tup{m_{i},a_{i}}}{\buf} = \crun( \map{\crun}{\hrun}{i}(|\buf|)  )$.
				Then, (3.a) holds.

				\item[$\buf =  \tagged{\passign{\pc}{\lbl'}}{a_{i-1}(\pc)}$:]
				From the induction hypothesis, we have that $\update{\tup{m_{i-1},a_{i-1}}}{\buf_{i-1}} = \crun( \map{\crun}{\hrun}{i-1}(|\buf_{i-1}|)  )$, or, equivalently $\update{\tup{m_{i-1},a_{i-1}}}{\emptysequence} = \crun( \map{\crun}{\hrun}{i-1}(0)  )$.
				From this, we have that $a_{i-1}(\pc) = \crun( \map{\crun}{\hrun}{i-1}(0)  )(\pc)$.
				From this and $p(a_{i-1}(\pc)) = \pjz{x}{\lbl}$, we have that $\crun( \map{\crun}{\hrun}{i-1}(0) + 1  )$ has been obtained by applying the \textsc{Beqz} rule to $\crun( \map{\crun}{\hrun}{i-1}(0)  )$.
				There are two cases:
				\begin{description}
					\item[$a_{i-1}(x) = 0$:]
					Then,  $\crun( \map{\crun}{\hrun}{i-1}(0) + 1  ) = \crun( \map{\crun}{\hrun}{i-1}(0)  )[\pc \mapsto \ell]$.
					Moreover, $\update{\tup{m_{i-1},a_{i-1}}}{ \tagged{\passign{\pc}{\lbl'}}{a_{i-1}(\pc)} } = \tup{m_{i-1},a_{i-1}[\pc \mapsto \ell]}$.
					Therefore, we have $\update{\tup{m_{i-1},a_{i-1}}}{ \tagged{\passign{\pc}{\lbl'}}{a_{i-1}(\pc)} } = \crun( \map{\crun}{\hrun}{i-1}(0) + 1  )$.
					From $\buf =  \tagged{\passign{\pc}{\lbl'}}{a_{i-1}(\pc)}$, $|\buf| = 1$, and $\map{\crun}{\hrun}{i} = \map{\crun}{\hrun}{i-1}[\mathit{ln}(|\buf_{i-1}|) +1 \mapsto \map{\crun}{\hrun}{i-1}(\mathit{ln}(|\buf_{i-1}|)) + 1]$, we immediately get that $\update{\tup{m_{i},a_{i}}}{\buf} = \crun( \map{\crun}{\hrun}{i}(|\buf|)  )$.
					Then, (3.a) holds.

					\item[$a_{i-1}(x) \neq 0$:] 
					The proof of this case is similar to that of $a_{i-1}(x) = 0$.
				\end{description}
			\end{description}
			Since $\buf$ has been selected arbitrarily, (3.a) holds.
			
			\item[Rule \textsc{Fetch-Jump-Hit}:]
			The proof of this case is similar to that of the \textsc{Fetch-Branch-Hit} rule.

			\item[Rule \textsc{Fetch-Others-Hit}:]
			Then, $\buf_i = \tagged{p(a_{i-1}(\pc))}{\notags} \concat \tagged{\pmarkedassign{\pc}{a_{i-1}(\pc)+1}}{\notags}$, $\buf_{i-1} = \emptysequence$, $a_{i} = a_{i-1}$, $m_{i} = m_{i-1}$, $p(a_{i-1}(\pc)) \neq \pjz{x}{\lbl}$, $p(a_{i-1}(\pc)) \neq \pjmp{e}$, and $\map{\crun}{\hrun}{i} = \map{\crun}{\hrun}{i-1}[\mathit{ln}(|\buf_{i-1}|) +2 \mapsto \map{\crun}{\hrun}{i-1}(\mathit{ln}(|\buf_{i-1}|)) + 1]$ (observe that $\Mispred{\hrun(i-1)} = \top$ since $\buf_{i-1} = \emptysequence$).
			Therefore, we have:
			\begin{align*}
				\prefixes{\buf_i} &= \{ \emptysequence,  \tagged{p(a_{i-1}(\pc))}{\notags} \concat \tagged{\pmarkedassign{\pc}{a_{i-1}(\pc)+1}}{\notags} \} \\
				\prefixes{\buf_{i-1}} &= \{ \emptysequence\}
			\end{align*}
			Let $\buf$ be an arbitrary prefix in $\prefixes{\buf_i}$.
			There are two cases:
			\begin{description}
				\item[$\buf = \emptysequence$:]
				Then, $\buf \in \prefixes{\buf_{i-1}}$.
				From the induction hypothesis, we have that $\update{\tup{m_{i-1},a_{i-1}}}{\buf} = \crun( \map{\crun}{\hrun}{i-1}(|\buf|)  )$.
				From $a_{i} = a_{i-1}$ and $m_{i} = m_{i-1}$, we get that $\update{\tup{m_{i},a_{i}}}{\buf} = \crun( \map{\crun}{\hrun}{i-1}(|\buf|)  )$.
				From $|\buf| = 0$ and $\map{\crun}{\hrun}{i} = \map{\crun}{\hrun}{i-1}[\mathit{ln}(|\buf_{i-1}|) +1 \mapsto \map{\crun}{\hrun}{i-1}(\mathit{ln}(|\buf_{i-1}|)) + 1]$, we get that $ \map{\crun}{\hrun}{i}(|\buf|) = \map{\crun}{\hrun}{i-1}(|\buf|)$.
				Therefore, we have that $\update{\tup{m_{i},a_{i}}}{\buf} = \crun( \map{\crun}{\hrun}{i}(|\buf|)  )$.
				Then, (3.a) holds.

				\item[$\buf =  \tagged{p(a_{i-1}(\pc))}{\notags} \concat \tagged{\pmarkedassign{\pc}{a_{i-1}(\pc)+1}}{\notags}$:]
				From the induction hypothesis, we have that $\update{\tup{m_{i-1},a_{i-1}}}{\buf_{i-1}} = \crun( \map{\crun}{\hrun}{i-1}(|\buf_{i-1}|)  )$, or, equivalently $\update{\tup{m_{i-1},a_{i-1}}}{\emptysequence} = \crun( \map{\crun}{\hrun}{i-1}(0)  )$.
				From this, we have that $a_{i-1}(\pc) = \crun( \map{\crun}{\hrun}{i-1}(0)  )(\pc)$.
				From this, $p(a_{i-1}(\pc)) \neq \pjz{x}{\lbl}$, and $p(a_{i-1}(\pc)) \neq \pjmp{e}$, we have that $\crun( \map{\crun}{\hrun}{i-1}(0) + 1  )$ has been obtained by applying one of the rules \textsc{Load}, \textsc{Store}, or \textsc{Others}  to $\crun( \map{\crun}{\hrun}{i-1}(0)  )$.
				Here we consider the \textsc{Load} rule, i.e., $p(a_{i-1}(\pc)) = \pload{x}{e}$. 
				The proof for the \textsc{Store} and \textsc{Others} rules is similar and we omit it.
				Then, $\crun( \map{\crun}{\hrun}{i-1}(0) + 1  ) = \crun( \map{\crun}{\hrun}{i-1}(0)  )[x \mapsto \crun( \map{\crun}{\hrun}{i-1}(0)  )( \exprEval{e}{\crun( \map{\crun}{\hrun}{i-1}(0)  )}  ), \pc \mapsto \crun( \map{\crun}{\hrun}{i-1}(0)  )(\pc) +1]$.
				Given that $\crun( \map{\crun}{\hrun}{i-1}(0)  ) = \update{\tup{m_{i-1},a_{i-1}}}{\emptysequence} = \tup{m_{i-1},a_{i-1}}$, we have that $\crun( \map{\crun}{\hrun}{i-1}(0) + 1  ) = \tup{m_{i-1}, a_{i-1}[x \mapsto m_{i-1}( \exprEval{e}{a_{i-1}}  ), \pc \mapsto a_{i-1}(\pc) +1]}$.
				Observe also that $\update{\tup{m_{i-1},a_{i-1}}}{  \tagged{p(a_{i-1}(\pc))}{\notags} \concat \tagged{\pmarkedassign{\pc}{a_{i-1}(\pc)+1}}{\notags}} = \tup{m_{i-1}, a_{i-1}[x \mapsto m_{i-1}( \exprEval{e}{a_{i-1}}  ), \pc \mapsto a_{i-1}(\pc) +1]}$ since $p(a_{i-1}(\pc)) = \pload{x}{e}$. 
				Therefore, we have $\update{\tup{m_{i-1},a_{i-1}}}{\buf} = \crun( \map{\crun}{\hrun}{i-1}(0) + 1  )$.
				From $a_{i} = a_{i-1}$ and $m_{i} = m_{i-1}$, we get $\update{\tup{m_{i},a_{i}}}{\buf} = \crun( \map{\crun}{\hrun}{i-1}(0) + 1  )$.
				From $|\buf_i| = |\buf_{i-1}| + 2$, $\map{\crun}{\hrun}{i} = \map{\crun}{\hrun}{i-1}[\mathit{ln}(|\buf_{i-1}|) +2 \mapsto \map{\crun}{\hrun}{i-1}(\mathit{ln}(|\buf_{i-1}|)) + 1]$, we finally get $\update{\tup{m_{i},a_{i}}}{\buf} = \crun( \map{\crun}{\hrun}{i}(|\buf|)  )$.
				Therefore, (3.a) holds.

			\end{description}
			Since $\buf$ has been selected arbitrarily, (3.a) holds.

			\item[Rule \textsc{Fetch-Miss}:]
			Then, $\buf_{i} = \buf_{i-1}$, $m_{i} = m_{a-1}$, and $a_{i} = a_{i-1}$.
			Moreover, $\map{\crun}{\hrun}{i} = \map{\crun}{\hrun}{i-1}$.
			Let $\buf$ be an arbitrary prefix in $\prefixes{\buf_{i}}$.
			Then, (3.a) immediately follows from $\prefixes{\buf_i} = \prefixes{\buf_{i-1}}$, $\buf \in \prefixes{\buf_{i-1}}$, $m_{i-1} = m_i$, $a_{i-1} = a_i$, $\map{\crun}{\hrun}{i} = \map{\crun}{\hrun}{i-1}$, and (H.3.a).
			Since $\buf$ is an arbitrary prefix in $\prefixes{\buf_{i}}$, (3.a) holds.
		\end{description}
		This completes the proof for the $\fetch{}$ case.

		\item[$\SchedNext(C_{i-1}) = \execute{j}$:]
		Then, $\map{\crun}{\hrun}{i} = \map{\crun}{\hrun}{i-1}$.
		Observe also that $j = 1$ since $\muarchStyle{seq}$ uses the sequential scheduler of Appendix~\ref{appendix:seq-scheduler}.
		We now proceed by case distinction on the applied rule:
		\begin{description}
			\item[Rule \textsc{Execute-Load-Hit}:]
			Therefore, given the well-formedness of buffers (see Lemma~\ref{lemma:seq-processor:buffers-well-formedness}), we have that $\buf_{i-1} = \tagged{\pload{x}{e}}{\notags} \concat \tagged{\pmarkedassign{\pc}{\ell}}{\notags}$, $\buf_{i} =  \tagged{ \passign{x}{ m_{i-1}(\exprEval{e}{ \apply{\notags}{a_{i-1}} })}  }{\notags} \concat \tagged{\pmarkedassign{\pc}{\ell}}{\notags}$, $m_{i} = m_{i-1}$, and $a_{i} = a_{i-1}$.
			Therefore, we have that:
			\begin{align*}
				\prefixes{\buf_{i-1}} &= \{ \emptysequence, \tagged{\pload{x}{e}}{\notags} \concat \tagged{\pmarkedassign{\pc}{\ell}}{\notags}\}\\
				\prefixes{\buf_{i}} &= \{ \emptysequence, \tagged{ \passign{x}{ m_{i-1}(\exprEval{e}{ \apply{\notags}{a_{i-1}} })}  }{\notags} \concat \tagged{\pmarkedassign{\pc}{\ell}}{\notags} \}
			\end{align*}
			Let $\buf$ be an arbitrary prefix in $\prefixes{\buf_{i}}$.
			There are two cases:
			\begin{description}
				\item[$\buf = \emptysequence$:]
				Then (3.a) immediately follows from $\buf \in \prefixes{\buf_{i-1}}$, $m_{i-1} = m_i$, $a_{i-1} = a_i$, $\map{\crun}{\hrun}{i} = \map{\crun}{\hrun}{i-1}$, and (H.3.a).

				\item[$\buf = \tagged{ \passign{x}{ m_{i-1}(\exprEval{e}{ \apply{\notags}{a_{i-1}} })}  }{\notags} \concat \tagged{\pmarkedassign{\pc}{\ell}}{\notags}$:]
				From the induction hypothesis (H.3.a), we have $\update{\tup{m_{i-1}, a_{i-1}}}{  (\tagged{\pload{x}{e}}{\notags} \concat \tagged{\pmarkedassign{\pc}{\ell}}{\notags}) } = \crun( \map{\crun}{\hrun}{i-1}(|\tagged{\pload{x}{e}}{\notags} \concat \tagged{\pmarkedassign{\pc}{\ell}}{\notags}|) )$.
				By applying $\update{}{}$, we obtain that $
				\tup{m_{i-1}, a_{i-1}[x \mapsto m_{i-1}(\exprEval{e}{a_{i-1}}), \pc \mapsto \ell ] } = \crun( \map{\crun}{\hrun}{i-1}(|\tagged{\pload{x}{e}}{\notags} \concat \tagged{\pmarkedassign{\pc}{\ell}}{\notags}|) )$.
				Since $\buf = \tagged{ \passign{x}{ m_{i-1}(\exprEval{e}{ \apply{\notags}{a_{i-1}} })}  }{\notags} \concat \tagged{\pmarkedassign{\pc}{\ell}}{\notags}$ and $\update{\tup{m_{i-1}, a_{i-1}}}{ (\tagged{ \passign{x}{ m_{i-1}(\exprEval{e}{ \apply{\notags}{a_{i-1}} })}  }{\notags} \concat \tagged{\pmarkedassign{\pc}{\ell}}{\notags}) }  = \tup{m_{i-1}, a_{i-1}[x \mapsto m_{i-1}(\exprEval{e}{a_{i-1}}), \pc \mapsto \ell ] }$, we obtain that $\update{\tup{m_{i-1}, a_{i-1}}}{ \buf } = \crun( \map{\crun}{\hrun}{i-1}(|\tagged{\pload{x}{e}}{\notags} \concat \tagged{\pmarkedassign{\pc}{\ell}}{\notags}|) )$. 
				From this and $|\buf| = |\tagged{\pload{x}{e}}{\notags} \concat \tagged{\pmarkedassign{\pc}{\ell}}{\notags}|$, we have $\update{\tup{m_{i-1}, a_{i-1}}}{ \buf } = \crun( \map{\crun}{\hrun}{i-1}(|\buf|) )$.
				From $m_{i-1} = m_i$, $a_{i-1} = a_i$, and $\map{\crun}{\hrun}{i} = \map{\crun}{\hrun}{i-1}$, we finally get $\update{\tup{m_{i}, a_{i}}}{ \buf } = \crun( \map{\crun}{\hrun}{i}(|\buf|) )$.
				Therefore, (3.a) holds.
			\end{description}
			Since $\buf$ is an arbitrary prefix in $\prefixes{\buf_{i}}$, (3.a) holds.

			\item[Rule \textsc{Execute-Load-Miss}:]
			Therefore, given the well-formedness of buffers  (see Lemma~\ref{lemma:seq-processor:buffers-well-formedness}), $\buf_{i-1} = \tagged{\pload{x}{e}}{\notags} \concat \tagged{\pmarkedassign{\pc}{\ell}}{\notags}$, $\buf_{i} =  \tagged{\pload{x}{e}}{\notags} \concat \tagged{\pmarkedassign{\pc}{\ell}}{\notags}$, $m_{i} = m_{i-1}$, and $a_{i} = a_{i-1}$.
			Therefore, we have that:
			\begin{align*}
				\prefixes{\buf_{i-1}} &= \{ \emptysequence, \tagged{\pload{x}{e}}{\notags} \concat \tagged{\pmarkedassign{\pc}{\ell}}{\notags}\}\\
				\prefixes{\buf_{i}} &= \{ \emptysequence, \tagged{\pload{x}{e}}{\notags} \concat \tagged{\pmarkedassign{\pc}{\ell}}{\notags}\}
			\end{align*}
			Let $\buf$ be an arbitrary prefix in $\prefixes{\buf_{i}}$.
			Then, (3.a) immediately follows from $\prefixes{\buf_i} = \prefixes{\buf_{i-1}}$, $\buf \in \prefixes{\buf_{i-1}}$, $m_{i-1} = m_i$, $a_{i-1} = a_i$, $\map{\crun}{\hrun}{i} = \map{\crun}{\hrun}{i-1}$, and (H.3.a).
			Since $\buf$ is an arbitrary prefix in $\prefixes{\buf_{i}}$, (3.a) holds.

			\item[Rule \textsc{Execute-Branch-Commit}:]
			Therefore, given the well-formedness of buffers (see Lemma~\ref{lemma:seq-processor:buffers-well-formedness}), we have that  $\buf_{i-1} = \tagged{\passign{\pc}{\lbl}}{\lbl_0}$, $\buf_{i} =  \tagged{\passign{\pc}{ \lbl  }}{\varepsilon}$, $m_{i} = m_{i-1}$, $a_{i} = a_{i-1}$, $p(\lbl_0) = \pjz{x}{\lbl''}$, and $(a_{i-1}(x) = 0 \wedge \lbl = \lbl'') \vee (a_{i-1}(x) \in \Val \setminus \{0,\bot\} \wedge \lbl = \ell_0+1)$.
			Therefore, we have that:
			\begin{align*}
				\prefixes{\buf_{i-1}} &= \{ \tagged{\passign{\pc}{\lbl}}{\lbl_0} \}\\
				\prefixes{\buf_{i}} &= \{ \tagged{\passign{\pc}{ \lbl  }}{\varepsilon} \}
			\end{align*}
			Let $\buf$ be the only prefix in $\prefixes{\buf_i}$, i.e., $\buf = \tagged{\passign{\pc}{ \lbl  }}{\varepsilon} $.
			From the induction hypothesis (H.3.a), we have that $\update{\tup{m_{i-1}, a_{i-1}}}{  \tagged{\passign{\pc}{\lbl}}{\lbl_0} } = \crun( \map{\crun}{\hrun}{i-1}(|\tagged{\passign{\pc}{\lbl}}{\lbl_0}|) )$.
			From $|\tagged{\passign{\pc}{\lbl}}{\lbl_0}| = 1$, we have $\update{\tup{m_{i-1}, a_{i-1}}}{  \tagged{\passign{\pc}{\lbl}}{\lbl_0} } = \crun( \map{\crun}{\hrun}{i-1}(1) )$.
			By applying $\update{}{}$, there are two cases:
			\begin{description}
				\item[$a_{i-1}(x) = 0:$]
				Then,  $\update{\tup{m_{i-1}, a_{i-1}}}{  \tagged{\passign{\pc}{\lbl}}{\lbl_0} } = \tup{m_{i-1},a_{i-1}[\pc \mapsto \lbl'']}$. 
				From $a_{i-1}(x) = 0$ and $(a_{i-1}(x) = 0 \wedge \lbl = \lbl'') \vee (a_{i-1}(x) \in \Val \setminus \{0,\bot\} \wedge \lbl = \ell_0+1)$, we immediately get that $\lbl = \lbl''$ and, therefore, $\update{\tup{m_{i-1}, a_{i-1}}}{  \tagged{\passign{\pc}{\lbl}}{\lbl_0} } = \tup{m_{i-1}, a_{i-1}[\pc \mapsto \lbl]}$. 
				Therefore, we have $\tup{m_{i-1}, a_{i-1}[\pc \mapsto \lbl]} = \crun( \map{\crun}{\hrun}{i-1}(1) )$.
				From $\buf = \tagged{\passign{\pc}{ \lbl  }}{\varepsilon} $ and $\update{\tup{m_{i-1}, a_{i-1}}}{ \buf }  = \tup{m_{i-1}, a_{i-1}[ \pc \mapsto \ell ] }$, we have $\update{\tup{m_{i-1}, a_{i-1}}}{ \buf } = \crun( \map{\crun}{\hrun}{i-1}(1) )$.
				From this and $|\buf| = 1$, we get $\update{\tup{m_{i-1}, a_{i-1}}}{ \buf } = \crun( \map{\crun}{\hrun}{i-1}(|\buf|) )$.
				From $m_{i-1} = m_i$, $a_{i-1} = a_i$, and $\map{\crun}{\hrun}{i} = \map{\crun}{\hrun}{i-1}$, we finally get $\update{\tup{m_{i}, a_{i}}}{ \buf } = \crun( \map{\crun}{\hrun}{i}(|\buf|) )$.
				Therefore, (3.a) holds. 

				\item[$a_{i-1}(x) \neq 0:$]
				Then,  $\update{\tup{m_{i-1}, a_{i-1}}}{  \tagged{\passign{\pc}{\lbl}}{\lbl_0} } = \tup{m_{i-1},a_{i-1}[\pc \mapsto \lbl_0+1]}$. 
				From $a_{i-1}(x) \neq  0$, $a_{i-1}(x)$ being defined, and $(a_{i-1}(x) = 0 \wedge \lbl = \lbl'') \vee (a_{i-1}(x) \in \Val \setminus \{0,\bot\} \wedge \lbl = \ell_0+1)$, we immediately get that $\lbl = \ell_0 + 1$ and, therefore, $\update{\tup{m_{i-1}, a_{i-1}}}{  \tagged{\passign{\pc}{\lbl}}{\lbl_0} } = \tup{m_{i-1}, a_{i-1}[\pc \mapsto \lbl]}$. 
				Therefore, we have $\tup{m_{i-1}, a_{i-1}[\pc \mapsto \lbl]} = \crun( \map{\crun}{\hrun}{i-1}(1) )$.
				From $\buf = \tagged{\passign{\pc}{ \lbl  }}{\varepsilon} $ and $\update{\tup{m_{i-1}, a_{i-1}}}{ \buf }  = \tup{m_{i-1}, a_{i-1}[ \pc \mapsto \ell ] }$, we have $\update{\tup{m_{i-1}, a_{i-1}}}{ \buf } = \crun( \map{\crun}{\hrun}{i-1}(1) )$.
				From this and $|\buf| = 1$, we get $\update{\tup{m_{i-1}, a_{i-1}}}{ \buf } = \crun( \map{\crun}{\hrun}{i-1}(|\buf|) )$.
				From $m_{i-1} = m_i$, $a_{i-1} = a_i$, and $\map{\crun}{\hrun}{i} = \map{\crun}{\hrun}{i-1}$, we finally get $\update{\tup{m_{i}, a_{i}}}{ \buf } = \crun( \map{\crun}{\hrun}{i}(|\buf|) )$.
				Therefore, (3.a) holds.
			\end{description}

			\item[Rule \textsc{Execute-Branch-Rollback}:]
			Therefore, given the well-formedness of buffers (see Lemma~\ref{lemma:seq-processor:buffers-well-formedness}), we have that  $\buf_{i-1} = \tagged{\passign{\pc}{\lbl}}{\lbl_0}$, $\buf_{i} =  \tagged{\passign{\pc}{ \lbl'  }}{\varepsilon}$, $m_{i} = m_{i-1}$, $a_{i} = a_{i-1}$, $p(\lbl_0) = \pjz{x}{\lbl''}$, $(a_{i-1}(x) = 0 \wedge \lbl \neq \lbl'') \vee (a_{i-1}(x) \in \Val \setminus \{0,\bot\} \wedge \lbl \neq \ell_0+1)$, and $\lbl' \in \{ \lbl'',\lbl_0 +1 \} \setminus \{ \ell \}$.
			Therefore, we have that:
			\begin{align*}
				\prefixes{\buf_{i-1}} &= \{ \tagged{\passign{\pc}{\lbl}}{\lbl_0} \}\\
				\prefixes{\buf_{i}} &= \{ \tagged{\passign{\pc}{ \lbl'  }}{\varepsilon} \}
			\end{align*}
			Let $\buf$ be the only prefix in $\prefixes{\buf_i}$, i.e., $\buf = \tagged{\passign{\pc}{ \lbl'  }}{\varepsilon} $.
			From the induction hypothesis (H.3.a), we have that $\update{\tup{m_{i-1}, a_{i-1}}}{  \tagged{\passign{\pc}{\lbl}}{\lbl_0} } = \crun( \map{\crun}{\hrun}{i-1}(|\tagged{\passign{\pc}{\lbl}}{\lbl_0}|) )$.
			From $|\tagged{\passign{\pc}{\lbl}}{\lbl_0}| = 1$, we have $\update{\tup{m_{i-1}, a_{i-1}}}{  \tagged{\passign{\pc}{\lbl}}{\lbl_0} } = \crun( \map{\crun}{\hrun}{i-1}(1) )$.
			By applying $\update{}{}$, there are two cases:
			\begin{description}
				\item[$a_{i-1}(x) = 0:$]
				Then,  $\update{\tup{m_{i-1}, a_{i-1}}}{  \tagged{\passign{\pc}{\lbl}}{\lbl_0} } = \tup{m_{i-1},a_{i-1}[\pc \mapsto \lbl'']}$. 
				From $a_{i-1}(x) = 0$, $(a_{i-1}(x) = 0 \wedge \lbl \neq \lbl'') \vee (a_{i-1}(x) \in \Val \setminus \{0,\bot\} \wedge \lbl \neq \ell_0+1)$, $\lbl \in \{ \lbl'', \lbl_0 +1\}$ (from the well-formedness of buffers), $\lbl' \in \{ \lbl'',\lbl_0 +1 \} \setminus \{ \ell \}$, we get that $\lbl' = \lbl''$, and, therefore, $\update{\tup{m_{i-1}, a_{i-1}}}{  \tagged{\passign{\pc}{\lbl}}{\lbl_0} } = \tup{m_{i-1}, a_{i-1}[\pc \mapsto \lbl']}$. 
				Therefore, we have $\tup{m_{i-1}, a_{i-1}[\pc \mapsto \lbl']} = \crun( \map{\crun}{\hrun}{i-1}(1) )$.
				From $\buf = \tagged{\passign{\pc}{ \lbl'  }}{\varepsilon} $ and $\update{\tup{m_{i-1}, a_{i-1}}}{ \buf }  = \tup{m_{i-1}, a_{i-1}[ \pc \mapsto \ell' ] }$, we have $\update{\tup{m_{i-1}, a_{i-1}}}{ \buf } = \crun( \map{\crun}{\hrun}{i-1}(1) )$.
				From this and $|\buf| = 1$, we get $\update{\tup{m_{i-1}, a_{i-1}}}{ \buf } = \crun( \map{\crun}{\hrun}{i-1}(|\buf|) )$.
				From $m_{i-1} = m_i$, $a_{i-1} = a_i$, and $\map{\crun}{\hrun}{i} = \map{\crun}{\hrun}{i-1}$, we finally get $\update{\tup{m_{i}, a_{i}}}{ \buf } = \crun( \map{\crun}{\hrun}{i}(|\buf|) )$.
				Therefore, (3.a) holds. 

				\item[$a_{i-1}(x) \neq 0:$]
				Then,  $\update{\tup{m_{i-1}, a_{i-1}}}{  \tagged{\passign{\pc}{\lbl}}{\lbl_0} } = \tup{m_{i-1},a_{i-1}[\pc \mapsto \lbl_0+1]}$. 
				From $a_{i-1}(x) \neq 0$, $(a_{i-1}(x) = 0 \wedge \lbl \neq \lbl'') \vee (a_{i-1}(x) \in \Val \setminus \{0,\bot\} \wedge \lbl \neq \ell_0+1)$, $\lbl \in \{ \lbl'', \lbl_0 +1\}$ (from the well-formedness of buffers), $\lbl' \in \{ \lbl'',\lbl_0 +1 \} \setminus \{ \ell \}$, we get that $\lbl' = \ell_0+1$, and, therefore, $\update{\tup{m_{i-1}, a_{i-1}}}{  \tagged{\passign{\pc}{\lbl}}{\lbl_0} } = \tup{m_{i-1}, a_{i-1}[\pc \mapsto \lbl']}$. 
				Therefore, we have $\tup{m_{i-1}, a_{i-1}[\pc \mapsto \lbl']} = \crun( \map{\crun}{\hrun}{i-1}(1) )$.
				From $\buf = \tagged{\passign{\pc}{ \lbl'  }}{\varepsilon} $ and $\update{\tup{m_{i-1}, a_{i-1}}}{ \buf }  = \tup{m_{i-1}, a_{i-1}[ \pc \mapsto \ell' ] }$, we have $\update{\tup{m_{i-1}, a_{i-1}}}{ \buf } = \crun( \map{\crun}{\hrun}{i-1}(1) )$.
				From this and $|\buf| = 1$, we get $\update{\tup{m_{i-1}, a_{i-1}}}{ \buf } = \crun( \map{\crun}{\hrun}{i-1}(|\buf|) )$.
				From $m_{i-1} = m_i$, $a_{i-1} = a_i$, and $\map{\crun}{\hrun}{i} = \map{\crun}{\hrun}{i-1}$, we finally get $\update{\tup{m_{i}, a_{i}}}{ \buf } = \crun( \map{\crun}{\hrun}{i}(|\buf|) )$.
				Therefore, (3.a) holds.
			\end{description}

			\item[Rule \textsc{Execute-Assignment}:]
			Therefore, given the well-formedness of buffers, $\buf_{i-1} = \tagged{\passign{x}{e}}{\notags} \concat \tagged{\pmarkedassign{\pc}{\ell}}{\notags}$, $\buf_{i} =  \tagged{\passign{x}{\exprEval{e}{\apply{\emptysequence}{a_{i-1}}}}}{\notags} \concat \tagged{\pmarkedassign{\pc}{\ell}}{\notags}$, $m_{i} = m_{i-1}$, and $a_{i} = a_{i-1}$.
			From the well-formedness of $\buf_{i-1}$ and $\buf_i$ (see Lemma~\ref{lemma:seq-processor:buffers-well-formedness}), we have that:
			\begin{align*}
				\prefixes{\buf_{i-1}} &= \{ \emptysequence, \tagged{\passign{x}{e}}{\notags} \concat \tagged{\pmarkedassign{\pc}{\ell}}{\notags}\}\\
				\prefixes{\buf_{i}} &= \{ \emptysequence, \tagged{\passign{x}{\exprEval{e}{\apply{\emptysequence}{a_{i-1}}}}}{\notags} \concat \tagged{\pmarkedassign{\pc}{\ell}}{\notags} \}
			\end{align*}
			Let $\buf$ be an arbitrary prefix in $\prefixes{\buf_{i}}$.
			There are two cases:
			\begin{description}
				\item[$\buf = \emptysequence$:]
				Then (3.a) immediately follows from $\buf \in \prefixes{\buf_{i-1}}$, $m_{i-1} = m_i$, $a_{i-1} = a_i$, $\map{\crun}{\hrun}{i} = \map{\crun}{\hrun}{i-1}$, and (H.3.a).

				\item[$\buf = \tagged{\passign{x}{\exprEval{e}{\apply{\emptysequence}{a_{i-1}}}}}{\notags} \concat \tagged{\pmarkedassign{\pc}{\ell}}{\notags}$:]
				From the induction hypothesis (H.3.a), we have $\update{\tup{m_{i-1}, a_{i-1}}}{  ( \tagged{\passign{x}{e}}{\notags} \concat \tagged{\pmarkedassign{\pc}{\ell}}{\notags} ) } = \crun( \map{\crun}{\hrun}{i-1}(|\tagged{\passign{x}{e}}{\notags} \concat \tagged{\pmarkedassign{\pc}{\ell}}{\notags}|) )$.
				By applying $\update{}{}$, we obtain that $
				\tup{m_{i-1}, a_{i-1}[x \mapsto \exprEval{e}{a_{i-1}}, \pc \mapsto \ell ] } = \crun( \map{\crun}{\hrun}{i-1}(|\tagged{\passign{x}{e}}{\notags} \concat \tagged{\pmarkedassign{\pc}{\ell}}{\notags}|) )$.
				Since $\buf = \tagged{\passign{x}{\exprEval{e}{\apply{\emptysequence}{a_{i-1}}}}}{\notags} \concat \tagged{\pmarkedassign{\pc}{\ell}}{\notags}$ and $\update{\tup{m_{i-1}, a_{i-1}}}{ (\tagged{\passign{x}{\exprEval{e}{\apply{\emptysequence}{a_{i-1}}}}}{\notags} \concat \tagged{\pmarkedassign{\pc}{\ell}}{\notags}) }  = \tup{m_{i-1}, a_{i-1}[x \mapsto \exprEval{e}{a_{i-1}}, \pc \mapsto \ell ] }$, we obtain that $\update{\tup{m_{i-1}, a_{i-1}}}{ \buf } = \crun( \map{\crun}{\hrun}{i-1}(|\tagged{\passign{x}{e}}{\notags} \concat \tagged{\pmarkedassign{\pc}{\ell}}{\notags}|) )$. 
				From this and $|\buf| = |\tagged{\passign{x}{e}}{\notags} \concat \tagged{\pmarkedassign{\pc}{\ell}}{\notags}|$, we have $\update{\tup{m_{i-1}, a_{i-1}}}{ \buf } = \crun( \map{\crun}{\hrun}{i-1}(|\buf|) )$.
				From $m_{i-1} = m_i$, $a_{i-1} = a_i$, and $\map{\crun}{\hrun}{i} = \map{\crun}{\hrun}{i-1}$, we finally get $\update{\tup{m_{i}, a_{i}}}{ \buf } = \crun( \map{\crun}{\hrun}{i}(|\buf|) )$.
				Therefore, (3.a) holds.
			\end{description}
			Since $\buf$ is an arbitrary prefix in $\prefixes{\buf_{i}}$, (3.a) holds.

			\item[Rule \textsc{Execute-Marked-Assignment}:]
			Therefore, given the well-formedness of buffers, $\buf_{i-1} = \tagged{\pmarkedassign{\pc}{\ell}}{\notags}$, $\ell \in \Val$, $\buf_{i} =  \tagged{\pmarkedassign{\pc}{\ell}}{\notags}$, $m_{i} = m_{i-1}$, and $a_{i} = a_{i-1}$.
			From the well-formedness of $\buf_{i-1}$ and $\buf_i$ (see Lemma~\ref{lemma:seq-processor:buffers-well-formedness}), we have that:
			\begin{align*}
				\prefixes{\buf_{i-1}} &= \{ \tagged{\pmarkedassign{\pc}{\ell}}{\notags}\}\\
				\prefixes{\buf_{i}} &= \{ \tagged{\pmarkedassign{\pc}{\ell}}{\notags} \}
			\end{align*}
			Let $\buf$ be an arbitrary prefix in $\prefixes{\buf_{i}}$.
			Then, (3.a) immediately follows from $\prefixes{\buf_i} = \prefixes{\buf_{i-1}}$, $\buf \in \prefixes{\buf_{i-1}}$, $m_{i-1} = m_i$, $a_{i-1} = a_i$, $\map{\crun}{\hrun}{i} = \map{\crun}{\hrun}{i-1}$, and (H.3.a).
			Since $\buf$ is an arbitrary prefix in $\prefixes{\buf_{i}}$, (3.a) holds.

			\item[Rule \textsc{Execute-Store}:]
			The proof of this case is similar to that of the rule \textsc{Execute-Assignment}.
			
			\item[Rule \textsc{Execute-Skip}:]
			The proof of this case is similar to that of the rule \textsc{Execute-Load-Miss}.

			\item[Rule \textsc{Execute-Barrier}:] 
			The proof of this case is similar to that of the rule \textsc{Execute-Load-Miss}.
			
		\end{description}
		This completes the proof for the $\execute{j}$ case.

		\item[$\SchedNext(C_{i-1}) = \retire{}$:]
		Then, $\map{\crun}{\hrun}{i} = \mathit{shift}(\map{\crun}{\hrun}{i-1})$.
		We now proceed by case distinction on the applied rule:
		\begin{description}
			\item[Rule \textsc{Retire-Skip}:]
			Then, $	\buf_{i-1} = \tagged{\pskip{}}{\notags} \concat \buf_i$, $m_{i} = m_{i-1}$, and $a_{i} = a_{i-1}$.
			From the well-formedness of $\buf_{i-1}$ (see Lemma~\ref{lemma:seq-processor:buffers-well-formedness}) and $	\buf_{i-1} = \tagged{\pskip{}}{\notags} \concat \buf_i$, we have that $\buf_{i-1} = \tagged{\pskip{}}{\notags} \concat \tagged{\pmarkedassign{\pc}{\ell}}{\notags} \concat \buf_i'$ and $\buf_i = \tagged{\pmarkedassign{\pc}{\ell}}{\notags} \concat \buf_i'$.
			Therefore, we have:
			\begin{align*}
				\prefixes{\buf_{i-1}} & = \{ \emptysequence \} \cup \{  
					\tagged{\pskip{}}{\notags} \concat \tagged{\pmarkedassign{\pc}{\ell}}{\notags} \concat \buf' \mid \buf' \in \prefixes{\buf_i'} \} \\
				\prefixes{\buf_{i}} & = \{ \tagged{\pmarkedassign{\pc}{\ell}}{\notags} \} \cup \{ \tagged{\pmarkedassign{\pc}{\ell}}{\notags} \concat \buf' \mid \buf' \in \prefixes{\buf_i'} \} 
			\end{align*}
			Let $\buf$ be an arbitrary prefix in $\prefixes{\buf_{i}}$.
			From the definitions of $\prefixes{\buf_{i-1}}$ and $\prefixes{\buf_{i}}$ (and from the well-formedness of the buffers), we know that $\tagged{\pskip{}}{\notags} \concat \buf$ is a prefix in  $\prefixes{\buf_{i-1}}$.
			From the induction hypothesis (H.3.a) and $\tagged{\pskip{}}{\notags} \concat \buf \in \prefixes{\buf_{i-1}}$,  we have $\update{\tup{m_{i-1},a_{i-1}}}{ (\tagged{\pskip{}}{\notags} \concat \buf) } = \crun( \map{\crun}{\hrun}{i-1}(| \tagged{\pskip{}}{\notags} \concat \buf |) )$.
			From $m_{i} = m_{i-1}$ and $a_{i} = a_{i-1}$, we get $\update{\tup{m_{i},a_{i}}}{ (\tagged{\pskip{}}{\notags} \concat \buf) } = \crun( \map{\crun}{\hrun}{i-1}(| \tagged{\pskip{}}{\notags} \concat \buf |) )$.
			By applying $\update{}{}$ to $\tagged{\pskip{}}{\notags}$, we get $\update{\tup{m_{i},a_{i}}}{ \buf } = \crun( \map{\crun}{\hrun}{i-1}(| \tagged{\pskip{}}{\notags} \concat \buf |) )$.
			From $| \tagged{\pskip{}}{\notags} \concat \buf | = |\buf|+1$, we get $\update{\tup{m_{i},a_{i}}}{ \buf } = \crun( \map{\crun}{\hrun}{i-1}(|\buf |+1) )$.
			From this and $\mathit{shift}$'s definition, we get $\update{\tup{m_{i},a_{i}}}{ \buf } = \crun( \mathit{shift}(\map{\crun}{\hrun}{i-1}(|\buf |)) )$.
			From this and $\map{\crun}{\hrun}{i} = \mathit{shift}(\map{\crun}{\hrun}{i-1})$, we get  $\update{\tup{m_{i},a_{i}}}{ \buf } = \crun( \map{\crun}{\hrun}{i}(|\buf |) )$.
			Since $\buf$ is an arbitrary prefix in $\prefixes{\buf_{i}}$, (3.a) holds.

			\item[Rule \textsc{Retire-Fence}:]
			The proof of this case is identical to that of the rule \textsc{Retire-Skip}.

			\item[Rule \textsc{Retire-Assignment}:]
			Then, $	\buf_{i-1} = \tagged{\passign{x}{v}}{\notags} \concat \buf_i$, $m_{i} = m_{i-1}$, $a_{i} = a_{i-1}[x \mapsto v]$, and $v \in \Val$.
			From the well-formedness of $\buf_{i-1}$ (see Lemma~\ref{lemma:seq-processor:buffers-well-formedness}) and $	\buf_{i-1} = \tagged{\passign{x}{v}}{\notags} \concat \buf_i$, we have that $\buf_{i-1} =  \tagged{\passign{x}{v}}{\notags} \concat \tagged{\pmarkedassign{\pc}{\ell}}{\notags} \concat \buf_i'$ and $\buf_i = \tagged{\pmarkedassign{\pc}{\ell}}{\notags} \concat \buf_i'$.
			Therefore, we have:
			\begin{align*}
				\prefixes{\buf_{i-1}} & = \{ \emptysequence \} \cup \{  
					\tagged{\passign{x}{v}}{\notags} \concat \tagged{\pmarkedassign{\pc}{\ell}}{\notags} \concat \buf' \mid \buf' \in \prefixes{\buf_i'} \} \\
				\prefixes{\buf_{i}} & = \{ \tagged{\pmarkedassign{\pc}{\ell}}{\notags} \} \cup \{ \tagged{\pmarkedassign{\pc}{\ell}}{\notags} \concat \buf' \mid \buf' \in \prefixes{\buf_i'} \} 
			\end{align*}
			Let $\buf$ be an arbitrary prefix in $\prefixes{\buf_{i}}$.
			From the definitions of $\prefixes{\buf_{i-1}}$ and $\prefixes{\buf_{i}}$ (and from the well-formedness of the buffers), we know that $\tagged{\passign{x}{v}}{\notags} \concat \buf$ is a prefix in  $\prefixes{\buf_{i-1}}$.
			From the induction hypothesis (H.3.a) and $\tagged{\passign{x}{v}}{\notags} \concat \buf\in \prefixes{\buf_{i-1}}$,  we have $\update{\tup{m_{i-1},a_{i-1}}}{ (\tagged{\passign{x}{v}}{\notags} \concat \buf) } = \crun( \map{\crun}{\hrun}{i-1}(| \tagged{\passign{x}{v}}{\notags} \concat \buf |) )$.
			By applying $\update{}{}$ to $\tagged{\passign{x}{v}}{\notags}$, we get $\update{\tup{m_{i-1},a_{i-1}[x \mapsto v]}}{  \buf } = \crun( \map{\crun}{\hrun}{i-1}(| \tagged{\passign{x}{v}}{\notags} \concat \buf |) )$.
			From $m_{i} = m_{i-1}$ and $a_{i} = a_{i-1}[x \mapsto v]$, we get $\update{\tup{m_{i},a_{i}}}{  \buf } = \crun( \map{\crun}{\hrun}{i-1}(| \tagged{\passign{x}{v}}{\notags} \concat \buf |) )$.
			From $| \tagged{\passign{x}{v}}{\notags} \concat \buf | = |\buf|+1$, we get $\update{\tup{m_{i},a_{i}}}{ \buf } = \crun( \map{\crun}{\hrun}{i-1}(|\buf |+1) )$.
			From this and $\mathit{shift}$'s definition, we get $\update{\tup{m_{i},a_{i}}}{ \buf } = \crun( \mathit{shift}(\map{\crun}{\hrun}{i-1}(|\buf |)) )$.
			From this and $\map{\crun}{\hrun}{i} = \mathit{shift}(\map{\crun}{\hrun}{i-1})$, we get  $\update{\tup{m_{i},a_{i}}}{ \buf } = \crun( \map{\crun}{\hrun}{i}(|\buf |) )$.
			Since $\buf$ is an arbitrary prefix in $\prefixes{\buf_{i}}$, (3.a) holds.

			\item[Rule \textsc{Retire-Marked-Assignment}:]
			Then, from the well-formedness of $\buf_{i-1}$ (see Lemma~\ref{lemma:seq-processor:buffers-well-formedness}), we have $\buf_{i-1} = \tagged{\pmarkedassign{\pc}{\ell}}{\notags} \concat \buf_i$, $m_{i} = m_{i-1}$, $a_{i} = a_{i-1}[\pc \mapsto \ell]$, and $\ell \in \Val$.
			Therefore, we have that:
			\begin{align*}
				\prefixes{\buf_{i-1}} & = \{ \tagged{\pmarkedassign{\pc}{\ell}}{\notags} \} \cup \{ \tagged{\pmarkedassign{\pc}{\ell}}{\notags} \concat \buf' \mid \buf' \in \prefixes{\buf_i} \}
			\end{align*}
			Let $\buf$ be an arbitrary prefix in $\prefixes{\buf_{i}}$.
			From the definitions of $\prefixes{\buf_{i-1}}$, we know that $\tagged{\pmarkedassign{\pc}{\ell}}{\notags} \concat \buf$ is a prefix in  $\prefixes{\buf_{i-1}}$.
			From the induction hypothesis (H.3.a) and $\tagged{\pmarkedassign{\pc}{\ell}}{\notags} \concat \buf\in \prefixes{\buf_{i-1}}$,  we have $\update{\tup{m_{i-1},a_{i-1}}}{ (\tagged{\pmarkedassign{\pc}{\ell}}{\notags} \concat \buf) } = \crun( \map{\crun}{\hrun}{i-1}(| \pmarkedassign{\pc}{\ell} \concat \buf |) )$.
			By applying $\update{}{}$ to $\tagged{\pmarkedassign{\pc}{\ell}}{\notags}$, we get $\update{\tup{m_{i-1},a_{i-1}[\pc \mapsto \ell]}}{  \buf } = \crun( \map{\crun}{\hrun}{i-1}(| \tagged{\pmarkedassign{\pc}{\ell}}{\notags} \concat \buf |) )$.
			From $m_{i} = m_{i-1}$ and $a_{i} = a_{i-1}[\pc \mapsto \ell]$, we get $\update{\tup{m_{i},a_{i}}}{  \buf } = \crun( \map{\crun}{\hrun}{i-1}(| \tagged{\pmarkedassign{\pc}{\ell}}{\notags} \concat \buf |) )$.
			From $| \tagged{\pmarkedassign{\pc}{\ell}}{\notags} \concat \buf | = |\buf|+1$, we get $\update{\tup{m_{i},a_{i}}}{ \buf } = \crun( \map{\crun}{\hrun}{i-1}(|\buf |+1) )$.
			From this and $\mathit{shift}$'s definition, we get $\update{\tup{m_{i},a_{i}}}{ \buf } = \crun( \mathit{shift}(\map{\crun}{\hrun}{i-1}(|\buf |)) )$.
			From this and $\map{\crun}{\hrun}{i} = \mathit{shift}(\map{\crun}{\hrun}{i-1})$, we get  $\update{\tup{m_{i},a_{i}}}{ \buf } = \crun( \map{\crun}{\hrun}{i}(|\buf |) )$.
			Since $\buf$ is an arbitrary prefix in $\prefixes{\buf_{i}}$, (3.a) holds.

			\item[Rule \textsc{Retire-Store}:]
			The proof of this statement is similar to that of the rule \textsc{Retire-Assignment}.
		\end{description}
		This completes the proof for the $\retire{}$ case.

	\end{description}
	This completes the proof of the induction step.
\end{description}
This concludes the proof of our theorem.
\end{proof}

\subsection{Indistinguishability lemma}

\begin{definition}[Deep-indistinguishability of hardware configurations]\label{def:seq-processor:deep-indistinguishability}
	We say that two hardware configurations $\tup{\sigma,\mu} = \tup{m,a,\buf, \CacheState,\BpState, \SchedState}$ and $\tup{\sigma',\mu'} = \tup{m',a',\buf', \CacheState',\BpState', \SchedState'}$ are \emph{deep-indistinguishable}, written $\tup{\sigma,\mu} \sim \tup{\sigma',\mu'}$, iff
	\begin{inparaenum}[(a)]
		\item $\apply{\buf}{a}(\pc) = \apply{\buf'}{a'}(\pc)$,
		\item $\DeepProject{\buf} = \DeepProject{\buf'}$,
		\item $\CacheState = \CacheState'$,
		\item $\BpState = \BpState'$, and
		\item $\SchedState = \SchedState'$,
		\end{inparaenum}
		where $\DeepProject{\buf}$ is inductively defined as follows:
		\begin{align*}
			\DeepProject{\emptysequence}	&:= \emptysequence\\
			\DeepProject{\tagged{\pskip{}}{T}}  &:=  \tagged{\pskip{}}{T}\\
			\DeepProject{\tagged{\pbarrier{}}{T}}  &:=  \tagged{\pbarrier{}}{T}\\
			\DeepProject{\tagged{\passign{x}{e}}{T}}  &:=
			{
				\begin{cases}
				\tagged{\passign{x}{\resolved}}{T} & \text{if}\ e \in \Val \wedge x \neq \pc\\
				\tagged{\passign{x}{e}}{T} & \text{if}\ e \not\in \Val \wedge x \neq \pc\\
				\tagged{\passign{x}{e}}{T} & \text{if}\ x = \pc
				\end{cases}
			}\\
			\DeepProject{\tagged{\pload{x}{e}}{T}}  &:=
			{
				\tagged{\pload{x}{e}}{T}
			}\\
			\DeepProject{\tagged{\pstore{x}{e}}{T}}  &:=
			{
				\begin{cases}
					\tagged{\pstore{x}{e}}{T} & \text{if}\ x \in \Var\\
					\tagged{\pstore{\resolved}{e}}{T} & \text{if}\ x \in \Val
				\end{cases}
			}\\
			\DeepProject{\tagged{i}{T} \concat \buf} &:= \DeepProject{\tagged{i}{T}} \concat \DeepProject{\buf}
		\end{align*}
\end{definition}

\begin{definition}[Equivalence between \archstate{}s and hardware configurations]
We say that a hardware configuration $C = \tup{m,a,\CacheState, \BpState, \SchedState}$ and an \archstate{} $\sigma =  \tup{m',a'}$ are \emph{$i$-equivalent}, for an integer $0 \leq i \leq |\buf|$, written $C \bufEquiv{i} \sigma$, iff $\update{\tup{m,a}}{\buf[1..i]} = \tup{m',a'}$, where $\buf[1..0] = \emptysequence$.
\end{definition}

\begin{lemma}[Deep-buffer projection equality implies equality of data-independent projections]\label{lemma:seq-processor:buffer-projections}
Let $\buf, \buf'$ be two reorder buffers.
If $\DeepProject{\buf} = \DeepProject{\buf'}$, then $\BufProject{\buf} = \BufProject{\buf'}$.
\end{lemma}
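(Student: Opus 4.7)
The plan is to prove the lemma by induction on the length of $\buf$, with the key intermediate observation being that the deep projection of a single command determines its data-independent projection. Intuitively, $\DeepProject{\cdot}$ preserves strictly more information than $\BufProject{\cdot}$: everywhere $\BufProject{\cdot}$ just records $\resolved$ or $\unresolved$, the deep projection either records the same $\resolved$ flag or keeps the original (unresolved) expression/register. Either way, one can read off from $\DeepProject{c}$ whether an operand is in $\Val$, which is precisely what $\BufProject{c}$ exposes.

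For the base case $\buf = \emptysequence$, we have $\DeepProject{\emptysequence} = \emptysequence$, which forces $\buf' = \emptysequence$ (since $\DeepProject{c \concat \buf_0} = \DeepProject{c} \concat \DeepProject{\buf_0}$ never produces $\emptysequence$ when the buffer is non-empty). Then $\BufProject{\buf} = \BufProject{\buf'} = \emptysequence$. For the inductive step, I would write $\buf = c \concat \buf_0$; unfolding the recursive clause of $\DeepProject{\cdot}$ on both sides of $\DeepProject{\buf} = \DeepProject{\buf'}$ yields $\buf' = c' \concat \buf_0'$ with $\DeepProject{c} = \DeepProject{c'}$ and $\DeepProject{\buf_0} = \DeepProject{\buf_0'}$. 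The induction hypothesis gives $\BufProject{\buf_0} = \BufProject{\buf_0'}$, and the auxiliary claim for single commands below gives $\BufProject{c} = \BufProject{c'}$; composing yields the result.

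The only real work is the per-command claim. I would dispatch it by case analysis on the shape of $c$ (which is also forced to match the shape of $c'$, since $\DeepProject{\cdot}$ preserves the instruction constructor, tag $T$, and the target register $x$). The $\pskip$, $\pbarrier$, $\pload$, and $\passign$ with $x=\pc$ cases are trivial because $\DeepProject{c}$ equals $c$ verbatim, so $c = c'$ and hence $\BufProject{c} = \BufProject{c'}$. For $\passign{x}{e}$ with $x \neq \pc$, equality of deep projections forces either (i) both $e, e'$ are in $\Val$ (both deep projections are $\tagged{\passign{x}{\resolved}}{T}$) or (ii) both are the same non-value expression, and in either case $\BufProject{\cdot}$ returns the same result. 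The $\pstore$ case is analogous, using that $\DeepProject{\cdot}$ distinguishes $x \in \Var$ from $x \in \Val$ and keeps $e$ verbatim, so the four-way case split of $\BufProject{\cdot}$ on $(x,e) \in \Val^2$ is determined.

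The only mild obstacle is keeping the bookkeeping straight in the $\passign$ and $\pstore$ cases, since the deep projection has different clauses for $x = \pc$ versus $x \neq \pc$ and for $x \in \Var$ versus $x \in \Val$; but in every branch the predicate ``$e \in \Val$'' (respectively ``$x \in \Val$'') is syntactically recoverable from $\DeepProject{c}$, which is exactly what $\BufProject{\cdot}$ depends on. No harder step is needed.
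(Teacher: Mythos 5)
Your proof is correct and follows essentially the same approach as the paper, which dispatches the lemma in one sentence by observing that everything disclosed by $\DeepProject{\buf}$ is also disclosed by $\BufProject{\buf}$; your induction and per-command case analysis simply make that observation explicit.
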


\begin{proof}
It follows from the fact that everything disclosed by $\DeepProject{\buf}$ is also disclosed by $\BufProject{\buf}$.
\end{proof}

\begin{lemma}[Observation equivalence preserves deep-indistinguishability]\label{lemma:seq-processor:trace-equiv-implies-stepwise-indistinguishability}
Let $p$ be a well-formed program and $C_0 = \tup{m_0,a_0,\buf_0,\CacheState_0, \BpState_0, \SchedState_0}$, $C_0' = \tup{m_0',a_0',\buf_0',\CacheState_0', \BpState_0', \SchedState_0'}$ be reachable hardware configurations.
If
\begin{inparaenum}[(a)]
	\item $C_0 \sim C_0'$, and
	\item for all $\buf \in \prefixes{\buf_0}$, $\buf' \in \prefixes{\buf_0'}$ such that $|\buf| = |\buf'|$, 
	there are $\sigma_0, \sigma_0', \sigma_1, \sigma_1', \tau, \tau'$ such that $\sigma_0 \CtSeqInterfStep{\tau}{} \sigma_1$, $\sigma_0' \CtSeqInterfStep{\tau'}{} \sigma_1'$, $\tau = \tau'$, $C_0 \bufEquiv{|\buf|} \sigma_0$, and $C_0' \bufEquiv{|\buf'|} \sigma_0'$,
\end{inparaenum}
then either there are $C_1, C_1'$ such that $C_0 \SeqProcMuarchStep{}{} C_1$, $C_0' \SeqProcMuarchStep{}{} C_1'$, and $C_1 \sim C_1'$ or there is no $C_1$ such that $C_0 \SeqProcMuarchStep{}{} C_1$ and no $C_1'$ such that $C_0' \SeqProcMuarchStep{}{} C_1'$.
\end{lemma}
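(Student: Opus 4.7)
The proof plan follows the same architectural pattern as Lemma~\ref{lemma:vanilla:trace-equiv-implies-stepwise-indistinguishability} for the general semantics, but simplified by the sequential scheduler's deterministic $\fetch{}$-$\execute{1}$-$\retire{}$ discipline. The first step is to observe that condition~(a) gives $\SchedState_0 = \SchedState_0'$, so both configurations dispatch the same directive $d = \SchedNext(\SchedState_0)$. This lets me do a three-way case split on $d \in \{\fetch{}, \execute{1}, \retire{}\}$. In each case I will either exhibit matching one-step transitions and verify that the five deep-indistinguishability clauses of Definition~\ref{def:seq-processor:deep-indistinguishability} survive, or show that both configurations are stuck on the same guard.

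For the $\fetch{}$ case I would use $\apply{\buf_0}{a_0}(\pc) = \apply{\buf_0'}{a_0'}(\pc)$ from~(a), which guarantees both machines fetch from the same address with the same instruction under $p$. Then I sub-case on cache hit/miss and instruction shape (branch, jump, or other). The new buffer suffix is syntactically identical on both sides (predictor queries agree because $\BpState_0 = \BpState_0'$), preserving $\DeepProject{\cdot}$; the cache update is a function of the (equal) $\CacheState$ and (equal) fetched address, so $\CacheState$ equality is preserved; and Lemma~\ref{lemma:seq-processor:buffer-projections} lets me transfer $\DeepProject{\buf_1} = \DeepProject{\buf_1'}$ into $\BufProject{\buf_1} = \BufProject{\buf_1'}$, which in turn preserves $\SchedState$ through $\SchedUpdate$.

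The $\retire{}$ case is mostly bookkeeping: since the sequential scheduler issues $\retire{}$ only when the head command is resolved and untagged, condition~(a) (in particular $\DeepProject{\buf_0} = \DeepProject{\buf_0'}$) forces the two heads to have the same shape and the same resolved values (store values and PC targets are already constrained by deep-projection equality), so the resulting architectural and microarchitectural updates agree; for unresolved heads both machines are stuck together.

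The main obstacle is the $\execute{1}$ case, specifically for loads, stores, and branch resolution, because the step updates involve data values that deep-indistinguishability alone does not equate. Here I would invoke condition~(b) on the prefix $\buf_0[0..0] = \emptysequence$ (i.e., $i = 1$ in the sequential setting): this yields contract states $\sigma_0, \sigma_0'$ with $C_0 \bufEquiv{0} \sigma_0$, $C_0' \bufEquiv{0} \sigma_0'$, and matching contract observations $\tau = \tau'$. Because $\CtSeqInterfStep{}{}$ emits $\loadObs{\exprEval{e}{a}}$, $\storeObs{\exprEval{e}{a}}$, and $\pcObs{a'(\pc)}$ exactly on the relevant rules, the equality $\tau = \tau'$ forces equality of the load/store address and of the resolved branch target between $\sigma_0$ and $\sigma_0'$; the $\bufEquiv{0}$ bridge then transports this equality to the hardware level, so the cache and branch-predictor updates agree and the new reorder-buffer entries have matching deep projections. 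For the $\passign{\pc}{e}$ case from an unresolved jump I similarly use the $\pcObs{}$ observation. After these address/PC agreements are established, verifying the five clauses of $\sim$ is routine, completing the induction step.
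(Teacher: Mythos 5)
Your proposal matches the paper's proof: the same three-way case split on the scheduler directive, the same use of condition (a) for the fetch and retire cases, and the same use of condition (b) on the empty prefix (since the sequential scheduler forces $i=1$) to transfer equality of $\CtSeqInterf{\cdot}$ observations into equality of load/store addresses and branch outcomes in the execute case. The only cosmetic slip is the closing phrase ``completing the induction step''---this lemma is a single-step result consumed by the induction in the main lemma, not itself an induction.
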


\begin{proof}
Let $p$ be a well-formed program, $C_0 = \tup{m_0,a_0,\buf_0,\CacheState_0, \BpState_0, \SchedState_0}$, and $C_0' = \tup{m_0',a_0',\buf_0',\CacheState_0', \BpState_0', \SchedState_0'}$.
Moreover, we assume that conditions (a) and (b) holds. 
In the following, we denote by (c) the post-condition ``either there are $C_1, C_1'$ such that $C_0 \SeqProcMuarchStep{}{} C_1$, $C_0' \SeqProcMuarchStep{}{} C_1'$, and $C_1 \sim C_1'$ or there is no $C_1$ such that $C_0 \SeqProcMuarchStep{}{} C_1$ and no $C_1'$ such that $C_0' \SeqProcMuarchStep{}{} C_1'$.''

From (a), it follows that $\SchedState_0 = \SchedState_0$.
Therefore, the directive obtained from the scheduler is the same in both cases, i.e., $\SchedNext(\SchedState_0) = \SchedNext(\SchedState_0')$.
We proceed by case distinction on the directive $d = \SchedNext(\SchedState_0)$:
\begin{description}
\item[$d = \fetch{}$:]
	Therefore, we can only apply one of the $\fetch{}$ rules depending on the current program counter.
	There are two cases: 
	\begin{description}
		\item[$\apply{\buf_0}{a_0}(\pc) \neq \bot \wedge |\buf_0| < \wMuarch$:]
		There are several cases:
		\begin{description}
			\item[$\CacheAccess(\CacheState_0,  \apply{\buf_0}{a_0}(\pc)) = \CacheHit \wedge p(\apply{\buf_0}{a_0}(\pc)) = \pjz{x}{\lbl}$:] 
			From (a), we get that $\CacheState_0' = \CacheState_0$ and $\apply{\buf_0'}{a_0'}(\pc) = \apply{\buf_0}{a_0}(\pc)$.
			Therefore, $\CacheAccess(\CacheState_0',  \apply{\buf_0'}{a_0'}(\pc)) = \CacheHit$.
			Moreover, from (a) we also get that $\apply{\buf_0}{a_0}(\pc)=\apply{\buf_0'}{a_0'}(\pc)$ and, therefore, $p(\apply{\buf_0'}{a_0'}(\pc)) = \pjz{x}{\lbl}$ as well.
			Therefore, we can apply the \textsc{Fetch-Branch-Hit} and \textsc{Step} rules to $C_0$ and $C_0'$ as follows:
			\begin{align*}
				\lbl_0 &:= \BpPredict(\BpState_0, \apply{\buf_0}{a_0}(\pc))\\
				\lbl_0' &:= \BpPredict(\BpState_0', \apply{\buf_0'}{a_0'}(\pc))\\
				\buf_1 &:= \buf_0  \concat  \tagged{\passign{\pc}{\lbl_0}}{\apply{\buf_0}{a_0}(\pc)}\\
				\buf_1' &:= \buf_0'  \concat \tagged{\passign{\pc}{\lbl_0'}}{\apply{\buf_0'}{a_0'}(\pc)}\\
				\tup{m_0,a_0,\buf_0,\CacheState_0,\BpState_0} &\muarchStep{\fetch{}}{} \tup{m_0, a_0, \buf_1, \CacheUpdate(\CacheState_0, \apply{\buf_0}{a_0}(\pc)),\BpState_0}\\
				\tup{m_0,a_0,\buf_0,\CacheState_0,\BpState_0, \SchedState_0} &\SeqProcMuarchStep{}{} \tup{m_0, a_0, \buf_1, \CacheUpdate(\CacheState_0, \apply{\buf_0}{a_0}(\pc)),\BpState_0, \SchedUpdate(\SchedState_0, \BufProject{\buf_1})}\\
				\tup{m_0',a_0',\buf_0',\CacheState_0',\BpState_0'} &\muarchStep{\fetch{}}{} \tup{m_0', a_0', \buf_1', \CacheUpdate(\CacheState_0',  \apply{\buf_0'}{a_0'}(\pc)),\BpState_0'}\\
				\tup{m_0',a_0',\buf_0',\CacheState_0',\BpState_0', \SchedState_0'} &\SeqProcMuarchStep{}{} \tup{m_0', a_0', \buf_1', \CacheUpdate(\CacheState_0',  \apply{\buf_0'}{a_0'}(\pc)),\BpState_0', \SchedUpdate(\SchedState_0', \BufProject{\buf_1'})}
			\end{align*}
			We now show that $C_1 =  \tup{m_0, a_0, \buf_1, \CacheUpdate(\CacheState_0, \apply{\buf_0}{a_0}(\pc)),\BpState_0, \SchedUpdate(\SchedState_0, \BufProject{\buf_1})}$ and $C_1' = \tup{m_0', a_0', \buf_1', \CacheUpdate(\CacheState_0',  \apply{\buf_0'}{a_0'}(\pc)),\BpState_0', \SchedUpdate(\SchedState_0', \BufProject{\buf_1'})}$ are indistinguishable, i.e., $C_1 \sim C_1'$.
			For this, we need to show that:
			\begin{description}
				\item[$\apply{\buf_1}{a_0}(\pc) = \apply{\buf_1'}{a_0'}(\pc)$:]
				We know that $\apply{\buf_1}{a_0}(\pc) = \lbl_0$ and $\apply{\buf_1'}{a_0'}(\pc) = \lbl_0'$.
				From $\lbl_0 = \BpPredict(\BpState_0, \apply{\buf_0}{a_0}(\pc))$, $\lbl_0' = \BpPredict(\BpState_0', \apply{\buf_0'}{a_0'}(\pc))$, and (a), we immediately get $\lbl_0 = \lbl_0'$.

				\item[$\DeepProject{\buf_1} = \DeepProject{\buf_1'}$:] 
				This follows from $\DeepProject{\buf_0} = \DeepProject{\buf_0'}$, which in turn follows from (a), $\buf_1 = \buf_0  \concat  \tagged{\passign{\pc}{\lbl_0}}{\apply{\buf_0}{a_0}(\pc)}$, $\buf_1' = \buf_0'  \concat  \tagged{\passign{\pc}{\lbl_0'}}{\apply{\buf_0'}{a_0'}(\pc)}$, and $\lbl_0 = \lbl_0'$.

				\item[$\CacheUpdate(\CacheState_0,  \apply{\buf_0}{a_0}(\pc)) = \CacheUpdate(\CacheState_0',  \apply{\buf_0'}{a_0'}(\pc))$:]
				This follows from $\CacheState_0 = \CacheState_0'$, which in turn follows from (a), and $\apply{\buf_0}{a_0}(\pc) = \apply{\buf_0'}{a_0'}(\pc)$.

				\item[$\BpState_0 = \BpState_0':$] 
				This follows from (a).

				\item[$\SchedUpdate(\SchedState_0, \BufProject{\buf_1}) = \SchedUpdate(\SchedState_0', \BufProject{\buf_1'})$:]
				From (a), we have   $\SchedState_0 = \SchedState_0'$.
				From $\DeepProject{\buf_1} = \DeepProject{\buf_1'}$ and Lemma~\ref{lemma:seq-processor:buffer-projections}, we have $\BufProject{\buf_1} = \BufProject{\buf_1'}$.
				Therefore, $\SchedUpdate(\SchedState_0, \BufProject{\buf_1}) = \SchedUpdate(\SchedState_0', \BufProject{\buf_1'})$.
			\end{description}
			Therefore, $C_1 \sim C_1'$ and (c) holds.

			\item[$\CacheAccess(\CacheState_0,  \apply{\buf_0}{a_0}(\pc)) = \CacheHit \wedge p(\apply{\buf_0}{a_0}(\pc)) = \pjmp{e}$:] 
			From (a), we get that $\CacheState_0' = \CacheState_0$ and $\apply{\buf_0'}{a_0'}(\pc) = \apply{\buf_0}{a_0}(\pc)$.
			Therefore, $\CacheAccess(\CacheState_0',  \apply{\buf_0'}{a_0'}(\pc)) = \CacheHit$.
			Moreover, from (a) we also get that $\apply{\buf_0}{a_0}(\pc)=\apply{\buf_0'}{a_0'}(\pc)$ and, therefore, $p(\apply{\buf_0'}{a_0'}(\pc)) = \pjmp{e}$ as well.
			Therefore, we can apply the \textsc{Fetch-Jump-Hit} and \textsc{Step} rules to $C_0$ and $C_0'$ as follows:
			\begin{align*}
				\buf_1 &:= \buf_0  \concat  \tagged{\passign{\pc}{e}}{\notags}\\
				\buf_1' &:= \buf_0'  \concat  \tagged{\passign{\pc}{e}}{\notags}\\
				\tup{m_0,a_0,\buf_0,\CacheState_0,\BpState_0} &\muarchStep{\fetch{}}{} \tup{m_0, a_0, \buf_1, \CacheUpdate(\CacheState_0, \apply{\buf_0}{a_0}(\pc)),\BpState_0}\\
				\tup{m_0,a_0,\buf_0,\CacheState_0,\BpState_0, \SchedState_0} &\SeqProcMuarchStep{}{} \tup{m_0, a_0, \buf_1, \CacheUpdate(\CacheState_0, \apply{\buf_0}{a_0}(\pc)),\BpState_0, \SchedUpdate(\SchedState_0, \BufProject{\buf_1})}\\
				\tup{m_0',a_0',\buf_0',\CacheState_0',\BpState_0'} &\muarchStep{\fetch{}}{} \tup{m_0', a_0', \buf_1', \CacheUpdate(\CacheState_0',  \apply{\buf_0'}{a_0'}(\pc)),\BpState_0'}\\
				\tup{m_0',a_0',\buf_0',\CacheState_0',\BpState_0', \SchedState_0'} &\SeqProcMuarchStep{}{} \tup{m_0', a_0', \buf_1', \CacheUpdate(\CacheState_0',  \apply{\buf_0'}{a_0'}(\pc)),\BpState_0', \SchedUpdate(\SchedState_0', \BufProject{\buf_1'})}
			\end{align*}
			We now show that $C_1 =  \tup{m_0, a_0, \buf_1, \CacheUpdate(\CacheState_0, \apply{\buf_0}{a_0}(\pc)),\BpState_0, \SchedUpdate(\SchedState_0, \BufProject{\buf_1})}$ and $C_1' = \tup{m_0', a_0', \buf_1', \CacheUpdate(\CacheState_0',  \apply{\buf_0'}{a_0'}(\pc)),\BpState_0', \SchedUpdate(\SchedState_0', \BufProject{\buf_1'})}$ are indistinguishable, i.e., $C_1 \sim C_1'$.
			For this, we need to show that:
			\begin{description}
				\item[$\apply{\buf_1}{a_0}(\pc) = \apply{\buf_1'}{a_0'}(\pc)$:]
				There are two cases.
				If $e \in \Val$, then $\apply{\buf_1}{a_0}(\pc) = \apply{\buf_1'}{a_0'}(\pc) = e$.
				Otherwise, $\apply{\buf_1}{a_0}(\pc) = \apply{\buf_1'}{a_0'}(\pc) = \bot$.

				\item[$\DeepProject{\buf_1} = \DeepProject{\buf_1'}$:] 
				This follows from $\DeepProject{\buf_0} = \DeepProject{\buf_0'}$, which in turn follows from (a), $\buf_1 = \buf_0  \concat  \tagged{\passign{\pc}{e}}{\notags}$, and $\buf_1' = \buf_0'  \concat  \tagged{\passign{\pc}{e}}{\notags}$.

				\item[$\CacheUpdate(\CacheState_0,  \apply{\buf_0}{a_0}(\pc)) = \CacheUpdate(\CacheState_0',  \apply{\buf_0'}{a_0'}(\pc))$:]
				This follows from $\CacheState_0 = \CacheState_0'$, which in turn follows from (a), and $\apply{\buf_0}{a_0}(\pc) = \apply{\buf_0'}{a_0'}(\pc)$.

				\item[$\BpState_0 = \BpState_0':$] 
				This follows from (a).

				\item[$\SchedUpdate(\SchedState_0, \BufProject{\buf_1}) = \SchedUpdate(\SchedState_0', \BufProject{\buf_1'})$:]
				From (a), we have   $\SchedState_0 = \SchedState_0'$.
				From $\DeepProject{\buf_1} = \DeepProject{\buf_1'}$ and Lemma~\ref{lemma:seq-processor:buffer-projections}, we have $\BufProject{\buf_1} = \BufProject{\buf_1'}$.
				Therefore, $\SchedUpdate(\SchedState_0, \BufProject{\buf_1}) = \SchedUpdate(\SchedState_0', \BufProject{\buf_1'})$.
			\end{description}
			Therefore, $C_1 \sim C_1'$ and (c) holds.
			
			\item[$\CacheAccess(\CacheState_0,  \apply{\buf_0}{a_0}(\pc)) = \CacheHit \wedge p(\apply{\buf_0}{a_0}(\pc)) \neq \pjz{x}{\lbl} \wedge p(\apply{\buf_0}{a_0}(\pc)) \neq \pjmp{e}$:]
			Observe that from (a) it follows that $|\buf_0| \geq \wMuarch-1$ iff $|\buf_0'| \geq \wMuarch-1$.
			Therefore, if $|\buf_0| \geq \wMuarch-1$, then (c) holds since both computations are stuck.
			In the following, we assume that $|\buf_0| < \wMuarch-1$ and $|\buf_0'| < \wMuarch-1$.
			
			From (a), we get that $\CacheState_0' = \CacheState_0$ and $\apply{\buf_0'}{a_0'}(\pc) = \apply{\buf_0}{a_0}(\pc)$.
			Therefore, $\CacheAccess(\CacheState_0',  \apply{\buf_0'}{a_0'}(\pc)) = \CacheHit$.
			Moreover, from (a) we also get that $\apply{\buf_0}{a_0}(\pc)=\apply{\buf_0'}{a_0'}(\pc)$ and, therefore, $p(\apply{\buf_0'}{a_0'}(\pc)) \neq \pjz{x}{\lbl} \wedge p(\apply{\buf_0'}{a_0'}(\pc)) \neq \pjmp{e}$ as well.
			Therefore, we can apply the \textsc{Fetch-Others-Hit} and \textsc{Step} rules to $C_0$ and $C_0'$ as follows:
			\begin{align*}
				v &:= \apply{\buf_0}{a_0}(\pc) +1 \\
				v' &:= \apply{\buf_0'}{a_0'}(\pc) +1 \\
				\buf_1 &:= \buf_0  \concat \tagged{p(\apply{\buf_0}{a_0}(\pc))}{\notags} \concat \tagged{\pmarkedassign{\pc}{v}}{\notags}\\
				\buf_1' &:= \buf_0'  \concat \tagged{p(\apply{\buf_0'}{a_0'}(\pc))}{\notags} \concat \tagged{\pmarkedassign{\pc}{v'}}{\notags}\\
				\tup{m_0,a_0,\buf_0,\CacheState_0,\BpState_0} &\muarchStep{\fetch{}}{} \tup{m_0, a_0, \buf_1, \CacheUpdate(\CacheState_0, \apply{\buf_0}{a_0}(\pc)),\BpState_0}\\
				\tup{m_0,a_0,\buf_0,\CacheState_0,\BpState_0, \SchedState_0} &\SeqProcMuarchStep{}{} \tup{m_0, a_0, \buf_1, \CacheUpdate(\CacheState_0, \apply{\buf_0}{a_0}(\pc)),\BpState_0, \SchedUpdate(\SchedState_0, \BufProject{\buf_1})}\\
				\tup{m_0',a_0',\buf_0',\CacheState_0',\BpState_0'} &\muarchStep{\fetch{}}{} \tup{m_0', a_0', \buf_1', \CacheUpdate(\CacheState_0',  \apply{\buf_0'}{a_0'}(\pc)),\BpState_0'}\\
				\tup{m_0',a_0',\buf_0',\CacheState_0',\BpState_0', \SchedState_0'} &\SeqProcMuarchStep{}{} \tup{m_0', a_0', \buf_1', \CacheUpdate(\CacheState_0',  \apply{\buf_0'}{a_0'}(\pc)),\BpState_0', \SchedUpdate(\SchedState_0', \BufProject{\buf_1'})}
			\end{align*}
			We now show that $C_1 =  \tup{m_0, a_0, \buf_1, \CacheUpdate(\CacheState_0, \apply{\buf_0}{a_0}(\pc)),\BpState_0, \SchedUpdate(\SchedState_0, \BufProject{\buf_1})}$ and $C_1' = \tup{m_0', a_0', \buf_1', \CacheUpdate(\CacheState_0',  \apply{\buf_0'}{a_0'}(\pc)),\BpState_0', \SchedUpdate(\SchedState_0', \BufProject{\buf_1'})}$ are indistinguishable, i.e., $C_1 \sim C_1'$.
			For this, we need to show that:
			\begin{description}
				\item[$\apply{\buf_1}{a_0}(\pc) = \apply{\buf_1'}{a_0'}(\pc)$:]
				From (a), we get that $\apply{\buf_0}{a_0}(\pc) = \apply{\buf_0'}{a_0'}(\pc)$.
				From this, we have that $v = v'$.
				Therefore, $\apply{\buf_1}{a_0}(\pc) = \apply{\buf_1'}{a_0'}(\pc) = v$.

				\item[$\DeepProject{\buf_1} = \DeepProject{\buf_1'}$:] 
				This follows from $\DeepProject{\buf_0} = \DeepProject{\buf_0'}$, which in turn follows from (a), $\buf_1 = \buf_0  \concat \tagged{p(\apply{\buf_0}{a_0}(\pc))}{\notags} \concat \tagged{\pmarkedassign{\pc}{v}}{\notags}$, $\buf_1' = \buf_0'  \concat \tagged{p(\apply{\buf_0'}{a_0'}(\pc))}{\notags} \concat \tagged{\pmarkedassign{\pc}{v'}}{\notags}$, and $v = v'$.

				\item[$\CacheUpdate(\CacheState_0,  \apply{\buf_0}{a_0}(\pc)) = \CacheUpdate(\CacheState_0',  \apply{\buf_0'}{a_0'}(\pc))$:]
				This follows from $\CacheState_0 = \CacheState_0'$, which in turn follows from (a), and $\apply{\buf_0}{a_0}(\pc) = \apply{\buf_0'}{a_0'}(\pc)$.

				\item[$\BpState_0 = \BpState_0':$] 
				This follows from (a).

				\item[$\SchedUpdate(\SchedState_0, \BufProject{\buf_1}) = \SchedUpdate(\SchedState_0', \BufProject{\buf_1'})$:]
				From (a), we have   $\SchedState_0 = \SchedState_0'$.
				From $\DeepProject{\buf_1} = \DeepProject{\buf_1'}$ and Lemma~\ref{lemma:seq-processor:buffer-projections}, we have $\BufProject{\buf_1} = \BufProject{\buf_1'}$.
				Therefore, $\SchedUpdate(\SchedState_0, \BufProject{\buf_1}) = \SchedUpdate(\SchedState_0', \BufProject{\buf_1'})$.
			\end{description}
			Therefore, $C_1 \sim C_1'$ and (c) holds.

			\item[$\CacheAccess(\CacheState_0,  \apply{\buf_0}{a_0}(\pc)) = \CacheMiss$:]
			From (a), we get that $\CacheState_0' = \CacheState_0$ and $\apply{\buf_0'}{a_0'}(\pc) = \apply{\buf_0}{a_0}(\pc)$.
			Therefore, $\CacheAccess(\CacheState_0',  \apply{\buf_0'}{a_0'}(\pc)) = \CacheMiss$.
			Therefore, we can apply the \textsc{Fetch-Miss} and \textsc{Step} rules to $C_0$ and $C_0'$ as follows:
			\begin{align*}
				\tup{m_0,a_0,\buf_0,\CacheState_0,\BpState_0} &\muarchStep{\fetch{}}{} \tup{m_0, a_0, \buf_0, \CacheUpdate(\CacheState_0, \apply{\buf_0}{a_0}(\pc)),\BpState_0}\\
				\tup{m_0,a_0,\buf_0,\CacheState_0,\BpState_0, \SchedState_0} &\SeqProcMuarchStep{}{} \tup{m_0, a_0, \buf_0, \CacheUpdate(\CacheState_0,  \apply{\buf_0}{a_0}(\pc)),\BpState_0, \SchedUpdate(\SchedState_0, \BufProject{\buf_0})}\\
				\tup{m_0',a_0',\buf_0',\CacheState_0',\BpState_0'} &\muarchStep{\fetch{}}{} \tup{m_0', a_0', \buf_0', \CacheUpdate(\CacheState_0',  \apply{\buf_0'}{a_0'}(\pc)),\BpState_0'}\\
				\tup{m_0',a_0',\buf_0',\CacheState_0',\BpState_0', \SchedState_0'} &\SeqProcMuarchStep{}{} \tup{m_0', a_0', \buf_0', \CacheUpdate(\CacheState_0',  \apply{\buf_0'}{a_0'}(\pc)),\BpState_0', \SchedUpdate(\SchedState_0', \BufProject{\buf_0'})}
			\end{align*}
			We now show that $C_1 = \tup{m_0, a_0, \buf_0, \CacheUpdate(\CacheState_0,  \apply{\buf_0}{a_0}(\pc)),\BpState_0, \SchedUpdate(\SchedState_0, \BufProject{\buf_0})}$ and $C_1' = \tup{m_0', a_0', \buf_0', \CacheUpdate(\CacheState_0',  \apply{\buf_0'}{a_0'}(\pc)),\BpState_0', \SchedUpdate(\SchedState_0', \BufProject{\buf_0'})}$ are indistinguishable, i.e., $C_1 \sim C_1'$.
			For this, we need to show that:
			\begin{description}
				\item[$\apply{\buf_0}{a_0}(\pc) = \apply{\buf_0'}{a_0'}(\pc)$:]
				This follows from (a).

				\item[$\DeepProject{\buf_0} = \DeepProject{\buf_0'}$:] 
				This follows from (a).

				\item[$\CacheUpdate(\CacheState_0,  \apply{\buf_0}{a_0}(\pc)) = \CacheUpdate(\CacheState_0',  \apply{\buf_0'}{a_0'}(\pc))$:]
				This follows from $\CacheState_0 = \CacheState_0'$, which in turn follows from (a), and $\apply{\buf_0}{a_0}(\pc) = \apply{\buf_0'}{a_0'}(\pc)$.

				\item[$\BpState_0 = \BpState_0':$] 
				This follows from (a).

				\item[$\SchedUpdate(\SchedState_0, \BufProject{\buf_0}) = \SchedUpdate(\SchedState_0', \BufProject{\buf_0'})$:]
				From (a), we have   $\SchedState_0 = \SchedState_0'$.
				From $\DeepProject{\buf_1} = \DeepProject{\buf_1'}$ and Lemma~\ref{lemma:seq-processor:buffer-projections}, we have $\BufProject{\buf_1} = \BufProject{\buf_1'}$.
				Therefore, $\SchedUpdate(\SchedState_0, \BufProject{\buf_1}) = \SchedUpdate(\SchedState_0', \BufProject{\buf_1'})$.
			\end{description}
			Therefore, $C_1 \sim C_1'$ and (c) holds.
		\end{description}
		
		\item[$\apply{\buf_0}{a_0}(\pc) = \bot \vee |\buf_0| \geq \wMuarch$:]
		Then, from (a), we immediately get that $\apply{\buf_0'}{a_0'}(\pc) = \bot \vee |\buf_0'| \geq \wMuarch$ holds as well.
		Therefore, both computations are stuck and (c) holds.
	\end{description}
	Therefore, (c) holds for all the cases.

\item[$d = \execute{i}$:]
Therefore, we can only apply one of the $\execute{}$ rules.
We remark that, since $\muarchStyle{seq}$ is equipped with the sequential scheduler from Appendix~\ref{appendix:seq-scheduler}, we have that $i = 1$ and $\buf_0[0..i-1] = \buf_0[0..i-1] = \emptysequence$.
There are two cases:
\begin{description}
	\item[$i \leq |\buf_0| \wedge \pbarrier \not\in {\buf_0[0..i-1]}$:]
	There are several cases depending on the $i$-th command in the reorder buffer:
	\begin{description}
		\item[$\elt{\buf_0}{i} = \tagged{\pload{x}{e}}{T}$:]
		From (a), we also have that $\elt{\buf_0'}{i} =  \tagged{\pload{x}{e}}{T}$, $i \leq |\buf_0'|$, and $\pbarrier \not\in \buf_0'[0..i-1]$. 
		There are two cases:
		\begin{description}
			\item[$\pstore{x'}{e'} \not\in \buf_0{[0..i-1]}$:]
			We now show that $\exprEval{e}{\apply{\buf_0[0..i-1]}{a_0}} = \exprEval{e}{\apply{\buf_0'[0..i-1]}{a_0'}}$.
			Since $C_0, C_0'$ are reachable configurations, the buffers $\buf_0, \buf_0'$ are well-formed (see Lemma~\ref{lemma:seq-processor:buffers-well-formedness}), and therefore  $\buf_0[0..i-1] \in \prefixes{\buf_0}$ and  $\buf_0'[0..i-1] \in \prefixes{\buf_0'}$.
			From (b), therefore, there are configurations $\sigma_0, \sigma_0', \sigma_1, \sigma_1'$ such that $C_0 \bufEquiv{|\buf_0[0..i-1]|} \sigma_0$, $C_0' \bufEquiv{|\buf_0[0..i-1]|} \sigma_0'$, $\sigma_0 \CtSeqInterfStep{\tau}{} \sigma_1$,  $\sigma_0' \CtSeqInterfStep{\tau'}{} \sigma_1'$, and $\tau = \tau'$. 
			From $C_0 \bufEquiv{|\buf_0[0..i-1]|} \sigma_0$, $C_0' \bufEquiv{|\buf_0[0..i-1]|} \sigma_0'$, and the well-formedness of the buffers, we know that $p(\sigma_0(\pc)) = p(\sigma_0'(\pc)) = \pload{x}{e}$.
			From $\CtSeqInterf{\cdot}$, we have that $\tau = \loadObs{ \exprEval{e}{\sigma_0} }$  and $\tau' = \loadObs{ \exprEval{e}{\sigma_0'} }$.
			From $\tau=\tau'$, we get that $\exprEval{e}{\sigma_0} = \exprEval{e}{\sigma_0'}$.
			From this, $C_0 \bufEquiv{|\buf_0[0..i-1]|} \sigma_0$, and $C_0' \bufEquiv{|\buf_0'[0..i-1]|} \sigma_0'$, we finally get $\exprEval{e}{\apply{\buf_0{[0..i-1]}}{a_0}} = \exprEval{e}{\apply{\buf_0'{[0..i-1]}}{a_0'}(x)}$.

			Let $n = \exprEval{e}{\apply{\buf_0{[0..i-1]}}{a_0}} = \exprEval{e}{\apply{\buf_0'{[0..i-1]}}{a_0'}(x)}$.
			There are two cases:
			\begin{description}
				\item[$\CacheAccess(\CacheState_0, \exprEval{e}{ \apply{\buf_0{[0..i-1]}}{a_0} }) = \CacheHit$:]
				From (a), we have that $\CacheState_0 = \CacheState_0'$.
				Moreover, we have already shown that $\exprEval{e}{ \apply{\buf_0{[0..i-1]}}{a_0} } = \exprEval{e}{ \apply{\buf_0'{[0..i-1]}}{a_0'} }$.
				Therefore, we can apply the \textsc{Execute-Load-Hit} and \textsc{Step} rules to $C_0$ and $C_0'$ as follows:
				\begin{align*}
					\buf_0 &:= \buf_0[0..i-1] \concat \tagged{\pload{x}{e}}{T} \concat \buf_0[i+1 .. |\buf_0|]\\
					\buf_1 &:= \buf_0[0..i-1] \concat \tagged{\passign{x}{m_0(n)}}{T} \concat \buf_0[i+1 .. |\buf_0|]\\
					\tup{m_0,a_0,\buf_0, \CacheState_0, \BpState_0} &\muarchStep{\execute{i}}{} \tup{m_0,a_0,\buf_1, \CacheUpdate(\CacheState_0,n), \BpState_0}\\
					\tup{m_0,a_0,\buf_0, \CacheState_0,  \BpState_0 , \SchedState_0} &\SeqProcMuarchStep{}{} \tup{m_0,a_0,\buf_1, \CacheUpdate(\CacheState_0,n), \BpState_0,\SchedUpdate(\SchedState_0, \BufProject{\buf_0})}\\
					\buf_0' &:= \buf_0'[0..i-1] \concat \tagged{\pload{x}{e}}{T} \concat \buf_0'[i+1 .. |\buf_0|]\\
					\buf_1' &:= \buf_0'[0..i-1] \concat \tagged{\passign{x}{m_0'(n)}}{T} \concat \buf_0'[i+1 .. |\buf_0|]\\
					\tup{m_0',a_0',\buf_0', \CacheState_0', \BpState_0'} &\muarchStep{\execute{i}}{} \tup{m_0',a_0',\buf_1', \CacheUpdate(\CacheState_0',n), \BpState_0'}\\
					\tup{m_0',a_0',\buf_0', \CacheState_0', \BpState_0', \SchedState_0'} &\SeqProcMuarchStep{}{} \tup{m_0',a_0',\buf_1', \CacheUpdate(\CacheState_0',n), \BpState_0',\SchedUpdate(\SchedState_0', \BufProject{\buf_0'})}
				\end{align*}
				We now show that $C_1 = \tup{m_0,a_0,\buf_1, \CacheUpdate(\CacheState_0,n), \BpState_0,\SchedUpdate(\SchedState_0, \BufProject{\buf_0})}$ and $C_1' = \tup{m_0',a_0',\buf_1', \CacheUpdate(\CacheState_0',n), \BpState_0',\SchedUpdate(\SchedState_0', \BufProject{\buf_0'})}$ are indistinguishable, i.e., i.e., $C_1 \sim C_1'$.
	
				For this, we need to show that:
					\begin{description}
						\item[$\apply{\buf_1}{a_0}(\pc) = \apply{\buf_1'}{a_0'}(\pc)$:]
						This immediately follows from (a) and the fact that \textbf{load}s do not modify $\pc$.
			
						\item[$\DeepProject{\buf_1} = \DeepProject{\buf_1'}$:] 
						This immediately follows from (a) and $x \neq \pc$ (from the well-formedness of the buffers).

						\item[$\CacheUpdate(\CacheState_0,n) = \CacheUpdate(\CacheState_0',n)$:]
						This follows from $\CacheState_0 = \CacheState_0'$, which follows from (a).
			
						\item[$\BpState_0= \BpState_0':$] 
						This follows from  (a).
			
						\item[$\SchedUpdate(\SchedState_0, \BufProject{\buf_1}) = \SchedUpdate(\SchedState_0', \BufProject{\buf_1'})$:]
						From (a), we have   $\SchedState_0 = \SchedState_0'$.
						From $\DeepProject{\buf_1} = \DeepProject{\buf_1'}$ and Lemma~\ref{lemma:seq-processor:buffer-projections}, we have $\BufProject{\buf_1} = \BufProject{\buf_1'}$.
						Therefore, $\SchedUpdate(\SchedState_0, \BufProject{\buf_1}) = \SchedUpdate(\SchedState_0', \BufProject{\buf_1'})$.
					\end{description}
					Therefore, $C_1 \sim C_1'$ and (c) holds.

				\item[$\CacheAccess(\CacheState_0, \exprEval{e}{\apply{\buf_0{[0..i-1]}}{a_0}}) = \CacheMiss$:]
				The proof of this case is similar to the one for the $\CacheHit$ case (except that we apply the \textsc{Execute-Load-Miss} rule). 
			\end{description}

			\item[$\pstore{x'}{e'} \in \buf_0{[0..i-1]}$:]
			From (a), we also have that $\pstore{x'}{e'} \in \buf_0'[0..i-1]$.
			Therefore, both computations are stuck and (c) holds.

		\end{description}

		\item[$\elt{\buf_0}{i} =  \tagged{\passign{\pc}{\lbl}}{\lbl_0} \wedge \ell_0 \neq \emptysequence$:]
		From (a), we also have that $\elt{\buf_0'}{i} =  \tagged{\passign{\pc}{\lbl}}{\lbl_0} \wedge \ell_0 \neq \emptysequence$, $i \leq |\buf_0'|$, and $\pbarrier \not\in \buf_0'[0..i-1]$. 
		Observe that $p(\lbl_0) = \pjz{x}{\lbl''}$.

		We now show that $\apply{\buf_0[0..i-1]}{a_0}(x) = \apply{\buf_0'[0..i-1]}{a_0'}(x)$.
		Since $C_0, C_0'$ are reachable configurations, the buffers $\buf_0, \buf_0'$ are well-formed (see Lemma~\ref{lemma:seq-processor:buffers-well-formedness}), and therefore  $\buf_0[0..i-1] \in \prefixes{\buf_0}$ and  $\buf_0'[0..i-1] \in \prefixes{\buf_0'}$.
		From (b), therefore, there are configurations $\sigma_0, \sigma_0', \sigma_1, \sigma_1'$ such that $C_0 \bufEquiv{|\buf_0[0..i-1]|} \sigma_0$, $C_0' \bufEquiv{|\buf_0[0..i-1]|} \sigma_0'$, $\sigma_0 \CtSeqInterfStep{\tau}{} \sigma_1$,  $\sigma_0' \CtSeqInterfStep{\tau'}{} \sigma_1'$, and $\tau = \tau'$. 
		From $C_0 \bufEquiv{|\buf_0[0..i-1]|} \sigma_0$, $C_0' \bufEquiv{|\buf_0[0..i-1]|} \sigma_0'$, and the well-formedness of the buffers, we know that $p(\sigma_0(\pc)) = p(\sigma_0'(\pc)) = \pjz{x}{\lbl''}$.
		From $\CtSeqInterf{\cdot}$, we have that $(\tau = \pcObs{ \lbl'' } \leftrightarrow \sigma_0(x) = 0) \wedge (\tau = \pcObs{ \sigma_0(\pc)+1 } \leftrightarrow \sigma_0(x) \neq 0)$  and $(\tau' = \pcObs{ \lbl'' } \leftrightarrow \sigma_0'(x) = 0) \wedge (\tau' = \pcObs{ \sigma_0(\pc)+1 } \leftrightarrow \sigma_0'(x) \neq 0)$.
		From $\tau=\tau'$, we get that $\sigma_0(x) = \sigma_0'(x)$.
		From this, $C_0 \bufEquiv{|\buf_0[0..i-1]|} \sigma_0$, and $C_0' \bufEquiv{|\buf_0'[0..i-1]|} \sigma_0'$, we finally get $\apply{\buf_0{[0..i-1]}}{a_0}(x) = \apply{\buf_0'{[0..i-1]}}{a_0'}(x)$.

		Given that $\apply{\buf_0[0..i-1]}{a_0}(x) = \apply{\buf_0'[0..i-1]}{a_0'}(x)$, there are two cases:
		\begin{description}
			\item[$( \apply{\buf_0[0..i-1]}{a_0}(x) = 0 \wedge \lbl = \lbl'') \vee (\apply{\buf_0[0..i-1]}{a_0}(x) \in \Val \setminus \{0,\bot\} \wedge \lbl = \ell_0+1)$:]
			From $\apply{\buf_0{[0..i-1]}}{a_0}(x) = \apply{\buf_0'{[0..i-1]}}{a_0'}(x)$ and (a), we also get $( \apply{\buf_0'[0..i-1]}{a_0'}(x) = 0 \wedge \lbl = \lbl'') \vee (\apply{\buf_0'[0..i-1]}{a_0'}(x) \in \Val \setminus \{0,\bot\} \wedge \lbl = \ell_0+1)$.
			Therefore,  we can apply the \textsc{Execute-Branch-Commit} and \textsc{Step} rules to $C_0$ and $C_0'$ as follows:
			\begin{align*}
				\buf_0 &:= \buf_0[0..i-1] \concat \tagged{\passign{\pc}{\ell}}{\ell_0} \concat \buf_0[i+1 .. |\buf_0|]\\
				\buf_1 &:= \buf_0[0..i-1] \concat \tagged{\passign{\pc}{\ell}}{\notags} \concat \buf_0[i+1 .. |\buf_0|]\\
				\tup{m_0,a_0,\buf_0, \CacheState_0, \BpState_0} &\muarchStep{\execute{i}}{} \tup{m_0,a_0,\buf_1, \CacheState_0, \BpState_0}\\
				\tup{m_0,a_0,\buf_0, \CacheState_0,  \BpUpdate(\BpState_0, \ell_0, \ell) , \SchedState_0} &\SeqProcMuarchStep{}{} \tup{m_0,a_0,\buf_1, \CacheState_0, \BpUpdate(\BpState_0, \ell_0, \ell),\SchedUpdate(\SchedState_0, \BufProject{\buf_0})}\\
				\buf_0' &:= \buf_0'[0..i-1] \concat \tagged{\passign{\pc}{\ell}}{\ell_0} \concat \buf_0'[i+1 .. |\buf_0|]\\
				\buf_1' &:= \buf_0'[0..i-1] \concat \tagged{\passign{\pc}{\ell}}{\notags} \concat \buf_0'[i+1 .. |\buf_0|]\\
				\tup{m_0',a_0',\buf_0', \CacheState_0', \BpUpdate(\BpState_0', \ell_0, \ell)} &\muarchStep{\execute{i}}{} \tup{m_0',a_0',\buf_1', \CacheState_0', \BpState_0'}\\
				\tup{m_0',a_0',\buf_0', \CacheState_0', \BpState_0', \SchedState_0'} &\SeqProcMuarchStep{}{} \tup{m_0',a_0',\buf_1', \CacheState_0', \BpUpdate(\BpState_0', \ell_0, \ell),\SchedUpdate(\SchedState_0', \BufProject{\buf_0'})}
			\end{align*}
			We now show that $C_1 = \tup{m_0,a_0,\buf_1, \CacheState_0, \BpUpdate(\BpState_0, \ell_0, \ell),\SchedUpdate(\SchedState_0, \BufProject{\buf_0})}$ and $C_1' = \tup{m_0',a_0',\buf_1', \CacheState_0', \BpUpdate(\BpState_0', \ell_0, \ell),\SchedUpdate(\SchedState_0', \BufProject{\buf_0'})}$ are indistinguishable, i.e., i.e., $C_1 \sim C_1'$.

			For this, we need to show that:
				\begin{description}
					\item[$\apply{\buf_1}{a_0}(\pc) = \apply{\buf_1'}{a_0'}(\pc)$:]
					This immediately follows from (a) and the fact that we set $\pc$ to $\ell$ in both computations.
		
					\item[$\DeepProject{\buf_1} = \DeepProject{\buf_1'}$:] 
					This immediately follows from (a).

					\item[$\CacheState_0 = \CacheState_0'$:]
					This follows from (a).
		
					\item[$\BpUpdate(\BpState_0, \ell_0, \ell) = \BpUpdate(\BpState_0', \ell_0, \ell):$] 
					This follows from $\BpState_0 = \BpState_0'$, which follows from (a).
		
					\item[$\SchedUpdate(\SchedState_0, \BufProject{\buf_1}) = \SchedUpdate(\SchedState_0', \BufProject{\buf_1'})$:]
					From (a), we have   $\SchedState_0 = \SchedState_0'$.
					From $\DeepProject{\buf_1} = \DeepProject{\buf_1'}$ and Lemma~\ref{lemma:seq-processor:buffer-projections}, we have $\BufProject{\buf_1} = \BufProject{\buf_1'}$.
					Therefore, $\SchedUpdate(\SchedState_0, \BufProject{\buf_1}) = \SchedUpdate(\SchedState_0', \BufProject{\buf_1'})$.
				\end{description}
				Therefore, $C_1 \sim C_1'$ and (c) holds.

			\item[$( \apply{\buf_0[0..i-1]}{a_0}(x) = 0 \wedge \lbl \neq \lbl'') \vee (\apply{\buf_0[0..i-1]}{a_0}(x) \in \Val \setminus \{0,\bot\} \wedge \lbl \neq \ell_0+1)$:]
			The proof of this case is similar to the one of the $( \apply{\buf_0[0..i-1]}{a_0}(x) = 0 \wedge \lbl = \lbl'') \vee (\apply{\buf_0[0..i-1]}{a_0}(x) \in \Val \setminus \{0,\bot\} \wedge \lbl = \ell_0+1)$ (except that we apply the \textsc{Execute-Branch-Rollback} rule).
		\end{description}

		\item[$\elt{\buf_0}{i} = \tagged{\passign{x}{e}}{\notags}$:]
		From (a), we also have that $\elt{\buf_0'}{i} =  \tagged{\passign{x}{e'}}{\notags}$, $i \leq |\buf_0'|$, and $\pbarrier \not\in \buf_0'[0..i-1]$.
		There are two cases:
		\begin{description}
			\item[$\exprEval{e}{\apply{\buf_0[0..i-1]}{a_0}} \neq \bot$:]
			From (a), we also have that $\exprEval{e'}{\apply{\buf_0'[0..i-1]}{a_0'}} \neq \bot$ (indeed, if $e \in \Val$, then $e'$ has to be in $\Val$ as well, and if $e \not\in \Val$, then $e = e'$ and $e$'s dependencies must be resolved in both $\buf_0[0..i-1]$ and $\buf_0'[0..i-1]$ from (a)).
			Therefore,  we can apply the \textsc{Execute-Assignment} and \textsc{Step} rules to $C_0$ and $C_0'$ as follows:
			\begin{align*}
				v &:= \exprEval{e}{\apply{\buf_0{[0..i-1]}}{a_0}}\\
				\buf_0 &:= \buf_0[0..i-1] \concat \tagged{\passign{x}{e}}{T} \concat \buf_0[i+1 .. |\buf_0|]\\
				\buf_1 &:= \buf_0[0..i-1] \concat \tagged{\passign{x}{v}}{T} \concat \buf_0[i+1 .. |\buf_0|]\\
				\tup{m_0,a_0,\buf_0, \CacheState_0, \BpState_0} &\muarchStep{\execute{i}}{} \tup{m_0,a_0,\buf_1, \CacheState_0, \BpState_0}\\
				\tup{m_0,a_0,\buf_0, \CacheState_0, \BpState_0, \SchedState_0} &\SeqProcMuarchStep{}{} \tup{m_0,a_0,\buf_1, \CacheState_0, \BpState_0,\SchedUpdate(\SchedState_0, \BufProject{\buf_0})}\\
				v' &:= \exprEval{e'}{\apply{\buf_0'{[0..i-1]}}{a_0'}}\\
				\buf_0' &:= \buf_0'[0..i-1] \concat \tagged{\passign{x}{e'}}{T} \concat \buf_0'[i+1 .. |\buf_0|]\\
				\buf_1' &:= \buf_0'[0..i-1] \concat \tagged{\passign{x}{v'}}{T} \concat \buf_0'[i+1 .. |\buf_0|]\\
				\tup{m_0',a_0',\buf_0', \CacheState_0', \BpState_0'} &\muarchStep{\execute{i}}{} \tup{m_0',a_0',\buf_1', \CacheState_0', \BpState_0'}\\
				\tup{m_0',a_0',\buf_0', \CacheState_0', \BpState_0', \SchedState_0'} &\SeqProcMuarchStep{}{} \tup{m_0',a_0',\buf_1', \CacheState_0', \BpState_0',\SchedUpdate(\SchedState_0', \BufProject{\buf_0'})}
			\end{align*}
			We now show that $C_1 = \tup{m_0,a_0,\buf_1, \CacheState_0, \BpState_0,\SchedUpdate(\SchedState_0, \BufProject{\buf_0})}$ and $C_1' = \tup{m_0',a_0',\buf_1', \CacheState_0', \BpState_0',\SchedUpdate(\SchedState_0', \BufProject{\buf_0'})}$ are indistinguishable, i.e., i.e., $C_1 \sim C_1'$.

			There are two cases:
			\begin{description}
				\item[$x \neq \pc$:]
				For $C_1 \sim C_1'$, we need to show that:
				\begin{description}
					\item[$\apply{\buf_1}{a_0}(\pc) = \apply{\buf_1'}{a_0'}(\pc)$:]
					This immediately follows from (a) and $x \neq \pc$.
		
					\item[$\DeepProject{\buf_1} = \DeepProject{\buf_1'}$:] 
					This immediately follows from (a) and $x \neq \pc$.

					\item[$\CacheState_0 = \CacheState_0'$:]
					This follows from (a).
		
					\item[$\BpState_0 = \BpState_0':$] 
					This follows from (a).
		
					\item[$\SchedUpdate(\SchedState_0, \BufProject{\buf_1}) = \SchedUpdate(\SchedState_0', \BufProject{\buf_1'})$:]
					From (a), we have   $\SchedState_0 = \SchedState_0'$.
					From $\DeepProject{\buf_1} = \DeepProject{\buf_1'}$ and Lemma~\ref{lemma:seq-processor:buffer-projections}, we have $\BufProject{\buf_1} = \BufProject{\buf_1'}$.
					Therefore, $\SchedUpdate(\SchedState_0, \BufProject{\buf_1}) = \SchedUpdate(\SchedState_0', \BufProject{\buf_1'})$.
				\end{description}
				Therefore, $C_1 \sim C_1'$ and (c) holds.

				\item[$x = \pc$:]
				For $C_1 \sim C_1'$, we need to show that:
				\begin{description}
					\item[$\apply{\buf_1}{a_0}(\pc) = \apply{\buf_1'}{a_0'}(\pc)$:]
					For this, we need to show that $v = v'$ (in case there are no later changes to the program counter).
					Since $C_0, C_0'$ are reachable configurations, the buffers $\buf_0, \buf_0'$ are well-formed (see Lemma~\ref{lemma:seq-processor:buffers-well-formedness}), and therefore  $\buf_0[0..i-1] \in \prefixes{\buf_0}$ and  $\buf_0'[0..i-1] \in \prefixes{\buf_0'}$.
					From (b), therefore, there are configurations $\sigma_0, \sigma_0', \sigma_1, \sigma_1'$ such that $C_0 \bufEquiv{|\buf_0[0..i-1]|} \sigma_0$, $C_0' \bufEquiv{|\buf_0[0..i-1]|} \sigma_0'$, $\sigma_0 \CtSeqInterfStep{\tau}{} \sigma_1$,  $\sigma_0' \CtSeqInterfStep{\tau'}{} \sigma_1'$, and $\tau = \tau'$. 
					There are two cases:
					\begin{description}
						\item[$e \in \Val$:] 
						Then,  $e = e'$ follows from (a) and, therefore, we immediately have $v = v'$.

						\item[$e \not\in \Val$:] 
						Then, from (a) we have $e = e'$.
						From $C_0 \bufEquiv{|\buf_0[0..i-1]|} \sigma_0$, $C_0' \bufEquiv{|\buf_0[0..i-1]|} \sigma_0'$, and the well-formedness of the buffers, we know that $p(\sigma_0(\pc)) = p(\sigma_0'(\pc)) = \pjmp{e}$.
						From $\CtSeqInterf{\cdot}$, we have that $\tau = \pcObs{ \exprEval{e}{\sigma_0} }$ and $\tau' = \pcObs{ \exprEval{e}{\sigma_0'}}$.
						From $C_0 \bufEquiv{|\buf_0[0..i-1]|} \sigma_0$ and $\tau = \pcObs{ \exprEval{e}{\sigma_0} }$, we have that $\tau = \pcObs{ \exprEval{e}{\apply{\buf_0{[0..i-1]}}{a_0}}}$.
						Similarly, from $C_0' \bufEquiv{|\buf_0'[0..i-1]|} \sigma_0'$ and $\tau' = \pcObs {\exprEval{e}{\sigma_0'}}$, we have that $\tau' = \pcObs{\exprEval{e}{\apply{\buf_0'{[0..i-1]}}{a_0'}}}$.
						Finally, from $\tau=\tau'$, we get $\exprEval{e}{\apply{\buf_0{[0..i-1]}}{a_0}} = \exprEval{e}{\apply{\buf_0'{[0..i-1]}}{a_0'}}$ and, therefore, $v = v'$.
					\end{description}
						
					\item[$\DeepProject{\buf_1} = \DeepProject{\buf_1'}$:] 
					This immediately follows from (a) and $v = v'$ (shown above).
		
					\item[$\CacheState_0 = \CacheState_0'$:]
					This follows from (a).
		
					\item[$\BpState_0 = \BpState_0':$] 
					This follows from (a).
		
					\item[$\SchedUpdate(\SchedState_0, \BufProject{\buf_1}) = \SchedUpdate(\SchedState_0', \BufProject{\buf_1'})$:]
					From (a), we have   $\SchedState_0 = \SchedState_0'$.
					From $\DeepProject{\buf_1} = \DeepProject{\buf_1'}$ and Lemma~\ref{lemma:seq-processor:buffer-projections}, we have $\BufProject{\buf_1} = \BufProject{\buf_1'}$.
					Therefore, $\SchedUpdate(\SchedState_0, \BufProject{\buf_1}) = \SchedUpdate(\SchedState_0', \BufProject{\buf_1'})$.
				\end{description}
				Therefore, $C_1 \sim C_1'$ and (c) holds.
			\end{description} 

			\item[$\exprEval{e}{\apply{\buf_0[0..i-1]}{a_0}} = \bot$:] 
			From this, it follows that $e \not\in \Val$.
			Therefore, from (a), we have that $e = e'$.
			
			Observe that $\exprEval{e}{\apply{\buf[0..i-1]}{a_0}} = \bot$ implies that one of the dependencies of $e$ is unresolved in $\buf_0[0..i-1]$.
			From this and (a), it follows that one of the dependencies of $e$ is unresolved in $\buf_0'[0..i-1]$.
			Therefore, $\exprEval{e}{\apply{\buf_0'{[0..i-1]}}{a_0'}} = \bot $ holds as well.
			Hence, both configurations are stuck and (c) holds.

		\end{description}

		\item[$\elt{\buf_0}{i} =   \tagged{\pmarkedassign{x}{e}}{\notags}$:]
		The proof of this case is similar to that of $\elt{\buf_0}{i} = \tagged{\passign{x}{e}}{\notags}$ (when $x = \pc$).

		\item[$\elt{\buf_0}{i} =  \tagged{\pstore{x}{e}}{T}$:]
		From (a), we also have that $\elt{\buf_0'}{i} =  \tagged{\pstore{x}{e}}{T}$, $i \leq |\buf_0'|$, and $\pbarrier \not\in \buf_0'[0..i-1]$.
		There are two cases:
		\begin{description}
			\item[$\exprEval{e}{\apply{\buf_0{[0..i-1]}}{a_0}} \neq \bot \wedge \apply{\buf_0{[0..i-1]}}{a_0}(x) \neq \bot$:] 
			From (a), we have that  $\exprEval{e}{\apply{\buf_0'{[0..i-1]}}{a_0'}} \neq \bot \wedge \apply{\buf_0'{[0..i-1]}}{a_0'}(x) \neq \bot$ holds as well.
			Therefore,  we can apply the \textsc{Execute-Store} and \textsc{Step} rules to $C_0$ and $C_0'$ as follows:
			\begin{align*}
				v &:= \apply{\buf_0{[0..i-1]}}{a_0}(x) \\
				n &:= \exprEval{e}{\apply{\buf_0{[0..i-1]}}{a_0}}\\
				\buf_0 &:= \buf_0[0..i-1] \concat \tagged{\pstore{x}{e}}{T} \concat \buf_0[i+1 .. |\buf_0|]\\
				\buf_1 &:= \buf_0[0..i-1] \concat \tagged{\pstore{v}{n}}{T} \concat \buf_0[i+1 .. |\buf_0|]\\
				\tup{m_0,a_0,\buf_0, \CacheState_0, \BpState_0} &\muarchStep{\execute{i}}{} \tup{m_0,a_0,\buf_1, \CacheState_0, \BpState_0}\\
				\tup{m_0,a_0,\buf_0, \CacheState_0, \BpState_0, \SchedState_0} &\SeqProcMuarchStep{}{} \tup{m_0,a_0,\buf_1, \CacheState_0, \BpState_0,\SchedUpdate(\SchedState_0, \BufProject{\buf_0})}\\
				v' &:= \apply{\buf_0'{[0..i-1]}}{a_0'}(x) \\
				n' &:= \exprEval{e}{\apply{\buf_0'{[0..i-1]}}{a_0'}}\\
				\buf_0' &:= \buf_0'[0..i-1] \concat \tagged{\pstore{x}{e}}{T} \concat \buf_0'[i+1 .. |\buf_0|]\\
				\buf_1' &:= \buf_0'[0..i-1] \concat \tagged{\pstore{v'}{n'}}{T} \concat \buf_0'[i+1 .. |\buf_0|]\\
				\tup{m_0',a_0',\buf_0', \CacheState_0', \BpState_0'} &\muarchStep{\execute{i}}{} \tup{m_0',a_0',\buf_1', \CacheState_0', \BpState_0'}\\
				\tup{m_0',a_0',\buf_0', \CacheState_0', \BpState_0', \SchedState_0'} &\SeqProcMuarchStep{}{} \tup{m_0',a_0',\buf_1', \CacheState_0', \BpState_0',\SchedUpdate(\SchedState_0', \BufProject{\buf_0'})}
			\end{align*}
			We now show that $C_1 = \tup{m_0,a_0,\buf_1, \CacheState_0, \BpState_0,\SchedUpdate(\SchedState_0, \BufProject{\buf_0})}$ and $C_1' = \tup{m_0',a_0',\buf_1', \CacheState_0', \BpState_0',\SchedUpdate(\SchedState_0', \BufProject{\buf_0'})}$ are indistinguishable, i.e., i.e., $C_1 \sim C_1'$.
			For this, we need to show that:
			\begin{description}
				\item[$\apply{\buf_1}{a_0}(\pc) = \apply{\buf_1'}{a_0'}(\pc)$:]
				This immediately follows from (a) and the fact that \textbf{store}s do not alter the value of $\pc$.
	
				\item[$\DeepProject{\buf_1} = \DeepProject{\buf_1'}$:] 
				For this, we need to show $n = n'$:
				\begin{description}
					\item[$n = n'$:]
					Since $C_0, C_0'$ are reachable configurations, the buffers $\buf_0, \buf_0'$ are well-formed (see Lemma~\ref{lemma:seq-processor:buffers-well-formedness}), and therefore  $\buf_0[0..i-1] \in \prefixes{\buf_0}$ and  $\buf_0'[0..i-1] \in \prefixes{\buf_0'}$.
					From (b), therefore, there are configurations $\sigma_0, \sigma_0', \sigma_1, \sigma_1'$ such that $C_0 \bufEquiv{|\buf_0[0..i-1]|} \sigma_0$, $C_0' \bufEquiv{|\buf_0[0..i-1]|} \sigma_0'$, $\sigma_0 \CtSeqInterfStep{\tau}{} \sigma_1$,  $\sigma_0' \CtSeqInterfStep{\tau'}{} \sigma_1'$, and $\tau = \tau'$. 
					From $C_0 \bufEquiv{|\buf_0[0..i-1]|} \sigma_0$, $C_0' \bufEquiv{|\buf_0[0..i-1]|} \sigma_0'$, and the well-formedness of the buffers, we know that $p(\sigma_0(\pc)) = p(\sigma_0'(\pc)) = \pstore{x}{e}$.
					From $\CtSeqInterf{\cdot}$, we have that $\tau = \storeObs{ \exprEval{e}{\sigma_0} }$ and $\tau' = \storeObs{ \exprEval{e}{\sigma_0'}}$.
					From $C_0 \bufEquiv{|\buf_0[0..i-1]|} \sigma_0$ and $\tau = \storeObs{ \exprEval{e}{\sigma_0} }$, we have that $\tau = \storeObs{ \exprEval{e}{\apply{\buf_0{[0..i-1]}}{a_0}}}$.
					Similarly, from $C_0' \bufEquiv{|\buf_0'[0..i-1]|} \sigma_0'$ and $\tau' = \storeObs {\exprEval{e}{\sigma_0'}}$, we have that $\tau' = \storeObs{\exprEval{e}{\apply{\buf_0'{[0..i-1]}}{a_0'}}}$.
					Finally, from $\tau=\tau'$, we get $\exprEval{e}{\apply{\buf_0{[0..i-1]}}{a_0}} = \exprEval{e}{\apply{\buf_0'{[0..i-1]}}{a_0'}}$ and, therefore, $n = n'$.

				\end{description}
				From (a), $n = n'$, $\buf_0 = \buf_0[0..i-1] \concat \tagged{\pstore{x}{e}}{T} \concat \buf_0[i+1 .. |\buf_0|]$, $				\buf_1 = \buf_0[0..i-1] \concat \tagged{\pstore{v}{n}}{T} \concat \buf_0[i+1 .. |\buf_0|]$, $\buf_0' = \buf_0'[0..i-1] \concat \tagged{\pstore{x}{e}}{T} \concat \buf_0'[i+1 .. |\buf_0|]$, and $				\buf_1' = \buf_0'[0..i-1] \concat \tagged{\pstore{v'}{n'}}{T} \concat \buf_0'[i+1 .. |\buf_0|]$, we get $\DeepProject{\buf_1} = \DeepProject{\buf_1'}$.

				\item[$\CacheState_0 = \CacheState_0'$:]
				This follows from (a).
	
				\item[$\BpState_0 = \BpState_0':$] 
				This follows from (a).
	
				\item[$\SchedUpdate(\SchedState_0, \BufProject{\buf_1}) = \SchedUpdate(\SchedState_0', \BufProject{\buf_1'})$:]
				From (a), we have   $\SchedState_0 = \SchedState_0'$.
				From $\DeepProject{\buf_1} = \DeepProject{\buf_1'}$ and Lemma~\ref{lemma:seq-processor:buffer-projections}, we have $\BufProject{\buf_1} = \BufProject{\buf_1'}$.
				Therefore, $\SchedUpdate(\SchedState_0, \BufProject{\buf_1}) = \SchedUpdate(\SchedState_0', \BufProject{\buf_1'})$.
			\end{description}
			Therefore, $C_1 \sim C_1'$ and (c) holds.

			\item[$\exprEval{e}{\apply{\buf_0{[0..i-1]}}{a_0}} = \bot \vee \apply{\buf_0{[0..i-1]}}{a_0}(x) = \bot$:] 
			Then, one of the dependencies of $e$ or $x$ are unresolved in $\buf_0[0..i-1]$.
			From this and (a), it follows that one of the dependencies of $e$ or $x$ are unresolved in $\buf_0'[0..i-1]$.
			Therefore, $\exprEval{e}{\apply{\buf_0'{[0..i-1]}}{a_0'}} = \bot \vee \apply{\buf_0'{[0..i-1]}}{a_0'}(x) = \bot$ holds as well.
			Hence, both configurations are stuck and (c) holds.
		\end{description}

		\item[$\elt{\buf_0}{i} =  \tagged{\pskip{}}{\notags}$:]
		The proof of this case is similar to that of $\elt{\buf_0}{i} =  \tagged{\passign{x}{e}}{\notags}$.
		\item[$\elt{\buf_0}{i} =  \tagged{\pbarrier}{T}$:]
		The proof of this case is similar to that of $\elt{\buf_0}{i} =  \tagged{\passign{x}{e}}{\notags}$.
	\end{description}
	Therefore, (c) holds in all cases.

	\item[$i > |\buf_0| \vee \pbarrier \in \buf_0{[0..i-1]}$:]
	From (a), it immediately follows that $i > |\buf_0'| \vee \pbarrier \in \buf_0'[0..i-1]$.
	Therefore, both configurations are stuck and (c) holds.
\end{description}
Therefore, (c) holds in all cases.

\item[$d = \retire{}$:]
	Therefore, we can only apply one of the $\retire{}$ rules depending on the head of the reorder buffer in $\buf_0$.
	There are five cases:
	\begin{description}
		\item[$\buf_0 = \tagged{\pskip}{\notags} \concat \buf_1 $:] 
		From (a), we get that $\DeepProject{\buf_0} = \DeepProject{\buf_0'}$.
		Therefore, we have that $\buf_0' = \tagged{\pskip}{\notags} \concat \buf_1' $ and $\DeepProject{\buf_1} = \DeepProject{\buf_1'}$.
		Therefore, we can apply the \textsc{Retire-Skip} and \textsc{Step} rules to $C_0$ and $C_0'$ as follows:
		\begin{align*}
			\tup{m_0,a_0,\tagged{\pskip}{\notags} \concat \buf_1,\CacheState_0,\BpState_0} &\muarchStep{\retire}{} \tup{m_0, a_0, \buf_1, \CacheState_0,\BpState_0}\\
			\tup{m_0,a_0,\tagged{\pskip}{\notags} \concat \buf_1,\CacheState_0,\BpState_0, \SchedState_0} &\SeqProcMuarchStep{}{} \tup{m_0, a_0, \buf_1, \CacheState_0,\BpState_0, \SchedUpdate(\SchedState_0, \BufProject{\buf_1})}\\
			\tup{m_0',a_0',\tagged{\pskip}{\notags} \concat \buf_1',\CacheState_0',\BpState_0'} &\muarchStep{\retire}{} \tup{m_0', a_0', \buf_1', \CacheState_0',\BpState_0'}\\
			\tup{m_0',a_0',\tagged{\pskip}{\notags} \concat \buf_1',\CacheState_0',\BpState_0', \SchedState_0'} &\SeqProcMuarchStep{}{} \tup{m_0', a_0', \buf_1', \CacheState_0',\BpState_0', \SchedUpdate(\SchedState_0', \BufProject{\buf_1'})}
		\end{align*}
		We now show that $C_1 = \tup{m_0, a_0, \buf_1, \CacheState_0,\BpState_0, \SchedUpdate(\SchedState_0, \BufProject{\buf_1})}$ and $C_1' = \tup{m_0', a_0', \buf_1', \CacheState_0',\BpState_0', \SchedUpdate(\SchedState_0', \BufProject{\buf_1'})}$ are indistinguishable, i.e., $C_1 \sim C_1'$.
		For this, we need to show that:
		\begin{description}
			\item[$\apply{\buf_1}{a_0}(\pc) = \apply{\buf_1'}{a_0'}(\pc)$:]
			From (a), we have $\apply{\tagged{\pskip}{\notags} \concat \buf_1}{a_0}(\pc) = \apply{\tagged{\pskip}{\notags} \concat \buf_1'}{a_0'}(\pc)$.
			From this, we immediately get that $\apply{ \buf_1}{a_0}(\pc) = \apply{ \buf_1'}{a_0'}(\pc)$.

			\item[$\DeepProject{\buf_1} = \DeepProject{\buf_1'}$:] 
			This follows from $\buf_0 = \tagged{\pskip}{\notags} \concat \buf_1 $, $\buf_0' = \tagged{\pskip}{\notags} \concat \buf_1' $, and (a).

			\item[$\CacheState_0 = \CacheState_0'$:]
			This follows from (a).

			\item[$\BpState_0 = \BpState_0':$] 
			This follows from (a).

			\item[$\SchedUpdate(\SchedState_0, \BufProject{\buf_1}) = \SchedUpdate(\SchedState_0', \BufProject{\buf_1'})$:]
			From (a), we have   $\SchedState_0 = \SchedState_0'$.
			From $\DeepProject{\buf_1} = \DeepProject{\buf_1'}$ and Lemma~\ref{lemma:seq-processor:buffer-projections}, we have $\BufProject{\buf_1} = \BufProject{\buf_1'}$.
			Therefore, $\SchedUpdate(\SchedState_0, \BufProject{\buf_1}) = \SchedUpdate(\SchedState_0', \BufProject{\buf_1'})$.
		\end{description}
		Therefore, $C_1 \sim C_1'$ and (c) holds.

		\item[$\buf_0 = \tagged{\pbarrier}{\notags} \concat \buf_1 $:]
		The proof of this case is similar to that of the case $\buf_0 = \tagged{\pskip}{\notags} \concat \buf_1 $.

		\item[$\buf_0 = \tagged{\passign{x}{v}}{\notags} \concat \buf_1 $:] 
		From (a), we get that $\DeepProject{\buf_0} = \DeepProject{\buf_0'}$.
		Therefore, we have that $\buf_0' = \tagged{\passign{x}{v'}}{\notags} \concat \buf_1' $ and $\DeepProject{\buf_1} = \DeepProject{\buf_1'}$.
		From $\DeepProject{\buf_0} = \DeepProject{\buf_0'}$, we also get that $v \in \Val$ iff $v' \in \Val$.
		Observe also that if $v \not\in \Val$ then both computations are stuck and (c) holds (since there is no $C_1$ such that $C_0 \SeqProcMuarchStep{}{} C_1$ and no $C_1'$ such that $C_0' \SeqProcMuarchStep{}{} C_1'$).
		In the following, therefore, we assume that $v,v' \in \Val$.
		Therefore, we can apply the \textsc{Retire-Assignment} and \textsc{Step} rules to $C_0$ and $C_0'$ as follows:
		\begin{align*}
			\tup{m_0,a_0,\tagged{\passign{x}{v}}{\notags} \concat \buf_1,\CacheState_0,\BpState_0} &\muarchStep{\retire}{} \tup{m_0, a_0[x\mapsto v], \buf_1, \CacheState_0,\BpState_0}\\
			\tup{m_0,a_0,\tagged{\passign{x}{v}}{\notags} \concat \buf_1,\CacheState_0,\BpState_0, \SchedState_0} &\SeqProcMuarchStep{}{} \tup{m_0, a_0[x \mapsto v], \buf_1, \CacheState_0,\BpState_0, \SchedUpdate(\SchedState_0, \BufProject{\buf_1})}\\
			\tup{m_0',a_0',\tagged{\passign{x}{v'}}{\notags} \concat \buf_1',\CacheState_0',\BpState_0'} &\muarchStep{\retire}{} \tup{m_0', a_0'[x \mapsto v'], \buf_1', \CacheState_0',\BpState_0'}\\
			\tup{m_0',a_0',\tagged{\passign{x}{v'}}{\notags} \concat \buf_1',\CacheState_0',\BpState_0', \SchedState_0'} &\SeqProcMuarchStep{}{} \tup{m_0', a_0'[x \mapsto v'], \buf_1', \CacheState_0',\BpState_0', \SchedUpdate(\SchedState_0', \BufProject{\buf_1'})}
		\end{align*}
		We now show that $C_1 = \tup{m_0, a_0[x \mapsto v], \buf_1, \CacheState_0,\BpState_0, \SchedUpdate(\SchedState_0, \BufProject{\buf_1})}$ and $C_1' = \tup{m_0', a_0'[x \mapsto v'], \buf_1', \CacheState_0',\BpState_0', \SchedUpdate(\SchedState_0', \BufProject{\buf_1'})}$ are indistinguishable, i.e., $C_1 \sim C_1'$.
		For this, we need to show that:
		\begin{description}
			\item[$\apply{\buf_1}{a_0[x \mapsto v]}(\pc) = \apply{\buf_1'}{a_0'[x\mapsto v']}(\pc)$:]
			From (a), we have $\apply{\tagged{\passign{x}{v}}{\notags} \concat \buf_1}{a_0}(\pc) = \apply{\tagged{\passign{x}{v'}}{\notags} \concat \buf_1'}{a_0'}(\pc)$.
			From this, we immediately get that $\apply{ \buf_1}{a_0[x \mapsto v]}(\pc) = \apply{ \buf_1'}{a_0'[x \mapsto v']}(\pc)$.

			\item[$\DeepProject{\buf_1} = \DeepProject{\buf_1'}$:] 
			This follows from $\buf_0 = \tagged{\passign{x}{v}}{\notags} \concat \buf_1 $, $\buf_0' = \tagged{\passign{x}{v'}}{\notags} \concat \buf_1' $, and (a).

			\item[$\CacheState_0 = \CacheState_0'$:]
			This follows from (a).

			\item[$\BpState_0 = \BpState_0':$] 
			This follows from (a).

			\item[$\SchedUpdate(\SchedState_0, \BufProject{\buf_1}) = \SchedUpdate(\SchedState_0', \BufProject{\buf_1'})$:]
			From (a), we have   $\SchedState_0 = \SchedState_0'$.
			From $\DeepProject{\buf_1} = \DeepProject{\buf_1'}$ and Lemma~\ref{lemma:seq-processor:buffer-projections}, we have $\BufProject{\buf_1} = \BufProject{\buf_1'}$.
			Therefore, $\SchedUpdate(\SchedState_0, \BufProject{\buf_1}) = \SchedUpdate(\SchedState_0', \BufProject{\buf_1'})$.
		\end{description}
		Therefore, $C_1 \sim C_1'$ and (c) holds.

		\item[$\buf_0 = \tagged{\pmarkedassign{x}{v}}{\notags} \concat \buf_1 $:]
		The proof fo this case is similar to that of the case $\buf_0 = \tagged{\passign{x}{v}}{\notags} \concat \buf_1 $.

		\item[$\buf_0 = \tagged{\pstore{v}{n}}{\notags} \concat \buf_1 $:] 
		From (a), we get that $\DeepProject{\buf_0} = \DeepProject{\buf_0'}$.
		Therefore, we have that $\buf_0' = \tagged{\pstore{v'}{n}}{\notags} \concat \buf_1' $ and $\DeepProject{\buf_1} = \DeepProject{\buf_1'}$.
		Observe that (1) $v \in \Val \leftrightarrow v' \in \Val$ from (a), and (2) if $v \not\in \Val$ or $n \not\in \Val$, then both computations are stuck and (c) holds (since there is no $C_1$ such that $C_0 \SeqProcMuarchStep{}{} C_1$ and no $C_1'$ such that $C_0' \SeqProcMuarchStep{}{} C_1'$).
		In the following, therefore, we assume that $v,v',n \in \Val$.
		Therefore, we can apply the \textsc{Retire-Store} and \textsc{Step} rules to $C_0$ and $C_0'$ as follows:
		\begin{align*}
			\tup{m_0,a_0,\tagged{\pstore{v}{n}}{\notags} \concat \buf_1,\CacheState_0,\BpState_0} &\muarchStep{\retire}{} \tup{m_0[n\mapsto v], a_0, \buf_1, \CacheUpdate(\CacheState_0,n),\BpState_0}\\
			\tup{m_0,a_0,\tagged{\pstore{v}{n}}{\notags} \concat \buf_1,\CacheState_0,\BpState_0, \SchedState_0} &\SeqProcMuarchStep{}{} \tup{m_0[n\mapsto v], a_0, \buf_1, \CacheUpdate(\CacheState_0,n),\BpState_0, \SchedUpdate(\SchedState_0, \BufProject{\buf_1})}\\
			\tup{m_0',a_0',\tagged{\pstore{v'}{n}}{\notags} \concat \buf_1',\CacheState_0',\BpState_0'} &\muarchStep{\retire}{} \tup{m_0'[n \mapsto v'], a_0', \buf_1', \CacheUpdate(\CacheState_0',n),\BpState_0'}\\
			\tup{m_0',a_0',\tagged{\pstore{v'}{n}}{\notags} \concat \buf_1',\CacheState_0',\BpState_0', \SchedState_0'} &\SeqProcMuarchStep{}{} \tup{m_0'[n\mapsto v'], a_0', \buf_1', \CacheUpdate(\CacheState_0',n),\BpState_0', \SchedUpdate(\SchedState_0', \BufProject{\buf_1'})}
		\end{align*}
		We now show that $C_1 = \tup{m_0[n \mapsto v], a_0, \buf_1, \CacheUpdate(\CacheState_0,n),\BpState_0, \SchedUpdate(\SchedState_0, \BufProject{\buf_1})}$ and $C_1' =  \tup{m_0'[n\mapsto v'], a_0', \buf_1', \CacheUpdate(\CacheState_0',n),\BpState_0', \SchedUpdate(\SchedState_0', \BufProject{\buf_1'})}$ are indistinguishable, i.e., $C_1 \sim C_1'$.
		For this, we need to show that:
		\begin{description}
			\item[$\apply{\buf_1}{a_0}(\pc) = \apply{\buf_1'}{a_0'}(\pc)$:]
			From (a), we have $\apply{\tagged{\pstore{n}{v}}{\notags} \concat \buf_1}{a_0}(\pc) = \apply{\tagged{\pstore{n}{v}}{\notags} \concat \buf_1'}{a_0'}(\pc)$.
			From this, we immediately get that $\apply{ \buf_1}{a_0}(\pc) = \apply{ \buf_1'}{a_0'}(\pc)$.

			\item[$\DeepProject{\buf_1} = \DeepProject{\buf_1'}$:] 
			This follows from $\buf_0 = \tagged{\pstore{v}{n}}{\notags} \concat \buf_1 $, $\buf_0' = \tagged{\pstore{v'}{n}}{\notags} \concat \buf_1' $, and (a).

			\item[$\CacheUpdate(\CacheState_0,n) = \CacheUpdate(\CacheState_0',n)$:]
			This follows from $\CacheState_0 = \CacheState_0'$, which, in turn, follows from (a).

			\item[$\BpState_0 = \BpState_0':$] 
			This follows from (a).

			\item[$\SchedUpdate(\SchedState_0, \BufProject{\buf_1}) = \SchedUpdate(\SchedState_0', \BufProject{\buf_1'})$:]
			From (a), we have   $\SchedState_0 = \SchedState_0'$.
			From $\DeepProject{\buf_1} = \DeepProject{\buf_1'}$ and Lemma~\ref{lemma:seq-processor:buffer-projections}, we have $\BufProject{\buf_1} = \BufProject{\buf_1'}$.
			Therefore, $\SchedUpdate(\SchedState_0, \BufProject{\buf_1}) = \SchedUpdate(\SchedState_0', \BufProject{\buf_1'})$.
		\end{description}
		Therefore, $C_1 \sim C_1'$ and (c) holds.
	\end{description}
	Therefore, (c) holds for all the cases.

\end{description}
Since (c) holds for all cases, this completes the proof of our lemma.
\end{proof}

\subsection{Main lemma}

\begin{definition}[Indistinguishability of hardware configurations]\label{def:seq-processor:indistinguishability}
	We say that two hardware configurations $\tup{\sigma,\mu} = \tup{m,a,\buf, \CacheState,\BpState, \SchedState}$ and $\tup{\sigma',\mu'} = \tup{m',a',\buf', \CacheState',\BpState', \SchedState'}$ are \emph{indistinguishable}, written $\tup{\sigma,\mu} \approx \tup{\sigma',\mu'}$, iff $\BufProject{\buf } = \BufProject{\buf'}$, $\CacheState = \CacheState'$, $\BpState = \BpState'$, and $\SchedState = \SchedState'$.
\end{definition}

\begin{lemma}[Deep-indistinguishability implies indistinguishability]\label{lemma:seq:deep-indistiguishability-implies-indistinguishability}
Let $C, C'$ be hardware configurations.
If $C \sim C'$, then $C \approx C'$.
\end{lemma}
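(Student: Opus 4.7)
The plan is to observe that the conclusion $C \approx C'$ requires four clauses: equality of the buffer data-independent projections, of the caches, of the branch predictor states, and of the scheduler states. Three of these, namely the equality of $\CacheState$, $\BpState$, and $\SchedState$, are already given verbatim by conditions (c), (d), and (e) in the definition of deep-indistinguishability $\sim$ (Definition~\ref{def:seq-processor:deep-indistinguishability}). So the only nontrivial content to discharge is the equality $\BufProject{\buf} = \BufProject{\buf'}$ of the shallow projections.

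For that single remaining obligation, I would simply invoke Lemma~\ref{lemma:seq-processor:buffer-projections}, which says that $\DeepProject{\buf} = \DeepProject{\buf'}$ implies $\BufProject{\buf} = \BufProject{\buf'}$. The hypothesis of this lemma is exactly clause (b) of $C \sim C'$, so the implication is immediate. Note that clause (a) of deep-indistinguishability (agreement on $\apply{\buf}{a}(\pc)$) is not needed at all for the weaker conclusion $\approx$; that clause is carried in $\sim$ only so that it can be used as an invariant by the stepwise indistinguishability argument (Lemma~\ref{lemma:seq-processor:trace-equiv-implies-stepwise-indistinguishability}).

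There is no real obstacle here: the lemma is a bookkeeping statement saying that $\sim$ is a strictly stronger relation than $\approx$, designed precisely so that the main preservation lemma can maintain the extra invariants while the top-level theorem only needs the weaker observational equivalence. The proof will be two or three lines: unfold both definitions, read off the cache/predictor/scheduler equalities directly from $\sim$, and apply Lemma~\ref{lemma:seq-processor:buffer-projections} to convert the deep buffer projection equality into the shallow one required by $\approx$.
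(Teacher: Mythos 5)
Your proof is correct and matches the paper's own argument, which simply unfolds Definitions~\ref{def:seq-processor:deep-indistinguishability} and~\ref{def:seq-processor:indistinguishability}; your explicit appeal to Lemma~\ref{lemma:seq-processor:buffer-projections} for the step from $\DeepProject{\buf} = \DeepProject{\buf'}$ to $\BufProject{\buf} = \BufProject{\buf'}$ is exactly the (implicit) content of the paper's one-line proof. Your observation that clause (a) of $\sim$ is not needed here is also accurate.
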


\begin{proof}
It  follows from Definitions~\ref{def:seq-processor:deep-indistinguishability} and~\ref{def:seq-processor:indistinguishability}.
\end{proof}

\begin{lemma}\label{lemma:seq-processor:main-lemma}
Let $p$ be a well-formed program, $\CacheState_0$ be the initial cache state, $\BpState_0$ be the initial branch predictor state, and $\SchedState_0$ be the initial scheduler state, $\sigma_0 = \tup{m_0,a_0}, \sigma_0' = \tup{m_0',a_0'}$ be initial \archstate{}s, and $C_0 = \tup{m_0,a_0, \emptysequence, \CacheState_0, \BpState_0, \SchedState_0}$ and $C_0' = \tup{m_0',a_0', \emptysequence, \CacheState_0, \BpState_0, \SchedState_0}$ be hardware configurations.
	Furthermore, let $\crun := \sigma_0$ $\CtSeqInterfStep{o_1}{}$ $\sigma_1$ $\CtSeqInterfStep{o_2}{}$ $\ldots$  $\CtSeqInterfStep{o_{n-1}}{}$  $\sigma_n$ and $\crunp:=\sigma_0'$ $\CtSeqInterfStep{o_1'}{}$  $\sigma_1' $ $\CtSeqInterfStep{o_2'}{}$ $\ldots$  $\CtSeqInterfStep{o_{n-1}'}{} $ $\sigma_n$ be two runs for the $\CtSeqInterf{\cdot}$ contract where  $\sigma_n, \sigma_{n}$ are final \archstate{}s.
	If $o_i = o_i'$ for all $0 < i < n$, then there is a $k \in \Nat$ and $C_0, \ldots, C_k, C_0', \ldots, C_k'$ such that $C_0 \SeqProcMuarchStep{}{} C_1 \SeqProcMuarchStep{}{} \ldots \SeqProcMuarchStep{}{} C_k$, $C_0' \SeqProcMuarchStep{}{} C_1' \SeqProcMuarchStep{}{} \ldots \SeqProcMuarchStep{}{} C_k'$, and one of the following conditions hold:
	\begin{compactenum}
		\item $C_0, C_0'$ are initial states, $\forall 0 \leq i \leq k.\ C_{i} \approx C_{i}'$, and $C_, C_k'$ are final states, or
		\item $C_0, C_0'$ are initial states, $\forall 0 \leq i \leq k.\ C_{i} \approx C_{i}'$, and there are no $C_{k+1}$ such that $C_{k+1} \neq C_k \wedge C_k \SeqProcMuarchStep{}{} C_{k+1}$ and no $C_{k+1}'$ such that $C_{k+1}' \neq C_k' \wedge C_k' \SeqProcMuarchStep{}{} C_{k+1}'$.
	\end{compactenum}
\end{lemma}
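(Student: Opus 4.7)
The plan is to mirror the structure of the proof of Lemma~\ref{lemma:vanilla:main-lemma}, but using the components already developed for $\muarchStyle{seq}$: the $\crun{-}\hrun$ mapping of Definition~\ref{def:seq:mapping}, its correctness (Lemma~\ref{lemma:seq-processor:mapping-is-correct}), the step-wise indistinguishability preservation (Lemma~\ref{lemma:seq-processor:trace-equiv-implies-stepwise-indistinguishability}), and the implication from deep-indistinguishability to indistinguishability (Lemma~\ref{lemma:seq:deep-indistiguishability-implies-indistinguishability}). A key simplification with respect to the vanilla case is that the sequential scheduler removes all speculative windows and mispredicted branch handling, so no assumption on $\wInterf$ versus $\wMuarch$ is needed.

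First, I would construct, for each of the two initial hardware configurations $C_0$ and $C_0'$, the longest hardware runs $\hrun := C_0 \SeqProcMuarchStepCompact C_1 \SeqProcMuarchStepCompact \ldots \SeqProcMuarchStepCompact C_k$ and $\hrunp := C_0' \SeqProcMuarchStepCompact C_1' \SeqProcMuarchStepCompact \ldots \SeqProcMuarchStepCompact C_{k'}'$. I would then associate with them the contract-mapping functions $\map{\crun}{\hrun}{\cdot}$ and $\map{\crunp}{\hrunp}{\cdot}$ from Definition~\ref{def:seq:mapping}. The core claim is that for every index $i$ for which both runs are defined we have simultaneously $\map{\crun}{\hrun}{i} = \map{\crunp}{\hrunp}{i}$ and $C_i \sim C_i'$, from which indistinguishability $C_i \approx C_i'$ follows by Lemma~\ref{lemma:seq:deep-indistiguishability-implies-indistinguishability}. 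In addition, because deep-indistinguishability preserves the scheduler state, both runs agree on when a step is or is not possible, so they reach a final state or become stuck in the same step; this gives $k = k'$ and either case (1) or case (2) of the statement.

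I would prove the core claim by induction on $i$. The base case is immediate: $C_0$ and $C_0'$ have empty reorder buffers and identical microarchitectural components by construction, and $\map{\crun}{\hrun}{0} = \map{\crunp}{\hrunp}{0} = \{0 \mapsto 0\}$. For the inductive step, from $C_{i-1} \sim C_{i-1}'$ and $\map{\crun}{\hrun}{i-1} = \map{\crunp}{\hrunp}{i-1}$, I apply Lemma~\ref{lemma:seq-processor:mapping-is-correct} to both runs to obtain, for every prefix $\buf \in \prefixes{\buf_{i-1}}$ and every prefix $\buf' \in \prefixes{\buf_{i-1}'}$ with $|\buf| = |\buf'|$, equivalent contract states $\sigma_j, \sigma_j'$ satisfying $C_{i-1} \bufEquiv{|\buf|} \sigma_j$ and $C_{i-1}' \bufEquiv{|\buf'|} \sigma_j'$ where $j = \map{\crun}{\hrun}{i-1}(|\buf|)$. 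Because $\crun$ and $\crunp$ produce the same observation sequence (and where the runs have ended I use silent terminating steps), the next contract step from $\sigma_j$ and $\sigma_j'$ emits the same label. Lemma~\ref{lemma:seq-processor:trace-equiv-implies-stepwise-indistinguishability} then yields either $C_i \sim C_i'$ or that both computations are stuck. In the latter case both runs terminated at step $i-1$ and case (2) of the statement holds; in the former case the induction hypothesis is maintained. The equality $\map{\crun}{\hrun}{i} = \map{\crunp}{\hrunp}{i}$ follows because the mapping is computed from the scheduler's next directive and the data-independent buffer projection $\BufProject{\cdot}$, both of which coincide on indistinguishable configurations by Definition~\ref{def:seq-processor:indistinguishability} and Lemma~\ref{lemma:seq-processor:buffer-projections}.

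The main obstacle I expect is a small bookkeeping issue rather than a conceptual one: aligning the lengths of the two hardware runs so that they end at the same index. This requires arguing that a configuration $C$ is final or stuck if and only if any $C \sim C'$ is, which follows from the fact that $\sim$ preserves $\SchedState$, $\BufProject{\buf}$, and $\apply{\buf}{a}(\pc)$, each of which is all that the $\SeqProcMuarchStepCompact$ rules inspect when deciding whether a step is enabled. With this observation in hand the induction goes through and both cases of the conclusion follow directly.
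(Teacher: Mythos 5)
Your proposal is correct and follows essentially the same route as the paper's proof: the same induction on the step index using the mapping of Definition~\ref{def:seq:mapping}, Lemma~\ref{lemma:seq-processor:mapping-is-correct} to supply the prefix-equivalent contract states, Lemma~\ref{lemma:seq-processor:trace-equiv-implies-stepwise-indistinguishability} for the inductive step, and Lemma~\ref{lemma:seq:deep-indistiguishability-implies-indistinguishability} to pass from $\sim$ to $\approx$. Your explicit run-alignment argument (that $\sim$ determines whether a step is enabled) is the same observation the paper makes when noting that $C_k \sim C_k'$ forces both runs to be stuck or final together, and your remark that no bound relating $\wInterf$ and $\wMuarch$ is needed here is likewise consistent with the paper.
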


\begin{proof}
Let $p$ be a well-formed program, $\CacheState_0$ be the initial cache state, $\BpState_0$ be the initial branch predictor state, and $\SchedState_0$ be the initial scheduler state, $\sigma_0 = \tup{m,a}, \sigma_0' = \tup{m',a'}$ be initial \archstate{}s, and $C_0 = \tup{m,a, \emptysequence, \CacheState_0, \BpState_0, \SchedState_0}$ and $C_0' = \tup{m',a', \emptysequence, \CacheState_0, \BpState_0, \SchedState_0}$ be hardware configurations.
	Furthermore, let $\crun := \sigma_0$ $\CtSeqInterfStep{o_1}{}$ $\sigma_1$ $\CtSeqInterfStep{o_2}{}$ $\ldots$  $\CtSeqInterfStep{o_{n-1}}{}$  $\sigma_n$ and $\crunp:=\sigma_0'$ $\CtSeqInterfStep{o_1'}{}$  $\sigma_1' $ $\CtSeqInterfStep{o_2'}{}$ $\ldots$  $\CtSeqInterfStep{o_{n-1}'}{} $ $\sigma_n$ be two runs for the $\CtSeqInterf{\cdot}$ contract where $\sigma_n, \sigma_{n}$ are final \archstate{}s.
	Finally, we assume that $o_i = o_i'$ for all $0 < i < n$.

	Consider the two hardware runs $\hrun, \hrunp$ obtained as follows: we start from $C_0= \tup{m_0,a_0, \emptysequence, \CacheState_0, \BpState_0, \SchedState_0}$ and $C_0'= \tup{m_0',a_0', \emptysequence, \CacheState_0, \BpState_0, \SchedState_0}$, and we apply one step of $\SeqProcMuarchStep{}{}$ until either $\hrun$ or $\hrunp$ reaches a final state or gets stuck.
	That is, for some $k \in \Nat$, we obtain the following runs:
	\begin{align*}
		C_0 &:=  \tup{m_0,a_0, \emptysequence, \CacheState_0, \BpState_0, \SchedState_0}\\
		C_0' &:= \tup{m_0',a_0', \emptysequence, \CacheState_0, \BpState_0, \SchedState_0}\\
		\hrun&:=C_0 \SeqProcMuarchStep{}{} C_1 \SeqProcMuarchStep{}{} \ldots  \SeqProcMuarchStep{}{} C_k \\
		\hrunp&:=C_0' \SeqProcMuarchStep{}{} C_1' \SeqProcMuarchStep{}{} \ldots  \SeqProcMuarchStep{}{} C_k' 
	\end{align*}
	Let $\map{\crun}{\hrun}{\cdot}$ and $\map{\crunp}{\hrunp}{\cdot}$ be the maps constructed according to Definition~\ref{def:seq:mapping}.
	We claim that that for all $0 \leq i \leq k$, $\map{\crun}{\hrun}{i} = \map{\crunp}{\hrunp}{i}$ and $ C_i \sim C_{i}'$ which implies $C_i \approx C_i'$ (see Lemma~\ref{lemma:seq:deep-indistiguishability-implies-indistinguishability}).
	Moreover, if $C_k$ is a stuck configuration (i.e., there is no $C'$ such that $C_k \SeqProcMuarchStep{}{} C'$) then so is $C_k'$ from $C_{k} \sim C_{k}'$.
	This concludes the proof of our lemma.
	
	We now prove our claim.
	That is, we show, by induction on $i$, that for all $0 \leq i \leq k$, $\map{\crun}{\hrun}{i} = \map{\crun'}{\hrun'}{i}$ and $ C_i \sim C_{i}'$.
	\begin{description}
		\item[Base case:]
		Then, $i = 0$.
		Therefore, $\map{\crun}{\hrun}{0} = \map{\crun'}{\hrun'}{0} = \{ 0 \mapsto 0\}$ by construction.
		Moreover, $C_{0} \sim C_{0}'$ immediately follows from $C_0 =  \tup{m_0,a_0, \emptysequence, \CacheState_0, \BpState_0, \SchedState_0}$ and $
		C_0'= \tup{m_0',a_0', \emptysequence, \CacheState_0, \BpState_0, \SchedState_0}$.
	
		\item[Induction step:]
		For the induction step, we assume that our claim holds for all $i' < i$, and we show that it holds for $i$ as well.
		From the induction hypothesis, we get that $\map{\crun}{\hrun}{i-1} = \map{\crunp}{\hrunp}{i-1}$ and $C_{i-1} \sim C_{i-1}'$.
		In the following, we denote $\map{\crun}{\hrun}{i-1} = \map{\crunp}{\hrunp}{i-1}$ as (IH.1) and $C_{i-1} \sim C_{i-1}'$ as (IH.2). 
		Moreover, from Lemma~\ref{lemma:seq-processor:mapping-is-correct} applied to $\map{\crun}{\hrun}{i-1} = \map{\crunp}{\hrunp}{i-1}$, we have that for all $\buf \in \prefixes{\buf_{i-1}}$ and all $\buf' \in \prefixes{\buf_{i-1}'}$, $C_{i-1} \bufEquiv{i} \crun( \map{\crun}{\hrun}{i-1}(|\buf|) ) $ and $C_{i-1}' \bufEquiv{i} \crunp( \map{\crunp}{\hrunp}{i-1}(|\buf'|) ) $.
		Let $\buf \in \prefixes{\buf_{i-1}}$ and all $\buf' \in \prefixes{\buf_{i-1}'}$ be arbitrary buffers such that $|\buf| = |\buf'|$.
		From (IH.1) and $|\buf| = |\buf'|$, we get that $\map{\crun}{\hrun}{i-1}(|\buf|)= \map{\crunp}{\hrunp}{i-1}(|\buf'|)$.
		Therefore,  $\crun( \map{\crun}{\hrun}{i-1}(|\buf|) )$ and $\crunp(\map{\crunp}{\hrunp}{i-1}(|\buf'|))$ are two configurations $\sigma_j, \sigma_j'$ for some $j = \map{\crun}{\hrun}{i-1}(|\buf|)$.
		From $\crun, \crunp$ having pairwise the same observations, we have $\sigma_{j} \CtSeqInterfStep{o_{j+1}}{} \sigma_{j+1}$, $\sigma_{j}' \CtSeqInterfStep{o_{j+1}'}{} \sigma_{j+1}'$, and $o_{j+1} = o_{j+1}'$ (note that we can always make a step in $\sigma_{j},\sigma_{j}'$ since if $j = k$ we can make silent steps thanks to the \textsc{Terminate} rule).
		Since $\buf,\buf'$ have been selected arbitrarily, we know that for all $\buf \in \prefixes{\buf_{i-1}}$ and all $\buf' \in \prefixes{\buf_{i-1}'}$ such that $|\buf| = |\buf'|$, we have $\sigma_{j} \CtSeqInterfStep{o_{j+1}}{} \sigma_{j+1}$, $\sigma_{j}' \CtSeqInterfStep{o_{j+1}'}{} \sigma_{j+1}'$, and $o_{j+1} = o_{j+1}'$ where $j = \map{\crun}{\hrun}{i-1}(|\buf|)$.
		From this and (IH.2), we can apply Lemma~\ref{lemma:seq-processor:trace-equiv-implies-stepwise-indistinguishability} to $C_{i-1}$ and $C_{i-1}'$.
		As a result, we obtain that either $C_{i} \sim C_{i}$ or that both computations are stuck.
		Moreover, $\map{\crun}{\hrun}{i} = \map{\crunp}{\hrunp}{i}$ follows from the fact that the maps at step $i$ are derived from the maps at step $i-1$, which are equivalent thanks to (IH.1), depending on the content of the projection of the buffer (which is the same thanks to $C_{i} \sim C_{i}$).
		This completes the proof of the induction step.
	\end{description}
\end{proof}

\newpage
\section{Eager load-delay processor $\muarchStyle{loadDelay}$ -- Proofs of Theorems~\ref{theorem:hni:load-delayone} and~\ref{theorem:hni:load-delaytwo}}\label{appendix:proofs:load-delay}

In this section, we prove the security guarantees of the eager load-delay processor $\muarchStyle{loadDelay}$ given in \S\ref{sec:countermeasures:load-delay}.

In the following, we assume given an arbitrary cache $C$, branch predictor $Bp$, and scheduler $S$, whose initial states are, respectively, $\CacheState_0$, $\BpState_0$, and $\SchedState_0$.
Hence, our proof holds for arbitrary caches $C$, branch predictors $Bp$, and scheduler $S$.

For simplicity, here we report the rules defining the $\muarchStyle{loadDelay}$ processor:
\begin{mathpar}
    \inferrule[Step-Others]
    {
        \tup{m,a,\buf, \CacheState,\BpState} \muarchStep{d}{} \tup{m',a',\buf', \CacheState',\BpState'}	\\
        d = \SchedNext(\SchedState)\\
        \SchedState' = \SchedUpdate(\SchedState,\BufProject{\buf'})\\\\
        { d \in \{\fetch{}, \retire\} \vee (d = \execute{i} \wedge \elt{\buf}{i} \neq \pload{x}{e}) }
    }
    {
        \tup{m,a,\buf, \CacheState,\BpState, \SchedState} \LoadDelayMuarchStep{}{} \tup{m',a',\buf', \CacheState',\BpState', \SchedState'}
    }
\end{mathpar}
\begin{mathpar}
    \inferrule[Step-Eager-Delay]
    {
        \tup{m,a,\buf, \CacheState,\BpState} \muarchStep{d}{} \tup{m',a',\buf', \CacheState',\BpState'}	\\
        d = \SchedNext(\SchedState)\\
        \SchedState' = \SchedUpdate(\SchedState,\BufProject{\buf'})\\
        {d = \execute{i}}\\
        {\elt{\buf}{i} = \pload{x}{e}}\\
        {\forall \tagged{\passign{\pc}{\lbl}}{\lbl'} \in \buf[0..i-1].\  \lbl' = \notags}
    }
    {
        \tup{m,a,\buf, \CacheState,\BpState, \SchedState } \LoadDelayMuarchStep{}{} \tup{m',a',\buf', \CacheState',\BpState', \SchedState'}
    }
\end{mathpar}

The semantics $\LoadDelayMuarchSem{p}$ for a program $p$ is defined as follows:
$\LoadDelayMuarchSem{p}(\tup{m,a})$ is $\tup{\emptysequence, \CacheState_0,\BpState_0, \SchedState_0}$ $ \cdot \tup{\BufProject{\buf_1}, \CacheState_1,\BpState_1, \SchedState_1}$ $  \cdot \tup{\BufProject{\buf_2}, \CacheState_2,\BpState_2, \SchedState_2}$ $  \cdot \ldots \cdot $ $ \tup{\BufProject{\buf_n}, \CacheState_n,\BpState_n, \SchedState_n}$ where $	\tup{m,a,\emptysequence, \CacheState_0,\BpState_0, \SchedState_0 }$ $ \LoadDelayMuarchStep{}{}$ $ \tup{m_1,a_1,\buf_1, \CacheState_1,\BpState_1, \SchedState_1} $ $ \LoadDelayMuarchStep{}{} $ $\tup{m_2,a_2,\buf_2, \CacheState_2,\BpState_2, \SchedState_2} $ $\LoadDelayMuarchStep{}{}$ $\ldots$ $\LoadDelayMuarchStep{}{} $ $ \tup{m_n,a_n,\buf_n, \CacheState_n,\BpState_n, \SchedState_n}$ is the complete hardware run obtained starting from $\tup{m,a}$ and terminating in $\tup{m_n,a_n,\buf_n, \CacheState_n,\BpState_n, \SchedState_n}$, which is a final hardware state. Otherwise, $\LoadDelayMuarchSem{p}(\tup{m,a})$ is undefined.

\subsection{Proof of Theorem~\ref{theorem:hni:load-delayone}}

\textbf{ASSUMPTION: } In the following, we assume that $\wInterf > \wMuarch + 1$.

\loadDelayOne*

\begin{proof}
    The proof of the theorem is similar to the one of Theorem~\ref{theorem:hni:all} (see Appendix~\ref{appendix:proofs:general}):
    \begin{compactitem}
        \item For the mapping lemma, we use the mapping function from Definition~\ref{def:vanilla:mapping} and the proof proceeds as in Lemma~\ref{lemma:vanilla:mapping-is-correct}.
        \item For the indistinguishability lemma, we use Lemma~\ref{lemma:load-delay:spec-ct:trace-equiv-implies-stepwise-indistinguishability}. 
        \item The main lemma proceeds in a similar way to Lemma~\ref{lemma:vanilla:main-lemma}.
    \end{compactitem}
\end{proof}

\subsubsection{Indistinguishability lemma}

\begin{definition}[Deep-indistinguishability of hardware configurations]\label{def:load-delay:spec-ct:deep-indistinguishability}
	We say that two hardware configurations $\tup{\sigma,\mu} = \tup{m,a,\buf, \CacheState,\BpState, \SchedState}$ and $\tup{\sigma',\mu'} = \tup{m',a',\buf', \CacheState',\BpState', \SchedState'}$ are \emph{deep-indistinguishable}, written $\tup{\sigma,\mu} \sim \tup{\sigma',\mu'}$, iff
	\begin{inparaenum}[(a)]
		\item $\apply{\buf}{a}(\pc) = \apply{\buf'}{a'}(\pc)$,
		\item $\DeepProject{\buf} = \DeepProject{\buf'}$,
		\item $\CacheState = \CacheState'$,
		\item $\BpState = \BpState'$, and
		\item $\SchedState = \SchedState'$,
		\end{inparaenum}
		where $\DeepProject{\buf}$ is inductively defined as follows:
		\begin{align*}
			\DeepProject{\tagged{\pskip{}}{T}}_k  &:=  \tagged{\pskip{}}{T}\\
			\DeepProject{\tagged{\pbarrier{}}{T}}_k  &:=  \tagged{\pbarrier{}}{T}\\
			\DeepProject{\tagged{\passign{x}{e}}{T}}_k  &:=
			{
				\begin{cases}
				\tagged{\passign{x}{\resolved}}{T} & \text{if}\ e \in \Val \wedge x \neq \pc\\
				\tagged{\passign{x}{e}}{T} & \text{if}\ e \not\in \Val \wedge x \neq \pc\\
				\tagged{\passign{x}{e}}{T} & \text{if}\ x = \pc
				\end{cases}
			}\\
			\DeepProject{\tagged{\pload{x}{e}}{T}}_k  &:=
			{
				\tagged{\pload{x}{e}}{T}
			}\\
			\DeepProject{\tagged{\pstore{x}{e}}{T}}_k  &:=
			{
				\begin{cases}
					\tagged{\pstore{x}{e}}{T} & \text{if}\ x \in \Var\\
					\tagged{\pstore{\resolved}{\resolved}}{T} & \text{if}\ x \in \Val \wedge k = 1 \\
					\tagged{\pstore{\resolved}{e}}{T} & \text{if}\ x \in \Val \wedge k = 0 
				\end{cases}
			}\\
			\DeepProject{\emptysequence} 	&:= \emptysequence\\
			\DeepProject{\tagged{i}{T} \concat \buf} &:= 
			{
				\begin{cases}
					\DeepProject{\tagged{i}{T}}_0 \concat \DeepProject{\buf}_1 & \text{if } i = \passign{\pc}{\ell'} \wedge T \neq \notags  \\
					\DeepProject{\tagged{i}{T}}_0 \concat \DeepProject{\buf} & \text{otherwise} 			
				\end{cases}
			}
		\end{align*}
\end{definition}

\begin{lemma}[Observation equivalence preserves deep-indistinguishability]\label{lemma:load-delay:spec-ct:trace-equiv-implies-stepwise-indistinguishability}
    Let $p$ be a well-formed program and $C_0 = \tup{m_0,a_0,\buf_0,\CacheState_0, \BpState_0, \SchedState_0}$, $C_0' = \tup{m_0',a_0',\buf_0',\CacheState_0', \BpState_0', \SchedState_0'}$ be reachable hardware configurations.
    If
    \begin{inparaenum}[(a)]
        \item $C_0 \sim C_0'$, and
        \item for all $\buf \in \prefixes{\buf_0}$, $\buf' \in \prefixes{\buf_0'}$ such that $|\buf| = |\buf'|$, 
        there are $s_0, s_0', s_1, s_1', \tau, \tau'$ such that $s_0 \CtPcSpecInterfStep{\tau}{} s_1$, $s_0' \CtPcSpecInterfStep{\tau'}{} s_1'$, $\headWindow{s_0} > 0$, $\headWindow{s_0'} >0$,  $\tau = \tau'$, $C_0 \bufEquiv{|\buf|} s_0$, and $C_0' \bufEquiv{|\buf'|} s_0'$,
    \end{inparaenum}
    then either there are $C_1, C_1'$ such that $C_0 \LoadDelayMuarchStep{}{} C_1$, $C_0' \LoadDelayMuarchStep{}{} C_1'$, and $C_1 \sim C_1'$ or there is no $C_1$ such that $C_0 \LoadDelayMuarchStep{}{} C_1$ and no $C_1'$ such that $C_0' \LoadDelayMuarchStep{}{} C_1'$.
\end{lemma}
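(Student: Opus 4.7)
The proof follows the template of Lemma~\ref{lemma:vanilla:trace-equiv-implies-stepwise-indistinguishability}, proceeding by case analysis on the common directive $d = \SchedNext(\SchedState_0) = \SchedNext(\SchedState_0')$ (equality from (a)). For $d \in \{\fetch{}, \retire\}$, and for $d = \execute{i}$ with $\elt{\buf_0}{i}$ not a load, rule \textsc{Step-Others} applies and $\LoadDelayMuarchStep{}{}$ coincides with $\muarchStep{}{}$; the contract observations produced in these cases ($\pcObs{\ell}$, $\storeObs{n}$, and control-flow labels) agree between $\CtPcSpecInterf{\cdot}$ and $\CtSpecInterf{\cdot}$ regardless of the speculative window, so the step-by-step arguments of the vanilla proof (including the branch-commit and branch-rollback subcases) carry over essentially verbatim. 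Deep-indistinguishability is preserved in the usual way, invoking Lemma~\ref{lemma:vanilla:buffer-projections} (which applies with the load-delay definition of $\DeepProject{\cdot}$) to lift equality of $\DeepProject{\buf}$ to equality of $\BufProject{\buf}$, hence to equality of updated scheduler states.

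The only genuinely new case is $d = \execute{i}$ with $\elt{\buf_0}{i} = \pload{x}{e}$. Because the load-delay $\DeepProject{\cdot}$ preserves branch tags on $\tagged{\passign{\pc}{e}}{T}$ entries, $C_0 \sim C_0'$ implies that the eager-delay guard $\forall \tagged{\passign{\pc}{\lbl}}{\lbl_0} \in \buf_0[0..i-1].\ \lbl_0 = \notags$ holds in $\buf_0$ iff it holds in $\buf_0'$. If the guard fails, both configurations are stuck under $\LoadDelayMuarchStep{}{}$ and the conclusion is immediate. Otherwise both steps must fire via \textsc{Step-Eager-Delay}, and the task reduces to proving $\exprEval{e}{\apply{\buf_0[0..i-1]}{a_0}} = \exprEval{e}{\apply{\buf_0'[0..i-1]}{a_0'}}$; from there the cache-hit and cache-miss subcases are treated exactly as in the vanilla proof (using $\CacheState_0 = \CacheState_0'$ from (a), and noting that the newly produced assignment command $\tagged{\passign{x}{m_0(n)}}{T}$ projects identically in both runs because $x \neq \pc$).

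The critical bridging step, and the main obstacle, is establishing the equality of load addresses despite $\CtPcSpecInterf{\cdot}$ hiding $\loadObs{\cdot}$ on speculative paths. The idea is that the guard $\lbl_0 = \notags$ on every branch in $\buf_0[0..i-1]$ forces $\nrMispred{\tup{m_0,a_0}}{\buf_0[0..i-1]} = 0$, since the $\nrMispred$ recurrence counts only branches whose tag is \emph{not} $\notags$. Applying (b) to the prefix of length $i-1$ (which lies in $\prefixes{\buf_0}$ by well-formedness of reachable buffers, cf.\ Lemma~\ref{lemma:vanilla:buffers-well-formedness}) yields contract states $s_0, s_0'$ with $C_0 \bufEquiv{i-1} s_0$ and $C_0' \bufEquiv{i-1} s_0'$; the speculative-window clause of $\bufEquiv{\cdot}$ then forces $\headWindow{s_0} = \headWindow{s_0'} = \infty$. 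Under $\CtPcSpecInterf{\cdot}$'s \textsc{Step} rule, $\omega = \infty$ selects the $\tau' = \tau$ branch, so the load observation $\loadObs{\exprEval{e}{\sigma_0}}$ is exposed exactly as in $\CtSpecInterf{\cdot}$. The hypothesis $\tau = \tau'$ then gives $\exprEval{e}{\sigma_0} = \exprEval{e}{\sigma_0'}$, and transporting this equality through $\bufEquiv{i-1}$ delivers the required address equality.

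This alignment, namely that $\muarchStyle{loadDelay}$ executes a load only in a state that the contract treats as non-speculative, is precisely what makes the theorem true and is the only place where the argument meaningfully departs from Lemma~\ref{lemma:vanilla:trace-equiv-implies-stepwise-indistinguishability}. The remaining bookkeeping (verifying the five clauses of Definition~\ref{def:load-delay:spec-ct:deep-indistinguishability} on the successor configurations in each subcase) is mechanical and mirrors the corresponding steps in the vanilla proof.
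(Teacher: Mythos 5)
Your treatment of the load case is exactly the paper's: the eager-delay guard forces $\nrMispred{\tup{m_0,a_0}}{\buf_0[0..i-1]} = 0$, the window clause of $\bufEquiv{\cdot}$ then gives $\headWindow{s_0} = \headWindow{s_0'} = \infty$, and with an infinite window the \textsc{Step} rule of $\CtPcSpecInterf{\cdot}$ exposes the $\loadObs{}$ observation, so $\tau = \tau'$ yields equality of the load addresses. That is the heart of the lemma and you have it right.

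There is, however, a genuine gap in your claim that the remaining cases carry over essentially verbatim. You assert that $\storeObs{n}$ observations agree between $\CtPcSpecInterf{\cdot}$ and $\CtSpecInterf{\cdot}$ \emph{regardless of the speculative window}, but the modified \textsc{Step} rule of $\CtPcSpecInterf{\cdot}$ replaces $\tau$ by $\emptysequence$ whenever $\omega \neq \infty$ and $\tau$ is a load \emph{or store} observation. Consequently, in the \textsc{Execute-Store} case the vanilla argument for $n = n'$ (which reads the store address off the $\storeObs{}$ label) is unavailable when the store sits under an unresolved branch, and the proof as you describe it fails there. The paper's fix is to weaken deep-indistinguishability itself: the load-delay $\DeepProject{\cdot}$ carries a flag that projects a resolved store occurring \emph{after} a tagged branch to $\pstore{\resolved}{\resolved}$, so its address need not agree across the two runs; the store case then splits on whether any preceding branch entry is tagged, recovering $n = n'$ from the contract observation only in the non-speculative subcase. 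This in turn forces extra work in \textsc{Execute-Branch-Commit}: committing a branch strips its tag, which unveils the addresses of downstream resolved stores in $\DeepProject{\cdot}$, and one must argue (using that no other unresolved branch remains before such a store) that these stores are in fact non-speculative, so the $\storeObs{}$ observation is exposed after all and the newly revealed addresses coincide. Without the modified projection and these two additional arguments, the clauses of deep-indistinguishability cannot be verified in the store and branch-commit cases.
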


\begin{proof}
    Let $p$ be a well-formed program, $C_0 = \tup{m_0,a_0,\buf_0,\CacheState_0, \BpState_0, \SchedState_0}$, and $C_0' = \tup{m_0',a_0',\buf_0',\CacheState_0', \BpState_0', \SchedState_0'}$.
    Moreover, we assume that conditions (a) and (b) holds. 
    In the following, we denote by (c) the post-condition ``either there are $C_1, C_1'$ such that $C_0 \LoadDelayMuarchStep{}{} C_1$, $C_0' \LoadDelayMuarchStep{}{} C_1'$, and $C_1 \sim C_1'$ or there is no $C_1$ such that $C_0 \LoadDelayMuarchStep{}{} C_1$ and no $C_1'$ such that $C_0' \LoadDelayMuarchStep{}{} C_1'$.''
    
    From (a), it follows that $\SchedState_0 = \SchedState_0$.
    Therefore, the directive obtained from the scheduler is the same in both cases, i.e., $\SchedNext(\SchedState_0) = \SchedNext(\SchedState_0')$.
    We proceed by case distinction on the directive $d = \SchedNext(\SchedState_0)$:
    \begin{description}
    \item[$d = \fetch{}$:]
    The proof of this case is similar to the corresponding case of Lemma~\ref{lemma:vanilla:trace-equiv-implies-stepwise-indistinguishability}.
            
    \item[$d = \execute{i}$:]
    Therefore, we can only apply one of the $\execute{}$ rules.
    There are two cases:
    \begin{description}
        \item[$i \leq |\buf_0| \wedge \pbarrier \not\in {\buf_0[0..i-1]}$:]
        There are several cases depending on the $i$-th command in the reorder buffer:
        \begin{description}
            \item[$\elt{\buf_0}{i} = \tagged{\pload{x}{e}}{T}$:]
            From (a), we also have that $\elt{\buf_0'}{i} =  \tagged{\pload{x}{e}}{T}$, $i \leq |\buf_0'|$, and $\pbarrier \not\in \buf_0'[0..i-1]$. 
            There are two cases:
            \begin{description}
                \item[$\pstore{x'}{e'} \not\in \buf_0{[0..i-1]} \wedge  \forall \tagged{\passign{\pc}{\lbl}}{\lbl'} \in \buf_0{[0..i-1]}.\  \lbl' = \notags$:]
                We now show that $\exprEval{e}{\apply{\buf_0[0..i-1]}{a_0}} = \exprEval{e}{\apply{\buf_0'[0..i-1]}{a_0'}}$.
                Since $C_0, C_0'$ are reachable configurations, the buffers $\buf_0, \buf_0'$ are well-formed, and therefore  $\buf_0[0..i-1] \in \prefixes{\buf_0}$ and  $\buf_0'[0..i-1] \in \prefixes{\buf_0'}$.
                From (b), therefore, there are configurations $s_0, s_0', s_1, s_1'$ such that $C_0 \bufEquiv{|\buf_0[0..i-1]|} s_0$, $C_0' \bufEquiv{|\buf_0[0..i-1]|} s_0'$, $s_0 \CtPcSpecInterfStep{\tau}{} s_1$,  $s_0' \CtPcSpecInterfStep{\tau'}{} s_1'$, $\headWindow{s_0} > 0$, $\headWindow{s_0'} >0$, and $\tau = \tau'$. 
                From (a), $C_0 \bufEquiv{|\buf_0[0..i-1]|} s_0$, $C_0' \bufEquiv{|\buf_0[0..i-1]|} s_0'$, and the well-formedness of the buffers, we know that $p(\headConf{s_0}(\pc)) = p(\headConf{s_0'}(\pc)) = \pload{x}{e}$.
                Moreover, from $\forall \tagged{\passign{\pc}{\lbl}}{\lbl'} \in \buf_0{[0..i-1]}.\  \lbl' = \notags$, we know that $\nrMispred{\tup{m_{0},a_{0}}{\buf_0{[0..i-1]}}} = 0$.
                From this and $C_0 \bufEquiv{|\buf_0[0..i-1]|} s_0$, we have that $\headWindow{s_0} = \infty$.
                Similarly, we obtain that $\headWindow{s_0'} = \infty$.
                From $\CtPcSpecInterf{\cdot}$, $\headWindow{s_0} = \infty$, $\headWindow{s_0'} =\infty$, and $p(\headConf{s_0}(\pc)) = p(\headConf{s_0'}(\pc)) = \pload{x}{e}$, we have that $\tau = \loadObs{ \exprEval{e}{\headConf{s_0} } }$  and $\tau' = \loadObs{ \exprEval{e}{ \headConf{s_0'} } }$ (because $s_0 \CtPcSpecInterfStep{\tau}{} s_1$ and $s_0' \CtPcSpecInterfStep{\tau'}{} s_1'$ have been obtained by applying the \textsc{Step} rule of $\CtPcSpecInterfStep{}{}$ and the \textsc{Load} rule of $\CtPcSpecInterfStep{}{}$).
                From $\tau=\tau'$, we get that $\exprEval{e}{\headConf{s_0}} = \exprEval{e}{\headConf{s_0'}}$.
                From this, $C_0 \bufEquiv{|\buf_0[0..i-1]|} s_0$, and $C_0' \bufEquiv{|\buf_0'[0..i-1]|} s_0'$, we finally get $\exprEval{e}{\apply{\buf_0{[0..i-1]}}{a_0}} = \exprEval{e}{\apply{\buf_0'{[0..i-1]}}{a_0'}(x)}$.
    
                Let $n = \exprEval{e}{\apply{\buf_0{[0..i-1]}}{a_0}} = \exprEval{e}{\apply{\buf_0'{[0..i-1]}}{a_0'}(x)}$.
                There are two cases:
                \begin{description}
                    \item[$\CacheAccess(\CacheState_0, \exprEval{e}{ \apply{\buf_0{[0..i-1]}}{a_0} }) = \CacheHit$:]
                    From (a), we have that $\CacheState_0 = \CacheState_0'$.
                    Moreover, we have already shown that $\exprEval{e}{ \apply{\buf_0{[0..i-1]}}{a_0} } = \exprEval{e}{ \apply{\buf_0'{[0..i-1]}}{a_0'} }$.
                    Therefore, we can apply the \textsc{Execute-Load-Hit} and \textsc{Step} rules to $C_0$ and $C_0'$ as follows:
                    \begin{align*}
                        \buf_0 &:= \buf_0[0..i-1] \concat \tagged{\pload{x}{e}}{T} \concat \buf_0[i+1 .. |\buf_0|]\\
                        \buf_1 &:= \buf_0[0..i-1] \concat \tagged{\passign{x}{m_0(n)}}{T} \concat \buf_0[i+1 .. |\buf_0|]\\
                        \tup{m_0,a_0,\buf_0, \CacheState_0, \BpState_0} &\muarchStep{\execute{i}}{} \tup{m_0,a_0,\buf_1, \CacheUpdate(\CacheState_0,n), \BpState_0}\\
                        \tup{m_0,a_0,\buf_0, \CacheState_0,  \BpState_0 , \SchedState_0} & \LoadDelayMuarchStep{}{} \tup{m_0,a_0,\buf_1, \CacheUpdate(\CacheState_0,n), \BpState_0,\SchedUpdate(\SchedState_0, \BufProject{\buf_0})}\\
                        \buf_0' &:= \buf_0'[0..i-1] \concat \tagged{\pload{x}{e}}{T} \concat \buf_0'[i+1 .. |\buf_0|]\\
                        \buf_1' &:= \buf_0'[0..i-1] \concat \tagged{\passign{x}{m_0'(n)}}{T} \concat \buf_0'[i+1 .. |\buf_0|]\\
                        \tup{m_0',a_0',\buf_0', \CacheState_0', \BpState_0'} &\muarchStep{\execute{i}}{} \tup{m_0',a_0',\buf_1', \CacheUpdate(\CacheState_0',n), \BpState_0'}\\
                        \tup{m_0',a_0',\buf_0', \CacheState_0', \BpState_0', \SchedState_0'} &\LoadDelayMuarchStep{}{} \tup{m_0',a_0',\buf_1', \CacheUpdate(\CacheState_0',n), \BpState_0',\SchedUpdate(\SchedState_0', \BufProject{\buf_0'})}
                    \end{align*}
                    We now show that $C_1 = \tup{m_0,a_0,\buf_1, \CacheUpdate(\CacheState_0,n), \BpState_0,\SchedUpdate(\SchedState_0, \BufProject{\buf_0})}$ and $C_1' = \tup{m_0',a_0',\buf_1', \CacheUpdate(\CacheState_0',n), \BpState_0',\SchedUpdate(\SchedState_0', \BufProject{\buf_0'})}$ are indistinguishable, i.e., i.e., $C_1 \sim C_1'$.
        
                    For this, we need to show that:
                        \begin{description}
                            \item[$\apply{\buf_1}{a_0}(\pc) = \apply{\buf_1'}{a_0'}(\pc)$:]
                            This immediately follows from (a) and the fact that \textbf{load}s do not modify $\pc$.
                
                            \item[$\DeepProject{\buf_1} = \DeepProject{\buf_1'}$:] 
                            This immediately follows from (a) and $x \neq \pc$ (from the well-formedness of the buffers).

                            \item[$\CacheUpdate(\CacheState_0,n) = \CacheUpdate(\CacheState_0',n)$:]
                            This follows from $\CacheState_0 = \CacheState_0'$, which follows from (a).
                
                            \item[$\BpState_0= \BpState_0':$] 
                            This follows from  (a).
                
                            \item[$\SchedUpdate(\SchedState_0, \BufProject{\buf_1}) = \SchedUpdate(\SchedState_0', \BufProject{\buf_1'})$:]
                            From (a), we have   $\SchedState_0 = \SchedState_0'$.
                            From $\DeepProject{\buf_1} = \DeepProject{\buf_1'}$ and Lemma~\ref{lemma:vanilla:buffer-projections}, we have $\BufProject{\buf_1} = \BufProject{\buf_1'}$.
                            Therefore, $\SchedUpdate(\SchedState_0, \BufProject{\buf_1}) = \SchedUpdate(\SchedState_0', \BufProject{\buf_1'})$.
                        \end{description}
                        Therefore, $C_1 \sim C_1'$ and (c) holds.
    
                    \item[$\CacheAccess(\CacheState_0, \exprEval{e}{\apply{\buf_0{[0..i-1]}}{a_0}}) = \CacheMiss$:]
                    The proof of this case is similar to the one for the $\CacheHit$ case (except that we apply the \textsc{Execute-Load-Miss} rule). 
                \end{description}
    
                \item[$\pstore{x'}{e'} \in \buf_0{[0..i-1]} \vee \forall \tagged{\passign{\pc}{\lbl}}{\lbl'} \in \buf_0{[0..i-1]}.\  \lbl' = \notags$:]
                From (a), we also have that $\pstore{x'}{e'} \in \buf_0'[0..i-1] \vee \forall \tagged{\passign{\pc}{\lbl}}{\lbl'} \in \buf_0'{[0..i-1]}.\  \lbl' = \notags$.
                Therefore, both computations are stuck and (c) holds.
    
            \end{description}
    
            \item[$\elt{\buf_0}{i} =  \tagged{\passign{\pc}{\lbl}}{\lbl_0} \wedge \ell_0 \neq \emptysequence$:]
            From (a), we also have that $\elt{\buf_0'}{i} =  \tagged{\passign{\pc}{\lbl}}{\lbl_0} \wedge \ell_0 \neq \emptysequence$, $i \leq |\buf_0'|$, and $\pbarrier \not\in \buf_0'[0..i-1]$. 
            Observe that $p(\lbl_0) = \pjz{x}{\lbl''}$.
    
            We now show that $\apply{\buf_0[0..i-1]}{a_0}(x) = \apply{\buf_0'[0..i-1]}{a_0'}(x)$.
            Since $C_0, C_0'$ are reachable configurations, the buffers $\buf_0, \buf_0'$ are well-formed (see Lemma~\ref{lemma:vanilla:buffers-well-formedness}), and therefore  $\buf_0[0..i-1] \in \prefixes{\buf_0}$ and  $\buf_0'[0..i-1] \in \prefixes{\buf_0'}$.
            From (b), therefore, there are configurations $s_0, s_0', s_1, s_1'$ such that $C_0 \bufEquiv{|\buf_0[0..i-1]|} s_0$, $C_0' \bufEquiv{|\buf_0[0..i-1]|} s_0'$, $s_0 \CtPcSpecInterfStep{\tau}{} s_1$,  $s_0' \CtPcSpecInterfStep{\tau'}{} s_1'$, $\headWindow{s_0} > 0$, $\headWindow{s_0'} >0$, and $\tau = \tau'$. 
            From (a), $C_0 \bufEquiv{|\buf_0[0..i-1]|} s_0$, $C_0' \bufEquiv{|\buf_0[0..i-1]|} s_0'$, and the well-formedness of the buffers, we know that $p(\headConf{s_0}(\pc)) = p(\headConf{s_0'}(\pc)) = \pjz{x}{\lbl''}$.
            From $\CtPcSpecInterf{\cdot}$, $\headWindow{s_0} > 0$, $\headWindow{s_0'} >0$, and  $p(\headConf{s_0}(\pc)) = p(\headConf{s_0'}(\pc)) = \pjz{x}{\lbl''}$, we have that $(\tau = \pcObs{ \lbl'' } \leftrightarrow \headConf{s_0}(x) = 0) \wedge (\tau = \pcObs{ \sigma_0(\pc)+1 } \leftrightarrow \headConf{s_0}(x) \neq 0)$  and $(\tau' = \pcObs{ \lbl'' } \leftrightarrow \headConf{s_0'}(x) = 0) \wedge (\tau' = \pcObs{ \sigma_0(\pc)+1 } \leftrightarrow \headConf{s_0'}(x) \neq 0)$ (because $s_0 \CtPcSpecInterfStep{\tau}{} s_1$ and  $s_0' \CtPcSpecInterfStep{\tau'}{} s_1'$ have been obtained by applying the \textsc{Branch} rule of $\CtPcSpecInterfStep{}{}$).
            From $\tau=\tau'$, we get that $\headConf{s_0}(x) = \headConf{s_0'}(x)$.
            From this, $C_0 \bufEquiv{|\buf_0[0..i-1]|} s_0$, and $C_0' \bufEquiv{|\buf_0'[0..i-1]|} s_0'$, we finally get $\apply{\buf_0{[0..i-1]}}{a_0}(x) = \apply{\buf_0'{[0..i-1]}}{a_0'}(x)$.
    
            Given that $\apply{\buf_0[0..i-1]}{a_0}(x) = \apply{\buf_0'[0..i-1]}{a_0'}(x)$, there are two cases:
            \begin{description}
                \item[$( \apply{\buf_0[0..i-1]}{a_0}(x) = 0 \wedge \lbl = \lbl'') \vee (\apply{\buf_0[0..i-1]}{a_0}(x) \in \Val \setminus \{0,\bot\} \wedge \lbl = \ell_0+1)$:]
                From $\apply{\buf_0{[0..i-1]}}{a_0}(x) = \apply{\buf_0'{[0..i-1]}}{a_0'}(x)$ and (a), we also get $( \apply{\buf_0'[0..i-1]}{a_0'}(x) = 0 \wedge \lbl = \lbl'') \vee (\apply{\buf_0'[0..i-1]}{a_0'}(x) \in \Val \setminus \{0,\bot\} \wedge \lbl = \ell_0+1)$.
                Therefore,  we can apply the \textsc{Execute-Branch-Commit} and \textsc{Step} rules to $C_0$ and $C_0'$ as follows:
                \begin{align*}
                    \buf_0 &:= \buf_0[0..i-1] \concat \tagged{\passign{\pc}{\ell}}{\ell_0} \concat \buf_0[i+1 .. |\buf_0|]\\
                    \buf_1 &:= \buf_0[0..i-1] \concat \tagged{\passign{\pc}{\ell}}{\notags} \concat \buf_0[i+1 .. |\buf_0|]\\
                    \tup{m_0,a_0,\buf_0, \CacheState_0, \BpState_0} &\muarchStep{\execute{i}}{} \tup{m_0,a_0,\buf_1, \CacheState_0, \BpState_0}\\
                    \tup{m_0,a_0,\buf_0, \CacheState_0,  \BpUpdate(\BpState_0, \ell_0, \ell) , \SchedState_0} &\LoadDelayMuarchStep{}{} \tup{m_0,a_0,\buf_1, \CacheState_0, \BpUpdate(\BpState_0, \ell_0, \ell),\SchedUpdate(\SchedState_0, \BufProject{\buf_0})}\\
                    \buf_0' &:= \buf_0'[0..i-1] \concat \tagged{\passign{\pc}{\ell}}{\ell_0} \concat \buf_0'[i+1 .. |\buf_0|]\\
                    \buf_1' &:= \buf_0'[0..i-1] \concat \tagged{\passign{\pc}{\ell}}{\notags} \concat \buf_0'[i+1 .. |\buf_0|]\\
                    \tup{m_0',a_0',\buf_0', \CacheState_0', \BpUpdate(\BpState_0', \ell_0, \ell)} &\muarchStep{\execute{i}}{} \tup{m_0',a_0',\buf_1', \CacheState_0', \BpState_0'}\\
                    \tup{m_0',a_0',\buf_0', \CacheState_0', \BpState_0', \SchedState_0'} &\LoadDelayMuarchStep{}{} \tup{m_0',a_0',\buf_1', \CacheState_0', \BpUpdate(\BpState_0', \ell_0, \ell),\SchedUpdate(\SchedState_0', \BufProject{\buf_0'})}
                \end{align*}
                We now show that $C_1 = \tup{m_0,a_0,\buf_1, \CacheState_0, \BpUpdate(\BpState_0, \ell_0, \ell),\SchedUpdate(\SchedState_0, \BufProject{\buf_0})}$ and $C_1' = \tup{m_0',a_0',\buf_1', \CacheState_0', \BpUpdate(\BpState_0', \ell_0, \ell),\SchedUpdate(\SchedState_0', \BufProject{\buf_0'})}$ are indistinguishable, i.e., i.e., $C_1 \sim C_1'$.
    
                For this, we need to show that:
                    \begin{description}
                        \item[$\apply{\buf_1}{a_0}(\pc) = \apply{\buf_1'}{a_0'}(\pc)$:]
                        This immediately follows from (a) and the fact that we set $\pc$ to $\ell$ in both computations.
            
                        \item[$\DeepProject{\buf_1} = \DeepProject{\buf_1'}$:] 
                        The only interesting case is if (1) there is $k > i$ such that $\elt{\buf_0}{k} = \pstore{v}{n}$ and (2) there is no other unresolved branch instruction in $\buf_0[0..k]$ except for $\elt{\buf_0}{i}$.
                        In all other cases, $\DeepProject{\buf_1} = \DeepProject{\buf_1'}$ directly follows from (a).
                        Assume that (1) and (2) hold.
                        From (a), we get also that (1') $\elt{\buf_0'}{k} = \pstore{v'}{n'}$ and (2') there is no other unresolved branch instruction in $\buf_0'[0..k]$ except for $\elt{\buf_0'}{i}$.
                        So, resolving the branch instruction (and, therefore, removing the tag), results in the $n,n'$ being visible in $\DeepProject{\buf_1} = \DeepProject{\buf_1'}$.
                        From (2),(2') and the fact that in both executions the branch instruction has been committed, it immediately follows that the \textbf{store} operations happen outside of a rolled-back speculative transaction.
                        As a result, we can use the observations produced by $\CtPcSpecInterf{\cdot}$ to prove that $n = n'$ and, therefore, that $\DeepProject{\buf_1} = \DeepProject{\buf_1'}$.
                        We refer to the case $\elt{\buf_0}{i} = \tagged{\pstore{x}{e}}{T}$ below for an in-depth proof of $n = n'$ using $\CtPcSpecInterf{\cdot}$'s observations.

                        \item[$\CacheState_0 = \CacheState_0'$:]
                        This follows from (a).
            
                        \item[$\BpUpdate(\BpState_0, \ell_0, \ell) = \BpUpdate(\BpState_0', \ell_0, \ell):$] 
                        This follows from $\BpState_0 = \BpState_0'$, which follows from (a).
            
                        \item[$\SchedUpdate(\SchedState_0, \BufProject{\buf_1}) = \SchedUpdate(\SchedState_0', \BufProject{\buf_1'})$:]
                        From (a), we have   $\SchedState_0 = \SchedState_0'$.
                        From $\DeepProject{\buf_1} = \DeepProject{\buf_1'}$ and Lemma~\ref{lemma:vanilla:buffer-projections}, we have $\BufProject{\buf_1} = \BufProject{\buf_1'}$.
                        Therefore, $\SchedUpdate(\SchedState_0, \BufProject{\buf_1}) = \SchedUpdate(\SchedState_0', \BufProject{\buf_1'})$.
                    \end{description}
                    Therefore, $C_1 \sim C_1'$ and (c) holds.
    
                \item[$( \apply{\buf_0[0..i-1]}{a_0}(x) = 0 \wedge \lbl \neq \lbl'') \vee (\apply{\buf_0[0..i-1]}{a_0}(x) \in \Val \setminus \{0,\bot\} \wedge \lbl \neq \ell_0+1)$:]
                The proof of this case is similar to the one of the $( \apply{\buf_0[0..i-1]}{a_0}(x) = 0 \wedge \lbl = \lbl'') \vee (\apply{\buf_0[0..i-1]}{a_0}(x) \in \Val \setminus \{0,\bot\} \wedge \lbl = \ell_0+1)$ (except that we apply the \textsc{Execute-Branch-Rollback} rule).
            \end{description}
    
            \item[$\elt{\buf_0}{i} = \tagged{\passign{x}{e}}{\notags}$:]
            From (a), we also have that $\elt{\buf_0'}{i} =  \tagged{\passign{x}{e'}}{\notags}$, $i \leq |\buf_0'|$, and $\pbarrier \not\in \buf_0'[0..i-1]$.
            There are two cases:
            \begin{description}
                \item[$\exprEval{e}{\apply{\buf_0[0..i-1]}{a_0}} \neq \bot$:]
                From (a), we also have that $\exprEval{e'}{\apply{\buf_0'[0..i-1]}{a_0'}} \neq \bot$ (indeed, if $e \in \Val$, then $e'$ has to be in $\Val$ as well, and if $e \not\in \Val$, then $e = e'$ and $e$'s dependencies must be resolved in both $\buf_0[0..i-1]$ and $\buf_0'[0..i-1]$ from (a)).
                Therefore,  we can apply the \textsc{Execute-Assignment} and \textsc{Step} rules to $C_0$ and $C_0'$ as follows:
                \begin{align*}
                    v &:= \exprEval{e}{\apply{\buf_0{[0..i-1]}}{a_0}}\\
                    \buf_0 &:= \buf_0[0..i-1] \concat \tagged{\passign{x}{e}}{T} \concat \buf_0[i+1 .. |\buf_0|]\\
                    \buf_1 &:= \buf_0[0..i-1] \concat \tagged{\passign{x}{v}}{T} \concat \buf_0[i+1 .. |\buf_0|]\\
                    \tup{m_0,a_0,\buf_0, \CacheState_0, \BpState_0} &\muarchStep{\execute{i}}{} \tup{m_0,a_0,\buf_1, \CacheState_0, \BpState_0}\\
                    \tup{m_0,a_0,\buf_0, \CacheState_0, \BpState_0, \SchedState_0} &\LoadDelayMuarchStep{}{} \tup{m_0,a_0,\buf_1, \CacheState_0, \BpState_0,\SchedUpdate(\SchedState_0, \BufProject{\buf_0})}\\
                    v' &:= \exprEval{e'}{\apply{\buf_0'{[0..i-1]}}{a_0'}}\\
                    \buf_0' &:= \buf_0'[0..i-1] \concat \tagged{\passign{x}{e'}}{T} \concat \buf_0'[i+1 .. |\buf_0|]\\
                    \buf_1' &:= \buf_0'[0..i-1] \concat \tagged{\passign{x}{v'}}{T} \concat \buf_0'[i+1 .. |\buf_0|]\\
                    \tup{m_0',a_0',\buf_0', \CacheState_0', \BpState_0'} &\muarchStep{\execute{i}}{} \tup{m_0',a_0',\buf_1', \CacheState_0', \BpState_0'}\\
                    \tup{m_0',a_0',\buf_0', \CacheState_0', \BpState_0', \SchedState_0'} &\LoadDelayMuarchStep{}{} \tup{m_0',a_0',\buf_1', \CacheState_0', \BpState_0',\SchedUpdate(\SchedState_0', \BufProject{\buf_0'})}
                \end{align*}
                We now show that $C_1 = \tup{m_0,a_0,\buf_1, \CacheState_0, \BpState_0,\SchedUpdate(\SchedState_0, \BufProject{\buf_0})}$ and $C_1' = \tup{m_0',a_0',\buf_1', \CacheState_0', \BpState_0',\SchedUpdate(\SchedState_0', \BufProject{\buf_0'})}$ are indistinguishable, i.e., i.e., $C_1 \sim C_1'$.
    
                There are two cases:
                \begin{description}
                    \item[$x \neq \pc$:]
                    For $C_1 \sim C_1'$, we need to show that:
                    \begin{description}
                        \item[$\apply{\buf_1}{a_0}(\pc) = \apply{\buf_1'}{a_0'}(\pc)$:]
                        This immediately follows from (a) and $x \neq \pc$.
            
                        \item[$\DeepProject{\buf_1} = \DeepProject{\buf_1'}$:] 
                        This immediately follows from (a) and $x \neq \pc$.

                        \item[$\CacheState_0 = \CacheState_0'$:]
                        This follows from (a).
            
                        \item[$\BpState_0 = \BpState_0':$] 
                        This follows from (a).
            
                        \item[$\SchedUpdate(\SchedState_0, \BufProject{\buf_1}) = \SchedUpdate(\SchedState_0', \BufProject{\buf_1'})$:]
                        From (a), we have   $\SchedState_0 = \SchedState_0'$.
                        From $\DeepProject{\buf_1} = \DeepProject{\buf_1'}$ and Lemma~\ref{lemma:vanilla:buffer-projections}, we have $\BufProject{\buf_1} = \BufProject{\buf_1'}$.
                        Therefore, $\SchedUpdate(\SchedState_0, \BufProject{\buf_1}) = \SchedUpdate(\SchedState_0', \BufProject{\buf_1'})$.
                    \end{description}
                    Therefore, $C_1 \sim C_1'$ and (c) holds.
    
                    \item[$x = \pc$:]
                    For $C_1 \sim C_1'$, we need to show that:
                    \begin{description}
                        \item[$\apply{\buf_1}{a_0}(\pc) = \apply{\buf_1'}{a_0'}(\pc)$:]
                        For this, we need to show that $v = v'$ (in case there are no later changes to the program counter).
                        Since $C_0, C_0'$ are reachable configurations, the buffers $\buf_0, \buf_0'$ are well-formed, and therefore  $\buf_0[0..i-1] \in \prefixes{\buf_0}$ and  $\buf_0'[0..i-1] \in \prefixes{\buf_0'}$.
                        From (b), therefore, there are configurations $s_0, s_0', s_1, s_1'$ such that $C_0 \bufEquiv{|\buf_0[0..i-1]|} s_0$, $C_0' \bufEquiv{|\buf_0[0..i-1]|} s_0'$, $s_0 \CtPcSpecInterfStep{\tau}{} s_1$,  $s_0' \CtPcSpecInterfStep{\tau'}{} s_1'$, $\headWindow{s_0} > 0$, $\headWindow{s_0'} >0$, and $\tau = \tau'$. 
                        There are two cases:
                        \begin{description}
                            \item[$e \in \Val$:] 
                            Then,  $e = e'$ follows from (a) and, therefore, we immediately have $v = v'$.
    
                            \item[$e \not\in \Val$:] 
                            Then, from (a) we have $e = e'$.
                            From (a), $C_0 \bufEquiv{|\buf_0[0..i-1]|} s_0$, $C_0' \bufEquiv{|\buf_0[0..i-1]|} s_0'$, and the well-formedness of the buffers, we know that $p(\headConf{s_0}(\pc)) = p(\headConf{s_0'}(\pc)) = \pjmp{e}$.
                            From $\CtPcSpecInterf{\cdot}$, $\headWindow{s_0} > 0$, and $\headWindow{s_0'} >0$, we have that $\tau = \pcObs{ \exprEval{e}{\headConf{s_0}} }$ and $\tau' = \pcObs{ \exprEval{e}{\headConf{s_0'}}}$ (because $s_0 \CtPcSpecInterfStep{\tau}{} s_1$ and $s_0' \CtPcSpecInterfStep{\tau'}{} s_1'$ have been obtained by applying the \textsc{Step} rule of $\CtPcSpecInterfStep{}{}$ and the \textsc{Jump} rule of $\CtSeqInterfStep{}{}$).
                            From $C_0 \bufEquiv{|\buf_0[0..i-1]|} s_0$ and $\tau = \pcObs{ \exprEval{e}{ \headConf{s_0} } }$, we have that $\tau = \pcObs{ \exprEval{e}{\apply{\buf_0{[0..i-1]}}{a_0}}}$.
                            Similarly, from $C_0' \bufEquiv{|\buf_0'[0..i-1]|} s_0'$ and $\tau' = \pcObs {\exprEval{e}{\headConf{s_0'}}}$, we have that $\tau' = \pcObs{\exprEval{e}{\apply{\buf_0'{[0..i-1]}}{a_0'}}}$.
                            Finally, from $\tau=\tau'$, we get $\exprEval{e}{\apply{\buf_0{[0..i-1]}}{a_0}} = \exprEval{e}{\apply{\buf_0'{[0..i-1]}}{a_0'}}$ and, therefore, $v = v'$.
                        \end{description}
                            
                        \item[$\DeepProject{\buf_1} = \DeepProject{\buf_1'}$:] 
                        This immediately follows from (a) and $v = v'$ (shown above).
            
                        \item[$\CacheState_0 = \CacheState_0'$:]
                        This follows from (a).
            
                        \item[$\BpState_0 = \BpState_0':$] 
                        This follows from (a).
            
                        \item[$\SchedUpdate(\SchedState_0, \BufProject{\buf_1}) = \SchedUpdate(\SchedState_0', \BufProject{\buf_1'})$:]
                        From (a), we have   $\SchedState_0 = \SchedState_0'$.
                        From $\DeepProject{\buf_1} = \DeepProject{\buf_1'}$ and Lemma~\ref{lemma:seq-processor:buffer-projections}, we have $\BufProject{\buf_1} = \BufProject{\buf_1'}$.
                        Therefore, $\SchedUpdate(\SchedState_0, \BufProject{\buf_1}) = \SchedUpdate(\SchedState_0', \BufProject{\buf_1'})$.
                    \end{description}
                    Therefore, $C_1 \sim C_1'$ and (c) holds.
                \end{description} 
    
                \item[$\exprEval{e}{\apply{\buf_0[0..i-1]}{a_0}} = \bot$:] 
                From this, it follows that $e \not\in \Val$.
                Therefore, from (a), we have that $e = e'$.
                
                Observe that $\exprEval{e}{\apply{\buf[0..i-1]}{a_0}} = \bot$ implies that one of the dependencies of $e$ is unresolved in $\buf_0[0..i-1]$.
                From this and (a), it follows that one of the dependencies of $e$ is unresolved in $\buf_0'[0..i-1]$.
                Therefore, $\exprEval{e}{\apply{\buf_0'{[0..i-1]}}{a_0'}} = \bot $ holds as well.
                Hence, both configurations are stuck and (c) holds.
    
            \end{description}

            \item[$\elt{\buf_0}{i} =   \tagged{\pmarkedassign{x}{e}}{\notags}$:]
            The proof of this case is similar to that of $\elt{\buf_0}{i} = \tagged{\passign{x}{e}}{\notags}$ (when $x = \pc$).
    
            \item[$\elt{\buf_0}{i} =  \tagged{\pstore{x}{e}}{T}$:]
            From (a), we also have that $\elt{\buf_0'}{i} =  \tagged{\pstore{x}{e}}{T}$, $i \leq |\buf_0'|$, and $\pbarrier \not\in \buf_0'[0..i-1]$.
            There are two cases:
            \begin{description}
                \item[$\exprEval{e}{\apply{\buf_0{[0..i-1]}}{a_0}} \neq \bot \wedge \apply{\buf_0{[0..i-1]}}{a_0}(x) \neq \bot$:] 
                From (a), we have that  $\exprEval{e}{\apply{\buf_0'{[0..i-1]}}{a_0'}} \neq \bot \wedge \apply{\buf_0'{[0..i-1]}}{a_0'}(x) \neq \bot$ holds as well.
                Therefore,  we can apply the \textsc{Execute-Store} and \textsc{Step} rules to $C_0$ and $C_0'$ as follows:
                \begin{align*}
                    v &:= \apply{\buf_0{[0..i-1]}}{a_0}(x) \\
                    n &:= \exprEval{e}{\apply{\buf_0{[0..i-1]}}{a_0}}\\
                    \buf_0 &:= \buf_0[0..i-1] \concat \tagged{\pstore{x}{e}}{T} \concat \buf_0[i+1 .. |\buf_0|]\\
                    \buf_1 &:= \buf_0[0..i-1] \concat \tagged{\pstore{v}{n}}{T} \concat \buf_0[i+1 .. |\buf_0|]\\
                    \tup{m_0,a_0,\buf_0, \CacheState_0, \BpState_0} &\muarchStep{\execute{i}}{} \tup{m_0,a_0,\buf_1, \CacheState_0, \BpState_0}\\
                    \tup{m_0,a_0,\buf_0, \CacheState_0, \BpState_0, \SchedState_0} &\muarchStep{}{} \tup{m_0,a_0,\buf_1, \CacheState_0, \BpState_0,\SchedUpdate(\SchedState_0, \BufProject{\buf_0})}\\
                    v' &:= \apply{\buf_0'{[0..i-1]}}{a_0'}(x) \\
                    n' &:= \exprEval{e}{\apply{\buf_0'{[0..i-1]}}{a_0'}}\\
                    \buf_0' &:= \buf_0'[0..i-1] \concat \tagged{\pstore{x}{e}}{T} \concat \buf_0'[i+1 .. |\buf_0|]\\
                    \buf_1' &:= \buf_0'[0..i-1] \concat \tagged{\pstore{v'}{n'}}{T} \concat \buf_0'[i+1 .. |\buf_0|]\\
                    \tup{m_0',a_0',\buf_0', \CacheState_0', \BpState_0'} &\muarchStep{\execute{i}}{} \tup{m_0',a_0',\buf_1', \CacheState_0', \BpState_0'}\\
                    \tup{m_0',a_0',\buf_0', \CacheState_0', \BpState_0', \SchedState_0'} &\muarchStep{}{} \tup{m_0',a_0',\buf_1', \CacheState_0', \BpState_0',\SchedUpdate(\SchedState_0', \BufProject{\buf_0'})}
                \end{align*}
                We now show that $C_1 = \tup{m_0,a_0,\buf_1, \CacheState_0, \BpState_0,\SchedUpdate(\SchedState_0, \BufProject{\buf_0})}$ and $C_1' = \tup{m_0',a_0',\buf_1', \CacheState_0', \BpState_0',\SchedUpdate(\SchedState_0', \BufProject{\buf_0'})}$ are indistinguishable, i.e., i.e., $C_1 \sim C_1'$.
                For this, we need to show that:
                \begin{description}
                    \item[$\apply{\buf_1}{a_0}(\pc) = \apply{\buf_1'}{a_0'}(\pc)$:]
                    This immediately follows from (a) and the fact that \textbf{store}s do not alter the value of $\pc$.
        
                    \item[$\DeepProject{\buf_1} = \DeepProject{\buf_1'}$:] 
                    There are two cases:
                    \begin{description}
                    	\item[$\forall \tagged{\passign{\pc}{\ell}}{T} \in \buf_0{[0..i-1]}.\ T = \notags$:]
                    	From (a), we also get $\forall \tagged{\passign{\pc}{\ell}}{T} \in \buf_0'{[0..i-1]}.\ T = \notags$.
                    	In this case, we need to show $n = n'$:
                    \begin{description}
                        \item[$n = n'$:]
                        Since $C_0, C_0'$ are reachable configurations, the buffers $\buf_0, \buf_0'$ are well-formed, and therefore  $\buf_0[0..i-1] \in \prefixes{\buf_0}$ and  $\buf_0'[0..i-1] \in \prefixes{\buf_0'}$.
                        From (b), therefore, there are configurations $s_0, s_0', s_1, s_1'$ such that $C_0 \bufEquiv{|\buf_0[0..i-1]|} s_0$, $C_0' \bufEquiv{|\buf_0[0..i-1]|} s_0'$, $s_0 \CtPcSpecInterfStep{\tau}{} s_1$,  $s_0' \CtPcSpecInterfStep{\tau'}{} s_1'$, $\headWindow{s_0} > 0$, $\headWindow{s_0'} >0$, and $\tau = \tau'$.  
                        From (a), $C_0 \bufEquiv{|\buf_0[0..i-1]|} s_0$, $C_0' \bufEquiv{|\buf_0[0..i-1]|} s_0'$, and the well-formedness of the buffers, we know that $p(\headConf{s_0}(\pc)) = p(\headConf{s_0'}(\pc)) = \pstore{x}{e}$.
                        Moreover, from $\forall \tagged{\passign{\pc}{\lbl}}{\lbl'} \in \buf{[0..i-1]}.\  \lbl' = \notags$, we know that $\nrMispred{\tup{m_{0},a_{0}}{\buf_0{[0..i-1]}}} = 0$.
		                From this and $C_0 \bufEquiv{|\buf_0[0..i-1]|} s_0$, we have that $\headWindow{s_0} = \infty$.
		                Similarly, we obtain that $\headWindow{s_0'} = \infty$.
                        From $\CtSpecInterf{\cdot}$, $\headWindow{s_0} =\infty 0$, $\headWindow{s_0'} =\infty$, we have that $\tau = \storeObs{ \exprEval{e}{\sigma_0} }$ and $\tau' = \storeObs{ \exprEval{e}{\sigma_0'}}$ (because $s_0 \CtPcSpecInterfStep{\tau}{} s_1$ and  $s_0' \CtPcSpecInterfStep{\tau'}{} s_1'$ have been obtained by applying the \textsc{Step} rule of $\CtPcSpecInterfStep{}{}$ and the \textsc{Store} rule of $\CtSeqInterfStep{}{}$).
                        From $C_0 \bufEquiv{|\buf_0[0..i-1]|} s_0$ and $\tau = \storeObs{ \exprEval{e}{\headConf{s_0}} }$, we have that $\tau = \storeObs{ \exprEval{e}{\apply{\buf_0{[0..i-1]}}{a_0}}}$.
                        Similarly, from $C_0' \bufEquiv{|\buf_0'[0..i-1]|} \sigma_0'$ and $\tau' = \storeObs {\exprEval{e}{\headConf{s_0'}}}$, we have that $\tau' = \storeObs{\exprEval{e}{\apply{\buf_0'{[0..i-1]}}{a_0'}}}$.
                        Finally, from $\tau=\tau'$, we get $\exprEval{e}{\apply{\buf_0{[0..i-1]}}{a_0}} = \exprEval{e}{\apply{\buf_0'{[0..i-1]}}{a_0'}}$ and, therefore, $n = n'$.
    
                    \end{description}
                    From (a), $n = n'$, $\buf_0 = \buf_0[0..i-1] \concat \tagged{\pstore{x}{e}}{T} \concat \buf_0[i+1 .. |\buf_0|]$, $				\buf_1 = \buf_0[0..i-1] \concat \tagged{\pstore{v}{n}}{T} \concat \buf_0[i+1 .. |\buf_0|]$, $\buf_0' = \buf_0'[0..i-1] \concat \tagged{\pstore{x}{e}}{T} \concat \buf_0'[i+1 .. |\buf_0|]$, and $				\buf_1' = \buf_0'[0..i-1] \concat \tagged{\pstore{v'}{n'}}{T} \concat \buf_0'[i+1 .. |\buf_0|]$, we get $\DeepProject{\buf_1} = \DeepProject{\buf_1'}$.

                    	\item[$\exists \tagged{\passign{\pc}{\ell}}{T} \in \buf_0{[0..i-1]}.\ T \neq \notags$:] 
                    	From (a), we also get $\exists \tagged{\passign{\pc}{\ell}}{T} \in \buf_0'{[0..i-1]}.\ T \neq \notags$.
                    	Therefore, from (a), $\buf_0 = \buf_0[0..i-1] \concat \tagged{\pstore{x}{e}}{T} \concat \buf_0[i+1 .. |\buf_0|]$, $				\buf_1 = \buf_0[0..i-1] \concat \tagged{\pstore{v}{n}}{T} \concat \buf_0[i+1 .. |\buf_0|]$, $\buf_0' = \buf_0'[0..i-1] \concat \tagged{\pstore{x}{e}}{T} \concat \buf_0'[i+1 .. |\buf_0|]$, and $				\buf_1' = \buf_0'[0..i-1] \concat \tagged{\pstore{v'}{n'}}{T} \concat \buf_0'[i+1 .. |\buf_0|]$, we get $\DeepProject{\buf_1} = \DeepProject{\buf_1'}$. 
                    \end{description}

                    \item[$\CacheState_0 = \CacheState_0'$:]
                    This follows from (a).
        
                    \item[$\BpState_0 = \BpState_0':$] 
                    This follows from (a).
        
                    \item[$\SchedUpdate(\SchedState_0, \BufProject{\buf_1}) = \SchedUpdate(\SchedState_0', \BufProject{\buf_1'})$:]
                    From (a), we have   $\SchedState_0 = \SchedState_0'$.
                    From $\DeepProject{\buf_1} = \DeepProject{\buf_1'}$ and Lemma~\ref{lemma:vanilla:buffer-projections}, we have $\BufProject{\buf_1} = \BufProject{\buf_1'}$.
                    Therefore, $\SchedUpdate(\SchedState_0, \BufProject{\buf_1}) = \SchedUpdate(\SchedState_0', \BufProject{\buf_1'})$.
                \end{description}
                Therefore, $C_1 \sim C_1'$ and (c) holds.
    
                \item[$\exprEval{e}{\apply{\buf_0{[0..i-1]}}{a_0}} = \bot \vee \apply{\buf_0{[0..i-1]}}{a_0}(x) = \bot$:] 
                Then, one of the dependencies of $e$ or $x$ are unresolved in $\buf_0[0..i-1]$.
                From this and (a), it follows that one of the dependencies of $e$ or $x$ are unresolved in $\buf_0'[0..i-1]$.
                Therefore, $\exprEval{e}{\apply{\buf_0'{[0..i-1]}}{a_0'}} = \bot \vee \apply{\buf_0'{[0..i-1]}}{a_0'}(x) = \bot$ holds as well.
                Hence, both configurations are stuck and (c) holds.
            \end{description}
    
            \item[$\elt{\buf_0}{i} =  \tagged{\pskip{}}{\notags}$:]
            The proof of this case is similar to that of $\elt{\buf_0}{i} =  \tagged{\passign{x}{e}}{\notags}$.
            \item[$\elt{\buf_0}{i} =  \tagged{\pbarrier}{T}$:]
            The proof of this case is similar to that of $\elt{\buf_0}{i} =  \tagged{\passign{x}{e}}{\notags}$.
        \end{description}
        Therefore, (c) holds in all cases.

        \item[$i > |\buf_0| \vee \pbarrier \in \buf_0{[0..i-1]}$:]
        From (a), it immediately follows that $i > |\buf_0'| \vee \pbarrier \in \buf_0'[0..i-1]$.
        Therefore, both configurations are stuck and (c) holds.
    \end{description}
    Therefore, (c) holds in all cases.

    \item[$d = \retire{}$:]
    Therefore, we can only apply one of the $\retire{}$ rules depending on the head of the reorder buffer in $\buf_0$.
        There are five cases:
        \begin{description}
            \item[$\buf_0 = \tagged{\pskip}{\notags} \concat \buf_1 $:] 
        	The proof of this case is similar to the corresponding case of Lemma~\ref{lemma:vanilla:trace-equiv-implies-stepwise-indistinguishability}.    
        
            \item[$\buf_0 = \tagged{\pbarrier}{\notags} \concat \buf_1 $:]
            The proof of this case is similar to that of the case $\buf_0 = \tagged{\pskip}{\notags} \concat \buf_1 $.
    
            \item[$\buf_0 = \tagged{\passign{x}{v}}{\notags} \concat \buf_1 $:] 
        	The proof of this case is similar to the corresponding case of Lemma~\ref{lemma:vanilla:trace-equiv-implies-stepwise-indistinguishability}.
    
            \item[$\buf_0 = \tagged{\pmarkedassign{x}{v}}{\notags} \concat \buf_1 $:]
            The proof fo this case is similar to that of the case $\buf_0 = \tagged{\passign{x}{v}}{\notags} \concat \buf_1 $.
    
            \item[$\buf_0 = \tagged{\pstore{v}{n}}{\notags} \concat \buf_1 $:] 
            From (a), we get that $\DeepProject{\buf_0} = \DeepProject{\buf_0'}$.
            Therefore, we have that $\buf_0' = \tagged{\pstore{v'}{n}}{\notags} \concat \buf_1' $ and $\DeepProject{\buf_1} = \DeepProject{\buf_1'}$.
            Observe that (1) $v \in \Val \leftrightarrow v' \in \Val$ from (a), (2) if $v \not\in \Val$ or $n \not\in \Val$, then both computations are stuck and (c) holds (since there is no $C_1$ such that $C_0 \muarchStep{}{} C_1$ and no $C_1'$ such that $C_0' \muarchStep{}{} C_1'$), and (3) if $v \in \Val$, then $v = v'$ (this follows from (a) and the fact that there are no tagged commands before the store under retirement).
            In the following, therefore, we assume that $v,v',n \in \Val$.
            Therefore, we can apply the \textsc{Retire-Store} and \textsc{Step} rules to $C_0$ and $C_0'$ as follows:
            \begin{align*}
                \tup{m_0,a_0,\tagged{\pstore{v}{n}}{\notags} \concat \buf_1,\CacheState_0,\BpState_0} &\muarchStep{\retire}{} \tup{m_0[n\mapsto v], a_0, \buf_1, \CacheUpdate(\CacheState_0,n),\BpState_0}\\
                \tup{m_0,a_0,\tagged{\pstore{v}{n}}{\notags} \concat \buf_1,\CacheState_0,\BpState_0, \SchedState_0} &\muarchStep{}{} \tup{m_0[n\mapsto v], a_0, \buf_1, \CacheUpdate(\CacheState_0,n),\BpState_0, \SchedUpdate(\SchedState_0, \BufProject{\buf_1})}\\
                \tup{m_0',a_0',\tagged{\pstore{v'}{n}}{\notags} \concat \buf_1',\CacheState_0',\BpState_0'} &\muarchStep{\retire}{} \tup{m_0'[n \mapsto v'], a_0', \buf_1', \CacheUpdate(\CacheState_0',n),\BpState_0'}\\
                \tup{m_0',a_0',\tagged{\pstore{v'}{n}}{\notags} \concat \buf_1',\CacheState_0',\BpState_0', \SchedState_0'} &\muarchStep{}{} \tup{m_0'[n\mapsto v'], a_0', \buf_1', \CacheUpdate(\CacheState_0',n),\BpState_0', \SchedUpdate(\SchedState_0', \BufProject{\buf_1'})}
            \end{align*}
            We now show that $C_1 = \tup{m_0[n \mapsto v], a_0, \buf_1, \CacheUpdate(\CacheState_0,n),\BpState_0, \SchedUpdate(\SchedState_0, \BufProject{\buf_1})}$ and $C_1' =  \tup{m_0'[n\mapsto v'], a_0', \buf_1', \CacheUpdate(\CacheState_0',n),\BpState_0', \SchedUpdate(\SchedState_0', \BufProject{\buf_1'})}$ are indistinguishable, i.e., $C_1 \sim C_1'$.
            For this, we need to show that:
            \begin{description}
                \item[$\apply{\buf_1}{a_0}(\pc) = \apply{\buf_1'}{a_0'}(\pc)$:]
                From (a), we have $\apply{\tagged{\pstore{n}{v}}{\notags} \concat \buf_1}{a_0}(\pc) = \apply{\tagged{\pstore{n}{v'}}{\notags} \concat \buf_1'}{a_0'}(\pc)$.
                From this, we immediately get that $\apply{ \buf_1}{a_0}(\pc) = \apply{ \buf_1'}{a_0'}(\pc)$.
    
                \item[$\DeepProject{\buf_1} = \DeepProject{\buf_1'}$:] 
                This follows from $\buf_0 = \tagged{\pstore{v}{n}}{\notags} \concat \buf_1 $, $\buf_0' = \tagged{\pstore{v'}{n}}{\notags} \concat \buf_1' $, and (a).
    
                \item[$\CacheUpdate(\CacheState_0,n) = \CacheUpdate(\CacheState_0',n)$:]
                This follows from $\CacheState_0 = \CacheState_0'$, which, in turn, follows from (a).
    
                \item[$\BpState_0 = \BpState_0':$] 
                This follows from (a).
    
                \item[$\SchedUpdate(\SchedState_0, \BufProject{\buf_1}) = \SchedUpdate(\SchedState_0', \BufProject{\buf_1'})$:]
                From (a), we have   $\SchedState_0 = \SchedState_0'$.
                From $\DeepProject{\buf_1} = \DeepProject{\buf_1'}$ and Lemma~\ref{lemma:seq-processor:buffer-projections}, we have $\BufProject{\buf_1} = \BufProject{\buf_1'}$.
                Therefore, $\SchedUpdate(\SchedState_0, \BufProject{\buf_1}) = \SchedUpdate(\SchedState_0', \BufProject{\buf_1'})$.
            \end{description}
            Therefore, $C_1 \sim C_1'$ and (c) holds.
        \end{description}
        Therefore, (c) holds for all the cases.

    \end{description}
    Since (c) holds for all cases, this completes the proof of our lemma.
    \end{proof}

\subsection{Proof of Theorem~\ref{theorem:hni:load-delaytwo}}

\loadDelayTwo*

\begin{proof}
    Let $p$ be an arbitrary well-formed program.
	Moreover, let $\sigma = \tup{m,a},\sigma' = \tup{m',a'}$ be two arbitrary initial configurations.
	There are two cases:
	\begin{compactitem}
	\item[$\ArchSeqInterf{\Prg}(\sigma) \neq \ArchSeqInterf{\Prg}(\sigma')$:] Then, 	$\ArchSeqInterf{\Prg}(\sigma) = \ArchSeqInterf{\Prg}(\sigma') \Rightarrow \LoadDelayMuarchSem{\Prg}(\sigma) = \LoadDelayMuarchSem{\Prg}(\sigma')$ trivially holds.
	\item[$\ArchSeqInterf{\Prg}(\sigma) = \ArchSeqInterf{\Prg}(\sigma')$:]
		By unrolling the notion of $\ArchSeqInterf{\Prg}(\sigma)$ (together with all changes to the program counter $\pc$ being visible on traces), we obtain that there are runs $\crun:= \sigma$ $\ArchSeqInterfStep{o_1}{}$ $\sigma_1$ $\ArchSeqInterfStep{o_2}{}$ $\ldots$  $\ArchSeqInterfStep{o_{n-1}}{}$  $\sigma_n$ and $\crunp:= \sigma'\ArchSeqInterfStep{o_1'}{} \sigma_1' \ArchSeqInterfStep{o_2'}{} \ldots  \ArchSeqInterfStep{o_{n-1}'}{}  \sigma_n'$ such that $o_i = o_i'$ for all $0 < i < n$.
		By applying Lemma~\ref{lemma:loadDelay:arch-seq:main-lemma}, we immediately get that $\LoadDelayMuarchSem{\Prg}(\sigma) = \LoadDelayMuarchSem{\Prg}(\sigma')$ (because either both run terminate producing indistinguishable sequences of processor configurations or they both get stuck).
		Therefore, $\ArchSeqInterf{\Prg}(\sigma) = \ArchSeqInterf{\Prg}(\sigma') \Rightarrow \LoadDelayMuarchSem{\Prg}(\sigma) = \LoadDelayMuarchSem{\Prg}(\sigma')$ holds.
	\end{compactitem}
	Hence, $\ArchSeqInterf{\Prg}(\sigma) = \ArchSeqInterf{\Prg}(\sigma') \Rightarrow \LoadDelayMuarchSem{\Prg}(\sigma) = \LoadDelayMuarchSem{\Prg}(\sigma')$ holds for all programs $p$ and initial configurations $\sigma,\sigma'$.
	Therefore, $\hsni{\ArchSeqInterf{\cdot}}{\LoadDelayMuarchSem{\cdot}}$ holds.
\end{proof}

\subsubsection{Preliminary definitions}

\begin{definition}[Deep-update for \muarchStyle{loadDelay} and $\ArchSeqInterf{\cdot}{}$]
    Let $p$ be a program,  $\tup{m,a}$ be an \archstate{}, and $\buf$ be a buffer.
    The \emph{deep-update of $\tup{m,a}$ given $\buf$} is defined as follows:
        \begin{align*}
        \update{\tup{m,a}}{\emptysequence} &:= \tup{m,a} \\
        \update{\tup{m,a}}{ \tagged{\passign{x}{e}}{T}}  &:= 
            \begin{cases}
                \tup{m, a[x \mapsto \exprEval{e}{a}]} & \text{if } x \neq \pc \\
                \tup{m, a[x \mapsto \exprEval{e}{a}]} & \text{if } x = \pc  \wedge T = \notags \\
                \tup{m, a[x \mapsto \ell']} & \text{if } x = \pc  \wedge T = \ell \wedge p(\ell) = \pjz{y}{\ell'} \wedge a(y) = 0 \\ 
                \tup{m, a[x \mapsto \ell+1]} & \text{if } x = \pc  \wedge T = \ell \wedge p(\ell) = \pjz{y}{\ell'} \wedge a(y) \neq 0 
            \end{cases}
        \\
        \update{\tup{m,a}}{ \tagged{\pmarkedassign{x}{e}}{T}}  &:= \tup{m, a[x \mapsto \exprEval{e}{a}]}\\
        \update{\tup{m,a}}{ \tagged{\pload{x}{e}}{T}} &:= \tup{m,a[x \mapsto m(\exprEval{e}{a})] } \\
        \update{\tup{m,a}}{ \tagged{\pstore{x}{e}}{T}} &:= \tup{m[\exprEval{e}{a} \mapsto a(x)],a}\\
        \update{\tup{m,a}}{\tagged{\pskip{}}{T}} &:= \tup{m,a}\\
        \update{\tup{m,a}}{\tagged{\pbarrier{}}{T}} &:= \tup{m,a}\\
        \update{\tup{m,a}}{(\tagged{i}{T} \concat \buf)} &:= 				\update{ (\update{\tup{m,a}}{\tagged{i}{T}}) }{ \buf }
        \end{align*}
\end{definition}

\begin{definition}[Well-formed buffers for $\muarchStyle{loadDelay}$ and $\ArchSeqInterf{\cdot}{}$]
    A reorder buffer $\buf$ is \emph{well-formed for $\muarchStyle{loadDelay}$, $\ArchSeqInterf{\cdot}{}$, a well-formed program $p$, and an \archstate{}  $\tup{m,a}$}, written $\wellformed{\buf,\tup{m,a}}$, if the following conditions hold:
    \begin{align*}
    	\wellformed{\emptysequence,\tup{m,a}} & \\
        \wellformed{ \tagged{\passign{\pc}{e}}{T} \concat \buf, \tup{m,a} } & \text{ if } \wellformed{\buf, \update{\tup{m,a}}{\tagged{\passign{\pc}{e}}{T}}} \wedge p(a(\pc)) \sim_{\tup{m,a}} \tagged{\passign{\pc}{e}}{\notags} \\
        \wellformed{\tagged{\passign{\pc}{\ell}}{\ell_0} \concat \buf, \tup{m,a}} & \text{ if } \wellformed{\buf, \update{\tup{m,a}}{\tagged{\passign{\pc}{\ell}}{\ell_0}}} \wedge  \ell_0 \in \Val \wedge p(\ell_0) = \pjz{x}{\ell'} \wedge \\& \quad \ell \in \{\ell', \ell_0+1\} \wedge p(a(\pc)) \sim_{\tup{m,a}} \tagged{\passign{\pc}{\ell}}{\ell_0} \wedge \ell_0 = a(\pc)\\
    	\wellformed{ \tagged{\pmarkedassign{\pc}{\ell}}{\notags} \concat \buf , \tup{m,a} } & \text{ if } \wellformed{\buf,\update{\tagged{\pmarkedassign{\pc}{\ell}}{\notags}}{\tup{m,a}}} \wedge \ell \in \Val \\
        \wellformed{ \tagged{i}{\notags} \concat \tagged{\pmarkedassign{\pc}{\ell}}{\notags}  \concat \buf , \tup{m,a} } & \text{ if }  
        \wellformed{\buf, \update{\tup{m,a}}{(\tagged{i}{\notags} \concat \tagged{\pmarkedassign{\pc}{\ell}}{\notags} )}} \wedge 
        \ell \in \Val \wedge (\forall e.\ i \neq \passign{\pc}{e}) \wedge  \\
        & \quad (\forall x,e.\ i \neq \pmarkedassign{x}{e}) \wedge (\forall x,e.\ i \neq \pload{\pc}{e}) \wedge p(a(\pc)) \sim_{\tup{m,a}} \tagged{i}{\notags} \wedge \\ & \quad \ell = a(\pc)+1
    \end{align*}
    where the instruction-compatibility relation $\sim_{\tup{m,a}}$ is defined as follows:
    \begin{align*}
        \pskip &\sim_{\tup{m,a}} \tagged{\pskip}{\notags} \\ \allowdisplaybreaks
        \pbarrier &\sim_{\tup{m,a}} \tagged{\pbarrier}{\notags} \\ \allowdisplaybreaks
        \passign{x}{e} &\sim_{\tup{m,a}} \tagged{\passign{x}{e}}{\notags} \\ \allowdisplaybreaks
        \passign{x}{e} &\sim_{\tup{m,a}} \tagged{\passign{x}{v}}{\notags} \text{ if } v = \exprEval{e}{a} \\ \allowdisplaybreaks
        \pload{x}{e} &\sim_{\tup{m,a}} \tagged{\pload{x}{e}}{\notags} \\ \allowdisplaybreaks
        \pload{x}{e} &\sim_{\tup{m,a}} \tagged{\passign{x}{v}}{\notags} \text{ if } v = m(\exprEval{e}{a})\\ \allowdisplaybreaks
        \pstore{x}{e} &\sim_{\tup{m,a}} \tagged{\pstore{x}{e}}{\notags} \\ \allowdisplaybreaks
        \pstore{x}{e} &\sim_{\tup{m,a}} \tagged{\pstore{v}{n}}{\notags} \text{ if } v = a(x) \wedge n = \exprEval{e}{a} \\ \allowdisplaybreaks
        \pjmp{e} &\sim_{\tup{m,a}} \tagged{\passign{\pc}{e}}{\notags} \\ \allowdisplaybreaks
        \pjmp{e} &\sim_{\tup{m,a}} \tagged{\passign{\pc}{v}}{\notags} \text{ if } v = \exprEval{e}{a} \\ \allowdisplaybreaks
        \pjz{x}{\ell} &\sim_{\tup{m,a}} \tagged{\passign{\pc}{\ell}}{a(\pc)} \\ \allowdisplaybreaks
        \pjz{x}{\ell} &\sim_{\tup{m,a}} \tagged{\passign{\pc}{a(\pc)+1}}{a(\pc)} \\ \allowdisplaybreaks
        \pjz{x}{\ell} &\sim_{\tup{m,a}} \tagged{\passign{\pc}{a(\pc)+1}}{\notags} \wedge a(x) \neq 0\\ \allowdisplaybreaks
        \pjz{x}{\ell} &\sim_{\tup{m,a}} \tagged{\passign{\pc}{\ell}}{\notags} \wedge a(x) = 0
    \end{align*}
    \end{definition}

    \begin{definition}[Prefixes of buffers]\label{def:loadDelay:arch-seq:prefixes}
        The prefixes of a well-formed buffer $\buf$, given an \archstate{} $\tup{m,a}$, are defined as follows:
        \begin{align*}
            \prefixes{\emptysequence, \tup{m,a}} &= \{ \emptysequence \}\\
            \prefixes{  \tagged{\passign{\pc}{e}}{\notags} \concat \buf , \tup{m,a}} &= 
            \{ \emptysequence \} \cup \{  \tagged{\passign{\pc}{e}}{T} \concat \buf' \mid \buf' \in \prefixes{\buf, \update{\tup{m,a}}{\tagged{\passign{\pc}{e}}{\notags}} } \} \\
            \prefixes{  \tagged{\passign{\pc}{\ell}}{\ell_0} \concat \buf , \tup{m,a}} &= 
            \{ \emptysequence, \tagged{\passign{\pc}{\ell}}{\ell_0} \} \cup \{  \tagged{\passign{\pc}{\ell}}{\ell_0} \concat \buf' \mid \buf' \in \prefixes{\buf, \update{\tup{m,a}}{\tagged{\passign{\pc}{\ell}}{\ell_0}} } \wedge \\
            & \qquad p(\ell_0) = \pjz{x}{\ell'} \wedge (a(x) = 0 \rightarrow \ell = \ell') \wedge (a(x) \neq 0 \rightarrow \ell = \ell_0+1 ) \} \\
            \prefixes{  \tagged{i}{\notags} \concat \tagged{\pmarkedassign{\pc}{\ell}}{\notags} \concat \buf } &= \{ \emptysequence \} \cup \{   \tagged{i}{\notags} \concat \tagged{\pmarkedassign{\pc}{\ell}}{\notags} \concat \buf' \mid \buf' \in \prefixes{\buf, \update{\tup{m,a}}{ (\tagged{i}{\notags} \concat \tagged{\pmarkedassign{\pc}{\ell}}{\notags}) } } \} \\
            \prefixes{   \tagged{\pmarkedassign{\pc}{\ell}}{\notags} \concat \buf } &= \{  \tagged{\pmarkedassign{\pc}{\ell}}{\notags} \} \cup \{ \tagged{\pmarkedassign{\pc}{\ell}}{\notags} \concat \buf' \mid \buf' \in \prefixes{\buf, \update{\tup{m,a}}{\tagged{\pmarkedassign{\pc}{\ell}}{\notags}}} \} 
        \end{align*}
    \end{definition}

\subsubsection{Mapping lemma}

Lemma~\ref{lemma:loadDelay:arch-seq:buffers-well-formedness} states that all reorder buffers occurring in hardware runs are well-formed.

\begin{lemma}[Reorder buffers are well-formed]\label{lemma:loadDelay:arch-seq:buffers-well-formedness}
Let $p$ be a well-formed program, $\sigma_0 = \tup{m,a}$ be an initial \archstate, $\CacheState_0$ be the initial cache state, $\BpState_0$ be the initial branch predictor state, and $\SchedState_0$ be the initial scheduler state.
For all hardware runs  $\hrun := C_0 \LoadDelayMuarchStep{}{} C_1 \LoadDelayMuarchStep{}{} C_2 \LoadDelayMuarchStep{}{} \ldots \LoadDelayMuarchStep{}{} C_k$ and all $0 \leq i \leq k$, then $\wellformed{\buf_i,\tup{m_i,a_i}}$, where $C_0 = \tup{m,a,\emptysequence, \CacheState_0, \BpState_0, \SchedState_0}$ and $C_i = \tup{m_i, a_i,\buf_i, \CacheState_i, \BpState_i, \SchedState_i}$.
\end{lemma}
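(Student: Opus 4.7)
The plan is to proceed by induction on $i$, the length of the hardware run, following exactly the template of Lemma~\ref{lemma:vanilla:buffers-well-formedness}. The base case $i = 0$ is immediate, since $C_0 = \tup{m,a,\emptysequence,\CacheState_0,\BpState_0,\SchedState_0}$ and $\wellformed{\emptysequence,\tup{m,a}}$ holds by the first clause of the well-formedness definition.

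For the inductive step, I would assume $\wellformed{\buf_{i-1},\tup{m_{i-1},a_{i-1}}}$ and perform case distinction on the scheduler directive $d = \SchedNext(\SchedState_{i-1})$ and the rule applied to derive $\tup{m_i,a_i,\buf_i,\ldots}$ from $\tup{m_{i-1},a_{i-1},\buf_{i-1},\ldots}$. Note that the rules \textsc{Step-Others} and \textsc{Step-Eager-Delay} defining $\LoadDelayMuarchStep{}{}$ simply restrict \emph{when} rules of $\muarchStep{d}{}$ may fire (delaying loads in the presence of unresolved branches); whenever such a step does fire, the underlying buffer transformation is exactly the same as in the general hardware semantics of \S\ref{sec:hw-side}. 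Consequently, for each applicable case (the various \textsc{Fetch-}, \textsc{Execute-}, and \textsc{Retire-} rules) we can reuse the argument from Lemma~\ref{lemma:vanilla:buffers-well-formedness}, checking that (i) appending a new command at a fetch preserves the conditions tying tags to the program's branch targets and $\pc$-values, (ii) resolving an in-flight instruction preserves well-formedness because the instruction-compatibility relation $\sim_{\tup{m,a}}$ was designed to relate both unresolved and resolved forms of the same ISA instruction, (iii) a rollback produces a buffer of the form $\buf_{i-1}[0..j-1] \concat \tagged{\passign{\pc}{\ell'}}{\notags}$ whose well-formedness follows by well-formedness of the surviving prefix $\buf_{i-1}[0..j-1]$ together with the fact that $\ell' \in \{\ell'',\ell_0+1\}$ lies in the target set of the resolved branch, and (iv) retiring a head command removes the first entry and shifts the ``anchor'' \archstate{} forward by applying $\update{\cdot}{\cdot}$ to that entry, which is exactly what the recursive well-formedness definition prescribes.

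The main subtlety, and thus the only genuinely new observation beyond Lemma~\ref{lemma:vanilla:buffers-well-formedness}, is ruling out the case in which the eager-delay side condition $\forall \tagged{\passign{\pc}{\ell}}{T} \in \buf[0..i-1].\ T = \notags$ matters for well-formedness. It does not: the side condition only gates when a load may execute, and if a load does execute the resulting rewrite $\pload{x}{e} \mapsto \passign{x}{m(\exprEval{e}{\cdot})}$ matches the instruction-compatibility clause $\pload{x}{e} \sim_{\tup{m,a}} \tagged{\passign{x}{v}}{\notags}$ with $v = m(\exprEval{e}{a})$, so well-formedness is preserved. I expect this verification to be the only place one must pause; every other case is a direct transcription of the corresponding case in Lemma~\ref{lemma:vanilla:buffers-well-formedness}, since neither the fetch, branch-execute, assignment-execute, store-execute, nor any retire rule is altered by the $\muarchStyle{loadDelay}$ modification.
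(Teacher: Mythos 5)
Your proof is correct and follows the same route as the paper, which likewise dispatches this lemma by induction on $i$ together with inspection of the rules defining $\LoadDelayMuarchStep{}{}$ (the paper's proof is essentially a one-line appeal to that argument). Your additional observation that the eager-delay side condition only gates \emph{when} a load may execute, without changing the buffer transformation itself, is exactly the reason the case analysis from Lemma~\ref{lemma:vanilla:buffers-well-formedness} carries over unchanged.
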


\begin{proof}
The lemma follows by (1) induction on $i$, and (2) inspection of the rules defining $\LoadDelayMuarchStep{}{}$.
\end{proof}

\begin{definition}[$\crun-\hrun$ mapping for $\muarchStyle{loadDelay}$ and $\ArchSeqInterf{\cdot}$]\label{def:loadDelay:arch-seq:mapping}
	Let $p$ be a well-formed program, $\sigma_0 = \tup{m,a}$ be an initial \archstate, $\CacheState_0$ be the initial cache state, $\BpState_0$ be the initial branch predictor state, and $\SchedState_0$ be the initial scheduler state.
	Furthermore, let:
	\begin{compactitem}
		\item $\crun := \sigma_0 \ArchSeqInterfStep{o_1}{} \sigma_1 \ArchSeqInterfStep{o_2}{} \sigma_2 \ArchSeqInterfStep{o_3}{} \ldots \ArchSeqInterfStep{o_{n-1}}{} \sigma_n$ be the longest $\interfStyle{seq-arch}$ contract run obtained starting from $\sigma$.
		\item $\hrun := C_0 \LoadDelayMuarchStep{}{} C_1 \LoadDelayMuarchStep{}{} C_2 \LoadDelayMuarchStep{}{} \ldots \LoadDelayMuarchStep{}{} C_k$ be the longest $\muarchStyle{loadDelay}$ hardware run obtained starting from $C_0 = \tup{m,a,\emptysequence, \CacheState_0, \BpState_0, \SchedState_0}$.
		\item $\hrun(i)$ be the $i$-th hardware configuration in $\hrun$.
		\item $\crun(i)$ be the $i$-th contract configuration in $\crun$ (note that $\crun(i) = \sigma_n$ for all $i > n$).
	\end{compactitem}
	The \emph{$\crun-\hrun$ mapping}, which maps hardware configurations in $\hrun$ to contract configurations in $\crun$, is defined as follows:
	\begin{align*}
		\map{\crun}{\hrun}{0} &:= \{0 \mapsto 0\} \\ 
		\map{\crun}{\hrun}{i} &:= {
            \begin{cases}
            \map{\crun}{\hrun}{i-1} & \text{if } \SchedNext(\hrun(i-1)) = \fetch{} \wedge  \Mispred{\hrun(i-1)} \\
			\map{\crun}{\hrun}{i-1} & \text{if } \SchedNext(\hrun(i-1)) = \fetch{} \wedge ln(\hrun(i-1)) = ln(\hrun(i)) \wedge \neg \Mispred{\hrun(i-1)}\\
			fetch_{\crun,\hrun}(i) & \text{if } \SchedNext(\hrun(i-1)) = \fetch{} \wedge ln(\hrun(i-1)) < ln(\hrun(i)) \wedge \neg \Mispred{\hrun(i-1)}\\
			\map{\crun}{\hrun}{i-1} & \text{if } \SchedNext(\hrun(i-1)) = \execute{j} \wedge \neg \Mispred{\hrun(i-1)}\\ 
			shift(\map{\crun}{\hrun}{i -1}) & \text{if } \SchedNext(\hrun(i-1)) = \retire{} \wedge \neg \Mispred{\hrun(i-1)} 
			\end{cases}
		}\\ 
		fetch_{\crun,\hrun}(i) &=
				\map{\crun}{\hrun}{i-1}[ln(\hrun(i-1))+2 \mapsto \map{\crun}{\hrun}{i-1}(ln(\hrun(i-1)))+1]\\
				& \qquad \text{if }
							p(\mathit{lstPc}(\hrun(i -1))) \neq \pjz{x}{\lbl} \wedge 
							p(\mathit{lstPc}(\hrun(i -1))) \neq \pjmp{e} 
							\\
		fetch_{\crun,\hrun}(i) &=
				\map{\crun}{\hrun}{i-1}[ln(\hrun({i-1}))+1 \mapsto \map{\crun}{\hrun}{i-1}(ln(\hrun(i-1)))+1]\\
				& \qquad \text{if }
							p(\mathit{lstPc}(\hrun(i -1))) = \pjmp{e} \vee p(\mathit{lstPc}(\hrun(i -1))) = \pjz{x}{\lbl} 
							\\
		ln(\tup{m,a,\buf,\CacheState,\BpState,\SchedState}) &= |\buf|\\
		\SchedNext(\tup{m,a,\buf,\CacheState,\BpState,\SchedState}) &= \SchedNext(\SchedState)\\
		shift(map) &= \lambda i \in \Nat.\ map(i +1 )\\
		\mathit{lstPc}(\tup{m,a,\buf,\CacheState,\BpState,\SchedState}) &= (\update{\tup{m,a}}{\buf})(\pc)\\
		\Mispred{\tup{m,a,\buf,\CacheState,\BpState,\SchedState}} &= 
	{	\begin{cases}
			\top & \text{if } \forall 1 \leq i \leq |\buf|.\ (\elt{\buf}{i} =  \tagged{\passign{\pc}{\ell}}{\ell'}) \rightarrow (\ell = \mathit{correctPred}(\ell', \update{\tup{m,a}}{\buf[0..i-1]}) )\\
			\bot & \text{otherwise}
		\end{cases}}\\
		\mathit{correctPred}(\ell,a) &= 
	{	\begin{cases}
		\ell'	& \text{if } p(\ell) = \pjz{x}{\ell'} \wedge a(x) = 0\\
		\ell + 1 & \text{otherwise}
		\end{cases}}
	\end{align*}
    \end{definition}
    
    We are now ready to prove Lemma~\ref{lemma:loadDelay:arch-seq:mapping-is-correct}, the main lemma showing the correctness of the $\crun-\hrun$ mapping.

    \begin{lemma}[Correctness of $\crun-\hrun$ mapping]\label{lemma:loadDelay:arch-seq:mapping-is-correct}
        Let $p$ be a well-formed program, $\sigma_0 = \tup{m,a}$ be an initial \archstate, $\CacheState_0$ be the initial cache state, $\BpState_0$ be the initial branch predictor state, and $\SchedState_0$ be the initial scheduler state.
        Furthermore, let:
        \begin{compactitem}
            \item $\crun := \sigma_0 \ArchSeqInterfStep{o_1}{} \sigma_1 \ArchSeqInterfStep{o_2}{} \sigma_2 \ArchSeqInterfStep{o_3}{} \ldots \ArchSeqInterfStep{o_{n-1}}{} \sigma_n$ be the longest $\interfStyle{seq-arch}$ contract run obtained starting from $\sigma_0$.
            \item $\hrun := C_0 \LoadDelayMuarchStep{}{}  C_1 \LoadDelayMuarchStep{}{} C_2 \LoadDelayMuarchStep{}{} \ldots \LoadDelayMuarchStep{}{} C_k$ be the longest $\muarchStyle{loadDelay}$ hardware run obtained starting from $C_0 = \tup{m,a,\emptysequence, \CacheState_0, \BpState_0, \SchedState_0}$.
            \item $\hrun(i)$ be the $i$-th hardware configuration in $\hrun$.
            \item $\crun(i)$ be the $i$-th contract configuration in $\crun$ (note that $\crun(i) = \sigma_n$ for all $i > n$).
            \item $ \map{\crun}{\hrun}{\cdot}$ be the mapping from Definition~\ref{def:loadDelay:arch-seq:mapping}.
        \end{compactitem}
        The following conditions hold:
        \begin{compactenum}[(1)]
        \item $C_0$ is an initial hardware configuration.
        \item $C_k$ is a final hardware configuration or there is no $C_{k'}$ such that $C_k \LoadDelayMuarchStep{}{} C_{k'}$.
        \item for all $0 \leq i \leq k$, given $C_i = \tup{m_i,a_i, \buf_i, \CacheState_i, \BpState_i, \SchedState_i}$ the following conditions hold:
            \begin{compactenum}[(a)]
                \item for all $\buf \in \prefixes{\buf_i, \tup{m_i,a_i}}$,  $\update{\tup{m_i,a_i}}{\buf} = \crun( \map{\crun}{\hrun}{i}(|\buf|) )$.
            \end{compactenum}
        \end{compactenum}
    \end{lemma}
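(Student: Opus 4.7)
The plan is to prove the three claims via the same template as Lemmas~\ref{lemma:vanilla:mapping-is-correct} and~\ref{lemma:seq-processor:mapping-is-correct}: claims (1) and (2) follow immediately from the construction of $\hrun$ (the initial configuration is fixed by $C_0$, and $\hrun$ is chosen as the longest run, so $C_k$ is either final or stuck), and claim (3.a) is shown by induction on $i$ with a case split on the scheduler directive $\SchedNext(C_{i-1})$ and, within each directive, on the specific rule of $\LoadDelayMuarchStep{}{}$ applied. The invariant to maintain is that for every $\buf \in \prefixes{\buf_i, \tup{m_i,a_i}}$, deep-updating $\tup{m_i,a_i}$ by $\buf$ yields exactly the contract state indexed by $\map{\crun}{\hrun}{i}(|\buf|)$, where the map advances only on non-mispredicted steps.

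For the base case, $\buf_0 = \emptysequence$, $\prefixes{\emptysequence,\tup{m,a}}=\{\emptysequence\}$, and $\map{\crun}{\hrun}{0}(0) = 0$, so $\update{\sigma_0}{\emptysequence} = \sigma_0 = \crun(0)$. For the inductive step, I treat the directives in turn. The $\fetch{}$ rules (\textsc{Fetch-Branch-Hit}, \textsc{Fetch-Jump-Hit}, \textsc{Fetch-Others-Hit}, \textsc{Fetch-Miss}) extend $\buf_{i-1}$ by a new command (or leave it unchanged on a cache miss). For prefixes already in $\prefixes{\buf_{i-1},\tup{m_{i-1},a_{i-1}}}$ the claim follows directly from (H.3.a) together with $m_i=m_{i-1}$ and $a_i=a_{i-1}$. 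For the newly introduced maximal prefix, I split on whether $\Mispred{\hrun(i-1)}$ holds: if we are still on the correct path, $\map{\crun}{\hrun}{i}$ extends $\map{\crun}{\hrun}{i-1}$ via $fetch_{\crun,\hrun}(i)$, and I use (H.3.a) applied to $\buf_{i-1}$ to show that one step of $\ArchSeqInterfStep{}{}$ from $\crun(\map{\crun}{\hrun}{i-1}(|\buf_{i-1}|))$ matches the deep-update of the extended buffer (using the reasoning for jumps, branches, and non-control-flow instructions as in Lemma~\ref{lemma:seq-processor:mapping-is-correct}); if we are already speculating, the map is unchanged, and the value of the new prefix's deep-update is not required to match any new contract state.

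The $\execute{j}$ cases are treated with $\neg \Mispred{\hrun(i-1)}$ keeping the map fixed, so (3.a) only needs to track how the deep-update is preserved under the in-place rewriting of the $j$-th command. The interesting subcase is \textsc{Execute-Load-Hit}/\textsc{Execute-Load-Miss}: because $\muarchStyle{loadDelay}$ permits a load at position $j$ only when all preceding branch commands carry tag $\notags$, the prefix up to position $j$ is already on the non-speculative path and $C_{i-1} \bufEquiv{j-1} \crun(\map{\crun}{\hrun}{i-1}(j-1))$ holds by (H.3.a). This is where the $\ArchSeqInterf{\cdot}$ observation $\loadObs{n = m(n)}$ plays its role: the load resolves to exactly the value the sequential contract would produce, so the rewritten $\tagged{\passign{x}{m(\exprEval{e}{\apply{\buf_{i-1}[0..j-1]}{a_{i-1}}})}}{T}$ yields the same deep-update as the original load command. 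For \textsc{Execute-Branch-Commit}/\textsc{Execute-Branch-Rollback} I use $\Mispred$'s definition to argue that resolving an in-place branch that was already correctly predicted preserves deep-update values, whereas the rollback case can only occur when $\neg \Mispred{\hrun(i-1)}$ was true, which is the regime in which the mapping was already stalled. The $\retire{}$ cases mirror the corresponding cases of Lemma~\ref{lemma:seq-processor:mapping-is-correct}: shifting the map by one corresponds to consuming the outermost prefix of the buffer, and the identity $\update{\tup{m_{i-1},a_{i-1}}}{\tagged{i}{\notags} \concat \buf} = \update{\update{\tup{m_{i-1},a_{i-1}}}{\tagged{i}{\notags}}}{\buf}$ lets one push the retired command into the state.

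The main obstacle is the branch-resolution interaction with $\Mispred$ and the map. In particular, for \textsc{Execute-Branch-Commit} and \textsc{Execute-Branch-Rollback} the predicate $\Mispred{\hrun(i-1)}$ depends on the \emph{entire} buffer prefix ending at the resolved branch, so one has to argue carefully that (i) if the branch was on the correct path ($\Mispred{\hrun(i-1)}$ holds), then untagging it preserves the prefix set and the deep-update values for \emph{every} prefix (since the deep-update of a correctly-tagged branch equals that of its untagged version), and (ii) if the branch was mispredicted, then the rollback both truncates $\buf_{i-1}$ to $\buf_{i-1}[0..j-1] \concat \tagged{\passign{\pc}{\ell}}{\notags}$ and — crucially — may flip $\Mispred{\hrun(i)}$ to $\top$, so that subsequent steps will start advancing the map again. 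Establishing this cleanly requires an auxiliary invariant: for all prefixes $\buf$ of $\buf_i$, the deep-update $\update{\tup{m_i,a_i}}{\buf}$ equals $\crun(\map{\crun}{\hrun}{i}(|\buf|))$ on the portion of $\buf$ up to the first mispredicted branch, and this portion is precisely what the contract state models; prefixes going beyond it are not yet claimed by the map and need no constraint. With this invariant in place, all remaining cases reduce to routine bookkeeping on $\apply{\cdot}{\cdot}$ and $\update{\cdot}{\cdot}$ as in the earlier proofs.
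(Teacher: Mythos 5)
Your overall strategy is the right one and matches the paper's proof: (1) and (2) follow from the construction of $\hrun$, and (3.a) is proved by induction on $i$ with a case split on the directive and the applied rule, keeping the map frozen while the buffer is past a mispredicted branch. Two localized points need fixing, however. First, the $\ArchSeqInterf{\cdot}$ observation $\loadObs{n=v}$ plays no role in this lemma. The mapping lemma relates a \emph{single} hardware run to the \emph{single} contract run generated from the same initial state, so there are no observations to compare; the value agreement in \textsc{Execute-Load-Hit} comes entirely from the deep-update semantics of load commands---$\update{\tup{m,a}}{\tagged{\pload{x}{e}}{T}}$ already sets $x$ to $m(\exprEval{e}{a})$---together with the side condition $\pstore{x'}{e'}\not\in\buf_{i-1}[0..j-1]$, which guarantees that the memory the hardware reads from coincides with the memory in the deep-update chain at position $j$. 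The load observation (and the eager-delay restriction itself) only becomes relevant in the companion indistinguishability lemma, where two runs are compared.

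Second, your closing ``auxiliary invariant'' is not auxiliary: it is exactly statement (3.a), because $\prefixes{\buf,\tup{m,a}}$ is already defined to cut prefixes at the first mispredicted branch (a tagged branch can be extended within a prefix only when its prediction is correct, while the prefix ending \emph{at} the tagged branch is always admitted). What you still owe in \textsc{Execute-Branch-Rollback} is the concrete step the paper carries out: after the rollback the buffer $\buf_{i-1}[0..j-1]\concat\tagged{\passign{\pc}{\ell}}{\notags}$ is a new maximal prefix of length $j$, and its deep-update must equal $\crun(\map{\crun}{\hrun}{i}(j))$ even though the map did not move. This follows by applying the induction hypothesis to the old prefix $\buf_{i-1}[0..j-1]\concat\tagged{\passign{\pc}{\lbl}}{\lbl_0}$, which lies in $\prefixes{\buf_{i-1},\tup{m_{i-1},a_{i-1}}}$ even when mispredicted, and by observing that the deep-update of a tagged branch command is defined to resolve $\pc$ to the \emph{correct} target regardless of the predicted one, so it coincides with the deep-update of the untagged corrected command. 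Asserting only that ``the mapping was already stalled'' does not discharge this obligation. (Be careful, too, with the polarity of $\Mispred{\cdot}$: as formally defined it returns $\top$ precisely when \emph{no} branch in the buffer is mispredicted, which is the opposite of how your prose reads it.)
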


    \begin{proof}
    Let $p$ be a well-formed program, $\sigma_0 = \tup{m,a}$ be an initial \archstate, $\CacheState_0$ be the initial cache state, $\BpState_0$ be the initial branch predictor state, and $\SchedState_0$ be the initial scheduler state.
    Furthermore, let:
    \begin{compactitem}
        \item $\crun := \sigma_0 \ArchSeqInterfStep{o_1}{} \sigma_1 \ArchSeqInterfStep{o_2}{} \sigma_2 \ArchSeqInterfStep{o_3}{} \ldots \ArchSeqInterfStep{o_{n-1}}{} \sigma_n$ be the longest $\interfStyle{seq-arch}$ contract run obtained starting from $\sigma_0$.
        \item $\hrun := C_0 \LoadDelayMuarchStep{}{}  C_1 \LoadDelayMuarchStep{}{} C_2 \LoadDelayMuarchStep{}{} \ldots \LoadDelayMuarchStep{}{} C_k$ be the longest $\muarchStyle{loadDelay}$ hardware run obtained starting from $C_0 = \tup{m,a,\emptysequence, \CacheState_0, \BpState_0, \SchedState_0}$.
        \item $\hrun(i)$ be the $i$-th hardware configuration in $\hrun$.
        \item $\crun(i)$ be the $i$-th contract configuration in $\crun$  (note that $\crun(i) = \sigma_n$ for all $i > n$).
        \item $ \map{\crun}{\hrun}{\cdot}$ be the mapping from Definition~\ref{def:loadDelay:arch-seq:mapping}.
    \end{compactitem}
    Observe that $C_0$ is an initial hardware configuration by construction.
    Observe also that $C_k$ is either a final hardware configuration (for which the semantics cannot further proceed) or the computation is stuck (since $\hrun$ is the longest run by construction).
    Therefore, (1) and (2) hold.

    We now prove, by induction on $i$, that (3) holds, i.e., that for all $0 \leq i \leq k$ and for all $\buf \in \prefixes{\buf_i}$,  $\update{\tup{m_i,a_i}}{\buf} = \crun( \map{\crun}{\hrun}{i}(|\buf|) )$.
    \begin{description}
        \item[Base case:]
        For the base case, we consider $i = 0$.
        Observe that $C_0 = \tup{m,a,\emptysequence, \CacheState_0, \BpState_0, \SchedState_0}$.
        Therefore, $\buf_0 = \emptysequence$ and $\prefixes{\buf_0} = \{ \emptysequence \}$ (see Definition~\ref{def:seq-processor:prefixes}).
        Then, (3.a) immediately follows from $\update{m,a}{\emptysequence} = \tup{m,a} = \sigma_0 = \crun(0)$.

        \item[Induction step:] 
        For the induction step, we assume that the claim holds for all $i' < i$ and we show that it holds for $i$ as well.
        In the following, let $C_i = \tup{m_i,a_i, \buf_i, \CacheState_i, \BpState_i, \SchedState_i}$ and we refer to the induction hypothesis as H.
        Similarly, we write H.3.a to denote that fact 3.a holds for the induction hypothesis.
        
        We proceed by case distinction on the directive $\SchedNext(C_{i-1})$ used to derive $C_i$.
        There are three cases:
        \begin{description}
            \item[$\SchedNext(C_{i-1}) = \fetch{}$:]
            Then, $C_i$ has been derived using the \textsc{Step-Others} rule.
            We proceed by case distinction on the $\fetch{}$-rule:
            \begin{description}
                \item[Rule \textsc{Fetch-Branch-Hit}:]
			    Then, $\buf_{i} = \buf_{i-1} \concat \tagged{\passign{\pc}{\lbl'}}{a_{i-1}(\pc)}$, $m_{i} = m_{a-1}$, $a_{i} = a_{i-1}$, and $p(a_{i-1}(\pc)) = \pjz{x}{\lbl}$.
                There are two cases:
                \begin{description}
                    \item[$\Mispred{C_{i-1}}$:]
                    Then, $\map{\crun}{\hrun}{i} = \map{\crun}{\hrun}{i-1}$. 
                    From $\Mispred{C_{i-1}}$ and the well-formedness of buffers (see Lemma~\ref{lemma:loadDelay:arch-seq:buffers-well-formedness}), we have that $\prefixes{\buf_{i-1}, \tup{m_{i-1},a_{i-1}}}  = \prefixes{\buf_{i}, \tup{m_{i},a_{i}}}$ (because $m_{i} = m_{a-1}$, $a_{i} = a_{i-1}$, $\buf_{i} = \buf_{i-1} \concat \tagged{\passign{\pc}{\lbl'}}{a_{i-1}(\pc)}$, and one of the earlier branch instructions has been mispredicted).
                    Let $\buf$ be an arbitrary prefix in $\prefixes{\buf_{i},\tup{m_i,a_i}}$.
                    Then, (3.a) immediately follows from $\prefixes{\buf_{i-1}, \tup{m_{i-1},a_{i-1}}}  = \prefixes{\buf_{i}, \tup{m_{i},a_{i}}}$, $\buf \in \prefixes{\buf_{i}, \tup{m_i,a_i}}$, $\map{\crun}{\hrun}{i} = \map{\crun}{\hrun}{i-1}$, and (H.3.a).
                    Since $\buf$ is an arbitrary prefix in $\prefixes{\buf_{i},\tup{m_i,a_i}}$, (3.a) holds.

                    \item[$\neg \Mispred{C_{i-1}}$:]
                    Then,   $\map{\crun}{\hrun}{i} = \map{\crun}{\hrun}{i-1}[\mathit{ln}(|\buf_{i-1}|) +1 \mapsto \map{\crun}{\hrun}{i-1}(\mathit{ln}(|\buf_{i-1}|)) + 1]$.
                    Let $\buf$ be an arbitrary prefix in $\prefixes{\buf_{i},\tup{m_i,a_i}}$.
                    There are two cases:
                    \begin{description}
                        \item[$\buf \in  \prefixes{\buf_{i-1},\tup{m_{i-1},a_{i-1}}}$:]
                        From (H.3.a), we have that $\update{\tup{m_{i-1},a_{i-1}}}{\buf} = \map{\crun}{\hrun}{i-1}(|\buf|)$.
                        From $m_i = m_{i-1}$ and $a_i = a_{i-1}$, we have that $\update{\tup{m_{i},a_{i}}}{\buf} = \crun(\map{\crun}{\hrun}{i-1}(|\buf|))$.
                        Moreover, from $\buf \in  \prefixes{\buf_{i-1},\tup{m_{i-1},a_{i-1}}}$, we also have that $|\buf| \leq |\buf_{i-1}|$.
                        Therefore, $\map{\crun}{\hrun}{i-1}(|\buf|) = \map{\crun}{\hrun}{i}(|\buf|)$ since $\map{\crun}{\hrun}{i} = \map{\crun}{\hrun}{i-1}[\mathit{ln}(|\buf_{i-1}|) +1 \mapsto \map{\crun}{\hrun}{i-1}(\mathit{ln}(|\buf_{i-1}|)) + 1]$.
                        As a result, we have that $\update{\tup{m_{i},a_{i}}}{\buf} = \crun(\map{\crun}{\hrun}{i}(|\buf|))$.

                        \item[$\buf \not\in  \prefixes{\buf_{i-1},\tup{m_{i-1},a_{i-1}}}$:]
                        Then, $\buf = \buf_i$.
                        Since $\buf_{i-1} \in \prefixes{\buf_{i-1},\tup{m_{i-1},a_{i-1}}}$ (because $\neg \Mispred{C_{i-1}}$ holds), we can apply (H.3.a) to $\buf_{i-1}$ to get $\update{\tup{m_{i-1},a_{i-1}}}{\buf_{i-1}} = \crun(\map{\crun}{\hrun}{i-1}(|\buf_{i-1}|))$.
                        There are two cases:
                        \begin{description}
                            \item[$\update{\tup{m_{i-1},a_{i-1}}}{\buf_{i-1}}(x) = 0$:]
                            Then, $\update{\tup{m_{i},a_{i}}}{\buf} = \update{\tup{m_{i},a_{i}}}{\buf_{i-1}}[\pc \mapsto \lbl]$.
                            From $\update{\tup{m_{i-1},a_{i-1}}}{\buf_{i-1}} = \crun(\map{\crun}{\hrun}{i-1}(|\buf_{i-1}|))$, we therefore get $\crun(\map{\crun}{\hrun}{i-1}(|\buf_{i-1}|))(x) = 0$.
                            Hence, $\crun(\map{\crun}{\hrun}{i-1}(|\buf_{i-1}|)+ 1)$ is obtained from $\crun(\map{\crun}{\hrun}{i-1}(|\buf_{i-1}|))$ by applying the \textsc{Beqz} and \textsc{Beqz-Sat} rules.
                            Therefore, $\crun(\map{\crun}{\hrun}{i-1}(|\buf_{i-1}|)+ 1) = \crun(\map{\crun}{\hrun}{i-1}(|\buf_{i-1}|))[\pc \mapsto \ell]$.
                            Therefore, $\update{\tup{m_{i},a_{i}}}{\buf} = \crun(\map{\crun}{\hrun}{i-1}(|\buf_{i-1}|)+ 1)$.
                            From this, $|\buf| = |\buf_{i-1}|+1$, and $\map{\crun}{\hrun}{i} = \map{\crun}{\hrun}{i-1}[\mathit{ln}(|\buf_{i-1}|) +1 \mapsto \map{\crun}{\hrun}{i-1}(\mathit{ln}(|\buf_{i-1}|)) + 1]$, we have that $\update{\tup{m_{i},a_{i}}}{\buf} = \crun(\map{\crun}{\hrun}{i}(|\buf|))$.

                            \item[$\update{\tup{m_{i-1},a_{i-1}}}{\buf_{i-1}}(x) \neq 0$:]
                            The proof of this case is similar to that of the case $\update{\tup{m_{i-1},a_{i-1}}}{\buf_{i-1}}(x) = 0$.
                        \end{description}
                    \end{description}
                    Since $\buf$ has been selected arbitrarily, (3.a) holds.                    
                \end{description}

                \item[Rule \textsc{Fetch-Jump-Hit}:]
                The proof of this case is similar to the case \textsc{Fetch-Branch-Hit}.

                \item[Rule \textsc{Fetch-Others-Hit}:]
                The proof of this case is similar to the case \textsc{Fetch-Branch-Hit} (see also the corresponding case in Lemma~\ref{lemma:seq-processor:mapping-is-correct}).

                \item[Rule \textsc{Fetch-Miss}:]
			    Then, $\buf_{i} = \buf_{i-1}$, $m_{i} = m_{a-1}$, and $a_{i} = a_{i-1}$.
                Moreover, $\map{\crun}{\hrun}{i} = \map{\crun}{\hrun}{i-1}$.
                Let $\buf$ be an arbitrary prefix in $\prefixes{\buf_{i},\tup{m_i,a_i}}$.
                Then, (3.a) immediately follows from $\prefixes{\buf_i,\tup{m_i,a_i}} = \prefixes{\buf_{i-1}},\tup{m_i,a_i}$, $\buf \in \prefixes{\buf_{i-1}, \tup{m_i,a_i}}$, $m_{i-1} = m_i$, $a_{i-1} = a_i$, $\map{\crun}{\hrun}{i} = \map{\crun}{\hrun}{i-1}$, and (H.3.a).
                Since $\buf$ is an arbitrary prefix in $\prefixes{\buf_{i},\tup{m_i,a_i}}$, (3.a) holds.
                
            \end{description}

            \item[$\SchedNext(C_{i-1}) = \execute{j}$:]
            Then, $\map{\crun}{\hrun}{i} = \map{\crun}{\hrun}{i-1}$.
            We now proceed by case distinction on the applied $\execute{}$ rule:
            \begin{description}
                \item[Rule \textsc{Execute-Load-Hit}:]
                Therefore, we have that $\pbarrier \not\in \buf_{i-1}[0..j-1]$, $\pstore{x'}{e'} \not\in \buf_{i-1}[0..j-1]$, $m_i = m_{i-1}$, $a_i = a_{i-1}$, $\buf_{i} = \buf_{i-1}[0..j-1] \concat  \tagged{\passign{x}{m(\exprEval{e}{ \apply{\buf_{i-1}[0..j-1]}{a_{i-1} } }) }}{T} \concat \buf_{i-1}[j+1..|\buf_{i-1}|]$, and $\elt{\buf_{i-1}}{j} = \tagged{\pload{x}{e}}{T}$.
                Let $\buf$ be an arbitrary prefix in $\prefixes{\buf_{i}, \tup{m_i,a_i}}$.
                There are two cases:
                \begin{description}
                    \item[$|\buf| < j$:] 
                    Since $\buf \in  \prefixes{\buf_{i}, \tup{m_i,a_i}}$, there is no mispredicted branch instruction in $\buf$ (except potentially in the last position).
                    From this, $\buf_{i} = \buf_{i-1}[0..j-1] \concat  \tagged{\passign{x}{m(\exprEval{e}{ \apply{\buf_{i-1}[0..j-1]}{a_{i-1} } }) }}{T} \concat \buf_{i-1}[j+1..|\buf_{i-1}|]$, and $m_i = m_{i-1}$, $a_i = a_{i-1}$, we have that $\buf \in  \prefixes{\buf_{i-1}, \tup{m_{i-1},a_{i-1}}}$.
                    By applying (H.3.a) to $\buf$, we therefore obtain that $\update{\tup{m_{i-1},a_{i-1}}}{\buf} = \crun(\map{\crun}{\hrun}{i-1}(|\buf|))$.
                    From $m_i = m_{i-1}$, $a_i = a_{i-1}$, and $\map{\crun}{\hrun}{i} = \map{\crun}{\hrun}{i-1}$, we get that $\update{\tup{m_{i},a_{i}}}{\buf} = \crun(\map{\crun}{\hrun}{i}(|\buf|))$.

                    \item[$|\buf| \geq j$:]
                    Since $\buf \in  \prefixes{\buf_{i}, \tup{m_i,a_i}}$, there is no mispredicted branch instruction in $\buf$  (except potentially in the last position).
                    Therefore, the prefix $\buf' = \buf[0..j-1] \concat \tagged{\pload{x}{e}}{T} \concat \buf[j+1..|\buf|]$ belongs to $\prefixes{\buf_{i-1}, \tup{m_{i-1},a_{i-1}}}$.
                    By applying (H.3.a) to $\buf'$, we therefore obtain that $\update{\tup{m_{i-1},a_{i-1}}}{\buf'} = \crun(\map{\crun}{\hrun}{i-1}(|\buf'|))$.
                    From $m_i = m_{i-1}$, $a_i = a_{i-1}$, $\map{\crun}{\hrun}{i} = \map{\crun}{\hrun}{i-1}$, and $|\buf| = |\buf'|$, we get that $\update{\tup{m_{i},a_{i}}}{\buf'} = \crun(\map{\crun}{\hrun}{i}(|\buf|))$.
                    Finally, observe that $\update{\tup{m_{i},a_{i}}}{\buf'} = \update{\tup{m_{i},a_{i}}}{\buf}$ (because $\pstore{x'}{e'} \not\in \buf_{i-1}[0..j-1]$).
                    Hence, $\update{\tup{m_{i},a_{i}}}{\buf} = \crun(\map{\crun}{\hrun}{i}(|\buf|))$.

                \end{description}
                Since $\buf$ has been selected arbitrarily, (3.a) holds.
                
                \item[Rule \textsc{Execute-Load-Miss}:]
                Therefore, we have that $m_i = m_{i-1}$, $a_i = a_{i-1}$, $\buf_{i} = \buf_{i-1}$.
                Let $\buf$ be an arbitrary prefix in $\prefixes{\buf_{i}, \tup{m_i,a_i}}$.
                From $m_i = m_{i-1}$, $a_i = a_{i-1}$, $\buf_{i} = \buf_{i-1}$, we have that $\buf \in \prefixes{\buf_{i-1}, \tup{m_{i-1},a_{i-1}}}$.
                By applying (H.3.a) to $\buf$, we get $\update{\tup{m_{i-1},a_{i-1}}}{\buf} = \crun(\map{\crun}{\hrun}{i-1}(|\buf|))$.
                From $m_i = m_{i-1}$, $a_i = a_{i-1}$, $\map{\crun}{\hrun}{i} = \map{\crun}{\hrun}{i-1}$, we get $\update{\tup{m_{i},a_{i}}}{\buf} = \crun(\map{\crun}{\hrun}{i}(|\buf|))$.
                Since $\buf$ has been selected arbitrarily, (3.a) holds.

                \item[Rule \textsc{Execute-Branch-Commit}:]
                Therefore, we have that $\pbarrier \not\in \buf_{i-1}[0..j-1]$, $m_i = m_{i-1}$, $a_i = a_{i-1}$, $\buf_{i} = \buf_{i-1}[0..j-1] \concat  \tagged{\passign{\pc}{ \lbl  }}{\notags}   \concat \buf_{i-1}[j+1..|\buf_{i-1}|]$, $\elt{\buf_{i-1}}{j} = \tagged{\passign{\pc}{\lbl}}{\lbl_0} $, $p(\lbl_0) = \pjz{x}{\lbl''}$, and $( \apply{\buf_{i-1}[0..j-1]}{a_{i-1}}(x) = 0 \wedge \lbl = \lbl'') \vee (\apply{\buf_{i-1}[0..j-1]}{a_{i-1}}(x) \in \Val \setminus \{0,\bot\} \wedge \lbl = \ell_0+1)$.
                Let $\buf$ be an arbitrary prefix in $\prefixes{\buf_{i}, \tup{m_i,a_i}}$.
                There are two cases:
                \begin{description}
                    \item[$|\buf| < j$:] 
                    Since $\buf \in  \prefixes{\buf_{i}, \tup{m_i,a_i}}$, there is no mispredicted branch instruction in $\buf$ (except potentially in the last position).
                    From this, $\buf_{i} = \buf_{i-1}[0..j-1] \concat \tagged{\passign{\pc}{ \lbl  }}{\notags} \concat \buf_{i-1}[j+1..|\buf_{i-1}|]$, and $m_i = m_{i-1}$, $a_i = a_{i-1}$, we have that $\buf \in  \prefixes{\buf_{i-1}, \tup{m_{i-1},a_{i-1}}}$.
                    By applying (H.3.a) to $\buf$, we therefore obtain that $\update{\tup{m_{i-1},a_{i-1}}}{\buf} = \crun(\map{\crun}{\hrun}{i-1}(|\buf|))$.
                    From $m_i = m_{i-1}$, $a_i = a_{i-1}$, and $\map{\crun}{\hrun}{i} = \map{\crun}{\hrun}{i-1}$, we get that $\update{\tup{m_{i},a_{i}}}{\buf} = \crun(\map{\crun}{\hrun}{i}(|\buf|))$.

                    \item[$|\buf| \geq j$:]
                    Since $\buf \in  \prefixes{\buf_{i}, \tup{m_i,a_i}}$, there is no mispredicted branch instruction in $\buf$  (except potentially in the last position).
                    Therefore, the prefix $\buf' = \buf[0..j-1] \concat \tagged{\passign{\pc}{\lbl}}{\lbl_0} \concat \buf[j+1..|\buf|]$ belongs to $\prefixes{\buf_{i-1}, \tup{m_{i-1},a_{i-1}}}$.
                    By applying (H.3.a) to $\buf'$, we therefore obtain that $\update{\tup{m_{i-1},a_{i-1}}}{\buf'} = \crun(\map{\crun}{\hrun}{i-1}(|\buf'|))$.
                    From $m_i = m_{i-1}$, $a_i = a_{i-1}$, $\map{\crun}{\hrun}{i} = \map{\crun}{\hrun}{i-1}$, and $|\buf| = |\buf'|$, we get that $\update{\tup{m_{i},a_{i}}}{\buf'} = \crun(\map{\crun}{\hrun}{i}(|\buf|))$.
                    Finally, observe that $\update{\tup{m_{i},a_{i}}}{\buf'} = \update{\tup{m_{i},a_{i}}}{\buf}$ (because $( \apply{\buf_{i-1}[0..j-1]}{a_{i-1}}(x) = 0 \wedge \lbl = \lbl'') \vee (\apply{\buf_{i-1}[0..j-1]}{a_{i-1}}(x) \in \Val \setminus \{0,\bot\} \wedge \lbl = \ell_0+1)$).
                    Hence, $\update{\tup{m_{i},a_{i}}}{\buf} = \crun(\map{\crun}{\hrun}{i}(|\buf|))$.

                \end{description}
                Since $\buf$ has been selected arbitrarily, (3.a) holds.

                \item[Rule \textsc{Execute-Branch-Rollback}:]
                Therefore, we have that $\pbarrier \not\in \buf_{i-1}[0..j-1]$, $m_i = m_{i-1}$, $a_i = a_{i-1}$, $\buf_{i} = \buf_{i-1}[0..j-1] \concat  \tagged{\passign{\pc}{ \lbl  }}{\notags}$, $\elt{\buf_{i-1}}{j} = \tagged{\passign{\pc}{\lbl}}{\lbl_0} $, $p(\lbl_0) = \pjz{x}{\lbl''}$, and $(a'(x) = 0 \wedge \lbl \neq \lbl'') \vee (a'(x) \in \Val \setminus \{0,\bot\} \wedge \lbl \neq \ell_0+1)$.
                Let $\buf$ be an arbitrary prefix in $\prefixes{\buf_{i}, \tup{m_i,a_i}}$.
                There are two cases:
                \begin{description}
                    \item[$|\buf| < j$:] 
                    Since $\buf \in  \prefixes{\buf_{i}, \tup{m_i,a_i}}$, there is no mispredicted branch instruction in $\buf$  (except potentially in the last position).
                    From this, $\buf_{i} = \buf_{i-1}[0..j-1] \concat \tagged{\passign{\pc}{ \lbl  }}{\notags} \concat \buf_{i-1}[j+1..|\buf_{i-1}|]$, and $m_i = m_{i-1}$, $a_i = a_{i-1}$, we have that $\buf \in  \prefixes{\buf_{i-1}, \tup{m_{i-1},a_{i-1}}}$.
                    By applying (H.3.a) to $\buf$, we therefore obtain that $\update{\tup{m_{i-1},a_{i-1}}}{\buf} = \crun(\map{\crun}{\hrun}{i-1}(|\buf|))$.
                    From $m_i = m_{i-1}$, $a_i = a_{i-1}$, and $\map{\crun}{\hrun}{i} = \map{\crun}{\hrun}{i-1}$, we get that $\update{\tup{m_{i},a_{i}}}{\buf} = \crun(\map{\crun}{\hrun}{i}(|\buf|))$.

                    \item[$|\buf| = j$:]
                    From this, $\buf = \buf_{i}$.
                    Since $\buf \in  \prefixes{\buf_{i}, \tup{m_i,a_i}}$, there is no mispredicted branch instruction in $\buf$.
                    Therefore, the prefix $\buf' = \buf[0..j-1] \concat \tagged{\passign{\pc}{\lbl}}{\lbl_0}$ belongs to $\prefixes{\buf_{i-1}, \tup{m_{i-1},a_{i-1}}}$.
                    By applying (H.3.a) to $\buf'$, we therefore obtain that $\update{\tup{m_{i-1},a_{i-1}}}{\buf'} = \crun(\map{\crun}{\hrun}{i-1}(|\buf'|))$.
                    From $m_i = m_{i-1}$, $a_i = a_{i-1}$, $\map{\crun}{\hrun}{i} = \map{\crun}{\hrun}{i-1}$, and $|\buf| = |\buf'|$, we get that $\update{\tup{m_{i},a_{i}}}{\buf'} = \crun(\map{\crun}{\hrun}{i}(|\buf|))$.
                    Finally, observe that $\update{\tup{m_{i},a_{i}}}{\buf'} = \update{\tup{m_{i},a_{i}}}{\buf}$ (because $(a'(x) = 0 \wedge \lbl \neq \lbl'') \vee (a'(x) \in \Val \setminus \{0,\bot\} \wedge \lbl \neq \ell_0+1)$).
                    Hence, $\update{\tup{m_{i},a_{i}}}{\buf} = \crun(\map{\crun}{\hrun}{i}(|\buf|))$.

                \end{description}
                Since $\buf$ has been selected arbitrarily, (3.a) holds.

                \item[Rule \textsc{Execute-Assignment}:]
                The proof of this case is similar to that of the \textsc{Execute-Load-Hit} rule.

                \item[Rule \textsc{Execute-Marked-Assignment}:]
                The proof of this case is similar to that of the \textsc{Execute-Load-Hit} rule.

                \item[Rule \textsc{Execute-Store}:]
                The proof of this case is similar to that of the \textsc{Execute-Load-Hit} rule.

                \item[Rule \textsc{Execute-Skip}:]
                The proof of this case is similar to that of the \textsc{Execute-Load-Miss} rule.

                \item[Rule \textsc{Execute-Barrier}:]
                The proof of this case is similar to that of the \textsc{Execute-Load-Miss} rule.

            \end{description}
            This completes the proof for the $\execute{j}$ case.

            \item[$\SchedNext(C_{i-1}) = \retire{}$:]
            The proof of the $\retire{}$ case is similar to the proof of the corresponding case in Lemma~\ref{lemma:seq-processor:mapping-is-correct}.

        \end{description}
    \end{description}
    \end{proof}

\subsubsection{Indistinguishability lemma}

\begin{definition}[Deep-indistinguishability of hardware configurations]\label{def:loadDelay:arch-seq:deep-indistinguishability}
	We say that two hardware configurations $\tup{\sigma,\mu} = \tup{m,a,\buf, \CacheState,\BpState, \SchedState}$ and $\tup{\sigma',\mu'} = \tup{m',a',\buf', \CacheState',\BpState', \SchedState'}$ are \emph{deep-indistinguishable}, written $\tup{\sigma,\mu} \sim \tup{\sigma',\mu'}$, iff
    \begin{inparaenum}[(a)]
        \item $a = a'$,
		\item $\buf = \buf'$,
		\item $\CacheState = \CacheState'$,
		\item $\BpState = \BpState'$, and
		\item $\SchedState = \SchedState'$.
	\end{inparaenum}
\end{definition}

\begin{lemma}[Observation equivalence preserves deep-indistinguishability]\label{lemma:loadDelay:arch-seq:trace-equiv-implies-stepwise-indistinguishability}
Let $p$ be a well-formed program and $C_0 = \tup{m_0,a_0,\buf_0,\CacheState_0, \BpState_0, \SchedState_0}$, $C_0' = \tup{m_0',a_0',\buf_0',\CacheState_0', \BpState_0', \SchedState_0'}$ be reachable hardware configurations.
If
\begin{inparaenum}[(a)]
	\item $C_0 \sim C_0'$, and
	\item for all $\buf \in \prefixes{\buf_0,\tup{m_0,a_0}}$, $\buf' \in \prefixes{\buf_0',\tup{m_0',a_0'}}$ such that $|\buf| = |\buf'|$, 
	there are $\sigma_0, \sigma_0', \sigma_1, \sigma_1', \tau, \tau'$ such that $\sigma_0 \ArchSeqInterfStep{\tau}{} \sigma_1$, $\sigma_0' \ArchSeqInterfStep{\tau'}{} \sigma_1'$, $\tau = \tau'$, $C_0 \bufEquiv{|\buf|} \sigma_0$, and $C_0' \bufEquiv{|\buf'|} \sigma_0'$,
\end{inparaenum}
then either there are $C_1, C_1'$ such that $C_0 \LoadDelayMuarchStep{}{} C_1$, $C_0' \LoadDelayMuarchStep{}{} C_1'$, and $C_1 \sim C_1'$ or there is no $C_1$ such that $C_0 \LoadDelayMuarchStep{}{} C_1$ and no $C_1'$ such that $C_0' \LoadDelayMuarchStep{}{} C_1'$.
\end{lemma}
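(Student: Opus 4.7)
The plan is to proceed by case analysis on the scheduler directive $d = \SchedNext(\SchedState_0)$, analogously to Lemma~\ref{lemma:load-delay:spec-ct:trace-equiv-implies-stepwise-indistinguishability} but exploiting the much tighter deep-indistinguishability relation (Definition~\ref{def:loadDelay:arch-seq:deep-indistinguishability}): since $a_0 = a_0'$, $\buf_0 = \buf_0'$, and the cache, predictor, and scheduler states coincide, the ``structural'' parts of the step automatically match. From condition (a), both configurations select the same directive, and for fetches and non-load executes the chosen rule is fully determined by the components that are already equal, so the syntactic updates to $\buf$, $a$, $\CacheState$, $\BpState$, $\SchedState$ are identical in $C_0$ and $C_0'$ and the resulting configurations trivially satisfy $\sim$.

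The $\fetch{}$ cases follow immediately from $\apply{\buf_0}{a_0} = \apply{\buf_0'}{a_0'}$ and $\BpState_0 = \BpState_0'$ (so the predicted target is the same), and the $\retire{}$ cases are immediate because the head command coincides in both buffers: in particular, a $\retire$ of $\tagged{\pstore{v}{n}}{\notags}$ writes the same $v$ at the same address $n$ to both memories, preserving agreement on whatever cells agreed before. The $\execute{i}$ rules for assignments, stores, skips, barriers, and branch commit/rollback similarly produce identical buffer and assignment updates because all inputs they depend on are equal. Whenever the preconditions on $\buf$ or $a$ cause one side to be stuck, they cause the other to be stuck as well, discharging the ``both stuck'' disjunct of (c).

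The only nontrivial case is \textsc{Execute-Load-Hit}, where the new buffer contains $\tagged{\passign{x}{m_0(n)}}{T}$ on the left and $\tagged{\passign{x}{m_0'(n)}}{T}$ on the right, with $n = \exprEval{e}{\apply{\buf_0[0..i-1]}{a_0}}$; since memories may disagree, we need $m_0(n) = m_0'(n)$. This is exactly where the \textsc{Step-Eager-Delay} premise becomes crucial: it guarantees that every $\tagged{\passign{\pc}{\ell}}{T}$ in $\buf_0[0..i-1]$ has $T = \notags$, i.e. no unresolved branch precedes the load. Consequently $\buf_0[0..i-1]$ is a non-speculative prefix and belongs to $\prefixes{\buf_0, \tup{m_0, a_0}}$ (and symmetrically for the primed run). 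Applying assumption (b) to this prefix yields contract states $\sigma_0, \sigma_0'$ with $C_0 \bufEquiv{|\buf_0[0..i-1]|} \sigma_0$, $C_0' \bufEquiv{|\buf_0[0..i-1]|} \sigma_0'$, steps $\sigma_0 \ArchSeqInterfStep{\tau}{} \sigma_1$ and $\sigma_0' \ArchSeqInterfStep{\tau'}{} \sigma_1'$, and $\tau = \tau'$. By well-formedness of $\buf_0$ together with the $\sim_{\tup{m,a}}$ relation, $p(\sigma_0(\pc)) = p(\sigma_0'(\pc)) = \pload{x}{e}$, so the \textsc{Load} rule of $\ArchSeqInterfStep{}{}$ fires in both, producing observations of the form $\loadObs{n = m_{\sigma_0}(n)}$ and $\loadObs{n' = m_{\sigma_0'}(n')}$. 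From $\tau = \tau'$ we read off both $n = n'$ (already known) and $m_{\sigma_0}(n) = m_{\sigma_0'}(n)$; because $C_0 \bufEquiv{|\buf_0[0..i-1]|} \sigma_0$ preserves memory we conclude $m_0(n) = m_0'(n)$, and the two buffer updates coincide, preserving $\sim$. The \textsc{Execute-Load-Miss} case is easier (no buffer change, only the cache is touched at a common address, which is already equal).

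The main obstacle is precisely this load step: leveraging the \textsc{Step-Eager-Delay} side condition to certify that the load's prefix is non-speculative, so that the $\ArchSeqInterf{\cdot}$ contract (which only emits observations on sequential paths) actually produces a load observation carrying the value, and then turning $\tau = \tau'$ into pointwise equality of the memory contents read. Every other case reduces to straightforward bookkeeping on top of the equalities already supplied by $\sim$.
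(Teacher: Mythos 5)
Your proposal is correct and follows essentially the same route as the paper's proof: the equality components of $\sim$ (assignments, buffers, cache, predictor, scheduler) dispose of every case except \textsc{Execute-Load-Hit}, and there the \textsc{Step-Eager-Delay} side condition is used exactly as in the paper to certify that the load's prefix contains no unresolved branches, hence lies in $\prefixes{\buf_0,\tup{m_0,a_0}}$, so that hypothesis (b) together with the value-revealing $\loadObs{n=v}$ observation of $\ArchSeqInterf{\cdot}$ yields $m_0(n)=m_0'(n)$. The only detail worth making explicit is that the no-pending-store premise of \textsc{Execute-Load-Hit} is what lets you transfer the equality of the contract states' memories back to $m_0(n)=m_0'(n)$ via $\bufEquiv{|\buf_0[0..i-1]|}$.
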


\begin{proof}
Let $p$ be a well-formed program and $C_0 = \tup{m_0,a_0,\buf_0,\CacheState_0, \BpState_0, \SchedState_0}$, $C_0' = \tup{m_0',a_0',\buf_0',\CacheState_0', \BpState_0', \SchedState_0'}$ be reachable hardware configurations.
Moreover, we assume that conditions (a) and (b) holds. 
In the following, we denote by (c) the post-condition ``either there are $C_1, C_1'$ such that $C_0 \LoadDelayMuarchStep{}{} C_1$, $C_0' \LoadDelayMuarchStep{}{} C_1'$, and $C_1 \sim C_1'$ or there is no $C_1$ such that $C_0 \LoadDelayMuarchStep{}{} C_1$ and no $C_1'$ such that $C_0' \LoadDelayMuarchStep{}{} C_1'$.''

From (a), it follows that $\SchedState_0 = \SchedState_0$.
Therefore, the directive obtained from the scheduler is the same in both cases, i.e., $\SchedNext(\SchedState_0) = \SchedNext(\SchedState_0')$.
We proceed by case distinction on the directive $d = \SchedNext(\SchedState_0)$:
\begin{description}
    \item[$d = \fetch{}$:]
	Therefore, we can only apply one of the $\fetch{}$ rules depending on the current program counter.
	There are two cases: 
	\begin{description}
		\item[$\apply{\buf_0}{a_0}(\pc) \neq \bot \wedge |\buf_0| < \wMuarch$:]
		There are several cases:
		\begin{description}
			\item[$\CacheAccess(\CacheState_0,  \apply{\buf_0}{a_0}(\pc)) = \CacheHit \wedge p(\apply{\buf_0}{a_0}(\pc)) = \pjz{x}{\lbl}$:] 
			From (a), we get that $\CacheState_0' = \CacheState_0$ and $\apply{\buf_0'}{a_0'}(\pc) = \apply{\buf_0}{a_0}(\pc)$ (because $\buf_0 = \buf_0'$ and $a_0 = a_0'$).
			Therefore, $\CacheAccess(\CacheState_0',  \apply{\buf_0'}{a_0'}(\pc)) = \CacheHit$.
			Moreover, from (a) we also get that $\apply{\buf_0}{a_0}(\pc)=\apply{\buf_0'}{a_0'}(\pc)$ and, therefore, $p(\apply{\buf_0'}{a_0'}(\pc)) = \pjz{x}{\lbl}$ as well.
			Therefore, we can apply the \textsc{Fetch-Branch-Hit} and \textsc{Step-Others} rules to $C_0$ and $C_0'$ as follows:
			\begin{align*}
				\lbl_0 &:= \BpPredict(\BpState_0, \apply{\buf_0}{a_0}(\pc))\\
				\lbl_0' &:= \BpPredict(\BpState_0', \apply{\buf_0'}{a_0'}(\pc))\\
				\buf_1 &:= \buf_0  \concat  \tagged{\passign{\pc}{\lbl_0}}{\apply{\buf_0}{a_0}(\pc)}\\
				\buf_1' &:= \buf_0'  \concat \tagged{\passign{\pc}{\lbl_0'}}{\apply{\buf_0'}{a_0'}(\pc)}\\
				\tup{m_0,a_0,\buf_0,\CacheState_0,\BpState_0} &\muarchStep{\fetch{}}{} \tup{m_0, a_0, \buf_1, \CacheUpdate(\CacheState_0, \apply{\buf_0}{a_0}(\pc)),\BpState_0}\\
				\tup{m_0,a_0,\buf_0,\CacheState_0,\BpState_0, \SchedState_0} &\LoadDelayMuarchStep{}{} \tup{m_0, a_0, \buf_1, \CacheUpdate(\CacheState_0, \apply{\buf_0}{a_0}(\pc)),\BpState_0, \SchedUpdate(\SchedState_0, \BufProject{\buf_1})}\\
				\tup{m_0',a_0',\buf_0',\CacheState_0',\BpState_0'} &\muarchStep{\fetch{}}{} \tup{m_0', a_0', \buf_1', \CacheUpdate(\CacheState_0',  \apply{\buf_0'}{a_0'}(\pc)),\BpState_0'}\\
				\tup{m_0',a_0',\buf_0',\CacheState_0',\BpState_0', \SchedState_0'} &\LoadDelayMuarchStep{}{} \tup{m_0', a_0', \buf_1', \CacheUpdate(\CacheState_0',  \apply{\buf_0'}{a_0'}(\pc)),\BpState_0', \SchedUpdate(\SchedState_0', \BufProject{\buf_1'})}
			\end{align*}
			We now show that $C_1 =  \tup{m_0, a_0, \buf_1, \CacheUpdate(\CacheState_0, \apply{\buf_0}{a_0}(\pc)),\BpState_0, \SchedUpdate(\SchedState_0, \BufProject{\buf_1})}$ and $C_1' = \tup{m_0', a_0', \buf_1', \CacheUpdate(\CacheState_0',  \apply{\buf_0'}{a_0'}(\pc)),\BpState_0', \SchedUpdate(\SchedState_0', \BufProject{\buf_1'})}$ are indistinguishable, i.e., $C_1 \sim C_1'$.
			For this, we need to show that:
            \begin{description}
                \item[$a_0 = a_0'$:]
                This follows from (a).

				\item[$\apply{\buf_1}{a_0}(\pc) = \apply{\buf_1'}{a_0'}(\pc)$:]
				We know that $\apply{\buf_1}{a_0}(\pc) = \lbl_0$ and $\apply{\buf_1'}{a_0'}(\pc) = \lbl_0'$.
				From $\lbl_0 = \BpPredict(\BpState_0, \apply{\buf_0}{a_0}(\pc))$, $\lbl_0' = \BpPredict(\BpState_0', \apply{\buf_0'}{a_0'}(\pc))$, and (a), we immediately get $\lbl_0 = \lbl_0'$.

				\item[${\buf_1} = {\buf_1'}$:] 
                Observe that (1) $\apply{\buf_0}{a_0}(\pc) = \apply{\buf_0}{a_0}(\pc)$ (which follows from $a_0 = a_0'$ and $\buf_0 = \buf_0'$ thanks to (a)), and (2) $\lbl_0 = \lbl_0'$ which follows $\lbl_0 = \BpPredict(\BpState_0, \apply{\buf_0}{a_0}(\pc))$, $\lbl_0' = \BpPredict(\BpState_0', \apply{\buf_0'}{a_0'}(\pc))$, and (a).
                From this, (a),  $\buf_1 = \buf_0  \concat  \tagged{\passign{\pc}{\lbl_0}}{\apply{\buf_0}{a_0}(\pc)}$, $\buf_1' = \buf_0'  \concat  \tagged{\passign{\pc}{\lbl_0'}}{\apply{\buf_0'}{a_0'}(\pc)}$, we get that ${\buf_1} = {\buf_1'}$.

				\item[$\CacheUpdate(\CacheState_0,  \apply{\buf_0}{a_0}(\pc)) = \CacheUpdate(\CacheState_0',  \apply{\buf_0'}{a_0'}(\pc))$:]
				This follows from $\CacheState_0 = \CacheState_0'$, which in turn follows from (a), and $\apply{\buf_0}{a_0}(\pc) = \apply{\buf_0'}{a_0'}(\pc)$.

				\item[$\BpState_0 = \BpState_0':$] 
				This follows from (a).

				\item[$\SchedUpdate(\SchedState_0, \BufProject{\buf_1}) = \SchedUpdate(\SchedState_0', \BufProject{\buf_1'})$:]
				From (a), we have   $\SchedState_0 = \SchedState_0'$.
				From ${\buf_1} = {\buf_1'}$, we have $\BufProject{\buf_1} = \BufProject{\buf_1'}$.
				Therefore, $\SchedUpdate(\SchedState_0, \BufProject{\buf_1}) = \SchedUpdate(\SchedState_0', \BufProject{\buf_1'})$.
			\end{description}
			Therefore, $C_1 \sim C_1'$ and (c) holds.

			\item[$\CacheAccess(\CacheState_0,  \apply{\buf_0}{a_0}(\pc)) = \CacheHit \wedge p(\apply{\buf_0}{a_0}(\pc)) = \pjmp{e}$:] 
			The proof of this case is similar to that of the case $\CacheAccess(\CacheState_0,  \apply{\buf_0}{a_0}(\pc)) = \CacheHit \wedge p(\apply{\buf_0}{a_0}(\pc)) = \pjz{x}{\lbl}$.
			
			\item[$\CacheAccess(\CacheState_0,  \apply{\buf_0}{a_0}(\pc)) = \CacheHit \wedge p(\apply{\buf_0}{a_0}(\pc)) \neq \pjz{x}{\lbl} \wedge p(\apply{\buf_0}{a_0}(\pc)) \neq \pjmp{e}$:]
			Observe that from (a) it follows that $|\buf_0| \geq \wMuarch-1$ iff $|\buf_0'| \geq \wMuarch-1$.
			Therefore, if $|\buf_0| \geq \wMuarch-1$, then (c) holds since both computations are stuck.
			In the following, we assume that $|\buf_0| < \wMuarch-1$ and $|\buf_0'| < \wMuarch-1$.
			
			From (a), we get that $\CacheState_0' = \CacheState_0$, $\buf_0 = \buf_0'$, and $a_0 = a_0'$.
			Therefore, $\CacheAccess(\CacheState_0',  \apply{\buf_0'}{a_0'}(\pc)) = \CacheHit$.
			Moreover, $p(\apply{\buf_0'}{a_0'}(\pc)) \neq \pjz{x}{\lbl} \wedge p(\apply{\buf_0'}{a_0'}(\pc)) \neq \pjmp{e}$ as well.
			Therefore, we can apply the \textsc{Fetch-Others-Hit} and \textsc{Step-Others} rules to $C_0$ and $C_0'$ as follows:
			\begin{align*}
				v &:= \apply{\buf_0}{a_0}(\pc) +1 \\
				v' &:= \apply{\buf_0'}{a_0'}(\pc) +1 \\
				\buf_1 &:= \buf_0  \concat \tagged{p(\apply{\buf_0}{a_0}(\pc))}{\notags} \concat \tagged{\pmarkedassign{\pc}{v}}{\notags}\\
				\buf_1' &:= \buf_0'  \concat \tagged{p(\apply{\buf_0'}{a_0'}(\pc))}{\notags} \concat \tagged{\pmarkedassign{\pc}{v'}}{\notags}\\
				\tup{m_0,a_0,\buf_0,\CacheState_0,\BpState_0} &\muarchStep{\fetch{}}{} \tup{m_0, a_0, \buf_1, \CacheUpdate(\CacheState_0, \apply{\buf_0}{a_0}(\pc)),\BpState_0}\\
				\tup{m_0,a_0,\buf_0,\CacheState_0,\BpState_0, \SchedState_0} &\LoadDelayMuarchStep{}{} \tup{m_0, a_0, \buf_1, \CacheUpdate(\CacheState_0, \apply{\buf_0}{a_0}(\pc)),\BpState_0, \SchedUpdate(\SchedState_0, \BufProject{\buf_1})}\\
				\tup{m_0',a_0',\buf_0',\CacheState_0',\BpState_0'} &\muarchStep{\fetch{}}{} \tup{m_0', a_0', \buf_1', \CacheUpdate(\CacheState_0',  \apply{\buf_0'}{a_0'}(\pc)),\BpState_0'}\\
				\tup{m_0',a_0',\buf_0',\CacheState_0',\BpState_0', \SchedState_0'} & \LoadDelayMuarchStep{}{} \tup{m_0', a_0', \buf_1', \CacheUpdate(\CacheState_0',  \apply{\buf_0'}{a_0'}(\pc)),\BpState_0', \SchedUpdate(\SchedState_0', \BufProject{\buf_1'})}
			\end{align*}
			We now show that $C_1 =  \tup{m_0, a_0, \buf_1, \CacheUpdate(\CacheState_0, \apply{\buf_0}{a_0}(\pc)),\BpState_0, \SchedUpdate(\SchedState_0, \BufProject{\buf_1})}$ and $C_1' = \tup{m_0', a_0', \buf_1', \CacheUpdate(\CacheState_0',  \apply{\buf_0'}{a_0'}(\pc)),\BpState_0', \SchedUpdate(\SchedState_0', \BufProject{\buf_1'})}$ are indistinguishable, i.e., $C_1 \sim C_1'$.
			For this, we need to show that:
            \begin{description}
                \item[$a_0 = a_0'$:]
                This follows from (a). 

				\item[${\buf_1} = {\buf_1'}$:] 
				This follows from ${\buf_0} = {\buf_0'}$, which in turn follows from (a), $\buf_1 = \buf_0  \concat \tagged{p(\apply{\buf_0}{a_0}(\pc))}{\notags} \concat \tagged{\pmarkedassign{\pc}{v}}{\notags}$, $\buf_1' = \buf_0'  \concat \tagged{p(\apply{\buf_0'}{a_0'}(\pc))}{\notags} \concat \tagged{\pmarkedassign{\pc}{v'}}{\notags}$, $\apply{\buf_0}{a_0}(\pc) = \apply{\buf_0'}{a_0'}(\pc)$, and $v = v'$.

				\item[$\CacheUpdate(\CacheState_0,  \apply{\buf_0}{a_0}(\pc)) = \CacheUpdate(\CacheState_0',  \apply{\buf_0'}{a_0'}(\pc))$:]
				This follows from $\CacheState_0 = \CacheState_0'$, which in turn follows from (a), and $\apply{\buf_0}{a_0}(\pc) = \apply{\buf_0'}{a_0'}(\pc)$.

				\item[$\BpState_0 = \BpState_0':$] 
				This follows from (a).

				\item[$\SchedUpdate(\SchedState_0, \BufProject{\buf_1}) = \SchedUpdate(\SchedState_0', \BufProject{\buf_1'})$:]
				From (a), we have   $\SchedState_0 = \SchedState_0'$.
				From ${\buf_1} = {\buf_1'}$, we have $\BufProject{\buf_1} = \BufProject{\buf_1'}$.
				Therefore, $\SchedUpdate(\SchedState_0, \BufProject{\buf_1}) = \SchedUpdate(\SchedState_0', \BufProject{\buf_1'})$.
			\end{description}
			Therefore, $C_1 \sim C_1'$ and (c) holds.

			\item[$\CacheAccess(\CacheState_0,  \apply{\buf_0}{a_0}(\pc)) = \CacheMiss$:]
			From (a), we get that $\CacheState_0' = \CacheState_0$ and $\apply{\buf_0'}{a_0'}(\pc) = \apply{\buf_0}{a_0}(\pc)$.
			Therefore, $\CacheAccess(\CacheState_0',  \apply{\buf_0'}{a_0'}(\pc)) = \CacheMiss$.
			Therefore, we can apply the \textsc{Fetch-Miss} and \textsc{Step-Others} rules to $C_0$ and $C_0'$ as follows:
			\begin{align*}
				\tup{m_0,a_0,\buf_0,\CacheState_0,\BpState_0} &\muarchStep{\fetch{}}{} \tup{m_0, a_0, \buf_0, \CacheUpdate(\CacheState_0, \apply{\buf_0}{a_0}(\pc)),\BpState_0}\\
				\tup{m_0,a_0,\buf_0,\CacheState_0,\BpState_0, \SchedState_0} &\LoadDelayMuarchStep{}{} \tup{m_0, a_0, \buf_0, \CacheUpdate(\CacheState_0,  \apply{\buf_0}{a_0}(\pc)),\BpState_0, \SchedUpdate(\SchedState_0, \BufProject{\buf_0})}\\
				\tup{m_0',a_0',\buf_0',\CacheState_0',\BpState_0'} &\muarchStep{\fetch{}}{} \tup{m_0', a_0', \buf_0', \CacheUpdate(\CacheState_0',  \apply{\buf_0'}{a_0'}(\pc)),\BpState_0'}\\
				\tup{m_0',a_0',\buf_0',\CacheState_0',\BpState_0', \SchedState_0'} &\LoadDelayMuarchStep{}{} \tup{m_0', a_0', \buf_0', \CacheUpdate(\CacheState_0',  \apply{\buf_0'}{a_0'}(\pc)),\BpState_0', \SchedUpdate(\SchedState_0', \BufProject{\buf_0'})}
			\end{align*}
			We now show that $C_1 = \tup{m_0, a_0, \buf_0, \CacheUpdate(\CacheState_0,  \apply{\buf_0}{a_0}(\pc)),\BpState_0, \SchedUpdate(\SchedState_0, \BufProject{\buf_0})}$ and $C_1' = \tup{m_0', a_0', \buf_0', \CacheUpdate(\CacheState_0',  \apply{\buf_0'}{a_0'}(\pc)),\BpState_0', \SchedUpdate(\SchedState_0', \BufProject{\buf_0'})}$ are indistinguishable, i.e., $C_1 \sim C_1'$.
			For this, we need to show that:
            \begin{description}
                \item[$a_0 = a_0'$:]
                This follows from (a). 

				\item[${\buf_0} = {\buf_0'}$:] 
				This follows from (a).

				\item[$\CacheUpdate(\CacheState_0,  \apply{\buf_0}{a_0}(\pc)) = \CacheUpdate(\CacheState_0',  \apply{\buf_0'}{a_0'}(\pc))$:]
				This follows from $\CacheState_0 = \CacheState_0'$, which in turn follows from (a), and $\apply{\buf_0}{a_0}(\pc) = \apply{\buf_0'}{a_0'}(\pc)$.

				\item[$\BpState_0 = \BpState_0':$] 
				This follows from (a).

				\item[$\SchedUpdate(\SchedState_0, \BufProject{\buf_0}) = \SchedUpdate(\SchedState_0', \BufProject{\buf_0'})$:]
				From (a), we have   $\SchedState_0 = \SchedState_0'$.
				From ${\buf_1} = {\buf_1'}$, we have $\BufProject{\buf_1} = \BufProject{\buf_1'}$.
				Therefore, $\SchedUpdate(\SchedState_0, \BufProject{\buf_1}) = \SchedUpdate(\SchedState_0', \BufProject{\buf_1'})$.
			\end{description}
			Therefore, $C_1 \sim C_1'$ and (c) holds.
		\end{description}
		
		\item[$\apply{\buf_0}{a_0}(\pc) = \bot \vee |\buf_0| \geq \wMuarch$:]
		Then, from (a), we immediately get that $\apply{\buf_0'}{a_0'}(\pc) = \bot \vee |\buf_0'| \geq \wMuarch$ holds as well.
		Therefore, both computations are stuck and (c) holds.
	\end{description}
    Therefore, (c) holds for all the cases.
    
    \item[$d = \execute{i}:$]
    Therefore, we can only apply one of the $\execute{}$ rules.
    There are two cases:
    \begin{description}
        \item[$i \leq |\buf_0| \wedge \pbarrier \not\in {\buf_0[0..i-1]}$:]
        There are several cases depending on the $i$-th command in the reorder buffer:
        \begin{description}
            \item[$\elt{\buf_0}{i} = \tagged{\pload{x}{e}}{T}$:]   
            From (a), we have $a_0 = a_0'$ and $\buf_0 = \buf_0'$.
            Therefore, we also have that $\elt{\buf_0'}{i} =  \tagged{\pload{x}{e}}{T}$, $i \leq |\buf_0'|$, and $\pbarrier \not\in \buf_0'[0..i-1]$. 
            There are two cases:
            \begin{description}
                \item[$\forall \tagged{\passign{\pc}{\ell}}{\ell'} \in \buf_0{[0..i-1]}.\ \ell' = \notags$: ] 
                Then, from $\buf_0 = \buf_0'$ we also have that $\forall \tagged{\passign{\pc}{\ell}}{\ell'} \in \buf_0'{[0..i-1]}.\ \ell' = \notags$.
                There are two cases:
                \begin{description}
                    \item[$\pstore{x'}{e'} \not\in \buf_0{[0..i-1]}$:]
                    Observe that $\exprEval{e}{\apply{\buf_0[0..i-1]}{a_0}} = \exprEval{e}{\apply{\buf_0'[0..i-1]}{a_0'}}$ follows from $a_0 = a_0'$ and $\buf_0 = \buf_0'$, which follow from (a).
                    In the following, we denote $\exprEval{e}{\apply{\buf_0[0..i-1]}{a_0}}$ as $n$.
                    We now show that $m_0(n) = m_0'(n')$.
                    Since $C_0, C_0'$ are reachable configurations, the buffers $\buf_0, \buf_0'$ are well-formed (see Lemma~\ref{lemma:loadDelay:arch-seq:buffers-well-formedness}).
                    Therefore,  $\buf_0[0..i-1] \in \prefixes{\buf_0, \tup{m_0,a_0}}$ and  $\buf_0'[0..i-1] \in \prefixes{\buf_0', \tup{m_0,a_0}}$ because there are no unresolved branch instructions in $\buf_0[0..i-1]$ and $\buf_0'[0..i-1]$ (because $\forall \tagged{\passign{\pc}{\ell}}{\ell'} \in \buf_0{[0..i-1]}.\ \ell' = \notags$ and $\forall \tagged{\passign{\pc}{\ell}}{\ell'} \in \buf_0'{[0..i-1]}.\ \ell' = \notags$).
                    From (b), therefore, there are configurations $\sigma_0, \sigma_0', \sigma_1, \sigma_1'$ such that $C_0 \bufEquiv{|\buf_0[0..i-1]|} \sigma_0$, $C_0' \bufEquiv{|\buf_0[0..i-1]|} \sigma_0'$, $\sigma_0 \ArchSeqInterfStep{\tau}{} \sigma_1$,  $\sigma_0' \ArchSeqInterfStep{\tau'}{} \sigma_1'$, and $\tau = \tau'$. 
                    From $C_0 \bufEquiv{|\buf_0[0..i-1]|} \sigma_0$, $C_0' \bufEquiv{|\buf_0[0..i-1]|} \sigma_0'$, and the well-formedness of the buffers, we know that $p(\sigma_0(\pc)) = p(\sigma_0'(\pc)) = \pload{x}{e}$.
                    From $\ArchSpecInterf{\cdot}$, we have that $\tau = \loadObs{ \exprEval{e}{\sigma_0} = \sigma_0(\exprEval{e}{\sigma_0}) }$  and $\tau' = \loadObs{ \exprEval{e}{\sigma_0'}  = \sigma_0'(\exprEval{e}{\sigma_0'}) }$.
                    From $\tau=\tau'$, we get that $\exprEval{e}{\sigma_0} = \exprEval{e}{\sigma_0'}$ and $\sigma_0'(\exprEval{e}{\sigma_0'}) = \sigma_0(\exprEval{e}{\sigma_0})$.
                    From this, $C_0 \bufEquiv{|\buf_0[0..i-1]|} \sigma_0$, and $C_0' \bufEquiv{|\buf_0'[0..i-1]|} \sigma_0'$, we finally get $\exprEval{e}{\apply{\buf_0{[0..i-1]}}{a_0}} = \exprEval{e}{\apply{\buf_0'{[0..i-1]}}{a_0'}(x)} = n$ and $m_0(n) = m_0'(n)$.

                    There are two cases:
                    \begin{description}
                        \item[$\CacheAccess(\CacheState_0, \exprEval{e}{ \apply{\buf_0{[0..i-1]}}{a_0} }) = \CacheHit$:]
                        From (a), we have that $\CacheState_0 = \CacheState_0'$.
                        Moreover, we have already shown that $\exprEval{e}{ \apply{\buf_0{[0..i-1]}}{a_0} } = \exprEval{e}{ \apply{\buf_0'{[0..i-1]}}{a_0'} }$.
                        Therefore, we can apply the \textsc{Execute-Load-Hit} and \textsc{Step-Eager-Delay} rules to $C_0$ and $C_0'$ as follows:
                        \begin{align*}
                            \buf_0 &:= \buf_0[0..i-1] \concat \tagged{\pload{x}{e}}{T} \concat \buf_0[i+1 .. |\buf_0|]\\
                            \buf_1 &:= \buf_0[0..i-1] \concat \tagged{\passign{x}{m_0(n)}}{T} \concat \buf_0[i+1 .. |\buf_0|]\\
                            \tup{m_0,a_0,\buf_0, \CacheState_0, \BpState_0} &\muarchStep{\execute{i}}{} \tup{m_0,a_0,\buf_1, \CacheUpdate(\CacheState_0,n), \BpState_0}\\
                            \tup{m_0,a_0,\buf_0, \CacheState_0,  \BpState_0 , \SchedState_0} &\LoadDelayMuarchStep{}{} \tup{m_0,a_0,\buf_1, \CacheUpdate(\CacheState_0,n), \BpState_0,\SchedUpdate(\SchedState_0, \BufProject{\buf_0})}\\
                            \buf_0' &:= \buf_0'[0..i-1] \concat \tagged{\pload{x}{e}}{T} \concat \buf_0'[i+1 .. |\buf_0|]\\
                            \buf_1' &:= \buf_0'[0..i-1] \concat \tagged{\passign{x}{m_0'(n)}}{T} \concat \buf_0'[i+1 .. |\buf_0|]\\
                            \tup{m_0',a_0',\buf_0', \CacheState_0', \BpState_0'} &\muarchStep{\execute{i}}{} \tup{m_0',a_0',\buf_1', \CacheUpdate(\CacheState_0',n), \BpState_0'}\\
                            \tup{m_0',a_0',\buf_0', \CacheState_0', \BpState_0', \SchedState_0'} & \LoadDelayMuarchStep{}{} \tup{m_0',a_0',\buf_1', \CacheUpdate(\CacheState_0',n), \BpState_0',\SchedUpdate(\SchedState_0', \BufProject{\buf_0'})}
                        \end{align*}
                        We now show that $C_1 = \tup{m_0,a_0,\buf_1, \CacheUpdate(\CacheState_0,n), \BpState_0,\SchedUpdate(\SchedState_0, \BufProject{\buf_0})}$ and $C_1' = \tup{m_0',a_0',\buf_1', \CacheUpdate(\CacheState_0',n), \BpState_0',\SchedUpdate(\SchedState_0', \BufProject{\buf_0'})}$ are indistinguishable, i.e., i.e., $C_1 \sim C_1'$.
            
                        For this, we need to show that:
                            \begin{description}
                                \item[$a_0 = a_0'$:]
                                This follows from (a). 
                    
                                \item[${\buf_1} = {\buf_1'}$:] 
                                This immediately follows from (a) and $m_0(n) = m_0'(n)$. 
                    
                                \item[$\CacheUpdate(\CacheState_0,n) = \CacheUpdate(\CacheState_0',n)$:]
                                This follows from $\CacheState_0 = \CacheState_0'$, which follows from (a).
                    
                                \item[$\BpState_0= \BpState_0':$] 
                                This follows from  (a).
                    
                                \item[$\SchedUpdate(\SchedState_0, \BufProject{\buf_1}) = \SchedUpdate(\SchedState_0', \BufProject{\buf_1'})$:]
                                From (a), we have   $\SchedState_0 = \SchedState_0'$.
                                From ${\buf_1} = {\buf_1'}$, we have $\BufProject{\buf_1} = \BufProject{\buf_1'}$.
                                Therefore, $\SchedUpdate(\SchedState_0, \BufProject{\buf_1}) = \SchedUpdate(\SchedState_0', \BufProject{\buf_1'})$.
                            \end{description}
                            Therefore, $C_1 \sim C_1'$ and (c) holds.

                        \item[$\CacheAccess(\CacheState_0, \exprEval{e}{\apply{\buf_0{[0..i-1]}}{a_0}}) = \CacheMiss$:]
                        The proof of this case is similar to the one for the $\CacheHit$ case (except that we apply the \textsc{Execute-Load-Miss} rule and we do not need to rely on observations produced by the $\ArchSeqInterf{\cdot}$ contract). 
                    \end{description}

                    \item[$\pstore{x'}{e'} \in \buf_0{[0..i-1]}$:]
                    From (a), we also have that $\pstore{x'}{e'} \in \buf_0'[0..i-1]$.
                    Therefore, both computations are stuck and (c) holds.

                \end{description}

                \item [$\neg \forall \tagged{\passign{\pc}{\ell}}{\ell'} \in \buf_0{[0..i-1]}.\ \ell' = \notags$: ]
                Then, $\neg \forall \tagged{\passign{\pc}{\ell}}{\ell'} \in \buf_0'{[0..i-1]}.\ \ell' = \notags$ holds as well (from $\buf_0 = \buf_0'$).
                Then, (c) holds since both configurations are stuck.
            \end{description}

            \item[$\elt{\buf_0}{i} =  \tagged{\passign{\pc}{\lbl}}{\lbl_0} \wedge \ell_0 \neq \emptysequence$:]
            From (a), we have that $a_0 = a_0'$ and $\buf_0 = \buf_0'$.
            Therefore, we have that $\elt{\buf_0'}{i} =  \tagged{\passign{\pc}{\lbl}}{\lbl_0} \wedge \ell_0 \neq \emptysequence$, $i \leq |\buf_0'|$, and $\pbarrier \not\in \buf_0'[0..i-1]$. 
            Observe that $p(\lbl_0) = \pjz{x}{\lbl''}$.
            Given that $\apply{\buf_0[0..i-1]}{a_0}(x) = \apply{\buf_0'[0..i-1]}{a_0'}(x)$ immediately follows from $a_0 = a_0'$ and $\buf_0 = \buf_0'$, there are two cases:
            \begin{description}
                \item[$( \apply{\buf_0[0..i-1]}{a_0}(x) = 0 \wedge \lbl = \lbl'') \vee (\apply{\buf_0[0..i-1]}{a_0}(x) \in \Val \setminus \{0,\bot\} \wedge \lbl = \ell_0+1)$:]
                From $\apply{\buf_0{[0..i-1]}}{a_0}(x) = \apply{\buf_0'{[0..i-1]}}{a_0'}(x)$ and (a), we also get $( \apply{\buf_0'[0..i-1]}{a_0'}(x) = 0 \wedge \lbl = \lbl'') \vee (\apply{\buf_0'[0..i-1]}{a_0'}(x) \in \Val \setminus \{0,\bot\} \wedge \lbl = \ell_0+1)$.
                Therefore,  we can apply the \textsc{Execute-Branch-Commit} and \textsc{Step-Others} rules to $C_0$ and $C_0'$ as follows:
                \begin{align*}
                    \buf_0 &:= \buf_0[0..i-1] \concat \tagged{\passign{\pc}{\ell}}{\ell_0} \concat \buf_0[i+1 .. |\buf_0|]\\
                    \buf_1 &:= \buf_0[0..i-1] \concat \tagged{\passign{\pc}{\ell}}{\notags} \concat \buf_0[i+1 .. |\buf_0|]\\
                    \tup{m_0,a_0,\buf_0, \CacheState_0, \BpState_0} &\muarchStep{\execute{i}}{} \tup{m_0,a_0,\buf_1, \CacheState_0, \BpState_0}\\
                    \tup{m_0,a_0,\buf_0, \CacheState_0,  \BpUpdate(\BpState_0, \ell_0, \ell) , \SchedState_0} &\LoadDelayMuarchStep{}{} \tup{m_0,a_0,\buf_1, \CacheState_0, \BpUpdate(\BpState_0, \ell_0, \ell),\SchedUpdate(\SchedState_0, \BufProject{\buf_0})}\\
                    \buf_0' &:= \buf_0'[0..i-1] \concat \tagged{\passign{\pc}{\ell}}{\ell_0} \concat \buf_0'[i+1 .. |\buf_0|]\\
                    \buf_1' &:= \buf_0'[0..i-1] \concat \tagged{\passign{\pc}{\ell}}{\notags} \concat \buf_0'[i+1 .. |\buf_0|]\\
                    \tup{m_0',a_0',\buf_0', \CacheState_0', \BpUpdate(\BpState_0', \ell_0, \ell)} &\muarchStep{\execute{i}}{} \tup{m_0',a_0',\buf_1', \CacheState_0', \BpState_0'}\\
                    \tup{m_0',a_0',\buf_0', \CacheState_0', \BpState_0', \SchedState_0'} & \LoadDelayMuarchStep{}{} \tup{m_0',a_0',\buf_1', \CacheState_0', \BpUpdate(\BpState_0', \ell_0, \ell),\SchedUpdate(\SchedState_0', \BufProject{\buf_0'})}
                \end{align*}
                We now show that $C_1 = \tup{m_0,a_0,\buf_1, \CacheState_0, \BpUpdate(\BpState_0, \ell_0, \ell),\SchedUpdate(\SchedState_0, \BufProject{\buf_0})}$ and $C_1' = \tup{m_0',a_0',\buf_1', \CacheState_0', \BpUpdate(\BpState_0', \ell_0, \ell),\SchedUpdate(\SchedState_0', \BufProject{\buf_0'})}$ are indistinguishable, i.e., i.e., $C_1 \sim C_1'$.

                For this, we need to show that:
                    \begin{description}
                        \item[$a_0 = a_0'$:]
                        This follows from (a). 
            
                        \item[${\buf_1} = {\buf_1'}$:] 
                        This follows from (a) and from that $\apply{\buf_0[0..i-1]}{a_0}(x) = \apply{\buf_0'[0..i-1]}{a_0'}(x)$.

                        \item[$\CacheState_0 = \CacheState_0'$:]
                        This follows from (a).
            
                        \item[$\BpUpdate(\BpState_0, \ell_0, \ell) = \BpUpdate(\BpState_0', \ell_0, \ell):$] 
                        This follows from $\BpState_0 = \BpState_0'$, which follows from (a).
            
                        \item[$\SchedUpdate(\SchedState_0, \BufProject{\buf_1}) = \SchedUpdate(\SchedState_0', \BufProject{\buf_1'})$:]
                        From (a), we have   $\SchedState_0 = \SchedState_0'$.
                        From ${\buf_1} = {\buf_1'}$, we have $\BufProject{\buf_1} = \BufProject{\buf_1'}$.
                        Therefore, $\SchedUpdate(\SchedState_0, \BufProject{\buf_1}) = \SchedUpdate(\SchedState_0', \BufProject{\buf_1'})$.
                    \end{description}
                    Therefore, $C_1 \sim C_1'$ and (c) holds.

                \item[$( \apply{\buf_0[0..i-1]}{a_0}(x) = 0 \wedge \lbl \neq \lbl'') \vee (\apply{\buf_0[0..i-1]}{a_0}(x) \in \Val \setminus \{0,\bot\} \wedge \lbl \neq \ell_0+1)$:]
                The proof of this case is similar to the one of the $( \apply{\buf_0[0..i-1]}{a_0}(x) = 0 \wedge \lbl = \lbl'') \vee (\apply{\buf_0[0..i-1]}{a_0}(x) \in \Val \setminus \{0,\bot\} \wedge \lbl = \ell_0+1)$ (except that we apply the \textsc{Execute-Branch-Rollback} rule).
            \end{description}

            \item[$\elt{\buf_0}{i} = \tagged{\passign{x}{e}}{\notags}$:]   
            From (a), we have that $\buf_0 = \buf_0'$ and $a_0 = a_0'$.
            Therefore, we also have that $\elt{\buf_0'}{i} =  \tagged{\passign{x}{e}}{\notags}$, $i \leq |\buf_0'|$, and $\pbarrier \not\in \buf_0'[0..i-1]$.
            There are two cases:
            \begin{description}
                \item[$\exprEval{e}{\apply{\buf_0[0..i-1]}{a_0}} \neq \bot$:]
                From (a), we also have that $\exprEval{e}{\apply{\buf_0'[0..i-1]}{a_0'}} \neq \bot$ (again from $\buf_0 = \buf_0'$ and $a_0 = a_0'$ from (a)).
                Therefore,  we can apply the \textsc{Execute-Assignment} and \textsc{Step-Others} rules to $C_0$ and $C_0'$ as follows:
                \begin{align*}
                    v &:= \exprEval{e}{\apply{\buf_0{[0..i-1]}}{a_0}}\\
                    \buf_0 &:= \buf_0[0..i-1] \concat \tagged{\passign{x}{e}}{T} \concat \buf_0[i+1 .. |\buf_0|]\\
                    \buf_1 &:= \buf_0[0..i-1] \concat \tagged{\passign{x}{v}}{T} \concat \buf_0[i+1 .. |\buf_0|]\\
                    \tup{m_0,a_0,\buf_0, \CacheState_0, \BpState_0} &\muarchStep{\execute{i}}{} \tup{m_0,a_0,\buf_1, \CacheState_0, \BpState_0}\\
                    \tup{m_0,a_0,\buf_0, \CacheState_0, \BpState_0, \SchedState_0} &\LoadDelayMuarchStep{}{} \tup{m_0,a_0,\buf_1, \CacheState_0, \BpState_0,\SchedUpdate(\SchedState_0, \BufProject{\buf_0})}\\
                    v' &:= \exprEval{e}{\apply{\buf_0'{[0..i-1]}}{a_0'}}\\
                    \buf_0' &:= \buf_0'[0..i-1] \concat \tagged{\passign{x}{e}}{T} \concat \buf_0'[i+1 .. |\buf_0|]\\
                    \buf_1' &:= \buf_0'[0..i-1] \concat \tagged{\passign{x}{v'}}{T} \concat \buf_0'[i+1 .. |\buf_0|]\\
                    \tup{m_0',a_0',\buf_0', \CacheState_0', \BpState_0'} &\muarchStep{\execute{i}}{} \tup{m_0',a_0',\buf_1', \CacheState_0', \BpState_0'}\\
                    \tup{m_0',a_0',\buf_0', \CacheState_0', \BpState_0', \SchedState_0'} & \LoadDelayMuarchStep{}{} \tup{m_0',a_0',\buf_1', \CacheState_0', \BpState_0',\SchedUpdate(\SchedState_0', \BufProject{\buf_0'})}
                \end{align*}
                We now show that $C_1 = \tup{m_0,a_0,\buf_1, \CacheState_0, \BpState_0,\SchedUpdate(\SchedState_0, \BufProject{\buf_0})}$ and $C_1' = \tup{m_0',a_0',\buf_1', \CacheState_0', \BpState_0',\SchedUpdate(\SchedState_0', \BufProject{\buf_0'})}$ are indistinguishable, i.e., i.e., $C_1 \sim C_1'$.
                
                For this, we need to show that:
                \begin{description}
                    \item[$a_0 = a_0'$:] 
                    This follows from (a).

                    \item[$\buf_1 = \buf_1'$:]
                    Observe that $v = v'$ directly follows from $v = \exprEval{e}{\apply{\buf_0{[0..i-1]}}{a_0}}$, $v' = \exprEval{e}{\apply{\buf_0'{[0..i-1]}}{a_0'}}$, $a_0 = a_0'$, and $\buf_0 = \buf_0'$ (the last two points follow from (a)).
                    Therefore, $\buf_1 = \buf_1'$ follows from $\buf_1 = \buf_0[0..i-1] \concat \tagged{\passign{x}{v}}{T} \concat \buf_0[i+1 .. |\buf_0|]$, $\buf_1' = \buf_0'[0..i-1] \concat \tagged{\passign{x}{v'}}{T} \concat \buf_0'[i+1 .. |\buf_0|]$, $v= v'$, and $\buf_0 = \buf_0'$.
                    
                    \item[$\BpState_0 = \BpState_0':$] 
                    This follows from (a).
        
                    \item[$\SchedUpdate(\SchedState_0, \BufProject{\buf_1}) = \SchedUpdate(\SchedState_0', \BufProject{\buf_1'})$:]
                    From (a), we have   $\SchedState_0 = \SchedState_0'$.
                    From ${\buf_1} = {\buf_1'}$, we have $\BufProject{\buf_1} = \BufProject{\buf_1'}$.
                    Therefore, $\SchedUpdate(\SchedState_0, \BufProject{\buf_1}) = \SchedUpdate(\SchedState_0', \BufProject{\buf_1'})$.
                \end{description}
                Therefore, $C_1 \sim C_1'$ and (c) holds.

                \item[$\exprEval{e}{\apply{\buf_0[0..i-1]}{a_0}} = \bot$:] 
                From this, it follows that $e \not\in \Val$.
                Therefore, from (a), we have that $e = e'$, $\buf_0 = \buf_0'$, and $a_0 = a_0'$.
                From $\exprEval{e}{\apply{\buf[0..i-1]}{a_0}} = \bot$, we therefore get that $\exprEval{e}{\apply{\buf_0'{[0..i-1]}}{a_0'}} = \bot $ holds as well.
                Hence, both configurations are stuck and (c) holds.

            \end{description}

            \item[$\elt{\buf_0}{i} =   \tagged{\pmarkedassign{x}{e}}{\notags}$:]
            The proof of this case is similar to that of $\elt{\buf_0}{i} = \tagged{\passign{x}{e}}{\notags}$.

            \item[$\elt{\buf_0}{i} =  \tagged{\pstore{x}{e}}{T}$:]   
            The proof of this case is similar to that of $\elt{\buf_0}{i} = \tagged{\passign{x}{e}}{\notags}$. 
        
            \item[$\elt{\buf_0}{i} =  \tagged{\pskip{}}{\notags}$:]
            The proof of this case is similar to that of $\elt{\buf_0}{i} =  \tagged{\passign{x}{e}}{\notags}$.
            
            \item[$\elt{\buf_0}{i} =  \tagged{\pbarrier}{T}$:]
            The proof of this case is similar to that of $\elt{\buf_0}{i} =  \tagged{\passign{x}{e}}{\notags}$.
        \end{description}
        Therefore, (c) holds in all cases.

        \item[$i > |\buf_0| \vee \pbarrier \in \buf_0{[0..i-1]}$:]
        From (a), it immediately follows that $i > |\buf_0'| \vee \pbarrier \in \buf_0'[0..i-1]$.
        Therefore, both configurations are stuck and (c) holds.
    \end{description}
    Therefore, (c) holds in all cases.

    \item[$d = \retire{}$:]
	Therefore, we can only apply one of the $\retire{}$ rules depending on the head of the reorder buffer in $\buf_0$.
	There are five cases:
	\begin{description}
		\item[$\buf_0 = \tagged{\pskip}{\notags} \concat \buf_1 $:] 
		From (a), we get that $\buf_0 = \buf_0'$.
		Therefore, we have that $\buf_0' = \tagged{\pskip}{\notags} \concat \buf_1' $ and $\buf_1 = \buf_1'$.
		Therefore, we can apply the \textsc{Retire-Skip} and \textsc{Step-Others} rules to $C_0$ and $C_0'$ as follows:
		\begin{align*}
			\tup{m_0,a_0,\tagged{\pskip}{\notags} \concat \buf_1,\CacheState_0,\BpState_0} &\muarchStep{\retire}{} \tup{m_0, a_0, \buf_1, \CacheState_0,\BpState_0}\\
			\tup{m_0,a_0,\tagged{\pskip}{\notags} \concat \buf_1,\CacheState_0,\BpState_0, \SchedState_0} &\LoadDelayMuarchStep{}{} \tup{m_0, a_0, \buf_1, \CacheState_0,\BpState_0, \SchedUpdate(\SchedState_0, \BufProject{\buf_1})}\\
			\tup{m_0',a_0',\tagged{\pskip}{\notags} \concat \buf_1',\CacheState_0',\BpState_0'} &\muarchStep{\retire}{} \tup{m_0', a_0', \buf_1', \CacheState_0',\BpState_0'}\\
			\tup{m_0',a_0',\tagged{\pskip}{\notags} \concat \buf_1',\CacheState_0',\BpState_0', \SchedState_0'} &\LoadDelayMuarchStep{}{} \tup{m_0', a_0', \buf_1', \CacheState_0',\BpState_0', \SchedUpdate(\SchedState_0', \BufProject{\buf_1'})}
		\end{align*}
		We now show that $C_1 = \tup{m_0, a_0, \buf_1, \CacheState_0,\BpState_0, \SchedUpdate(\SchedState_0, \BufProject{\buf_1})}$ and $C_1' = \tup{m_0', a_0', \buf_1', \CacheState_0',\BpState_0', \SchedUpdate(\SchedState_0', \BufProject{\buf_1'})}$ are indistinguishable, i.e., $C_1 \sim C_1'$.
		For this, we need to show that:
        \begin{description}
            \item[$a_0 = a_0'$:]
            This follows from (a).

			\item[$\buf_1 = \buf_1'$:] 
			This follows from $\buf_0 = \tagged{\pskip}{\notags} \concat \buf_1 $, $\buf_0' = \tagged{\pskip}{\notags} \concat \buf_1' $, and (a).

			\item[$\CacheState_0 = \CacheState_0'$:]
			This follows from (a).

			\item[$\BpState_0 = \BpState_0':$] 
			This follows from (a).

			\item[$\SchedUpdate(\SchedState_0, \BufProject{\buf_1}) = \SchedUpdate(\SchedState_0', \BufProject{\buf_1'})$:]
			From (a), we have   $\SchedState_0 = \SchedState_0'$.
			From ${\buf_1} = {\buf_1'}$, we have $\BufProject{\buf_1} = \BufProject{\buf_1'}$.
			Therefore, $\SchedUpdate(\SchedState_0, \BufProject{\buf_1}) = \SchedUpdate(\SchedState_0', \BufProject{\buf_1'})$.
		\end{description}
		Therefore, $C_1 \sim C_1'$ and (c) holds.

		\item[$\buf_0 = \tagged{\pbarrier}{\notags} \concat \buf_1 $:]
		The proof of this case is similar to that of the case $\buf_0 = \tagged{\pskip}{\notags} \concat \buf_1 $.

		\item[$\buf_0 = \tagged{\passign{x}{v}}{\notags} \concat \buf_1 $:] 
		From (a), we get that $\buf_0 = \buf_0'$.
		Therefore, we have that $\buf_0' = \tagged{\passign{x}{v'}}{\notags} \concat \buf_1' $ and ${\buf_1} = {\buf_1'}$.
		From ${\buf_0} = {\buf_0'}$, we also get that $v \in \Val$ iff $v' \in \Val$.
		Observe also that if $v \not\in \Val$ then both computations are stuck and (c) holds (since there is no $C_1$ such that $C_0 \LoadDelayMuarchStep{}{} C_1$ and no $C_1'$ such that $C_0' \LoadDelayMuarchStep{}{} C_1'$).
		In the following, therefore, we assume that $v,v' \in \Val$.
		Therefore, we can apply the \textsc{Retire-Assignment} and \textsc{Step-Others} rules to $C_0$ and $C_0'$ as follows:
		\begin{align*}
			\tup{m_0,a_0,\tagged{\passign{x}{v}}{\notags} \concat \buf_1,\CacheState_0,\BpState_0} &\muarchStep{\retire}{} \tup{m_0, a_0[x\mapsto v], \buf_1, \CacheState_0,\BpState_0}\\
			\tup{m_0,a_0,\tagged{\passign{x}{v}}{\notags} \concat \buf_1,\CacheState_0,\BpState_0, \SchedState_0} &\LoadDelayMuarchStep{}{} \tup{m_0, a_0[x \mapsto v], \buf_1, \CacheState_0,\BpState_0, \SchedUpdate(\SchedState_0, \BufProject{\buf_1})}\\
			\tup{m_0',a_0',\tagged{\passign{x}{v'}}{\notags} \concat \buf_1',\CacheState_0',\BpState_0'} &\muarchStep{\retire}{} \tup{m_0', a_0'[x \mapsto v'], \buf_1', \CacheState_0',\BpState_0'}\\
			\tup{m_0',a_0',\tagged{\passign{x}{v'}}{\notags} \concat \buf_1',\CacheState_0',\BpState_0', \SchedState_0'} &\LoadDelayMuarchStep{}{} \tup{m_0', a_0'[x \mapsto v'], \buf_1', \CacheState_0',\BpState_0', \SchedUpdate(\SchedState_0', \BufProject{\buf_1'})}
		\end{align*}
		We now show that $C_1 = \tup{m_0, a_0[x \mapsto v], \buf_1, \CacheState_0,\BpState_0, \SchedUpdate(\SchedState_0, \BufProject{\buf_1})}$ and $C_1' = \tup{m_0', a_0'[x \mapsto v'], \buf_1', \CacheState_0',\BpState_0', \SchedUpdate(\SchedState_0', \BufProject{\buf_1'})}$ are indistinguishable, i.e., $C_1 \sim C_1'$.
		For this, we need to show that:
        \begin{description}
            \item[$a_0{[x \mapsto v]} = a_0'{[x \mapsto v']}$:]
            From  ${\buf_0} = {\buf_0'}$, $\buf_0 = \tagged{\passign{x}{v}}{\notags} \concat \buf_1 $, and $\buf_0' = \tagged{\passign{x}{v'}}{\notags} \concat \buf_1' $, we get $v = v'$.
            From (a), we also have $a_0 = a_0'$.
            Therefore, $a_0[x \mapsto v] = a_0'[x \mapsto v']$.

			\item[${\buf_1} = {\buf_1'}$:] 
			This follows from $\buf_0 = \tagged{\passign{x}{v}}{\notags} \concat \buf_1 $, $\buf_0' = \tagged{\passign{x}{v'}}{\notags} \concat \buf_1' $, and (a).

			\item[$\CacheState_0 = \CacheState_0'$:]
			This follows from (a).

			\item[$\BpState_0 = \BpState_0':$] 
			This follows from (a).

			\item[$\SchedUpdate(\SchedState_0, \BufProject{\buf_1}) = \SchedUpdate(\SchedState_0', \BufProject{\buf_1'})$:]
			From (a), we have   $\SchedState_0 = \SchedState_0'$.
			From ${\buf_1} = {\buf_1'}$, we have $\BufProject{\buf_1} = \BufProject{\buf_1'}$.
			Therefore, $\SchedUpdate(\SchedState_0, \BufProject{\buf_1}) = \SchedUpdate(\SchedState_0', \BufProject{\buf_1'})$.
		\end{description}
		Therefore, $C_1 \sim C_1'$ and (c) holds.

		\item[$\buf_0 = \tagged{\pmarkedassign{x}{v}}{\notags} \concat \buf_1 $:]
		The proof fo this case is similar to that of the case $\buf_0 = \tagged{\passign{x}{v}}{\notags} \concat \buf_1 $.

		\item[$\buf_0 = \tagged{\pstore{v}{n}}{\notags} \concat \buf_1 $:] 
		From (a), we get that ${\buf_0} = {\buf_0'}$.
		Therefore, we have that $\buf_0' = \tagged{\pstore{v'}{n}}{\notags} \concat \buf_1' $ and ${\buf_1} = {\buf_1'}$.
		Observe that (1) $v \in \Val \leftrightarrow v' \in \Val$ from (a), and (2) if $v \not\in \Val$ or $n \not\in \Val$, then both computations are stuck and (c) holds (since there is no $C_1$ such that $C_0 \LoadDelayMuarchStep{}{} C_1$ and no $C_1'$ such that $C_0' \LoadDelayMuarchStep{}{} C_1'$).
		In the following, therefore, we assume that $v,v',n \in \Val$.
		Therefore, we can apply the \textsc{Retire-Store} and \textsc{Step-Others} rules to $C_0$ and $C_0'$ as follows:
		\begin{align*}
			\tup{m_0,a_0,\tagged{\pstore{v}{n}}{\notags} \concat \buf_1,\CacheState_0,\BpState_0} &\muarchStep{\retire}{} \tup{m_0[n\mapsto v], a_0, \buf_1, \CacheUpdate(\CacheState_0,n),\BpState_0}\\
			\tup{m_0,a_0,\tagged{\pstore{v}{n}}{\notags} \concat \buf_1,\CacheState_0,\BpState_0, \SchedState_0} &\LoadDelayMuarchStep{}{} \tup{m_0[n\mapsto v], a_0, \buf_1, \CacheUpdate(\CacheState_0,n),\BpState_0, \SchedUpdate(\SchedState_0, \BufProject{\buf_1})}\\
			\tup{m_0',a_0',\tagged{\pstore{v'}{n}}{\notags} \concat \buf_1',\CacheState_0',\BpState_0'} &\muarchStep{\retire}{} \tup{m_0'[n \mapsto v'], a_0', \buf_1', \CacheUpdate(\CacheState_0',n),\BpState_0'}\\
			\tup{m_0',a_0',\tagged{\pstore{v'}{n}}{\notags} \concat \buf_1',\CacheState_0',\BpState_0', \SchedState_0'} &\LoadDelayMuarchStep{}{} \tup{m_0'[n\mapsto v'], a_0', \buf_1', \CacheUpdate(\CacheState_0',n),\BpState_0', \SchedUpdate(\SchedState_0', \BufProject{\buf_1'})}
		\end{align*}
		We now show that $C_1 = \tup{m_0[n \mapsto v], a_0, \buf_1, \CacheUpdate(\CacheState_0,n),\BpState_0, \SchedUpdate(\SchedState_0, \BufProject{\buf_1})}$ and $C_1' =  \tup{m_0'[n\mapsto v'], a_0', \buf_1', \CacheUpdate(\CacheState_0',n),\BpState_0', \SchedUpdate(\SchedState_0', \BufProject{\buf_1'})}$ are indistinguishable, i.e., $C_1 \sim C_1'$.
		For this, we need to show that:
        \begin{description}
            \item[$a_0 = a_0'$:]
            This follows from (a).

			\item[${\buf_1} = {\buf_1'}$:] 
			This follows from $\buf_0 = \tagged{\pstore{v}{n}}{\notags} \concat \buf_1 $, $\buf_0' = \tagged{\pstore{v'}{n}}{\notags} \concat \buf_1' $, and (a).

			\item[$\CacheUpdate(\CacheState_0,n) = \CacheUpdate(\CacheState_0',n)$:]
			This follows from $\CacheState_0 = \CacheState_0'$, which, in turn, follows from (a).

			\item[$\BpState_0 = \BpState_0':$] 
			This follows from (a).

			\item[$\SchedUpdate(\SchedState_0, \BufProject{\buf_1}) = \SchedUpdate(\SchedState_0', \BufProject{\buf_1'})$:]
			From (a), we have   $\SchedState_0 = \SchedState_0'$.
			From ${\buf_1} = {\buf_1'}$, we have $\BufProject{\buf_1} = \BufProject{\buf_1'}$.
			Therefore, $\SchedUpdate(\SchedState_0, \BufProject{\buf_1}) = \SchedUpdate(\SchedState_0', \BufProject{\buf_1'})$.
		\end{description}
		Therefore, $C_1 \sim C_1'$ and (c) holds.
	\end{description}
	Therefore, (c) holds for all the cases.

\end{description}
Since (c) holds for all cases, this completes the proof of our lemma.
\end{proof}

\subsubsection{Main lemma}

\begin{definition}[Indistinguishability of hardware configurations]\label{def:loadDelay:arch-seq:indistinguishability}
	We say that two hardware configurations $\tup{\sigma,\mu} = \tup{m,a,\buf, \CacheState,\BpState, \SchedState}$ and $\tup{\sigma',\mu'} = \tup{m',a',\buf', \CacheState',\BpState', \SchedState'}$ are \emph{indistinguishable}, written $\tup{\sigma,\mu} \approx \tup{\sigma',\mu'}$, iff $\BufProject{\buf } = \BufProject{\buf'}$, $\CacheState = \CacheState'$, $\BpState = \BpState'$, and $\SchedState = \SchedState'$.
\end{definition}

\begin{lemma}[Deep-indistinguishability implies indistinguishability]\label{lemma:loadDelay:arch-seq:deep-indistinguishability-implies-indistinguishability}
Let $C, C'$ be hardware configurations.
If $C \sim C'$, then $C \approx C'$.
\end{lemma}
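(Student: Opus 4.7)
The plan is to prove this by directly unfolding the two definitions and observing that deep-indistinguishability as defined in Definition~\ref{def:loadDelay:arch-seq:deep-indistinguishability} is a strictly stronger condition than indistinguishability as defined in Definition~\ref{def:loadDelay:arch-seq:indistinguishability}. The proof is essentially a one-line observation, so there will be no main obstacle; the interest lies only in matching up the clauses.

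Concretely, I would fix arbitrary hardware configurations $C = \tup{m,a,\buf,\CacheState,\BpState,\SchedState}$ and $C' = \tup{m',a',\buf',\CacheState',\BpState',\SchedState'}$ and assume $C \sim C'$. By Definition~\ref{def:loadDelay:arch-seq:deep-indistinguishability}, this assumption gives me all five clauses: $a = a'$, $\buf = \buf'$, $\CacheState = \CacheState'$, $\BpState = \BpState'$, and $\SchedState = \SchedState'$. To conclude $C \approx C'$, by Definition~\ref{def:loadDelay:arch-seq:indistinguishability}, I need to establish $\BufProject{\buf} = \BufProject{\buf'}$, $\CacheState = \CacheState'$, $\BpState = \BpState'$, and $\SchedState = \SchedState'$. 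The last three clauses are immediate from the assumption. For the first, from $\buf = \buf'$ I apply the data-independent projection $\BufProject{\cdot}$ to both sides (as a function, it respects equality) and obtain $\BufProject{\buf} = \BufProject{\buf'}$.

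Thus all four clauses of $\approx$ hold and the lemma follows. No induction, case analysis, or auxiliary lemma is required; the claim is a pure unfolding of definitions, which is why the analogous statement was also trivial in the earlier cases (cf.\ Lemmas~\ref{lemma:vanilla:deep-indistiguishability-implies-indistinguishability} and~\ref{lemma:seq:deep-indistiguishability-implies-indistinguishability}). The only reason it is worth stating separately here is that the notion of $\sim$ used for the $\muarchStyle{loadDelay}$/$\ArchSeqInterf{\cdot}$ analysis is much stronger (equality of register assignments and full reorder buffers rather than agreement on $\pc$ and on a deep projection), reflecting the stronger invariant maintained by Lemma~\ref{lemma:loadDelay:arch-seq:trace-equiv-implies-stepwise-indistinguishability}.
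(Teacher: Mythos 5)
Your proof is correct and matches the paper's, which simply states that the result follows from unfolding Definitions~\ref{def:loadDelay:arch-seq:deep-indistinguishability} and~\ref{def:loadDelay:arch-seq:indistinguishability}; your explicit clause-by-clause matching (using that $\buf = \buf'$ implies $\BufProject{\buf} = \BufProject{\buf'}$) is exactly the intended argument.
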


\begin{proof}
It  follows from Definitions~\ref{def:loadDelay:arch-seq:deep-indistinguishability} and~\ref{def:loadDelay:arch-seq:indistinguishability}.
\end{proof}

\begin{lemma}\label{lemma:loadDelay:arch-seq:main-lemma}
Let $p$ be a well-formed program, $\CacheState_0$ be the initial cache state, $\BpState_0$ be the initial branch predictor state, and $\SchedState_0$ be the initial scheduler state, $\sigma_0 = \tup{m_0,a_0}, \sigma_0' = \tup{m_0',a_0'}$ be initial \archstate{}s, and $C_0 = \tup{m_0,a_0, \emptysequence, \CacheState_0, \BpState_0, \SchedState_0}$ and $C_0' = \tup{m_0',a_0', \emptysequence, \CacheState_0, \BpState_0, \SchedState_0}$ be hardware configurations.
	Furthermore, let $\crun := \sigma_0$ $\ArchSeqInterfStep{o_1}{}$ $\sigma_1$ $\ArchSeqInterfStep{o_2}{}$ $\ldots$  $\ArchSeqInterfStep{o_{n-1}}{}$  $\sigma_n$ and $\crunp:=\sigma_0'$ $\ArchSeqInterfStep{o_1'}{}$  $\sigma_1' $ $\ArchSeqInterfStep{o_2'}{}$ $\ldots$  $\ArchSeqInterfStep{o_{n-1}'}{} $ $\sigma_n$ be two runs for the $\ArchSeqInterf{\cdot}$ contract where $\sigma_0, \sigma_0'$ are initial \archstate{}s and $\sigma_n, \sigma_{n}$ are final \archstate{}s.
	If $o_i = o_i'$ for all $0 < i < n$, then there is a $k \in \Nat$ and $C_1, \ldots, C_k, C_1', \ldots, C_k'$ such that $C_0 \LoadDelayMuarchStep{}{} C_1 \LoadDelayMuarchStep{}{} \ldots \LoadDelayMuarchStep{}{} C_k$, $C_0' \LoadDelayMuarchStep{}{} C_1' \LoadDelayMuarchStep{}{} \ldots \LoadDelayMuarchStep{}{} C_k'$, and one of the following conditions hold:
	\begin{compactenum}
		\item $C_0, C_0'$ are initial states, $\forall 0 \leq i \leq k.\ C_{i} \approx C_{i}'$, and $C_, C_k'$ are final states, or
		\item $C_0, C_0'$ are initial states, $\forall 0 \leq i \leq k.\ C_{i} \approx C_{i}'$, and there are no $C_{k+1}$ such that $C_{k+1} \neq C_k \wedge C_k \LoadDelayMuarchStep{}{} C_{k+1}$ and no $C_{k+1}'$ such that $C_{k+1}' \neq C_k' \wedge C_k' \LoadDelayMuarchStep{}{} C_{k+1}'$.
	\end{compactenum}
\end{lemma}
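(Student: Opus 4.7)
The plan is to follow exactly the ``contract satisfaction outline'' sketched in Appendix~\ref{appendix:proof:overview}, instantiated with the three components we have just assembled for $\muarchStyle{loadDelay}$ and $\ArchSeqInterf{\cdot}$: the deep-indistinguishability relation $\sim$ from Definition~\ref{def:loadDelay:arch-seq:deep-indistinguishability}, the equivalence relations $\bufEquiv{i}$ between hardware configurations and $\ArchSeqInterf{\cdot}$-states, the correspondence $\map{\crun}{\hrun}{\cdot}$ from Definition~\ref{def:loadDelay:arch-seq:mapping}, and the indistinguishability lemma (Lemma~\ref{lemma:loadDelay:arch-seq:trace-equiv-implies-stepwise-indistinguishability}). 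The structure mirrors the proof of Lemma~\ref{lemma:seq-processor:main-lemma} (and, more closely, Lemma~\ref{lemma:vanilla:main-lemma}); only the building-block lemmas are swapped out.

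First, I would construct the two maximal hardware runs $\hrun := C_0 \LoadDelayMuarchStep{}{} C_1 \LoadDelayMuarchStep{}{} \dots \LoadDelayMuarchStep{}{} C_k$ and $\hrunp := C_0' \LoadDelayMuarchStep{}{} C_1' \LoadDelayMuarchStep{}{} \dots \LoadDelayMuarchStep{}{} C_k'$ starting from the given initial hardware configurations, stopping at the first index where either side is final or gets stuck (the two runs will be shown to have the same length by the indistinguishability argument). Next, I would build the correspondence functions $\map{\crun}{\hrun}{\cdot}$ and $\map{\crunp}{\hrunp}{\cdot}$ according to Definition~\ref{def:loadDelay:arch-seq:mapping} and invoke Lemma~\ref{lemma:loadDelay:arch-seq:mapping-is-correct} for each of them: this gives, for every $i$ and every prefix $\buf \in \prefixes{\buf_i,\tup{m_i,a_i}}$, a contract state $\crun(\map{\crun}{\hrun}{i}(|\buf|))$ with $C_i \bufEquiv{|\buf|} \crun(\map{\crun}{\hrun}{i}(|\buf|))$, and analogously for the primed run.

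The heart of the proof is then an induction on $i$ establishing the joint invariant $\map{\crun}{\hrun}{i} = \map{\crunp}{\hrunp}{i}$ and $C_i \sim C_i'$. The base case is immediate since both initial hardware configurations share $\emptysequence$, $\CacheState_0$, $\BpState_0$, $\SchedState_0$ (so $C_0 \sim C_0'$) and both maps are $\{0 \mapsto 0\}$. For the step, the induction hypothesis together with equality of observations $o_i = o_i'$ ensures that, for every pair of equal-length prefixes $\buf \in \prefixes{\buf_{i-1},\tup{m_{i-1},a_{i-1}}}$ and $\buf' \in \prefixes{\buf_{i-1}',\tup{m_{i-1}',a_{i-1}'}}$, the contract states assigned by the two correspondences coincide and take a contract step producing the same label; this is precisely precondition~(b) of Lemma~\ref{lemma:loadDelay:arch-seq:trace-equiv-implies-stepwise-indistinguishability}, while precondition~(a) is the inductive hypothesis $C_{i-1} \sim C_{i-1}'$. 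Applying the indistinguishability lemma yields either $C_i \sim C_i'$ or simultaneous stuckness; the equality of the updated correspondences then follows from the fact that $\map{\crun}{\hrun}{i}$ is computed from $\map{\crun}{\hrun}{i-1}$ using only the scheduler's choice and the data-independent projection of the reorder buffer, both of which are preserved by $\sim$. Finally, Lemma~\ref{lemma:loadDelay:arch-seq:deep-indistinguishability-implies-indistinguishability} converts the pointwise $\sim$ into $\approx$, giving alternative~(1) when both runs are final and alternative~(2) when both get stuck.

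The main obstacle I anticipate is verifying precondition~(b) of the indistinguishability lemma for prefixes whose last command is a mispredicted branch tag, since Definition~\ref{def:loadDelay:arch-seq:prefixes} admits such prefixes only in a restricted form (the branch must be the final command). Concretely, one has to check that when the inductive hypothesis $\map{\crun}{\hrun}{i-1} = \map{\crunp}{\hrunp}{i-1}$ and $C_{i-1} \sim C_{i-1}'$ hold, the $\Mispred{\cdot}$ flag used to gate the map update in Definition~\ref{def:loadDelay:arch-seq:mapping} evaluates the same way on both sides; this in turn relies on $\sim$ forcing equality of the whole reorder buffer (condition~(b) of Definition~\ref{def:loadDelay:arch-seq:deep-indistinguishability}), so the step reduces to routine unfolding. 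Once this case is dispatched, the remaining cases of the induction step are handled identically to the analogous step in Lemma~\ref{lemma:seq-processor:main-lemma}.
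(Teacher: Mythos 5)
Your proposal matches the paper's proof essentially step for step: the same construction of the two maximal $\LoadDelayMuarchStep{}{}$ runs, the same correspondence functions from Definition~\ref{def:loadDelay:arch-seq:mapping}, and the same joint induction on $\map{\crun}{\hrun}{i} = \map{\crunp}{\hrunp}{i}$ together with $C_i \sim C_i'$, discharged via Lemmas~\ref{lemma:loadDelay:arch-seq:mapping-is-correct} and~\ref{lemma:loadDelay:arch-seq:trace-equiv-implies-stepwise-indistinguishability} and converted to $\approx$ at the end. The only detail worth adding is that the base case of $\sim$ here also requires $a_0 = a_0'$ (condition~(a) of Definition~\ref{def:loadDelay:arch-seq:deep-indistinguishability}), which the paper obtains from both states being initial architectural states with all registers zeroed.
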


\begin{proof}
Let $p$ be a well-formed program, $\CacheState_0$ be the initial cache state, $\BpState_0$ be the initial branch predictor state, and $\SchedState_0$ be the initial scheduler state, $\sigma_0 = \tup{m_0,a_0}, \sigma_0' = \tup{m_0',a_0'}$ be initial \archstate{}s, and $C_0 = \tup{m_0,a_0, \emptysequence, \CacheState_0, \BpState_0, \SchedState_0}$ and $C_0' = \tup{m_0',a_0', \emptysequence, \CacheState_0, \BpState_0, \SchedState_0}$ be hardware configurations.
    Furthermore, let $\crun := \sigma_0$ $\ArchSeqInterfStep{o_1}{}$ $\sigma_1$ $\ArchSeqInterfStep{o_2}{}$ $\ldots$  $\ArchSeqInterfStep{o_{n-1}}{}$  $\sigma_n$ and $\crunp:=\sigma_0'$ $\ArchSeqInterfStep{o_1'}{}$  $\sigma_1' $ $\ArchSeqInterfStep{o_2'}{}$ $\ldots$  $\ArchSeqInterfStep{o_{n-1}'}{} $ $\sigma_n$ be two runs for the $\ArchSeqInterf{\cdot}$ contract where  $\sigma_n, \sigma_{n}$ are final \archstate{}s.
	Finally, we assume that $o_i = o_i'$ for all $0 < i < n$.

	Consider the two hardware runs $\hrun, \hrunp$ obtained as follows: we start from $C_0= \tup{m_0,a_0, \emptysequence, \CacheState_0, \BpState_0, \SchedState_0}$ and $C_0'= \tup{m_0',a_0', \emptysequence, \CacheState_0, \BpState_0, \SchedState_0}$, and we apply one step of $\LoadDelayMuarchStep{}{}$ until either $\hrun$ or $\hrunp$ reaches a final state or gets stuck.
	That is, for some $k \in \Nat$, we obtain the following runs:
	\begin{align*}
		C_0 &:=  \tup{m_0,a_0, \emptysequence, \CacheState_0, \BpState_0, \SchedState_0}\\
		C_0' &:= \tup{m_0',a_0', \emptysequence, \CacheState_0, \BpState_0, \SchedState_0}\\
		\hrun&:=C_0 \LoadDelayMuarchStep{}{} C_1 \LoadDelayMuarchStep{}{} \ldots  \LoadDelayMuarchStep{}{} C_k \\
		\hrunp&:=C_0' \LoadDelayMuarchStep{}{} C_1' \LoadDelayMuarchStep{}{} \ldots  \LoadDelayMuarchStep{}{} C_k' 
	\end{align*}
	Let $\map{\crun}{\hrun}{\cdot}$ and $\map{\crunp}{\hrunp}{\cdot}$ be the maps constructed according to Definition~\ref{def:loadDelay:arch-seq:mapping}.
	We claim that that for all $0 \leq i \leq k$, $\map{\crun}{\hrun}{i} = \map{\crunp}{\hrunp}{i}$ and $ C_i \sim C_{i}'$ which implies $C_i \approx C_i'$ (see Lemma~\ref{lemma:loadDelay:arch-seq:deep-indistinguishability-implies-indistinguishability}).
	Moreover, if $C_k$ is a stuck configuration (i.e., there is no $C'$ such that $C_k \SeqProcMuarchStep{}{} C'$) then so is $C_k'$ from $C_{k} \sim C_{k}'$.
	This concludes the proof of our lemma.
	
	We now prove our claim.
	That is, we show, by induction on $i$, that for all $0 \leq i \leq k$, $\map{\crun}{\hrun}{i} = \map{\crun'}{\hrun'}{i}$ and $ C_i \approx C_{i}'$.
	\begin{description}
		\item[Base case:]
		Then, $i = 0$.
		Therefore, $\map{\crun}{\hrun}{0} = \map{\crun'}{\hrun'}{0} = \{ 0 \mapsto 0\}$ by construction.
		Moreover, $C_{0} \sim C_{0}'$ immediately follows from $C_0 =  \tup{m_0,a_0, \emptysequence, \CacheState_0, \BpState_0, \SchedState_0}$, $
		C_0'= \tup{m_0',a_0', \emptysequence, \CacheState_0, \BpState_0, \SchedState_0}$, and $C_0,C_0'$ being initial \archstate{}s, i.e., $a_0 = a_0' = \lambda x \in \Var.\ 0$.
	
		\item[Induction step:]
		For the induction step, we assume that our claim holds for all $i' < i$, and we show that it holds for $i$ as well.
		From the induction hypothesis, we get that $\map{\crun}{\hrun}{i-1} = \map{\crunp}{\hrunp}{i-1}$ and $C_{i-1} \sim C_{i-1}'$.
		In the following, we denote $\map{\crun}{\hrun}{i-1} = \map{\crunp}{\hrunp}{i-1}$ as (IH.1) and $C_{i-1} \sim C_{i-1}'$ as (IH.2). 
		Moreover, from Lemma~\ref{lemma:loadDelay:arch-seq:mapping-is-correct} applied to $\map{\crun}{\hrun}{i-1} = \map{\crunp}{\hrunp}{i-1}$, we have that for all $\buf \in \prefixes{\buf_{i-1},\tup{m_{i-1},a_{i-1}}}$ and all $\buf' \in \prefixes{\buf_{i-1}',\tup{m_{i-1}',a_{i-1}'}}$, $C_{i-1} \bufEquiv{i} \crun( \map{\crun}{\hrun}{i-1}(|\buf|) ) $ and $C_{i-1}' \bufEquiv{i} \crunp( \map{\crunp}{\hrunp}{i-1}(|\buf'|) ) $.
		Let $\buf \in \prefixes{\buf_{i-1},\tup{m_{i-1},a_{i-1}}}$ and all $\buf' \in \prefixes{\buf_{i-1}',\tup{m_{i-1}',a_{i-1}'}}$ be arbitrary buffers such that $|\buf| = |\buf'|$.
		From (IH.1) and $|\buf| = |\buf'|$, we get that $\map{\crun}{\hrun}{i-1}(|\buf|)= \map{\crunp}{\hrunp}{i-1}(|\buf'|)$.
		Therefore,  $\crun( \map{\crun}{\hrun}{i-1}(|\buf|) )$ and $\crunp(\map{\crunp}{\hrunp}{i-1}(|\buf'|))$ are two configurations $\sigma_j, \sigma_j'$ for some $j = \map{\crun}{\hrun}{i-1}(|\buf|)$.
		From $\crun, \crunp$ having pairwise the same observations, we have $\sigma_{j} \ArchSeqInterfStep{o_{j+1}}{} \sigma_{j+1}$, $\sigma_{j}' \ArchSeqInterfStep{o_{j+1}'}{} \sigma_{j+1}'$, and $o_{j+1} = o_{j+1}'$ (note that we can always make a step in $\sigma_{j},\sigma_{j}'$ since if $j = k$ we can make silent steps thanks to the \textsc{Terminate} rule).
		Since $\buf,\buf'$ have been selected arbitrarily, we know that for all $\buf \in \prefixes{\buf_{i-1},\tup{m_{i-1},a_{i-1}}}$ and all $\buf' \in \prefixes{\buf_{i-1}',\tup{m_{i-1}',a_{i-1}'}}$ such that $|\buf| = |\buf'|$, we have $\sigma_{j} \ArchSeqInterfStep{o_{j+1}}{} \sigma_{j+1}$, $\sigma_{j}' \ArchSeqInterfStep{o_{j+1}'}{} \sigma_{j+1}'$, and $o_{j+1} = o_{j+1}'$ where $j = \map{\crun}{\hrun}{i-1}(|\buf|)$.
        From this and (IH.2), we can apply Lemma~\ref{lemma:loadDelay:arch-seq:trace-equiv-implies-stepwise-indistinguishability} to $C_{i-1}$ and $C_{i-1}'$.
		As a result, we obtain that either $C_{i} \sim C_{i}$ or that both configurations are stuck.
		Moreover, $\map{\crun}{\hrun}{i} = \map{\crunp}{\hrunp}{i}$ follows from the fact that the maps at step $i$ are derived from the maps at step $i-1$, which are equivalent thanks to (IH.1), depending on the content of the projection of the buffer (which is the same thanks to $C_{i} \sim C_{i}$).
		This completes the proof of the induction step.
	\end{description}
\end{proof}
\newpage
\section{Hardware taint tracking processor $\muarchStyle{tt}$  -- Proofs of Theorems~\ref{theorem:hni:stt:one} and~\ref{theorem:hni:stt:two}}\label{appendix:proofs:taint-tracking}

In this section, we prove the security guarantees of the hardware taint tracking processor $\muarchStyle{tt}$ given in \S\ref{sec:countermeasures:taint-tracking}.

In the following, we assume given an arbitrary cache $C$, branch predictor $Bp$, and scheduler $S$, whose initial states are, respectively, $\CacheState_0$, $\BpState_0$, and $\SchedState_0$.
Hence, our proof holds for arbitrary caches $C$, branch predictors $Bp$, and scheduler $S$.

For simplicity, here we report the rule defining the $\muarchStyle{tt}$ processor:
\begin{mathpar}
    \inferrule[Step]
    {
        d = \SchedNext(\SchedState)\\
        {\buf_{ul} = \unlabelNda{\buf}{d}}\\
        \tup{m,a, {\buf_{ul}}, \CacheState,\BpState} \muarchStep{d}{} \tup{m',a',{\buf_{ul}'}, \CacheState',\BpState'}	\\
        \SchedState' = \SchedUpdate(\SchedState,{\BufProject{{\buf'}}})\\
        {\buf' = \labelNda{\buf_{ul}'}{\buf}{d}}
    }
    {
        \tup{m,a,\buf, \CacheState,\BpState, \SchedState} \TtMuarchStep{}{} \tup{m',a'\buf', \CacheState',\BpState', \SchedState'}
    }
\end{mathpar}
We also report here the  unlabelling function $\unlabelNda{\cdot}{\cdot}$ (see Figure~\ref{figure:stt:unlabeling-apx}) and the labelling function $\labelNda{\cdot}{\cdot}{\cdot}$ (see Figure~\ref{figure:stt:labeling-apx}).

\begin{figure*}
    \begin{align*}
        \unlabelNda{\buf}{\fetch{}} &= \mask{\buf}\\
        \unlabelNda{\buf}{\retire{}} &= \drop{\buf}\\
        \unlabelNda{\buf}{\execute{i}} &=  
        {
            \begin{cases}
                \mask{\buf} & \text{if } \transmitGadget{\elt{\buf}{i}}\\
                \drop{\buf} & \text{otherwise}
            \end{cases}
        }
        \\
        \drop{\emptysequence} & := \emptysequence\\
        \drop{ \labelled{\tagged{i}{T}}{L} \concat \buf } & := \tagged{i}{T} \concat \drop{\buf}\\
        \mask{\emptysequence} &:= \emptysequence \\ 	
        \mask{ \labelled{\tagged{i}{T}}{L} \concat \buf } & := 
        {
            \begin{cases}
                \tagged{\passign{x}{\bot}}{T} \concat \mask{\buf} & \text{if } L = \emptyset  \wedge  i = \passign{x}{e}  \\ 
                \tagged{i}{T} \concat \mask{\buf} & \text{otherwise}
            \end{cases}
        }
    \end{align*}
    \caption{Unlabeling function $\unlabelNda{\buf}{d }$ for \muarchStyle{tt}}\label{figure:stt:unlabeling-apx}
    \end{figure*}

    \begin{figure*}
        \begin{align*}
            \labels{\buf} &= \deriveLabels{\buf}{\lambda x \in \Var.\ \emptyset}\\
            \labels{\buf}(x) &= \deriveLabels{\buf}{\lambda x \in \Var.\ \emptyset}(x)\\
            \labels{\buf}(e) &= \bigcup_{x \in \mathit{vars}(e)} \deriveLabels{\buf}{\lambda x \in \Var.\ \emptyset}(x)\\
            \\
            \deriveLabels{\emptysequence}{\Lambda} &= \Lambda \\ 
            \deriveLabels{ \labelled{\tagged{i}{T}}{L} \concat \buf }{\Lambda} &=  
            {
                \begin{cases}
                    \deriveLabels{  \buf }{\Lambda[x \mapsto L ]} & \text{if } i = \pload{x}{e} \vee i = \passign{x}{e}\\
                    \deriveLabels{  \buf }{\Lambda} & \text{otherwise}
                \end{cases}
            }\\
            \\
        \decrement{\emptysequence} &:= \emptysequence \\
        \decrement{\labelled{\tagged{i}{T}}{L} \concat \buf } &:=  \labelled{\tagged{i}{T}}{ \{ j - 1 \mid j \in L \wedge j-1 > 0\} } \concat \decrement{\buf} \\
        \\
        \mathit{strip}(\emptysequence, j) &:= \emptysequence\\
        \mathit{strip}(\labelled{\tagged{i}{T}}{L} \concat \buf , j) &:= \labelled{\tagged{i}{T}}{L \setminus \{ j \}} \concat  \mathit{strip}( \buf , j)\\
        \\
        \labelNda{\buf_{ul}}{\buf}{\fetch{}} &= \buf \\
                \text{where }& |\buf_{ul}| = |\buf| \\
        \labelNda{\buf_{ul} \concat \tagged{\passign{\pc}{e}}{\notags}}{\buf}{\fetch{}} &= \buf \concat \labelled{\tagged{\passign{\pc}{e}}{\notags}}{\emptyset}\\
        \text{where }& |\buf_{ul}| = |\buf| + 1 \\
        \labelNda{\buf_{ul} \concat \tagged{\passign{\pc}{\ell}}{\ell_0}}{\buf}{\fetch{}} &= \buf \concat \labelled{\tagged{\passign{\pc}{\ell}}{\ell_0}}{\emptyset}\\
        \text{where }& |\buf_{ul}| = |\buf| + 1 \\
        \labelNda{\buf_{ul} \concat \tagged{i}{T} \concat \tagged{\pmarkedassign{\pc}{\ell}}{\notags} }{\buf}{\fetch{}} &= \buf \concat \labelled{\tagged{i}{T}}{\emptyset} \concat \labelled{\tagged{\pmarkedassign{\pc}{\ell}}{\notags}}{\emptyset}\\
            \text{where }& 
            i \neq \pload{x}{e} \wedge i \neq \passign{x}{e} \wedge |\buf_{ul}| = |\buf|+2\\
        \labelNda{\buf_{ul} \concat \tagged{ \passign{x}{e} }{T} \concat \tagged{\pmarkedassign{\pc}{\ell}}{\notags} }{\buf}{\fetch{}} &= \buf \concat \labelled{\tagged{ \passign{x}{e} }{T}}{\labels{\buf}(e)} \concat \labelled{\tagged{\pmarkedassign{\pc}{\ell}}{\notags}}{\emptyset}\\
        \text{where }& |\buf_{ul}| = |\buf| + 2 \\
        \labelNda{\buf_{ul} \concat \tagged{\pload{x}{e}}{T} \concat \tagged{\passign{\pc}{\ell}}{\notags} }{\buf}{\fetch{}} &= \buf \concat \labelled{\tagged{\pload{x}{e}}{T}}{ \{ j \in \Nat \mid \exists y, \ell,\ell_0.\  (\elt{\buf}{j} = \tagged{\passign{y}{\ell}}{\ell_0} \wedge \ell_0 \neq \notags) \} } \concat \labelled{\tagged{\passign{\pc}{\ell}}{\notags}}{\emptyset}\\
        \text{where }& |\buf_{ul}| = |\buf| + 2 
            \\
            \\
        \labelNda{\buf_{ul}}{\buf}{\execute{i}} &= \buf[0..i-1] \concat \labelled{\elt{\buf_{ul}}{i}}{ L } \concat \buf[i+1 .. |\buf|] \\
        \text{where }& |\buf_{ul}| = |\buf| \wedge \elt{\buf}{i} = \labelled{c}{ L } \wedge 
        \forall \ell, \ell_0.\ 	(c = \tagged{\passign{\pc}{\ell}}{\ell_0} \rightarrow \ell_0 = \emptysequence)
        \\
        \labelNda{\buf_{ul}}{\buf}{\execute{i}} &= \buf[0..i-1] \concat \labelled{\elt{\buf_{ul}}{i}}{ \emptyset } \concat 
        \mathit{strip}(\buf[i+1 .. |\buf|], i) \\
        \text{where }& |\buf_{ul}| = |\buf| \wedge \elt{\buf}{i} = \labelled{\tagged{\passign{\pc}{\ell}}{\ell_0}}{ L } \wedge \ell_0 \neq  \notags \wedge \elt{\buf_{ul}}{i} = \tagged{\passign{\pc}{\ell'}}{\notags}
        \\
        \labelNda{\buf_{ul}}{\buf}{\execute{i}} &= \buf[0..i-1] \concat \labelled{\elt{\buf_{ul}}{i}}{ L } \concat \buf[i+1 .. |\buf|] \\
        \text{where }& |\buf_{ul}| = |\buf| \wedge \elt{\buf}{i} = \labelled{\tagged{\passign{\pc}{\ell}}{\ell_0}}{ L } \wedge \ell_0 \neq \notags  \wedge \elt{\buf_{ul}}{i} = \tagged{\passign{\pc}{\ell}}{\ell_0} 
        \\
        \labelNda{\buf_{ul}}{\buf}{\execute{i}} &= \buf[0..i-1] \concat \labelled{\elt{\buf_{ul}}{i}}{ \emptyset} \\
        \text{where }& |\buf_{ul}| = i \wedge \elt{\buf}{i} = \labelled{\tagged{\passign{\pc}{\ell}}{\ell_0}}{L} \wedge \ell_0 \neq \notags \wedge \elt{\buf_{ul}}{i} = \tagged{\passign{\pc}{\ell'}}{\notags}
        \\
        \\
        \labelNda{ \buf_{ul} }{\labelled{\tagged{i}{T}}{L} \concat \buf}{\retire{}} &= \decrement{\buf}
        \end{align*}
        \caption{Labeling function $\labelNda{\buf_{\mathit{ul}}}{\buf}{d}$ for \muarchStyle{tt}}\label{figure:stt:labeling-apx}
    \end{figure*}

The semantics $\TtMuarchSem{p}$ for a program $p$ is defined as follows:
$\TtMuarchSem{p}(\tup{m,a})$ is $\tup{\emptysequence, \CacheState_0,\BpState_0, \SchedState_0}$ $ \cdot \tup{\BufProject{\buf_1}, \CacheState_1,\BpState_1, \SchedState_1}$ $  \cdot \tup{\BufProject{\buf_2}, \CacheState_2,\BpState_2, \SchedState_2}$ $  \cdot \ldots \cdot $ $ \tup{\BufProject{\buf_n}, \CacheState_n,\BpState_n, \SchedState_n}$ where $	\tup{m,a,\emptysequence, \CacheState_0,\BpState_0, \SchedState_0 }$ $ \TtMuarchStep{}{}$ $ \tup{m_1,a_1,\buf_1, \CacheState_1,\BpState_1, \SchedState_1} $ $ \TtMuarchStep{}{} $ $\tup{m_2,a_2,\buf_2, \CacheState_2,\BpState_2, \SchedState_2} $ $\TtMuarchStep{}{}$ $\ldots$ $\TtMuarchStep{}{} $ $ \tup{m_n,a_n,\buf_n, \CacheState_n,\BpState_n, \SchedState_n}$ is the complete hardware run obtained starting from $\tup{m,a}$ and terminating in $\tup{m_n,a_n,\buf_n, \CacheState_n,\BpState_n, \SchedState_n}$, which is a final hardware state. Otherwise, $\TtMuarchSem{p}(\tup{m,a})$ is undefined.

\subsection{Proof of Theorem~\ref{theorem:hni:stt:one}}

\textbf{ASSUMPTION: } In the following, we assume that $\wInterf > \wMuarch + 1$.

\sttOne*

\begin{proof}
The proof of the theorem is identical to the one of Theorem~\ref{theorem:hni:all} (see Appendix~\ref{appendix:proofs:general}):
\begin{compactitem}
    \item For the mapping lemma, we use the mapping function from Definition~\ref{def:vanilla:mapping} and the proof proceeds as in Lemma~\ref{lemma:vanilla:mapping-is-correct}.
    \item For the indistinguishability lemma, we use the indistinguishability relation from Definition~\ref{def:vanilla:deep-indistinguishability} and the proof proceeds as in Lemma~\ref{lemma:vanilla:trace-equiv-implies-stepwise-indistinguishability}. 
    \item The main lemma proceeds as in Lemma~\ref{lemma:vanilla:main-lemma}.
\end{compactitem}
\end{proof}

\subsection{Proof of Theorem~\ref{theorem:hni:stt:two}}

\sttTwo*

\begin{proof}
    Let $p$ be an arbitrary well-formed program.
	Moreover, let $\sigma = \tup{m,a},\sigma' = \tup{m',a'}$ be two arbitrary initial configurations.
	There are two cases:
	\begin{compactitem}
	\item[$\ArchSeqInterf{\Prg}(\sigma) \neq \ArchSeqInterf{\Prg}(\sigma')$:] Then, 	$\ArchSeqInterf{\Prg}(\sigma) = \ArchSeqInterf{\Prg}(\sigma') \Rightarrow \TtMuarchSem{\Prg}(\sigma) = \TtMuarchSem{\Prg}(\sigma')$ trivially holds.
	\item[$\ArchSeqInterf{\Prg}(\sigma) = \ArchSeqInterf{\Prg}(\sigma')$:]
		By unrolling the notion of $\ArchSeqInterf{\Prg}(\sigma)$ (together with all changes to the program counter $\pc$ being visible on traces), we obtain that there are runs $\crun:= \sigma$ $\ArchSeqInterfStep{o_1}{}$ $\sigma_1$ $\ArchSeqInterfStep{o_2}{}$ $\ldots$  $\ArchSeqInterfStep{o_{n-1}}{}$  $\sigma_n$ and $\crunp:= \sigma'\ArchSeqInterfStep{o_1'}{} \sigma_1' \ArchSeqInterfStep{o_2'}{} \ldots  \ArchSeqInterfStep{o_{n-1}'}{}  \sigma_n'$ such that $o_i = o_i'$ for all $0 < i < n$.
		By applying Lemma~\ref{lemma:tt:arch-seq:main-lemma}, we immediately get that $\TtMuarchSem{\Prg}(\sigma) = \TtMuarchSem{\Prg}(\sigma')$ (because either both run terminate producing indistinguishable sequences of processor configurations or they both get stuck).
		Therefore, $\ArchSeqInterf{\Prg}(\sigma) = \ArchSeqInterf{\Prg}(\sigma') \Rightarrow \TtMuarchSem{\Prg}(\sigma) = \TtMuarchSem{\Prg}(\sigma')$ holds.
	\end{compactitem}
	Hence, $\ArchSeqInterf{\Prg}(\sigma) = \ArchSeqInterf{\Prg}(\sigma') \Rightarrow \TtMuarchSem{\Prg}(\sigma) = \TtMuarchSem{\Prg}(\sigma')$ holds for all programs $p$ and initial configurations $\sigma,\sigma'$.
	Therefore, $\hsni{\ArchSeqInterf{\cdot}}{\TtMuarchSem{\cdot}}$ holds.
\end{proof}

\subsubsection{Preliminary definitions}

\begin{definition}[Deep-update for \muarchStyle{tt} and $\ArchSeqInterf{\cdot}{}$]
    Let $p$ be a program,  $\tup{m,a}$ be an \archstate{}, and $\buf$ be a labelled buffer.
    The \emph{deep-update of $\tup{m,a}$ given $\buf$} is defined as follows:
        \begin{align*}
        \update{\tup{m,a}}{\emptysequence} &:= \tup{m,a} \\
        \update{\tup{m,a}}{ \labelled{\tagged{\passign{x}{e}}{T}}{L} }  &:= 
            \begin{cases}
                \tup{m, a[x \mapsto \exprEval{e}{a}]} & \text{if } x \neq \pc \\
                \tup{m, a[x \mapsto \exprEval{e}{a}]} & \text{if } x = \pc  \wedge T = \notags \\
                \tup{m, a[x \mapsto \ell']} & \text{if } x = \pc  \wedge T = \ell \wedge p(\ell) = \pjz{y}{\ell'} \wedge a(y) = 0 \\ 
                \tup{m, a[x \mapsto \ell+1]} & \text{if } x = \pc  \wedge T = \ell \wedge p(\ell) = \pjz{y}{\ell'} \wedge a(y) \neq 0 
            \end{cases}
        \\
        \update{\tup{m,a}}{ \labelled{\tagged{\pmarkedassign{x}{e}}{T}}{L} }  &:= \tup{m, a[x \mapsto \exprEval{e}{a}]}\\
        \update{\tup{m,a}}{ \labelled{\tagged{\pload{x}{e}}{T}}{L} } &:= \tup{m,a[x \mapsto m(\exprEval{e}{a})] } \\
        \update{\tup{m,a}}{ \labelled{\tagged{\pstore{x}{e}}{T}}{L} } &:= \tup{m[\exprEval{e}{a} \mapsto a(x)],a}\\
        \update{\tup{m,a}}{ \labelled{\tagged{\pskip{}}{T}}{L} } &:= \tup{m,a}\\
        \update{\tup{m,a}}{ \labelled{\tagged{\pbarrier{}}{T}}{L}} &:= \tup{m,a}\\
        \update{\tup{m,a}}{(\tagged{i}{T} \concat \buf)} &:= 				\update{ (\update{\tup{m,a}}{\tagged{i}{T}}) }{ \buf }
        \end{align*}
\end{definition}

\newcommand{\wellformedLabels}[1]{\mathit{wfL}(#1)}
\newcommand{\wellformedp}[1]{\mathit{wf'}(#1)}
\begin{definition}[Well-formed buffers for $\muarchStyle{tt}$ and $\ArchSeqInterf{\cdot}{}$]
    A labelled reorder buffer $\buf$ is \emph{well-formed for $\muarchStyle{tt}$, $\ArchSeqInterf{\cdot}{}$, a well-formed program $p$, and an \archstate{}  $\tup{m,a}$}, written $\wellformed{\buf,\tup{m,a}}$, if the following conditions hold:
    \begin{align*}
        \wellformed{\buf, \tup{m,a}} & \text{ if } \wellformedp{\buf, \tup{m,a}} \wedge \wellformedLabels{\buf} \\
    	\wellformedp{\emptysequence,\tup{m,a}} & \\
        \wellformedp{ \labelled{\tagged{\passign{\pc}{e}}{T}}{L} \concat \buf, \tup{m,a} } & \text{ if } \wellformedp{\buf, \update{\tup{m,a}}{ \labelled{\tagged{\passign{\pc}{e}}{T}}{L}} } \wedge p(a(\pc)) \sim_{\tup{m,a}} \tagged{\passign{\pc}{e}}{\notags} \\
        \wellformedp{ \labelled{\tagged{\passign{\pc}{\ell}}{\ell_0}}{L} \concat \buf, \tup{m,a}} & \text{ if } \wellformedp{\buf, \update{\tup{m,a}}{ \labelled{\tagged{\passign{\pc}{\ell}}{\ell_0}}{L} }} \wedge  \ell_0 \in \Val \wedge p(\ell_0) = \pjz{x}{\ell'} \wedge \\& \quad \ell \in \{\ell', \ell_0+1\} \wedge p(a(\pc)) \sim_{\tup{m,a}} \tagged{\passign{\pc}{\ell}}{\ell_0} \wedge \ell_0 = a(\pc)\\
    	\wellformedp{ \labelled{\tagged{\pmarkedassign{\pc}{\ell}}{\notags}}{L} \concat \buf , \tup{m,a} } & \text{ if } \wellformedp{\buf,\update{ \labelled{\tagged{\pmarkedassign{\pc}{\ell}}{\notags}}{L} }{\tup{m,a}}} \wedge \ell \in \Val \\
        \wellformedp{ \labelled{\tagged{i}{\notags}}{L} \concat \labelled{\tagged{\pmarkedassign{\pc}{\ell}}{\notags}}{L'}  \concat \buf , \tup{m,a} } & \text{ if }  
        \wellformedp{\buf, \update{\tup{m,a}}{(  \labelled{\tagged{i}{\notags}}{L} \concat \labelled{\tagged{\pmarkedassign{\pc}{\ell}}{\notags}}{L'} )} } \wedge 
        \ell \in \Val \wedge (\forall e.\ i \neq \passign{\pc}{e}) \wedge  \\
        & \quad (\forall x,e.\ i \neq \pmarkedassign{x}{e}) \wedge (\forall x,e.\ i \neq \pload{\pc}{e}) \wedge p(a(\pc)) \sim_{\tup{m,a}} \tagged{i}{\notags} \wedge \\ & \quad \ell = a(\pc)+1\\
        \wellformedLabels{\buf} & \text{ if } \forall 1 \leq i \leq |\buf|.\ \wellformedLabels{\elt{\buf}{i},\buf[0..i-1]} \\
        \wellformedLabels{ \labelled{\tagged{\pload{x}{e}}{T}}{L} ,\buf} &\text{ if } L = \{ j \mid \elt{\buf}{j} = \labelled{\tagged{\passign{\pc}{\ell}}{\ell_0} }{L} \wedge \ell_0 \neq \notags \}\\
        \wellformedLabels{ \labelled{\tagged{\passign{x}{e}}{T}}{L} ,\buf} &\text{ if } L = \labels{\buf}(e) \wedge x \neq \pc\\
        \wellformedLabels{ \labelled{\tagged{\pstore{x}{e}}{T}}{L} ,\buf} &\text{ if } L = \emptyset\\
        \wellformedLabels{ \labelled{\tagged{\passign{\pc}{e}}{T}}{L} ,\buf} &\text{ if } L = \emptyset\\
        \wellformedLabels{ \labelled{\tagged{\pmarkedassign{\pc}{e}}{T}}{L} ,\buf} &\text{ if } L = \emptyset\\
        \wellformedLabels{ \labelled{\tagged{\pskip}{T}}{L} ,\buf} &\text{ if } L = \emptyset\\
        \wellformedLabels{ \labelled{\tagged{\pbarrier}{T}}{L} ,\buf} &\text{ if } L = \emptyset
    \end{align*}
    where the instruction-compatibility relation $\sim_{\tup{m,a}}$ is defined as follows:
    \begin{align*}
        \pskip &\sim_{\tup{m,a}} \tagged{\pskip}{\notags} \\ \allowdisplaybreaks
        \pbarrier &\sim_{\tup{m,a}} \tagged{\pbarrier}{\notags} \\ \allowdisplaybreaks
        \passign{x}{e} &\sim_{\tup{m,a}} \tagged{\passign{x}{e}}{\notags} \\ \allowdisplaybreaks
        \passign{x}{e} &\sim_{\tup{m,a}} \tagged{\passign{x}{v}}{\notags} \text{ if } v = \exprEval{e}{a} \\ \allowdisplaybreaks
        \pload{x}{e} &\sim_{\tup{m,a}} \tagged{\pload{x}{e}}{\notags} \\ \allowdisplaybreaks
        \pload{x}{e} &\sim_{\tup{m,a}} \tagged{\passign{x}{v}}{\notags} \text{ if } v = m(\exprEval{e}{a})\\ \allowdisplaybreaks
        \pstore{x}{e} &\sim_{\tup{m,a}} \tagged{\pstore{x}{e}}{\notags} \\ \allowdisplaybreaks
        \pstore{x}{e} &\sim_{\tup{m,a}} \tagged{\pstore{v}{n}}{\notags} \text{ if } v = a(x) \wedge n = \exprEval{e}{a} \\ \allowdisplaybreaks
        \pjmp{e} &\sim_{\tup{m,a}} \tagged{\passign{\pc}{e}}{\notags} \\ \allowdisplaybreaks
        \pjmp{e} &\sim_{\tup{m,a}} \tagged{\passign{\pc}{v}}{\notags} \text{ if } v = \exprEval{e}{a} \\ \allowdisplaybreaks
        \pjz{x}{\ell} &\sim_{\tup{m,a}} \tagged{\passign{\pc}{\ell}}{a(\pc)} \\ \allowdisplaybreaks
        \pjz{x}{\ell} &\sim_{\tup{m,a}} \tagged{\passign{\pc}{a(\pc)+1}}{a(\pc)} \\ \allowdisplaybreaks
        \pjz{x}{\ell} &\sim_{\tup{m,a}} \tagged{\passign{\pc}{a(\pc)+1}}{\notags} \wedge a(x) \neq 0\\ \allowdisplaybreaks
        \pjz{x}{\ell} &\sim_{\tup{m,a}} \tagged{\passign{\pc}{\ell}}{\notags} \wedge a(x) = 0
    \end{align*}
    \end{definition}

    \begin{definition}[Prefixes of buffers]\label{def:tt:arch-seq:prefixes}
        The prefixes of a well-formed labelled buffer $\buf$, given an \archstate{} $\tup{m,a}$, are defined as follows:
        \begin{align*}
            \prefixes{\emptysequence, \tup{m,a}} &= \{ \emptysequence \}\\
            \prefixes{  \labelled{\tagged{\passign{\pc}{e}}{\notags}}{L} \concat \buf , \tup{m,a}} &= 
            \{ \emptysequence \} \cup \{  \labelled{\tagged{\passign{\pc}{e}}{T}}{L} \concat \buf' \mid \buf' \in \prefixes{\buf, \update{\tup{m,a}}{ \labelled{\tagged{\passign{\pc}{e}}{\notags}}{L} } } \} \\
            \prefixes{  \labelled{\tagged{\passign{\pc}{\ell}}{\ell_0}}{L} \concat \buf , \tup{m,a}} &= 
            \{ \emptysequence, \labelled{\tagged{\passign{\pc}{\ell}}{\ell_0}}{L} \} \cup \{  \labelled{\tagged{\passign{\pc}{\ell}}{\ell_0}}{L} \concat \buf' \mid \buf' \in \prefixes{\buf, \update{\tup{m,a}}{ \labelled{\tagged{\passign{\pc}{\ell}}{\ell_0}}{L} } } \wedge \\
            & \qquad p(\ell_0) = \pjz{x}{\ell'} \wedge (a(x) = 0 \rightarrow \ell = \ell') \wedge (a(x) \neq 0 \rightarrow \ell = \ell_0+1 ) \} \\
            \prefixes{  \labelled{\tagged{i}{\notags}}{L} \concat \labelled{\tagged{\pmarkedassign{\pc}{\ell}}{\notags}}{L'} \concat \buf } &= \{ \emptysequence \} \cup \{   \labelled{\tagged{i}{\notags}}{L} \concat \labelled{\tagged{\pmarkedassign{\pc}{\ell}}{\notags}}{L'} \concat \buf' \mid \buf' \in \prefixes{\buf, \update{\tup{m,a}}{ (\labelled{\tagged{i}{\notags}}{L} \concat \labelled{\tagged{\pmarkedassign{\pc}{\ell}}{\notags}}{L'} ) } } \} \\
            \prefixes{   \labelled{\tagged{\pmarkedassign{\pc}{\ell}}{\notags}}{L}  \concat \buf } &= \{  \labelled{\tagged{\pmarkedassign{\pc}{\ell}}{\notags}}{L} \} \cup \{  \labelled{\tagged{\pmarkedassign{\pc}{\ell}}{\notags}}{L} \concat \buf' \mid \buf' \in \prefixes{\buf, \update{\tup{m,a}}{ \labelled{\tagged{\pmarkedassign{\pc}{\ell}}{\notags}}{L} }} \} 
        \end{align*}
    \end{definition}

\subsubsection{Mapping lemma}

Lemma~\ref{lemma:loadDelay:arch-seq:buffers-well-formedness} states that all reorder buffers occurring in hardware runs are well-formed.

\begin{lemma}[Reorder buffers are well-formed]\label{lemma:tt:arch-seq:buffers-well-formedness}
Let $p$ be a well-formed program, $\sigma_0 = \tup{m,a}$ be an initial \archstate, $\CacheState_0$ be the initial cache state, $\BpState_0$ be the initial branch predictor state, and $\SchedState_0$ be the initial scheduler state.
For all hardware runs  $\hrun := C_0 \TtMuarchStep{}{} C_1 \TtMuarchStep{}{} C_2 \TtMuarchStep{}{} \ldots \TtMuarchStep{}{} C_k$ and all $0 \leq i \leq k$, then $\wellformed{\buf_i,\tup{m_i,a_i}}$, where $C_0 = \tup{m,a,\emptysequence, \CacheState_0, \BpState_0, \SchedState_0}$ and $C_i = \tup{m_i, a_i,\buf_i, \CacheState_i, \BpState_i, \SchedState_i}$.
\end{lemma}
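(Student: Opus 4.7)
The plan is to proceed by induction on the length $i$ of the hardware run, mirroring the structure of the analogous Lemmas~\ref{lemma:vanilla:buffers-well-formedness}, \ref{lemma:seq-processor:buffers-well-formedness}, and~\ref{lemma:loadDelay:arch-seq:buffers-well-formedness}. The base case $i=0$ is immediate: $\buf_0 = \emptysequence$ is well-formed by definition, since both $\wellformedp{\emptysequence,\tup{m,a}}$ and $\wellformedLabels{\emptysequence}$ hold vacuously. For the inductive step, assume $\wellformed{\buf_{i-1},\tup{m_{i-1},a_{i-1}}}$ and dispatch on $d = \SchedNext(\SchedState_{i-1})$, then on the specific pipeline rule from the $\muarchStep{d}{}$ relation that produced $\buf_i$ from $\buf_{i-1}$ via the wrapped \textsc{Step} rule of $\TtMuarchStep{}{}$.

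For each case, two things must be verified. First, $\wellformedp{\buf_i,\tup{m_i,a_i}}$ holds; this is essentially identical to the argument used for the unlabelled case in Lemma~\ref{lemma:vanilla:buffers-well-formedness}, since the structural shape of the reorder buffer (instructions, tags, and the ``instruction compatible with the current $\pc$'' invariants) is unaffected by labels. The only subtlety is that the semantic step happens on the unlabelled buffer $\buf_{ul} = \unlabelNda{\buf_{i-1}}{d}$; one must check that passing through $\unlabelNda{\cdot}{\cdot}$ (which at worst replaces $\passign{x}{e}$ by $\passign{x}{\bot}$ for tainted assignments when a transmit instruction executes, or just strips labels otherwise) preserves the underlying instruction shape and $\sim_{\tup{m,a}}$-compatibility, so that $\wellformedp{\cdot}$ transfers from $\buf_{i-1}$ through $\buf_{ul}'$ back to $\buf_i = \labelNda{\buf_{ul}'}{\buf_{i-1}}{d}$.

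Second, and this is the main obstacle, one must verify $\wellformedLabels{\buf_i}$, i.e., that the labelling function $\labelNda{\cdot}{\cdot}{\cdot}$ of Appendix~\techReportAppendix{appendix:stt-labeling} always produces labels satisfying the four clauses of $\wellformedLabels{\cdot}$ (loads tagged with indices of pending branches, assignments tagged with the taint of their expression's free variables, all other commands tagged with $\emptyset$). This must be checked case-by-case against the many clauses defining $\labelNda{\cdot}{\cdot}{\cdot}$:
\begin{itemize}
\item For $\fetch{}$, the newly appended command(s) receive the correct label by direct inspection of the fetch-clauses of $\labelNda{\cdot}{\cdot}{\cdot}$: a fresh $\pload{x}{e}$ is labelled with exactly the set of indices of unresolved branches in $\buf_{i-1}$, and a fresh $\passign{x}{e}$ gets $\labels{\buf_{i-1}}(e)$; earlier commands' labels are unchanged, preserving the induction hypothesis on them.
\item For $\execute{j}$, the main work is the branch-resolution sub-case (\textsc{Execute-Branch-Commit} / \textsc{Execute-Branch-Rollback}): when a pending branch at index $j$ resolves, the $\mathit{strip}(\buf[j+1..],j)$ operation removes $j$ from all later labels, and one must argue that this preserves the invariant that load labels coincide with the set of indices of still-unresolved preceding branches, and that assignment labels coincide with $\labels{\cdot}(e)$ on the modified buffer. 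For non-branch executions, labels on other commands are preserved, and the executed command keeps its label, which is consistent because the resolution does not change which indices correspond to unresolved branches.
\item For $\retire$, the $\decrement{\cdot}$ operation shifts all indices down by one and drops those that become $0$; since retirement removes the first command (necessarily untagged and non-speculation-inducing by $\wellformedp{\cdot}$ and the retire rules), this preserves the correspondence between label indices and positions of unresolved branches.
\end{itemize}

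The bookkeeping in the branch-resolution case is the delicate part, because one must simultaneously track that (a) the $\ell_0$-tag becomes $\notags$ exactly on the resolved $\passign{\pc}{\cdot}$, (b) no fresh unresolved branches appear further down the buffer, and (c) the $\labels{\cdot}(e)$ computation for later assignments still agrees with their recorded labels even after $\mathit{strip}$ has been applied. Once this case-analysis is complete the induction closes, yielding $\wellformed{\buf_i,\tup{m_i,a_i}}$ for every $i \leq k$.
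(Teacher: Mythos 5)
Your proposal is correct and takes essentially the same approach as the paper, whose entire proof reads ``the lemma follows by (1) induction on $i$, and (2) inspection of the rules defining $\TtMuarchStep{}{}$''; you have simply spelled out the case analysis (splitting the invariant into $\wellformedp{\cdot}$ and $\wellformedLabels{\cdot}$, and tracking how $\unlabelNda{\cdot}{\cdot}$, $\labelNda{\cdot}{\cdot}{\cdot}$, $\mathit{strip}$, and $\decrement{\cdot}$ interact with each directive) that the paper leaves implicit.
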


\begin{proof}
The lemma follows by (1) induction on $i$, and (2) inspection of the rules defining $\TtMuarchStep{}{}$.
\end{proof}

\begin{definition}[$\crun-\hrun$ mapping for $\muarchStyle{tt}$ and $\ArchSeqInterf{\cdot}$]\label{def:tt:arch-seq:mapping}
	Let $p$ be a well-formed program, $\sigma_0 = \tup{m,a}$ be an initial \archstate, $\CacheState_0$ be the initial cache state, $\BpState_0$ be the initial branch predictor state, and $\SchedState_0$ be the initial scheduler state.
	Furthermore, let:
	\begin{compactitem}
		\item $\crun := \sigma_0 \ArchSeqInterfStep{o_1}{} \sigma_1 \ArchSeqInterfStep{o_2}{} \sigma_2 \ArchSeqInterfStep{o_3}{} \ldots \ArchSeqInterfStep{o_{n-1}}{} \sigma_n$ be the longest $\interfStyle{seq-arch}$ contract run obtained starting from $\sigma$.
		\item $\hrun := C_0 \TtMuarchStep{}{} C_1 \TtMuarchStep{}{} C_2 \TtMuarchStep{}{} \ldots \TtMuarchStep{}{} C_k$ be the longest $\muarchStyle{tt}$ hardware run obtained starting from $C_0 = \tup{m,a,\emptysequence, \CacheState_0, \BpState_0, \SchedState_0}$.
		\item $\hrun(i)$ be the $i$-th hardware configuration in $\hrun$.
		\item $\crun(i)$ be the $i$-th contract configuration in $\crun$  (note that $\crun(i) = \sigma_n$ for all $i > n$).
	\end{compactitem}
	The \emph{$\crun-\hrun$ mapping}, which maps hardware configurations in $\hrun$ to contract configurations in $\crun$, is defined as follows:
	\begin{align*}
		\map{\crun}{\hrun}{0} &:= \{0 \mapsto 0\} \\ 
		\map{\crun}{\hrun}{i} &:= {
            \begin{cases}
            \map{\crun}{\hrun}{i-1} & \text{if } \SchedNext(\hrun(i-1)) = \fetch{} \wedge  \Mispred{\hrun(i-1)} \\
			\map{\crun}{\hrun}{i-1} & \text{if } \SchedNext(\hrun(i-1)) = \fetch{} \wedge ln(\hrun(i-1)) = ln(\hrun(i)) \wedge \neg \Mispred{\hrun(i-1)}\\
			fetch_{\crun,\hrun}(i) & \text{if } \SchedNext(\hrun(i-1)) = \fetch{} \wedge ln(\hrun(i-1)) < ln(\hrun(i)) \wedge \neg \Mispred{\hrun(i-1)}\\
			\map{\crun}{\hrun}{i-1} & \text{if } \SchedNext(\hrun(i-1)) = \execute{j} \wedge \neg \Mispred{\hrun(i-1)}\\ 
			shift(\map{\crun}{\hrun}{i -1}) & \text{if } \SchedNext(\hrun(i-1)) = \retire{} \wedge \neg \Mispred{\hrun(i-1)} 
			\end{cases}
		}\\ 
		fetch_{\crun,\hrun}(i) &=
				\map{\crun}{\hrun}{i-1}[ln(\hrun(i-1))+2 \mapsto \map{\crun}{\hrun}{i-1}(ln(\hrun(i-1)))+1]\\
				& \qquad \text{if }
							p(\mathit{lstPc}(\hrun(i -1))) \neq \pjz{x}{\lbl} \wedge 
							p(\mathit{lstPc}(\hrun(i -1))) \neq \pjmp{e} 
							\\
		fetch_{\crun,\hrun}(i) &=
				\map{\crun}{\hrun}{i-1}[ln(\hrun({i-1}))+1 \mapsto \map{\crun}{\hrun}{i-1}(ln(\hrun(i-1)))+1]\\
				& \qquad \text{if }
							p(\mathit{lstPc}(\hrun(i -1))) = \pjmp{e} \vee p(\mathit{lstPc}(\hrun(i -1))) = \pjz{x}{\lbl} 
							\\
		ln(\tup{m,a,\buf,\CacheState,\BpState,\SchedState}) &= |\buf|\\
		\SchedNext(\tup{m,a,\buf,\CacheState,\BpState,\SchedState}) &= \SchedNext(\SchedState)\\
		shift(map) &= \lambda i \in \Nat.\ map(i +1 )\\
		\mathit{lstPc}(\tup{m,a,\buf,\CacheState,\BpState,\SchedState}) &= (\update{\tup{m,a}}{\buf})(\pc)\\
		\Mispred{\tup{m,a,\buf,\CacheState,\BpState,\SchedState}} &= 
	{	\begin{cases}
			\top & \text{if } \forall 1 \leq i \leq |\buf|.\ (\elt{\buf}{i} =  \tagged{\passign{\pc}{\ell}}{\ell'}) \rightarrow (\ell = \mathit{correctPred}(\ell', \update{\tup{m,a}}{\buf[0..i-1]}) )\\
			\bot & \text{otherwise}
		\end{cases}}\\
		\mathit{correctPred}(\ell,a) &= 
	{	\begin{cases}
		\ell'	& \text{if } p(\ell) = \pjz{x}{\ell'} \wedge a(x) = 0\\
		\ell + 1 & \text{otherwise}
		\end{cases}}
	\end{align*}
    \end{definition}
    
    We are now ready to prove Lemma~\ref{lemma:tt:arch-seq:mapping-is-correct}, the main lemma showing the correctness of the $\crun-\hrun$ mapping.

    \begin{lemma}[Correctness of $\crun-\hrun$ mapping]\label{lemma:tt:arch-seq:mapping-is-correct}
        Let $p$ be a well-formed program, $\sigma_0 = \tup{m,a}$ be an initial \archstate, $\CacheState_0$ be the initial cache state, $\BpState_0$ be the initial branch predictor state, and $\SchedState_0$ be the initial scheduler state.
        Furthermore, let:
        \begin{compactitem}
            \item $\crun := \sigma_0 \ArchSeqInterfStep{o_1}{} \sigma_1 \ArchSeqInterfStep{o_2}{} \sigma_2 \ArchSeqInterfStep{o_3}{} \ldots \ArchSeqInterfStep{o_{n-1}}{} \sigma_n$ be the longest $\interfStyle{seq-arch}$ contract run obtained starting from $\sigma_0$.
            \item $\hrun := C_0 \TtMuarchStep{}{}  C_1 \TtMuarchStep{}{} C_2 \TtMuarchStep{}{} \ldots \TtMuarchStep{}{} C_k$ be the longest $\muarchStyle{tt}$ hardware run obtained starting from $C_0 = \tup{m,a,\emptysequence, \CacheState_0, \BpState_0, \SchedState_0}$.
            \item $\hrun(i)$ be the $i$-th hardware configuration in $\hrun$.
            \item $\crun(i)$ be the $i$-th contract configuration in $\crun$  (note that $\crun(i) = \sigma_n$ for all $i > n$).
            \item $ \map{\crun}{\hrun}{\cdot}$ be the mapping from Definition~\ref{def:tt:arch-seq:mapping}.
        \end{compactitem}
        The following conditions hold:
        \begin{compactenum}[(1)]
        \item $C_0$ is an initial hardware configuration.
        \item $C_k$ is a final hardware configuration or there is no $C_{k'}$ such that $C_k \TtMuarchStep{}{} C_{k'}$.
        \item for all $0 \leq i \leq k$, given $C_i = \tup{m_i,a_i, \buf_i, \CacheState_i, \BpState_i, \SchedState_i}$ the following conditions hold:
            \begin{compactenum}[(a)]
                \item for all $\buf \in \prefixes{\buf_i, \tup{m_i,a_i}}$,  $\update{\tup{m_i,a_i}}{\buf} = \crun( \map{\crun}{\hrun}{i}(|\buf|) )$.
            \end{compactenum}
        \end{compactenum}
    \end{lemma}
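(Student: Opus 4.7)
The plan is to mirror the structure of Lemma~\ref{lemma:loadDelay:arch-seq:mapping-is-correct} as closely as possible, exploiting the fact that the deep-update function $\update{\cdot}{\cdot}$ for $\muarchStyle{tt}$ treats a labelled command $\labelled{\tagged{i}{T}}{L}$ exactly as its unlabelled counterpart $\tagged{i}{T}$ in the loadDelay case. Points (1) and (2) follow immediately from the construction of $\hrun$ as the longest hardware run starting from $C_0 = \tup{m,a,\emptysequence,\CacheState_0,\BpState_0,\SchedState_0}$. For (3) I proceed by induction on $i$; the base case $i=0$ is immediate because $\buf_0 = \emptysequence$, $\prefixes{\emptysequence,\tup{m,a}} = \{\emptysequence\}$, and $\update{\tup{m,a}}{\emptysequence} = \sigma_0 = \crun(0) = \crun(\map{\crun}{\hrun}{0}(0))$.

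For the inductive step I would case-split on $\SchedNext(C_{i-1}) \in \{\fetch{}, \execute{j}, \retire{}\}$ and then on which of the underlying rules from the hardware semantics produces the next unlabelled buffer $\buf_{ul}'$ via $\muarchStep{d}{}$. In each subcase the strategy is the same: pick an arbitrary $\buf \in \prefixes{\buf_i,\tup{m_i,a_i}}$, use the definition of $\prefixes{\cdot,\cdot}$ to ensure that $\buf$ contains no mispredicted branch (except possibly at its last position), lift $\buf$ back to a corresponding prefix $\buf'$ of $\buf_{i-1}$, apply the induction hypothesis (H.3.a) to $\buf'$, and then compute $\update{\tup{m_i,a_i}}{\buf}$ using the fact that most rules either leave the deep-update invariant (e.g.\ load misses, assignment resolutions inside already-resolved expressions, barrier/skip) or realise exactly one contract step (e.g.\ resolved loads, committed branches, fetches that extend the buffer).

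The main obstacle is the $\execute{j}$ case with \textsc{Execute-Load-Hit} or \textsc{Execute-Branch-Commit}/\textsc{Execute-Branch-Rollback}, because the $\muarchStyle{tt}$ countermeasure first masks the buffer via $\unlabelNda{\buf_{i-1}}{\execute{j}} = \mask{\buf_{i-1}}$, turning labelled assignments with empty label into $\passign{x}{\bot}$, and after the step relabels via $\labelNda{\cdot}{\cdot}{\cdot}$, possibly stripping an index from later labels. I need to argue two things. First, for any prefix $\buf \in \prefixes{\buf_i,\tup{m_i,a_i}}$ with $|\buf| \leq j$, the well-formedness invariant $\wellformedLabels{\cdot}$ together with the definition of $\prefixes{\cdot,\cdot}$ (which cuts off at mispredicted branches) guarantees that the masking does not affect the relevant region: any load appearing in $\buf$ with non-empty label would, by $\wellformedLabels{\cdot}$, require an unresolved branch earlier in $\buf$, contradicting $\buf \in \prefixes{\buf_i,\tup{m_i,a_i}}$. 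Hence on the prefixes we care about, $\mask$ and $\drop$ have the same effect on $\update{\cdot}{\cdot}$. Second, when a branch is resolved at position $j$, the $\mathit{strip}$ operation in $\labelNda{\cdot}{\cdot}{\execute{j}}$ only removes occurrences of the index $j$ from later labels; this does not change any instruction's opcode or operands and so does not change $\update{\cdot}{\cdot}$ on any prefix, and it preserves $\wellformedLabels{\cdot}$, so the induction invariant is maintained.

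The $\retire{}$ and $\fetch{}$ cases go through essentially unchanged from the loadDelay proof: $\drop$ removes the label of the retired head without touching its opcode, $\decrement$ renumbers the labels of the remaining commands consistently with the shift of the buffer, and the labelling of newly fetched instructions via $\labelNda{\cdot}{\cdot}{\fetch{}}$ only attaches labels and never alters the underlying instruction, so $\update{\cdot}{\cdot}$ and therefore the commutation with $\crun$ agrees with the loadDelay analysis. Combining the correctness of the maps under each directive with the inductive hypothesis then yields (3.a) for every $\buf \in \prefixes{\buf_i,\tup{m_i,a_i}}$.
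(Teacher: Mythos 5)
Your proposal is correct and follows essentially the same route as the paper, whose own proof of this lemma is a one-line reference to the analogous mapping lemma for $\muarchStyle{loadDelay}$ (the mapping definitions coincide and the deep-update ignores labels); your additional arguments about $\mask{\cdot}$, $\drop{\cdot}$, and $\mathit{strip}$ acting trivially on the label-free prefixes supply exactly the detail the paper leaves implicit. One wording slip: you describe $\mask{\cdot}$ as replacing assignments with \emph{empty} label by $\passign{x}{\bot}$, which matches the literal side condition in the paper's figure for the unlabeling function but contradicts both the surrounding prose and your own subsequent (correct) argument, which require that only assignments with \emph{non-empty} label are masked.
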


\begin{proof}
The proof of this lemma is similar to the proof of Lemma~\ref{lemma:loadDelay:arch-seq:mapping-is-correct} (since the mapping in Definition~\ref{def:tt:arch-seq:mapping} is the same as the one in Definition~\ref{def:loadDelay:arch-seq:mapping}).
\end{proof}

\subsubsection{Indistinguishability lemma}

\begin{definition}[Deep-indistinguishability of hardware configurations]\label{def:tt:arch-seq:deep-indistinguishability}
	We say that two hardware configurations $\tup{\sigma,\mu} = \tup{m,a,\buf, \CacheState,\BpState, \SchedState}$ and $\tup{\sigma',\mu'} = \tup{m',a',\buf', \CacheState',\BpState', \SchedState'}$ are \emph{deep-indistinguishable}, written $\tup{\sigma,\mu} \sim \tup{\sigma',\mu'}$, iff
    \begin{inparaenum}[(a)]
        \item $a = a'$,
		\item $\DeepProject{\buf} = \DeepProject{ \buf'}$,
		\item $\CacheState = \CacheState'$,
		\item $\BpState = \BpState'$, and
		\item $\SchedState = \SchedState'$.
    \end{inparaenum}
    where $\DeepProject{\buf}$ is as follows:
    \begin{align*}
        \DeepProject{\emptysequence}	&:= \emptysequence\\
        \DeepProject{\labelled{\tagged{\pskip{}}{T}}{L}}  &:=  \labelled{\tagged{\pskip{}}{T}}{L}\\
        \DeepProject{ \labelled{\tagged{\pbarrier{}}{T}}{L}}  &:=  \labelled{\tagged{\pbarrier{}}{T}}{L}\\
        \DeepProject{ \labelled{\tagged{\passign{x}{e}}{T}}{L}}  &:=
        {
            \begin{cases}
            \labelled{\tagged{\passign{x}{\unresolved}}{T}}{L} & \text{if}\  L \neq \emptyset \wedge x \neq \pc\\
            \labelled{\tagged{\passign{x}{e}}{T}}{L} & \text{if}\  L = \emptyset\\
            \labelled{\tagged{\passign{x}{e}}{T}}{L} & \text{if}\ x = \pc
            \end{cases}
        }\\
        \DeepProject{\labelled{\tagged{\pload{x}{e}}{T}}{L}}  &:=
        {
            \labelled{\tagged{\pload{x}{e}}{T}}{L}
        }\\
        \DeepProject{\labelled{\tagged{\pstore{x}{e}}{T}}{L}}  &:=
        {
            \labelled{\tagged{\pstore{x}{e}}{T}}{L}
        }\\
        \DeepProject{ (\labelled{\tagged{i}{T}}{L} \concat \buf) } &:=\DeepProject{ (\labelled{\tagged{i}{T}}{L} } \concat \DeepProject{\buf}
    \end{align*}
\end{definition}

\begin{lemma}[Masking preserves deep-indistinguishablity]
\label{lemma:tt:arch-seq:masking-preserves-indistinguishability}
Let  $\tup{\sigma,\mu} = \tup{m,a,\buf, \CacheState,\BpState, \SchedState}$ and $\tup{\sigma',\mu'} = \tup{m',a',\buf', \CacheState',\BpState', \SchedState'}$ be two well-formed hardware configurations.
If $\tup{\sigma,\mu} \sim \tup{\sigma',\mu'}$, then for all $0 \leq i \leq |\buf|$ and
     all $x \in \Var$, $\apply{\tup{m,a}}{ \mask{\buf[0..i]} }(x) = \apply{\tup{m',a'}}{ \mask{\buf'[0..i]} }(x)$.
\end{lemma}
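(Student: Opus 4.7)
The plan is to prove the statement by induction on $i$, exploiting the two pieces of information provided by deep-indistinguishability: the architectural register assignments coincide ($a=a'$) and the deep-projections of the buffers coincide ($\DeepProject{\buf} = \DeepProject{\buf'}$). Let $A_i := \apply{\tup{m,a}}{\mask{\buf[0..i]}}$ and $A_i' := \apply{\tup{m',a'}}{\mask{\buf'[0..i]}}$. I will strengthen the statement slightly and prove by induction that $A_i(x) = A_i'(x)$ for all $x$. The base case $i=0$ reduces to $a(x) = a'(x)$, which is immediate.

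For the inductive step, the key observation is that the mask function and the deep-projection function mask out exactly the same commands: precisely the non-pc assignments whose label $L$ is non-empty. Thus $\DeepProject{\buf} = \DeepProject{\buf'}$ forces $\elt{\buf}{i+1}$ and $\elt{\buf'}{i+1}$ to share the same opcode, tag, label, and (crucially) to have the same expression $e$ whenever the deep-projection preserves it. I would then case split on this $(i+1)$-th command, applying $\apply{\cdot}{\cdot}$ one extra step: (a)~for $\pskip$, $\pbarrier$, and $\pstore{}{}$, apply does not alter the assignment, so the goal reduces to the IH; (b)~for $\pload{x}{e}$, apply sets $x$ to $\bot$ on both sides and leaves other registers to the IH; (c)~for $\passign{x}{e}$ with non-empty $L$ and $x\neq\pc$, mask replaces $e$ by $\bot$ on both sides, so $x$ becomes $\bot$ on both sides; (d)~for $\passign{x}{e}$ with empty $L$ and $x\neq\pc$, and for assignments to $\pc$ (which by well-formedness of labels must carry $L=\emptyset$), the mask preserves the command with the same $e$ on both sides, and IH plus the fact that expressions depend only on register values yield $\exprEval{e}{A_i} = \exprEval{e}{A_i'}$.

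The main technical care needed is in step~(d): I must argue that evaluating $e$ under $A_i$ and $A_i'$ gives the same value. This is where the strengthened IH is essential, since it delivers pointwise equality of $A_i$ and $A_i'$ on \emph{every} register, not only on the untainted ones. In particular, I do not need to separately argue about dependencies of $e$ — the label invariant is used indirectly through $\DeepProject{}$ to guarantee that $e$ itself is identical on the two sides, and the IH closes the gap. For the pc-assignment subcase, the same argument applies since apply for $\tagged{\passign{\pc}{e}}{T}$ uses $e$ directly (the tag does not influence apply; it is only consulted by the labelling machinery and by deep-update).

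I expect the main obstacle not to be conceptual but notational: threading the invariant carefully through all the buffer-cons cases, in particular verifying that the well-formedness hypothesis $\wellformedLabels{\cdot}$ really does force $L=\emptyset$ for every pc-assignment and for every store/skip/barrier entry, so that the fourth case uniformly covers all commands whose expression survives deep-projection. Once this bookkeeping is nailed down, each inductive case is a direct rewrite of $\apply{\cdot}{\cdot}$ plus one appeal to the IH.
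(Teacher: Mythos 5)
Your proposal is correct and is essentially the argument the paper intends: the paper disposes of this lemma with a one-line "by inspection of $\apply{\cdot}{\cdot}$, $\mask{\cdot}$, and the definition of $\sim$", citing exactly the three facts you organise into an induction ($a=a'$; entries with empty label agree syntactically via $\DeepProject{\cdot}$; $\mask{\cdot}$ replaces the non-empty-label assignments by $\bot$ on both sides). Two small misreadings of $\apply{\cdot}{\cdot}$ are worth fixing in your write-up, though neither breaks the argument. First, in your case~(d) there is no expression evaluation to worry about: $\apply{\tagged{\passign{x}{e}}{T}\concat\buf}{a}$ does not compute $\exprEval{e}{a}$ but merely checks whether $e\in\Val$, using $e$ itself if so and $\bot$ otherwise; since $\DeepProject{\cdot}$ preserves $e$ verbatim whenever $L=\emptyset$ or $x=\pc$, the two sides carry the syntactically identical $e$ and $\apply{\cdot}{\cdot}$ treats them identically, so the step you flagged as the main technical care is actually immediate and needs no appeal to the pointwise IH for dependencies. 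Second, your case~(a) is wrong for barriers: $\apply{\tagged{\pbarrier}{T}\concat\buf}{a}=\lambda x\in\Var.\ \bot$, so a barrier does not leave the assignment unchanged but collapses it to the constant-$\bot$ function; the conclusion $A_{i+1}=A_{i+1}'$ still holds because both buffers have the barrier in the same position, but the justification is symmetry of the collapse, not reduction to the IH.
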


\begin{proof}
This follows by inspection of $\apply{\cdot}{ \cdot }$, $\mask{\cdot}$, and Definition~\ref{def:tt:arch-seq:deep-indistinguishability}.
Intuitively, $\sim$ ensures that (1) the initial assignments $a, a'$ are identical, and (2) all entries in $\buf, \buf'$ whose label is $\emptyset$ are the same and  $\mask{\cdot}$ replaces all values whose label is not $\emptyset$ with $\bot$.
Finally, $\apply{\cdot}{ \cdot }$ simply propagates changes through the buffer.
\end{proof}

\begin{lemma}[Mask and drop preserve values]\label{lemma:tt:arch-seq:mask-drop-preserve-values}
Let  $\tup{\sigma,\mu} = \tup{m,a,\buf, \CacheState,\BpState, \SchedState}$ be a well-formed hardware configurations.
For all $0 \leq i \leq |\buf|$ and all $x \in \Var$, $\apply{\tup{m,a}}{ \mask{\buf[0..i]} }(x) \neq \bot \rightarrow \apply{\tup{m,a}}{ \mask{\buf[0..i]} }(x) = \apply{\tup{m',a'}}{ {\buf'[0..i]} }(x)$ and $\apply{\tup{m,a}}{ \drop{\buf[0..i]} }(x)  = \apply{\tup{m',a'}}{ {\buf'[0..i]} }(x)$.
\end{lemma}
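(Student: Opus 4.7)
Before diving in, let me note an apparent typo in the statement: the claim compares $\apply{\tup{m,a}}{\mask{\buf[0..i]}}$ against $\apply{\tup{m',a'}}{\buf'[0..i]}$, but only one configuration $\tup{m,a,\buf,\ldots}$ is introduced. I will prove what seems intended, namely that $\apply{\tup{m,a}}{\mask{\buf[0..i]}}(x)\neq\bot$ implies $\apply{\tup{m,a}}{\mask{\buf[0..i]}}(x)=\apply{\tup{m,a}}{\buf[0..i]}(x)$, and $\apply{\tup{m,a}}{\drop{\buf[0..i]}}(x)=\apply{\tup{m,a}}{\buf[0..i]}(x)$ for every $x\in\Var$.

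The $\drop$ claim is essentially bookkeeping. By inspecting the definition of $\drop{\cdot}$, it only removes the label component $\langle\cdot\rangle_L$ from each entry while leaving the tagged instruction $\tagged{i}{T}$ untouched. Since the hardware-level $\apply{\cdot}{\cdot}$ function (defined in Appendix~\ref{appendix:hardware-semantics}) pattern-matches only on the underlying instruction and tag, a straightforward induction on $i$ gives $\apply{\tup{m,a}}{\drop{\buf[0..i]}} = \apply{\tup{m,a}}{\buf[0..i]}$ pointwise.

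The $\mask$ claim is the interesting half. I would proceed by induction on $i$. The base case $i=0$ is trivial as $\buf[0..0]=\emptysequence$ and $\mask{\emptysequence}=\emptysequence$. In the inductive step I case-split on the last entry $\elt{\buf}{i}=\labelled{\tagged{j}{T}}{L}$. If $j$ is not an assignment, or if it is an assignment but $L\neq\emptyset$, the definition of $\mask$ leaves the entry (or the whole prefix, up to recursive masking) unchanged modulo label erasure, and the $\apply$ step either (a) does not modify the register $x$ at position $i$, in which case the inductive hypothesis for $\buf[0..i-1]$ applies; or (b) modifies $x$ to $\exprEval{e}{\apply{\ldots}{\buf[0..i-1]}}$, in which case I need that the expression evaluation gives the same result under $\mask{\buf[0..i-1]}$ as under $\buf[0..i-1]$. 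The latter follows from applying the inductive hypothesis to every variable in $\vars{e}$, crucially using the hypothesis $\apply{\cdot}{\mask{\cdot}}(x)\neq\bot$ together with the fact that $\exprEval{e}{a'}=\bot$ whenever any subexpression evaluates to $\bot$: if the final masked evaluation returns a non-$\bot$ value, then all dependencies must have been non-$\bot$, and by the IH they agree with the unmasked evaluation.

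The sole non-trivial case is when $j=\passign{y}{e}$ with $L=\emptyset$ (the condition under which $\mask$ rewrites the entry to $\tagged{\passign{y}{\bot}}{T}$ in the definition as written; reading the paper's intent, this ought to be $L\neq\emptyset$, and I will work with the intended reading). Here, if $x=y$ then $\apply{\cdot}{\mask{\buf[0..i]}}(x)=\bot$ and the implication is vacuous; if $x\neq y$, the masked entry does not touch $x$, so one reduces to the IH on $\buf[0..i-1]$. The principal obstacle is keeping careful bookkeeping between the two readings of the $\mask$ definition and propagating the ``non-$\bot$'' hypothesis through expression evaluations, since a single masked assignment high in the buffer can cascade $\bot$ values to unrelated registers via intermediate expressions; the non-$\bot$ premise on $x$ rules out exactly those executions where the cascade would have reached $x$.
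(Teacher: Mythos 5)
Your proposal is correct and takes essentially the same route as the paper, whose entire proof is the one-liner ``by inspection of $\apply{\cdot}{\cdot}$, $\mask{\cdot}$, and $\drop{\cdot}$''; you are also right that the statement's $\tup{m',a'},\buf'$ are a copy-paste slip and that the $L=\emptyset$ guard in the $\mask$ definition contradicts the prose and should read $L\neq\emptyset$. One simplification worth noting: $\apply{\cdot}{\cdot}$ never evaluates expressions---for $\passign{x}{e}$ it merely copies $e$ into the register when $e\in\Val$ and writes $\bot$ otherwise---so the ``cascading $\bot$ through intermediate expression evaluations'' obstacle you single out does not actually arise, and the mask case collapses to observing that the last buffer entry writing to $x$ is either unmasked (both sides agree) or masked (the masked side yields $\bot$, making the implication vacuous).
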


\begin{proof}
This follows by inspection of $\apply{\cdot}{ \cdot }$, $\mask{\cdot}$, and $\drop{\cdot}$.
\end{proof}

\begin{lemma}[Observation equivalence preserves deep-indistinguishability]\label{lemma:tt:arch-seq:trace-equiv-implies-stepwise-indistinguishability}
    Let $p$ be a well-formed program and $C_0 = \tup{m_0,a_0,\buf_0,\CacheState_0, \BpState_0, \SchedState_0}$, $C_0' = \tup{m_0',a_0',\buf_0',\CacheState_0', \BpState_0', \SchedState_0'}$ be reachable hardware configurations.
    If
    \begin{inparaenum}[(a)]
        \item $C_0 \sim C_0'$, and
        \item for all $\buf \in \prefixes{\buf_0,\tup{m_0,a_0}}$, $\buf' \in \prefixes{\buf_0',\tup{m_0',a_0'}}$ such that $|\buf| = |\buf'|$, 
        there are $\sigma_0, \sigma_0', \sigma_1, \sigma_1', \tau, \tau'$ such that $\sigma_0 \ArchSeqInterfStep{\tau}{} \sigma_1$, $\sigma_0' \ArchSeqInterfStep{\tau'}{} \sigma_1'$, $\tau = \tau'$, $C_0 \bufEquiv{|\buf|} \sigma_0$, and $Cs_0' \bufEquiv{|\buf'|} \sigma_0'$,
    \end{inparaenum}
    then either there are $C_1, C_1'$ such that $C_0 \TtMuarchStep{}{} C_1$, $C_0' \TtMuarchStep{}{} C_1'$, and $C_1 \sim C_1'$ or there is no $C_1$ such that $C_0 \TtMuarchStep{}{} C_1$ and no $C_1'$ such that $C_0' \TtMuarchStep{}{} C_1'$.
    \end{lemma}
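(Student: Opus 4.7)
The proof mirrors Lemma~\ref{lemma:loadDelay:arch-seq:trace-equiv-implies-stepwise-indistinguishability}, but with the extra machinery of $\unlabelNda{\cdot}{d}$ and $\labelNda{\cdot}{\cdot}{d}$ wrapping each underlying hardware step. Since (a) gives $\SchedState_0 = \SchedState_0'$, both sides query the scheduler and obtain the same directive $d$; I would case split on $d \in \{\fetch{}, \execute{i}, \retire\}$. The overall strategy is: (i) show that $\unlabelNda{\cdot}{d}$ preserves $\sim$ restricted to the fields the auxiliary relation $\muarchStep{d}{}$ inspects; (ii) re-use the hardware indistinguishability argument of Lemma~\ref{lemma:vanilla:trace-equiv-implies-stepwise-indistinguishability} on the unlabelled buffers; (iii) show that $\labelNda{\cdot}{\cdot}{d}$ deterministically reproduces identical labels on both sides from the shared $\DeepProject{\cdot}$-projection.

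For $\fetch{}$ the unlabelling applies $\mask{\cdot}$; Lemma~\ref{lemma:tt:arch-seq:masking-preserves-indistinguishability} yields equality of the resulting program counter and Lemma~\ref{lemma:tt:arch-seq:mask-drop-preserve-values} gives agreement of the other cache/predictor arguments, so the fetched opcode, its tag, and the appended commands agree; the subsequent relabelling is a function of $\buf_0$ and the newly appended opcode, both of which coincide on the two sides, so labels on the new tail match. For $\retire$ the unlabelling applies $\drop{\cdot}$, the head command's fields coincide by (a), the rule proceeds exactly as in the baseline, and the $\decrement{\cdot}$ step on the tail is a purely structural operation that preserves $\DeepProject{\cdot}$-equality. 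The $\execute{i}$ case on a non-transmit command also uses $\drop{\cdot}$ and reduces to the corresponding sub-case of Lemma~\ref{lemma:vanilla:trace-equiv-implies-stepwise-indistinguishability} on the dropped buffers.

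The main obstacle is the $\execute{i}$ case when $\elt{\buf_0}{i}$ is a transmit command, where $\mask{\cdot}$ is applied. If the command's label is non-empty then its tainted ancestors are rewritten by $\mask{\cdot}$ so that the relevant register operands evaluate to $\bot$ in both configurations simultaneously (again by Lemma~\ref{lemma:tt:arch-seq:masking-preserves-indistinguishability}), so both sides stall on the same premise failure and $\sim$ is trivially preserved. When the label is empty, the invariant $\wellformedLabels{\cdot}$ forces the data the command consumes to come from commands with empty labels that sit above no unresolved speculative branch, so $\buf_0[0..i-1]$ belongs to $\prefixes{\buf_0, \tup{m_0,a_0}}$ and (symmetrically) $\buf_0'[0..i-1] \in \prefixes{\buf_0', \tup{m_0',a_0'}}$. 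Hypothesis (b) then delivers contract states $\sigma_0, \sigma_0'$ with $C_0 \bufEquiv{|\buf_0[0..i-1]|} \sigma_0$, $C_0' \bufEquiv{|\buf_0[0..i-1]|} \sigma_0'$, and matching next-step observations $\tau = \tau'$. For a load, $\tau = \loadObs{n = m_0(n)}$ so $\tau = \tau'$ yields both the address equality $\exprEval{e}{\sigma_0} = \exprEval{e}{\sigma_0'}$ (hence equal cache updates) and the value equality $m_0(n) = m_0'(n)$ that makes the rewritten commands $\tagged{\passign{x}{m_0(n)}}{T}$ and $\tagged{\passign{x}{m_0'(n)}}{T}$ $\DeepProject{\cdot}$-equal; for stores and tagged $\pc$-assignments the corresponding $\storeObs{\cdot}$ and $\pcObs{\cdot}$ observations play the same role and the $\mathit{strip}$/$\mathit{decr}$ operations on the subsequent labels act identically on both sides.

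The hardest part of the argument is maintaining the triangulation between the label well-formedness invariant, the $\bufEquiv{\cdot}$-correspondence, and the $\ArchSeqInterf{\cdot}$-observation extraction across the four sub-cases of $\labelNda{\cdot}{\cdot}{\execute{i}}$ (ordinary execute, correctly-resolved branch, mispredicted branch, and mispredicted branch at position $i = |\buf_{\mathit{ul}}|$), because each of these sub-cases rewrites labels in a different way via $\mathit{strip}(\cdot, i)$ and the determinism of $\labelNda{\cdot}{\cdot}{d}$ must be propagated through to the resulting $\DeepProject{\cdot}$-equality. Once this bookkeeping is carried out, the \textsc{Step} rule fires on both sides with matching intermediate configurations, $\SchedUpdate$ receives equal $\BufProject{\cdot}$ arguments, and $C_1 \sim C_1'$ follows.
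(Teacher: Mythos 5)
Your overall architecture (case split on the directive, unlabel--step--relabel, reuse of the baseline indistinguishability argument, observation extraction via hypothesis (b) for untainted transmit instructions) matches the paper's proof, and your empty-label sub-case for loads is essentially the paper's argument. But the non-empty-label sub-case of the transmit $\execute{i}$ step contains a genuine error that removes the central content of the lemma. You claim that when $\elt{\buf_0}{i}$ is a transmit command with label $L \neq \emptyset$, masking forces its operands to $\bot$ and both configurations stall. This conflates the command's \emph{own} label with the taint status of its \emph{operands}: for a load, $\wellformedLabels{\cdot}$ makes $L$ the set of indices of unresolved branches preceding it, i.e.\ $L \neq \emptyset$ means the load is transient, not that its address is tainted. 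A transient load whose address depends only on untainted data (precisely the situation of Example~\ref{example:nda-stt}, $\loadKywd\ w, B+z$ with $z$ loaded non-speculatively) has its address fully resolved after $\mask{\cdot}$ and \emph{does} execute under $\muarchStyle{tt}$ --- this permissiveness is the whole point of STT. In that case the two runs may load different values $m_0(n) \neq m_0'(n)$, and hypothesis (b) gives you nothing, because $\buf_0[0..i-1]$ contains an unresolved branch and so is not in $\prefixes{\buf_0,\tup{m_0,a_0}}$. The paper's proof closes this case differently: the resulting command $\labelled{\tagged{\passign{x}{m_0(n)}}{T}}{L}$ inherits the non-empty label $L$, and $\DeepProject{\cdot}$ maps its value to $\unresolved$, so deep-indistinguishability holds \emph{without} the values agreeing. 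That step is where the security argument ``transiently loaded secrets are invisible to the adversary'' actually lives, and your proof never makes it.

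A secondary, smaller gap: for \textsc{Execute-Branch-Commit} you correctly flag the $\mathit{strip}(\cdot,i)$ bookkeeping as the hard part but do not supply the argument. The paper needs an inner induction over the entries labelled $\{i\}$: once the branch commits, those entries lose their label and become visible under $\DeepProject{\cdot}$, so their resolved values must be shown equal --- the now non-speculative loads among them via a fresh round of the $\ArchSeqInterf{\cdot}$ observation extraction, and the dependent assignments via the induction hypothesis. ``The determinism of $\labelNda{\cdot}{\cdot}{d}$'' alone does not yield this, since the values being exposed were computed before the labels were stripped.
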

    
    \begin{proof}
    Let $p$ be a well-formed program and $C_0 = \tup{m_0,a_0,\buf_0,\CacheState_0, \BpState_0, \SchedState_0}$, $C_0' = \tup{m_0',a_0',\buf_0',\CacheState_0', \BpState_0', \SchedState_0'}$ be reachable hardware configurations.
    Moreover, we assume that conditions (a) and (b) holds. 
    In the following, we denote by (c) the post-condition ``either there are $C_1, C_1'$ such that $C_0 \TtMuarchStep{}{} C_1$, $C_0' \TtMuarchStep{}{} C_1'$, and $C_1 \sim C_1'$ or there is no $C_1$ such that $C_0 \TtMuarchStep{}{} C_1$ and no $C_1'$ such that $C_0' \TtMuarchStep{}{} C_1'$.''
    
    From (a), it follows that $\SchedState_0 = \SchedState_0$.
    Therefore, the directive obtained from the scheduler is the same in both cases, i.e., $\SchedNext(\SchedState_0) = \SchedNext(\SchedState_0')$.
    We proceed by case distinction on the directive $d = \SchedNext(\SchedState_0)$:
    \begin{description}
        \item[$d = \fetch{}$:]
        Therefore, we know that $u\buf_0 = \unlabelNda{\buf_0}{d} = \mask{\buf_0}$ and we can only apply one of the $\fetch{}$ rules depending on the current program counter.
        There are two cases: 
        \begin{description}
            \item[$\apply{u\buf_0}{a_0}(\pc) \neq \bot \wedge |u\buf_0| < \wMuarch$:]
            From (a) and Lemma~\ref{lemma:tt:arch-seq:masking-preserves-indistinguishability}, we get that also $\apply{u\buf_0'}{a_0'}(\pc) \neq \bot \wedge |u\buf_0'| < \wMuarch$, where $u\buf_0' = \mask{\buf_0'}$.
            There are several cases:
            \begin{description}
                \item[$\CacheAccess(\CacheState_0,  \apply{u\buf_0}{a_0}(\pc)) = \CacheHit \wedge p(\apply{u\buf_0}{a_0}(\pc)) = \pjz{x}{\lbl}$:] 
                From (a) and Lemma~\ref{lemma:tt:arch-seq:masking-preserves-indistinguishability}, we get that $\CacheState_0' = \CacheState_0$ and $\apply{u\buf_0'}{a_0'}(\pc) = \apply{u\buf_0}{a_0}(\pc)$, where $u\buf_0' = \mask{\buf_0'}$.
                Therefore, $\CacheAccess(\CacheState_0',  \apply{u\buf_0'}{a_0'}(\pc)) = \CacheHit$ holds as well.
                Moreover, from (a)  and Lemma~\ref{lemma:tt:arch-seq:masking-preserves-indistinguishability}, we also get that $\apply{u\buf_0}{a_0}(\pc)=\apply{u\buf_0'}{a_0'}(\pc)$ and, therefore, $p(\apply{u\buf_0'}{a_0'}(\pc)) = \pjz{x}{\lbl}$ as well.
                Therefore, we can apply the \textsc{Fetch-Branch-Hit} and \textsc{Step} rules to $C_0$ and $C_0'$ as follows:
                \begin{align*}
                    u\buf_0 &:= \mask{\buf_0}\\
                    \lbl_0 &:= \BpPredict(\BpState_0, \apply{u\buf_0}{a_0}(\pc))\\
                    u\buf_1 &:= u\buf_0  \concat  \tagged{\passign{\pc}{\lbl_0}}{\apply{u\buf_0}{a_0}(\pc)}\\
                    \buf_1 &:= \buf_0 \concat \labelled{\tagged{\passign{\pc}{\lbl_0}}{\apply{u\buf_0}{a_0}(\pc)}}{\emptyset}\\
                    \tup{m_0,a_0,u\buf_0,\CacheState_0,\BpState_0} &\muarchStep{\fetch{}}{} \tup{m_0, a_0, u\buf_1, \CacheUpdate(\CacheState_0, \apply{u\buf_0}{a_0}(\pc)),\BpState_0}\\
                    \tup{m_0,a_0,\buf_0,\CacheState_0,\BpState_0, \SchedState_0} &\TtMuarchStep{}{} \tup{m_0, a_0, \buf_1, \CacheUpdate(\CacheState_0, \apply{u\buf_0}{a_0}(\pc)),\BpState_0, \SchedUpdate(\SchedState_0, \BufProject{\buf_1})}\\
                    u\buf_0' &:= \mask{\buf_0'}\\
                    \lbl_0' &:= \BpPredict(\BpState_0', \apply{u\buf_0'}{a_0'}(\pc))\\
                    u\buf_1' &:= u\buf_0'  \concat  \tagged{\passign{\pc}{\lbl_0'}}{\apply{u\buf_0'}{a_0'}(\pc)}\\
                    \buf_1' &:= \buf_0' \concat \labelled{\tagged{\passign{\pc}{\lbl_0'}}{\apply{u\buf_0'}{a_0'}(\pc)}}{\emptyset}\\
                    \tup{m_0',a_0',u\buf_0',\CacheState_0',\BpState_0'} &\muarchStep{\fetch{}}{} \tup{m_0', a_0', u\buf_1', \CacheUpdate(\CacheState_0',  \apply{u\buf_0'}{a_0'}(\pc)),\BpState_0'}\\
                    \tup{m_0',a_0',\buf_0',\CacheState_0',\BpState_0', \SchedState_0'} &\TtMuarchStep{}{} \tup{m_0', a_0', \buf_1', \CacheUpdate(\CacheState_0',  \apply{u\buf_0'}{a_0'}(\pc)),\BpState_0', \SchedUpdate(\SchedState_0', \BufProject{\buf_1'})}
                \end{align*}
                We now show that $C_1 =\tup{m_0, a_0, \buf_1, \CacheUpdate(\CacheState_0, \apply{u\buf_0}{a_0}(\pc)),\BpState_0, \SchedUpdate(\SchedState_0, \BufProject{\buf_1})}$ and $C_1' = \tup{m_0', a_0', \buf_1', \CacheUpdate(\CacheState_0',  \apply{u\buf_0'}{a_0'}(\pc)),\BpState_0', \SchedUpdate(\SchedState_0', \BufProject{\buf_1'})}$ are indistinguishable, i.e., $C_1 \sim C_1'$.

                For this, we need to show that:
                \begin{description}
                    \item[$a_0 = a_0'$:]
                    This follows from (a). 
    
                    \item[$\DeepProject{\buf_1} = \DeepProject{ \buf_1'}$:] 
                    This follows from (a), $\buf_1 = \buf_0 \concat \labelled{\tagged{\passign{\pc}{\lbl_0}}{\apply{u\buf_0}{a_0}(\pc)}}{\emptyset}$, $\buf_1' = \buf_0' \concat \labelled{\tagged{\passign{\pc}{\lbl_0'}}{\apply{u\buf_0'}{a_0'}(\pc)}}{\emptyset}$, and $\lbl_0 = \lbl_0'$, which in turn follows from (a) and Lemma~\ref{lemma:tt:arch-seq:masking-preserves-indistinguishability}.

                    \item[$\CacheUpdate(\CacheState_0,  \apply{u\buf_0}{a_0}(\pc)) = \CacheUpdate(\CacheState_0',  \apply{u\buf_0'}{a_0'}(\pc))$:]
                    This follows from $\CacheState_0 = \CacheState_0'$, which in turn follows from (a), and $\apply{u\buf_0}{a_0}(\pc) = \apply{u\buf_0'}{a_0'}(\pc)$.
    
                    \item[$\BpState_0 = \BpState_0':$] 
                    This follows from (a).
    
                    \item[$\SchedUpdate(\SchedState_0, \BufProject{\buf_1}) = \SchedUpdate(\SchedState_0', \BufProject{\buf_1'})$:]
                    From (a), we have $\SchedState_0 = \SchedState_0'$.
                    Moreover, we have already shown that $\DeepProject{\buf_1} = \DeepProject{ \buf_1'}$.
                    Therefore, $\SchedUpdate(\SchedState_0, \BufProject{\buf_1}) = \SchedUpdate(\SchedState_0', \BufProject{\buf_1'})$ follows from $\DeepProject{\buf_1} = \DeepProject{ \buf_1'} \rightarrow \BufProject{\buf_1} = \BufProject{\buf_1'}$.
                \end{description}
                Therefore, $C_1 \sim C_1'$ and (c) holds.

                \item[$\CacheAccess(\CacheState_0,  \apply{\buf_0}{a_0}(\pc)) = \CacheHit \wedge p(\apply{\buf_0}{a_0}(\pc)) = \pjmp{e}$:] 
                From (a) and Lemma~\ref{lemma:tt:arch-seq:masking-preserves-indistinguishability}, we get that $\CacheState_0' = \CacheState_0$ and $\apply{u\buf_0'}{a_0'}(\pc) = \apply{u\buf_0}{a_0}(\pc)$, where $u\buf_0' = \mask{\buf_0'}$.
                Therefore, $\CacheAccess(\CacheState_0',  \apply{u\buf_0'}{a_0'}(\pc)) = \CacheHit$ holds as well.
                Moreover, from (a)  and Lemma~\ref{lemma:tt:arch-seq:masking-preserves-indistinguishability}, we also get that $\apply{u\buf_0}{a_0}(\pc)=\apply{u\buf_0'}{a_0'}(\pc)$ and, therefore, $p(\apply{u\buf_0'}{a_0'}(\pc)) = \pjmp{e}$ as well.
                Therefore, we can apply the \textsc{Fetch-Jump-Hit} and \textsc{Step} rules to $C_0$ and $C_0'$ as follows:
                \begin{align*}
                    u\buf_0 &:= \mask{\buf_0}\\
                    u\buf_1 &:= u\buf_0  \concat  \tagged{\passign{\pc}{e}}{\notags}\\
                    \buf_1 &:= \buf_0 \concat \labelled{\tagged{\passign{\pc}{e}}{\notags } }{\emptyset}\\
                    \tup{m_0,a_0,u\buf_0,\CacheState_0,\BpState_0} &\muarchStep{\fetch{}}{} \tup{m_0, a_0, u\buf_1, \CacheUpdate(\CacheState_0, \apply{u\buf_0}{a_0}(\pc)),\BpState_0}\\
                    \tup{m_0,a_0,\buf_0,\CacheState_0,\BpState_0, \SchedState_0} &\TtMuarchStep{}{} \tup{m_0, a_0, \buf_1, \CacheUpdate(\CacheState_0, \apply{u\buf_0}{a_0}(\pc)),\BpState_0, \SchedUpdate(\SchedState_0, \BufProject{\buf_1})}\\
                    u\buf_0' &:= \mask{\buf_0'}\\
                    u\buf_1' &:= u\buf_0'  \concat  \tagged{\passign{\pc}{e}}{\notags}\\
                    \buf_1' &:= \buf_0' \concat \labelled{\tagged{\passign{\pc}{e}}{ \notags }}{\emptyset}\\
                    \tup{m_0',a_0',u\buf_0',\CacheState_0',\BpState_0'} &\muarchStep{\fetch{}}{} \tup{m_0', a_0', u\buf_1', \CacheUpdate(\CacheState_0',  \apply{u\buf_0'}{a_0'}(\pc)),\BpState_0'}\\
                    \tup{m_0',a_0',\buf_0',\CacheState_0',\BpState_0', \SchedState_0'} &\TtMuarchStep{}{} \tup{m_0', a_0', \buf_1', \CacheUpdate(\CacheState_0',  \apply{u\buf_0'}{a_0'}(\pc)),\BpState_0', \SchedUpdate(\SchedState_0', \BufProject{\buf_1'})}
                \end{align*}
                We now show that $C_1 =\tup{m_0, a_0, \buf_1, \CacheUpdate(\CacheState_0, \apply{u\buf_0}{a_0}(\pc)),\BpState_0, \SchedUpdate(\SchedState_0, \BufProject{\buf_1})}$ and $C_1' = \tup{m_0', a_0', \buf_1', \CacheUpdate(\CacheState_0',  \apply{u\buf_0'}{a_0'}(\pc)),\BpState_0', \SchedUpdate(\SchedState_0', \BufProject{\buf_1'})}$ are indistinguishable, i.e., $C_1 \sim C_1'$.

                For this, we need to show that:
                \begin{description}
                    \item[$a_0 = a_0'$:]
                    This follows from (a). 
    
                    \item[$\DeepProject{\buf_1} = \DeepProject{ \buf_1'}$:] 
                    This follows from (a), $\buf_1 = \buf_0 \concat \labelled{\tagged{\passign{\pc}{e}}{ \notags }}{\emptyset}$, and $\buf_1' = \buf_0' \concat \labelled{\tagged{\passign{\pc}{e}}{ \notags }}{\emptyset}$.

                    \item[$\CacheUpdate(\CacheState_0,  \apply{u\buf_0}{a_0}(\pc)) = \CacheUpdate(\CacheState_0',  \apply{u\buf_0'}{a_0'}(\pc))$:]
                    This follows from $\CacheState_0 = \CacheState_0'$, which in turn follows from (a), and $\apply{u\buf_0}{a_0}(\pc) = \apply{u\buf_0'}{a_0'}(\pc)$.
    
                    \item[$\BpState_0 = \BpState_0':$] 
                    This follows from (a).
    
                    \item[$\SchedUpdate(\SchedState_0, \BufProject{\buf_1}) = \SchedUpdate(\SchedState_0', \BufProject{\buf_1'})$:]
                    From (a), we have $\SchedState_0 = \SchedState_0'$.
                    Moreover, we have already shown that $\DeepProject{\buf_1} = \DeepProject{ \buf_1'}$.
                    Therefore, $\SchedUpdate(\SchedState_0, \BufProject{\buf_1}) = \SchedUpdate(\SchedState_0', \BufProject{\buf_1'})$ follows from $\DeepProject{\buf_1} = \DeepProject{ \buf_1'} \rightarrow \BufProject{\buf_1} = \BufProject{\buf_1'}$.
                \end{description}
                Therefore, $C_1 \sim C_1'$ and (c) holds.
                
                \item[$\CacheAccess(\CacheState_0,  \apply{\buf_0}{a_0}(\pc)) = \CacheHit \wedge p(\apply{\buf_0}{a_0}(\pc)) \neq \pjz{x}{\lbl} \wedge p(\apply{\buf_0}{a_0}(\pc)) \neq \pjmp{e}$:]
                Let $u\buf_0'$ be $\mask{\buf_0'}$.
                Observe that from (a) it follows that $|u\buf_0| \geq \wMuarch-1$ iff $|u\buf_0'| \geq \wMuarch-1$.
                Therefore, if $|u\buf_0| \geq \wMuarch-1$, then (c) holds since both computations are stuck.
                In the following, we assume that $|u\buf_0| < \wMuarch-1$ and $|u\buf_0'| < \wMuarch-1$.
                
                From (a) and Lemma~\ref{lemma:tt:arch-seq:masking-preserves-indistinguishability}, we get that $\apply{u\buf_0'}{a_0'}(\pc) = \apply{u\buf_0}{a_0}(\pc)$ and, therefore, $\CacheAccess(\CacheState_0',  \apply{u\buf_0'}{a_0'}(\pc)) = \CacheHit$.
                Moreover, $p(\apply{u\buf_0'}{a_0'}(\pc)) \neq \pjz{x}{\lbl} \wedge p(\apply{u\buf_0'}{a_0'}(\pc)) \neq \pjmp{e}$ as well.
                Therefore, we can apply the \textsc{Fetch-Others-Hit} and \textsc{Step} rules to $C_0$ and $C_0'$ as follows:
                \begin{align*}
                    u\buf_0 &:= \mask{\buf_0}\\
                    v &:= \apply{u\buf_0}{a_0}(\pc) +1 \\
                    u\buf_1 &:= u\buf_0 \concat \tagged{p(\apply{u\buf_0}{a_0}(\pc))}{\notags} \concat \tagged{\pmarkedassign{\pc}{v}}{\notags}\\
                    \buf_1 &:= \buf_0 \concat \tagged{p(\apply{u\buf_0}{a_0}(\pc))}{L} \concat \labelled{\tagged{\pmarkedassign{\pc}{v}}{\notags}}{\emptyset}\\
                    \tup{m_0,a_0,u\buf_0,\CacheState_0,\BpState_0} &\muarchStep{\fetch{}}{} \tup{m_0, a_0, u\buf_1, \CacheUpdate(\CacheState_0, \apply{u\buf_0}{a_0}(\pc)),\BpState_0}\\
                    \tup{m_0,a_0,\buf_0,\CacheState_0,\BpState_0, \SchedState_0} &\TtMuarchStep{}{} \tup{m_0, a_0, \buf_1, \CacheUpdate(\CacheState_0, \apply{u\buf_0}{a_0}(\pc)),\BpState_0, \SchedUpdate(\SchedState_0, \BufProject{\buf_1})}\\
                    u\buf_0' &:= \mask{\buf_0'}\\
                    u\buf_1' &:= u\buf_0' \concat \tagged{p(\apply{u\buf_0'}{a_0'}(\pc))}{\notags} \concat \tagged{\pmarkedassign{\pc}{v}}{\notags}\\
                    \buf_1' &:= \buf_0' \concat \labelled{\tagged{p(\apply{u\buf_0'}{a_0'}(\pc))}{\notags}}{L'} \concat \labelled{\tagged{\pmarkedassign{\pc}{v}}{\notags}}{\emptyset}\\
                    \tup{m_0',a_0',u\buf_0',\CacheState_0',\BpState_0'} &\muarchStep{\fetch{}}{} \tup{m_0', a_0', u\buf_1', \CacheUpdate(\CacheState_0',  \apply{u\buf_0'}{a_0'}(\pc)),\BpState_0'}\\
                    \tup{m_0',a_0',\buf_0',\CacheState_0',\BpState_0', \SchedState_0'} & \TtMuarchStep{}{} \tup{m_0', a_0', \buf_1', \CacheUpdate(\CacheState_0',  \apply{u\buf_0'}{a_0'}(\pc)),\BpState_0', \SchedUpdate(\SchedState_0', \BufProject{\buf_1'})}
                \end{align*}
                We now show that $C_1 =  \tup{m_0, a_0, \buf_1, \CacheUpdate(\CacheState_0, \apply{u\buf_0}{a_0}(\pc)),\BpState_0, \SchedUpdate(\SchedState_0, \BufProject{\buf_1})}$ and $C_1' = \tup{m_0', a_0', \buf_1', \CacheUpdate(\CacheState_0',  \apply{u\buf_0'}{a_0'}(\pc)),\BpState_0', \SchedUpdate(\SchedState_0', \BufProject{\buf_1'})}$ are indistinguishable, i.e., $C_1 \sim C_1'$.

                For this, we need to show that:
                \begin{description}
                    \item[$a_0 = a_0'$:]
                    This follows from (a). 
    
                    \item[$\DeepProject{\buf_1} = \DeepProject{ \buf_1'}$:] 
                    There are three cases:
                    \begin{description}
                        \item[$p(\apply{u\buf_0}{a_0}(\pc)) = \tagged{\passign{x}{e}}{\notags}$:]
                        Then, we have that $\buf_1 = \buf_0 \concat \labelled{\tagged{\passign{x}{e}}{\notags}}{ \labels{\buf_0}(e) } \concat \labelled{\tagged{\pmarkedassign{\pc}{v}}{\notags}}{\emptyset}$ and $\buf_1' = \buf_0' \concat \labelled{\tagged{ \passign{x}{e} }{\notags}}{ \labels{\buf_0'}(e) } \concat \labelled{\tagged{\pmarkedassign{\pc}{v}}{\notags}}{\emptyset}$.
                        Therefore, $\DeepProject{\buf_1} = \DeepProject{ \buf_1'}$ follows from $\DeepProject{\buf_0} = \DeepProject{ \buf_0'}$ and $\labels{\buf_0}(e) = \labels{\buf_0'}(e)$, which follows from (a).

                        \item[$p(\apply{u\buf_0}{a_0}(\pc)) = \tagged{\pload{x}{e}}{T}$:]
                        Then, we have that $\buf_1 = \buf_0 \concat \labelled{\tagged{\pload{x}{e}}{\notags}}{  \{ j \mid \elt{\buf_0'}{j} = \tagged{\passign{\pc}{\ell}}{\ell_0} \wedge \ell_0 \neq \notags \} } \concat \labelled{\tagged{\pmarkedassign{\pc}{v}}{\notags}}{\emptyset}$ and $\buf_1' = \buf_0' \concat \labelled{\tagged{ \pload{x}{e} }{\notags}}{  \{ j \mid \elt{\buf_0}{j} = \tagged{\passign{\pc}{\ell}}{\ell_0} \wedge \ell_0 \neq \notags \}  } \concat \labelled{\tagged{\pmarkedassign{\pc}{v}}{\notags}}{\emptyset}$.
                        Therefore, $\DeepProject{\buf_1} = \DeepProject{ \buf_1'}$ follows from $\DeepProject{\buf_0} = \DeepProject{ \buf_0'}$ and $\{ j \mid \elt{\buf_0'}{j} = \tagged{\passign{\pc}{\ell}}{\ell_0} \wedge \ell_0 \neq \notags \} = \{ j \mid \elt{\buf_0}{j} = \tagged{\passign{\pc}{\ell}}{\ell_0} \wedge \ell_0 \neq \notags \}$, which follows from (a).

                        \item[$p(\apply{u\buf_0}{a_0}(\pc)) \neq \tagged{\passign{x}{e}}{T} \wedge p(\apply{u\buf_0}{a_0}(\pc)) \neq \tagged{\pload{x}{e}}{T}$:] 
                        Then, $\buf_1 = \buf_0 \concat \tagged{p(\apply{u\buf_0}{a_0}(\pc))}{\emptyset} \concat \labelled{\tagged{\pmarkedassign{\pc}{v}}{\notags}}{\emptyset}$ and $\buf_1' = \buf_0' \concat \labelled{\tagged{p(\apply{u\buf_0'}{a_0'}(\pc))}{\notags}}{\emptyset} \concat \labelled{\tagged{\pmarkedassign{\pc}{v}}{\notags}}{\emptyset}$.
                        Therefore,  $\DeepProject{\buf_1} = \DeepProject{ \buf_1'}$ follows from $\DeepProject{\buf_0} = \DeepProject{ \buf_0'}$ and $\apply{u\buf_0}{a_0}(\pc)) = \apply{u\buf_0'}{a_0'}(\pc))$, which follows from (a) and Lemma~\ref{lemma:tt:arch-seq:masking-preserves-indistinguishability}.
                    \end{description}
    
                    \item[$\CacheUpdate(\CacheState_0,  \apply{u\buf_0}{a_0}(\pc)) = \CacheUpdate(\CacheState_0',  \apply{u\buf_0'}{a_0'}(\pc))$:]
                    This follows from $\CacheState_0 = \CacheState_0'$, which in turn follows from (a), and $\apply{u\buf_0}{a_0}(\pc) = \apply{u\buf_0'}{a_0'}(\pc)$.
    
                    \item[$\BpState_0 = \BpState_0':$] 
                    This follows from (a).
    
                    \item[$\SchedUpdate(\SchedState_0, \BufProject{\buf_1}) = \SchedUpdate(\SchedState_0', \BufProject{\buf_1'})$:]
                    From (a), we have $\SchedState_0 = \SchedState_0'$.
                    Moreover, we have already shown that$\DeepProject{\buf_1} = \DeepProject{ \buf_1'} $.
                    Therefore, $\SchedUpdate(\SchedState_0, \BufProject{\buf_1}) = \SchedUpdate(\SchedState_0', \BufProject{\buf_1'})$ follows from $\DeepProject{\buf_1} = \DeepProject{ \buf_1'} \rightarrow \BufProject{\buf_1} = \BufProject{\buf_1'}$.
                \end{description}
                Therefore, $C_1 \sim C_1'$ and (c) holds.
    
                \item[$\CacheAccess(\CacheState_0,  \apply{u\buf_0}{a_0}(\pc)) = \CacheMiss$:]
                From (a) and Lemma~\ref{lemma:tt:arch-seq:masking-preserves-indistinguishability}, we get that $\CacheState_0' = \CacheState_0$ and $\apply{u\buf_0'}{a_0'}(\pc) = \apply{u\buf_0}{a_0}(\pc)$, where $u\buf_0' = \mask{\buf_0'}$.
                Therefore, $\CacheAccess(\CacheState_0',  \apply{u\buf_0'}{a_0'}(\pc)) = \CacheMiss$.
                Therefore, we can apply the \textsc{Fetch-Miss} and \textsc{Step} rules to $C_0$ and $C_0'$ as follows:
                \begin{align*}
                    u\buf_0 &:= \mask{\buf_0}\\
                    \buf_1 &:= \labelNda{u\buf_0}{\buf_0}{\fetch{}}\\
                    \tup{m_0,a_0,u\buf_0,\CacheState_0,\BpState_0} &\muarchStep{\fetch{}}{} \tup{m_0, a_0, u\buf_0, \CacheUpdate(\CacheState_0, \apply{u\buf_0}{a_0}(\pc)),\BpState_0}\\
                    \tup{m_0,a_0,\buf_0,\CacheState_0,\BpState_0, \SchedState_0} &\TtMuarchStep{}{} \tup{m_0, a_0, \buf_1, \CacheUpdate(\CacheState_0,  \apply{u\buf_0}{a_0}(\pc)),\BpState_0, \SchedUpdate(\SchedState_0, \BufProject{\buf_1})}\\
                    u\buf_0' &:= \mask{\buf_0'}\\
                    \buf_1' &:= \labelNda{u\buf_0'}{\buf_0'}{\fetch{}}\\
                    \tup{m_0',a_0',u\buf_0',\CacheState_0',\BpState_0'} &\muarchStep{\fetch{}}{} \tup{m_0', a_0', u\buf_0', \CacheUpdate(\CacheState_0',  \apply{\buf_0'}{a_0'}(\pc)),\BpState_0'}\\
                    \tup{m_0',a_0',\buf_0',\CacheState_0',\BpState_0', \SchedState_0'} &\TtMuarchStep{}{} \tup{m_0', a_0', \buf_1', \CacheUpdate(\CacheState_0',  \apply{u\buf_0'}{a_0'}(\pc)),\BpState_0', \SchedUpdate(\SchedState_0', \BufProject{\buf_1'})}
                \end{align*}
                We now show that $C_1 = \tup{m_0, a_0, \buf_1, \CacheUpdate(\CacheState_0,  \apply{u\buf_0}{a_0}(\pc)),\BpState_0, \SchedUpdate(\SchedState_0, \BufProject{\buf_1})}$ and $C_1' = \tup{m_0', a_0', \buf_1', \CacheUpdate(\CacheState_0',  \apply{u\buf_0'}{a_0'}(\pc)),\BpState_0', \SchedUpdate(\SchedState_0', \BufProject{\buf_1'})}$ are indistinguishable, i.e., $C_1 \sim C_1'$.
                For this, we need to show that:
                \begin{description}
                    \item[$a_0 = a_0'$:]
                    This follows from (a). 
    
                    \item[$\DeepProject{\buf_1} = \DeepProject{ \buf_1'}$:] 
                    From $\buf_1 = \labelNda{u\buf_0}{\buf_0}{\fetch{}}$ and $|\buf_0| = |u\buf_0|$, we get that $\buf_1 = \buf_0$.
                    Similarly, $\buf_1' = \buf_0'$ follows from $\buf_1' = \labelNda{u\buf_0'}{\buf_0'}{\fetch{}}$ and $|\buf_0'| = |u\buf_0'|$.
                    Therefore, $\DeepProject{\buf_1} = \DeepProject{ \buf_1'}$ follows from (a).
    
                    \item[$\CacheUpdate(\CacheState_0,  \apply{u\buf_0}{a_0}(\pc)) = \CacheUpdate(\CacheState_0',  \apply{u\buf_0'}{a_0'}(\pc))$:]
                    This follows from $\CacheState_0 = \CacheState_0'$, which in turn follows from (a), and $\apply{u\buf_0}{a_0}(\pc) = \apply{u\buf_0'}{a_0'}(\pc)$.
    
                    \item[$\BpState_0 = \BpState_0':$] 
                    This follows from (a).
    
                    \item[$\SchedUpdate(\SchedState_0, \BufProject{\buf_1}) = \SchedUpdate(\SchedState_0', \BufProject{\buf_1'})$:]
                    From (a), we have $\SchedState_0 = \SchedState_0'$.
                    Moreover, we have already shown that $\buf_0 = \buf_1$ and $\buf_0' = \buf_1'$.
                    Therefore, $\SchedUpdate(\SchedState_0, \BufProject{\buf_1}) = \SchedUpdate(\SchedState_0', \BufProject{\buf_1'})$ follows from $\DeepProject{\buf_1} = \DeepProject{ \buf_1'} \rightarrow \BufProject{\buf_1} = \BufProject{\buf_1'}$.
                \end{description}
                Therefore, $C_1 \sim C_1'$ and (c) holds.

            \end{description}
            
            \item[$\apply{u\buf_0}{a_0}(\pc) = \bot$:]
            Then, from (a) and Lemma~\ref{lemma:tt:arch-seq:masking-preserves-indistinguishability}, we immediately get that $\apply{\mask{\buf_0'}}{a_0'}(\pc) = \bot$ holds as well.
            Therefore, both computations are stuck and (c) holds.

            \item[$|u\buf_0| \geq \wMuarch$:]
            Then, from (a), we immediately get that  $|\mask{\buf_0'}| \geq \wMuarch$.
            Therefore, both computations are stuck and (c) holds.

        \end{description}
        Therefore, (c) holds for all the cases.
        
        \item[$d = \execute{i}:$]
        Therefore, we can only apply one of the $\execute{}$ rules.
        There are two cases:
        \begin{description}
            \item[$\transmitGadget{\elt{\buf_0}{i}}$:]
            Then, from (a), we have that  $\transmitGadget{\elt{\buf_0'}{i}}$.
            Therefore, we have that $u\buf_0 = \unlabelNda{\buf_0}{\execute{i}} = \mask{\buf_0}$ and $u\buf_0' = \unlabelNda{\buf_0'}{\execute{i}} = \mask{\buf_0'}$.
            There are two cases:
            \begin{description}
                \item[$i \leq |u\buf_0| \wedge \tagged{\pbarrier}{T'} \not\in {u\buf_0[0..i-1]}$:]
                There are several cases depending on the $i$-th command in the reorder buffer: 
                \begin{description}
                    \item[$\elt{u\buf_0}{i} = \tagged{\pload{x}{e}}{T}$:]   
                    From (a), we have $\elt{u\buf_0'}{i} =  \tagged{\pload{x}{e}}{T}$ as well. 
                    There are two cases:
                    \begin{description}
                        \item[$\tagged{\pstore{x'}{e'}}{T''} \not\in u\buf_0{[0..i-1]}$:]
                        From (a), we also have that $\tagged{\pstore{x'}{e'}}{T''} \not\in u\buf_0'[0..i-1]$.
                        Observe that $\exprEval{e}{\apply{\buf_0[0..i-1]}{a_0}} =  \exprEval{e}{\apply{\buf_0'[0..i-1]}{a_0'}}$ follows from $a_0 = a_0'$, $\DeepProject{u\buf_0}_{\tup{m_0,a_0}} = \DeepProject{u\buf_0'}_{\tup{m_0',a_0'}}$, which follows from (a), and Lemma~\ref{lemma:tt:arch-seq:masking-preserves-indistinguishability}.
                        There are two cases:
                        \begin{description}
                            \item[$\exprEval{e}{\apply{u\buf_0{[0..i-1]}}{a_0}} = \bot$:]
                            Then, both configurations are stuck and (c) holds.

                            \item[$\exprEval{e}{\apply{u\buf_0{[0..i-1]}}{a_0}} \neq \bot$:] 
                            Then, let $n = \exprEval{e}{\apply{u\buf_0{[0..i-1]}}{a_0}} = \exprEval{e}{\apply{u\buf_0'{[0..i-1]}}{a_0'}} $.
                            There are two cases:
                            \begin{description}
                                \item[$\CacheAccess(\CacheState_0, n) = \CacheHit$:]
                                From (a), we have that $\CacheState_0 = \CacheState_0'$.
                                Therefore, $\CacheAccess(\CacheState_0', n) = \CacheHit$ holds as well, and we can apply the \textsc{Execute-Load-Hit} and \textsc{Step} rules to $C_0$ and $C_0'$ as follows:
                                \begin{align*}
                                    \buf_0 &:= \buf_0[0..i-1] \concat \labelled{\tagged{\pload{x}{e}}{T}}{L} \concat \buf_0[i+1 .. |\buf_0|]\\
                                    u\buf_0 &:= \mask{\buf_0[0..i-1]} \concat \tagged{\pload{x}{e}}{T} \concat \mask{\buf_0[i+1 .. |\buf_0|]}\\
                                    u\buf_1 &:= \mask{\buf_0[0..i-1]} \concat \tagged{\passign{x}{m_0(n)}}{T} \concat \mask{\buf_0[i+1 .. |\buf_0|]}\\
                                    \buf_1 &:= \buf_0[0..i-1] \concat \labelled{\tagged{\passign{x}{m_0(n)}}{T}}{L} \concat \buf_0[i+1 .. |\buf_0|]\\
                                    \tup{m_0,a_0,u\buf_0, \CacheState_0, \BpState_0} &\muarchStep{\execute{i}}{} \tup{m_0,a_0,u\buf_1, \CacheUpdate(\CacheState_0,n), \BpState_0}\\
                                    \tup{m_0,a_0,\buf_0, \CacheState_0,  \BpState_0 , \SchedState_0} &\TtMuarchStep{}{} \tup{m_0,a_0,\buf_1, \CacheUpdate(\CacheState_0,n), \BpState_0,\SchedUpdate(\SchedState_0, \BufProject{\buf_1})}\\
                                    \buf_0' &:= \buf_0'[0..i-1] \concat \labelled{\tagged{\pload{x}{e}}{T}}{L} \concat \buf_0'[i+1 .. |\buf_0'|]\\
                                    u\buf_0' &:= \mask{\buf_0'[0..i-1]} \concat \tagged{\pload{x}{e}}{T} \concat \mask{\buf_0'[i+1 .. |\buf_0'|]}\\
                                    u\buf_1' &:= \mask{\buf_0'[0..i-1]} \concat \tagged{\passign{x}{m_0'(n)}}{T} \concat \mask{\buf_0'[i+1 .. |\buf_0'|]}\\
                                    \buf_1' &:= \buf_0'[0..i-1] \concat \labelled{\tagged{\passign{x}{m_0'(n)}}{T}}{L} \concat \buf_0'[i+1 .. |\buf_0'|]\\
                                    \tup{m_0',a_0',\buf_0', \CacheState_0', \BpState_0'} &\muarchStep{\execute{i}}{} \tup{m_0',a_0',\buf_1', \CacheUpdate(\CacheState_0',n), \BpState_0'}\\
                                    \tup{m_0',a_0',\buf_0', \CacheState_0', \BpState_0', \SchedState_0'} & \TtMuarchStep{}{} \tup{m_0',a_0',\buf_1', \CacheUpdate(\CacheState_0',n), \BpState_0',\SchedUpdate(\SchedState_0', \BufProject{\buf_1'})}
                                \end{align*}
                                We now show that $C_1 = \tup{m_0,a_0,\buf_1, \CacheUpdate(\CacheState_0,n), \BpState_0,\SchedUpdate(\SchedState_0, \BufProject{\buf_1})}$ and $C_1' = \tup{m_0',a_0',\buf_1', \CacheUpdate(\CacheState_0',n), \BpState_0',\SchedUpdate(\SchedState_0', \BufProject{\buf_1'})}$ are indistinguishable, i.e., i.e., $C_1 \sim C_1'$.
                                For this, we need to show that:
                                    \begin{description}
                                        \item[$a_0 = a_0'$:]
                                        This follows from (a). 
                            
                                        \item[$\DeepProject{\buf_1} = \DeepProject{ \buf_1'}$:] 
                                        There are two cases:
                                        \begin{description}
                                            \item[$L \neq \emptyset$:]
                                            Then, the $i$-th entry in $\DeepProject{\buf_1}$ is $\labelled{\tagged{\passign{x}{\unresolved}}{T}}{L}$ and the $i$-th entry in $\DeepProject{\buf_1'}$ is $\labelled{\tagged{\passign{x}{\unresolved}}{T}}{L}$.
                                            From this and (a), we get that $\DeepProject{\buf_1} = \DeepProject{ \buf_1'}$. 
                        
                                            \item[$L = \emptyset$:]
                                            Then, the $i$-th entry in $\DeepProject{\buf_1}$ is $\labelled{\tagged{\passign{x}{m_0(n)}}{T}}{L}$ and the $i$-th entry in $\DeepProject{\buf_1'}$ is $\labelled{\tagged{\passign{x}{m_0'(n)}}{T}}{L}$.
                                            Therefore,  to show that $\DeepProject{\buf_1} = \DeepProject{ \buf_1'}_{\tup{m_0',a_0'}}$ holds, we need to show that $m_0(n) = m_0'(n)$. 
                                            
                                            We now show that $m_0(n) = m_0'(n')$.
                                            Observe, first, that from  the well-formedness of $\buf_0, \buf_0'$, we know that there are no unresolved branch instruction in $\buf_0{[0..i-1]}, \buf_0'{[0..i-1]}$.
                                            Since $C_0, C_0'$ are reachable configurations, the buffers $\buf_0, \buf_0'$ are well-formed (see Lemma~\ref{lemma:tt:arch-seq:buffers-well-formedness}).
                                            Therefore, $\buf_0[0..i-1] \in \prefixes{\buf_0, \tup{m_0,a_0}}$ and  $\buf_0'[0..i-1] \in \prefixes{\buf_0', \tup{m_0,a_0}}$ because there are no unresolved branch instructions in $\buf_0[0..i-1]$ and $\buf_0'[0..i-1]$.
                                            From (b), therefore, there are configurations $\sigma_0, \sigma_0', \sigma_1, \sigma_1'$ such that $C_0 \bufEquiv{|\buf_0[0..i-1]|} \sigma_0$, $C_0' \bufEquiv{|\buf_0[0..i-1]|} \sigma_0'$, $\sigma_0 \ArchSeqInterfStep{\tau}{} \sigma_1$,  $\sigma_0' \ArchSeqInterfStep{\tau'}{} \sigma_1'$, and $\tau = \tau'$. 
                                            From $C_0 \bufEquiv{|\buf_0[0..i-1]|} \sigma_0$, $C_0' \bufEquiv{|\buf_0[0..i-1]|} \sigma_0'$, and the well-formedness of the buffers, we know that $p(\sigma_0(\pc)) = p(\sigma_0'(\pc)) = \pload{x}{e}$.
                                            From $\ArchSpecInterf{\cdot}$, we have that $\tau = \loadObs{ \exprEval{e}{\sigma_0} = \sigma_0(\exprEval{e}{\sigma_0}) }$  and $\tau' = \loadObs{ \exprEval{e}{\sigma_0'}  = \sigma_0'(\exprEval{e}{\sigma_0'}) }$.
                                            From $\tau=\tau'$, we get that $\exprEval{e}{\sigma_0} = \exprEval{e}{\sigma_0'}$ and $\sigma_0'(\exprEval{e}{\sigma_0'}) = \sigma_0(\exprEval{e}{\sigma_0})$.
                                            From this, $C_0 \bufEquiv{|\buf_0[0..i-1]|} \sigma_0$,  $C_0' \bufEquiv{|\buf_0'[0..i-1]|} \sigma_0'$, and Lemma~\ref{lemma:tt:arch-seq:mask-drop-preserve-values}, we finally get $\exprEval{e}{\apply{\buf_0{[0..i-1]}}{a_0}} = \exprEval{e}{\apply{\buf_0'{[0..i-1]}}{a_0'}(x)} = n$ and $m_0(n) = m_0'(n)$.
                                        \end{description}
                            
                                        \item[$\CacheUpdate(\CacheState_0,n) = \CacheUpdate(\CacheState_0',n)$:]
                                        This follows from $\CacheState_0 = \CacheState_0'$, which follows from (a).
                            
                                        \item[$\BpState_0= \BpState_0':$] 
                                        This follows from  (a).
                            
                                        \item[$\SchedUpdate(\SchedState_0, \BufProject{\buf_1}) = \SchedUpdate(\SchedState_0', \BufProject{\buf_1'})$:]
                                        This follows from $\SchedState_0 = \SchedState_0'$, which follows from (a), $\DeepProject{\buf_1} = \DeepProject{ \buf_1'}$, and $\DeepProject{\buf_1} = \DeepProject{ \buf_1'} \rightarrow \BufProject{\buf_1} = \BufProject{\buf_1'}$.
                                    \end{description}
                                    Therefore, $C_1 \sim C_1'$ and (c) holds.
        
                                \item[$\CacheAccess(\CacheState_0, \exprEval{e}{\apply{\buf_0{[0..i-1]}}{a_0}}) = \CacheMiss$:]
                                The proof of this case is similar to the one for the $\CacheHit$ case (except that we apply the \textsc{Execute-Load-Miss} rule and we do not need to rely on observations produced by the $\ArchSeqInterf{\cdot}$ contract). 
                            \end{description}
                        \end{description}

                        \item[$\tagged{\pstore{x'}{e'}}{T''} \in u\buf_0{[0..i-1]}$:]
                        From (a), we also have that $\tagged{\pstore{x'}{e'}}{T''} \in u\buf_0'[0..i-1]$.
                        Therefore, both configurations are stuck and (c) holds. 
                    \end{description}

                    \item[$\elt{u\buf_0}{i} =  \tagged{\passign{\pc}{\lbl}}{\lbl_0} \wedge \ell_0 \neq \emptysequence$:]
                    From (a),  we have that $\elt{u\buf_0'}{i} =  \tagged{\passign{\pc}{\lbl}}{\lbl_0} \wedge \ell_0 \neq \emptysequence$ holds.
                    Observe that $p(\lbl_0) = \pjz{x}{\lbl''}$.
                    Moreover, from Lemma~\ref{lemma:tt:arch-seq:masking-preserves-indistinguishability}, we have that  $\apply{u\buf_0[0..i-1]}{a_0}(x) = \apply{u\buf_0'[0..i-1]}{a_0'}(x)$.
                    There are three cases:
                    \begin{description}
                        \item[$\apply{\buf_0[0..i-1]}{a_0}(x) = \bot$:]
                        Then, both configurations are stuck and (c) holds.

                        \item[$( \apply{u\buf_0[0..i-1]}{a_0}(x) = 0 \wedge \lbl = \lbl'') \vee (\apply{u\buf_0[0..i-1]}{a_0}(x) \in \Val \setminus \{0,\bot\} \wedge \lbl = \ell_0+1)$:]
                        From $\apply{u\buf_0{[0..i-1]}}{a_0}(x) = \apply{u\buf_0'{[0..i-1]}}{a_0'}(x)$ and (a), we also get $( \apply{u\buf_0'[0..i-1]}{a_0'}(x) = 0 \wedge \lbl = \lbl'') \vee (\apply{u\buf_0'[0..i-1]}{a_0'}(x) \in \Val \setminus \{0,\bot\} \wedge \lbl = \ell_0+1)$.
                        Therefore,  we can apply the \textsc{Execute-Branch-Commit} and \textsc{Step} rules to $C_0$ and $C_0'$ as follows:
                        \begin{align*}
                            \buf_0 &:= \buf_0[0..i-1] \concat \labelled{\tagged{\passign{\pc}{\ell}}{\ell_0}}{\emptyset} \concat \buf_0[i+1 .. |\buf_0|]\\
                            u\buf_0 &:= \mask{\buf_0[0..i-1]} \concat \tagged{\passign{\pc}{\ell}}{\ell_0} \concat \mask{\buf_0[i+1 .. |\buf_0|]}\\
                            u\buf_1 &:= \mask{\buf_0[0..i-1]} \concat \tagged{\passign{\pc}{\ell}}{\notags} \concat \mask{\buf_0[i+1 .. |\buf_0|]}\\
                            \buf_1 &:= \buf_0[0..i-1] \concat \tagged{\passign{\pc}{\ell}}{\notags} \concat \mathit{strips}(\buf_0[i+1 .. |\buf_0|],i)\\
                            \tup{m_0,a_0,u\buf_0, \CacheState_0, \BpState_0} &\muarchStep{\execute{i}}{} \tup{m_0,a_0,u\buf_1, \CacheState_0, \BpUpdate(\BpState_0, \ell_0, \ell)}\\
                            \tup{m_0,a_0,\buf_0, \CacheState_0, \BpState_0 , \SchedState_0} &\TtMuarchStep{}{} \tup{m_0,a_0,\buf_1, \CacheState_0, \BpUpdate(\BpState_0, \ell_0, \ell),\SchedUpdate(\SchedState_0, \BufProject{\buf_1})}\\
                            \buf_0' &:= \buf_0'[0..i-1] \concat \labelled{\tagged{\passign{\pc}{\ell}}{\ell_0}}{\emptyset} \concat \buf_0'[i+1 .. |\buf_0'|]\\
                            u\buf_0' &:= \mask{\buf_0'[0..i-1]} \concat \tagged{\passign{\pc}{\ell}}{\ell_0} \concat \mask{\buf_0'[i+1 .. |\buf_0|]}\\
                            u\buf_1' &:= \mask{\buf_0'[0..i-1]} \concat \tagged{\passign{\pc}{\ell}}{\notags} \concat \mask{\buf_0'[i+1 .. |\buf_0'|]}\\
                            \buf_1' &:= \buf_0'[0..i-1] \concat \tagged{\passign{\pc}{\ell}}{\notags} \concat \mathit{strips}(\buf_0'[i+1 .. |\buf_0'|],i)\\
                            \tup{m_0',a_0',u\buf_0', \CacheState_0', \BpState'} &\muarchStep{\execute{i}}{} \tup{m_0',a_0',u\buf_1', \CacheState_0', \BpUpdate(\BpState_0', \ell_0, \ell)}\\
                            \tup{m_0',a_0',\buf_0', \CacheState_0', \BpState_0', \SchedState_0'} & \TtMuarchStep{}{} \tup{m_0',a_0',\buf_1', \CacheState_0', \BpUpdate(\BpState_0', \ell_0, \ell),\SchedUpdate(\SchedState_0', \BufProject{\buf_1'})}
                        \end{align*}
                        We now show that $C_1 = \tup{m_0,a_0,\buf_1, \CacheState_0, \BpUpdate(\BpState_0, \ell_0, \ell),\SchedUpdate(\SchedState_0, \BufProject{\buf_1})}$ and $C_1' =  \tup{m_0',a_0',\buf_1', \CacheState_0', \BpUpdate(\BpState_0', \ell_0, \ell),\SchedUpdate(\SchedState_0', \BufProject{\buf_1'})}$ are indistinguishable, i.e., i.e., $C_1 \sim C_1'$.
        
                        For this, we need to show that:
                            \begin{description}
                                \item[$a_0 = a_0'$:]
                                This follows from (a). 
                    
                                \item[$\DeepProject{\buf_1} = \DeepProject{\buf_1'}$:] 
                                Observe that $\buf_1 = \buf_0[0..i-1] \concat \tagged{\passign{\pc}{\ell}}{\notags} \concat \mathit{strips}(\buf_0[i+1 .. |\buf_0|],i)$ and $\buf_1' = \buf_0'[0..i-1] \concat \tagged{\passign{\pc}{\ell}}{\notags} \concat \mathit{strips}(\buf_0'[i+1 .. |\buf_0'|],i)$.
                                For $\DeepProject{\buf_1} = \DeepProject{\buf_1'}$ to hold, we need to show that $\DeepProject{\mathit{strips}(\buf_0[i+1 .. |\buf_0|],i)} = \DeepProject{\mathit{strips}(\buf_0'[i+1 .. |\buf_0'|],i)}$.
                                Observe, first, that whenever $\elt{\DeepProject{\buf_0[i+1 .. |\buf_0|]}}{j} = \labelled{\tagged{i}{T}}{L'}$ and $L' \neq \{ i\}$, then  $\elt{\DeepProject{\buf_0[i+1 .. |\buf_0|]}}{j} = \elt{\DeepProject{\mathit{strips}(\buf_0[i+1 .. |\buf_0|],i)}}{j}$.
                                Therefore, we just need to show that $\elt{\DeepProject{\mathit{strips}(\buf_0[i+1 .. |\buf_0|],i)}}{j} = \elt{\DeepProject{\mathit{strips}(\buf_0'[i+1 .. |\buf_0'|],i)}}{j}$ for all $j$ where the label is $\{ i  \}$.
                                Let $j_1, \ldots, j_n$, where $j_1 < j_2 < \ldots < j_n$ be the positions of all entries where the label is $\{ i  \}$.
                                We now show, by induction on $1 \leq k \leq n$,  that $\elt{\DeepProject{\mathit{strips}(\buf_0[i+1 .. |\buf_0|],i)}}{j_k} = \elt{\DeepProject{\mathit{strips}(\buf_0'[i+1 .. |\buf_0'|],i)}}{j_k}$.
                                \begin{description}
                                    \item[Base case:]
                                    For the base case, let $k =1$.
                                    Then, $j_1$ is the smallest index such that the label is $\{ i\}$.
                                    From (a) and the well-formedness of buffers, therefore, we have that  $p(\apply{u\buf_0[0..j_1]}{a_0}(\pc)) = p(\apply{u\buf_0'[0..j_1]}{a_0'}(\pc)) = \pload{x}{e}$.
                                    If the $j_1$-th entry is still unresolved, then $\elt{\DeepProject{\mathit{strips}(\buf_0[i+1 .. |\buf_0|],i)}}{j_1} = \elt{\DeepProject{\mathit{strips}(\buf_0'[i+1 .. |\buf_0'|],i)}}{j_1}$ follows directly from $\elt{\DeepProject{\buf_0[i+1 .. |\buf_0|]}}{j_1} = \elt{\DeepProject{\buf_0'[i+1 .. |\buf_0'|]}}{j_1}$.
                                    If the $j_1$-th entry has been resolved (i.e., $\elt{\DeepProject{\mathit{strips}(\buf_0[i+1 .. |\buf_0|],i)}}{j_1} = \labelled{\tagged{\passign{x}{v}}{\notags}}{\{i\}}$ and $\elt{\DeepProject{\mathit{strips}(\buf_0'[i+1 .. |\buf_0'|],i)}}{j_1} = \labelled{\tagged{\passign{x}{v'}}{\notags}}{\{i\}}$), then $v = v'$ follows from the fact that the \textbf{load} operation is non-speculative and from the observations in the $\ArchSeqInterf{\cdot}$ contract (similarly to the proof of the case $\elt{u\buf_0}{i} = \tagged{\pload{x}{e}}{T}$ when $L = \emptyset$).

                                    \item[Induction step:]
                                    For the induction step, we assume that $\elt{\DeepProject{\mathit{strips}(\buf_0[i+1 .. |\buf_0|],i)}}{j_{m}} = \elt{\DeepProject{\mathit{strips}(\buf_0'[i+1 .. |\buf_0'|],i)}}{j_{m}}$ holds for all $m < k$ and we show that it holds for $k$ as well.
                                    From (a) and the well-formedness of buffers, therefore, we have that  $\apply{u\buf_0[0..j_1]}{a_0}(\pc) = \apply{u\buf_0'[0..j_1]}{a_0'}(\pc)$.
                                    There are two cases:
                                    \begin{description}
                                        \item[$p(\apply{u\buf_0[0..j_1]}{a_0}(\pc)) = \pload{x}{e}$:]
                                        Then, the proof of $\elt{\DeepProject{\mathit{strips}(\buf_0[i+1 .. |\buf_0|],i)}}{j_{k}} = \elt{\DeepProject{\mathit{strips}(\buf_0'[i+1 .. |\buf_0'|],i)}}{j_{k}}$ is similar to the base case.

                                        \item[$p(\apply{u\buf_0[0..j_1]}{a_0}(\pc)) = \passign{x}{e}$:]
                                        Again, we have two cases. 
                                        If the entry is still unresolved, then $\elt{\DeepProject{\mathit{strips}(\buf_0[i+1 .. |\buf_0|],i)}}{j_k} = \elt{\DeepProject{\mathit{strips}(\buf_0'[i+1 .. |\buf_0'|],i)}}{j_k}$ follows directly from $\elt{\DeepProject{\buf_0[i+1 .. |\buf_0|]}}{j_1} = \elt{\DeepProject{\buf_0'[i+1 .. |\buf_0'|]}}{j_1}$.
                                        If the entry has been resolved, then its value depends only on earlier resolved values labelled either $\emptyset$ (which are the same in both buffers thanks to $\DeepProject{\buf_0} = \DeepProject{\buf_0'}$) or $\{ i\}$ (which are the same from the induction hypothesis).
                                        So, $\elt{\DeepProject{\mathit{strips}(\buf_0[i+1 .. |\buf_0|],i)}}{j_k} = \elt{\DeepProject{\mathit{strips}(\buf_0'[i+1 .. |\buf_0'|],i)}}{j_k}$ holds.

                                    \end{description}
                                \end{description}
                                Therefore,  $\DeepProject{\mathit{strips}(\buf_0[i+1 .. |\buf_0|],i)} = \DeepProject{\mathit{strips}(\buf_0'[i+1 .. |\buf_0'|],i)}$.

                                \item[$\CacheState_0 = \CacheState_0'$:]
                                This follows from (a).
                    
                                \item[$\BpUpdate(\BpState_0, \ell_0, \ell) = \BpUpdate(\BpState_0', \ell_0, \ell):$] 
                                This follows from $\BpState_0 = \BpState_0'$, which follows from (a).
                    
                                \item[$\SchedUpdate(\SchedState_0, \BufProject{\buf_1}) = \SchedUpdate(\SchedState_0', \BufProject{\buf_1'})$:]
                                This follows from $\SchedState_0 = \SchedState_0'$, $\DeepProject{\buf_1}_{\tup{m_0,a_0}} = \DeepProject{\buf_1'}_{\tup{m_0',a_0'}}$, $\DeepProject{\buf_1}_{\tup{m_0,a_0}} = \DeepProject{\buf_1'}_{\tup{m_0',a_0'}} \rightarrow \BufProject{\buf_1} = \BufProject{\buf_1'}$.
                            \end{description}
                            Therefore, $C_1 \sim C_1'$ and (c) holds.
        
                        \item[$( \apply{u\buf_0[0..i-1]}{a_0}(x) = 0 \wedge \lbl \neq \lbl'') \vee (\apply{u\buf_0[0..i-1]}{a_0}(x) \in \Val \setminus \{0,\bot\} \wedge \lbl \neq \ell_0+1)$:]
                        The proof of this case is similar to the one of the $( \apply{u\buf_0[0..i-1]}{a_0}(x) = 0 \wedge \lbl = \lbl'') \vee (u\apply{\buf_0[0..i-1]}{a_0}(x) \in \Val \setminus \{0,\bot\} \wedge \lbl = \ell_0+1)$ (except that we apply the \textsc{Execute-Branch-Rollback} rule).
                        Observe that the equality of the buffers' deep projections follow from (1) both configurations result in a rollback, and (2) the remaining reorder buffers are discarded (this is essential to ensure equality of projections).

                    \end{description}

                    \item[$\elt{u\buf_0}{i} = \tagged{\passign{\pc}{e}}{\notags}$:]   
                    Let $u\buf_0'$ be $\mask{\buf_0'}$.
                    From (a), we also have that $\elt{u\buf_0'}{i} =  \tagged{\passign{x}{e}}{\notags}$.
                    There are two cases:
                    \begin{description}
                        \item[$\exprEval{e}{\apply{u\buf_0[0..i-1]}{a_0}} \neq \bot$:]
                        From (a) and Lemma~\ref{lemma:tt:arch-seq:masking-preserves-indistinguishability}, we have that $\exprEval{e}{\apply{u\buf_0[0..i-1]}{a_0}} = \exprEval{e}{\apply{\buf_0'[0..i-1]}{a_0'}}$.
                        Therefore,  we can apply the \textsc{Execute-Assignment} and \textsc{Step} rules to $C_0$ and $C_0'$ as follows:
                        \begin{align*}
                            \buf_0 &:= \buf_0[0..i-1] \concat \labelled{\tagged{\passign{x}{e}}{\notags}}{\emptyset} \concat \buf_0[i+1 .. |\buf_0|]\\
                            u\buf_0 &:= \mask{\buf_0[0..i-1]} \concat \labelled{\tagged{\passign{x}{e}}{\notags}}{\emptyset} \concat \mask{\buf_0[i+1..|\buf_0|]}\\
                            v &:= \exprEval{e}{\apply{u\buf_0{[0..i-1]}}{a_0}}\\
                            u\buf_1 &:= \mask{\buf_0[0..i-1]} \concat {\tagged{\passign{x}{v}}{\notags}} \concat \mask{\buf_0[i+1..|\buf_0|]}\\
                            \buf_1 &:= \buf_0[0..i-1] \concat \labelled{\tagged{\passign{x}{v}}{\notags}}{\emptyset} \concat \buf_0[i+1..|\buf_0|]\\
                            \tup{m_0,a_0,u\buf_0, \CacheState_0, \BpState_0} &\muarchStep{\execute{i}}{} \tup{m_0,a_0,u\buf_1, \CacheState_0, \BpState_0}\\
                            \tup{m_0,a_0,\buf_0, \CacheState_0, \BpState_0, \SchedState_0} &\TtMuarchStep{}{} \tup{m_0,a_0,\buf_1, \CacheState_0, \BpState_0,\SchedUpdate(\SchedState_0, \BufProject{\buf_1})}\\
                            \buf_0' &:= \buf_0'[0..i-1] \concat \labelled{\tagged{\passign{x}{e}}{\notags}}{\emptyset} \concat \buf_0'[i+1 .. |\buf_0'|]\\
                            u\buf_0' &:= \mask{\buf_0'[0..i-1]} \concat \labelled{\tagged{\passign{x}{e}}{\notags}}{\emptyset} \concat \mask{\buf_0'[i+1..|\buf_0'|]}\\
                            u\buf_1' &:= \mask{\buf_0'[0..i-1]} \concat {\tagged{\passign{x}{v}}{\notags}} \concat \mask{\buf_0'[i+1..|\buf_0'|]}\\
                            \buf_1' &:= \buf_0'[0..i-1] \concat \labelled{\tagged{\passign{x}{v}}{\notags}}{\emptyset} \concat \buf_0'[i+1..|\buf_0'|]\\
                            \tup{m_0',a_0',u\buf_0', \CacheState_0', \BpState_0'} &\muarchStep{\execute{i}}{} \tup{m_0',a_0',u\buf_1', \CacheState_0', \BpState_0'}\\
                            \tup{m_0',a_0',\buf_0', \CacheState_0', \BpState_0', \SchedState_0'} & \TtMuarchStep{}{} \tup{m_0',a_0',\buf_1', \CacheState_0', \BpState_0',\SchedUpdate(\SchedState_0', \BufProject{\buf_1'})}
                        \end{align*}
                        We now show that $C_1 = \tup{m_0,a_0,\buf_1, \CacheState_0, \BpState_0,\SchedUpdate(\SchedState_0, \BufProject{\buf_1})}$ and $C_1' = \tup{m_0',a_0',\buf_1', \CacheState_0', \BpState_0',\SchedUpdate(\SchedState_0', \BufProject{\buf_1'})}$ are indistinguishable, i.e., i.e., $C_1 \sim C_1'$.
                        
                        For this, we need to show that:
                        \begin{description}
                            \item[$a_0 = a_0'$:] 
                            This follows from (a).
        
                            \item[$\DeepProject{\buf_1} = \DeepProject{\buf_1'}$:]
                            This follows from (a), $\buf_1= \buf_0[0..i-1] \concat \labelled{\tagged{\passign{x}{v}}{\notags}}{\emptyset} \concat \buf_0[i+1..|\buf_0|]$, and $\buf_1' = \buf_0'[0..i-1] \concat \labelled{\tagged{\passign{x}{v}}{\notags}}{\emptyset} \concat \buf_0'[i+1..|\buf_0'|]$.

                            \item[$\CacheState_0 = \CacheState_0':$] 
                            This follows from (a).
                            
                            \item[$\BpState_0 = \BpState_0':$] 
                            This follows from (a).
                
                            \item[$\SchedUpdate(\SchedState_0, \BufProject{\buf_1}) = \SchedUpdate(\SchedState_0', \BufProject{\buf_1'})$:]
                            This follows from (a) and $\DeepProject{\buf_1} = \DeepProject{\buf_1'}$.
                        \end{description}
                        Therefore, $C_1 \sim C_1'$ and (c) holds.

                        \item[$\exprEval{e}{\apply{u\buf_0[0..i-1]}{a_0}} = \bot$:] 
                        From (a) and Lemma~\ref{lemma:tt:arch-seq:masking-preserves-indistinguishability}, it follows that $\exprEval{e}{\apply{u\buf_0'[0..i-1]}{a_0'}} = \bot$.
                        Therefore, both configurations are stuck and (c) holds.
        
                    \end{description}

                    \item[$\elt{u\buf_0}{i} =   \tagged{\pmarkedassign{\pc}{e}}{\notags}$:]
                    
                    The proof of this case is similar to that of $\elt{u\buf_0}{i} = \tagged{\passign{\pc}{e}}{\notags}$.

                    \item[$\elt{u\buf_0}{i} =  \tagged{\pstore{x}{e}}{T}$:]   
                    The proof of this case is similar to that of $\elt{u\buf_0}{i} = \tagged{\passign{\pc}{e}}{\notags}$. 

                \end{description}
                Therefore, (c) holds in all cases.

                \item[$i > |u\buf_0| \vee \tagged{\pbarrier}{T'} \in u\buf_0{[0..i-1]}$:]
                From (a), it immediately follows that $i > |u\buf_0'| \vee \tagged{\pbarrier}{T'} \in u\buf_0'[0..i-1]$.
                Therefore, both configurations are stuck and (c) holds.
            \end{description}

            \item[$\neg \transmitGadget{\elt{\buf_0}{i}}$:]
            
            Then, from (a), we have that  $\neg\transmitGadget{\elt{\buf_0'}{i}}$.
            Therefore, we have that $u\buf_0 = \unlabelNda{\buf_0}{\execute{i}} = \drop{\buf_0}$ and $u\buf_0' = \unlabelNda{\buf_0'}{\execute{i}} = \drop{\buf_0'}$.

            There are two cases:
            \begin{description}
                \item[$i \leq |u\buf_0|$:]
                There are several cases depending on the $i$-th command in the reorder buffer: 
                \begin{description}
                    \item[$\elt{u\buf_0}{i} = \tagged{\passign{x}{e}}{T}$:]   
                    From (a), we get that also $\elt{u\buf_0'}{i} = \tagged{\passign{x}{e}}{T}$.
                    Moreover, observe $x \neq \pc$ (since $\neg \transmitGadget{\elt{\buf_0}{i}}$).
                    From (a), we know that $\exprEval{e}{\apply{u\buf_0[0..i-1]}{a_0}} = \bot \leftrightarrow \exprEval{e}{\apply{u\buf_0'[0..i-1]}{a_0'}} \neq \bot$.
                    Therefore, there are two cases:
                    \begin{description}
                        \item[${\exprEval{e}{\apply{u\buf_0[0..i-1]}{a_0}} \neq \bot \wedge\tagged{\pbarrier}{T'} \not\in {u\buf_0[0..i-1]}}$:]
                        Then, from (a), we have that ${\exprEval{e}{\apply{u\buf_0'[0..i-1]}{a_0'}} \neq \bot \wedge\tagged{\pbarrier}{T'} \not\in {u\buf_0'[0..i-1]}}$.
                        Therefore,  we can apply the \textsc{Execute-Assignment} and \textsc{Step} rules to $C_0$ and $C_0'$ as follows:
                        \begin{align*} 
                            \buf_0 &:= \buf_0[0..i-1] \concat \labelled{\tagged{\passign{x}{e}}{\notags}}{L} \concat \buf_0[i+1 .. |\buf_0|]\\
                            u\buf_0 &:= \drop{\buf_0[0..i-1]} \concat \tagged{\passign{x}{e}}{\notags} \concat \drop{\buf_0[i+1 .. |\buf_0|]}\\
                            v &:= \exprEval{e}{\apply{\drop{\buf_0[0..i-1]}}{a_0}} \\
                            u\buf_1 &:= \drop{\buf_0[0..i-1]} \concat \tagged{\passign{x}{v}}{\notags} \concat \drop{\buf_0[i+1 .. |\buf_0|]}\\
                            \buf_1 &:= \buf_0[0..i-1] \concat \labelled{\tagged{\passign{x}{v}}{\notags}}{L} \concat \buf_0[i+1 .. |\buf_0|]\\
                            \tup{m_0,a_0,u\buf_0, \CacheState_0, \BpState_0} &\muarchStep{\execute{i}}{} \tup{m_0,a_0,u\buf_1, \CacheState_0, \BpState_0}\\
                            \tup{m_0,a_0,\buf_0, \CacheState_0, \BpState_0 , \SchedState_0} &\TtMuarchStep{}{} \tup{m_0,a_0,\buf_1, \CacheState_0, \BpState_0,\SchedUpdate(\SchedState_0, \BufProject{\buf_1})}\\
                            \buf_0' &:= \buf_0'[0..i-1] \concat \labelled{\tagged{\passign{x}{e}}{\notags}}{L} \concat \buf_0'[i+1 .. |\buf_0'|]\\
                            u\buf_0' &:= \drop{\buf_0'[0..i-1]} \concat \tagged{\passign{x}{e}}{\notags} \concat \drop{\buf_0'[i+1 .. |\buf_0'|]}\\
                            v' &:= \exprEval{e}{\apply{\drop{\buf_0'[0..i-1]}}{a_0'}} \\
                            u\buf_1' &:= \drop{\buf_0'[0..i-1]} \concat \tagged{\passign{x}{v'}}{\notags} \concat \drop{\buf_0'[i+1 .. |\buf_0'|]}\\
                            \buf_1' &:= \buf_0'[0..i-1] \concat \labelled{\tagged{\passign{x}{v'}}{\notags}}{L} \concat \buf_0'[i+1 .. |\buf_0'|]\\
                            \tup{m_0',a_0',u\buf_0', \CacheState_0', \BpState'} &\muarchStep{\execute{i}}{} \tup{m_0',a_0',u\buf_1', \CacheState_0', \BpState_0'}\\
                            \tup{m_0',a_0',\buf_0', \CacheState_0', \BpState_0', \SchedState_0'} & \TtMuarchStep{}{} \tup{m_0',a_0',\buf_1', \CacheState_0', \BpState_0',\SchedUpdate(\SchedState_0', \BufProject{\buf_1'})}
                        \end{align*}
                        We now show that $C_1 = \tup{m_0,a_0,\buf_1, \CacheState_0, \BpState_0,\SchedUpdate(\SchedState_0, \BufProject{\buf_1})}$ and $C_1' =  \tup{m_0',a_0',\buf_1', \CacheState_0', \BpState_0',\SchedUpdate(\SchedState_0', \BufProject{\buf_1'})}$ are indistinguishable, i.e., i.e., $C_1 \sim C_1'$.

                        For this, we need to show:
                        \begin{description}
                            \item[$a_0 = a_0'$:] 
                            This follows from (a).
                            
                            \item[$\DeepProject{\buf_1} = \DeepProject{\buf_1'}$:]
                            For this, we need to show that $\DeepProject{ \labelled{\tagged{\passign{x}{v}}{\notags}}{L} } = \DeepProject{ \labelled{\tagged{\passign{x}{v'}}{\notags}}{L} }$.
                            There are two cases:
                            \begin{description}
                                \item[$L = \emptyset$:]
                                Then, $\DeepProject{ \labelled{\tagged{\passign{x}{v}}{\notags}}{L} } = \labelled{\tagged{\passign{x}{v}}{\notags}}{L}$ and $\DeepProject{ \labelled{\tagged{\passign{x}{v'}}{\notags}}{L} } = \labelled{\tagged{\passign{x}{v'}}{\notags}}{L}$.
                                From the well-formedness of buffers, $e$ depends only on data labelled at most $L$, i.e., $\emptyset$.
                                From this and (a), therefore, we get that $v =v'$ (since values whose label is $\emptyset$ are visible in $\DeepProject{ \buf }$).
                                Hence, $\DeepProject{ \labelled{\tagged{\passign{x}{v}}{\notags}}{L} } = \DeepProject{ \labelled{\tagged{\passign{x}{v'}}{\notags}}{L} }$ holds.

                                \item[$L \neq \emptyset$:] 
                                Then, $\DeepProject{ \labelled{\tagged{\passign{x}{v}}{\notags}}{L} } = \labelled{\tagged{\passign{x}{\unresolved}}{\notags}}{L}$ and $\DeepProject{ \labelled{\tagged{\passign{x}{v'}}{\notags}}{L} } = \labelled{\tagged{\passign{x}{\unresolved}}{\notags}}{L}$.
                                Therefore, $\DeepProject{ \labelled{\tagged{\passign{x}{v}}{\notags}}{L} } = \DeepProject{ \labelled{\tagged{\passign{x}{v'}}{\notags}}{L} }$ holds.
                            \end{description}

                            \item[$\CacheState_0 = \CacheState_0':$] 
                            This follows from (a).
                            
                            \item[$\BpState_0 = \BpState_0':$] 
                            This follows from (a).
                
                            \item[$\SchedUpdate(\SchedState_0, \BufProject{\buf_1}) = \SchedUpdate(\SchedState_0', \BufProject{\buf_1'})$:]
                            This follows from (a) and $\DeepProject{\buf_1} = \DeepProject{\buf_1'}$.
                        \end{description}

                        \item[${\exprEval{e}{\apply{u\buf_0[0..i-1]}{a_0}} = \bot \vee \tagged{\pbarrier}{T'} \in {u\buf_0[0..i-1]}}$:] 
                        From (a), we have that $\exprEval{e}{\apply{u\buf_0'[0..i-1]}{a_0'}} = \bot \vee \tagged{\pbarrier}{T'} \in {u\buf_0'[0..i-1]}$ holds as well.
                        Therefore, both configurations are stuck and (c) holds.
                    \end{description}

                    \item[$\elt{u\buf_0}{i} = \tagged{\pskip{}}{T}$:]
                    The proof of this case is similar to that of the $\elt{u\buf_0}{i} = \tagged{\passign{x}{e}}{T}$ case.
                    \item[$\elt{u\buf_0}{i} = \tagged{\pbarrier{}}{T}$:]
                    The proof of this case is similar to that of the $\elt{u\buf_0}{i} = \tagged{\passign{x}{e}}{T}$ case.
                    
                \end{description}

                \item[$i > |u\buf_0|$:]
                From (a), we have that $i > |u\buf_0'|$ as well.
                Therefore, both configurations are stuck and (c) holds.
            \end{description}
        \end{description}
        Therefore, (c) holds in all cases.
     
        \item[$d = \retire{}$:]
        Therefore, we can only apply one of the $\retire{}$ rules depending on the head of the reorder buffer in $\buf_0$.
        There are five cases:
        \begin{description}
            \item[$\buf_0 = \labelled{\tagged{\pskip}{\notags}}{L} \concat \buf_1 $:] 
            From (a), we get that  $\buf_0' = \labelled{\tagged{\pskip}{\notags}}{L} \concat \buf_1' $ and $\DeepProject{\buf_1}_{\tup{m_0,a_0}} = \DeepProject{\buf_1'}_{\tup{m_0',a_0'}}$.
            Therefore, we can apply the \textsc{Retire-Skip} and \textsc{Step} rules to $C_0$ and $C_0'$ as follows:
            \begin{align*}
                \buf_0 &:= \labelled{\tagged{\pskip}{\notags}}{L} \concat \buf_1  \\
                u\buf_0 &:= \tagged{\pskip}{\notags} \concat \drop{\buf_1} \\
                u\buf_1 &:=  \drop{\buf_1} \\
                \tup{m_0,a_0, u\buf_0,\CacheState_0,\BpState_0} &\muarchStep{\retire}{} \tup{m_0, a_0, u\buf_1, \CacheState_0,\BpState_0}\\
                \tup{m_0,a_0, \labelled{\tagged{\pskip}{\notags}}{L} \concat \buf_1,\CacheState_0,\BpState_0, \SchedState_0} &\TtMuarchStep{}{} \tup{m_0, a_0, \buf_1, \CacheState_0,\BpState_0, \SchedUpdate(\SchedState_0, \BufProject{\buf_1})}\\
                \buf_0' &:= \labelled{\tagged{\pskip}{\notags}}{L} \concat \buf_1'  \\
                u\buf_0' &:= \tagged{\pskip}{\notags} \concat \drop{\buf_1'} \\
                u\buf_1' &:=  \drop{\buf_1'} \\
                \tup{m_0',a_0',u\buf_0',\CacheState_0',\BpState_0'} &\muarchStep{\retire}{} \tup{m_0', a_0', u\buf_1', \CacheState_0',\BpState_0'}\\
                \tup{m_0',a_0',\labelled{\tagged{\pskip}{\notags}}{L} \concat \buf_1',\CacheState_0',\BpState_0', \SchedState_0'} &\TtMuarchStep{}{} \tup{m_0', a_0', \buf_1', \CacheState_0',\BpState_0', \SchedUpdate(\SchedState_0', \BufProject{\buf_1'})}
            \end{align*}
            We now show that $C_1 = \tup{m_0, a_0, \buf_1, \CacheState_0,\BpState_0, \SchedUpdate(\SchedState_0, \BufProject{\buf_1})}$ and $C_1' = \tup{m_0', a_0', \buf_1', \CacheState_0',\BpState_0', \SchedUpdate(\SchedState_0', \BufProject{\buf_1'})}$ are indistinguishable, i.e., $C_1 \sim C_1'$.
            
            For this, we need to show that: 
            \begin{description}
              \item[$a_0 = a_0'$:]
              This follows from (a). 
              \item[$\DeepProject{\buf_1} = \DeepProject{ \buf_1'}$:]
              This follows from $\buf_0 = \labelled{\tagged{\pskip}{\notags}}{L} \concat \buf_1 $,  $\buf_0' = \labelled{\tagged{\pskip}{\notags}}{L} \concat \buf_1' $, and $\DeepProject{\buf_0} = \DeepProject{ \buf_0'}$.

              \item[$\CacheState_0 = \CacheState_0'$:]
              This follows from (a).
              \item[$\BpState_0 = \BpState_0'$:]
              This follows from (a).

              \item[$\SchedUpdate(\SchedState_0, \BufProject{\buf_1}) = \SchedUpdate(\SchedState_0', \BufProject{\buf_1'})$:]
              This follows from (a) and $\DeepProject{\buf_1} = \DeepProject{ \buf_1'}\rightarrow \BufProject{\buf_1} = \BufProject{\buf_1'}$.
            \end{description}
            Therefore, $C_1 \sim C_1'$ and (c) holds.

            \item[$\buf_0 = \labelled{\tagged{\pbarrier}{\notags}}{L} \concat \buf_1 $:]
            The proof of this case is similar to that of the case $\buf_0 = \labelled{\tagged{\pskip}{\notags}}{L} \concat \buf_1 $.
    
            \item[$\buf_0 = \labelled{\tagged{\passign{x}{v}}{\notags}}{L} \concat \buf_1 $:] 
            From (a), we get that  $\buf_0' =\labelled{\tagged{\passign{x}{v}}{\notags}}{L} \concat \buf_1' $ and $\DeepProject{\buf_1}_{\tup{m_1,a_1[x \mapsto v]}} = \DeepProject{\buf_1'}_{\tup{m_1',a_1'[x \mapsto v]}}$.
            Observe that if $v \not\in \Val$ then both computations are stuck and (c) holds (since there is no $C_1$ such that $C_0 \TtMuarchStep{}{} C_1$ and no $C_1'$ such that $C_0' \TtMuarchStep{}{} C_1'$).
            In the following, therefore, we assume that $v,v' \in \Val$.
            Therefore, we can apply the \textsc{Retire-Assignment} and \textsc{Step-Others} rules to $C_0$ and $C_0'$ as follows:
            \begin{align*}
                \buf_0 &:= \labelled{\tagged{\passign{x}{v}}{\notags}}{L} \concat \buf_1 \\
                u\buf_0 &:= \tagged{\passign{x}{v}}{\notags} \concat \drop{\buf_1}\\
                u\buf_1 &:= \drop{\buf_1}\\
                \tup{m_0,a_0,u\buf_0,\CacheState_0,\BpState_0} &\muarchStep{\retire}{} \tup{m_0, a_0[x\mapsto v], u\buf_1, \CacheState_0,\BpState_0}\\
                \tup{m_0,a_0, \labelled{\tagged{\passign{x}{v}}{\notags}}{L} \concat \buf_1,\CacheState_0,\BpState_0, \SchedState_0} &\TtMuarchStep{}{} \tup{m_0, a_0[x \mapsto v], \buf_1, \CacheState_0,\BpState_0, \SchedUpdate(\SchedState_0, \BufProject{\buf_1})}\\
                \buf_0' &:= \labelled{\tagged{\passign{x}{v}}{\notags}}{L} \concat \buf_1' \\
                u\buf_0' &:= \tagged{\passign{x}{v}}{\notags} \concat \drop{\buf_1'}\\
                u\buf_1' &:= \drop{\buf_1'}\\
                \tup{m_0',a_0',u\buf_0',\CacheState_0',\BpState_0'} &\muarchStep{\retire}{} \tup{m_0', a_0'[x \mapsto v], u\buf_1', \CacheState_0',\BpState_0'}\\
                \tup{m_0',a_0', \labelled{\tagged{\passign{x}{v}}{\notags}}{L} \concat \buf_1',\CacheState_0',\BpState_0', \SchedState_0'} &\TtMuarchStep{}{} \tup{m_0', a_0'[x \mapsto v], \buf_1', \CacheState_0',\BpState_0', \SchedUpdate(\SchedState_0', \BufProject{\buf_1'})}
            \end{align*}
            We now show that $C_1 = \tup{m_0, a_0[x \mapsto v], \buf_1, \CacheState_0,\BpState_0, \SchedUpdate(\SchedState_0, \BufProject{\buf_1})}$ and $C_1' = \tup{m_0', a_0'[x \mapsto v'], \buf_1', \CacheState_0',\BpState_0', \SchedUpdate(\SchedState_0', \BufProject{\buf_1'})}$ are indistinguishable, i.e., $C_1 \sim C_1'$.
            For this, we need to show that:
            \begin{description}
                \item[$a_0{[x \mapsto v]} = a_0'{[x \mapsto v]}$:]
                This follows from (a).
    
                \item[$\DeepProject{\buf_1} = \DeepProject{\buf_1'}$:] 
                This follows from $\buf_0 = \labelled{\tagged{\passign{x}{v}}{\notags}}{L} \concat \buf_1' $, $\buf_0' = \labelled{\tagged{\passign{x}{v}}{\notags}}{L} \concat \buf_1' $, and (a).
    
                \item[$\CacheState_0 = \CacheState_0'$:]
                This follows from (a).
    
                \item[$\BpState_0 = \BpState_0':$] 
                This follows from (a).
    
                \item[$\SchedUpdate(\SchedState_0, \BufProject{\buf_1}) = \SchedUpdate(\SchedState_0', \BufProject{\buf_1'})$:]
                This follows from (a) and $\DeepProject{\buf_1} = \DeepProject{ \buf_1'} \rightarrow \BufProject{\buf_1} = \BufProject{\buf_1'}$.
            \end{description}
            Therefore, $C_1 \sim C_1'$ and (c) holds.
    
            \item[$\buf_0 = \tagged{\pmarkedassign{x}{v}}{\notags} \concat \buf_1 $:]
            The proof of this case is similar to that of the case $\buf_0 = \tagged{\passign{x}{v}}{\notags} \concat \buf_1 $.
    
            \item[$\buf_0 = \tagged{\pstore{v}{n}}{\notags} \concat \buf_1 $:]
            The proof of this case is similar to that of the case $\buf_0 = \tagged{\passign{x}{v}}{\notags} \concat \buf_1 $ (the equality between the final cache states directly follow from $\buf_0'$ being of the form $\tagged{\pstore{v}{n}}{\notags} \concat \buf_1'$).
        
        \end{description}
        Therefore, (c) holds for all the cases.

    \end{description}
    Since (c) holds for all cases, this completes the proof of our lemma.
    \end{proof}

\subsubsection{Main lemma}

\begin{lemma}\label{lemma:tt:arch-seq:main-lemma}
	Let $p$ be a well-formed program, $\CacheState_0$ be the initial cache state, $\BpState_0$ be the initial branch predictor state, and $\SchedState_0$ be the initial scheduler state, $\sigma_0 = \tup{m,a}, \sigma_0' = \tup{m',a'}$ be initial \archstate{}s, and $C_0 = \tup{m,a, \emptysequence, \CacheState_0, \BpState_0, \SchedState_0}$ and $C_0' = \tup{m',a', \emptysequence, \CacheState_0, \BpState_0, \SchedState_0}$ be hardware configurations.
	Furthermore, let $\crun := \sigma_0$ $\ArchSeqInterfStep{o_1}{}$ $\sigma_1$ $\ArchSeqInterfStep{o_2}{}$ $\ldots$  $\ArchSeqInterfStep{o_{n-1}}{}$  $\sigma_n$ and $\crunp:=\sigma_0'$ $\ArchSeqInterfStep{o_1'}{}$  $\sigma_1' $ $\ArchSeqInterfStep{o_2'}{}$ $\ldots$  $\ArchSeqInterfStep{o_{n-1}'}{} $ $\sigma_n$ be two runs for the $\ArchSeqInterf{\cdot}$ contract where  $\sigma_n, \sigma_{n}$ are final \archstate{}s.
	If $o_i = o_i'$ for all $0 < i < n$, then there is a $k \in \Nat$ and $C_0, \ldots, C_k, C_0', \ldots, C_k'$ such that $C_0 \TtMuarchStep{}{} C_1 \TtMuarchStep{}{} \ldots \TtMuarchStep{}{} C_k$, $C_0' \TtMuarchStep{}{} C_1' \TtMuarchStep{}{} \ldots \TtMuarchStep{}{} C_k'$, and one of the following conditions hold:
	\begin{compactenum}
		\item $C_0, C_0'$ are initial states, $\forall 0 \leq i \leq k.\ C_{i} \approx C_{i}'$, and $C_, C_k'$ are final states, or
		\item $C_0, C_0'$ are initial states, $\forall 0 \leq i \leq k.\ C_{i} \approx C_{i}'$, and there are no $C_{k+1}$ such that $C_{k+1} \neq C_k \wedge C_k \TtMuarchStep{}{} C_{k+1}$ and no $C_{k+1}'$ such that $C_{k+1}' \neq C_k' \wedge C_k' \TtMuarchStep{}{} C_{k+1}'$.
	\end{compactenum}
\end{lemma}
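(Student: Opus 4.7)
The plan is to mirror the structure of the proof of Lemma~\ref{lemma:loadDelay:arch-seq:main-lemma} (the analogous statement for $\muarchStyle{loadDelay}$), substituting in the $\muarchStyle{tt}$-specific machinery developed earlier in this appendix. Concretely, I will construct the pair of hardware runs $\hrun, \hrunp$ by starting from $C_0, C_0'$ and iteratively applying $\TtMuarchStep{}{}$ until one of the two either reaches a final configuration or becomes stuck, obtaining runs of common length $k$. For these runs I will then invoke the $\crun-\hrun$ mapping of Definition~\ref{def:tt:arch-seq:mapping} to get maps $\map{\crun}{\hrun}{\cdot}$ and $\map{\crunp}{\hrunp}{\cdot}$.

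The core argument will be an induction on $i$ showing simultaneously that (IH.1) $\map{\crun}{\hrun}{i} = \map{\crunp}{\hrunp}{i}$ and (IH.2) $\hrun(i) \sim \hrunp(i)$, where $\sim$ is the deep-indistinguishability relation of Definition~\ref{def:tt:arch-seq:deep-indistinguishability}. The base case is immediate from the fact that both initial configurations agree on an empty buffer, the initial \uarchstate{} components, and have identical zero-valued initial register assignments. For the inductive step, I will apply Lemma~\ref{lemma:tt:arch-seq:mapping-is-correct} to $\hrun(i-1)$ and $\hrunp(i-1)$ to obtain, for every prefix $\buf$ of the respective reorder buffers, contract states $\crun(\map{\crun}{\hrun}{i-1}(|\buf|))$ and $\crunp(\map{\crunp}{\hrunp}{i-1}(|\buf|))$ that are $\bufEquiv{|\buf|}$-related to the hardware configurations. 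Since $\crun, \crunp$ agree pairwise on observations by hypothesis and since (IH.1) forces the two maps to target contract states at the same index, each such pair of contract states produces the same observation in one contract step. This exactly matches hypotheses (a) and (b) of Lemma~\ref{lemma:tt:arch-seq:trace-equiv-implies-stepwise-indistinguishability}, which I then apply to $\hrun(i-1)$ and $\hrunp(i-1)$ to conclude that either $\hrun(i) \sim \hrunp(i)$ or both computations are simultaneously stuck.

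Propagation of (IH.1) follows by inspecting the definition of $\map{\crun}{\hrun}{\cdot}$: the case split at step $i$ depends only on $\SchedNext$ applied to the scheduler state, on $\Mispred{\hrun(i-1)}$, and on the data-independent projection of the reorder buffer after the step — all of which are preserved by the $\sim$-relation, so both maps update in the same way. Finally, to translate deep-indistinguishability to the $\approx$ relation required by the statement, I will note (as an auxiliary fact analogous to Lemma~\ref{lemma:loadDelay:arch-seq:deep-indistinguishability-implies-indistinguishability}) that $\sim$ implies agreement on $\BufProject{\cdot}$, cache, predictor, and scheduler states, which is exactly $\approx$. The stuck-versus-terminating dichotomy in the conclusion then follows because both runs are built by the same termination criterion.

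The main obstacle will be the ``symmetric stuck'' guarantee obtained from the indistinguishability lemma: the $\muarchStyle{tt}$ semantics can block execution for subtly different reasons than $\muarchStyle{loadDelay}$, namely when a transmit instruction is masked because one of its operands carries a non-empty taint label, or when the unlabelled buffer exposes $\bot$ via $\mask{\cdot}$. Carrying enough information through $\sim$ to guarantee that these masking-induced stalls happen simultaneously in the two runs — so that (IH.1) is preserved and the $\SchedState$ components stay synchronized — is precisely what the deep projection $\DeepProject{\cdot}$ of Definition~\ref{def:tt:arch-seq:deep-indistinguishability} is designed to track, and verifying that the inductive step closes at the taint-labelled load and assignment cases (mediated by Lemmas~\ref{lemma:tt:arch-seq:masking-preserves-indistinguishability} and~\ref{lemma:tt:arch-seq:mask-drop-preserve-values}) is the key step. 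Once that is in place, however, the rest of the argument is structurally identical to the $\muarchStyle{loadDelay}$ case.
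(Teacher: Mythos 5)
Your proposal follows exactly the same route as the paper's proof, which is itself just a pointer to the argument for Lemma~\ref{lemma:loadDelay:arch-seq:main-lemma} with Lemmas~\ref{lemma:tt:arch-seq:mapping-is-correct} and~\ref{lemma:tt:arch-seq:trace-equiv-implies-stepwise-indistinguishability} substituted in; your induction on the pair (map equality, deep-indistinguishability) and the final passage from $\sim$ to $\approx$ match that template. The only point worth noting is that you spell out more detail than the paper does, in particular the need for the (unstated but implicit) analogue of Lemma~\ref{lemma:loadDelay:arch-seq:deep-indistinguishability-implies-indistinguishability}, which is a correct and harmless addition.
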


\begin{proof}
The proof is similar to that of Lemma~\ref{lemma:loadDelay:arch-seq:main-lemma} (except that we use Lemmas~\ref{lemma:tt:arch-seq:mapping-is-correct} and~\ref{lemma:tt:arch-seq:trace-equiv-implies-stepwise-indistinguishability} instead of Lemmas Lemmas~\ref{lemma:loadDelay:arch-seq:mapping-is-correct} and~\ref{lemma:loadDelay:arch-seq:trace-equiv-implies-stepwise-indistinguishability}).
\end{proof}
}

\end{document}